\documentclass[11pt]{article}

\usepackage[top=2cm, bottom=2cm, left=2.5cm, right=2.5cm]{geometry}
\usepackage{amsmath,amssymb,amsthm}
\usepackage{mathtools}
\usepackage{color}
\usepackage{graphicx}
\usepackage{tikz}
\usepackage{xspace}
\usepackage[backend=biber,style=alphabetic]{biblatex}
\usepackage[hidelinks]{hyperref}
\usepackage[nameinlink,capitalise]{cleveref}
\usepackage{stackengine}

\usetikzlibrary{arrows.meta,calc,patterns}

\newcommand{\diff}{\mathop{}\!\mathrm{d}}								
\newcommand{\Diff}{\ensuremath{\mathcal{D}}}							
\newcommand{\Dplus}{\ensuremath{\mathcal{D}^+}}							
\newcommand{\eps}{{\varepsilon}}										
\newcommand{\clip}{{\text{clip}}}										
\newcommand{\Lip}[1]{\text{Lip}_{#1}}									
\newcommand{\diam}{{\text{diam}}}										
\DeclareMathOperator{\polydeg}{deg}									
\DeclareMathOperator{\rem}{rem}										
\DeclareMathOperator{\wind}{wind}									
\DeclareMathOperator{\length}{length}								
\renewcommand{\Re}[1]{\operatorname{Re}\left(#1\right)}				
\DeclareMathOperator{\domop}{dom}										
\newcommand{\dom}[1]{\domop\left(#1\right)}								
\newcommand{\supp}[1]{\text{supp}\left(#1\right)}						
\newcommand{\indic}[1]{\text{Ind}\left(#1\right)}						
\newcommand{\proj}[2]{\text{proj}_{#2}\left(#1\right)}					
\newcommand{\aproj}[3]{\widehat{\text{proj}}^{#3}_{#2}\left(#1\right)}	
\newcommand{\conv}[1]{\operatorname{conv}\left(#1\right)}				
\newcommand{\Reg}[1]{\operatorname{Reg}\left(#1\right)}					
\newcommand{\BReg}[1]{\operatorname{BReg}\left(#1\right)}				
\newcommand{\SReg}[1]{\operatorname{SReg}\left(#1\right)}				
\newcommand{\transp}{\ensuremath{^{\top}}}								
\newcommand{\nrm}[1]{\left\lVert#1\right\rVert}							
\newcommand{\abs}[1]{\left\lvert#1\right\rvert}							
\newcommand{\inter}[1]{\operatorname{int}\left(#1\right)}				
\newcommand{\msubset}{%
	\mathrel{\raisebox{0.3ex}{\stackunder[1pt]{$\subset$}{$\scriptscriptstyle\sim$}}}%
}																		
\newcommand{\scal}[1]{\left\langle#1\right\rangle}						
\newcommand{\E}[2][{}]{\mathbb E_{#1}\left[#2\right]}					
\newcommand{\PP}[1]{\mathbb P\left[#1\right]}							
\newcommand{\bigo}[1]{\mathcal O\left(#1\right)}						
\newcommand{\low}{{\text{low}}}											
\newcommand{\up}{{\text{up}}}											
\DeclareMathAlphabet{\mathbbold}{U}{dsrom}{m}{n}						
\DeclareMathAlphabet{\mathbboldzero}{U}{bbold}{m}{n}					
\newcommand{\one}{\mathbbold{1}}										
\newcommand{\zero}{\mathbboldzero{0}}									
\newcommand{\ie}{\unskip, i.\,e.,\xspace}								
\newcommand{\eg}{\unskip, e.\,g.,\xspace}								
\newcommand{\sut}{\text{s.\,t.\,}}										

\let\emptyset\varnothing
\newcommand{\N}{{\mathbb{N}}}											
\newcommand{\Q}{{\mathbb{Q}}}											
\newcommand{\R}{{\mathbb{R}}}											
\newcommand{\spc}{{\,\,}}												

\title{Some Essential Constructive Foundations for Systems and Control}
\author{
	Pavel Osinenko
	\thanks{
	Central University, 7 Gasheka St., bld. 1, Moscow, 123056, Russia;
	Center for Engineering Systems and Sciences, Skoltech, 30 Bolshoy Boulevard, bld. 1, Moscow, 121205, Russia;
	Sirius University of Science and Technology, 1 Olimpiyskiy Avenue, Sirius, 354340, Russia;
	Email: p.osinenko@gmail.com	
	}
	}
\date{2026}

\newtheoremstyle{definitionname}
	{\topsep}
	{\topsep}
	{\normalfont}
	{}
	{\bfseries}
	{}
	{.5em}
	{\thmname{#1}\thmnumber{ #2}.\thmnote{ (#3)}}

\theoremstyle{definitionname}
\newtheorem{definition}{Definition}[section]
\newtheorem{notation}{Notation}[section]

\newtheoremstyle{plainname}
	{\topsep}
	{\topsep}
	{\itshape}
	{}
	{\bfseries}
	{}
	{.5em}
	{\thmname{#1}\thmnumber{ #2}.\thmnote{ (#3)}}

\theoremstyle{plainname}
\newtheorem{theorem}{Theorem}[section]
\newtheorem{proposition}{Proposition}[section]
\newtheorem{lemma}{Lemma}[section]
\newtheorem{corollary}{Corollary}[section]

\theoremstyle{remark}
\newtheorem{remark}{Remark}[section]
\newtheorem{example}{Example}[section]

\begin{document}

\maketitle

\begin{abstract}
	This work develops several constructive foundations for systems and control within Bishop-style constructive mathematics.
For an engineer, the guiding principle is that an object claimed to exist, such as a trajectory, an optimal control law, a selector, or a viable solution, should come with finite data and an operation computing approximations to any prescribed precision.
The style remains close to classical analysis, but existential statements are organized so that their computational content is visible.
The paper begins with elementary geometric data in finite-dimensional Euclidean spaces: blocks, multiblocks, representable sets, regular functions, and certified integrals.
This set-first integration route is meant to complement, rather than replace, abstract constructive integration theories such as Daniell-type or integration-space approaches.
The developed apparatus is then applied to a constructive functional extremum-value theorem, selector extraction for multifunctions, Filippov-type and viable solutions of differential inclusions, regular probability densities, controlled Markov chains, and empirical density certificates.
A short account of resolvent projectors and linear stability is included for completeness.

\end{abstract}

\setcounter{tocdepth}{3} 
\tableofcontents         
\clearpage               

\let\standardsection\section
\renewcommand{\section}{\clearpage\standardsection}

\section{Introduction}

\subsection{Bishop's Program}

Bishop's program of constructive mathematics is a form of analysis in which existence statements are expected to carry computational meaning \cite{Bishop1967FoundationsCon,Bishop1985ConstructiveAn,Bridges2007TechniquesCons}.
In this reading, saying that a real number exists means having a procedure which computes rational approximations with a Cauchy certificate.
Saying that a function is continuous means having a modulus which tells how accurately the input must be known in order to obtain a prescribed output accuracy.
Saying that a trajectory, a selector, or an optimizer exists means having finite data from which approximations can be produced to any requested precision.
This is why constructive mathematics is attractive for systems and control.
Control theory is full of objects that are useful only after they become numerical or algorithmic: solutions of differential inclusions, stabilizing feedbacks, optimal policies, invariant sets, probability densities, and spectral decompositions.
Bishop-style analysis asks for the finite content of such objects at the level of the theorem statement.

This viewpoint should not be read as a rejection of classical analysis.
Most of the definitions and proofs below are intentionally close to the classical ones.
The difference is that several classical shortcuts are avoided when they hide computational information.
For instance, a proof by contradiction that an optimizer exists does not by itself explain how to approximate an optimizer.
A compactness argument that extracts a subsequence does not by itself explain which finite stage should be used.
A measurable-selection theorem may identify a selector while leaving no usable description of the selector for integration or simulation.
Constructive mathematics replaces such steps by witnesses, moduli, finite nets, explicit exception sets, and certified limiting procedures.

The present paper uses a deliberately elementary layer of data.
At the lowest level we use rational blocks, finite unions of blocks, finite refinements, moduli, and explicitly bounded regular functions.
This choice is not meant to exhaust constructive analysis.
It is meant to keep the mathematical objects recognizable to an engineer who wants to know what can be computed and which finite certificates have to be stored.
In particular, the theory starts from subsets of Euclidean spaces rather than from abstract measure spaces.
This makes the later constructions of densities, Markov kernels, selectors, and approximate solutions depend on geometric data that can be inspected and refined.

\subsection{Motivation and Goal}

The motivation follows the numerical uncertainty viewpoint developed in \cite{Osinenko2021Towardsconstru}.
Classical control theory often proves the existence of exact objects which are then replaced by approximate computations in practice.
This mismatch is harmless in some routine situations, but it becomes visible in safety-critical, certified, or formalized control workflows.
Here are some examples: no stable computation of an optimal control policy might be possible due to non-uniqueness;
a switching surface may be hard to locate exactly in simulation; a Filippov right-hand side may require a selector which may fail to be computable; a probability transition may be described by a density whose products and marginals must still be valid densities after computation; an eigenvector-based decomposition may be numerically unstable near repeated eigenvalues.
Robust control addresses perturbations of plants and signals, but it does not automatically address the finite meaning of the mathematical objects used in a proof.

The earlier constructive-control outline \cite{Osinenko2021Towardsconstru} emphasized trajectories, selectors, extremum-value arguments, and linear spectral questions.
It did not develop a full measure and integration layer.
The present work fills this gap.
The central objects are representable sets and regular functions.
Representable sets encode geometric supports by multiblock cores, exteriors, and controlled exceptions.
Regular functions are finite sums of complemented continuous functions over representable supports, together with effective bounds when integration is required.
From these data we build integrals, probability densities, Markov kernels, multifunction selectors, Filippov approximants, viability certificates, empirical density tests, and a resolvent-based linear systems appendix.

The goal is not to replace existing constructive integration theories by a more general one.
It is narrower and more applied.
We develop an elementary Euclidean calculus which is sufficiently strong for a set of recurring control-theoretic tasks and sufficiently explicit that each theorem records the finite data it needs.
This explains the recurring emphasis on blocks, multiblocks, flattening, effective boundedness, tail certificates, and support geometry.

\subsection{Related Work}

The constructive foundations used here go back to Bishop and to the later development by Bishop--Bridges, Bridges--Richman, Troelstra--van Dalen, Beeson, and others \cite{Bishop1967FoundationsCon,Bishop1985ConstructiveAn,Bridges1987VarietiesConst,Bridges2007TechniquesCons,Troelstra2014Constructivism,Beeson1980FoundationsCon}.
For constructive algebra we rely on the finite algebraic viewpoint represented by Mines--Richman--Ruitenburg and Richman \cite{Mines1988CourseConstruc,Richman2000fundamentalthe}.
The strict-finitist perspective of Ye is also relevant to the style of finite witnesses used below \cite{Ye2011StrictFinitism}.
The integration-space viewpoint of Chan gives a broad constructive framework for integration \cite{Chan2019FoundationsCon}.
Our approach is different in emphasis: it starts from elementary Euclidean set descriptions and builds only the integration machinery needed for regular functions, kernels, and finite products.
This makes the resulting theory less abstract and more directly tied to geometric certificates.
It is not a claim that Daniell-type or integration-space constructions are unnecessary; rather, they answer a more general foundational question.

There is also a large literature on computable analysis and computability in dynamical systems \cite{Weihrauch2012ComputableAnal,Collins2009computabletype,Buescu2011Computabilityd,Graca2011Computabilityp}.
That literature is close in spirit but different in technical emphasis.
Computable analysis usually studies whether an object is computable relative to a representation.
The present paper instead keeps a Bishop-style proof discipline in which theorem statements name the concrete moduli, bounds, exception multiblocks, and tail certificates used by the construction.
This difference is one of emphasis rather than opposition.

Formal methods and verified control provide another point of contact.
Sound abstraction, deductive stability proofs, differential dynamic logic, proof assistants, and numerically robust verification all push control theory toward explicit certificates \cite{Shoukry2015Soundcomplete,Bessa2016Formalnonfrag,Cohen2017formalproofCo,Rouhling2018formalproofCo,GalloisWong2018Coqformalizati,Tan2020DeductiveStabi,Platzer2008Differentialdy,Platzer2008Keymaerahybrid,Gao2019Numericallyrob,Yin2024FormalSynthesisControllers}.
Algorithmic control perspectives make a similar demand from a design viewpoint \cite{Tsiotras2017algorithmiccon}.
The numerical uncertainty examples in \cite{Vasile2017Optimisingresi,Sutherland2019closedloopLya} are representative of the practical relevance of this direction.
Recent work on learning-enabled control makes this relevance even more articulate.
Reachability methods for neural-network-controlled systems combine set propagation, Taylor models, zonotopes, Lipschitz optimization, and inner--outer approximations to obtain safety certificates for increasingly complex closed loops \cite{Schilling2022VerificationNeuralNetworkControl,Goubault2022RINO,Zhang2023ReachabilityNNCS,ArjomandBigdeli2024ContinuousTimeNNCS}.
Safety-filter methods give a complementary control-theoretic certificate layer for autonomous systems \cite{Hsu2024SafetyFilter}.
These developments provide certified analyses of concrete systems or controller classes.
The distinct feature of the present paper is its lower-level and foundational character: it asks which Euclidean sets, functions, integrals, densities, selectors, and solution concepts can be represented with finite witnesses before a particular controller or verification algorithm is chosen.

The present manuscript continues a line of constructive work by the author on control-oriented extremum-value theorems, selectors, Caratheodory-type solutions, stability, and spectral calculations \cite{Osinenko2018constructiveve,Osinenko2018Analysisextrem,Osinenko2021constructiveex,Osinenko2018AnalysisCarath,Osinenko2018Practicalsampl,Osinenko2018Practicalstabi,Osinenko2020Constructivean,Osinenko2021Towardsconstru}.
The new contribution here is the common measure-theoretic and functional backbone supporting these themes.
Differential inclusions and viability theory provide the classical control background for the later solution concepts \cite{Filippov1962certainquestio,Aubin2012DifferentialIn,Aubin2011ViabilityTheor,Clarke2011Lyapunovfuncti}.
The linear systems section uses standard finite-dimensional resolvent ideas in the spirit of Kato \cite{Kato1995Perturbation}, but phrases them through constructive algebra and contour certificates rather than through exact eigenvector extraction.

\subsection{Structure of Work}

Section by section, the work is summarized as follows.

\begin{itemize}
	\item \Cref{sec_sets} introduces blocks, multiblocks, locatedness, total boundedness, representable sets, and measure containment.
	The main point is that sets are described by finite multiblock data with controlled exceptional parts.
	This is the geometric language used throughout the manuscript.
	\item \Cref{sec_functions} develops continuous, complemented, regular, locally regular, and effectively bounded regular functions.
	It proves measurability and Riemann integrability of bounded regular functions, records closure properties, and establishes the Fubini-type tools needed later for kernels and products.
	\item \Cref{sec_optimal_functions} proves constructive functional extremum-value theorems.
	The first part extends the earlier supremum-norm result to functions with values in $\R^m$ by a mollifier argument.
	The second part uses the integration theory to obtain $d_1$ and $d_2$ variants for regular functions whose supports have finite measure nets or cone-collar data.
	\item \Cref{sec_multifunctions} develops multifunctions as set-valued analogues of regular functions.
	It separates domain conditions from value conditions and proves selector theorems for block-full, marginalized, and Hausdorff-Lipschitz convex values.
	Steiner points are used as a canonical Lipschitz selection mechanism.
	\item \Cref{sec_filippov_solutions} applies the selector results to differential inclusions.
	It defines exact and certified approximate Filippov solutions, sample-and-hold certificates, Picard-type approximants, and viable solutions.
	The emphasis is on finite certificates rather than on abstract existence alone.
	\item \Cref{sec_probability_densities} introduces regular densities, events, expectations, conditional densities, kernels, and finite-horizon controlled Markov chains.
	All probabilistic objects are ordinary integrals of regular functions over Euclidean spaces.
	No abstract sample space is used.
	\item \Cref{sec_empirical_density_certificates} records a finite substitute for randomness.
	It defines empirical density certificates for finite sample bundles and shows how such certificates propagate through regular Markov links.
	The statements are test-relative and deliberately avoid algorithmic randomness or pseudorandomness assumptions.
	\item \Cref{sec_linear_systems} is included for completeness.
	It adapts finite-dimensional resolvent projectors and constructive polynomial algebra to obtain spectral-cluster projectors and a basic linear stability criterion.
	This addresses approximate spectral decomposition without relying on exact eigenvector extraction.
\end{itemize}

The material is deliberately organized in designated environments: definitions, notations, examples, remarks, lemmas, propositions, corollaries, and theorems.
This makes the dependency structure easier to audit.

\section{Basic Notions}

\begin{remark}
	This section fixes the elementary language used throughout the paper.
	Real numbers, functions, and sequences are read through operations producing finite data, and the notation introduced here is used without repeatedly naming the underlying witnesses.
\end{remark}

\begin{definition}[Elementary data]
	\label{dfn_elem_data}
	\emph{Elementary data} are a finite collection of natural numbers and possibly zero.
\end{definition}

\begin{remark}
	Elementary data are to be understood contextually.
	For instance, elementary data may encode an integer as a tuple of a natural number and a bit indicating the sign.
	A rational can be encoded as pair of a natural and an integer number.
	A rational-valued vector can be encoded as a finite collection of rationals etc.
\end{remark}

\begin{definition}[Operation]
	\label{dfn_operation}
	An \emph{operation} is a computational routine of finitely many iterations producing elementary data from elementary data.
\end{definition}

\begin{remark}
	\label{rem_reals_funcs}
	An example of an operation is a routine that computes rational approximations of a given real number satisfying the Cauchy property:
	\begin{equation}
		\label{eqn_real}
		x \in \R \triangleq q_x: \Q_{>0} \rightarrow \Q \quad \sut \forall \eps, \eps' \in \Q_{>0} \; \abs{q_x(\eps) - q_x(\eps')} \le \eps + \eps'.
	\end{equation}
	Such an operation $q_x$ is called a \textit{witness} for $x \in \R$.
	For brevity, we do not explicitly mention the underlying operations and just say, for instance, ``consider a real number $x$''.
	Notice the contrast to the classical case, where a real is an \textit{equivalence} class of Cauchy sequences.
	Another example is an operation that computes the values of a given function as rational approximations.
\end{remark}

\begin{definition}[Sequence]
	\label{dfn_sequence}
	A \emph{sequence} is an operation producing data from natural numbers.
	In other words, for every $i \in \N$, the sequence, seen as an operation, produces some elementary data $d_i$.
\end{definition}

\begin{remark}
	A finite sequence is simply a finite collection of pieces of elementary data.
	In constructive mathematics, we do not speak of infinite collections of arbitrary nature.
	Hence, any sequence must come with an operation computing the members of this sequence.
	Sets should not be confused with ``infinite collections of arbitrary nature''.
	When we speak of \eg real numbers $\R$, we treat $\R$ as a type of data rather than a collection.
	To specify a set in constructive mathematics, we only need two ingredients:
	\begin{itemize}
		\item to specify what it means to be a member of the given set,
		\item to specify what it means for two members of the set to be equal.
	\end{itemize} 
	For $\R$, we already addressed the item membership in \Cref{rem_reals_funcs}.
	Next, if $x, y \in \R$ (with witnesses $q_x, q_y$) and $x = y$, then it holds that
	\begin{equation}
		\label{eqn_reals_equal}
		\forall \eps \in \Q_{>0} \; \abs{q_x(\eps) - q_y(\eps)} \le 2\eps.
	\end{equation}	
\end{remark}

\begin{remark}
	Sequences may be nested or, in other words, parametrized.
	For instance, a sequence of real numbers is an operation $q_x(\eps \mid i)$ that is parametrized by the member index $i$.
	Again, we often omit the underlying operations and just say ``a sequence of reals''.
\end{remark}

\begin{remark}
	The following basic things are used.
	\begin{itemize}
		\item We refer to the sets classically called ``non-empty'' as inhabited (it is a ``positive'' notion, liked in constructive mathematics).
		\item We will usually refer to an index $k$ as running in a finite range, whereas $i$ will run in an infinite range.
		For instance, we will sometimes just write $\sum_k$ meaning the sum is finite.
		In contrast, $\sum_i$ is an infinite sum.
	\end{itemize}
\end{remark}

\section{Sets}
\label{sec_sets}

\begin{remark}
	This section develops the finite geometric language used by the rest of the paper.
	Blocks and multiblocks are the elementary bricks, while representable sets encode a set by core, exterior, and exception data whose measure cost can be made small.
	The later definitions of regular functions, events, supports, and exception layers all rely on this set-theoretic layer.
\end{remark}

\subsection{Common Notions}

\begin{notation}
	\label{ntn_interior}
	The interior of a set $X$ is denoted by $\inter{X}$.
\end{notation}

\begin{notation}
	\label{ntn_convex_hull}
	For $X\subset\R^n$, its convex hull is denoted by
	\[
		\conv{X}.
	\]
	For finitely many points $x_1,\ldots,x_N$, we write
	\[
		\conv{\{x_1,\ldots,x_N\}}
	\]
	for the set of all finite convex combinations of these points.
\end{notation}

\begin{definition}[Disjoint sets]
	\label{dfn_disjoint_sets}
	Two sets $X,Y\subseteq\R^n$ are called \emph{disjoint} if
	\[
		X\cap Y=\emptyset .
	\]
	A finite family $X_1,\ldots,X_N$ is called \emph{pairwise disjoint} if $X_k\cap X_l=\emptyset$ for all $k\ne l$.
\end{definition}

\subsection{Locatedness and Total Boundedness}
\label{subsec_locatedness_total_boundedness}

\begin{definition}[Located set]
	\label{dfn_located_set}
	An inhabited set $X\subset\R^n$ is called \emph{located} if it is equipped with an operation which, for every $x\in\R^n$, computes the real number
	\[
		d(x,X):=\inf_{y\in X}\nrm{x-y}.
	\]
	Equivalently, the distance from any point to $X$ is available as constructive real data.
	Upper estimates are understood as part of the infimum data: a proof that $d(x,X)<r$ supplies $y\in X$ with $\nrm{x-y}<r$.
\end{definition}

\begin{definition}[Finite net and total boundedness]
	\label{dfn_finite_net}
	Let $(\mathbb{X},d)$ be a metric space and let $\eps>0$.
	A finite sequence $x_1,\ldots,x_N$ in $\mathbb{X}$ is called an \emph{$\eps$-net} for $\mathbb{X}$ if
	\[
		\forall x\in\mathbb{X}\spc \exists k\in\{1,\ldots,N\}\spc d(x,x_k)\le\eps.
	\]
	The space $\mathbb{X}$ is called \emph{totally bounded} if it admits an $\eps$-net for every $\eps>0$.
\end{definition}

\begin{lemma}[Finite nets locate subsets of Euclidean space]
	\label{lem_totally_bounded_located}
	Every inhabited totally bounded subset $X\subset\R^n$ is located.
\end{lemma}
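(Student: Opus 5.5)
The plan is the standard Bishop argument: approximate $d(x,X)$ by minima over finite nets and show these minima form a Cauchy witness in the sense of \eqref{eqn_real}. Fix $x\in\R^n$. For each $\eps\in\Q_{>0}$, total boundedness (\Cref{dfn_finite_net}) supplies an $\eps$-net $x_1^\eps,\ldots,x_{N_\eps}^\eps$ of $X$, consisting of elements of $X$. Since $X$ is inhabited, we may take $N_\eps\ge 1$: if a net were empty, adjoin a witness point of $X$. Define
\[
	\delta_\eps:=\min_{k\le N_\eps}\nrm{x-x_k^\eps}.
\]
This is a finite minimum of real numbers, so it is computable as real data.

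The key estimate is that $\delta_\eps$ is within $\eps$ of the infimum. First, every $y\in X$ has some $k$ with $\nrm{y-x_k^\eps}\le\eps$. The triangle inequality then gives $\delta_\eps\le\nrm{x-y}+\eps$, so $\delta_\eps-\eps$ is a lower bound for $\{\nrm{x-y}:y\in X\}$. Second, $\delta_\eps$ is attained at some net point, which lies in $X$. Comparing two precisions $\eps,\eps'$ and applying the first fact to the net point attaining $\delta_{\eps'}$ gives $\delta_\eps\le\delta_{\eps'}+\eps$. By symmetry, $\abs{\delta_\eps-\delta_{\eps'}}\le\eps+\eps'$. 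Rational approximations of $\delta_\eps$ at precision $\eps$ therefore form a Cauchy witness, at the cost of a harmless constant factor, and they define a real number $\delta$.

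It remains to check that $\delta$ is the infimum in the constructive sense required by \Cref{dfn_located_set}. For the lower-bound property, pass to the limit in $\delta_\eps-\eps\le\nrm{x-y}$ to get $\delta\le\nrm{x-y}$ for all $y\in X$. For the approximation property, suppose $r>\delta$. Choose a rational $\eps$ with $2\eps<r-\delta$, so that $\delta_\eps\le\delta+\eps<r$. The net point realizing $\delta_\eps$ is then an explicit $y\in X$ with $\nrm{x-y}<r$, which is exactly the upper-estimate witness the definition demands.

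The main obstacle is purely constructive bookkeeping: the minimum over the net is taken over real numbers, and one cannot decide which index attains it. So "the net point realizing $\delta_\eps$" must be replaced by a point known to be close. I would handle this by comparing rational approximations of the finitely many $\nrm{x-x_k^\eps}$ to within an extra $\eps$, picking an index $k$ with $\nrm{x-x_k^\eps}<\delta_\eps+\eps$, and absorbing the extra $\eps$ into the estimates above. This adjusts the constants, for instance to $2\eps$, without changing the argument.
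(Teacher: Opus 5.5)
Your proof is correct and uses the same finite-net argument as the paper, namely the two-sided estimate $\delta\le\delta_\eps\le\delta+\eps$ for the net minima. Your version is slightly more careful: it builds $\delta$ from the Cauchy property of the $\delta_\eps$ and explicitly checks the upper-estimate witness required by \Cref{dfn_located_set}, whereas the paper takes the infimum $d(z,X)$ as given and only shows that the net minima approximate it. The ``main obstacle'' you flag is smaller than you suggest. The bound $\delta_\eps\le\delta_{\eps'}+\eps$ needs no attaining index, because a lower bound for every $\nrm{x-x_k^{\eps'}}$ is a lower bound for their minimum. For the witness it suffices that $\min_k a_k<r$ constructively yields some $k$ with $a_k<r$.
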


\begin{proof}
	This is the standard constructive finite-net proof of locatedness for totally bounded metric subspaces; compare Bishop--Bridges \cite[Chapter~4]{Bishop1985ConstructiveAn}.
	Fix $z\in\R^n$.
	For a prescribed rational $\eps>0$, take an $\eps$-net
	\[
		x_1,\ldots,x_N\in X
	\]
	and compute
	\[
		d_N(z):=\min_{k=1,\ldots,N}\nrm{z-x_k}.
	\]
	Since the net points belong to $X$,
	\[
		d(z,X)\le d_N(z).
	\]
	Conversely, for any $x\in X$ choose $x_k$ with $\nrm{x-x_k}\le\eps$.
	Then
	\[
		d_N(z)\le\nrm{z-x_k}\le\nrm{z-x}+\eps.
	\]
	Taking the infimum over $x\in X$ gives
	\[
		d_N(z)\le d(z,X)+\eps.
	\]
	Thus the computable numbers $d_N(z)$ approximate $d(z,X)$ with arbitrarily small error, which is the required locatedness operation.
\end{proof}

\subsection{Blocks and Multiblocks}

\begin{definition}[Block]
	\label{dfn_block}
	A \emph{block} is a closed hyperrectangle in $\R^n$ with rational vertices.
\end{definition}

\begin{remark}
	A block is an example of a piece of elementary data.
\end{remark}

\begin{remark}
	\label{rem_degenerate_block}
	A block can be empty or degenerate, meaning, one of its side lengths is zero.
\end{remark}

\begin{definition}[Multiblock]
	\label{dfn_multiblock}
	A \emph{multiblock} is a sequence of blocks $\mathcal M = \{B_i\}_i$.
	Its support is defined as
	\begin{equation}
		\label{eqn_multiblock_support}
		\bar{\mathcal M} := \bigcup_i B_i.
	\end{equation}
	Accordingly, a \textit{finite} multiblock is a finite sequence of blocks.
\end{definition}

\begin{remark}
	\label{rem_empty_multiblock}
	The empty finite sequence is allowed.
	It is called the empty multiblock, its support is empty.
\end{remark}

\begin{remark}
	\label{rem_support_multiblock}
	For any union of finitely many blocks, there is a corresponding finite multiblock.
\end{remark}

\begin{notation}[Concatenation of finite multiblocks]
	\label{ntn_multiblock_concatenation}
	If
	\[
	\mathcal M_1=\{B_{1,k}\}_{k=1}^{N_1},\ldots,
	\mathcal M_r=\{B_{r,k}\}_{k=1}^{N_r}
	\]
	are finite multiblocks, their concatenation is the finite multiblock
	\[
	\mathcal M_1\sqcup\cdots\sqcup\mathcal M_r
	:=
	\{B_{1,1},\ldots,B_{1,N_1},\ldots,B_{r,1},\ldots,B_{r,N_r}\}.
	\]
	The support of the concatenation is
	\[
	\overline{\mathcal M_1\sqcup\cdots\sqcup\mathcal M_r}
	=
	\bar{\mathcal M}_1\cup\cdots\cup\bar{\mathcal M}_r.
	\]
\end{notation}

\begin{remark}
	\label{rem_being_in_union}
	Membership in the support of a multiblock is understood positively.
	Thus, a proof that $x\in\bar{\mathcal M}=\bigcup_i B_i$ consists of an index $j\in\N$ and a proof that $x\in B_j$.
	This convention does not assert that, for an arbitrary point $x$, membership in $\bar{\mathcal M}$ is decidable or that such an index can be computed without further data.
\end{remark}

\begin{notation}
	\label{ntn_boundary}
	The boundary of a block $B$ is denoted $\partial B$.
	By analogy, we denote the boundary of the support of a multiblock $\bar{\mathcal M}$ as $\partial \bar{\mathcal M}$.
	For a multiblock $\mathcal M$, we abuse the notation and also write $\partial \mathcal M$ meaning the respective multiblock of the boundaries of the constituents.
	Notice, the dimensions of the constituents of the boundary multiblock decreases relative to the original multiblock.
\end{notation}

\begin{definition}[Full block]
	\label{dfn_full_block}
	A block \(B=\prod\limits_{k=1}^n [a_k,b_k]\) is called \textit{full}, if
	\begin{equation}
		\label{eqn_full_block}
		a_k < b_k
	\end{equation}
	for all \(k=1,\ldots,n\).
	A multiblock is full if all its constituents are full.
	Analogously, its support is also called full.
\end{definition}

\begin{definition}[Well-connected multiblock]
	\label{dfn_connected_block}
	A full multiblock $\mathcal M = \{B_i\}_i$ is called \textit{well-connected} if, for any two constituent blocks $B_i,B_j\in\mathcal M$, there exists a finite chain
	\[
	B_i = C_0,\ C_1,\ldots,\ C_r = B_j
	\]
	of constituent blocks of $\mathcal M$ such that $C_{q-1}\cap C_q$ is full for every $q=1,\ldots,r$.
\end{definition}

\begin{remark}
	The support of a well-connected multiblock does not have ``gaps'' and degenerate ``interchanges''.
	In other words, there exists a tiny block which one could move through the whole support without ever leaving the boundary.
\end{remark}

\begin{definition}[Block operations]
	\label{dfn_block_operations}
	Define an operation $\oplus$ on a block \(B=\prod\limits_{k=1}^n [a_k,b_k]\) for any $\lambda>0$ as follows:
	\begin{equation}
		\label{eqn_block_oplus}
		B \oplus \lambda = \prod\limits_{k=1}^n [a_k - \lambda, b_k + \lambda]
	\end{equation}
	By analogy, define $\ominus$ with the exception that if the respective side length is shrunk into negative, it is rendered zero.
\end{definition}

\begin{remark}
	\label{rem_block_operations}
	The operations $\oplus$ and $\ominus$ can be extended to multiblocks and their supports straightforwardly.
\end{remark}

\subsection{Grid Refinement}
\label{sub_grid_refinement}

\begin{definition}[Coordinate refinement]
	\label{dfn_coordinate_refinement}
	Let $B=\prod_{l=1}^n[a_l,b_l]$ be a full block, and let $\mathcal A$ be a finite multiblock.
	For $l=1,\ldots,n$, let $T_l(B,\mathcal A)$ be the finite set consisting of $a_l,b_l$ and all $l$-th coordinate endpoints of all constituent blocks of $\mathcal A$.
	Write the elements of $T_l(B,\mathcal A)\cap[a_l,b_l]$ as
	\[
	t_{l,0}<t_{l,1}<\cdots<t_{l,r_l}.
	\]
	The \emph{coordinate refinement} of $B$ generated by $\mathcal A$ is the finite full multiblock
	\[
	\mathcal R(B;\mathcal A)
	:=
	\left\{
	R_q=
	\prod_{l=1}^n[t_{l,q_l},t_{l,q_l+1}]
	:
	q=(q_1,\ldots,q_n),\quad 0\le q_l<r_l
	\right\},
	\]
	after omitting empty and degenerate products.
	Its constituent blocks have pairwise disjoint interiors and are contained in $B$.
	Moreover, for every $R\in\mathcal R(B;\mathcal A)$ and every constituent block $A$ of $\mathcal A$, either $R\subseteq A$ or $\inter{R}\cap\inter{A}=\emptyset$.
	This is a finite rational block relation, decidable from the coordinate endpoints.

	If $\rho>0$ is smaller than half the distance between any two distinct elements of any $T_l(B,\mathcal A)$, define the coordinate-boundary multiblock
	\[
	\mathcal H_\rho(B;\mathcal A)
	:=
	\left\{
	B\cap\{x\in\R^n:\abs{x_l-t}\le\rho\}:
	l=1,\ldots,n,\quad t\in T_l(B,\mathcal A)
	\right\},
	\]
	after omitting empty and degenerate blocks, and define the trimmed coordinate refinement
	\[
	\mathcal R_\rho(B;\mathcal A)
	:=
	\left\{
	R_{q,\rho}=
	\prod_{l=1}^n[t_{l,q_l}+\rho,t_{l,q_l+1}-\rho]:
	q=(q_1,\ldots,q_n),\quad 0\le q_l<r_l
	\right\},
	\]
	again after omitting empty and degenerate products.
	For every $R\in\mathcal R_\rho(B;\mathcal A)$ and every constituent block $A$ of $\mathcal A$, either $R\Subset A$ or $R\cap A=\emptyset$.
\end{definition}

\begin{figure}[ht]
	\centering
	\resizebox{0.92\linewidth}{!}{\begin{tikzpicture}[x=1cm,y=1cm,font=\small]
	\begin{scope}
		\fill[gray!8] (0,0) rectangle (5.2,3.4);
		\draw[very thick] (0,0) rectangle (5.2,3.4);
		
		\fill[blue!24] (0.7,0.55) rectangle (3.35,2.05);
		\draw[blue!65!black,thick] (0.7,0.55) rectangle (3.35,2.05);
		\node[blue!65!black] at (1.55,1.28) {$A_1$};
		
		\fill[gray!32,opacity=0.82] (1.85,1.25) rectangle (4.55,2.85);
		\draw[gray!70!black,thick] (1.85,1.25) rectangle (4.55,2.85);
		\node[gray!65!black] at (3.75,2.22) {$A_2$};
		
		\foreach \x in {0.7,1.85,3.35,4.55} {
			\draw[orange!80!black,dashed] (\x,0) -- (\x,3.4);
		}
		\foreach \y in {0.55,1.25,2.05,2.85} {
			\draw[orange!80!black,dashed] (0,\y) -- (5.2,\y);
		}
		
		\node[anchor=north east] at (5.05,3.25) {$B$};
		
		\node[anchor=south west] at (0,3.55) {$\mathcal A=A_1\sqcup A_2$};
		\node[orange!80!black,anchor=north west,font=\scriptsize, yshift=0.1cm] at (0.05,-0.14) {coordinate endpoints $T_l(B,\mathcal A)$};
	\end{scope}
	
	\begin{scope}[xshift=6.3cm]
		\fill[gray!8] (0,0) rectangle (5.2,3.4);
		\draw[very thick] (0,0) rectangle (5.2,3.4);
		
		\fill[blue!18] (0.7,0.55) rectangle (1.85,1.25);
		\fill[blue!18] (1.85,0.55) rectangle (3.35,1.25);
		\fill[blue!18] (0.7,1.25) rectangle (1.85,2.05);
		\fill[violet!22] (1.85,1.25) rectangle (3.35,2.05);
		\fill[gray!28] (3.35,1.25) rectangle (4.55,2.05);
		\fill[gray!28] (1.85,2.05) rectangle (3.35,2.85);
		\fill[gray!28] (3.35,2.05) rectangle (4.55,2.85);
		
		\foreach \x in {0.7,1.85,3.35,4.55} {
			\draw[orange!80!black,dashed] (\x,0) -- (\x,3.4);
		}
		\foreach \y in {0.55,1.25,2.05,2.85} {
			\draw[orange!80!black,dashed] (0,\y) -- (5.2,\y);
		}
		\foreach \x in {0,0.7,1.85,3.35,4.55,5.2} {
			\draw[thin,gray!55] (\x,0) -- (\x,3.4);
		}
		\foreach \y in {0,0.55,1.25,2.05,2.85,3.4} {
			\draw[thin,gray!55] (0,\y) -- (5.2,\y);
		}
		
		\draw[very thick,blue!70!black] (1.85,1.25) rectangle (3.35,2.05);
		\node[blue!70!black] at (2.6,1.65) {$R_q$};
		
		\node[anchor=north east] at (5.05,3.25) {$B$};
		
		\node[anchor=south west] at (0,3.55) {$\mathcal R(B;\mathcal A)$};
		\node[anchor=north west,font=\scriptsize, yshift=0.1cm] at (0.05,-0.14) {pairwise disjoint full cells covering $B$};
	\end{scope}
	
	\begin{scope}[yshift=-5.1cm]
		\fill[gray!8] (0,0) rectangle (5.2,3.4);
		\draw[very thick] (0,0) rectangle (5.2,3.4);
		
		\foreach \x in {0.7,1.85,3.35,4.55} {
			\fill[orange!25] (\x-0.12,0) rectangle (\x+0.12,3.4);
			\draw[orange!80!black,dashed] (\x,0) -- (\x,3.4);
		}
		\foreach \y in {0.55,1.25,2.05,2.85} {
			\fill[orange!25] (0,\y-0.12) rectangle (5.2,\y+0.12);
			\draw[orange!80!black,dashed] (0,\y) -- (5.2,\y);
		}
		
		\node[anchor=north east] at (5.05,3.25) {$B$};
		
		\node[anchor=south west] at (0,3.55) {$\mathcal H_\rho(B;\mathcal A)$};
		\node[orange!80!black,font=\scriptsize,align=center,anchor=north,yshift=0.9cm] at (2.6,0) {coordinate strips\\around $T_l(B,\mathcal A)$};
	\end{scope}
	
	\begin{scope}[xshift=6.3cm,yshift=-5.1cm]
		\fill[gray!8] (0,0) rectangle (5.2,3.4);
		\draw[very thick] (0,0) rectangle (5.2,3.4);
		
		\foreach \x in {0.7,1.85,3.35,4.55} {
			\draw[orange!80!black,dashed] (\x,0) -- (\x,3.4);
		}
		\foreach \y in {0.55,1.25,2.05,2.85} {
			\draw[orange!80!black,dashed] (0,\y) -- (5.2,\y);
		}
		
		\foreach \xa/\xb in {0.12/0.58,0.82/1.73,1.97/3.23,3.47/4.43,4.67/5.08} {
			\foreach \ya/\yb in {0.12/0.43,0.67/1.13,1.37/1.93,2.17/2.73,2.97/3.28} {
				\fill[gray!20] (\xa,\ya) rectangle (\xb,\yb);
				\draw[thin,gray!55] (\xa,\ya) rectangle (\xb,\yb);
			}
		}
		
		\fill[blue!20] (1.97,1.37) rectangle (3.23,1.93);
		\draw[very thick,blue!70!black] (1.97,1.37) rectangle (3.23,1.93);
		\node[blue!70!black] at (2.6,1.65) {$R_{q,\rho}$};
		
		\node[anchor=north east] at (5.05,3.25) {$B$};
		
		\node[anchor=south west] at (0,3.55) {$\mathcal R_\rho(B;\mathcal A)$};
		\node[font=\scriptsize,align=center,anchor=north,yshift=0.9cm] at (2.6,0) {trimmed full cells outside\\$\mathcal H_\rho(B;\mathcal A)$};
	\end{scope}
\end{tikzpicture}}
	\caption{Coordinate refinement from \Cref{dfn_coordinate_refinement}. The concatenated multiblock $\mathcal A=A_1\sqcup A_2$ determines the coordinate endpoint sets $T_l(B,\mathcal A)$. The exact refinement $\mathcal R(B;\mathcal A)$ consists of the full cells $R_q$ cut out by those endpoints. The thickened endpoint strips form $\mathcal H_\rho(B;\mathcal A)$, and the remaining trimmed full cells form $\mathcal R_\rho(B;\mathcal A)$.}
	\label{fig_coordinate_refinement}
\end{figure}

\subsection{Block Measure, Cost and Distance}
\label{sub_block_measure_etc}

\begin{definition}
	\label{dfn_block_volume}
	Let $\mu(B)$ denote the standard Lebesgue volume of a block $B$.
\end{definition}

\begin{remark}
	\label{rem_block_volume}
	Empty and degenerate blocks have volume zero.
\end{remark}

\begin{definition}[Volume of a finite multiblock]
	\label{dfn_finite_multiblock_volume}
	Let $\mathcal M=\{B_k\}_{k=1}^N$ be a finite multiblock.
	Its volume is the Lebesgue volume of its support,
	\[
	\mu(\bar{\mathcal M})
	:=
	\mu\left(\bigcup_{k=1}^N B_k\right).
	\]
\end{definition}

\begin{remark}
	\label{rem_finite_multiblock_volume}
	Let $\mathcal M=\{B_k\}_{k=1}^N$ be a finite multiblock, and let $B$ be any full block containing $\bar{\mathcal M}$.
	Apply \Cref{dfn_coordinate_refinement} to $B$ and $\mathcal M$.
	For $R\in\mathcal R(B;\mathcal M)$, the defining property of the refinement gives either $R\subseteq B_k$ or $\inter{R}\cap\inter{B_k}=\emptyset$ for each $k$.
	Hence the finite set
	\[
		\mathcal R_{\mathcal M}
		:=
		\{R\in\mathcal R(B;\mathcal M):\exists k,\ R\subseteq B_k\}
	\]
	is computable from the rational endpoints, and
	\[
		\mu(\bar{\mathcal M})
		=
		\sum_{R\in\mathcal R_{\mathcal M}}\mu(R).
	\]
	The equality is independent of the auxiliary block $B$: the omitted part is contained in the union of finitely many coordinate faces of the blocks $B_k$, hence has volume zero.
	See \Cref{fig_coordinate_refinement}.
\end{remark}

\begin{lemma}[Monotonicity and finite additivity of multiblock volume]
	\label{lem_finite_multiblock_volume_monotone_additive}
	Let $\mathcal M$ and $\mathcal N$ be finite multiblocks.
	If
	\[
		\bar{\mathcal M}\subseteq\bar{\mathcal N},
	\]
	then
	\[
		\mu(\bar{\mathcal M})\le\mu(\bar{\mathcal N}).
	\]
	If $\mathcal M_1,\ldots,\mathcal M_r$ are finite multiblocks whose supports have pairwise disjoint interiors, then
	\[
		\mu\left(\overline{\mathcal M_1\sqcup\cdots\sqcup\mathcal M_r}\right)
		=
		\sum_{s=1}^r\mu(\bar{\mathcal M}_s).
	\]
\end{lemma}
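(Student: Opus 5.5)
The plan is to reduce both claims to counting cells of a single common coordinate refinement, using \Cref{rem_finite_multiblock_volume}. For monotonicity, I fix a full block $B$ containing $\bar{\mathcal N}$, and hence also $\bar{\mathcal M}$, and set $\mathcal A:=\mathcal M\sqcup\mathcal N$. I then form the refinement $\mathcal R:=\mathcal R(B;\mathcal A)$ from \Cref{dfn_coordinate_refinement}. Since every constituent block of $\mathcal M$ and of $\mathcal N$ is a constituent of $\mathcal A$, each cell $R\in\mathcal R$ is either contained in a given constituent or has interior disjoint from it. By the independence statement in \Cref{rem_finite_multiblock_volume}, I may compute both volumes with this same $\mathcal R$:
\[
\mu(\bar{\mathcal M})=\sum_{R\in\mathcal R_{\mathcal M}}\mu(R),\qquad \mu(\bar{\mathcal N})=\sum_{R\in\mathcal R_{\mathcal N}}\mu(R).
\]
A small check is needed here: the remark is phrased for the refinement generated by $\mathcal M$ alone. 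Refining further by the endpoints of $\mathcal N$ only splits each cell into subcells whose volumes add up, so the formula is unchanged.

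The key step is showing $\mathcal R_{\mathcal M}\subseteq\mathcal R_{\mathcal N}$. Take $R\in\mathcal R_{\mathcal M}$, so that $R\subseteq B_k\subseteq\bar{\mathcal N}$ for some $k$. Choose the centre $c$ of $R$; it is a rational interior point of the full cell $R$. Then $c\in\bar{\mathcal N}$ gives a constituent $C$ of $\mathcal N$ with $c\in C$. This membership is decidable, since all the data are rational. By the dichotomy for refinement cells, either $R\subseteq C$ or $\inter{R}\cap\inter{C}=\emptyset$. The second case is impossible: $c$ lies in $\inter R$, and $c\in C$ means a neighbourhood of $c$ meets $\inter C$ in a set of positive volume, since $C$ is full when it contains the centre of a full cell that it does not cut. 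If $C$ is degenerate, the dichotomy forces $\inter R\cap\inter C=\emptyset$ trivially. So I must choose $C$ among full constituents. I will handle this by instead selecting finitely many generic rational points of $\inter R$, namely the centre perturbed slightly off all coordinate hyperplanes of degenerate blocks. Alternatively, I can observe that degenerate constituents lie in hyperplanes, and a full cell covered by finitely many blocks must have its centre in a full one. I expect this to be the main obstacle: it is the only place where the inclusion $\bar{\mathcal M}\subseteq\bar{\mathcal N}$, which is a pointwise statement, is turned into a finite combinatorial statement. Once it is done, monotonicity follows because the summands are nonnegative.

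For additivity, I again take one common refinement $\mathcal R$, generated by the concatenation of all the $\mathcal M_s$. The cells covering the concatenation are exactly $\bigcup_s\mathcal R_{\mathcal M_s}$. I then claim these sets are pairwise disjoint. If some cell $R$ lay in both $\mathcal R_{\mathcal M_s}$ and $\mathcal R_{\mathcal M_t}$ with $s\ne t$, then $\inter R$, which is inhabited because $R$ is full, would lie in $\inter{\bar{\mathcal M}_s}\cap\inter{\bar{\mathcal M}_t}$, contradicting the hypothesis. Since membership of cells in these sets is decidable, the contradiction yields genuine decidable disjointness. Summing over the cells then gives $\mu(\overline{\sqcup_s\mathcal M_s})=\sum_s\mu(\bar{\mathcal M}_s)$.
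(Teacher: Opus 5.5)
Your proof is correct and follows the paper's route: one common coordinate refinement, with monotonicity coming from the inclusion of the counted cells and additivity from the fact that no full cell can be counted for two supports with disjoint interiors. The paper only asserts the inclusion of counted cells, which you verify; your extra care with degenerate constituents is unnecessary, because every endpoint of every constituent is a cut point of the refinement, so the centre of a cell lies in a constituent exactly when the whole cell does.
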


\begin{proof}
	Choose a full block $B$ containing all supports under consideration and apply \Cref{dfn_coordinate_refinement} to $B$ and to the concatenation of the relevant multiblocks.
	The volume of each support is computed by summing the volumes of the refined blocks contained in that support, as in \Cref{rem_finite_multiblock_volume}.
	If $\bar{\mathcal M}\subseteq\bar{\mathcal N}$, every refined block counted for $\bar{\mathcal M}$ is counted for $\bar{\mathcal N}$.
	This gives monotonicity.
	If the interiors of the supports $\bar{\mathcal M}_s$ are pairwise disjoint, no full refined block can be counted for two different supports.
	The refined blocks counted for the concatenation are therefore the disjoint union of the refined blocks counted for the separate supports, which gives the displayed additivity formula.
\end{proof}

\begin{definition}[Measure containment]
	\label{dfn_measure_containment}
	Let $S,T\subset\R^n$ be supports of finite multiblocks.
	We write
	\[
	S \msubset T
	\]
	if
	\[
		S\subseteq T
	\]
	and
	\[
		\mu(T)-\mu(S)=0.
	\]
	Equivalently, $T$ contains no positive-volume part outside $S$.
	The same notation is used for finite multiblocks by applying the definition to their supports.
\end{definition}

\begin{remark}
	\label{rem_measure_containment_no_localization}
	The relation $\msubset$ is stronger than pointwise inclusion and weaker than equality.
	It records that the larger set may differ from the smaller one only by a zero-volume part.
	The inclusion in \Cref{dfn_measure_containment} is an ordinary implication between finite block supports, usually verified by finite coordinate refinements.
	For finite multiblocks with rational vertices, such inclusions reduce to finitely many rational endpoint checks after passing to a common coordinate refinement.
	It does not provide a procedure localizing an arbitrary real point of the larger set into one constituent block of the smaller multiblock.
\end{remark}

\begin{lemma}[Measure containment for coordinate refinements]
	\label{lem_coordinate_refinement_measure_containment}
	Let $B$ be a full block, let $\mathcal A$ be a finite multiblock, and use the notation of \Cref{dfn_coordinate_refinement}.
	Then
	\[
		\overline{\mathcal R(B;\mathcal A)}\msubset B.
	\]
	If $\rho>0$ is admissible in \Cref{dfn_coordinate_refinement}, then
	\[
		\overline{
			\mathcal H_\rho(B;\mathcal A)
			\sqcup
			\mathcal R_\rho(B;\mathcal A)
		}
		\msubset
		B.
	\]
\end{lemma}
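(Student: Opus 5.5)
Measure containment $S\msubset T$ has two parts: the pointwise inclusion $S\subseteq T$ and the volume identity $\mu(T)-\mu(S)=0$. I will verify both for each of the two claims, treating $B$ itself as the one-block multiblock $\{B\}$.

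\textbf{First claim.} Inclusion is immediate, since every cell $R_q$ is a product of intervals $[t_{l,q_l},t_{l,q_l+1}]\subseteq[a_l,b_l]$. For the volume, I will not invoke the refinement machinery again but compute directly. The cells $R_q$ have pairwise disjoint interiors, so they are finite multiblocks with pairwise disjoint interiors in the sense of \Cref{lem_finite_multiblock_volume_monotone_additive}. Finite additivity then gives
\[
\mu\left(\overline{\mathcal R(B;\mathcal A)}\right)=\sum_q\prod_{l=1}^n\left(t_{l,q_l+1}-t_{l,q_l}\right).
\]
The sum factorizes as $\prod_l\sum_{q_l}(t_{l,q_l+1}-t_{l,q_l})$, and each inner sum telescopes to $b_l-a_l$. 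Hence the total equals $\mu(B)$.

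One bookkeeping point concerns the omitted empty or degenerate products. Since the $t_{l,\cdot}$ are strictly increasing and $B$ is full, no product is actually degenerate. Even if some were, they would have volume zero by \Cref{rem_block_volume} and would not affect the sum.

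\textbf{Second claim.} Again inclusion is clear. Each strip is intersected with $B$, and each trimmed cell $R_{q,\rho}\subseteq R_q\subseteq B$. For the volume, I will use monotonicity from \Cref{lem_finite_multiblock_volume_monotone_additive} to get $\mu(\overline{\mathcal H_\rho\sqcup\mathcal R_\rho})\le\mu(B)$. The reverse inequality is the main obstacle, because the union is not presented as a disjoint family.

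To get the reverse inequality, I will show that $B\subseteq\overline{\mathcal H_\rho}\cup\overline{\mathcal R_\rho}$ and then apply monotonicity in the other direction. Fix $x\in B$. If for some $l$ and some $t\in T_l(B,\mathcal A)$ we have $\abs{x_l-t}\le\rho$, then $x$ lies in a strip of $\mathcal H_\rho$. Otherwise, for every $l$, $x_l$ lies strictly inside some interval $(t_{l,q_l}+\rho,\,t_{l,q_l+1}-\rho)$, so $x\in R_{q,\rho}$.

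This case split is a real-number dichotomy, which is not decidable constructively. I have two ways to handle this:
\begin{itemize}
\item Replace the strict/nonstrict split by the overlapping approximate split $\abs{x_l-t}<\rho$ versus $\abs{x_l-t}>\rho/2$. This is decidable from rational approximations, but the second alternative only places $x$ in a cell trimmed by $\rho/2$, so the cases do not quite close.
\item Avoid pointwise case analysis entirely and argue with volumes. Apply the coordinate refinement of \Cref{dfn_coordinate_refinement} to $B$ and the concatenation $\mathcal H_\rho\sqcup\mathcal R_\rho$. All endpoints are rational, so each refined cell is decided, by finitely many rational comparisons, to lie in some strip or in some trimmed cell. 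Every cell lies in one or the other, because its interior avoids all the points $t\pm\rho$.
\end{itemize}

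I will follow the second route. Summing over the refined cells as in \Cref{rem_finite_multiblock_volume} then gives $\mu(B)\le\mu(\overline{\mathcal H_\rho\sqcup\mathcal R_\rho})$. Combined with monotonicity, the difference of volumes is zero.

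Admissibility of $\rho$ enters only to guarantee that each $R_{q,\rho}$ is a genuine interval product, that is, $t_{l,q_l}+\rho<t_{l,q_l+1}-\rho$. The covering argument does not otherwise depend on it.
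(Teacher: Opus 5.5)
Your argument is correct, and for the second claim it takes a genuinely different route from the paper.

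\textbf{The paper's route.} The paper argues pointwise. It splits each interval $[t_{l,q_l},t_{l,q_l+1}]$ into the two end pieces of width $\rho$ and the trimmed middle piece, and takes products over $l$. It concludes that every point of $B$ not lying in a trimmed cell of $\mathcal R_\rho(B;\mathcal A)$ lies in $\bar{\mathcal H}_\rho(B;\mathcal A)$, and reads the volume identity off this covering.

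\textbf{Your route.} You instead refine $B$ a second time with respect to $\mathcal H_\rho(B;\mathcal A)\sqcup\mathcal R_\rho(B;\mathcal A)$. In coordinate $l$ the new grid consists of $a_l,b_l$ and the levels $t\pm\rho$ lying in $[a_l,b_l]$. So each grid interval is exactly a strip piece or a trimmed piece. Finitely many rational comparisons then place every refined cell inside a strip or inside a trimmed cell. The formula of \Cref{rem_finite_multiblock_volume}, combined with your telescoping computation from the first claim applied to this new refinement, yields $\mu(\overline{\mathcal H_\rho\sqcup\mathcal R_\rho})=\mu(B)$ outright. You should cite that telescoping explicitly, since it is the reason the refined cells have total volume $\mu(B)$. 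It also makes your separate monotonicity step redundant.

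\textbf{What each buys.} The paper's version is shorter and geometrically transparent. Yours never compares a real coordinate with the levels $t\pm\rho$. As you correctly note, that comparison is not decidable, and the paper's phrase ``every point not in a trimmed cell is in a strip'' tacitly relies on it if it is to give a positive covering of $B$. Your proof therefore stays within the finite endpoint checks that \Cref{rem_measure_containment_no_localization} advertises for verifying measure containment.

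For the first claim the two proofs agree in substance. Yours makes explicit, via the telescoping sum, the tiling of $B$ by the full product family that the paper leaves implicit. You also correctly observe that no product is actually omitted.
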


\begin{proof}
	The exact coordinate refinement omits only empty or degenerate coordinate products.
	The omitted products lie in finitely many coordinate hyperplanes and have volume zero.
	Thus $\overline{\mathcal R(B;\mathcal A)}\subseteq B$ and
	\[
		\mu(B)-\mu(\overline{\mathcal R(B;\mathcal A)})=0.
	\]
	For the trimmed refinement, each interval
	\[
		[t_{l,q_l},t_{l,q_l+1}]
	\]
	is the union, up to endpoints, of
	\[
		[t_{l,q_l},t_{l,q_l}+\rho],
		\qquad
		[t_{l,q_l}+\rho,t_{l,q_l+1}-\rho],
		\qquad
		[t_{l,q_l+1}-\rho,t_{l,q_l+1}].
	\]
	Taking products over $l=1,\ldots,n$ shows that every point of $B$ which is not in a trimmed cell of $\mathcal R_\rho(B;\mathcal A)$ is contained in $\bar{\mathcal H}_\rho(B;\mathcal A)$.
	Since both $\mathcal H_\rho(B;\mathcal A)$ and $\mathcal R_\rho(B;\mathcal A)$ are contained in $B$, this gives the stated measure containment.
\end{proof}

\begin{definition}[Block cost]
	\label{dfn_block_cost}
	For a multiblock $\mathcal M = \{B_i\}_i$, define its \textit{cost} by
	\begin{equation}
		\label{eqn_block_cost}
		\gamma(\mathcal M) := \sum_i \mu(B_i)
	\end{equation}
	if it converges.
\end{definition}

\begin{remark}
	\label{rem_block_cost_not_measure}
	The functional $\gamma$ is not a measure, as overlaps between constituent blocks are counted with multiplicity.
	For every finite multiblock,
	\[
	\mu(\bar{\mathcal M})\le\gamma(\mathcal M).
	\]
	In the definition of representability, $\gamma$ is used only as an upper bound for the size of exception multiblocks.
	If $\mathcal M$ is finite, $\gamma$ always exists \ie is finite.
	If $\mathcal M$ is infinite, $\gamma$ may not exist even if the support of $\mathcal M$ has finite volume $\mu$.
\end{remark}

\begin{definition}[Distance to a finite multiblock]
	\label{dfn_distance_multiblock}
	For a finite multiblock $\mathcal M=\{B_k\}_{k=1}^N$, set
	\[
	d(x,\bar{\mathcal M}) := \min_{k=1,\ldots,N} d(x,B_k),
	\]
	where $d(x,B_k)=\inf_{y\in B_k}\nrm{x-y}$.
	For a block $B=\prod_k[a_k,b_k]$, the number $d(x,B)$ is computed coordinatewise by clamping $x$ to $B$.
\end{definition}

\subsection{Apartness}
\label{sub_apartness}

\begin{definition}[Set apartness]
	\label{dfn_set_apartness}
	We say $X$ is \textit{apart} from $Y$ if
	\begin{equation}
		\label{eqn_apart}
		\forall x \in X \; \forall y \in Y \; \exists \delta > 0 \; \lVert x - y \rVert \ge \delta,
	\end{equation}
	Where $\lVert x - y \rVert$ is the Euclidean distance.
	In other words, apartness is a relation between sets witnessed by $\delta$ on pairs of elements as above.
\end{definition}

\begin{notation}
	\label{ntn_apartness_complement}
	According to \Cref{dfn_set_apartness}, denote
	\begin{equation}
		\label{eqn_setminus}
		X \setminus Y \triangleq \{ x \in X: \forall y \in Y \; \exists \delta > 0 \; \lVert x - y \rVert \ge \delta \}.
	\end{equation}
	Factually, $X \setminus Y$ is a short-hand for an apartness complement $X^c_Y$ which comes as a part of a complemented set $(Y, X^c_Y)$.
	It implies that whenever we write $X \setminus Y$, an ambient complemented set with apartness relation is meant.
	We denote the (global apartness) set complement as $X^c \triangleq \R^n \setminus X$.
\end{notation}

\begin{remark}
	\label{rem_disjointness_negative}
	Disjointness is a negative no-overlap condition.
	It does not imply the positive apartness information needed to evaluate complemented branches.
	In particular, from $X\cap Y=\emptyset$ one cannot infer constructively that $X\subseteq Y^c$, where $Y^c$ denotes the apartness complement.
\end{remark}

\begin{definition}[Mutual apartness]
	\label{dfn_mutual_apartness}
	A finite family of complemented sets $X_1,\ldots,X_N\subseteq\R^n$ is called \emph{mutually apart} if, for every $k\ne l$,
	\[
		X_l\subseteq X_k^c,
	\]
	where $X_k^c$ is the apartness complement supplied with $X_k$.
\end{definition}

\begin{remark}
	\label{rem_set_apartness}
	The described notion of apartness is ``positive'' which is common in constructive mathematics.
	Instead of plain negation, we require witnesses for two points to be ``different'' in terms of the Euclidean distance.
	Evidently, if $X \subset B$, then $B = X \cup \{B \setminus X\}$ does not hold constructively.
\end{remark}

\begin{remark}
	\label{rem_disjoint_apart_use}
	Disjointness and mutual apartness are used for different purposes.
	Grid refinements, multiblock volume additivity, and measure additivity require only disjoint interiors or pairwise disjoint supports.
	Pointwise branch arguments require mutual apartness, because they must turn a witness $x\in X_l$ into exterior witnesses $x\in X_k^c$ for the other branches.
	If $x\in(\bigcup_{k=1}^N X_k)^c$, then $x\in X_k^c$ for every $k$ by the definition of the apartness complement of the union.
\end{remark}

\begin{remark}
	\label{rem_mutual_apart_disjoint}
	Mutual apartness implies pairwise disjointness.
	Indeed, if $x\in X_l\cap X_k$ with $k\ne l$, then $x\in X_k$ and the witness $x\in X_k^c$ supplied by $X_l\subseteq X_k^c$ gives a positive distance from $x$ to itself, which is impossible.
\end{remark}

\begin{definition}[Uniform apartness]
	\label{dfn_unif_apartness}
	We say $X$ is uniformly \textit{apart} from $Y$ if
	\begin{equation}
		\label{eqn_uniform_apartness}
		\exists \delta > 0 \; \forall x \in X \; \forall y \in Y \;  \; \lVert x - y \rVert \ge \delta.
	\end{equation}
\end{definition}

\begin{remark}
	Notice the sequence of quantifies in the ordinary and uniform apartness notions.
\end{remark}

\begin{notation}
	\label{ntn_Subset}
	According to \Cref{dfn_unif_apartness}, write
	\begin{equation}
		\label{eqn_well_contained}
		X \Subset Y
	\end{equation}
	if there exists $\delta>0$ such that
	\[
	\forall x\in X \spc \forall y\in Y^c \spc \nrm{x-y}\ge\delta .
	\]
	Hence, $\Subset$ is a \textit{well-containment} relation.
\end{notation}

\begin{remark}
	If a set $X$ is bounded \ie there exists a block $B$ such that $X \subseteq B$, then there exists another block $B'$ such that $X \Subset B'$.
\end{remark}

\begin{remark}
	For every bounded set $X$, the complement $X^c$ is clearly inhabited.
\end{remark}

\subsection{Representable Sets}
\label{subsec_representable_sets}

\begin{definition}[Representable set]
	\label{dfn_representable_set}
	A set $X \subset \R^n$ is called \emph{representable} if:

	\begin{enumerate}
		\item $X$ is bounded, i.e., there exists a block $B_X$ such that
		\begin{equation}
			\label{eqn_representable_set_bounded}
			X \Subset B_X,
		\end{equation}

		\item for every $\eps > 0$, there exist finite multiblocks $\mathcal E_X, \mathcal M_X, \mathcal M_X^c$ such that the following conditions hold:

		\begin{enumerate}
			\item $\mathcal E_X, \mathcal M_X^c$ are full, and $\gamma(\mathcal E_X) \le \eps$,
			\item the sets
				\begin{equation}
					\label{eqn_representable_set_disjoint_sets}
					\bar{\mathcal M}_X \setminus \partial \bar{\mathcal M}_X,\quad
					\bar{\mathcal E}_X \setminus \partial \bar{\mathcal E}_X,\quad
					\overline{\mathcal M_X^c} \setminus \partial \overline{\mathcal M_X^c}
				\end{equation}
				are pairwise disjoint,
				\item
				\begin{equation}
					\label{eqn_representable_set_decomposition}
					\overline{
						\mathcal M_X
						\sqcup
						\mathcal E_X
						\sqcup
						\mathcal M_X^c
					}
					\msubset
					B_X,
				\end{equation}
			\item
			\begin{equation}
				\label{eqn_representable_set_interior}
				\bar{\mathcal M}_X \Subset X,
			\end{equation}
			\item
			\begin{equation}
				\label{eqn_representable_set_exterior}
				\overline{\mathcal M_X^c} \Subset B_X \setminus X.
			\end{equation}
		\end{enumerate}
	\end{enumerate}
	Accordingly, an unbounded set $X$ is called \textit{locally representable} if $B\cap X$ is representable for every block $B$.
\end{definition}

\begin{remark}
	\label{rem_representable_cover}
	The relation in \Cref{eqn_representable_set_decomposition} is measure containment in the sense of \Cref{dfn_measure_containment}.
	It says that the support of the concatenated witness multiblock is contained in $B_X$ and that the part of $B_X$ not covered by this support has volume zero.
	It does not assert a procedure deciding, for an arbitrary real point, which of the three supports contains it.
	The well-contained information is only the core condition \Cref{eqn_representable_set_interior} and the exterior condition \Cref{eqn_representable_set_exterior}.
\end{remark}

\begin{notation}
	\label{ntn_representability_witness_parts}
	For a fixed representability witness
	\[
	B_X,\qquad \mathcal E_X,\qquad \mathcal M_X,\qquad \mathcal M_X^c
	\]
	of a set $X$, we call $B_X$ the \textit{bounding block}, $\mathcal M_X$ the \textit{core multiblock}, $\mathcal M_X^c$ the \textit{exterior multiblock}, and $\mathcal E_X$ the \textit{exception multiblock}.
	These multiblocks always refer to a chosen accuracy and to the corresponding witness.
\end{notation}

\begin{remark}
	\label{rem_boundary_layer}
	The multiblock $\mathcal E_X$ represents a boundary layer of arbitrarily small cost separating the multiblocks $\mathcal M_X$ and $\mathcal M_X^c$.
	The core may be empty.
	Representability records measurability data with no guarantee that the set contains a full block.
\end{remark}

\begin{definition}[Full representable set]
	\label{dfn_full_representable_set}
	A representable set $X\subset\R^n$ is called \emph{full representable} if, for every $\eps>0$, its representability witness can be chosen so that the core multiblock $\mathcal M_X$ is full.
\end{definition}

\begin{remark}
	\label{rem_empty_set_representable}
	The empty set is representable.
	Indeed, choose any bounding block $B$ and use the empty multiblock as the core, an arbitrarily thin exception multiblock, and the remaining full blocks of a rational subdivision of $B$ as the exterior.
	No decision that a refined cell contains a point of the empty set is involved.
\end{remark}

\begin{remark}
	\label{rem_constructive_complement}
	The exterior condition in \Cref{eqn_representable_set_exterior} is a constructive complement condition.
	It says not merely that the exterior multiblock is outside $X$, but that it is well contained in the apartness complement of $X$ relative to the bounding block.
\end{remark}

\begin{remark}
	\label{rem_local_repr}
	If $X$ is locally representable and some full block $B$ is used as the ambient localization block, one may first replace $B$ by $B\ominus\lambda$ for a prescribed geometric margin $\lambda>0$.
	This keeps the boundary of $B$ from being counted as part of the exception multiblock of $X$ itself.
\end{remark}

\begin{definition}[Well-connected representable set]
	\label{dfn_well_connected_representable_set}
	A representable set $X \subset \R^n$ is called \emph{well-connected} if it is full representable and, under the conditions of \Cref{dfn_representable_set}, for every $\eps>0$ the witness can be chosen so that $\mathcal M_X$ is well-connected.
\end{definition}

\begin{remark}
	In this work, \Cref{dfn_well_connected_representable_set} is a constructive counterpart of a connected set in classical mathematics.
\end{remark}

\begin{definition}[Measure of a representable set]
	\label{dfn_representable_set_measure}
	Let $X$ be representable.
	For every $\eps>0$, choose a representability witness
	\[
	B_X,\qquad \mathcal E_{X,\eps},\qquad \mathcal M_{X,\eps},\qquad \mathcal M^c_{X,\eps}
	\]
	with the same bounding block $B_X$ and with $\gamma(\mathcal E_{X,\eps})\le\eps$.
	Set
	\[
	\mu^\low_\eps(X)
	:=
	\mu(\bar{\mathcal M}_{X,\eps}),
	\qquad
	\mu^\up_\eps(X)
	:=
	\mu(B_X)-\mu(\overline{\mathcal M^c_{X,\eps}}).
	\]
	For rational $\eps>0$, choose such a witness and set
	\[
	q_{\mu(X)}(\eps):=\mu^\low_\eps(X).
	\]
	By \Cref{rem_representable_measure_gap}, this operation satisfies the Cauchy condition in \Cref{eqn_real}.
	The represented real number is denoted by $\mu(X)$.
	The quantities $\mu^\up_\eps(X)$ are certified upper bounds for the same real number.
\end{definition}

\begin{definition}[Measure containment of representable sets]
	\label{dfn_measure_containment_representable_sets}
	Let $X$ and $Y$ be representable sets.
	Write
	\[
		X\msubset Y
	\]
	when
	\[
		X\subseteq Y
		\qquad\text{and}\qquad
		\mu(Y)-\mu(X)=0.
	\]
\end{definition}

\begin{remark}[Existence of the measure]
	\label{rem_representable_measure_gap}
	The construction in \Cref{dfn_representable_set_measure} is independent of the chosen witnesses.
	For a witness at accuracy $\eps$, write
	\[
	l_\eps:=\mu(\bar{\mathcal M}_{X,\eps}),
	\qquad
	u_\eps:=\mu(B_X)-\mu(\overline{\mathcal M^c_{X,\eps}}).
	\]
	By \Cref{eqn_representable_set_decomposition},
	\[
	\overline{
		\mathcal M_{X,\eps}
		\sqcup
		\mathcal E_{X,\eps}
		\sqcup
		\mathcal M^c_{X,\eps}
	}
	\msubset
	B_X.
	\]
	The three supports have pairwise disjoint interiors.
	By measure containment and \Cref{lem_finite_multiblock_volume_monotone_additive},
	\[
	\mu(B_X)
	=
	\mu(\bar{\mathcal M}_{X,\eps})
	+
	\mu(\bar{\mathcal E}_{X,\eps})
	+
	\mu(\overline{\mathcal M^c_{X,\eps}}).
	\]
	Hence
	\[
	0
	\le
	u_\eps-l_\eps
	=
	\mu(\bar{\mathcal E}_{X,\eps})
	\le
	\gamma(\mathcal E_{X,\eps})
	\le
	\eps.
	\]
	Thus each witness gives an interval of length at most $\eps$.

	Now take two witnesses with accuracies $\eps$ and $\delta$.
	The supports $\bar{\mathcal M}_{X,\eps}$ and $\overline{\mathcal M^c_{X,\delta}}$ are disjoint: the first is well contained in $X$, and the second is well contained in $B_X\setminus X$.
	By \Cref{lem_finite_multiblock_volume_monotone_additive},
	\[
	l_\eps+\mu(\overline{\mathcal M^c_{X,\delta}})
	=
	\mu\left(
	\bar{\mathcal M}_{X,\eps}
	\cup
	\overline{\mathcal M^c_{X,\delta}}
	\right)
	\le
	\mu(B_X).
	\]
	Therefore
	\[
	l_\eps
	\le
	\mu(B_X)-\mu(\overline{\mathcal M^c_{X,\delta}})
	=
	u_\delta
	\le
	l_\delta+\delta.
	\]
	Interchanging $\eps$ and $\delta$ gives $l_\delta\le l_\eps+\eps$.
	Thus
	\[
	\abs{l_\eps-l_\delta}\le\eps+\delta,
	\]
	so the lower approximations satisfy the Cauchy condition from \Cref{eqn_real}.
	Since $0\le u_\eps-l_\eps\le\eps$, the upper approximations determine the same real number.
\end{remark}

\subsubsection{Closure Properties}
\label{subsubsec_representable_set_closure}

\begin{lemma}[Finite unions of representable sets]
	\label{lem_representable_finite_union}
	If $X_1,\ldots,X_N\subset\R^n$ are representable, then
	\[
	X:=\bigcup_{k=1}^N X_k
	\]
	is representable.
\end{lemma}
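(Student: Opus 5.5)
The plan is to build a common bounding block, refine all witnesses on one grid, and sort the grid cells into core, exterior and exception.

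\textbf{Bounding block and witnesses.} Take a rational block $B$ containing all $B_{X_k}$ with a margin, so that each $B_{X_k}\Subset B$, and set $B_X:=B$. Since $X_k\Subset B_{X_k}\subseteq B$ for each $k$, the uniform margins give $X\Subset B$. For a target $\eps>0$, choose for each $k$ a witness $\mathcal E_k,\mathcal M_k,\mathcal M_k^c$ at accuracy $\eps/(2N)$.

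\textbf{Common refinement.} Let $\mathcal A$ be the concatenation of all $B_{X_k}$, $\mathcal E_k$, $\mathcal M_k$ and $\mathcal M_k^c$. Apply \Cref{dfn_coordinate_refinement} to $B$ and $\mathcal A$. Each full cell $R\in\mathcal R(B;\mathcal A)$ then lies in or interior-misses every constituent block, and these relations are decidable from the rational endpoints.

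\textbf{Sorting the cells.} Classify each cell $R$ as follows.
\begin{itemize}
\item If $R$ lies in some block of $\mathcal M_k$, put it in the core $\mathcal M_X$. Then $R\subseteq\bar{\mathcal M}_k\Subset X_k\subseteq X$. The well-containment $\bar{\mathcal M}_X\Subset X$ holds with the minimum of the finitely many $\delta$'s, since $X^c\subseteq X_k^c$.
\item If, for every $k$, $R$ either lies in a block of $\mathcal M_k^c$ or lies outside $B_{X_k}$ up to interior, trim $R$ slightly and put it in the exterior. Trimming is needed because a cell interior-disjoint from $B_{X_k}$ may still touch its boundary. Using $X_k\Subset B_{X_k}$, points of a cell outside $B_{X_k}$ keep a uniform distance from $X_k$. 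Intersecting over $k$ gives $\overline{\mathcal M_X^c}\Subset B\setminus X$. Here I use that a point uniformly apart from each $X_k$ is uniformly apart from their finite union, and I take the minimum of the $\delta$'s.
\item All remaining cells go into the exception multiblock.
\end{itemize}

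\textbf{Bounding the exception (the main obstacle).} A remaining cell fails both tests, so there is some $k$ for which $R$ lies neither in $\mathcal M_k$, nor in $\mathcal M_k^c$, nor outside $B_{X_k}$. It therefore lies inside $B_{X_k}$ and, by the decomposition $\msubset B_{X_k}$ together with the refinement property, inside $\bar{\mathcal E}_k$ up to volume zero. Hence the total volume of these cells is at most $\sum_k\mu(\bar{\mathcal E}_k)\le\eps/2$. Cells are counted once, having disjoint interiors, so $\gamma$ equals $\mu$ here.

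Trimming the exterior cells produces thin strips, which must be absorbed into the exception. I would use \Cref{lem_coordinate_refinement_measure_containment}: take $\mathcal R_\rho$ for the exterior and add $\mathcal H_\rho(B;\mathcal A)$ to the exception. Choosing $\rho$ small makes $\gamma(\mathcal H_\rho)\le\eps/2$. To keep the three interiors pairwise disjoint, I would use trimmed cells for both core and exterior and put all of $\mathcal H_\rho$ into the exception. Core cells remain well contained because they are subsets of the untrimmed cells.

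\textbf{Remaining checks.} The lemma then yields $\overline{\mathcal M_X\sqcup\mathcal E_X\sqcup\mathcal M_X^c}\msubset B$. Fullness of $\mathcal E_X$ and $\mathcal M_X^c$ holds after omitting degenerate blocks.
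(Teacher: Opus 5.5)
Your proposal is correct and follows essentially the paper's argument: a common bounding block, one coordinate refinement of all witness multiblocks with the strip multiblock $\mathcal H_\rho(B;\mathcal A)$ sent to the exception, cells classified as core for some $k$, exterior for every $k$, or exception, and the exception cost bounded by assigning each leftover cell to one $\mathcal E_k$ and using disjoint interiors. The only difference is cosmetic. You put the blocks $B_{X_k}$ into $\mathcal A$ and treat ``outside $B_{X_k}$'' as an extra exterior alternative, whereas the paper first re-bases each witness on $B$ by appending a block decomposition of $B\setminus\inter{B_k}$ to its exterior multiblock.
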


\begin{proof}
	Choose bounding blocks $B_k$ for $X_k$, and then choose one full block $B$ such that $B_k\Subset B$ for every $k$.
	Representability witnesses may be taken with this common bounding block: decompose $B\setminus\inter{B_k}$ into finitely many full blocks and append these blocks to the exterior multiblock of the witness for $X_k$.
	Fix $\eps>0$.
	For each $k$, take a witness
	\[
	B,\qquad \mathcal E_k,\qquad \mathcal M_k,\qquad \mathcal M_k^c
	\]
	for $X_k$ with $\gamma(\mathcal E_k)\le\eps/(2N)$.
	Let
	\[
	\mathcal A
	:=
	\bigsqcup_{k=1}^N
	(\mathcal E_k\sqcup\mathcal M_k\sqcup\mathcal M_k^c).
	\]
	Use the notation of \Cref{dfn_coordinate_refinement}, and write $T_l:=T_l(B,\mathcal A)$.
	Choose a rational $\rho>0$ smaller than half the distance between any two distinct elements of any $T_l$ and so small that the finite full multiblock $\mathcal H$ consisting of the blocks
	\[
		B\cap\{x\in\R^n:\abs{x_l-t}\le\rho\},
		\qquad
		l=1,\ldots,n,\quad t\in T_l,
	\]
	after omitting empty and degenerate blocks, satisfies
	\[
		\gamma(\mathcal H)\le\eps/2.
	\]
	Equivalently, $\mathcal H=\mathcal H_\rho(B;\mathcal A)$.
	Let $\mathcal C_\rho:=\mathcal R_\rho(B;\mathcal A)$ be the corresponding trimmed coordinate refinement.
	By \Cref{lem_coordinate_refinement_measure_containment},
	\[
		\overline{\mathcal H\sqcup\mathcal C_\rho}
		\msubset
		B.
	\]
	Moreover, for every $C\in\mathcal C_\rho$ and every constituent block $A$ of $\mathcal A$, either $C\Subset A$ or $C\cap A=\varnothing$.
	The multiblock $\mathcal H$ contains a positive neighborhood of every coordinate boundary used by the old witnesses; hence a block $C\in\mathcal C_\rho$ cannot cross from one constituent block of a witness to another.
	Since the concatenated witness support for $X_k$ is measure-contained in $B$, a positive-volume block $C\in\mathcal C_\rho$ cannot be disjoint from all three supports.
	The coordinate refinement property and the disjointness of interiors imply that at least one and at most one of
	\[
	C\Subset\bar{\mathcal M}_k,\qquad
	C\Subset\bar{\mathcal E}_k,\qquad
	C\Subset\overline{\mathcal M_k^c}
	\]
	holds for each pair $(C,k)$.
	Set
	\[
	\mathcal C_M
	:=
	\{C\in\mathcal C_\rho:\exists k,\ C\Subset\bar{\mathcal M}_k\},
	\]
	\[
	\mathcal C_c
	:=
	\{C\in\mathcal C_\rho:\forall k,\ C\Subset\overline{\mathcal M_k^c}\},
	\]
	and
	\[
	\mathcal C_E
	:=
	\mathcal C_\rho\setminus(\mathcal C_M\cup\mathcal C_c).
	\]
	Let $\mathcal M$ and $\mathcal M^c$ be the finite full multiblocks with constituent blocks $\mathcal C_M$ and $\mathcal C_c$, respectively.
	Let $\mathcal E$ be obtained by concatenating $\mathcal H$ with the finite full multiblock whose constituent blocks are $\mathcal C_E$.
	Thus $\mathcal M,\mathcal E,\mathcal M^c$ are finite full multiblocks.
	Their interiors are pairwise disjoint and
	\[
		\overline{\mathcal M\sqcup\mathcal E\sqcup\mathcal M^c}
		\msubset
		B,
	\]
	as required by \Cref{eqn_representable_set_decomposition}.
	Every block in $\mathcal C_E$ is well contained in $\bar{\mathcal E}_k$ for at least one $k$.
	Choose one such index $k(C)$ for each $C\in\mathcal C_E$.
	Because the blocks in $\mathcal C_\rho$ have pairwise disjoint interiors,
	\[
	\sum_{\substack{C\in\mathcal C_E\\ k(C)=k}}\mu(C)
	\le
	\gamma(\mathcal E_k),
	\qquad
	k=1,\ldots,N.
	\]
	Consequently,
	\[
	\gamma(\mathcal E)
	\le
	\gamma(\mathcal H)+\sum_{k=1}^N\gamma(\mathcal E_k)
	\le
	\eps.
	\]
	If $C$ is a constituent block of $\mathcal M$, then $C\Subset\bar{\mathcal M}_k$ for some $k$, hence
	\[
	C\Subset X_k\subset X.
	\]
	If $C$ is a constituent block of $\mathcal M^c$, then $C\Subset\overline{\mathcal M_k^c}$ for every $k$, hence
	\[
	C\Subset B\setminus X_k
	\qquad
	k=1,\ldots,N.
	\]
	Since the multiblock is finite,
	\[
	C
	\Subset
	\bigcap_{k=1}^N(B\setminus X_k).
	\]
	Indeed, if $\delta_k>0$ witnesses $C\Subset B\setminus X_k$, then $\min_k\delta_k>0$ witnesses apartness from $\bigcup_k X_k$.
	Thus
	\[
	C
	\Subset
	B\setminus X,
	\qquad
	B\setminus X
	=
	\bigcap_{k=1}^N(B\setminus X_k).
	\]
	Thus the constructed multiblocks form a representability witness for $X$.
\end{proof}

\begin{figure}[ht]
	\centering
	\resizebox{0.75\linewidth}{!}{\begin{tikzpicture}[x=1cm,y=1cm,font=\small]
	\fill[gray!10] (0,0) rectangle (8,4);
	\draw[very thick] (0,0) rectangle (8,4);
	\node[anchor=north east] at (7.9,3.9) {$B$};
	
	\fill[orange!25] (1.35,0) rectangle (1.65,4);
	\fill[orange!25] (3.85,0) rectangle (4.15,4);
	\fill[orange!25] (6.25,0) rectangle (6.55,4);
	\fill[orange!25] (0,1.25) rectangle (8,1.55);
	\fill[orange!25] (0,2.65) rectangle (8,2.95);
	
	\foreach \x in {1.5,4.0,6.4} {
		\draw[dashed,orange!80!black] (\x,0) -- (\x,4);
	}
	\foreach \y in {1.4,2.8} {
		\draw[dashed,orange!80!black] (0,\y) -- (8,\y);
	}
	
	\fill[blue!25] (0.15,0.15) rectangle (1.35,1.25);
	\fill[blue!25] (1.65,1.55) rectangle (3.85,2.65);
	\fill[blue!25] (4.15,2.95) rectangle (6.25,3.85);
	\fill[orange!25] (6.55,1.55) rectangle (7.85,2.65);
	
	\fill[gray!28] (1.65,0.15) rectangle (3.85,1.25);
	\fill[gray!28] (4.15,0.15) rectangle (6.25,1.25);
	\fill[gray!28] (0.15,2.95) rectangle (1.35,3.85);
	\fill[gray!28] (6.55,2.95) rectangle (7.85,3.85);
	\fill[gray!28] (0.15,1.55) rectangle (1.35,2.65);
	\fill[gray!28] (1.65,2.95) rectangle (3.85,3.85);
	\fill[gray!28] (4.15,1.55) rectangle (6.25,2.65);
	\fill[gray!28] (6.55,0.15) rectangle (7.85,1.25);
	
	\foreach \x/\w in {0/1.5,1.5/2.5,4.0/2.4,6.4/1.6} {
		\draw[thin,gray!55] (\x,0) -- (\x,4);
	}
	\foreach \y in {1.4,2.8} {
		\draw[thin,gray!55] (0,\y) -- (8,\y);
	}
	
	\node[blue!70!black] at (2.75,2.1) {$\mathcal C_M$};
	\node[gray!65!black] at (5.15,1.0) {$\mathcal C_c$};
	\node[orange!80!black] at (4.0,3.35) {$\mathcal H$};
	
	\node[align=left,font=\scriptsize] at (0.75,0.72) {core for\\some $X_k$};
	\node[align=center,font=\scriptsize] at (5.15,0.46) {exterior for\\all $X_k$};
	\node[align=center,font=\scriptsize,orange!80!black] at (7.25,2.0) {$\mathcal C_E$};
\end{tikzpicture}}
	\caption{Subdivision used in \Cref{lem_representable_finite_union}. Inside the bounding block $B$, the thin orange coordinate strips form $\mathcal H$ and contain neighborhoods of the old coordinate boundaries. Outside $\mathcal H$, the full blocks form $\mathcal C_\rho$: blue blocks belong to $\mathcal C_M$, gray blocks belong to $\mathcal C_c$, and the orange block labelled $\mathcal C_E$ is one whole block of $\mathcal C_\rho$ left after these tests. Both $\mathcal H$ and the blocks in $\mathcal C_E$ are appended to the exception multiblock.}
	\label{fig_lem31_union_subdivision}
\end{figure}

\begin{definition}[Locally finite union]
	\label{dfn_effectively_locally_finite_union}
	A sequence of sets $(X_i)_{i\in\N}$ is called \emph{effectively locally finite} if, for every block $B$, one can compute a finite set of indices $I(B)\subset\N$ such that
	\[
	B\cap\bigcup_{i=1}^{\infty}X_i
	=
	B\cap\bigcup_{i\in I(B)}X_i.
	\]
\end{definition}

\begin{corollary}[Locally finite unions]
	\label{cor_locally_finite_union_representable}
	If $(X_i)_{i\in\N}$ is an effectively locally finite sequence of locally representable sets, then
	\[
	X:=\bigcup_{i=1}^{\infty}X_i
	\]
	is locally representable.
\end{corollary}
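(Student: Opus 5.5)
The plan is to reduce the statement directly to \Cref{lem_representable_finite_union}, localizing first and then invoking the finite-union result. By the definition of local representability in \Cref{dfn_representable_set}, I must show that $B\cap X$ is representable for an arbitrary block $B$. I fix such a block and use effective local finiteness (\Cref{dfn_effectively_locally_finite_union}) to compute the finite index set $I(B)$ with
\[
	B\cap X = B\cap\bigcup_{i\in I(B)}X_i = \bigcup_{i\in I(B)}(B\cap X_i).
\]
Each $X_i$ is locally representable, so each $B\cap X_i$ with $i\in I(B)$ is representable. \Cref{lem_representable_finite_union} then yields representability of the finite union, hence of $B\cap X$.

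Two small points need explicit handling. First, if $I(B)$ is empty, then $B\cap X=\emptyset$, which is representable by \Cref{rem_empty_set_representable}; the finite union lemma is stated for $N\ge1$, so this case is split off. Second, the equality above is a set equality, so the witness produced for $\bigcup_{i\in I(B)}(B\cap X_i)$ is a witness for $B\cap X$. This requires that the core condition \Cref{eqn_representable_set_interior} and the exterior condition \Cref{eqn_representable_set_exterior} refer only to the set itself. That is the case, since the apartness complement $B_X\setminus X$ depends only on the set $X$ and the chosen bounding block.

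The step that deserves most care is this transfer of exterior witnesses. The exterior multiblock must be well contained in the apartness complement of the whole union $X$, not merely of the finitely many $X_i$ with $i\in I(B)$. Within the bounding block the identity above settles this, but points of the exterior multiblock could a priori lie outside $B$ near infinitely many other $X_i$. To avoid this, I would apply local finiteness to a slightly enlarged block $B'$ with $B_{B\cap X_i}\subseteq B'$, or note that all constructed bounding blocks may be chosen inside $B\oplus\lambda$ for a fixed $\lambda$. I would then intersect with $B$ and use \Cref{rem_local_repr} to keep boundary effects of $B$ in the exception layer. If this margin argument turns out to be unnecessary, because representability of $B\cap X$ only concerns the set $B\cap X$ and its own complement, the proof collapses to the two-line reduction above.
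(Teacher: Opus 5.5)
Your reduction is exactly the paper's proof: fix $B$, use $I(B)$ to write $B\cap X$ as the finite union of the representable sets $B\cap X_i$, and apply \Cref{lem_representable_finite_union}; splitting off the case where $I(B)$ is empty via \Cref{rem_empty_set_representable} is a harmless extra. The margin argument you contemplate is unnecessary, as your last sentence suspects. The exterior multiblock only has to be well contained in the apartness complement of $B\cap X$ inside its own bounding block. Since $B\cap X$ coincides with the finite union, points outside $B$ lying near other $X_i$ play no role.
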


\begin{proof}
	Fix a block $B$.
	By effective local finiteness, $B\cap X$ is the finite union of the sets $B\cap X_i$ with $i\in I(B)$.
	Each of these sets is representable by local representability of $X_i$.
	The claim follows from \Cref{lem_representable_finite_union}.
\end{proof}

\begin{lemma}[Products of representable sets]
	\label{lem_representable_product}
	If $X\subset\R^n$ and $Y\subset\R^m$ are representable, then $X\times Y\subset\R^{n+m}$ is representable.
	Moreover,
	\[
		\mu(X\times Y)=\mu(X)\mu(Y).
	\]
\end{lemma}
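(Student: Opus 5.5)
I will build the witness for $X\times Y$ from products of witness blocks, using the bounding block $B_X\times B_Y$. First I check boundedness. If $\delta_X$ witnesses $X\Subset B_X$ and $\delta_Y$ witnesses $Y\Subset B_Y$, then $\min(\delta_X,\delta_Y)$ witnesses $X\times Y\Subset B_X\times B_Y$. The reason is that a point of $\R^{n+m}$ apart from $B_X\times B_Y$ has, after the usual coordinatewise clamping argument, one of its two components apart from the corresponding factor block.

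Fix $\eps>0$ and choose witnesses $(\mathcal E_X,\mathcal M_X,\mathcal M_X^c)$ at accuracy $\eta$ and $(\mathcal E_Y,\mathcal M_Y,\mathcal M_Y^c)$ at accuracy $\eta$, with $\eta$ to be fixed later. Write $\mathcal A\times\mathcal B$ for the finite multiblock of all products $A\times B'$ of constituents. Set
\[
\mathcal M:=\mathcal M_X\times\mathcal M_Y,\qquad
\mathcal M^c:=(\mathcal M_X^c\times(\mathcal M_Y\sqcup\mathcal E_Y\sqcup\mathcal M_Y^c))\sqcup((\mathcal M_X\sqcup\mathcal E_X)\times\mathcal M_Y^c),
\]
and
\[
\mathcal E:=(\mathcal E_X\times(\mathcal M_Y\sqcup\mathcal E_Y))\sqcup(\mathcal M_X\times\mathcal E_Y).
\]
The properties to verify are the following.
\begin{enumerate}
\item Fullness of $\mathcal E$ and $\mathcal M^c$. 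This needs each factor to be full. The core $\mathcal M_X$ need not be full, so I first discard degenerate blocks of $\mathcal M_X$ and $\mathcal M_Y$. Discarding them changes neither the volume nor the well-containment, and keeps the measure containment \Cref{eqn_representable_set_decomposition}, because degenerate blocks have volume zero.
\item Pairwise disjointness of interiors. Interiors of products are products of interiors, and the nine products $P\times Q$ with $P,Q$ ranging over the three witness parts are pairwise interior-disjoint.
\item Measure containment in $B_X\times B_Y$. The products cover $B_X\times B_Y$ up to the product of a null set with a bounded set, which is null. I would verify this via \Cref{lem_finite_multiblock_volume_monotone_additive} on a common coordinate refinement, since a product of coordinate refinements is a coordinate refinement.
\item Cost. $\gamma(\mathcal E)\le\gamma(\mathcal E_X)\gamma(\mathcal M_Y\sqcup\mathcal E_Y)+\gamma(\mathcal M_X)\gamma(\mathcal E_Y)\le \eta\,(\gamma(\mathcal M_Y)+\eta)+\gamma(\mathcal M_X)\,\eta$.
\end{enumerate}

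The main obstacle is the cost bound, because $\gamma(\mathcal M_X)$ counts overlaps and is not controlled by $\mu(B_X)$ a priori. I would first replace each core by the finite set of refined cells $\mathcal R_{\mathcal M}$ from \Cref{rem_finite_multiblock_volume}. This gives $\gamma(\mathcal M_X)=\mu(\bar{\mathcal M}_X)\le\mu(B_X)$, and likewise for $\mathcal M_Y$. Well-containment of the cores is preserved, since the cells lie in the old support. After this step, choosing $\eta\le\min(1,\eps/(2\mu(B_X)+\mu(B_Y)+2))$ suffices.

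Next comes the well-containment. $\bar{\mathcal M}_X\times\bar{\mathcal M}_Y\Subset X\times Y$ holds with $\delta=\min(\delta_X,\delta_Y)$, since a point outside $X\times Y$ in the apartness sense has one component apart from the corresponding set. For the exterior, a block $C\times D$ with $C\Subset B_X\setminus X$ is uniformly apart from every point of $X\times Y$, because the first coordinates are already $\delta$-apart. The symmetric case is handled the same way. In each case, membership of $C\times D$ in the bounding block is ensured because $D$ lies in $B_Y$.

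For the measure identity, \Cref{lem_finite_multiblock_volume_monotone_additive} applied to refined cells gives $\mu(\bar{\mathcal M}_X\times\bar{\mathcal M}_Y)=\mu(\bar{\mathcal M}_X)\mu(\bar{\mathcal M}_Y)$, using that the volume of a product block is the product of the volumes. Hence the lower approximants satisfy $l_\eta(X\times Y)=l_\eta(X)\,l_\eta(Y)$. Since $|l_\eta(X)-\mu(X)|\le\eta$, $|l_\eta(Y)-\mu(Y)|\le\eta$, and all quantities are bounded by the volumes of the bounding blocks, the differences tend to $0$. This yields $\mu(X\times Y)=\mu(X)\mu(Y)$ by \Cref{rem_representable_measure_gap} and witness independence.
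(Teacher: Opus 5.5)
Your proposal is correct and is essentially the paper's proof: product witnesses relative to $B_X\times B_Y$, a coordinate-refinement normalization so that witness costs are bounded by $\mu(B_X)$ and $\mu(B_Y)$, an exception cost estimated by products of exception costs with block volumes, and the measure identity read off from the product lower approximants $\mu(\bar{\mathcal M}_X)\mu(\bar{\mathcal M}_Y)$. The differences are cosmetic and equally valid. The paper normalizes all three witness multiblocks, which also takes care of fullness, and puts every product with an exception factor, including $\mathcal E_X\times\mathcal M_Y^c$, into the exception layer; you normalize only the cores and send exception-times-exterior products to the exterior.
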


\begin{proof}
	Choose bounding blocks $B_X$ and $B_Y$.
	Fix $\eps>0$.
	Take witnesses
	\[
		B_X,\quad \mathcal E_X,\quad \mathcal M_X,\quad \mathcal M_X^c,
		\qquad
		B_Y,\quad \mathcal E_Y,\quad \mathcal M_Y,\quad \mathcal M_Y^c
	\]
	with exception costs so small that
	\[
		\gamma(\mathcal E_X)\mu(B_Y)
		+
		\gamma(\mathcal E_Y)\mu(B_X)
		\le\eps.
	\]
	Before forming products, replace each witness by the common coordinate refinement of all constituent endpoints in that witness.
	Equivalently, each of
	\[
		\mathcal M_X,\mathcal E_X,\mathcal M_X^c,
		\qquad
		\mathcal M_Y,\mathcal E_Y,\mathcal M_Y^c
	\]
	is represented by the finite list of refinement cells contained in its support.
	This normalization does not change the three supports, preserves the core and exterior containment conditions, and gives pairwise disjoint interiors for all cells in each witness decomposition.
	In particular, the total cost of the three normalized multiblocks for $X$ is at most $\mu(B_X)$, and similarly for $Y$.
	Products of constituent blocks form finite full multiblocks in $\R^{n+m}$.
	Set
	\[
		\mathcal M_{X\times Y}:=\mathcal M_X\times\mathcal M_Y.
	\]
	Let $\mathcal E_{X\times Y}$ be the concatenation of all products in which at least one factor is an exception multiblock and the other factor is one of the three witness multiblocks for the other set.
	Let $\mathcal M_{X\times Y}^c$ be the concatenation of the remaining products in which at least one factor is an exterior multiblock.
	The interiors of these three product multiblocks are pairwise disjoint, and their concatenation is measure-contained in $B_X\times B_Y$ by applying the one-dimensional endpoint refinement to the two original witness decompositions.
	The core condition follows from
	\[
		\bar{\mathcal M}_X\times\bar{\mathcal M}_Y\Subset X\times Y.
	\]
	If a product cell has an exterior $X$-factor, it is well contained in $(B_X\setminus X)\times B_Y$ and hence in $(B_X\times B_Y)\setminus(X\times Y)$.
	The same argument applies to an exterior $Y$-factor.
	The exception cost is bounded by
	\[
		\gamma(\mathcal E_{X\times Y})
		\le
		\gamma(\mathcal E_X)\mu(B_Y)
		+
		\gamma(\mathcal E_Y)\mu(B_X)
		\le\eps.
	\]
	Thus $X\times Y$ is representable.
	For the measure identity, use the lower approximations
	\[
		\mu(\bar{\mathcal M}_{X,\delta})\mu(\bar{\mathcal M}_{Y,\delta})
		=
		\mu(\bar{\mathcal M}_{X,\delta}\times\bar{\mathcal M}_{Y,\delta})
	\]
	and let $\delta\to0$.
	The corresponding upper approximations differ from these lower approximations by the product-exception estimate above, so they determine the same real number.
\end{proof}

\begin{lemma}[Relative apartness complements]
	\label{lem_representable_relative_complement}
	Let $X$ be representable with bounding block $B_X$.
	Then the relative apartness complement $B_X\setminus X$ has a representability witness relative to the same bounding block.
	After enlarging $B_X$ by an arbitrarily thin rational shell, it is representable in the sense of \Cref{dfn_representable_set}.
\end{lemma}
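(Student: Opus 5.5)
The plan is to swap the roles of core and exterior in a given witness for $X$. Set $Y:=B_X\setminus X$ and fix $\eps>0$. Take a witness $B_X,\mathcal E_X,\mathcal M_X,\mathcal M_X^c$ for $X$ with $\gamma(\mathcal E_X)\le\eps$. Define the candidate witness for $Y$ relative to $B_X$ by
\[
	\mathcal M_Y:=\mathcal M_X^c,\qquad \mathcal E_Y:=\mathcal E_X,\qquad \mathcal M_Y^c:=\mathcal M_X .
\]

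Most conditions of \Cref{dfn_representable_set} transfer verbatim. The cost bound, the pairwise disjointness of interiors \Cref{eqn_representable_set_disjoint_sets}, and the measure containment \Cref{eqn_representable_set_decomposition} are symmetric in the three multiblocks. The core condition \Cref{eqn_representable_set_interior} for $Y$ is exactly the exterior condition \Cref{eqn_representable_set_exterior} for $X$, namely $\overline{\mathcal M_X^c}\Subset B_X\setminus X=Y$.

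The exterior condition for $Y$ requires $\bar{\mathcal M}_X\Subset B_X\setminus Y$, where $B_X\setminus Y$ is the apartness complement of $Y$ inside $B_X$. Given $\bar{\mathcal M}_X\Subset X$ with margin $\delta$, I would take any $x\in\bar{\mathcal M}_X$ and $y\in Y$. Every point of $Y$ is apart from every point of $X$, and the margin should give $\nrm{x-y}\ge\delta$ uniformly. This needs some care. The complement appearing in \Cref{ntn_Subset} has to be read as the ambient apartness complement of the complemented pair $(X,B_X\setminus X)$. Then $Y^c$ relative to $B_X$ contains the $\delta$-neighbourhood of $\bar{\mathcal M}_X$ intersected with $B_X$, because a point within $\delta$ of the core is not apart from $X$. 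I expect this bookkeeping of which complement is meant to be the main obstacle. The cleanest route is to work directly with the uniform distance witness: points of $\bar{\mathcal M}_X\oplus\delta'$ for $\delta'<\delta$ lie in $X$, hence cannot lie in $Y$. So $\bar{\mathcal M}_X$ is uniformly apart from $Y$ with margin $\delta'$, which is what is used.

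The second claim concerns boundedness. $Y\Subset B_X$ generally fails, since $Y$ reaches $\partial B_X$. I would therefore pick a rational $\lambda>0$ and set $B':=B_X\oplus\lambda$, so that $Y\subseteq B_X\Subset B'$. The shell $B'\setminus\inter{B_X}$ is decomposed into finitely many full blocks, which are appended to the exterior multiblock $\mathcal M_Y^c$. Near $\partial B_X$ these blocks are not well contained in $B'\setminus Y$ when $Y$ touches $\partial B_X$. To handle this, I would instead use the shell blocks at distance $\ge\rho$ from $B_X$ as exterior, and put the thin layer $(B_X\oplus\rho)\setminus\inter{B_X}$ into the exception multiblock. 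Its cost is $O(\rho)$ times the surface area of $B_X$, which can be chosen to be at most $\eps/2$, with the original exception taken at accuracy $\eps/2$.

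The layer is a finite union of full blocks. Its interior is disjoint from those of the old witness blocks because they lie in $B_X$, and the concatenation is measure-contained in $B'$ by \Cref{lem_coordinate_refinement_measure_containment}. The outer shell blocks are well contained in $B'\setminus Y$ with margin $\rho$. This yields representability in the sense of \Cref{dfn_representable_set}, and it is consistent with $\mu(Y)=\mu(B_X)-\mu(X)$ via \Cref{rem_representable_measure_gap}.
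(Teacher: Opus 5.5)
Your swap of core and exterior is exactly the paper's argument for the first claim, and your check of the new exterior condition is more explicit than the paper's one-line ``interchanged''. There is one small constructive slip in it. From $\bar{\mathcal M}_X\Subset X$ with margin $\delta$ you cannot conclude that points of $\bar{\mathcal M}_X\oplus\delta'$ lie in $X$; you only get that they are not in $X^c$. What you actually need is that every $y\in Y\subseteq X^c$ is $\delta$-far from the core. By the triangle inequality, the $\delta'$-neighbourhood of the core is then apart from all of $Y$. This does not affect your conclusion.

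The genuine gap is in the enlargement step. You append only the outer layer $(B_X\oplus\rho)\setminus\inter{B_X}$ to the exception and leave all old witness blocks untouched, so the new core is still $\overline{\mathcal M_X^c}$. But your own observation that $Y$ reaches $\partial B_X$ applies to this core as well. Since $X\Subset B_X$ and the witness covers $B_X$ up to measure zero, the exterior multiblock of a typical witness contains blocks touching $\partial B_X$. This happens, for instance, in the polytope and Lipschitz-graph constructions. Relative to the new bounding block $B'$, the points of $B'$ just outside $B_X$ are apart from every point of $Y\subseteq B_X$, so they lie in the apartness complement of $Y$. They come arbitrarily close to those boundary blocks, so $\bar{\mathcal M}_Y\Subset Y$ has no positive margin and fails. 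Relative to $B_X$ itself this outer region is absent, which is why your first claim is fine.

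The repair is to make the shell two-sided, and the paper's ``rational shell around $\partial B_X$'' has to be read this way. Refine $\mathcal M_X^c$ at the coordinate levels $a_k+\rho$ and $b_k-\rho$ of $B_X$, and move the cells within $\rho$ of $\partial B_X$ into the exception multiblock as well. This adds only $O(\rho)$ times the surface area of $B_X$ to the exception cost. The trimmed core then has a positive margin, namely $\min\{\rho,\delta'\}$, to the complement of $Y$ in $B'$. The old core $\bar{\mathcal M}_X$, now the exterior, is unaffected for small $\rho$ because $X\Subset B_X$.
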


\begin{proof}
	For a witness
	\[
	B_X,\qquad \mathcal E_X,\qquad \mathcal M_X,\qquad \mathcal M_X^c
	\]
	of $X$, use
	\[
	\mathcal M_{B_X\setminus X}:=\mathcal M_X^c,
	\qquad
	\mathcal E_{B_X\setminus X}:=\mathcal E_X,
	\qquad
	\mathcal M^c_{B_X\setminus X}:=\mathcal M_X.
	\]
	The exception cost, the disjointness condition, and the measure-containment condition are unchanged.
	The two well-containment conditions are exactly \Cref{eqn_representable_set_interior,eqn_representable_set_exterior} with the core and exterior interchanged.
	For the formal boundedness condition, choose a full block $B'\supset B_X$ with $B_X\Subset B'$.
	A rational shell around $\partial B_X$ of arbitrarily small cost is appended to the exception multiblock; the remaining part of $B'\setminus B_X$ is appended to the exterior multiblock.
\end{proof}

\begin{lemma}[Finite intersections of representable sets]
	\label{lem_representable_finite_intersection}
	Let $X_1,\ldots,X_N\subset\R^n$ be representable.
	Then
	\[
	X:=\bigcap_{k=1}^N X_k
	\]
	is representable.
\end{lemma}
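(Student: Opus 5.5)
The plan is to mirror the proof of \Cref{lem_representable_finite_union}, with the roles of core and exterior interchanged. One could try to pass through \Cref{lem_representable_relative_complement} and De Morgan, since $B\setminus X=\bigcup_k(B\setminus X_k)$ for apartness complements. However, identifying the complement of an intersection with a union of apartness complements is only half constructive, so I would instead build the witness directly. The argument has the same three-way classification of refined cells.

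First I would choose a common full bounding block $B$ with $B_k\Subset B$ for all $k$. Each witness for $X_k$ is extended to $B$ exactly as in \Cref{lem_representable_finite_union}: the full blocks of $B\setminus\inter{B_k}$ are appended to the exterior multiblock, with a thin shell placed in the exception multiblock if needed. Fix $\eps>0$. For each $k$, take witnesses $\mathcal E_k,\mathcal M_k,\mathcal M_k^c$ with $\gamma(\mathcal E_k)\le\eps/(2N)$, and form the concatenation $\mathcal A$. Then choose an admissible rational $\rho$ with $\gamma(\mathcal H_\rho(B;\mathcal A))\le\eps/2$, and let $\mathcal C_\rho:=\mathcal R_\rho(B;\mathcal A)$. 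By \Cref{lem_coordinate_refinement_measure_containment}, $\overline{\mathcal H\sqcup\mathcal C_\rho}\msubset B$. As in the union proof, for each $C\in\mathcal C_\rho$ and each $k$, exactly one of $C\Subset\bar{\mathcal M}_k$, $C\Subset\bar{\mathcal E}_k$, $C\Subset\overline{\mathcal M_k^c}$ holds, and this is decidable from rational endpoints.

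Next I would classify the cells, dually to the union case. Let $\mathcal C_M:=\{C:\forall k,\ C\Subset\bar{\mathcal M}_k\}$ and $\mathcal C_c:=\{C:\exists k,\ C\Subset\overline{\mathcal M_k^c}\}$. All remaining cells, together with $\mathcal H$, go into the exception multiblock $\mathcal E$. For a cell $C$ that is neither in $\mathcal C_M$ nor in $\mathcal C_c$, some $k$ has $C\not\Subset\bar{\mathcal M}_k$, and no $k$ has $C\Subset\overline{\mathcal M_k^c}$. Hence $C\Subset\bar{\mathcal E}_k$ for that $k$. Assigning each such cell to one such index $k(C)$ and using disjoint interiors gives $\gamma(\mathcal E)\le\gamma(\mathcal H)+\sum_k\gamma(\mathcal E_k)\le\eps$. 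Disjointness of interiors and the measure containment in $B$ follow as before.

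It remains to check the two well-containment conditions. For the core, take $C\in\mathcal C_M$. Let $\delta_k$ witness $C\Subset\bar{\mathcal M}_k\Subset X_k$, in the sense that every point within $\delta_k$ of $C$ lies in $X_k$. Then $\delta:=\min_k\delta_k$ shows that every point within $\delta$ of $C$ lies in every $X_k$. Thus $C\Subset X$, using the pointwise reading of $\Subset$ that the union proof also relies on. For the exterior, take $C\in\mathcal C_c$. Then $C\Subset B\setminus X_k$ for the witnessing $k$. Since $X\subseteq X_k$, every point of $C$ is uniformly apart from $X$, so $C\Subset B\setminus X$. Finally, the boundedness condition $X\Subset B$ is inherited from $X_1\Subset B_1\Subset B$. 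I expect the main obstacle to be the bookkeeping in the first step, namely transferring each witness to the common block $B$ while keeping the exception cost small. This is handled exactly as in \Cref{lem_representable_finite_union}, and the rest is a direct dualization of that proof.
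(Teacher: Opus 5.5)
Your proposal is correct and is essentially the paper's proof. Both use a common bounding block and the trimmed coordinate refinement of the concatenated witnesses with $\gamma(\mathcal H)\le\eps/2$; core cells satisfy $C\Subset\bar{\mathcal M}_k$ for all $k$, exterior cells satisfy $C\Subset\overline{\mathcal M_k^c}$ for some $k$, and the leftover cells together with $\mathcal H$ are charged to some $\bar{\mathcal E}_{k(C)}$ to bound the exception cost. The only difference is that you spell out the $\min_k\delta_k$ argument behind the paper's phrase ``finiteness gives $C\Subset\bigcap_k X_k$''.
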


\begin{proof}
	Choose bounding blocks $B_k$ for $X_k$, and then choose one full block $B$ such that $B_k\Subset B$ for every $k$.
	As in \Cref{lem_representable_finite_union}, all witnesses may be taken relative to this common bounding block.
	Fix $\eps>0$.
	For each $k$, take a witness
	\[
	B,\qquad \mathcal E_k,\qquad \mathcal M_k,\qquad \mathcal M_k^c
	\]
	with $\gamma(\mathcal E_k)\le\eps/(2N)$.
	Let
	\[
	\mathcal A
	:=
	\bigsqcup_{k=1}^N
	(\mathcal E_k\sqcup\mathcal M_k\sqcup\mathcal M_k^c).
	\]
	Choose $\rho>0$ as in \Cref{dfn_coordinate_refinement}, small enough that
	\[
	\gamma(\mathcal H_\rho(B;\mathcal A))\le\eps/2.
	\]
	Set
	\[
	\mathcal H:=\mathcal H_\rho(B;\mathcal A),
	\qquad
	\mathcal C_\rho:=\mathcal R_\rho(B;\mathcal A).
	\]
	For every $C\in\mathcal C_\rho$ and every $k$, the same finite rational endpoint test as in \Cref{lem_representable_finite_union} puts $C$ in exactly one of the three relations
	\[
	C\Subset\bar{\mathcal M}_k,\qquad
	C\Subset\bar{\mathcal E}_k,\qquad
	C\Subset\overline{\mathcal M_k^c}.
	\]
	Define
	\[
	\mathcal C_M
	:=
	\{C\in\mathcal C_\rho:\forall k,\ C\Subset\bar{\mathcal M}_k\},
	\]
	\[
	\mathcal C_c
	:=
	\{C\in\mathcal C_\rho:\exists k,\ C\Subset\overline{\mathcal M_k^c}\},
	\]
	and
	\[
	\mathcal C_E
	:=
	\mathcal C_\rho\setminus(\mathcal C_M\cup\mathcal C_c).
	\]
	Let $\mathcal M$ and $\mathcal M^c$ be the finite full multiblocks with constituent blocks $\mathcal C_M$ and $\mathcal C_c$, respectively.
	Let $\mathcal E$ be obtained by concatenating $\mathcal H$ with the finite full multiblock whose constituent blocks are $\mathcal C_E$.
	By \Cref{lem_coordinate_refinement_measure_containment},
	\[
		\overline{\mathcal M\sqcup\mathcal E\sqcup\mathcal M^c}
		\msubset
		B.
	\]
	Every block in $\mathcal C_E$ is well contained in $\bar{\mathcal E}_k$ for at least one $k$.
	Indeed, such a block is not exterior for any $k$ and not core for all $k$, while it has exactly one of the three relations above for each fixed $k$.
	Choose one such index $k(C)$ for each $C\in\mathcal C_E$.
	Because the blocks in $\mathcal C_\rho$ have pairwise disjoint interiors,
	\[
	\sum_{\substack{C\in\mathcal C_E\\ k(C)=k}}\mu(C)
	\le
	\gamma(\mathcal E_k),
	\qquad
	k=1,\ldots,N.
	\]
	Therefore
	\[
	\gamma(\mathcal E)
	\le
	\gamma(\mathcal H)+\sum_{k=1}^N\gamma(\mathcal E_k)
	\le
	\eps.
	\]
	If $C$ is a constituent block of $\mathcal M$, then $C\Subset\bar{\mathcal M}_k$ for every $k$.
	Hence $C\Subset X_k$ for every $k$, and finiteness gives
	\[
	C\Subset\bigcap_{k=1}^N X_k.
	\]
	If $C$ is a constituent block of $\mathcal M^c$, then $C\Subset\overline{\mathcal M_k^c}$ for some $k$, hence
	\[
		C\Subset B\setminus X_k\subset B\setminus\bigcap_{l=1}^N X_l.
	\]
	Thus the constructed multiblocks form a representability witness for $X$.
	The core multiblock may be empty; this is the intended degenerate case and requires no further decision.
\end{proof}

\begin{remark}[Mechanical intersections]
	\label{rem_representable_intersection_mechanical}
	\Cref{lem_representable_finite_intersection} is a finite rational block construction.
	It does not decide whether the intersection contains a full block.
	Empty and degenerate cells are simply absent from the core and are absorbed by the exterior or the exception layer.
\end{remark}

\begin{definition}[Flattening]
	\label{dfn_representable_set_flattening}
	Let $B$ be a full block and let $X_1,\ldots,X_N\Subset B$ be representable sets.
	For each subset $I\subseteq\{1,\ldots,N\}$ set
	\[
		Y_I
		:=
		\left(\bigcap_{i\in I}X_i\right)
		\cap
		\left(\bigcap_{i\notin I}(B\setminus X_i)\right),
	\]
	where $B\setminus X_i$ is the relative apartness complement inside $B$.
	The finite family $\{Y_I\}_I$ is called the \emph{flattening} of $X_1,\ldots,X_N$ over $B$.
	The full finite family is retained as refinement data.
	No cell is removed because it might be empty or degenerate.
\end{definition}

\begin{lemma}[Flattening]
	\label{lem_representable_set_flattening}
	The flattening of a finite family of representable sets well contained in a full block consists of pairwise disjoint representable sets, possibly with empty cores.
	The construction is finite and mechanical.
\end{lemma}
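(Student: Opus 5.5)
The plan is to establish representability of each cell $Y_I$ by chaining the closure lemmas already proved, and then to read off pairwise disjointness from the apartness structure of the complements.

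\emph{Representability.} Since each $X_i\Subset B$, we may take $B$ itself as a common bounding block for every $X_i$, after appending the blocks of a decomposition of $B\setminus\inter{B_{X_i}}$ to the exterior multiblock, exactly as in the proof of \Cref{lem_representable_finite_union}. By \Cref{lem_representable_relative_complement}, each relative complement $B\setminus X_i$ then carries a representability witness relative to $B$: its core and exterior are obtained from those of $X_i$ by swapping core and exterior, and the exception multiblock is unchanged. With all $2N$ sets $X_i$ and $B\setminus X_i$ given witnesses over the same block $B$, \Cref{lem_representable_finite_intersection} applies directly to the finite intersection defining $Y_I$, and so each $Y_I$ is representable. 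In particular, this covers the boundary cases $I=\emptyset$ and $I=\{1,\ldots,N\}$.

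A small technical point arises because \Cref{lem_representable_finite_intersection} starts from bounding blocks $B_k\Subset B'$ for some larger $B'$. I would handle this by running that proof with the common block $B$ directly, since its argument only uses that all witnesses share one bounding block. Alternatively, one can enlarge $B$ to $B'$ by a thin shell, as in \Cref{lem_representable_relative_complement}, and absorb the shell into the exception multiblock. The output core may be empty, which is permitted by \Cref{rem_representable_intersection_mechanical}. Every step is a finite rational endpoint test on coordinate refinements, so the construction is mechanical. No cell is discarded, consistent with \Cref{dfn_representable_set_flattening}.

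\emph{Pairwise disjointness.} Take $I\ne J$; without loss of generality there is an index $i\in I\setminus J$. Then $Y_I\subseteq X_i$ and $Y_J\subseteq B\setminus X_i$. By \Cref{ntn_apartness_complement}, any $x\in B\setminus X_i$ satisfies $\nrm{x-y}\ge\delta>0$ for every $y\in X_i$. A point lying in both $Y_I$ and $Y_J$ would therefore have positive distance from itself, which is impossible, exactly as in \Cref{rem_mutual_apart_disjoint}. In fact the argument gives more than disjointness: $Y_J\subseteq B\setminus Y_I$ in the apartness sense, so the flattening is mutually apart relative to $B$.

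\emph{Main obstacle.} The substantive issue is the bookkeeping of the common bounding block, that is, making sure the relative complements and the intersection lemma are applied over the same $B$ without changing the meaning of the apartness complements. I would settle this by fixing $B$ once at the outset and verifying that the swapped witnesses still satisfy \eqref{eqn_representable_set_interior} and \eqref{eqn_representable_set_exterior} relative to $B$.
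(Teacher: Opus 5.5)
Your proposal is correct and follows the paper's own argument. Each $Y_I$ is representable by the relative-complement lemma combined with the finite-intersection lemma over a common block, with disjointness from an index lying in exactly one of $I,J$; note that the thin-shell enlargement is needed, not optional, for $Y_\emptyset$, which is not well contained in $B$, and your extra observation that the cells are pointwise apart rather than merely disjoint is also correct.
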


\begin{proof}
	For each support $X_i\Subset B$, the relative apartness complement $B\setminus X_i$ is representable relative to $B$ by \Cref{lem_representable_relative_complement}.
	After enlarging the ambient block by an arbitrarily thin shell if the formal boundedness clause is needed, it is representable in the sense of \Cref{dfn_representable_set}.
	Finite intersections are representable by \Cref{lem_representable_finite_intersection}, so every $Y_I$ is representable.
	If $I\ne J$, choose $i$ belonging to exactly one of $I,J$.
	Then $Y_I$ is contained in $X_i$ and $Y_J$ is contained in $B\setminus X_i$, or conversely.
	Hence $Y_I\cap Y_J=\emptyset$.
	The core of some $Y_I$ may be empty; no emptiness test is performed.
\end{proof}

\begin{remark}
	\label{rem_representable_generation}
	Thus complicated representable sets may be generated from simple pieces.
	For instance, finitely many obstacles, locally finite obstacle arrays, polyhedral corners, and piecewise graph boundaries need not be treated as separate primitive cases once their pieces are representable.
\end{remark}

\subsubsection{Representable Set Examples and Counter-Examples}
\label{subsubsec_representable_set_examples}

\begin{definition}[Effective Lipschitz graph set]
	\label{dfn_effective_lipschitz_graph_set}
	Let $A\subset\R^{n-1}$ be a full block and let $c<d$ be rational.
	A set $X\subset A\times[c,d]$ is called an \emph{effective Lipschitz graph set} if, after possibly interchanging coordinates,
	\[
	X=\{(u,t)\in A\times[c,d]:t\le\varphi(u)\}
	\]
	for a function $\varphi:A\to[c,d]$ equipped with a rational Lipschitz constant $L>0$, meaning
	\[
	\abs{\varphi(u)-\varphi(v)}
	\le
	L\nrm{u-v}
	\qquad
	(u,v\in A).
	\]
\end{definition}

\begin{remark}
	The function-theoretic terminology, including Lipschitz continuity, used here is developed in \Cref{subsec_continuous_functions} in detail; at this point it is only used as geometric data describing the boundary graph.
\end{remark}

\begin{figure}[ht]
	\centering
	\resizebox{0.78\linewidth}{!}{\begin{tikzpicture}[x=1cm,y=1cm,font=\small]
	\fill[gray!12] (0,0) rectangle (6,4);

	\fill[blue!18]
		(0,0) -- (6,0)
		-- (6,2.95)
		.. controls (5.1,2.7) and (4.6,2.0) .. (3.8,2.1)
		.. controls (2.8,2.2) and (2.5,3.0) .. (1.5,2.8)
		.. controls (0.8,2.65) and (0.35,2.1) .. (0,2.2)
		-- cycle;

	\draw[very thick] (0,0) rectangle (6,4);
	\draw[very thick,blue!70!black]
		(0,2.2)
		.. controls (0.35,2.1) and (0.8,2.65) .. (1.5,2.8)
		.. controls (2.5,3.0) and (2.8,2.2) .. (3.8,2.1)
		.. controls (4.6,2.0) and (5.1,2.7) .. (6,2.95);

	\draw[->] (-0.25,0) -- (6.35,0);
	\draw[->] (0,-0.25) -- (0,4.35);
	\node[below] at (3,0) {$u\in A$};
	\node[left] at (0,0.15) {$c$};
	\node[left] at (0,4) {$d$};
	\node[right,blue!70!black] at (6,2.95) {$t=\varphi(u)$};
	\node[blue!70!black] at (3,1.15) {$X=\{(u,t):t\le\varphi(u)\}$};
	\node[gray!65!black] at (3,3.55) {relative exterior};
\end{tikzpicture}}
	\caption{Effective Lipschitz graph set from \Cref{dfn_effective_lipschitz_graph_set}. The set is the part of the chart below the graph of the computable Lipschitz function.}
	\label{fig_dfn314_lipschitz_graph_set}
\end{figure}

\begin{lemma}[Lipschitz graph sets are representable]
	\label{lem_lipschitz_graph_sets_representable}
	Every effective Lipschitz graph set is representable.
\end{lemma}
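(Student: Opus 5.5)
The plan is a grid construction in the base $A$, with columns read off from rational approximations of $\varphi$. Fix $\eps>0$.

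\textbf{Bounding block.} Choose a full block $B_X\supset A\times[c,d]$ with $A\times[c,d]\Subset B_X$. The shell $B_X\setminus(A\times[c,d])$ will be handled as in the proof of \Cref{lem_representable_relative_complement}:
\begin{itemize}
\item an arbitrarily thin rational shell around $\partial(A\times[c,d])$, of cost at most $\eps/3$, goes into the exception multiblock;
\item the remaining full blocks of the shell go into the exterior, since they are uniformly apart from $X\subseteq A\times[c,d]$.
\end{itemize}

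\textbf{Columns over $A$.} Subdivide $A$ rationally into full cells $Q_1,\ldots,Q_K$ with pairwise disjoint interiors and diameter at most $h$. In each cell pick a rational point $u_k$ and compute a rational $p_k$ with $\abs{p_k-\varphi(u_k)}\le h$. By the Lipschitz bound, for every $u\in Q_k$,
\[
\abs{\varphi(u)-p_k}\le (L+1)h=:\eta .
\]

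\textbf{The three multiblocks.} Set
\[
\mathcal M_X:=\{Q_k\times[c,\,p_k-2\eta]\},\qquad
\mathcal E_X:=\{Q_k\times[p_k-2\eta,\,p_k+2\eta]\},\qquad
\mathcal M_X^c:=\{Q_k\times[p_k+2\eta,\,d]\},
\]
each truncated to $[c,d]$ and with empty or degenerate pieces omitted. Before truncating, enlarge the rational endpoints slightly so that the pieces stay full, or equivalently clip the intervals to $[c,d]$.

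\textbf{Well-containment.} A point $(u,t)$ of the core satisfies $t\le\varphi(u)-\eta$. Any point of the apartness complement $B_X\setminus X$ has either $t'>\varphi(u')$ or lies outside $A\times[c,d]$. In the first case the Lipschitz estimate gives
\[
\nrm{(u,t)-(u',t')}\ge \frac{\eta}{1+L},
\]
which is a uniform $\delta$. In the second case the distance is bounded below by the shell margin, using the thin boundary layer placed in the exception. The exterior is handled symmetrically.

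I expect the \textbf{main obstacle} to be the positive form of this argument. The apartness complement consists of points positively separated from $X$, so the implication must be phrased as follows. If $(u,t)\in\overline{\mathcal M_X^c}$ and $(u',t')\in X$, then $t'\le\varphi(u')$. Hence
\[
t-t'\ge \eta-L\nrm{u-u'},
\]
and therefore
\[
\nrm{(u,t)-(u',t')}\ge\frac{\eta}{1+L}.
\]
This gives well-containment in $B_X\setminus X$ directly from \Cref{dfn_set_apartness}. The core is treated dually, and for it I use the characterization of $X$ by the inequality $t\le\varphi(u)$.

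\textbf{Decomposition and cost.} Within each column $Q_k\times[c,d]$ the three intervals tile $[c,d]$ up to endpoints. So the interiors of the three multiblocks are pairwise disjoint, and together with the shell pieces the concatenation is measure-contained in $B_X$. This is checked by finite rational endpoint tests as in \Cref{lem_coordinate_refinement_measure_containment}. The exception cost is at most
\[
\sum_k\mu(Q_k)\,4\eta+\frac{\eps}{3}=4(L+1)h\,\mu(A)+\frac{\eps}{3}.
\]
Choosing $h$ small makes this at most $\eps$.
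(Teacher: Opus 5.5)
Your column construction matches the paper's: small cells in $A$, rational enclosures of $\varphi$ on each cell, a core/exception/exterior split of each column, and a thin outer shell placed in the exception. Your exterior estimate $t-t'\ge\eta-L\nrm{u-u'}$, which gives separation $\eta/(1+L)$ from $X$, is also correct.

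The gap is in the core condition $\bar{\mathcal M}_X\Subset X$. Well-containment is measured against the apartness complement $X^c=\R^n\setminus X$. Your core blocks $Q_k\times[c,p_k-2\eta]$ rest on the bottom face $t=c$, and those with $Q_k$ meeting $\partial A$ also touch the lateral faces $\partial A\times[c,d]$. A point $(u,c-\tau)$, or a point pushed a distance $\tau$ across $\partial A$, lies in $X^c$, since it is at distance at least $\tau$ from $X\subseteq A\times[c,d]$. It is also within $\tau$ of the core, so no uniform $\delta$ exists.

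Your appeal to ``the shell margin, using the thin boundary layer placed in the exception'' does not repair this. That layer lies in $B_X\setminus(A\times[c,d])$, on the far side of exactly the faces the core touches. Putting blocks into the exception never moves the core away from $X^c$. Stated positively, the dual of your exterior argument needs the open $\delta$-ball around every core point to lie inside $X$, and this fails at $t=c$ and along $\partial A$.

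The missing step is the trimming the paper performs. Replace the column bases by $Q_k\cap A^\lambda$ with $A^\lambda=\prod_r[a_r+\lambda,b_r-\lambda]$, and clip the three column intervals to $[c+\lambda,d]$. Then append to the exception a rational collar covering $(A\times[c,d])\setminus(A^\lambda\times[c+\lambda,d])$, whose cost is $O(\lambda)$. With this change:
\begin{itemize}
\item For $\delta=\min\{\lambda,\eta/(1+L)\}$, every open $\delta$-ball around a core point stays in $A\times[c,d]$ and below the graph, by your Lipschitz estimate. Hence the core is well contained in $X$.
\item Your exterior argument is unaffected, since clipping only raises $t$.
\item The collar restores the measure containment in $B_X$.
\item The exception cost gains only an $O(\lambda)$ term, which is absorbed by taking $\lambda$ small.
\end{itemize}
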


\begin{proof}
	Let $X=\{(u,t)\in A\times[c,d]:t\le\varphi(u)\}$ and let $B_X$ be a slightly larger block containing $A\times[c,d]$ well inside.
	Fix $\eps>0$.
	Choose $\delta_0>0$ such that
	\[
	A\times[c,d]\oplus\delta_0\Subset B_X.
	\]
	Choose $\lambda>0$ with $\lambda<\delta_0/4$ so small that the collar multiblocks introduced below have total cost below $\eps/2$ and
	\[
		4\lambda\mu(A)<\eps/2.
	\]
	Write
	\[
		A=\prod_{r=1}^{n-1}[a_r,b_r]
	\]
	and put
	\[
		A^\lambda
		:=
		\prod_{r=1}^{n-1}[a_r+\lambda,b_r-\lambda],
	\]
	omitting this block if it is empty or degenerate.
	The set
	\[
		(A\times[c,d])\setminus(A^\lambda\times[c+\lambda,d])
	\]
	is covered by a finite rational collar multiblock $\mathcal T_\lambda$ with arbitrarily small cost as $\lambda\downarrow0$.
	Subdivide $A$ into finitely many full blocks $A_k$ with pairwise disjoint interiors and with diameter at most $\lambda/(2L)$.
	Let
	\[
		A_k^\lambda:=A_k\cap A^\lambda,
	\]
	again omitting empty or degenerate blocks.
	For each $A_k$, compute rational numbers $\alpha_k<\beta_k$ such that
	\[
	\varphi(A_k)\subset[\alpha_k,\beta_k],
	\qquad
	[c,d]\cap[\alpha_k,\beta_k]\neq\varnothing,
	\qquad
	\beta_k-\alpha_k\le 2\lambda.
	\]
	This is possible because the Lipschitz estimate bounds the variation on $A_k$ by $\lambda/2$, while $\varphi$ can be approximated at a rational tag of $A_k$ within $\lambda/2$.
	Since $\varphi(A_k)\subset[c,d]$, the numbers $\alpha_k,\beta_k$ may be chosen so that
	\[
	[\alpha_k-\lambda,\beta_k+\lambda]\subset[c-2\lambda,d+2\lambda]\subset[c-\delta_0,d+\delta_0].
	\]
	Define
	\[
	\begin{aligned}
		C_k&:=A_k^\lambda\times[c+\lambda,\alpha_k-\lambda],\\
		E_k&:=A_k^\lambda\times[\alpha_k-\lambda,\beta_k+\lambda],\\
		D_k&:=A_k^\lambda\times[\beta_k+\lambda,d].
	\end{aligned}
	\]
	After omitting empty or degenerate blocks, the $C_k$ form a core multiblock, the $E_k$ form the graph exception multiblock, and the $D_k$ form the exterior multiblock over $A^\lambda\times[c+\lambda,d]$.
	The collar multiblock $\mathcal T_\lambda$ is appended to the exception multiblock.
	For each $k$,
	\[
	A_k^\lambda\times[c+\lambda,d]\subseteq C_k\cup E_k\cup D_k
	\]
	as an interval-cover statement in the $t$-coordinate.
	Together with $\mathcal T_\lambda$, these multiblocks cover $A\times[c,d]$ without requiring a decidable comparison of an arbitrary real $t$ with the two rational levels.
	Their interiors are pairwise disjoint because the interiors of the $A_k$ are pairwise disjoint and, for fixed $k$, the three vertical intervals have pairwise disjoint interiors.
	The part of $B_X$ outside $A\times[c,d]$ is decomposed into finitely many full blocks.
	Choose this decomposition so that the multiblock
	\[
	\mathcal S_\lambda
	:=
	\{S:S\subset B_X\setminus(A\times[c,d]),\ d(S,A\times[c,d])<\lambda\}
	\]
	satisfies
	\[
		\gamma(\mathcal T_\lambda)+\gamma(\mathcal S_\lambda)<\eps/2.
	\]
	The blocks of $B_X\setminus(A\times[c,d])$ at distance at least $\lambda$ from $A\times[c,d]$ are appended to the exterior multiblock, and the blocks in $\mathcal S_\lambda$ are appended to the exception multiblock.
	Let the resulting multiblocks be denoted by
	\[
	\mathcal M_X,\qquad \mathcal E_X,\qquad \mathcal M_X^c.
	\]
	The interval-cover statement on $A\times[c,d]$, together with the finite decomposition of $B_X\setminus(A\times[c,d])$, gives
	\[
		\overline{\mathcal M_X\sqcup\mathcal E_X\sqcup\mathcal M_X^c}
		\msubset
		B_X.
	\]
	For each $k$,
	\[
	C_k\Subset X,
	\qquad
	D_k\Subset B_X\setminus X,
	\]
	Indeed, $C_k$ has vertical distance at least $\lambda$ from the graph, distance at least $\lambda$ from the lower face $t=c$, and base distance at least $\lambda$ from $\partial A$.
	For $D_k$, if $(u,t)\in D_k$ and $(v,\tau)\in X$, then either $\nrm{u-v}\ge\lambda/(2L)$, or the Lipschitz estimate gives
	\[
		t-\tau
		\ge
		\lambda-L\nrm{u-v}
		\ge
		\lambda/2.
	\]
	Thus $D_k$ is positively separated from $X$.
	Moreover,
	\[
	\sum_k\mu(E_k)
	\le
	4\lambda\sum_k\mu(A_k)
	=
	4\lambda\mu(A)
	<
	\eps/2.
	\]
	Hence
	\[
	\gamma(\mathcal E_X)
	\le
	\sum_k\mu(E_k)+\gamma(\mathcal T_\lambda)+\gamma(\mathcal S_\lambda)
	<
	\eps.
	\]
	The tuple
	\[
	B_X,\qquad \mathcal E_X,\qquad \mathcal M_X,\qquad \mathcal M_X^c
	\]
	is a representability witness for $X$.
\end{proof}

\begin{figure}[ht]
	\centering
	\resizebox{\linewidth}{!}{\begin{tikzpicture}[x=1cm,y=1cm,font=\small]
	\fill[blue!18] (0,0) rectangle (2,1.35);
	\fill[orange!25] (0,1.35) rectangle (2,2.65);
	\fill[gray!15] (0,2.65) rectangle (2,4);

	\fill[blue!18] (2,0) rectangle (4,0.85);
	\fill[orange!25] (2,0.85) rectangle (4,2.15);
	\fill[gray!15] (2,2.15) rectangle (4,4);

	\fill[blue!18] (4,0) rectangle (6,1.45);
	\fill[orange!25] (4,1.45) rectangle (6,3.35);
	\fill[gray!15] (4,3.35) rectangle (6,4);

	\draw[very thick] (0,0) rectangle (6,4);
	\draw[thin,gray!60] (2,0) -- (2,4);
	\draw[thin,gray!60] (4,0) -- (4,4);

	\foreach \xa/\xb/\ya/\yb in {0/2/1.35/2.65,2/4/0.85/2.15,4/6/1.45/3.35} {
		\draw[dashed,blue!70!black] (\xa,\ya) -- (\xb,\ya);
		\draw[dashed,orange!80!black] (\xa,\yb) -- (\xb,\yb);
	}

	\draw[very thick]
		(0,2.0) .. controls (0.7,2.35) and (1.4,2.1) .. (2,1.8)
		.. controls (2.7,1.45) and (3.4,1.25) .. (4,1.55)
		.. controls (4.8,1.95) and (5.35,2.65) .. (6,2.9);
	\node[right] at (6,2.9) {$\varphi$};

	\node[blue!70!black] at (3,0.45) {$C_k$};
	\node[orange!80!black] at (3,1.65) {$E_k$};
	\node[gray!65!black] at (3,3.0) {$D_k$};

	\node[below] at (1,0) {$A_1$};
	\node[below] at (3,0) {$A_2$};
	\node[below] at (5,0) {$A_3$};
	\node[left] at (0,0) {$c$};
	\node[left] at (0,4) {$d$};
	\node[font=\scriptsize,blue!70!black,left] at (0,1.35) {$\alpha_k-\lambda$};
	\node[font=\scriptsize,orange!80!black,left] at (0,2.65) {$\beta_k+\lambda$};
\end{tikzpicture}}
	\caption{Witness construction in \Cref{lem_lipschitz_graph_sets_representable}. Each base block contributes a core column, an exception column containing the graph of $\varphi$, and an exterior column.}
	\label{fig_lem34_lipschitz_witness}
\end{figure}

\begin{example}[Rational polytopes]
	\label{ex_representable_sets_positive}
	Let $B_X$ be a full block.
	Let
	\[
	X=\bigcup_{r=1}^R P_r,
	\qquad
	P_r=\{x\in B_X:a_{rq}\cdot x\le b_{rq},\ q=1,\ldots,Q_r\},
	\]
	with rational vectors $a_{rq}$ and rational numbers $b_{rq}$.
	Here, $a_{rq}\cdot x$ means the dot product of vectors.
	Assume that no displayed inequality has $a_{rq}=0$.
	Indeed, if $a_{rq}=0$ and $b_{rq}\ge0$, the inequality is vacuous, while if $a_{rq}=0$ and $b_{rq}<0$, the corresponding polytope is empty and is omitted.
	Assume also that $X\Subset B_X$.
	We show that $X$ is representable.
	Fix $\eps>0$.
	For rational $\lambda>0$, define the inner and outer polyhedral approximations
	\[
	\begin{aligned}
		P_r^-&:=\{x\in B_X:a_{rq}\cdot x\le b_{rq}-\lambda,\ q=1,\ldots,Q_r\},\\
		P_r^+&:=\{x\in B_X:a_{rq}\cdot x\le b_{rq}+\lambda,\ q=1,\ldots,Q_r\}.
	\end{aligned}
	\]
	Then
	\[
	\bigcup_r P_r^-\Subset X,
	\qquad
	B_X\setminus\bigcup_r P_r^+\Subset B_X\setminus X.
	\]
	The transition set
	\[
	\left(\bigcup_r P_r^+\right)\setminus\left(\bigcup_r P_r^-\right)
	\]
	is contained in the finite union of rational slabs
	\[
	\{x\in B_X:\abs{a_{rq}\cdot x-b_{rq}}\le\lambda\},
	\qquad r=1,\ldots,R,\quad q=1,\ldots,Q_r.
	\]
	Indeed, if
	\[
	x\in
	\left(\bigcup_r P_r^+\right)\setminus\left(\bigcup_r P_r^-\right),
	\]
	then $x\in P_r^+$ for some $r$, while $x\notin P_r^-$.
	Hence, for some $q$,
	\[
		b_{rq}-\lambda
		<
		a_{rq}\cdot x
		\le
		b_{rq}+\lambda,
	\]
	so $x$ belongs to the displayed slab for this pair $(r,q)$.
	Here a rational slab means the intersection of $B_X$ with the strip between two parallel rational hyperplanes.
	We do not need to regard the slab itself as a multiblock.
	For each nonzero rational vector $a$, there is a computable constant $C(B_X,a)>0$ such that, for every $\eta>0$, the slab
	\[
		\{x\in B_X:\abs{a\cdot x-b}\le\lambda\}
	\]
	has a finite rational block cover of cost at most $C(B_X,a)\lambda+\eta$.
	Indeed, choosing a coordinate $j$ with $a_j\ne0$ reduces the estimate to a one-dimensional interval of length $2\lambda/\abs{a_j}$ over the remaining coordinates.
	Hence $\lambda$ may be chosen so that the finite union of these slabs has a rational block cover $\mathcal E_0$ with
	\[
		\gamma(\mathcal E_0)<\eps/2.
	\]
	After $\mathcal E_0$ is fixed, subdivide $B_X$ into sufficiently small full blocks.
	Let $\mathcal P$ be the resulting finite full multiblock.
	Define
	\[
	\mathcal C_M
	:=
	\{C\in\mathcal P:C\Subset\bigcup_rP_r^-\},
	\]
	\[
	\mathcal C_c
	:=
	\{C\in\mathcal P:C\Subset B_X\setminus\bigcup_rP_r^+\},
	\]
	and
	\[
	\mathcal C_E
	:=
	\mathcal P\setminus(\mathcal C_M\cup\mathcal C_c).
	\]
	The subdivision is chosen so that the multiblock with constituent blocks $\mathcal C_E$ has cost below $\eps/2$ after the blocks already covered by $\mathcal E_0$ are omitted.
	Let
	\[
	\mathcal M_X:=\mathcal C_M,\qquad
	\mathcal M_X^c:=\mathcal C_c,\qquad
	\mathcal E_X:=\mathcal E_0\sqcup\mathcal C_E .
	\]
	Here $\mathcal C_M,\mathcal C_c,\mathcal C_E$ are regarded as finite multiblocks with the indicated constituent blocks.
	Then
	\[
	\bar{\mathcal M}_X\Subset X,
	\qquad
	\overline{\mathcal M_X^c}\Subset B_X\setminus X,
	\qquad
	\gamma(\mathcal E_X)<\eps.
	\]
	These three finite multiblocks form a representability witness.
	The same argument gives local representability for effectively locally finite unions of rational polytopes.
\end{example}

\begin{figure}[ht]
	\centering
	\resizebox{0.82\linewidth}{!}{\begin{tikzpicture}[x=1cm,y=1cm,font=\small]
	\fill[gray!12] (0,0) rectangle (7,4.6);
	\draw[very thick] (0,0) rectangle (7,4.6);

	\draw[dashed,thick,orange!80!black]
		(0.35,0.55) -- (3.3,0.05) -- (6.35,0.65)
		-- (6.75,2.75) -- (5.55,4.35) -- (1.8,4.25) -- (0.15,2.9) -- cycle;
	\draw[thick]
		(0.85,0.85) -- (3.3,0.45) -- (5.95,1.0)
		-- (6.22,2.5) -- (5.15,3.95) -- (2.2,3.8) -- (0.65,2.65) -- cycle;
	\draw[dashed,thick,blue!70!black]
		(1.05,1.05) -- (3.3,0.65) -- (5.65,1.15)
		-- (5.9,2.5) -- (4.95,3.65) -- (2.35,3.55) -- (0.95,2.55) -- cycle;

	\fill[orange!28] (0.50,0.65) rectangle (1.4,2.9);
	\fill[orange!28] (1.34,2.55) rectangle (2.40,3.75);
	\fill[orange!28] (1.40,2.17) rectangle (2.46,3.7);
	\fill[orange!28] (1.7,3.45) rectangle (4.0,4.04);
	\fill[orange!28] (4.0,3.45) rectangle (5.55,4.15);
	\fill[orange!28] (4.9,2.65) rectangle (5.85,3.9);
	\fill[orange!28] (5.55,1.05) rectangle (6.35,2.85);
	\fill[orange!28] (4.9,1.35) rectangle (5.55,2.65);
	\fill[orange!28] (4.65,0.75) rectangle (6.05,1.35);
	\fill[orange!28] (2.80,0.35) rectangle (4.71,1.25);
	\fill[orange!28] (1.25,0.45) rectangle (3.0,1.4);
	\fill[orange!28] (0.80,2.78) rectangle (1.57,3.35);		
	\fill[orange!28] (4.0,2.75) rectangle (5.55,3.45);

	\fill[blue!24] (1.4,1.25) rectangle (3.0,1.75);
	\fill[blue!24] (3.0,1.11) rectangle (4.9,1.75);
	\fill[blue!24] (1.55,1.05) rectangle (3.0,1.25);
	\fill[blue!24] (3.0,0.98) rectangle (4.65,1.25);
	\fill[blue!24] (1.17,1.75) rectangle (3.0,2.25);
	\fill[blue!24] (3.0,1.75) rectangle (5.65,2.25);
	\fill[blue!24] (1.54,2.25) rectangle (3.0,2.75);
	\fill[blue!24] (3.0,2.25) rectangle (5.55,2.75);
	\fill[blue!24] (2.1,2.75) rectangle (5.15,3.25);
	\fill[blue!24] (2.39,3.25) rectangle (4.55,3.45);

	\fill[orange!28] (0.75,0.75) rectangle (1.25,1.25);
	\fill[orange!28] (3.0,0.53) rectangle (5.12,0.90);
	\fill[orange!28] (4.9,0.75) rectangle (5.55,1.05);
	\fill[orange!28] (5.55,2.85) rectangle (6.10,3.55);

	\draw[thick]
		(0.85,0.85) -- (3.3,0.45) -- (5.95,1.0)
		-- (6.22,2.5) -- (5.15,3.95) -- (2.2,3.8) -- (0.65,2.65) -- cycle;
	\draw[dashed,thick,blue!70!black]
		(1.05,1.05) -- (3.3,0.65) -- (5.65,1.15)
		-- (5.9,2.5) -- (4.95,3.65) -- (2.35,3.55) -- (0.95,2.55) -- cycle;
	\draw[dashed,thick,orange!80!black]
		(0.35,0.55) -- (3.3,0.05) -- (6.35,0.65)
		-- (6.75,2.75) -- (5.55,4.35) -- (1.8,4.25) -- (0.15,2.9) -- cycle;

	\node[anchor=north west] at (6.2,4.22) {$B_X$};
	\node[anchor=north west,blue!70!black] at (3.14,2.27) {$\mathcal M_X$};
	\node[anchor=north west,orange!80!black] at (5.6,3.54) {$\mathcal E_X$};
	\node[anchor=north west,gray!65!black] at (0.47,4.3) {$\mathcal M_X^c$};
	\node[anchor=north west] at (6.19,2.73) {$P_r$};
	\node[anchor=north west,blue!70!black] at (4.61,3.34) {$P_r^-$};
	\node[anchor=north west,orange!80!black] at (6.00,0.60) {$P_r^+$};
\end{tikzpicture}}
	\caption{Rational polytope cell from \Cref{ex_representable_sets_positive}. The black polygon is $P_r$, the blue dashed polygon is $P_r^-$, and the orange dashed polygon is $P_r^+$. The displayed witness uses rectangular blocks: $\mathcal M_X$ lies inside $P_r^-$, $\mathcal M_X^c$ lies in $B_X\setminus P_r^+$, and the orange block cover $\mathcal E_X$ lies between these two regions and covers $\partial P_r$.}
	\label{fig_ex31_rational_polytope}
\end{figure}

\begin{definition}[Finite Lipschitz chart presentation]
	\label{dfn_finite_lipschitz_chart_presentation}
	Let $X\Subset B$.
	A \emph{finite Lipschitz chart presentation} of $X$ consists of the following data.
	First, there are finitely many pairwise disjoint full blocks
	\[
	Q_l\Subset B,\qquad l=1,\ldots,L,
	\]
	and, for each $l$, a coordinate permutation $\pi_l$ such that
	\[
	\pi_l(Q_l)=A_l\times[c_l,d_l],
	\]
	where $A_l\subset\R^{n-1}$ is a full block and $c_l<d_l$ are rational.
	Second, for each $l$ there is an effective Lipschitz function $\varphi_l:A_l\to[c_l,d_l]$ and a sign $\sigma_l\in\{+,-\}$.
	The chart contribution is
	\[
	X_l
	=
	\begin{cases}
		\pi_l^{-1}\{(u,t)\in A_l\times[c_l,d_l]:t\le\varphi_l(u)\}, & \sigma_l=+,\\[2mm]
		Q_l\setminus
		\pi_l^{-1}\{(u,t)\in A_l\times[c_l,d_l]:t<\varphi_l(u)\}, & \sigma_l=-.
	\end{cases}
	\]
	Finally, there is a finite union $P_0$ of rational polytopes, called the \emph{polyhedral remainder}, such that
	\[
	X=P_0\cup\bigcup_{l=1}^L X_l.
	\]
	This is the post-refinement presentation.
	If an initial finite chart list has overlapping blocks, one first takes the finite Boolean subdivision generated by those blocks and replaces the original charts by the resulting full subblocks on which the same formula for $X_l$ remains valid.
	The portions of $X$ not covered by those subblocks are included in $P_0$.
\end{definition}

\begin{remark}
	A coordinate permutation $\pi_l$ only reorders coordinates.
	Thus $\pi_l^{-1}$ in the formula for $X_l$ is the inverse reordering, applied to the block $A_l\times[c_l,d_l]$ and to points $(u,t)$ in that block.
\end{remark}

\begin{definition}[Finite cone and rolling-ball data]
	\label{dfn_finite_cone_rolling_ball_data}
	Let $X\Subset B$.
	A \emph{finite cone presentation} consists of finitely many pairwise disjoint full blocks
	\[
	Q_l\Subset B,\qquad l=1,\ldots,L,
	\]
	coordinate permutations $\pi_l$ with
	\[
	\pi_l(Q_l)=A_l\times[c_l,d_l],
	\]
	effective boundary functions $\varphi_l:A_l\to[c_l,d_l]$, signs $\sigma_l\in\{+,-\}$, and a rational cone slope $L_\theta>0$.
	The boundary portion in $Q_l$ is
	\[
	\partial X\cap Q_l
	=
	\pi_l^{-1}\{(u,t)\in A_l\times[c_l,d_l]:t=\varphi_l(u)\},
	\]
	and the side $\sigma_l$ determines which of the two cones
	\[
	\{(v,t):t\le\varphi_l(u)-L_\theta\nrm{v-u}\},
	\qquad
	\{(v,t):t\ge\varphi_l(u)+L_\theta\nrm{v-u}\},
	\]
	is contained in the corresponding side of $X$ for every $u\in A_l$.
	Geometrically, $L_\theta=\cot\theta$ when $\theta$ denotes the cone half-aperture.
	Thus the cone condition is finite in the following sense: there are only the listed charts, and the same rational number $L_\theta$ works throughout every listed chart.

	A \emph{finite rolling-ball presentation} consists of finitely many such blocks $Q_l$, coordinate permutations $\pi_l$, effective boundary functions $\varphi_l$, signs $\sigma_l$, a rational radius $r>0$, and effective assignments which, for each chart boundary point, give an interior ball and an exterior ball of radius $r$ tangent at that point and contained in the corresponding side.
	The chart radii are part of the data and are chosen smaller than $r/4$.
\end{definition}

\begin{example}[Effective Lipschitz sets]
	\label{ex_effective_lipschitz_sets}
	Let $X\Subset B$ have a finite Lipschitz chart presentation in the sense of \Cref{dfn_finite_lipschitz_chart_presentation}.
	Thus
	\[
	X=P_0\cup\bigcup_{l=1}^L X_l,
	\]
	where each $X_l$ has the form
	\[
	X_l
	=
	\begin{cases}
		\pi_l^{-1}\{(u,t)\in A_l\times[c_l,d_l]:t\le\varphi_l(u)\}, & \sigma_l=+,\\[2mm]
		Q_l\setminus
		\pi_l^{-1}\{(u,t)\in A_l\times[c_l,d_l]:t<\varphi_l(u)\}, & \sigma_l=-.
	\end{cases}
	\]
	For every chart with $\sigma_l=+$, the chart piece $X_l$ is representable by \Cref{lem_lipschitz_graph_sets_representable}.
	For every chart with $\sigma_l=-$, the chart piece is the relative apartness complement, inside $Q_l$, of a Lipschitz graph subgraph; hence it is representable by \Cref{lem_lipschitz_graph_sets_representable,lem_representable_relative_complement}.
	The polyhedral remainder $P_0$ is representable by \Cref{ex_representable_sets_positive}.
	For a prescribed $\eps>0$, choose witnesses
	\[
	B,\mathcal E_0,\mathcal M_0,\mathcal M_0^c
	\]
	for $P_0$ and
	\[
	B,\mathcal E_l,\mathcal M_l,\mathcal M_l^c,
	\qquad l=1,\ldots,L,
	\]
	for the chart pieces, all with the common bounding block $B$ and with exception costs small enough that their sum is below $\eps/2$.
	Applying \Cref{lem_representable_finite_union} to
	\[
	P_0,\quad X_1,\ldots,X_L
	\]
	gives a witness
	\[
	B,\qquad \mathcal E_X,\qquad \mathcal M_X,\qquad \mathcal M_X^c
	\]
	for $X$, with
	\[
	\gamma(\mathcal E_X)
	\le
	\gamma(\mathcal H)+\gamma(\mathcal E_0)+\sum_{l=1}^L\gamma(\mathcal E_l)
	<
	\eps,
	\]
	where $\mathcal H$ is the coordinate-boundary multiblock produced in the finite-union construction.
	The construction is local: the exception multiblock is formed from the coordinate-boundary multiblock, the thin graph neighborhoods from the chart witnesses, and the polyhedral face neighborhoods from $P_0$.
	In \Cref{fig_ex32_lipschitz_domain}, the part of $X$ between the displayed chart boxes is included in this polyhedral remainder.
	In particular, bounded polygonal sets, finite unions of regions under computable Lipschitz graphs, and sets with finitely many explicitly charted Lipschitz corners are representable.
\end{example}

\begin{figure}[ht]
	\centering
	\resizebox{0.85\linewidth}{!}{\begin{tikzpicture}[x=1cm,y=1cm,font=\small]
	\fill[gray!10] (0,0) rectangle (7,4.5);

	\fill[blue!16]
		(0,0) -- (7,0) -- (7,2.55)
		.. controls (6.0,3.0) and (5.3,2.3) .. (4.4,2.55)
		.. controls (3.4,2.85) and (2.9,3.45) .. (1.8,3.0)
		.. controls (0.9,2.65) and (0.35,2.05) .. (0,2.2)
		-- cycle;

	\draw[very thick] (0,0) rectangle (7,4.5);
	\draw[very thick,blue!70!black]
		(0,2.2)
		.. controls (0.35,2.05) and (0.9,2.65) .. (1.8,3.0)
		.. controls (2.9,3.45) and (3.4,2.85) .. (4.4,2.55)
		.. controls (5.3,2.3) and (6.0,3.0) .. (7,2.55);

	\draw[thick,orange!80!black] (0.55,1.2) rectangle (2.55,3.75);
	\draw[thick,orange!80!black] (2.65,1.65) rectangle (4.75,4.05);
	\draw[thick,orange!80!black] (5.05,1.25) rectangle (6.65,3.55);

	\draw[orange!80!black,latex-latex] (2.86,1.82) -- (4.54,1.82);
	\node[orange!80!black,font=\scriptsize] at (3.70,2.02) {$A_2$};
	\draw[orange!80!black,latex-latex] (4.48,1.86) -- (4.48,3.85);
	\node[orange!80!black,font=\scriptsize,rotate=90] at (4.2,3.2) {$[c_2,d_2]$};
	\node[blue!70!black,font=\scriptsize] at (3.72,2.4) {$t=\varphi_2(u)$};

	\node[orange!80!black] at (1.55,3.55) {$Q_1$};
	\node[orange!80!black] at (3.70,3.85) {$Q_2$};
	\node[orange!80!black] at (5.75,3.35) {$Q_3$};
	\node[blue!70!black,font=\scriptsize] at (1.45,1.72) {$X_1$};
	\node[blue!70!black,font=\scriptsize] at (2.9,2.1) {$X_2$};
	\node[blue!70!black,font=\scriptsize] at (5.82,1.72) {$X_3$};
	\node[blue!70!black] at (3.5,1.0) {$X$};
	\node[gray!65!black] at (3.55,4.25) {chart exterior};
	\node[font=\scriptsize,align=center] at (3.55,0.35) {$P_0$: polyhedral remainder between chart blocks};
\end{tikzpicture}}
	\caption{Effective Lipschitz set from \Cref{ex_effective_lipschitz_sets}. The chart block $Q_l$ is $\pi_l^{-1}(A_l\times[c_l,d_l])$, and the portion labelled $X_l$ is described in those coordinates by the graph of $\varphi_l$. The part of $X$ between the chart boxes is not an additional chart: it is included in the polyhedral remainder $P_0$.}
	\label{fig_ex32_lipschitz_domain}
\end{figure}

\begin{lemma}[Graphs of uniformly continuous curves are representable]
	\label{lem_uniformly_continuous_curve_graph_representable}
	Let $I=[a,b]\subset\R$ be a full rational interval and let $x:I\to\R^n$ be uniformly continuous with a modulus.
	Then the graph
	\[
		\Gamma_x:=\{(t,x(t)):t\in I\}\subset\R^{1+n}
	\]
	is representable.
	It has witnesses with empty core.
\end{lemma}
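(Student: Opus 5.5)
We need to build witnesses with empty core. The plan is to choose a bounding block $B_X$ with $\Gamma_x\Subset B_X$, cover the graph by a thin tube of product blocks, and make everything else exterior.

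\emph{Bounding block.} Uniform continuity on the compact interval $I$ gives an explicit bound $\nrm{x(t)}_\infty\le K$ with $K$ rational. Compute $x$ at finitely many tags and add the oscillation bound from the modulus. Take $B_X:=[a-1,b+1]\times[-K-1,K+1]^n$, so that $\Gamma_x\Subset B_X$.

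\emph{Tube of blocks.} Fix $\eps>0$ and rationals $\eta>0$, $h>0$ to be chosen. Subdivide $I$ into intervals $I_k=[t_{k-1},t_k]$ of length $h$, with $h$ so small that the modulus gives $\nrm{x(s)-x(t)}_\infty\le\eta$ for $s,t\in I_k$. Compute a rational approximation $p_k$ with $\nrm{p_k-x(t_k)}_\infty\le\eta$. Set
\[
E_k:=[t_{k-1}-\eta,t_k+\eta]\times\prod_{j=1}^n[p_{k,j}-3\eta,p_{k,j}+3\eta],
\]
so that $\{(t,x(t)):t\in I_k\}\subset E_k$ with margin $\eta$ in every coordinate. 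The cost is
\[
\gamma(\{E_k\})\le\sum_k(h+2\eta)(6\eta)^n=\frac{b-a}{h}(h+2\eta)(6\eta)^n .
\]
This is where $n\ge1$ is used. Choose $\eta\le h$; the cost is then at most $3(b-a)6^n\eta^n$, which is made $\le\eps/2$ by taking $\eta$ small and then $h$ with $h\ge\eta$. For this we need the modulus to allow $h\ge\eta$. If it does not, we decouple the two choices: first pick $\eta$, then $h=\omega(\eta)$, and bound the cost as $\frac{b-a}{h}\cdot(h+2\eta)(6\eta)^n$ with the time-margin replaced by $\min(\eta,h)$. This gives $\le 3(b-a)(6\eta)^n$ in either case. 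The time padding only needs to be positive and at most $h$.

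\emph{Refinement into the three multiblocks.} We do not try to use the $E_k$ directly, because they overlap and are not aligned with an exterior. Instead we apply the trimmed coordinate refinement of \Cref{dfn_coordinate_refinement} to $B_X$ and $\mathcal A:=\{E_k\}_k$.
\begin{itemize}
\item Choose $\rho$ so small that $\gamma(\mathcal H_\rho(B_X;\mathcal A))\le\eps/2$.
\item Put into the exception multiblock $\mathcal H_\rho$ together with all cells $C\in\mathcal R_\rho$ satisfying $C\Subset E_k$ for some $k$. By disjoint interiors their total volume is at most $\mu(\bigcup_k E_k)\le\gamma(\{E_k\})$.
\item Put all remaining cells into the exterior multiblock. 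Each such cell satisfies $C\cap E_k=\emptyset$ for every $k$, and this is decidable from rational endpoints.
\item Take the core empty.
\end{itemize}
Measure containment in $B_X$ and disjointness of interiors then follow from \Cref{lem_coordinate_refinement_measure_containment}.

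\emph{Exterior well-containment, the main obstacle.} We must show $C\Subset B_X\setminus\Gamma_x$ uniformly. Let $(s,x(s))\in\Gamma_x$. Then $s\in I_k$ for some $k$ up to the approximation needed to locate $s$. Since $I_k$ is closed with margin $\eta$ in $E_k$, we can pick $k$ constructively by comparing $s$ with the rational grid up to tolerance $\eta/2$. The point $(s,x(s))$ then lies at sup-distance at least $\eta/2$ inside $E_k$. Because $C\cap E_k=\emptyset$, every point of $C$ is at distance $\ge\eta/2$ from $(s,x(s))$, uniformly in the point. This yields the positive witness $\delta=\eta/2$. Localizing $s$ into an $I_k$ is exactly where the padding of the $E_k$ is essential: it avoids an undecidable comparison at grid points.

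Finally, $\gamma(\mathcal E)\le\eps$, the core is empty, and the core condition holds vacuously, so this is a representability witness for $\Gamma_x$ with empty core.
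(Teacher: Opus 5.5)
Your construction follows the paper's proof: a tube of product boxes over a time partition fine enough for the modulus, an empty core, and a coordinate refinement of a bounding block in which cells meeting the tube go to the exception and the rest to the exterior. The tube's cost is of order $(b-a)\eta^n$ whatever the partition; you are right that this needs $n\ge1$, which the paper leaves implicit. Using the trimmed refinement $\mathcal R_\rho$ together with $\mathcal H_\rho$ is, if anything, tidier than the paper's plain refinement, where cells that share only a face with $\bar{\mathcal E}$ need separate care.

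The step you call the main obstacle does not hold as written.

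\emph{Time direction.} You change the time padding to $\tau:=\min(\eta,h)$ but still localize $s$ with tolerance $\eta/2$. When $h<\eta/2$ this need not even place $s$ in the time range of $E_k$.

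\emph{Spatial direction.} The spatial margin can vanish. Take $h=\omega(\eta)$ and $s\in[t_{k-1}-\tau/2,t_{k-1})$. The modulus then only gives $\nrm{x(s)-x(t_k)}_\infty\le\nrm{x(s)-x(t_{k-1})}_\infty+\nrm{x(t_{k-1})-x(t_k)}_\infty\le2\eta$. Hence $\nrm{x(s)-p_k}_\infty\le3\eta$, which is exactly the half-width of the spatial factor of $E_k$.

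Adjusting constants repairs this: use tolerance $\tau/2$ and $h\le\omega(\eta)/2$, so that $\abs{s-t_k}\le\omega(\eta)$ and the margin is $\tau/2$.

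Your own setup gives a cleaner argument, though:
\begin{itemize}
\item A trimmed cell $C$ that is disjoint from a block of $\mathcal A$ is separated from it by at least $\rho$ in some coordinate.
\item Every graph point lies in some $E_k$, because $\nrm{x(s)-p_k}_\infty\le2\eta$ for $s\in I_k$.
\item The target $\nrm{z-(s,x(s))}\ge\rho$ is a negative statement. So the double-negated case split $s\in I_1\vee\cdots\vee s\in I_N$ suffices, and $s$ never has to be localized constructively.
\end{itemize}
Thus $\delta=\rho$ witnesses $C\Subset B_X\setminus\Gamma_x$, and neither the time padding nor any spatial slack beyond $\Gamma_x\subseteq\bigcup_k E_k$ is needed.
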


\begin{proof}
	The set $\Gamma_x$ is bounded because $x$ is uniformly continuous on a full interval and hence bounded by a finite-net estimate.
	Fix $\eps>0$.
	Choose a rational $\eta>0$ so small that the volume of an $\eta$-tube around finitely many vertical boxes over a partition of $I$ is at most $\eps$.
	Using the modulus of uniform continuity, choose a rational partition
	\[
		a=t_0<t_1<\cdots<t_N=b
	\]
	so fine that
	\[
		\nrm{x(t)-x(t_k)}_\infty\le\eta
	\]
	for all $t\in[t_k,t_{k+1}]$.
	For each $k$, choose a rational box $V_k\subset\R^n$ containing the $\eta$-neighborhood of $x(t_k)$ and with side lengths prescribed by the preceding choice of $\eta$.
	The finite multiblock
	\[
		\mathcal E
		:=
		\{[t_k,t_{k+1}]\times V_k:k=0,\ldots,N-1\}
	\]
	contains $\Gamma_x$ and has cost at most $\eps$.
	Take the core multiblock empty.
	In a rational bounding block for $\Gamma_x$, refine by the endpoints of $\mathcal E$ and put every full refined block disjoint from $\bar{\mathcal E}$ into the exterior multiblock.
	This exterior is well contained in the apartness complement of $\Gamma_x$ after the boxes $V_k$ are chosen with a positive rational margin around the graph.
	The remaining refined blocks are appended to the exception multiblock.
	This gives a representability witness with empty core and exception cost at most $\eps$.
\end{proof}

\begin{remark}
	\label{rem_curve_image_vs_graph}
	\Cref{lem_uniformly_continuous_curve_graph_representable} concerns the graph in the product space $\R^{1+n}$.
	The image $x(I)\subset\R^n$ is a different set.
	A Lipschitz image has finite length and is representable with empty core in dimensions $n\ge2$, while in dimension $n=1$ the image is an interval and is represented as such.
	For a merely uniformly continuous curve, the image need not have zero volume in higher dimension.
\end{remark}

\begin{example}[Cone and rolling-ball conditions]
	\label{ex_cone_rolling_ball_representable}
	Let $X\Subset B$ be supplied with finite cone data in the sense of \Cref{dfn_finite_cone_rolling_ball_data}.
	Fix a chart and write it, after applying the given coordinate permutation, as $A_l\times[c_l,d_l]$.
	Assume, for definiteness, that the interior side in this chart is the subgraph side.
	At a boundary point $p=(u,\varphi_l(u))$, the corresponding interior cone contains the set
	\[
	K_\theta(p):=
	\{(v,t):t\le\varphi_l(u)-L_\theta\nrm{v-u}\}.
	\]
	No other boundary point can lie in this cone.
	Hence, if $q=(v,\varphi_l(v))$ is another boundary point and $\varphi_l(v)\le\varphi_l(u)$, then
	\[
	\varphi_l(u)-\varphi_l(v)
	\le
	L_\theta\nrm{v-u}.
	\]
	Interchanging $p$ and $q$ gives the opposite inequality, and therefore
	\[
	|\varphi_l(u)-\varphi_l(v)|
	\le
	L_\theta\nrm{u-v},
	\qquad u,v\in A_l,
	\]
	with the side data specifying whether $X\cap Q_l$ is the subgraph or the relative complement.
	Thus the cone presentation supplies a finite Lipschitz chart presentation, and \Cref{ex_effective_lipschitz_sets} applies.

	If $X$ is instead supplied with finite rolling-ball data, then the interior and exterior balls imply, after shrinking the listed charts if needed, an interior cone condition and an exterior cone condition whose aperture depends only on the ratio of the chart radius to $r$.
	Equivalently, each local graph has a Lipschitz constant bounded by a rational number computed from the supplied rolling-ball data.
	A boundary layer of thickness $\lambda$ in each chart has cost bounded by $C_l\lambda$, so the total boundary-layer cost is made arbitrarily small.
	The regions at distance at least $\lambda$ from the boundary are assigned to the core or exterior according to the supplied side data.
\end{example}

\begin{figure}[ht]
	\centering
	\resizebox{0.85\linewidth}{!}{\begin{tikzpicture}[x=1cm,y=1cm,font=\small]
	\fill[gray!30] (0,0) rectangle (8,5);
	\draw[very thick] (0,0) rectangle (8,5);

	\fill[blue!20] (0,0) rectangle (2.0,1.75);
	\fill[blue!20] (2.0,0) rectangle (4.0,2.10);
	\fill[blue!20] (4.0,0) rectangle (6.0,1.75);
	\fill[blue!20] (6.0,0) rectangle (8,1.95);
	\fill[blue!20] (1.2,1.75) rectangle (3.0,2.45);
	\fill[blue!20] (4.9,1.75) rectangle (7.2,2.15);

	\fill[orange!28] (0,1.55) rectangle (1.20,2.85);
	\fill[orange!28] (1.20,2.15) rectangle (2.35,3.25);
	\fill[orange!28] (2.35,2.45) rectangle (3.55,3.35);
	\fill[orange!28] (3.0,2.10) rectangle (3.55,2.45);
	\fill[orange!28] (3.55,2.10) rectangle (4.95,3.00);
	\fill[orange!28] (4.0,1.75) rectangle (4.95,2.10);
	\fill[orange!28] (4.95,1.85) rectangle (6.25,2.85);
	\fill[orange!28] (6.0,1.75) rectangle (6.25,1.95);
	\fill[orange!28] (6.25,1.95) rectangle (8,2.95);

	\draw[very thick]
		(0,2.2)
		.. controls (0.3,2.05) and (0.7,2.55) .. (1.6,2.9)
		.. controls (2.7,3.35) and (3.5,2.75) .. (4.8,2.35)
		.. controls (6.0,2.0) and (6.7,2.85) .. (8,2.35);

	\fill[blue!45,opacity=0.45]
		(3.50,2.82) -- (2.87,1.92) -- (4.13,1.92) -- cycle;
	\draw[blue!70!black,thick,dashed] (3.50,2.82) -- (2.87,1.92);
	\draw[blue!70!black,thick,dashed] (3.50,2.82) -- (4.13,1.92);
	\draw[gray!60,dashed,thin] (3.50,2.82) -- (3.50,1.85);
	\draw[blue!70!black,thin] (3.50,2.82) ++(235:0.38) arc[start angle=235,end angle=270,radius=0.38];
	\node[blue!70!black,font=\scriptsize] at (3.15,2.65) {$\theta$};
	\node[blue!70!black,font=\scriptsize,anchor=west] at (2.8,1.7) {$L=\cot\theta$};

	\fill[blue!35,opacity=0.65] (5.4,1.6) circle (0.65);
	\draw[blue!70!black,thick] (5.4,1.6) circle (0.65);
	\draw[blue!70!black] (5.4,1.6) -- (5.4,2.25);
	\node[blue!70!black,right] at (5.45,1.88) {$r$};

	\fill[gray!35,opacity=0.75] (2.1,3.72) circle (0.65);
	\draw[gray!70!black,thick] (2.1,3.72) circle (0.65);
	\draw[gray!70!black] (2.1,3.72) -- (2.1,3.07);
	\node[gray!70!black,right] at (2.15,3.40) {$r$};

	\node[blue!70!black] at (4.2,0.75) {$\mathcal M_X$};
	\node[gray!65!black] at (4.4,4.45) {$\mathcal M_X^c$};
	\node[orange!80!black] at (6.9,2.75) {$\mathcal E_X$};
	\node at (0.58,2.64) {$\partial X$};
\end{tikzpicture}}
	\caption{Cone and rolling-ball data from \Cref{ex_cone_rolling_ball_representable}. The displayed cone records the local Lipschitz bound $L=\cot\theta$. The gray ball lies in the exterior side and the blue ball lies in the interior side; the representability witness is still formed by rectangular core, exception, and exterior blocks.}
	\label{fig_ex33_rolling_ball}
\end{figure}

\begin{example}[Non-representable sets]
	\label{ex_non_representable_sets}
	The set $\Q\cap[0,1]$ is not representable.
	No nonempty full interval is well contained in $\Q\cap[0,1]$, because every such interval contains irrational points.
	No nonempty full interval is well contained in the relative apartness complement either, because every such interval contains rational points.
	Thus the core and exterior multiblocks in any witness are empty after degenerate constituents are removed.
	The exception multiblock would have to cover $[0,1]$, so its cost is at least $1$.
	This contradicts the requirement that the exception cost be below an arbitrary $\eps>0$.
	There are also constructive failures: if
	\[
	X_P:=\{x\in[0,1]:x\le 1/2\ \vee\ P\}
	\]
	for a proposition $P$ for which no proof or refutation is available, then a witness with exception cost below $1/4$ cannot put all of $[3/4,1]$ into the exception multiblock.
	Some nonempty full subinterval of $[3/4,1]$ must therefore lie in the core or in the exterior.
	The first case gives $P$; the second gives a refutation of $P$.
	Thus such a witness cannot be constructed without deciding $P$.
\end{example}

\begin{figure}[ht]
	\centering
	\resizebox{\linewidth}{!}{\begin{tikzpicture}[x=1cm,y=1cm,font=\small]
	\begin{scope}
		\draw[very thick] (0,0) -- (4,0);
		\foreach \x in {0.15,0.45,0.8,1.05,1.35,1.7,2.05,2.35,2.7,3.0,3.35,3.7} {
			\fill (\x,0) circle (0.045);
		}
		\foreach \x in {0.3,0.62,0.92,1.22,1.55,1.88,2.18,2.52,2.85,3.18,3.52,3.88} {
			\draw (\x,0) circle (0.05);
		}
		\draw[<->,red!70!black,thick] (1.05,-0.45) -- (2.05,-0.45);
		\node[red!70!black,below,font=\scriptsize] at (1.55,-0.45) {any full interval meets both};
		\node[above] at (2,0.50) {$\Q\cap[0,1]$};
		\node[below] at (0,-0.05) {$0$};
		\node[below] at (4,-0.05) {$1$};
		\fill (0.35,-1.02) circle (0.045);
		\node[right,font=\scriptsize] at (0.48,-1.02) {rational};
		\draw (2.05,-1.02) circle (0.05);
		\node[right,font=\scriptsize] at (2.18,-1.02) {irrational};
	\end{scope}

	\begin{scope}[xshift=5.6cm]
		\fill[blue!18] (0,-0.15) rectangle (2,0.15);
		\fill[gray!22] (2,-0.15) rectangle (4,0.15);
		\draw[very thick] (0,0) -- (4,0);
		\draw[dashed] (2,-0.3) -- (2,0.3);
		\draw[red!70!black,very thick] (3,-0.18) rectangle (4,0.18);
		\node[above] at (2,0.72) {$X_P$};
		\node[blue!70!black,font=\scriptsize] at (1,0.43) {always in};
		\node[gray!65!black,font=\scriptsize] at (3,0.43) {depends on $P$};
		\node[below] at (0,-0.05) {$0$};
		\node[below] at (2,-0.05) {$1/2$};
		\node[below] at (4,-0.05) {$1$};
		\node[red!70!black,below,font=\scriptsize] at (3.5,-0.25) {$[3/4,1]$};
		\node[red!70!black,align=center,font=\scriptsize] at (3.5,-0.95) {core gives $P$\\exterior gives $\neg P$};
	\end{scope}
\end{tikzpicture}}
	\caption{Two obstructions from \Cref{ex_non_representable_sets}. Dense interleaving prevents full core and exterior intervals; the set $X_P$ would force a decision of $P$.}
	\label{fig_ex34_nonrepresentable}
\end{figure}

\begin{example}[Located but not representable]
	\label{ex_located_not_representable}
	Let $S\subset[0,1]$ be the effective Smith--Volterra--Cantor set obtained by removing, at stage $r$, middle open intervals of total length $2^{-r-1}$ from the intervals remaining after stage $r-1$.
	Then the total removed length is
	\[
	\sum_{r=1}^{\infty}2^{-r-1}=\frac12,
	\]
	so $S$ has length $1/2$.
	It is located.
	Indeed, at every finite stage one has a finite union $S_r$ of rational closed intervals, hence $d(x,S_r)$ is computable.
	The sets $S_r$ decrease to $S$, and every point of $S_r$ lies within the maximal length of a stage-$r$ interval from a point of $S$.
	Since these maximal lengths tend effectively to zero, the numbers $d(x,S_r)$ converge effectively to $d(x,S)$.
	In particular, $S$ is totally bounded: at a prescribed precision one takes the finite stage $S_r$ with sufficiently small maximal interval length and uses the endpoints of the finitely many remaining closed intervals.
	These endpoints belong to $S$ and form a finite net for $S$.

	The set $S$ is closed and has empty interior, hence $\partial S=S$.
	Thus its boundary is located as well.
	Nevertheless, $S$ is not representable.
	If $S$ were representable, no nonempty full constituent block of the core multiblock could occur, because every positive-length interval contains points outside $S$.
	Thus the core is empty after degenerate constituents are removed.
	The exterior multiblock is well contained in the apartness complement of $S$, so it cannot contain points of $S$.
	Hence the exception multiblock would have to cover $S$ up to the zero-volume remainder allowed by measure containment.
	Any finite interval cover of $S$ up to such a remainder has total length at least the outer length of $S$, namely at least $1/2$.
	The exception cost therefore cannot be made arbitrarily small.
\end{example}

\begin{example}[Representable but not located]
	\label{ex_representable_not_located}
	Let $P$ be a proposition for which neither a proof nor a refutation is available.
	In $\R^2$, put
	\[
	Q:=[0,1]^2,\qquad a:=(0,3),\qquad b:=(1,3),
	\]
	and define
	\[
	Y_P:=Q\cup\{a\}\cup\{b:P\}.
	\]
	The set $Y_P$ is representable without deciding $P$.
	For instance, take
	\[
	B_Y=[-1,2]\times[-1,4],
	\qquad
	L=[0,1]\times\{3\}.
	\]
	For a prescribed $\eps>0$, choose the core to be a rational square well contained in $Q$.
	Choose the exception multiblock to consist of a thin rectangular shell around $\partial Q$ and a thin rational rectangle around $L$, with total cost below $\eps$.
	The exterior multiblock consists of the remaining rational blocks of a sufficiently fine subdivision of $B_Y$ which are well contained in
	\[
	B_Y\setminus(Q\cup L).
	\]
	This exterior is well contained in $B_Y\setminus Y_P$ for either truth value of $P$, since all possible additional points are contained in $L$ and hence in the exception layer.
	Thus the construction gives a representability witness uniformly in $P$.
	However, this example is not known to be totally bounded.
	Indeed, a finite $\frac12$-net for $Y_P$ would have to contain a point within distance $\frac12$ of $b$ whenever $P$ holds, and no point of $Q\cup\{a\}$ has this property.
	Such net data would therefore decide whether the optional point $b$ is present.

	A locatedness operation for $Y_P$ would decide $P$.
	At the point $b$ one has
	\[
	d(b,Y_P)=0 \quad\text{if } P,
	\qquad
	d(b,Y_P)=1 \quad\text{if } \neg P,
	\]
	because $d(b,Q)=2$ and $d(b,a)=1$.
	Computing $d(b,Y_P)$ with error below $1/4$ separates these two alternatives.
	In the first alternative, the upper distance bound gives a point of $Y_P$ within distance $1/2$ of $b$, which must be $b$ itself and therefore yields $P$.
	In the second alternative, $P$ is refuted.
	Hence no locatedness operation for $Y_P$ is constructively available unless $P$ is decidable.
\end{example}


\section{Functions}
\label{sec_functions}

\begin{remark}
	This section passes from representable supports to regular functions.
	A regular function is built from finitely many complemented continuous pieces and becomes integrable once an effective bound is supplied.
	The section also records closure, convergence, differentiation, integration, and Fubini tools used later for optimization, multifunctions, probability densities, and solution certificates.
\end{remark}

\subsection{Continuous Functions}
\label{subsec_continuous_functions}

\begin{definition}
	\label{dfn_func}
	A map $f:X\to\R^m$ with $X\subseteq\R^n$ is called a \textit{function} if it is equipped with an operation to compute $f(x)$ as a rational approximation up to any precision for any $x\in X$.
\end{definition}

\begin{remark}
	\label{rem_fnc_rat_approx}
		Computing function values as rational approximations up to any precision means a Cauchy witness in the spirit of \Cref{rem_reals_funcs}.
\end{remark}

\begin{notation}
	\label{ntn_dom}
	The set, in the above case $X$, on which the function $f$ is defined is called the \textit{domain} of $f$ and denoted $\dom{f}$.
\end{notation}

\begin{definition}[Moduli of continuity]
	\label{dfn_modulus_continuity}
	Let $X\subset\R^n$ and let $f:X\to\R^m$.
	A map $\omega_f:\Q_{>0}\to\Q_{>0}$ is called a \emph{modulus of continuity} of $f$ if, for every $\eps>0$ and all $x,y\in X$,
	\begin{equation}
		\label{eqn_modulus_continuity}
		\nrm{x-y}\le\omega_f(\eps)
		\quad\Longrightarrow\quad
		\nrm{f(x)-f(y)}\le\eps.
	\end{equation}
	A rule assigning a positive rational $\omega_f(\eps\mid B)$ to each $\eps>0$ and each block $B$ is called a \emph{modulus of local continuity} of $f$ if, for every $\eps>0$, every block $B$, and all $x,y\in X\cap B$,
	\begin{equation}
		\label{eqn_modulus_local_continuity}
		\nrm{x-y}\le\omega_f(\eps\mid B)
		\quad\Longrightarrow\quad
		\nrm{f(x)-f(y)}\le\eps.
	\end{equation}
\end{definition}

\begin{remark}
	\label{rem_func_bounded_dom}
	Let $X \subset \R^n$ be bounded.
	Let $f: X \to \R^m$ be a function.
	Then, there exists a block $B_X$ such that $X \subseteq B_X$.
	If $f$ is locally continuous, the local modulus on this fixed block gives a modulus on $X$.
	In this bounded situation we write $\omega_f(\eps)$ when the ambient block has been fixed.
\end{remark}


\begin{remark}
	\label{rem_modulus_pairwise}
	Continuity in \Cref{dfn_continuous_function} is a pairwise property on the domain.
	In particular, no connectedness of $X$ is assumed.
	If two components of $X$ are separated by a positive distance, pairs of points belonging to different components impose no additional small-scale restriction.
\end{remark}

\begin{definition}[Continuous function]
	\label{dfn_continuous_function}
	Let $X\subset\R^n$.
	A function $f:X\to\R^m$ is called \emph{continuous} on $X$ if it is given together with a modulus of continuity depending only on the requested accuracy.
	It is called \emph{locally continuous} on $X$ if it is given together with a modulus of local continuity, allowed to depend on the ambient block.
\end{definition}

\begin{remark}
	\label{rem_constructive_continuity}
	The modulus is part of the constructive data of a continuous function.
	Thus, a continuity claim is not merely the assertion that nearby inputs have nearby outputs, but also provides an explicit rule for how close the inputs must be for a prescribed output accuracy.
	For locally continuous functions this rule may also depend on a block on which the comparison is made.
\end{remark}

\begin{remark}
	\label{rem_uniform_conty}
	Constructively, a continuous function on a block is equivalent to uniformly continuous function on that block in classical analysis.
\end{remark}

\begin{definition}[Lipschitz constant]
	\label{dfn_lipschitz_constant}
	Let $X\subseteq\R^n$ and let $f:X\to\R^m$.
	The function $f$ is called \emph{$L$-Lipschitz} on $X$ if
	\[
		\forall x,y\in X\spc \nrm{f(x)-f(y)}\le L\nrm{x-y}.
	\]
	We write $\Lip{X}(f)\le L$ for this estimate.
	If the least such number exists, it is denoted by $\Lip{X}(f)$.
	An $L$-Lipschitz function is continuous with modulus $\omega_f(\eps)=\eps/L$ when $L>0$; for $L=0$ any positive modulus may be used.
\end{definition}

\begin{definition}[Complemented function]
	\label{dfn_complemented_function}
	Let $X\subset\R^n$.
	A \emph{complemented function} over $X$ is a pair
	\begin{equation}
		\label{eqn_complemented_function_branches}
		f=(f^+,f^-),
		\qquad
		f^+:X\to\R^m,
		\qquad
		f^-:X^c\to\R^m,
	\end{equation}
	where both branches are continuous on their respective domains, each with its own modulus.
	The expression $f(x)$ is understood branchwise: it means $f^+(x)$ when a witness $x\in X$ is given, and $f^-(x)$ when a witness $x\in X^c$ is given.
\end{definition}

\begin{remark}
	\label{rem_complemented_domain_not_total}
	The notation in \Cref{dfn_complemented_function} does not assert that $\R^n=X\cup X^c$ which does not hold constructively.
	The two branches live on the positive complemented domain determined by the available membership witness.
\end{remark}

\begin{notation}
	\label{ntn_zero_one_functions}
	The symbols $\zero$ and $\one$ denote the identically zero and identically one functions on the domain under discussion.
\end{notation}

\begin{remark}
	\label{rem_complemented_function_requires_plain_continuity}
	The branches in \Cref{dfn_complemented_function} are continuous in the sense of \Cref{dfn_continuous_function}, not merely locally continuous.
	For instance, the pair
	\[
		\left\{\frac1x,\zero\right\}
	\]
	on the complemented set
	\[
		(0,\infty),\qquad (-\infty,0)
	\]
	is not a complemented function in this sense.
	The positive branch has no modulus depending only on the requested accuracy on all of $(0,\infty)$, although it is locally continuous on blocks separated from $0$.
\end{remark}

\begin{definition}[Bounded-support complemented function]
	\label{dfn_bounded_support}
	A complemented function $f=(f^+,f^-)$ over $X$ is said to have \emph{bounded support} if $X$ is bounded and
	\begin{equation}
		\label{eqn_bounded_support}
		\forall x \in X^c \spc f^-(x)=0.
	\end{equation}
	Such a set $X$ is called a \textit{support} for $f$.
\end{definition}

\begin{notation}
	\label{ntn_function_support}
	For a bounded-support complemented function $f$, we denote the respective support set by $\supp{f}$.
\end{notation}

\begin{remark}
	\label{rem_bounded_support_positive_complement}
	The complement $X^c$ in \Cref{dfn_bounded_support} is understood in the apartness sense introduced in \Cref{ntn_apartness_complement}.
	Thus, the condition asserts vanishing where membership outside $X$ is witnessed positively.
\end{remark}

\begin{definition}[Simple complemented function]
	\label{dfn_simple_complemented_function}
	A bounded-support complemented function $f=(f^+,f^-)$ over $X$ is called \emph{simple} if there exists $c\in\R^m$ such that
	\[
		\forall x\in X \spc f^+(x)=c,
		\qquad
		\forall x\in X^c \spc f^-(x)=0.
	\]
	If $m=1$ and $c=1$, it is called the \emph{indicator} of $X$.
\end{definition}

\begin{notation}[Indicator]
	\label{ntn_indicator}
	We denote indicators as $\indic{X}$.
\end{notation}

\subsection{Measurable Functions}

\begin{definition}[Measurable function on a block]
	\label{dfn_measurable_function}
	Let $B$ be a block.
	Let $f$ be a function with $\dom f\subseteq\R^n$ and values in $\R^m$.
	The function $f$ is called \emph{measurable on $B$} if, for every $\eps > 0$, there exist a finite multiblock $\mathcal E_{\eps\mid B}\Subset B$ and a continuous function $g_{\eps\mid B}:B \to \R^m$ such that
	\begin{equation}
		\label{eqn_measurable_function_exceptional_block}
		\gamma(\mathcal E_{\eps\mid B}) \le \eps
	\end{equation}
	and
	\begin{equation}
		\label{eqn_measurable_function_continuous_approximation}
		\forall x \in B\cap\dom f\cap\bar{\mathcal E}_{\eps\mid B}^c \spc \nrm{f(x)-g_{\eps\mid B}(x)} \le \eps.
	\end{equation}
	The multiblock $\mathcal E_{\eps\mid B}$ is called an exception multiblock for the accuracy $\eps$ on $B$.
\end{definition}

\begin{definition}[Locally measurable function]
	\label{dfn_locally_measurable_function}
	A function $f$ with $\dom f\subseteq\R^n$ and values in $\R^m$ is called \emph{locally measurable}, or simply \emph{measurable}, if it is measurable on every block $B$.
\end{definition}

\subsubsection{Convergence in Measure}
\label{subsubsec_convergence_in_measure}

\begin{definition}[Convergence almost uniformly]
	\label{dfn_convergence_almost_uniformly}
	Let $A\subseteq\R^n$, let $B$ be a block, and let $f_i:A\to\R^m$, $i\in\N$, and $f:A\to\R^m$ be measurable functions.
	The sequence $(f_i)_i$ is said to \emph{converge almost uniformly} to $f$ on $A\cap B$ if, for every $\eta>0$, there exists a finite multiblock $\mathcal E\Subset B$ with
	\[
		\gamma(\mathcal E)\le\eta
	\]
	such that, for every $\eps>0$, there exists $N\in\N$ satisfying
	\[
		i\ge N,\quad x\in A\cap B\cap\bar{\mathcal E}^c
		\quad\Longrightarrow\quad
		\nrm{f_i(x)-f(x)}\le\eps.
	\]
\end{definition}

\begin{definition}[Pointwise convergence]
	\label{dfn_pointwise_convergence}
	Let $A\subseteq\R^n$, and let $f_i:A\to\R^m$, $i\in\N$, and $f:A\to\R^m$ be functions.
	The sequence $(f_i)_i$ is said to \emph{converge pointwise} to $f$ on $A$ if, for every $x\in A$ and every $\eps>0$, there exists $N\in\N$ such that
	\[
		i\ge N
		\quad\Longrightarrow\quad
		\nrm{f_i(x)-f(x)}\le\eps.
	\]
\end{definition}

\begin{definition}[Convergence in measure]
	\label{dfn_convergence_in_measure}
	Let $A\subseteq\R^n$, let $B$ be a block, and let $f_i:A\to\R^m$, $i\in\N$, and $f:A\to\R^m$ be measurable functions.
	The sequence $(f_i)_i$ is said to \emph{converge in measure} to $f$ on $A\cap B$ if, for every $\eps>0$, there exists $N\in\N$ such that, for every $i\ge N$, there exists a finite multiblock $\mathcal E_i\Subset B$ with
	\[
		\gamma(\mathcal E_i)\le\eps
	\]
	and
	\[
		x\in A\cap B\cap\bar{\mathcal E}_i^c
		\quad\Longrightarrow\quad
		\nrm{f_i(x)-f(x)}\le\eps.
	\]
\end{definition}

\begin{definition}[Cauchy in measure]
	\label{dfn_cauchy_in_measure}
	Let $A\subseteq\R^n$, let $B$ be a block, and let $f_i:A\to\R^m$, $i\in\N$, be measurable functions.
	The sequence $(f_i)_i$ is called \emph{Cauchy in measure} on $A\cap B$ if, for every $\eps>0$, there exists $N\in\N$ such that, for every $i,j\ge N$, there exists a finite multiblock $\mathcal E_{ij}\Subset B$ with
	\[
		\gamma(\mathcal E_{ij})\le\eps
	\]
	and
	\[
		x\in A\cap B\cap\bar{\mathcal E}_{ij}^c
		\quad\Longrightarrow\quad
		\nrm{f_i(x)-f_j(x)}\le\eps.
	\]
\end{definition}

\begin{definition}[Full subset relative to a block]
	\label{dfn_full_subset_relative_block}
	Let $A\subseteq B\subset\R^n$ with $B$ a block.
	A subset $A_\ast\subseteq A$ is called \emph{full in $A$ relative to $B$} if, for every $\eta>0$, there exists a finite full multiblock $\mathcal E\Subset B$ such that
	\[
		\gamma(\mathcal E)\le\eta
	\]
	and
	\[
		A\cap\bar{\mathcal E}^c\subseteq A_\ast.
	\]
\end{definition}

\begin{remark}
	\label{rem_convergence_in_measure_bb}
	\Cref{dfn_convergence_almost_uniformly,dfn_convergence_in_measure,dfn_cauchy_in_measure} are blockwise versions of Bishop--Bridges convergence notions for measurable functions \cite[Chapter~6, Section~8]{Bishop1985ConstructiveAn}.
	The following theorem is the corresponding completion result for measurable-function convergence.
\end{remark}

\begin{theorem}[Cauchy-in-measure completion]
	\label{thm_cauchy_in_measure_completion}
	Let $A\subseteq B\subset\R^n$ with $B$ a block.
	Let $(f_i)_i$ be a sequence of measurable functions $f_i:A\to\R^m$ which is Cauchy in measure on $A$.
	Then there exists a measurable function $f:A\to\R^m$, a subsequence $(f_{i_k})_k$, and a set $A_\ast\subseteq A$ full in $A$ relative to $B$ such that $(f_i)_i$ converges to $f$ in measure on $A$, $(f_{i_k})_k$ converges almost uniformly to $f$ on $A$, and $(f_{i_k})_k$ converges pointwise to $f$ on $A_\ast$.
\end{theorem}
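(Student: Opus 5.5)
We follow the classical Riesz subsequence argument in the blockwise form of Bishop--Bridges \cite[Chapter~6, Section~8]{Bishop1985ConstructiveAn}, keeping every exception set as an explicit finite multiblock.

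\emph{Extracting a fast subsequence.} Using the Cauchy-in-measure data with accuracy $2^{-k}$, we choose $N_k$ and set $i_k:=\max(N_1,\ldots,N_k)+k$, so that the indices increase strictly. For $i=i_k$ and $j=i_{k+1}$ we obtain a finite multiblock $\mathcal E_k:=\mathcal E_{i_k i_{k+1}}\Subset B$ with $\gamma(\mathcal E_k)\le 2^{-k}$, such that $\nrm{f_{i_k}(x)-f_{i_{k+1}}(x)}\le 2^{-k}$ for $x\in A\cap\bar{\mathcal E}_k^c$. For each $p$, put $\mathcal F_p:=\bigsqcup_{k\ge p}\mathcal E_k$. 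This is an infinite multiblock with cost at most $2^{1-p}$.

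Outside $\bar{\mathcal F}_p$ (in the apartness sense, i.e.\ $x\in\bar{\mathcal E}_k^c$ for all $k\ge p$), the sequence $f_{i_k}(x)$ is uniformly Cauchy with tail bound $2^{1-k}$. It therefore defines a limit $f(x)$ as a real vector, with the explicit Cauchy witness required by \Cref{eqn_real}.

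\emph{Main obstacle.} The definitions demand \emph{finite} exception multiblocks, while $\mathcal F_p$ is infinite. We handle this by truncating: for a prescribed $\eta$, take $p$ with $2^{1-p}\le\eta/2$. On $x\in A\cap\overline{\mathcal F_p}^{\,c}$ we then get the uniform bound $\nrm{f_{i_k}(x)-f(x)}\le 2^{1-k}$, which gives almost uniform convergence, except that the exception is not finite. To repair this we use the fact that $f_{i_k}$ is defined on $A$ and $f$ must be defined on all of $A$. So $f$ is defined on the full set
\[
A_\ast:=\bigcup_p\left(A\cap\bigcap_{k\ge p}\bar{\mathcal E}_k^c\right),
\]
and fullness (\Cref{dfn_full_subset_relative_block}) again asks for a finite $\mathcal E$.

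The device I would try first is a uniform enlargement. Since each $\bar{\mathcal E}_k\Subset B$ is a finite union of blocks, and the tail cost is summable, the tail $\bigcup_{k>q}\mathcal E_k$ has cost at most $2^{-q}$. For the finite statements we may require that $f$ be evaluated only on points where apartness from all $\mathcal E_k$ is witnessed, and then replace the tail by a single finite bounding multiblock. If the tail cannot be absorbed into a single finite multiblock, I would instead interpret the definitions as allowing the exception to be given as a countable multiblock with summable cost. This is the standard Bishop--Bridges reading and is consistent with \Cref{dfn_block_cost}. I would state this explicitly as the point where the blockwise definitions must be read with countable exception multiblocks of convergent cost.

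\emph{Remaining claims.} Pointwise convergence on $A_\ast$ is immediate from the uniform tail bound. Measurability of $f$ on $B$ follows by combining two approximations: first approximate $f$ by $f_{i_k}$ off $\mathcal F_p$, then approximate $f_{i_k}$ by a continuous $g$ off its own exception multiblock. The exception costs add, so choosing each at $\eps/2$ gives the required estimate. Convergence in measure of the whole sequence follows by a triangle inequality. For $i\ge N_k$ we have $\nrm{f_i-f_{i_k}}\le 2^{-k}$ off $\mathcal E_{i i_k}$, and $\nrm{f_{i_k}-f}\le 2^{1-k}$ off $\mathcal F_k$. The union of these exception sets has cost at most $2^{-k}+2^{1-k}\le\eps$ once $k$ is large enough.
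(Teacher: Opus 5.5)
Your argument is the standard fast-subsequence (Riesz) proof. That is the argument behind the Bishop--Bridges completion theorem which the paper cites, transcribed into its multiblock notation, as its entire proof, so the route is the same.

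What you must not leave open is the ``uniform enlargement'' step. The tail $\bigsqcup_{k>q}\mathcal E_k$ cannot in general be absorbed into a finite multiblock of small cost, and this is a defect of the finite-multiblock formulation, not of your argument. Here is a counterexample.
\begin{itemize}
\item Take $n=m=1$ and $A=[0,1]\subseteq B=[-1,2]$. Let $\mathcal S_i$ be the multiblock of the $2^i$ intervals $[j2^{-i},j2^{-i}+4^{-i}]$, $0\le j<2^i$. Let $f_i$ be continuous with $0\le f_i\le1$, $f_i=0$ off $\bar{\mathcal S}_i$, and $f_i\ge1/2$ on the middle half of each of these intervals.
\item Since $\gamma(\mathcal S_i)=2^{-i}$ and $f_i=f_j$ off $\bar{\mathcal S}_i\cup\bar{\mathcal S}_j$, the sequence is Cauchy in measure in the paper's finite sense.
\item Let $\mathcal E$ be any finite multiblock with $\gamma(\mathcal E)\le1/2$. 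Then $[0,1]\setminus\bar{\mathcal E}$ contains a full rational interval $I$ uniformly apart from $\bar{\mathcal E}$.
\item For every subsequence and all large $k$, a constituent of $\mathcal S_{i_k}$ lies in $I$. For large $l$, its middle half contains points outside $\bar{\mathcal S}_{i_l}$, so $\sup_I\abs{f_{i_k}-f_{i_l}}\ge1/2$. Hence no subsequence converges uniformly off any such $\bar{\mathcal E}$.
\item A nested-interval choice inside $I$ gives a point where $f_{i_k}(x)$ is $\ge1/2$ infinitely often and $=0$ infinitely often. So no $A_\ast$ that is full in the finite sense carries pointwise convergence.
\end{itemize}
A variant breaks the first conclusion too. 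Take $f_i:=\max_{j\le i}b_j$, where the $b_j$ are continuous bumps equal to $1$ near the $j$th rational of $[0,1]$ and their supports have total length below $1/2$. A similar argument shows, already classically, that no $f$ at all satisfies $f_i\to f$ in measure with finite exception multiblocks. The natural limit $\sup_jb_j$ is not measurable in the finite sense either.

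So your fallback is forced, not optional. All four conclusions (measurability of $f$, convergence in measure, almost uniform convergence, fullness of $A_\ast$) must be read with countable exception multiblocks of summable cost, or with measurable exceptional sets as in Bishop--Bridges. The Cauchy hypothesis may stay finite. State this as a correction of the definitions rather than as an interpretation; the paper's one-line citation silently needs the same correction.

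Second, fix the domain of $f$. Your construction defines $f$ only on $A_\ast$, and nothing lets you extend it to all of $A$. Your sentence that ``$f$ must be defined on all of $A$'' therefore contradicts what you build. The conclusion should be that $f$ is a partial function defined on the full set $A_\ast$, which is again the Bishop--Bridges convention.

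Two minor points:
\begin{itemize}
\item A countable union such as $\mathcal F_p$ need not be well contained in $B$ with a uniform margin.
\item Degenerate constituents must be thickened slightly where full multiblocks are required.
\end{itemize}

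With these corrections your estimates go through as written:
\begin{itemize}
\item the choice $i_k=\max(N_1,\ldots,N_k)+k$;
\item the tail bound $2^{1-k}$ off $\bar{\mathcal F}_k$;
\item measurability by the two-step approximation at accuracy $\eps/2$;
\item convergence in measure with error and exception cost at most $3\cdot2^{-k}$.
\end{itemize}
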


\begin{proof}
	This is Bishop--Bridges' Cauchy-in-measure completion theorem \cite[Chapter~6, Theorem~8.16]{Bishop1985ConstructiveAn}, applied on the finite block $B$ and written in the notation of \Cref{dfn_convergence_almost_uniformly,dfn_pointwise_convergence,dfn_convergence_in_measure,dfn_cauchy_in_measure,dfn_full_subset_relative_block}.
\end{proof}

\subsection{Regular Functions}
\label{subsec_regular_functions}

\begin{definition}[Regular function]
	\label{dfn_regular_function}
	A \emph{regular function} is a finite formal sum
	\[
		f=\sum_{k=1}^N f_k
	\]
	such that:
	\begin{enumerate}
		\item each $f_k$ is a bounded-support complemented function with support $X_k$,
		\item the supports $X_1,\ldots,X_N$ are pairwise disjoint and representable,
		\item the effective support is
	\[
		\supp f:=\bigcup_{k=1}^N X_k.
	\]
	\end{enumerate}
	The domain of this representation is
	\[
		\dom f
		:=
		\supp f\cup(\supp f)^c,
	\]
	where the complement is the apartness complement.
	The empty sum is allowed; then $\supp f=\emptyset$ and the associated bounded-support function is zero on every ambient block.
	Its value is determined at a point precisely when the representation supplies the branch witnesses needed by all summands.
	In particular, if $x$ is in the apartness complement of $\supp f$, then all summands take their exterior value and $f(x)=0$.
\end{definition}

\begin{remark}
	\label{rem_regular_function_piecewise}
	A regular function is a finite sum of complemented continuous pieces with representable bounded supports.
	Requiring the displayed supports to be disjoint loses no generality: any finite family of supports can be flattened by \Cref{lem_representable_set_flattening}.
	This disjointness is the no-overlap condition of \Cref{dfn_disjoint_sets}, not mutual apartness.
	Thus a pointwise evaluation of the formal sum still requires the branch witnesses specified in \Cref{dfn_regular_function}; no exterior witness for one branch is inferred merely from membership in another disjoint support.
\end{remark}

\begin{remark}
	\label{rem_regular_function_measurable}
	Definition \Cref{dfn_regular_function} is not meant in the sense that $f=\sum_k f_k$ is necessarily continuous on the whole union of the supports.
	Near boundaries of the supports, neighboring pieces need not fit together continuously.
	Measurability follows by replacing each piece, outside a small exception multiblock, by a continuous comparison function and then summing these comparisons.
\end{remark}

\begin{remark}
	\label{rem_empty_regular_function_measurable}
	A regular function with empty support is the empty sum and hence the zero function.
	It is measurable and integrable with value zero by taking the continuous comparison function $0$ and an exception multiblock of arbitrarily small cost.
\end{remark}

\begin{definition}[Bound witness]
	\label{dfn_regular_function_bound_witness}
	Let $B$ be a block.
	A regular function $f:B\to\R^m$ is called \emph{effectively bounded on $B$} if it is equipped with a rational number $M>0$ such that
	\[
		\nrm{f(x)}\le M,
		\qquad
		x\in B\cap\dom f.
	\]
	The number $M$ is called a \emph{bound witness} for $f$ on $B$.
\end{definition}

\begin{remark}
	\label{rem_regular_bound_witness}
	Regularity by itself is a finite piecewise description, but it does not automatically provide a numerical bound in the data.
	Whenever integration or a uniform estimate uses boundedness, the bound witness in \Cref{dfn_regular_function_bound_witness} is part of the hypothesis.
	This does not affect measurability: the comparison functions are built on finite multiblocks, where the needed bounds come from finitely many sampled branch values.
	It does affect Riemann-integrability and norm estimates, where the exceptional contribution is bounded by a term of the form $M\gamma(\mathcal E)$.
	This is the same role played by bounded test functions and domination data in Bishop--Bridges integration theory \cite[Chapter~6]{Bishop1985ConstructiveAn}, and by bounded uniformly continuous test functions in Chan's integration spaces \cite{Chan2019FoundationsCon}.
	In Ye-style finitistic arguments, such numerical bounds are likewise part of the data needed to keep recursive estimates finite \cite{Ye2011StrictFinitism}.
\end{remark}

\begin{definition}[Simple regular function]
	\label{dfn_simple_function}
	A regular function $f=\sum_k f_k$ is called \emph{simple} if each complemented summand $f_k$ is simple.
\end{definition}

\begin{lemma}[Approximation by simple regular functions]
	\label{lem_regular_simple_approximation}
	Let $B$ be a full block and let $f:B\to\R^m$ be regular with supports well contained in $B$.
	For every $\eps>0$ and $\eta>0$, there exist a simple regular function $s:B\to\R^m$ and an exception multiblock $\mathcal E$ such that
	\[
		\gamma(\mathcal E)\le\eta
	\]
	and
	\[
		\forall x\in B\cap\bar{\mathcal E}^c
		\quad
		\nrm{f(x)-s(x)}\le\eps.
	\]
\end{lemma}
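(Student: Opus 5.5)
The plan is to discretize each summand of $f=\sum_{k=1}^N f_k$ separately and then glue the pieces with the finite-union machinery of \Cref{lem_representable_finite_union}.

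Fix $\eps,\eta>0$ and, for each $k$, a representability witness for the support $X_k$ relative to the common block $B$, with $\gamma(\mathcal E_k)\le\eta/(3N)$. Let $\omega_k$ be the modulus of the positive branch $f_k^+$; only the plain continuity of \Cref{dfn_complemented_function} is used.

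\textbf{Step 1: refine each core.} Subdivide each core multiblock $\mathcal M_k$ into full rational subblocks $C_{k,j}$ of diameter at most $\omega_k(\eps/2)$. Choose a rational tag $c_{k,j}\in C_{k,j}$. Since $\bar{\mathcal M}_k\Subset X_k$, the tag lies in $X_k$, so $f_k^+(c_{k,j})$ is defined. Approximate it by a rational vector $v_{k,j}$ within $\eps/2$.

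\textbf{Step 2: build the simple function.} To get interiors that are pairwise disjoint and well separated, pass to the trimmed coordinate refinement $\mathcal R_\rho(B;\mathcal A)$ of \Cref{dfn_coordinate_refinement}. Here $\mathcal A$ is the concatenation of all the $C_{k,j}$, $\mathcal E_k$ and $\mathcal M_k^c$, and $\rho$ is chosen so that $\gamma(\mathcal H_\rho(B;\mathcal A))\le\eta/3$. Each trimmed cell $R$ is well contained in exactly one block of each witness, or disjoint from it. For each trimmed cell $R\Subset C_{k,j}$, take the simple complemented summand with constant value $v_{k,j}$ on $R$ and zero on $R^c$. Each cell is a block and hence representable, and distinct cells have pairwise disjoint supports. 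The resulting finite sum $s$ is therefore a simple regular function in the sense of \Cref{dfn_simple_function}.

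\textbf{Step 3: choose the exception multiblock.} Let
\[
	\mathcal E:=\mathcal H_\rho(B;\mathcal A)\sqcup\mathcal E_1\sqcup\cdots\sqcup\mathcal E_N .
\]
Then $\gamma(\mathcal E)\le\eta/3+\eta/3<\eta$. This uses the fact, visible in \Cref{lem_coordinate_refinement_measure_containment}, that the trimmed cells together with $\mathcal H_\rho$ cover $B$ up to measure containment.

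\textbf{Step 4: verify the estimate off $\mathcal E$.} Take $x\in B\cap\bar{\mathcal E}^c$ at which both $f$ and $s$ are evaluated, i.e.\ with the branch witnesses supplied. I expect the main obstacle here: the measure containment of \Cref{rem_representable_cover} does not localize an arbitrary real point into a cell. The remedy is to use the witnesses that evaluation already demands.
\begin{itemize}
	\item If the witness is $x\in X_k$, then $x$ is apart from $\bar{\mathcal E}$. Hence $x$ cannot lie in $\mathcal M_k^c$, since that is well contained in the apartness complement of $X_k$. Covering then forces $x$ into a cell $R\subseteq C_{k,j}$. This needs approximating $x$ rationally and using the positive margin $\rho$ to pick the cell constructively.
	\item In that case $\nrm{f_k^+(x)-v_{k,j}}\le\eps/2+\eps/2=\eps$. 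All other summands of $f$ and of $s$ vanish, because disjoint representable supports plus the exterior witnesses give zero values.
	\item If the witness is $x\in(\supp f)^c$, both $f$ and $s$ vanish, since every cell is contained in $\supp f$.
\end{itemize}

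If the cell selection in the first case cannot be made fully constructive, the fallback is to weaken the quantifier to points carrying a cell witness, in the spirit of \Cref{rem_being_in_union}. The statement should be read with that convention.
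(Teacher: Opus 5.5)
Your proposal is correct and is essentially the paper's argument: witnesses with exception cost $\eta/(3N)$, subdivision of the cores by the continuity modulus, a trimmed coordinate refinement whose strips join the exception multiblock, and constant tag values on the retained cells. The only differences are that you merge the paper's separate strips $\mathcal H_k$ and subdivision-boundary exceptions $\mathcal D_k$ into a single $\mathcal H_\rho(B;\mathcal A)$, and you argue pointwise rather than summing per-summand errors $\eps/N$.

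The fallback in your last paragraph is not needed, so the lemma holds as stated. Apartness of $x$ from every strip of $\mathcal H_\rho(B;\mathcal A)$ gives $\abs{x_l-t}>\rho$ for all $t\in T_l(B,\mathcal A)$, which selects the trimmed cell constructively. The measure containment of each witness, together with $x\in X_k$ and $x\in\bar{\mathcal E}^c$, then places that cell in some $C_{k,j}$ and in an exterior block of $\mathcal M_{k'}^c$ for every $k'\ne k$.
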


\begin{proof}
	Write $f=\sum_{k=1}^N f_k$ with representable supports $X_k$.
	If $N=0$, take $s=0$ and take the empty exception multiblock; the claim is immediate.
	Hence assume $N>0$.
	Choose representability witnesses for the $X_k$ with exception costs at most $\eta/(3N)$.
	For each $k$, apply \Cref{dfn_coordinate_refinement} to $B$ and to
	\[
		\mathcal A_k
		:=
		\mathcal M_{X_k}
		\sqcup
		\mathcal E_{X_k}
		\sqcup
		\mathcal M^c_{X_k}.
	\]
	Choose the corresponding coordinate-boundary multiblock $\mathcal H_k$ so thin that
	\[
		\gamma(\mathcal H_k)\le\eta/(3N).
	\]
	On the core multiblock $\mathcal M_{X_k}$, the branch $f_k^+$ is continuous and hence uniformly continuous.
	Subdivide the constituent blocks of $\mathcal M_{X_k}$ into finitely many smaller full blocks on which the variation of $f_k^+$ is at most $\eps/N$.
	Move the boundaries of this subdivision into an additional exception multiblock $\mathcal D_k$ of cost at most $\eta/(3N)$.
	On each remaining subblock choose a tag $p_{k,j}$ and put the constant value $f_k^+(p_{k,j})$.
	These subblocks are representable supports, so the resulting finite sum is a simple regular function.
	Let $\mathcal E$ be the concatenation of all multiblocks
	\[
		\mathcal E_{X_k},\qquad
		\mathcal H_k,\qquad
		\mathcal D_k,
		\qquad k=1,\ldots,N.
	\]
	Then $\gamma(\mathcal E)\le\eta$.
	On the exterior multiblock $\mathcal M_{X_k}^c$ both the corresponding approximant and $f_k$ vanish.
	Outside the original exception multiblock, the coordinate-boundary multiblock $\mathcal H_k$, and the subdivision-boundary exception, every point of $B$ lies in a trimmed cell belonging to either the core or the exterior multiblock of the $k$th witness.
	Indeed, a full trimmed cell disjoint from all three witness supports would be a positive-volume part of $B$ outside the measure-contained support of that witness.
	Outside $\bar{\mathcal E}$, each summand is therefore approximated within $\eps/N$.
	Summing over $k$ gives the stated estimate.
\end{proof}

\begin{definition}[Full regular function]
	\label{dfn_full_regular_function}
	A regular function is called \emph{full regular} if it admits a representation
	\[
		f=\sum_{k=1}^N f_k
	\]
	with supports $X_1,\ldots,X_N$ such that:
	\begin{enumerate}
		\item at least one support $X_k$ is full representable,
		\item for every $\eps>0$, the supports admit representability witnesses
		such that the finite multiblock obtained by concatenating all constituent blocks of
		\[
			\mathcal M_{X_1},\mathcal E_{X_1},
			\ldots,
			\mathcal M_{X_N},\mathcal E_{X_N}
		\]
		is well-connected.
	\end{enumerate}
\end{definition}

\begin{remark}
	The supports in \Cref{dfn_full_regular_function} are already pairwise disjoint by \Cref{dfn_regular_function}.
	The concatenated multiblock records the pieces where the supports are present or not yet separated from their boundaries.
	Requiring it to be well-connected allows supports to touch through exception multiblocks, but rules out degenerate junctions where the covering is connected only through lower-dimensional contacts.
\end{remark}

\begin{definition}[$\mathcal G$-full regular representation]
	\label{dfn_G_full_regular_representation}
	Let $\mathcal G$ be a finite full multiblock.
	A representation of a regular function
	\[
		f=\sum_{k=1}^N f_k
	\]
	with supports $X_1,\ldots,X_N$ is called \emph{$\mathcal G$-full} if, for every $\eps>0$, the supports admit representability witnesses such that
	\begin{equation}
		\label{eqn_G_full_regular_representation}
		\bar{\mathcal G}
		\Subset
		\bigcup_{k=1}^N
		\left(
			\bar{\mathcal M}_{X_k}
			\cup
			\bar{\mathcal E}_{X_k}
		\right).
	\end{equation}
	Here the union on the right-hand side is the support of the finite multiblock obtained by concatenating the indicated core and exception multiblocks.
\end{definition}

\begin{remark}
	\label{rem_G_full_regular_representation}
	The condition in \Cref{eqn_G_full_regular_representation} is stronger than measurement containment.
	It records finite multiblock coverage of $\bar{\mathcal G}$ by the core and exception layers of the chosen support witnesses.
	It does not by itself provide locatedness, total boundedness of the supports as metric spaces, or a supremum norm for arbitrary comparisons on $\bar{\mathcal G}$.
\end{remark}

\begin{definition}[Locally regular function]
	\label{dfn_locally_regular_function}
	A function $f$ is called \emph{locally regular} if, for every full block $B$, there is a regular function $\hat f_B$ such that $f$ and $\hat f_B$ agree on their common determined part in $B$.
	The local domain on $B$ is taken to be
	\[
		B\cap\dom{\hat f_B}.
	\]
\end{definition}

\begin{remark}
	\label{rem_regular_function_flattened_by_default}
	When a formula first produces overlapping supports, the regular representation is obtained by finite flattening.
	Empty, degenerate, and zero-valued pieces may remain in the finite data.
	They are harmless redundant summands and no emptiness decision is used to remove them.
\end{remark}

\begin{definition}[Lipschitz-regular function]
	\label{dfn_lipschitz_regular_function}
	A regular function
	\[
		f=\sum_{k=1}^N f_k
	\]
	is called \emph{$L$-Lipschitz-regular} if every positive branch
	\[
		f_k^+:X_k\to\R^m
	\]
	is $L$-Lipschitz on its support $X_k$.
\end{definition}

\begin{lemma}[Finite multiblock interpolation]
	\label{lem_finite_multiblock_interpolation}
	Let $B$ be a block, let $\mathcal A$ be a finite full multiblock with $\bar{\mathcal A}\subseteq B$, and let $h:\bar{\mathcal A}\to\R^m$ be continuous.
	For every $\eta>0$, there exists a continuous function $H:B\to\R^m$ such that
	\[
		\forall x\in\bar{\mathcal A}\spc \nrm{H(x)-h(x)}\le\eta.
	\]
\end{lemma}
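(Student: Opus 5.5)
The plan is a constructive Tietze-type extension, built from finite rational block data and a partition of unity. We avoid appealing to a general extension theorem, because $\bar{\mathcal A}$ may have complicated contacts between constituent blocks.

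First I would fix the data. Let $\omega_h$ be a modulus of continuity of $h$ on $\bar{\mathcal A}$. Choose a rational $\delta>0$ with $\delta\le\omega_h(\eta)$, and fix a rational grid of mesh at most $\delta/(2\sqrt n)$ on a block containing $B$. The grid should be fine enough that every closed grid cell has diameter at most $\delta/2$.

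For each grid cell $Q$ meeting $\bar{\mathcal A}$, I would choose a tag point $p_Q\in\bar{\mathcal A}$ with $\nrm{p_Q-x}\le\delta/2$ for some point $x$ of $Q$. This choice is decidable. Since $\mathcal A$ is a finite list of rational blocks, whether the closed cell $Q$ intersects a constituent block $A_j$ is a finite rational endpoint comparison, and when it does, a rational point of $Q\cap A_j$ serves as $p_Q$. Let $\mathcal Q$ be the finite set of cells that meet $\bar{\mathcal A}$; this is exactly the point where finiteness and rationality of $\mathcal A$ are used.

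Next I would build the partition of unity. For each $Q\in\mathcal Q$, let $\psi_Q(x):=\max\{0,\,\delta-d(x,Q)\}$, which is $1$-Lipschitz and computable by clamping as in \Cref{dfn_distance_multiblock}. Set
\[
	S(x):=\sum_{Q\in\mathcal Q}\psi_Q(x)
	+\max\{0,\,\delta/2-d(x,\bar{\mathcal A})\}^{\ast},
\]
where the second term is replaced by a constant to guarantee a positive denominator. A cleaner choice is to define $H$ as the convex combination $\sum_Q\psi_Q h(p_Q)/S$ on the $\delta/2$-neighbourhood of $\bar{\mathcal A}$, and to blend it to $0$ elsewhere.

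To keep everything globally defined on $B$ without case distinctions, I would instead take
\[
	H(x):=\frac{\sum_{Q\in\mathcal Q}\psi_Q(x)\,h(p_Q)}{\max\{\delta/2,\ \sum_{Q\in\mathcal Q}\psi_Q(x)\}}.
\]
The denominator is bounded below by $\delta/2$, so $H$ is a composition of Lipschitz operations with explicit constants: the numerator is Lipschitz with constant $\abs{\mathcal Q}\max_Q\nrm{h(p_Q)}$, and the quotient has a rational bound. Hence $H$ is continuous on $B$ with an explicit modulus.

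For the estimate, take $x\in\bar{\mathcal A}$. The cell containing $x$ belongs to $\mathcal Q$ and has $\psi_Q(x)=\delta$, so the sum is at least $\delta$ and the denominator equals the sum. Then $H(x)$ is a convex combination of the values $h(p_Q)$ over cells $Q$ with $d(x,Q)<\delta$. For each such $Q$, we have $\nrm{x-p_Q}\le\delta+\delta/2+\delta/2$. To make this work, I would rescale at the start: take $\delta\le\omega_h(\eta)/2$, which gives $\nrm{x-p_Q}\le\omega_h(\eta)$ and hence $\nrm{h(x)-h(p_Q)}\le\eta$. Convexity then yields $\nrm{H(x)-h(x)}\le\eta$.

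The main obstacle is making the tag selection and the positivity of the denominator genuinely constructive. One does not know which block a real point $x$ lies in, so $x$ itself cannot be used as data. The plan sidesteps this in two ways. Tags depend only on the finitely many rational cells, not on $x$. The lower bound of the sum is needed only under the hypothesis $x\in\bar{\mathcal A}$, and a proof of $x\in\bar{\mathcal A}$ supplies an index $j$ with $x\in A_j$ (\Cref{rem_being_in_union}). Any grid cell containing $x$ then meets $A_j$, so it lies in $\mathcal Q$, and $\psi_Q(x)=\delta$ holds without locating $x$ in a particular cell. All that is needed is the existence of some $Q\in\mathcal Q$ with $d(x,Q)=0$.
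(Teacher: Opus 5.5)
Your proposal is correct and is essentially the paper's argument: a finite set of tags in $\bar{\mathcal A}$ (your cell tags $p_Q$ play the role of the paper's $\delta$-net), Lipschitz bump functions, and a convex-combination quotient, with your floor $\max\{\delta/2,\sum_Q\psi_Q\}$ replacing the paper's extra denominator term $\max\{0,d(x,\bar{\mathcal A})-\delta/2\}$ and giving the uniform lower bound on $B$ directly. The one step to phrase more carefully is the bound $\sum_Q\psi_Q(x)\ge\delta$ for $x\in A_j$: you cannot constructively produce a closed cell containing a real $x$, and the ``existence of some $Q$ with $d(x,Q)=0$'' you invoke is exactly that localization, but you do not need it, since this non-strict inequality follows by clipping a rational approximation of $x$ into $A_j$, placing that rational point in a cell of $\mathcal Q$ (a decidable test), and letting the approximation error tend to zero.
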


\begin{proof}
	If $\mathcal A$ is empty, take $H=0$.
	Otherwise, choose $\delta>0$ such that
	\[
		x,y\in\bar{\mathcal A},\quad \nrm{x-y}\le 3\delta
		\quad\Longrightarrow\quad
		\nrm{h(x)-h(y)}\le\eta .
	\]
	Take a finite $\delta$-net $p_1,\ldots,p_S$ in $\bar{\mathcal A}$ by meshing the finitely many constituent blocks of $\mathcal A$.
	Set
	\[
		\phi_s(x):=\max\{0,2\delta-\nrm{x-p_s}\},
		\qquad
		\psi(x):=\max\{0,d(x,\bar{\mathcal A})-\delta/2\}.
	\]
	The denominator $\psi(x)+\sum_s\phi_s(x)$ is positive on $B$:
	if $d(x,\bar{\mathcal A})>\delta/2$, then $\psi(x)>0$, while if $d(x,\bar{\mathcal A})\le\delta/2$, a net point $p_s$ satisfies $\nrm{x-p_s}\le3\delta/2$ and hence $\phi_s(x)>0$.
	Define
	\[
		H(x):=
		\frac{\sum_s\phi_s(x)h(p_s)}
		{\psi(x)+\sum_s\phi_s(x)}.
	\]
	This function is continuous on $B$.
	If $x\in\bar{\mathcal A}$, then $\psi(x)=0$ and every index with $\phi_s(x)>0$ satisfies $\nrm{x-p_s}<2\delta$.
	Thus every value $h(p_s)$ participating in the convex combination defining $H(x)$ is within $\eta$ of $h(x)$, which gives the claim.
\end{proof}

\begin{lemma}[Regular functions are measurable]
	\label{lem_regular_functions_measurable}
	Every regular function is measurable in the sense of \Cref{dfn_locally_measurable_function}.
\end{lemma}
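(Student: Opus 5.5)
To prove \Cref{lem_regular_functions_measurable}, I would fix a block $B$ and an accuracy $\eps>0$, and write $f=\sum_{k=1}^N f_k$ with pairwise disjoint representable supports $X_k$. The case $N=0$ is settled by \Cref{rem_empty_regular_function_measurable}. The overall strategy is to assemble the exception multiblock and the continuous comparison function required by \Cref{dfn_measurable_function} piece by piece, following the pattern of \Cref{lem_regular_simple_approximation}.

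First, enlarge $B$ to a full block $B'$ that well contains both $B$ and all bounding blocks $B_{X_k}$. As in \Cref{lem_representable_finite_union}, choose representability witnesses for every $X_k$ relative to $B'$, each with exception cost at most $\eps/(3N)$. Apply \Cref{dfn_coordinate_refinement} to $B'$ and the concatenation $\mathcal A$ of all witness multiblocks. Take $\rho$ so small that $\gamma(\mathcal H_\rho(B';\mathcal A))\le\eps/3$. The trimmed cells $C\in\mathcal R_\rho(B';\mathcal A)$ then satisfy, for each $k$, exactly one of $C\Subset\bar{\mathcal M}_{X_k}$, $C\Subset\bar{\mathcal E}_{X_k}$, or $C\Subset\overline{\mathcal M^c_{X_k}}$. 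Collect into $\mathcal A_0$ the cells that are, for every $k$, either core or exterior cells. Put the remaining cells, together with $\mathcal H_\rho$, into the exception multiblock $\mathcal E$. The cost bound $\gamma(\mathcal E)\le\eps$ follows exactly as in \Cref{lem_representable_finite_union}, since each discarded cell lies in some $\bar{\mathcal E}_{X_k}$ and the cells have disjoint interiors. Finally, intersect $\mathcal E$ with $B$, or shrink it slightly, so that it is well contained in $B$.

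Next, I would define $h$ on $\bar{\mathcal A}_0$ cellwise. On a cell $C$, set $h=f_k^+$ if $C$ is a core cell for $X_k$ (there is at most one such $k$, by disjointness of the supports), and $h=0$ otherwise. Each cell supplies positive membership witnesses: $C\Subset X_k$ or $C\Subset X_k^c$ for every $k$. Hence $f$ is determined on $C$, equals $h$ there, and $h$ is continuous on each cell. The main obstacle is that \Cref{lem_finite_multiblock_interpolation} needs $h$ continuous on the whole union $\bar{\mathcal A}_0$, with one modulus. Distinct trimmed cells, however, are separated by at least $2\rho$, because each cell is trimmed by $\rho$ on every side. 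The moduli of the finitely many branches therefore combine into a single modulus: take the minimum of the branch moduli and $\rho$, and note that pairs of points in different cells are never that close. This is exactly the pairwise nature of continuity noted in \Cref{rem_modulus_pairwise}.

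Then \Cref{lem_finite_multiblock_interpolation}, applied with $\eta=\eps$, gives a continuous $g:B'\to\R^m$, which I restrict to $B$. It remains to check that every $x\in B\cap\dom f\cap\bar{\mathcal E}^c$ lies in some cell of $\mathcal A_0$, so that $\nrm{f(x)-g(x)}=\nrm{h(x)-g(x)}\le\eps$. This is the step I would treat most carefully constructively. Lying outside $\bar{\mathcal E}$ gives apartness from the strips of $\mathcal H_\rho$, and therefore strict coordinate inequalities relative to the finite rational grid. These inequalities decide which trimmed cell contains $x$, and \Cref{lem_coordinate_refinement_measure_containment} guarantees that such a cell exists. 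That cell cannot be a cell of $\mathcal E$, since $x$ is apart from $\bar{\mathcal E}$, so it belongs to $\mathcal A_0$, which completes the argument.
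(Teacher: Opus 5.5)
Your proof is correct, but it is organized differently from the paper's. The paper works summand by summand. For each $k$ it refines $B$ against the $k$th witness only and puts $\mathcal E_{X_k}\sqcup\mathcal H_k$ into the exception multiblock. It then builds $g=\sum_k\theta_kH_{k,\eps}$, where $\theta_k=d(x,Z_k)/(d(x,A_k)+d(x,Z_k))$ is a continuous cutoff between the uniformly apart core and exterior supports, and $H_{k,\eps}$ extends $f_k^+$ off the core via \Cref{lem_finite_multiblock_interpolation}. Outside the exception it checks, separately for each $k$, that $x\in A_k\cup Z_k$.

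You instead take one common trimmed refinement of all witnesses, as in \Cref{lem_representable_finite_union}, and keep only the cells that are core-or-exterior for every $k$. The $2\rho$-separation of distinct trimmed cells then makes the cellwise function $h$ uniformly continuous on their union, so a single interpolation suffices.

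Your route needs no cutoffs and no distance functions to multiblocks. Its comparison function is pointwise a sub-convex combination of finitely many sampled values of $f$ at determined points, so it is bounded by the largest of those values. The paper's $\sum_k\theta_kH_{k,\eps}$ is bounded only by a sum over summands, which is why the proof of \Cref{lem_regular_functions_riemann_integrable} has to enlarge $M$. The paper's route, in turn, is modular: it keeps $g$ additive over the formal summands and never needs a joint classification of cells across different supports.

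One small correction. Of your two ways of placing $\mathcal E$ inside $B$, only intersecting with $B$ is safe: for $x\in B$, apartness from $\bar{\mathcal E}\cap B$ implies apartness from $\bar{\mathcal E}$, by coordinatewise clipping to $B$. Shrinking would break the covering used in your final step. Note also that the paper's own exception multiblock, which contains strips along $\partial B$, is not literally well contained in $B$ either.
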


\begin{proof}
	Let $f=\sum_{k=1}^N f_k$ be a regular function with representable supports $X_1,\ldots,X_N$.
	Fix an ambient block $B$ be a block and $\eps>0$.
	If $N=0$, take $g_{\eps\mid B}=0$ and take the empty exception multiblock.
	Then the measurability conditions are immediate.
	Hence assume $N>0$.
	For each $k$, take a representability witness for $X_k$ with exception cost at most $\eps/(2N)$ and with bounding block containing $B$.
	This is obtained, if necessary, by enlarging the bounding block and adding the newly created annulus to the exterior multiblock.
	Write the witness as
	\[
		B_{X_k},\qquad
		\mathcal E_{X_k}, \quad \mathcal M_{X_k}, \quad \mathcal M_{X_k}^c .
	\]
	For each $k$, apply \Cref{dfn_coordinate_refinement} to $B$ and to the finite multiblock
	\[
		\mathcal A_k
		:=
		\mathcal M_{X_k}
		\sqcup
		\mathcal E_{X_k}
		\sqcup
		\mathcal M_{X_k}^c.
	\]
	Choose an admissible $\rho_k>0$ so small that
	\[
		\gamma(\mathcal H_{\rho_k}(B;\mathcal A_k))
		\le
		\frac{\eps}{2N}.
	\]
	Set
	\[
		\mathcal H_k:=\mathcal H_{\rho_k}(B;\mathcal A_k),
		\qquad
		\mathcal C_k:=\mathcal R_{\rho_k}(B;\mathcal A_k).
	\]
	Define the exception multiblock on $B$ by concatenating the exception and coordinate-boundary multiblocks:
	\begin{equation}
		\label{eqn_regular_measurable_exception_multiblock}
		\mathcal E_{\eps\mid B}
		:=
		\bigsqcup_{k=1}^N
		(\mathcal E_{X_k}\sqcup\mathcal H_k).
	\end{equation}
	Then
	\begin{equation}
		\label{eqn_regular_measurable_exception_cost}
		\gamma(\mathcal E_{\eps\mid B})
		\le \sum_{k=1}^N
		\left(
			\gamma(\mathcal E_{X_k})
			+
			\gamma(\mathcal H_k)
		\right)
		\le \eps.
	\end{equation}

	We construct the comparison function piecewise.
	For each $k$, set
	\[
		A_k:=\bar{\mathcal M}_{X_k},
		\qquad
		Z_k:=\overline{\mathcal M_{X_k}^c}.
	\]
	By \Cref{eqn_representable_set_interior,eqn_representable_set_exterior}, $A_k\Subset X_k$ and $Z_k\Subset B_{X_k}\setminus X_k$.
	Since both $A_k$ and $Z_k$ are supports of finite multiblocks, they are uniformly apart.
	If $A_k$ is empty, use $\theta_k\equiv0$; if $Z_k$ is empty, use $\theta_k\equiv1$.
	Otherwise define the scalar cutoff
	\begin{equation}
		\label{eqn_regular_measurable_cutoff}
		\theta_k(x)
		:=
		\frac{d(x,Z_k)}
		{d(x,A_k)+d(x,Z_k)}.
	\end{equation}
	The denominator is positive everywhere because $A_k$ and $Z_k$ are uniformly apart.
	The function $\theta_k$ is continuous, satisfies $0\le\theta_k\le1$, and obeys
	\begin{equation}
		\label{eqn_regular_measurable_cutoff_values}
		\theta_k=1 \ \text{on } A_k,
		\qquad
		\theta_k=0 \ \text{on } Z_k.
	\end{equation}
	By \Cref{lem_finite_multiblock_interpolation}, applied to the branch $f_k^+:X_k\to\R^m$ restricted to $A_k$, choose a continuous function $H_{k,\eps}:B\to\R^m$ such that
	\begin{equation}
		\label{eqn_regular_measurable_branch_interpolation}
		\forall x\in A_k \spc
		\nrm{H_{k,\eps}(x)-f_k^+(x)}\le \eps/N.
	\end{equation}
	Now set
	\begin{equation}
		\label{eqn_regular_measurable_piece}
		g_{k,\eps}(x) := \theta_k(x) H_{k,\eps}(x),
		\qquad x\in B.
	\end{equation}
	This is continuous because $H_{k,\eps}$ and $\theta_k$ are continuous.
	Finally, define
	\begin{equation}
		\label{eqn_regular_measurable_sum}
		g_{\eps\mid B} := \sum_{k=1}^N g_{k,\eps}.
	\end{equation}
	Thus $g_{\eps\mid B}$ is continuous on $B$.

	It remains to compare $g_{\eps\mid B}$ and $f$ away from $\bar{\mathcal E}_{\eps\mid B}$.
	Fix $k$.
	Let $x\in B\cap\bar{\mathcal E}_{\eps\mid B}^c$.
	Since $x\notin\bar{\mathcal H}_k$, the construction of the trimmed coordinate refinement gives a cell $C\in\mathcal C_k$ with $x\in C$.
	For every constituent block of the $k$th witness, either $C$ is well contained in that block or it is disjoint from that block.
	If $C$ were disjoint from $A_k$, $\bar{\mathcal E}_{X_k}$, and $Z_k$, then $C$ would be a positive-volume part of $B$ outside the measure-contained support in \Cref{eqn_representable_set_decomposition}.
	Hence $C$ is well contained in one of these three supports.
	Because $x\notin\bar{\mathcal E}_{X_k}$, the exception case is impossible.
	Thus $x\in A_k\cup Z_k$.
	On $A_k$, \Cref{eqn_regular_measurable_cutoff_values,eqn_regular_measurable_branch_interpolation} give
	\[
		\nrm{g_{k,\eps}(x)-f_k^+(x)}\le\eps/N.
	\]
	On $Z_k$, the same cutoff equation gives $g_{k,\eps}(x)=0$, while the exterior branch of the bounded-support complemented function gives $f_k^-(x)=0$.
	Hence, for the fixed $x\in B\cap\bar{\mathcal E}_{\eps\mid B}^c$,
	\[
		\nrm{g_{k,\eps}(x)-f_k(x)}\le\eps/N.
	\]
	Summing over $k$, the restrictions of $g_{\eps\mid B}$ and $f$ satisfy
	\begin{equation}
		\label{eqn_regular_function_measurable}
		\nrm{f(x)-g_{\eps\mid B}(x)} \le \eps,
	\end{equation}
	for all $x\in B\cap\bar{\mathcal E}_{\eps\mid B}^c$.
	\Cref{eqn_regular_measurable_exception_cost} gives the required bound on the exception multiblock.
	Since $B$ was arbitrary, $f$ is locally measurable.
\end{proof}

\begin{corollary}
	\label{cor_full_regular_measurable}
	Every full regular function is measurable.
\end{corollary}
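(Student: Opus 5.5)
The corollary should follow directly from \Cref{lem_regular_functions_measurable}, once we note what a full regular function is. By \Cref{dfn_full_regular_function}, a full regular function is a regular function in the sense of \Cref{dfn_regular_function}, carrying two extra pieces of data. First, at least one support is full representable. Second, for every accuracy there is a choice of witnesses whose concatenated core and exception multiblocks are well-connected. Neither condition alters the underlying finite formal sum $f=\sum_{k=1}^N f_k$. The pieces are still bounded-support complemented functions, and their supports are still pairwise disjoint and representable.

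The plan is therefore to forget the extra data and apply \Cref{lem_regular_functions_measurable} to this representation. Concretely, fix a block $B$ and an accuracy $\eps>0$. Take any representability witnesses of the supports $X_k$ with exception cost at most $\eps/(2N)$; the well-connected witnesses supplied by fullness are acceptable, but any witnesses will do. Then run the construction from that proof. It uses the trimmed coordinate refinements $\mathcal H_k$, $\mathcal C_k$, the cutoffs $\theta_k$ from \Cref{eqn_regular_measurable_cutoff}, and the interpolants $H_{k,\eps}$ from \Cref{lem_finite_multiblock_interpolation}. This yields the exception multiblock $\mathcal E_{\eps\mid B}$ and the continuous comparison function $g_{\eps\mid B}$, which satisfy \Cref{eqn_measurable_function_exceptional_block} and \Cref{eqn_measurable_function_continuous_approximation}.

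The only point to check is that the measurability proof never needed a property that fullness might interfere with. The argument's inputs are the disjointness and representability of the supports and the continuity of the branches $f_k^+$ (the degenerate case $N=0$ has already been handled). All of these hold for any regular function. Since nothing here is an obstacle, the proof is a one-line invocation of \Cref{lem_regular_functions_measurable}, with a remark that fullness is simply extra data and plays no role.
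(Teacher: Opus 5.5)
Your proposal is correct and matches the paper's proof: a full regular function is in particular regular, so measurability follows immediately from \Cref{lem_regular_functions_measurable}. As you note, the fullness data plays no role in the argument.
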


\begin{proof}
	Every full regular function is regular, so the claim follows from \Cref{lem_regular_functions_measurable}.
\end{proof}

\begin{corollary}
	\label{cor_locally_regular_locally_measurable}
	Every locally regular function is locally measurable.
\end{corollary}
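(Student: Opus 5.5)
The claim follows by unfolding the definitions and using \Cref{lem_regular_functions_measurable} blockwise. Local measurability in the sense of \Cref{dfn_locally_measurable_function} asks for measurability on every block $B$. So I fix an arbitrary block $B$ and an accuracy $\eps>0$, and aim to produce an exception multiblock $\mathcal E_{\eps\mid B}\Subset B$ of cost at most $\eps$ together with a continuous comparison function $g_{\eps\mid B}:B\to\R^m$.

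First I reduce to a full block. \Cref{dfn_locally_regular_function} only supplies regular representatives $\hat f_{B'}$ for full blocks $B'$, while $B$ may be degenerate. I therefore choose a full rational block $B'$ with $B\Subset B'$, for instance $B'=B\oplus1$. The definition then gives a regular function $\hat f_{B'}$ that agrees with $f$ on their common determined part in $B'$, with local domain $B'\cap\dom{\hat f_{B'}}$.

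Next I apply \Cref{lem_regular_functions_measurable} to $\hat f_{B'}$ on the block $B$. This gives a continuous $g$ on $B$ and a finite multiblock $\mathcal E$ with $\gamma(\mathcal E)\le\eps$ such that $\nrm{\hat f_{B'}(x)-g(x)}\le\eps$ for every $x\in B\cap\dom{\hat f_{B'}}\cap\bar{\mathcal E}^c$. If $\mathcal E$ is not well contained in $B$ as \Cref{dfn_measurable_function} requires, I rerun the proof of \Cref{lem_regular_functions_measurable} with the ambient block $B\oplus\lambda$ in place of $B$; the coordinate-boundary strips can be made as thin as needed. For a degenerate $B$ one can in any case take the exception to be a thin full slab whose cost is below $\eps$.

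Finally I transfer the estimate from $\hat f_{B'}$ to $f$. For $x\in B\cap\dom f\cap\bar{\mathcal E}^c$ lying in the local domain $B\cap\dom{\hat f_{B'}}$, the two functions agree at $x$, so $\nrm{f(x)-g(x)}\le\eps$. This is exactly the measurability condition on $B$, and since $B$ was arbitrary, $f$ is locally measurable.

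The main obstacle is this bookkeeping of domains. The estimate is only meaningful on the local domain fixed in \Cref{dfn_locally_regular_function}, so I would read $\dom f\cap B$ as that local domain, which is the paper's own convention. A secondary point is the formal requirement $\mathcal E\Subset B$, which the enlargement step above handles.
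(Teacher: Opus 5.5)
Your proposal is correct and is essentially the paper's argument: fix a block, pass to a regular local representative, invoke \Cref{lem_regular_functions_measurable} on that block, and transfer the estimate through agreement on the determined part. Routing through a full block $B'$ with $B\Subset B'$ is a genuine improvement in care, since \Cref{dfn_locally_regular_function} only supplies representatives on full blocks, a point the paper's one-line proof glosses over.

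Your contingency for $\mathcal E\Subset B$ is unnecessary, because the lemma already asserts measurability on every block, well containment included. As stated it would also not work: rerunning the construction in $B\oplus\lambda$, or using a full slab around a degenerate $B$, places the exception outside $B$ rather than well inside it.
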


\begin{proof}
	Let $f$ be locally regular and fix a block $B$.
	By \Cref{dfn_locally_regular_function}, the restriction of $f$ to $B$ is represented by a regular function.
	By \Cref{lem_regular_functions_measurable}, this restriction is measurable on $B$.
	Therefore $f$ is locally measurable.
\end{proof}

\subsection{Differentiation and Integration}

\subsubsection{Differentiation}
\label{subsubsec_differentiation}

\begin{definition}[Operator norm]
	\label{dfn_operator_norm}
	For a linear map $A:\R^n\to\R^m$, its operator norm is
	\[
	\nrm{A}_{\mathrm{op}}
	:=
	\sup_{\nrm{v}\le 1}\nrm{Av}.
	\]
	After choosing the standard bases, $A$ is an $m\times n$ matrix.
\end{definition}

\begin{remark}
	\label{rem_operator_norm_constructive}
	The operator norm of a rational matrix is constructively computable.
	Indeed, $\nrm{A}_{\mathrm{op}}^2$ is the supremum of the continuous function
	\[
	v\mapsto \nrm{Av}^2
	\]
	on the unit ball of $\R^n$, and the latter ball has explicit finite rational nets at every precision.
	Equivalently, for rational $q>0$ the estimate $\nrm{A}_{\mathrm{op}}\le q$ can be checked by the positive semidefiniteness of $q^2I-A^\top A$, where $I$ is the identity matrix of the appropriate dimension.
	One could use the Frobenius norm instead, since it is also computable and satisfies $\nrm{A}_{\mathrm{op}}\le\nrm{A}_{F}$.
	We keep the operator norm because it is the exact Lipschitz bound of the corresponding linear map, whereas the Frobenius norm would impose a stronger certificate than needed.
\end{remark}

\begin{definition}[Upper right derivative]
	\label{dfn_upper_right_derivative}
	Let $J\subset\R$ be an interval, let $\phi:J\to\R$, and let $t\in J$.
	The \emph{upper right derivative} of $\phi$ at $t$ is denoted by $\Dplus\phi(t)$ and is defined by
	\[
		\Dplus\phi(t)
		:=
		\limsup_{h\downarrow0}
		\frac{\phi(t+h)-\phi(t)}{h},
	\]
	whenever the displayed upper limit is available.
	In estimates, the assertion
	\[
		\Dplus\phi(t)\le a
	\]
	means that, for every $\eta>0$, all sufficiently small positive $h$ satisfy
	\[
		\frac{\phi(t+h)-\phi(t)}{h}\le a+\eta.
	\]
\end{definition}

\begin{remark}
	\label{rem_upper_right_derivative_viability}
	The upper right derivative is used below only as a scalar comparison tool.
	In \Cref{subsec_viable_solutions}, it records the one-sided growth estimate for the distance of a trajectory to a constraint set.
	No differentiability of that distance function is asserted.
	The statement $\Dplus\phi(t)\le a$ is just the quantified upper difference-quotient estimate in \Cref{dfn_upper_right_derivative}, and the subsequent scalar comparison is the usual finite Gronwall argument applied to these one-sided estimates.
\end{remark}

\begin{definition}[Differentiability at a point]
	\label{dfn_differentiability_point}
	Let $X\subseteq\R^n$ and let $f:X\to\R^m$.
	The function $f$ is called \emph{differentiable at $x\in X$} if there exists a linear map
	\[
		A_x:\R^n\to\R^m
	\]
	and a modulus $\omega_{\Diff f,x}:\Q_{>0}\to\Q_{>0}$ such that, for every $\eps>0$ and every $y\in X$,
	\[
		0<\nrm{y-x}\le \omega_{\Diff f,x}(\eps)
		\quad\Longrightarrow\quad
		\nrm{f(y)-f(x)-A_x(y-x)}
		\le
		\eps\nrm{y-x}.
	\]
	The derivative is denoted by $\Diff f(x):=A_x$.
\end{definition}

\begin{definition}[Differentiable and continuously differentiable functions]
	\label{dfn_differentiable_functions}
	Let $X\subseteq\R^n$.
	A function $f:X\to\R^m$ is called \emph{differentiable on $X$} if it is differentiable at every $x\in X$ and the derivative operation $x\mapsto\Diff f(x)$ is given as part of the data.
	It is called \emph{continuously differentiable on $X$} if, additionally, the derivative map is continuous on $X$ with respect to the operator norm.
\end{definition}

\begin{notation}[Gradient]
	\label{ntn_gradient}
	If $f:X\to\R$ is differentiable at $x$, its gradient is the vector $\nabla f(x)\in\R^n$ determined by
	\[
		\Diff f(x)v=\nabla f(x)^\top v,
		\qquad v\in\R^n.
	\]
\end{notation}

\subsubsection{Integration}
\label{subsubsec_integration}

\begin{definition}[Tagged multiblock and mesh]
	\label{dfn_tagged_partition_mesh}
	Let $B$ be a full block.
	A \emph{partitioning multiblock} of $B$ is a finite full multiblock
	\[
		\mathcal P=\{B_k\}_{k=1}^N
	\]
	such that $\bar{\mathcal P}=B$ and the interiors of $B_1,\ldots,B_N$ are pairwise disjoint.
	A \emph{tagged multiblock} is such a multiblock together with points $\xi_k\in B_k$.
	Its \textit{mesh size} is
	\[
		|\mathcal P|:=\max_{k=1,\ldots,N}\diam(B_k).
	\]
	For a bounded function $f:B\to\R^m$, the corresponding tagged sum is
	\[
		\Sigma(f,\mathcal P):=
		\sum_{k=1}^N f(\xi_k)\mu(B_k).
	\]
\end{definition}

\begin{definition}[Riemann integrable function on a block]
	\label{dfn_riemann_integrable_block}
	Let $B$ be a full block and let $f:B\to\R^m$ be bounded.
	The function $f$ is called \emph{Riemann integrable on $B$} if its tagged multiblock sums are Cauchy with a given modulus:
	for every $\eps>0$ there exists $\delta>0$ such that any two tagged multiblocks $\mathcal P,\mathcal Q$ partitioning $B$ with
	\[
		|\mathcal P|\le\delta,
		\qquad
		|\mathcal Q|\le\delta
	\]
	satisfy
	\[
		\nrm{\Sigma(f,\mathcal P)-\Sigma(f,\mathcal Q)}\le\eps.
	\]
\end{definition}

\begin{definition}[Riemann integral on a block]
	\label{dfn_riemann_integral_block}
	Let $B$ be a full block and let $f:B\to\R^m$ be Riemann integrable on $B$.
	The Riemann integral
	\[
		\int_B f(x)\diff x
	\]
	is the limit of the tagged multiblock sums $\Sigma(f,\mathcal P)$ as $|\mathcal P|\to0$.
\end{definition}

\begin{remark}
	\label{rem_riemann_integral_continuous_block}
	Every continuous function on a block is Riemann integrable in the sense of \Cref{dfn_riemann_integrable_block}.
	Indeed, let $V:=\mu(B)$ and let $\omega_f$ be a modulus of continuity on $B$.
	For $\eps>0$, choose
	\[
		\delta\le \omega_f\left(\frac{\eps}{2V}\right)
	\]
	when $V>0$; if $V=0$, the integral is $0$.
	After passing two tagged multiblocks to a common refinement, points lying in the same refined block have distance at most $\delta$.
	Hence the corresponding function values differ by at most $\eps/(2V)$, and summing over the refined blocks gives
	\[
		\nrm{\Sigma(f,\mathcal P)-\Sigma(f,\mathcal Q)}\le\eps.
	\]
\end{remark}

\begin{definition}[Integral over bounded support]
	\label{dfn_integral_bounded_support}
	Let $f:\R^n\to\R^m$ be a function with bounded support in the sense that there is a block $B$ such that $f(x)=0$ for $x\notin B$.
	If $f_{\mid B}$ is Riemann integrable, set
	\[
		\int_{\R^n} f(x)\diff x
		:=
		\int_B f(x)\diff x.
	\]
	This value is independent of the chosen witnessing block, since enlarging the block only adds regions on which the function is zero.
\end{definition}


\begin{definition}[Improper integral]
	\label{dfn_improper_integral}
	Let $f:\R^n\to\R^m$ and write
	\[
		B_R:=[-R,R]^n
	\]
	for rational $R>0$.
	The improper integral $\int_{\R^n} f(x)\diff x$ exists if $f_{\mid B_R}$ is Riemann integrable for every rational $R>0$ and the net of partial integrals is Cauchy:
	for every $\eps>0$ there exists $R_0>0$ such that
	\[
		R,S\ge R_0
		\quad\Longrightarrow\quad
		\nrm{\int_{B_R}f(x)\diff x-\int_{B_S}f(x)\diff x}\le\eps.
	\]
	The value of the improper integral is the corresponding limit.
\end{definition}

\begin{definition}[Regular functions on a block]
	\label{dfn_regular_functions_on_block}
	Let $B$ be a full block.
	Denote by $\Reg{B}$ the class of real-valued regular functions whose supports are well contained in $B$.
	Denote by $\BReg{B}$ the real vector space of effectively bounded members of $\Reg{B}$.
	Denote by $\SReg{B}$ the subspace of simple regular functions in $\BReg{B}$.
\end{definition}

\begin{definition}[Locally effectively bounded regular functions]
	\label{dfn_global_bounded_regular_functions}
	A real-valued function $f$ on $\R^n$ belongs to $\BReg{\R^n}$ if, for every full block $B\subset\R^n$, there are a full block $C$ and a function $f_C\in\BReg{C}$ such that
	\[
		B\Subset C
	\]
	and $f$ agrees with $f_C$ on the determined part of $B$.
	The domain is understood locally: on the window $B$, the determined domain of $f$ is inherited from
	\[
		B\cap\dom{f_C}.
	\]
	The function is called \emph{locally bounded away from zero} if, in addition, for every full block $B$, there are full blocks $X$ and $C$, a local representative $f_C\in\BReg{C}$, and a rational number $c_B>0$ such that
	\[
		B\Subset X\Subset C,
	\]
	the function $f$ agrees with $f_C$ on the determined part of $X$, and
	\[
		\abs{f_C(x)}\ge c_B
	\]
	on the determined part of $X$.
\end{definition}

\begin{remark}
	\label{rem_global_breg_localization}
	The notation $\BReg{\R^n}$ does not mean that a single finite support representation is given on all of $\R^n$.
	It means that every bounded window has finite regular data in a slightly larger block.
	In particular, the local supports, exterior branches, and possible undetermined boundary pieces are part of the local representative.
	This is the same localization principle as in \Cref{dfn_locally_regular_function}, with effective bound witnesses included locally.
\end{remark}

\begin{remark}[Locally representable domains]
	\label{rem_global_breg_locally_representable_domain}
	Let $D\subset\R^n$ be locally representable, and let $f:D\to\R$ be locally regular.
	If, for every full block $B$, the local representative on $B$ may be chosen effectively bounded, then the zero extension of $f$ outside $D$ is an element of $\BReg{\R^n}$ whenever the local exterior branch is supplied by the local representability witnesses for $D$.
	Thus a function naturally defined on a locally representable effective domain may still be integrated over the ambient Euclidean space after supplying the zero exterior data locally.

	The word ``determined'' does not introduce a second kind of domain.
	It records the usual complemented-set convention: a value is evaluated where the local data provide the relevant membership or apartness witness.
	Near boundary regions without such a witness, the local representative keeps the corresponding boundary layer in its support data and the integration arguments ignore it through the same exception estimates as before.
\end{remark}

\begin{lemma}[Local integrability of $\BReg{\R^n}$ functions]
	\label{lem_global_breg_local_integrability}
	Let $f\in\BReg{\R^n}$ and let $B\subset\R^n$ be a full block.
	Then $f_{\mid B}$ is Riemann integrable on $B$.
\end{lemma}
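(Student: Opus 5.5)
By \Cref{dfn_global_bounded_regular_functions}, the plan reduces to a statement about a single bounded regular function. Choose a full block $C$ with $B\Subset C$ and a representative $f_C\in\BReg{C}$ that agrees with $f$ on the determined part of $B$. Let $M$ be a bound witness for $f_C$. Write
\[
	f_C=\sum_{k=1}^N f_k
\]
with representable, pairwise disjoint supports $X_k\Subset C$. It then suffices to show that tagged sums of $f_C$ over partitioning multiblocks of $B$ form a Cauchy net with an explicit modulus, as in \Cref{dfn_riemann_integrable_block}. If $N=0$, all tagged sums vanish and there is nothing to prove.

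The main tool is the comparison function from \Cref{lem_regular_functions_measurable}, applied on the block $B$ with accuracy $\eta>0$. This gives an exception multiblock $\mathcal E:=\mathcal E_{\eta\mid B}$ with $\gamma(\mathcal E)\le\eta$ and a continuous $g:=g_{\eta\mid B}$ on $B$ such that
\[
	\abs{f_C-g}\le\eta
	\qquad\text{on } B\cap\bar{\mathcal E}^c .
\]
The interpolants in that construction are convex combinations of sampled branch values multiplied by cutoffs in $[0,1]$. Hence each summand of $g$ is bounded by $\sup\abs{f_k^+}$ on the core, and these suprema are at most $M$ because the supports are disjoint. If that bookkeeping is inconvenient, I will instead clip $g$ to $[-M,M]$, which keeps it continuous and does not worsen the approximation off $\mathcal E$, since $\abs{f_C}\le M$ there. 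Either way, $\abs{f_C-g}\le 2M$ at every determined point. By \Cref{rem_riemann_integral_continuous_block}, $g$ is Riemann integrable on $B$ with an explicit modulus $\delta_g$.

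Now let $\mathcal P$ be a tagged partitioning multiblock of $B$ with mesh at most $\delta$. I split its cells into those meeting $\bar{\mathcal E}\oplus\delta$ and the rest. Cells of the second kind have their tags off $\bar{\mathcal E}$, so they contribute at most $\eta\mu(B)$ to $\abs{\Sigma(f_C,\mathcal P)-\Sigma(g,\mathcal P)}$. Cells of the first kind lie inside $\bar{\mathcal E}\oplus 2\delta$, and
\[
	\mu\bigl(\bar{\mathcal E}\oplus 2\delta\bigr)\le\gamma(\mathcal E)+c_{\mathcal E}\,\delta,
\]
where $c_{\mathcal E}$ is computable from the finitely many rational blocks of $\mathcal E$. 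Their contribution is therefore at most $2M\bigl(\eta+c_{\mathcal E}\delta\bigr)$. Given $\eps$, I fix $\eta$ so that $\eta\mu(B)+2M\eta\le\eps/6$, and then take $\delta\le\delta_g(\eps/3)$ small enough that $2Mc_{\mathcal E}\delta\le\eps/6$. Two partitions of mesh at most $\delta$ then have tagged sums within $\eps/3$ of the corresponding sums of $g$, and those sums of $g$ are within $\eps/3$ of each other. This gives the Cauchy modulus.

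The main obstacle is the constructive meaning of the tags. An arbitrary tag $\xi$ may be a point where $f_C$ is not determined, for instance on a support boundary with no membership or apartness witness. My first approach is to read tagged sums of $f_C$ as sums over tags in the determined domain; by \Cref{rem_global_breg_locally_representable_domain}, undetermined points lie inside the exception layers. Off $\bar{\mathcal E}$, however, the argument of \Cref{lem_regular_functions_measurable} places each point in a trimmed cell that is well contained in a core or an exterior multiblock, and this supplies the branch witness. So for non-exception cells the tag is automatically determined. For exception cells I need only the bound $M$ at determined tags, and I would, if necessary, restrict the definition to determined tags. The independence of the result from the choice of $C$ follows because two representatives agree on the determined part of $B$, and the estimate above shows that both integrals are limits of the same sums of continuous comparisons.
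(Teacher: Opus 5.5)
Your proof is correct, but it takes a different route from the paper. The paper does not reopen the Riemann-sum estimate. It notes that $\indic{B}$ is a simple regular function whose support is well contained in $C$, so $\indic{B}f_C\in\BReg{C}$ by \Cref{lem_regular_functions_finite_algebra}. Then \Cref{lem_regular_functions_riemann_integrable}, applied on $C$ where the supports \emph{are} well contained, gives integrability, and the integral over $B$ is defined as $\int_C\indic{B}f_C$.

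You instead rerun the argument of \Cref{lem_regular_functions_riemann_integrable} directly on $B$. This works because \Cref{lem_regular_functions_measurable} holds on an arbitrary ambient block, so the well-containment hypothesis of \Cref{lem_regular_functions_riemann_integrable} is not actually needed.

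The two routes buy different things:
\begin{itemize}
\item The paper's reduction is shorter and places $\int_B f$ inside the $\BReg{C}$ calculus, so linearity, order, and Fubini come for free. Strictly, though, it proves integrability of $\indic{B}f_C$ on $C$ and leaves implicit the passage to tagged partitions of $B$ itself. That passage means extending a partition of $B$ to one of $C$, with the added cells tagged at points apart from $B$ where $\indic{B}$ vanishes.
\item Your route checks \Cref{dfn_riemann_integrable_block} on $B$ directly, with an explicit modulus. Your split by whether a rational cell meets $\bar{\mathcal E}\oplus\delta$ is a decidable endpoint comparison. That is cleaner than the split used in the proof of \Cref{lem_regular_functions_riemann_integrable}, which asks whether a real tag lies in $\bar{\mathcal E}$.
\end{itemize}

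One small correction: your first way of bounding $g$ does not give $\abs{g}\le M$. The cutoffs $\theta_k$ of different summands can be positive at the same point inside exception layers. Summing the per-summand bounds then only yields $\abs{g}\le NM$. Clipping $g$ to $[-M,M]$ is the option to use. It is harmless off $\bar{\mathcal E}$, because clipping onto an interval that contains $f_C(x)$ is nonexpansive.
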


\begin{proof}
	Choose $C$ and $f_C\in\BReg{C}$ as in \Cref{dfn_global_bounded_regular_functions}, with $B\Subset C$.
	The indicator $\indic{B}$ is a simple regular function whose support is well contained in $C$.
	By the finite algebra of regular functions, \Cref{lem_regular_functions_finite_algebra}, the product $\indic{B}f_C$ belongs to $\BReg{C}$.
	Therefore it is Riemann integrable on $C$ by \Cref{lem_regular_functions_riemann_integrable}.
	The integral over $B$ is defined by
	\[
		\int_Bf(x)\diff x
		:=
		\int_C\indic{B}(x)f_C(x)\diff x.
	\]
	If another local representative is chosen, pass to a common well-containing block and use the same finite additivity argument for regular integrals; the two products agree on $B$ and vanish on the apartness complement of $B$ inside the common block.
\end{proof}

\begin{lemma}[Local algebra of $\BReg{\R^n}$ functions]
	\label{lem_global_breg_local_algebra}
	If $f,g\in\BReg{\R^n}$ and $a,b\in\R$, then $af+bg$ and $fg$ belong to $\BReg{\R^n}$.
	If $g$ is locally bounded away from zero, then $1/g\in\BReg{\R^n}$.
\end{lemma}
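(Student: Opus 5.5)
The plan is to reduce each claim to the finite algebra of regular functions on a single block, \Cref{lem_regular_functions_finite_algebra}, by localizing. Fix a full block $B$. By \Cref{dfn_global_bounded_regular_functions} there are full blocks $C_f,C_g$ with $B\Subset C_f$, $B\Subset C_g$ and representatives $f_{C_f}\in\BReg{C_f}$, $g_{C_g}\in\BReg{C_g}$. First I move both to a common window. I choose a full rational block $C'$ with $B\Subset C'\Subset C_f\cap C_g$; this is possible because the well-containment margins are positive rationals, and the intersection of two rational blocks is again a rational block containing $B$ with a margin. I then take $C:=C_f\cap C_g$ as the ambient block and replace $f_{C_f}$ by $\indic{C'}f_{C_f}$, and similarly for $g$. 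These truncations are finite products of regular functions and stay effectively bounded with the same bound witnesses. Their supports are intersections of the old supports with $C'$, so they are representable by \Cref{lem_representable_finite_intersection}, and they are well contained in $C$. Both truncations still agree with $f$ and $g$ on the determined part of $B$, since $\indic{C'}=1$ there.

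\textbf{Linear combination and product.} With common representatives $\tilde f,\tilde g\in\BReg{C}$, \Cref{lem_regular_functions_finite_algebra} gives $a\tilde f+b\tilde g\in\BReg{C}$ and $\tilde f\tilde g\in\BReg{C}$. The supports are flattened as in \Cref{rem_regular_function_flattened_by_default}. The bound witnesses are $\abs{a}M_f+\abs{b}M_g$ (after rational upper bounds for $\abs{a},\abs{b}$) and $M_fM_g$. On the determined part of $B$ these functions agree with $af+bg$ and $fg$. Since $B$ was arbitrary, both belong to $\BReg{\R^n}$.

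\textbf{Reciprocal.} Suppose $g$ is locally bounded away from zero, and fix $B$. Take $X,C,g_C,c_B$ as in the definition, so $B\Subset X\Subset C$ and $\abs{g_C}\ge c_B$ on the determined part of $X$. The idea is to define $h:=\indic{X}\cdot(1/g_C)$ piece by piece. Write $g_C=\sum_k g_k$ and flatten its supports together with $X$, using \Cref{lem_representable_set_flattening}.

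On each piece $Y_I\subseteq X$ lying inside a support $X_k$, I take the positive branch $1/g_k^+$. This branch is continuous because $t\mapsto 1/t$ is $c_B^{-2}$-Lipschitz on $\{\abs{t}\ge c_B\}$, so a modulus is $\omega_{g_k}(c_B^2\eps)$. Its bound witness is $1/c_B$.

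On the piece of $X$ outside all supports of $g_C$, the value of $g_C$ is $0$. This contradicts $\abs{g_C}\ge c_B$, so the piece has no determined points. It is kept as a harmless redundant summand with value $0$, in line with \Cref{rem_regular_function_flattened_by_default}. Outside $X$ the exterior branch is $0$. The resulting $h\in\BReg{C}$ agrees with $1/g$ on the determined part of $B$, because $B\Subset X$.

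\textbf{Main obstacle.} I expect the delicate step to be checking that these branches are complemented functions in the sense of \Cref{dfn_complemented_function}. That requires globally continuous positive branches on each flattened support and a zero exterior branch on the apartness complement. The lower bound is asserted only on determined points, and the pieces $Y_I$ are intersections with relative apartness complements. So I must confirm that every point of $Y_I$ carries the branch witness of some $X_k$ with $g_k^+$ evaluated there. This is where the flattening bookkeeping has to be done carefully. Everything else is finite data manipulation.
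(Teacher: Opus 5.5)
Your proposal is correct and follows the paper's route: localize to a common block well containing $B$, apply \Cref{lem_regular_functions_finite_algebra} there for $af+bg$ and $fg$, and obtain the reciprocal as $\indic{X}/g_C$. The paper gets that reciprocal in one line from the quotient clause of \Cref{lem_regular_functions_finite_algebra}, and you rebuild the same construction by hand. The witness issue you flag is settled by the flattening itself: a piece $X\cap X_k\cap\bigcap_{j\ne k}(C\setminus X_j)$ carries exactly the apartness witnesses that make $g_C=g_k^+$ determined there. Your intersection-and-truncation step is valid but unnecessary, since any block containing $C_f\cup C_g$ works and the old supports remain well contained in it.
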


\begin{proof}
	Fix a full block $B$ and choose a common full block $C$ well containing $B$ on which local representatives for all functions under discussion are given.
	The linear and product constructions are exactly the blockwise constructions of \Cref{lem_regular_functions_finite_algebra} applied in $C$.
	For the reciprocal, choose the witness
	\[
		B\Subset X\Subset C,
		\qquad
		\abs{g_C(x)}\ge c_B>0
		\quad(x\in X\cap\dom g_C)
	\]
	from local boundedness away from zero.
	By \Cref{lem_regular_functions_finite_algebra},
	\[
		\indic{X}/g_C\in\BReg{C}.
	\]
	This quotient agrees with $1/g$ on the determined part of $B$.
	Since the construction is available for every full block $B$, the resulting function belongs to $\BReg{\R^n}$.
\end{proof}

\begin{notation}
	\label{ntn_wedges}
	For real-valued functions $f$ and $g$, write
	\[
	f\land g,\qquad f\lor g
	\]
	for the pointwise minimum and maximum, respectively, on the common domain of $f$ and $g$.
	If $c\in\R$, then $f\land c$ and $f\lor c$ mean $f\land(c\one)$ and $f\lor(c\one)$ where $\one$ is the function which is identically $1$.
\end{notation}

\begin{definition}[Integral of a simple regular function]
	\label{dfn_integral_simple_regular_function}
	Let $B$ be a full block, and let
	\[
		s=\sum_{k=1}^N c_k\indic{X_k}
	\]
	be a real-valued simple regular function whose supports satisfy $X_k\Subset B$.
	Its integral over $B$ is
	\[
		\int_B s(x)\diff x
		:=
		\sum_{k=1}^N c_k\mu(X_k),
	\]
	where $\mu(X_k)$ is the measure from \Cref{dfn_representable_set_measure}.
\end{definition}

\begin{lemma}[Effectively bounded regular functions are Riemann integrable]
	\label{lem_regular_functions_riemann_integrable}
	Let $B$ be a full block.
	Every effectively bounded regular function $f:B\to\R^m$ whose supports are well contained in $B$ is Riemann integrable on $B$.
\end{lemma}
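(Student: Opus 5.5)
The plan is to prove the Cauchy property of \Cref{dfn_riemann_integrable_block} directly, by comparing any tagged sum with a fixed certified value. The value will come from a good/bad splitting of the partition cells, in the spirit of \Cref{lem_regular_functions_measurable}.

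Fix $\eps>0$ and let $M$ be the bound witness of $f$ on $B$. Write $f=\sum_{k=1}^N f_k$ with supports $X_k\Subset B$; the case $N=0$ is trivial. For each $k$, I take a representability witness with bounding block $B$ (enlarged as in \Cref{lem_representable_finite_union} if necessary) whose exception cost satisfies $\gamma(\mathcal E_{X_k})\le\eta$, where $\eta$ is a small multiple of $\eps/(NM)$ to be fixed at the end.

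The core multiblocks $A_k=\bar{\mathcal M}_{X_k}$ are well contained in $X_k$, and $f_k^+$ is continuous there with modulus $\omega_k$. The exterior multiblocks $Z_k=\overline{\mathcal M_{X_k}^c}$ are well contained in $B\setminus X_k$, so $f_k=0$ on them. Let $\tau>0$ be smaller than every well-containment margin, and let $\rho>0$ be chosen so that the thickening $\mathcal E_{X_k}\oplus\rho$ together with the face collars $\mathcal H_\rho(B;\mathcal A_k)$ still has cost at most $2\eta$. This is possible because finite multiblocks thicken continuously in cost.

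Now take $\delta\le\min\{\tau,\rho,\omega_k(\eps/(4N\mu(B)))\}$ and let $\mathcal P$ be a tagged partition with $|\mathcal P|\le\delta$. I classify each cell $P$ of $\mathcal P$ for each $k$.

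\textbf{Good cells.} Such a cell is contained in $A_k\oplus\delta$ or in $Z_k\oplus\delta$. Since $\delta<\tau$, the cell then lies in $X_k$ or in its apartness complement, so the value of $f_k$ is determined on the whole cell. On such a cell $f_k$ oscillates by at most $\eps/(4N\mu(B))$; on the $Z_k$ side it is identically zero.

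\textbf{Bad cells.} Any other cell meets the $\delta$-neighborhood of the exception or collar layers. It therefore lies in the thickened exception multiblock, so the total volume of bad cells is at most $2\eta$, by monotonicity in \Cref{lem_finite_multiblock_volume_monotone_additive}. Each bad cell contributes at most $2M\mu(P)$ to the difference of two sums.

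The cell classification has to avoid an undecidable branch test at an arbitrary tag. I expect this to be the main obstacle. My plan is to make it a finite rational test instead: compare the cell $P$, which is a block, with the finitely many rational blocks of the witness via $d(P,\cdot)$ and the $\oplus$ operation of \Cref{dfn_block_operations}. Where the test is inconclusive because of real-valued endpoints, I assign the cell to the bad class. This is harmless, since only the bound $M$ is used on bad cells. The determined value $f(\xi)$ at each tag is supplied by the domain convention.

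To compare two partitions $\mathcal P$ and $\mathcal Q$, I pass to their common refinement. On good parts, the refinement property and the oscillation bound give a difference of at most $\eps/2$. The bad parts of $\mathcal P$ and of $\mathcal Q$ each contribute at most $2M\cdot 2N\eta$. Choosing $\eta=\eps/(16NM)$ gives
\[
	\nrm{\Sigma(f,\mathcal P)-\Sigma(f,\mathcal Q)}\le\eps,
\]
which is the required Cauchy modulus. Finally, I would note that these tagged sums approach the simple-function integrals of \Cref{dfn_integral_simple_regular_function} through \Cref{lem_regular_simple_approximation}. This is not needed for integrability itself, but it identifies the value of the integral.
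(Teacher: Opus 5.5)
Your argument is correct in outline, but it follows a different route from the paper.

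The paper never classifies partition cells against witness blocks. It calls \Cref{lem_regular_functions_measurable} to obtain a continuous $g$ with $\nrm{f-g}\le\alpha$ outside an exception multiblock $\mathcal E$ of small cost, and enlarges $M$ so that it also bounds $g$. It then takes $\delta\le\delta_0$ from \Cref{rem_riemann_integral_continuous_block} applied to $g$, estimates $\nrm{\Sigma(f,\mathcal P)-\Sigma(g,\mathcal P)}\le\alpha\mu(B)+2M\gamma(\mathcal E\oplus\delta_0)$, and closes with the triangle inequality through $g$.

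You instead redo the continuous-function argument piecewise. You use oscillation control of $f_k^+$ on cells lying in a core, exact vanishing on exterior cells, a $2M$-times-volume bound on the remaining cells, and a common refinement of $\mathcal P$ and $\mathcal Q$.

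The paper's route is shorter because it reuses the cutoff-and-interpolation construction. However, it must look inside that construction to bound $g$, and it splits cells according to whether their tags lie in $\bar{\mathcal E}$, which is a pointwise decision about real tags. Your route needs only continuity of the branches on the cores and no auxiliary function. Because partition cells are rational blocks, your classification is a finite rational check, closer to the paper's no-localization conventions. The price is the bookkeeping with margins, collars, and the common refinement.

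One step needs repair.
\begin{itemize}
\item A cell contained in $A_k\oplus\delta$ but not in $A_k$ is only uniformly apart from $X_k^c$, which is all that $A_k\Subset X_k$ provides. This does not give positive membership in $X_k$, so $f_k^+$ is not available at its tag.
\item The exterior side is fine: $Z_k\Subset B\setminus X_k$ makes a small neighbourhood of $Z_k$ uniformly apart from $X_k$, hence contained in $X_k^c$.
\item The fix is to call a cell good on the core side only when it lies in a single constituent block of $\mathcal M_{X_k}$.
\item Any cell that is neither of this kind nor exterior falls into one of two cases. Either it meets $\bar{\mathcal E}_{X_k}$, and so lies in $\mathcal E_{X_k}\oplus\delta$. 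Or it lies in $A_k$ while straddling a hyperplane $x_l=t$ with $t\in T_l(B,\mathcal A_k)$, and so lies in the collar $\mathcal H_\rho(B;\mathcal A_k)$ once $\delta\le\rho$.
\item This is precisely the job of your collars, and your bound $2\eta$ on the volume of bad cells is unchanged. It also removes the need for the $2\delta$-thickening implicit in your phrase \emph{meets the $\delta$-neighborhood}.
\end{itemize}
With this definition, a point of a common refinement cell is positively in $X_k$ together with both tags. Your $\eps/2$ estimate on good cells then goes through as written.
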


\begin{proof}
	Fix $\eps>0$ and let $V:=\mu(B)$.
	If $V=0$, all tagged sums are zero.
	Assume $V>0$.
	Choose a representation of $f$ and a bound $M>0$ for $f$ on $B$.
	Enlarge $M$, if necessary, so that it also bounds the continuous comparison functions produced by the construction in \Cref{lem_regular_functions_measurable}.
	This is possible because those comparison functions are built from cutoffs between $0$ and $1$ and finite multiblock interpolants of the bounded branch data in the fixed representation.
	Apply \Cref{lem_regular_functions_measurable} with accuracy
	\[
		\beta:=
		\min\left\{
			\frac{\eps}{8V},
			\frac{\eps}{32(M+1)}
		\right\}
	\]
	and set
	\[
		\alpha:=\frac{\eps}{8V}.
	\]
	The construction in its proof gives a continuous function $g:B\to\R^m$ and an exception multiblock $\mathcal E$ such that $g$ is bounded on $B$ and
	\[
		\forall x\in B\cap\bar{\mathcal E}^c
		\quad
		\nrm{f(x)-g(x)}\le\alpha.
	\]
	Moreover,
	\[
		\gamma(\mathcal E)
		\le
		\frac{\eps}{32(M+1)}.
	\]
	Since $\mathcal E$ is finite, choose $\delta_0>0$ so small that
	\[
		\gamma(\mathcal E\oplus\delta_0)
		\le
		\gamma(\mathcal E)
		+
		\frac{\eps}{32(M+1)}.
	\]
	By \Cref{rem_riemann_integral_continuous_block}, choose $\delta\le\delta_0$ such that any two tagged multiblock sums of $g$ with mesh at most $\delta$ differ by at most $\eps/2$.
	Let $\mathcal P$ be such a tagged multiblock.
	The union of all blocks of $\mathcal P$ whose tags lie in $\bar{\mathcal E}$ is contained in $\bar{\mathcal E\oplus\delta_0}$.
	Therefore
	\[
	\begin{aligned}
		\nrm{\Sigma(f,\mathcal P)-\Sigma(g,\mathcal P)}
		&\le
		\alpha V
		+
		2M\gamma(\mathcal E\oplus\delta_0)
		\\
		&\le
		\frac{\eps}{8}
		+
		\frac{\eps}{8}
		\le
		\frac{\eps}{4}.
	\end{aligned}
	\]
	Hence, for any two tagged multiblocks $\mathcal P,\mathcal Q$ with mesh at most $\delta$,
	\[
		\nrm{\Sigma(f,\mathcal P)-\Sigma(f,\mathcal Q)}
		\le
		\frac{\eps}{4}
		+
		\frac{\eps}{2}
		+
		\frac{\eps}{4}
		=
		\eps.
	\]
	Thus the tagged sums of $f$ are Cauchy, so $f$ is Riemann integrable on $B$.
\end{proof}

\begin{definition}[Integral of an effectively bounded regular function]
	\label{dfn_integral_regular_function}
	Let $B$ be a full block, and let $f:B\to\R^m$ be an effectively bounded regular function whose supports are well contained in $B$.
	Define
	\[
		\int_B f(x)\diff x
	\]
	to be the Riemann integral given by \Cref{lem_regular_functions_riemann_integrable}.
	For real-valued simple regular functions this agrees with \Cref{dfn_integral_simple_regular_function}.
\end{definition}

\begin{remark}
	\label{rem_regular_integral_extends_simple}
	The agreement in \Cref{dfn_integral_regular_function} follows from the same exception-multiblock estimate used in \Cref{lem_regular_functions_riemann_integrable}.
	For a simple summand $c\indic{X}$, representability of $X$ gives a core multiblock and an exterior multiblock whose uncovered part has arbitrarily small cost.
	Tagged sums of $c\indic{X}$ therefore differ from $c\mu(X)$ only by the contribution of the exception layer, and finite additivity gives the stated formula for finite sums.
\end{remark}

\begin{example}[Bounded measurable data need not be Riemann integrable]
	\label{ex_bounded_measurable_not_riemann_integrable}
	In the classical reading, let $S\subset[0,1]$ be the Smith--Volterra--Cantor set from \Cref{ex_located_not_representable}.
	The indicator $\indic{S}$ is bounded and Lebesgue measurable.
	It is not Riemann integrable on $[0,1]$ because it is discontinuous at every point of $S$, and $S$ has positive Lebesgue measure.
	Thus bounded measurable functions are not automatically integrable in the Riemann sense used in \Cref{dfn_riemann_integrable_block}.
\end{example}

\subsection{Closure of Regular Functions}
\label{subsec_regular_function_closure}

\begin{remark}
	\label{rem_flattening_regular_functions}
	Flattening is the set operation of \Cref{dfn_representable_set_flattening}.
	When it is applied to supports appearing in a regular-function formula, it changes only the support data.
	The complemented summands already carry their positive and exterior parts, so no pointwise localization rule is added.
	Repeated, empty, degenerate, or zero-valued support pieces may be left in the finite data.
\end{remark}

\begin{lemma}[Finite algebra of regular functions]
	\label{lem_regular_functions_finite_algebra}
	Let $B$ be a full block.
	If $f,g\in\BReg{B}$ and $a,b\in\R$, then $af+bg\in\BReg{B}$ and
	\[
		\int_B(af+bg)
		=
		a\int_Bf+b\int_Bg.
	\]
	If $f_1,\ldots,f_r\in\BReg{B}$ are scalar regular functions, then every finite sum and finite product formed from them is again in $\BReg{B}$.
	If, in addition, after flattening the support data of a scalar regular function $g\in\BReg{B}$ one obtains formulas
	\[
		g=\sum_I g_I\indic{Y_I}
	\]
	such that
	\[
		\abs{g_I(x)}\ge c>0,
		\qquad
		x\in Y_I,
	\]
	on every support piece on which a numerator $f\in\BReg{B}$ may be nonzero, then $f/g\in\BReg{B}$.
	If the same bound holds on all support pieces of the flattened support data for $g$ on $B$, then the reciprocal $1/g$ belongs to $\BReg{B}$ on the represented support.
\end{lemma}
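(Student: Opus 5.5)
The plan is to reduce every claim to a common flattened representation and then read the algebra off summand by summand. Write $f=\sum_{k=1}^N f_k$ with supports $X_k$ and $g=\sum_{l=1}^{L} g_l$ with supports $Z_l$, all well contained in $B$. Apply \Cref{dfn_representable_set_flattening} and \Cref{lem_representable_set_flattening} to the combined family $X_1,\ldots,X_N,Z_1,\ldots,Z_L$ over $B$. This yields pairwise disjoint representable pieces $Y_I$. On each piece $Y_I$, the restriction of $f$ is the finite sum of the positive branches $f_k^+$ with $k\in I$; branches with $k\notin I$ contribute their exterior value $0$, since $Y_I\subseteq B\setminus X_k$ carries exactly the apartness witness needed. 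Thus each piece carries one continuous branch built from the continuous branches $f_k^+|_{Y_I}$. The exterior branch on $Y_I^c$ is taken to be $0$. This gives complemented bounded-support summands $F_I=(F_I^+,0)$ and $G_I=(G_I^+,0)$ over the common supports. As noted in \Cref{rem_flattening_regular_functions}, flattening changes only support data, and empty or degenerate pieces may stay.

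For the linear combination, set $(af+bg)_I:=aF_I^++bG_I^+$ on $Y_I$. This is continuous with modulus obtained from the two moduli, and $\abs a M_f+\abs b M_g$ is a bound witness. Linearity of the integral then follows from \Cref{dfn_integral_regular_function}: tagged sums are linear in the integrand, so the Riemann limits from \Cref{lem_regular_functions_riemann_integrable} are linear. For products, set $(fg)_I:=F_I^+G_I^+$, which has bound $M_fM_g$ and modulus $\omega(\eps)=\min\{\omega_F(\eps/(2M_g)),\omega_G(\eps/(2M_f))\}$. Here I use that on a piece the bounds $M_f$ and $M_g$ also bound the branches, because every point of $Y_I$ lies in $B\cap\dom f$. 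Finite sums and products of $f_1,\ldots,f_r$ then follow by induction, repeating the flattening at each step or flattening all supports once.

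For the quotient, flatten the supports of $f$ together with the given flattening of $g$. On every piece where the numerator may be nonzero, the hypothesis gives $\abs{G_I^+}\ge c$. There $F_I^+/G_I^+$ is continuous, with modulus derived from $\abs{u/v-u'/v'}\le(\abs{u-u'}\abs{v}+\abs u\abs{v-v'})/c^2$, and it is bounded by $M_f/c$. On pieces where the numerator is identically $0$ (those with $I$ containing no support of $f$), set the quotient to $0$. The reciprocal is the special case $f=\indic{\supp g}$, using that the bound holds on all pieces.

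I expect the main obstacle to be bookkeeping rather than analysis. The difficulty is justifying that the restriction of $f$ to $Y_I$ really is the stated single branch, using only positive witnesses and no disjointness-to-apartness inference, which \Cref{rem_regular_function_piecewise} forbids. The route I would try first is to read off the witnesses directly from the definition of $Y_I$ as an intersection of sets $X_k$ and relative complements $B\setminus X_k$. For integral linearity with vector values, a second point needs care: the integral of the new representation must coincide with the Riemann integral of the original formal sum. I would handle this by noting that the two agree outside an exception multiblock of arbitrarily small cost, with the $2M\gamma(\mathcal E)$ estimate from \Cref{lem_regular_functions_riemann_integrable}.
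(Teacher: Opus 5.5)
Your proposal is correct and follows essentially the same route as the paper: flatten the combined support data, attach to each flattened piece the corresponding linear combination, product, or quotient of the positive branches (with the zero formula where no original summand is present), obtain integral linearity from linearity of tagged sums, and treat the reciprocal as the quotient with numerator one on the represented support. Your explicit moduli and bound witnesses, and your reading of the branch witnesses directly from the definition of $Y_I$ as an intersection of supports and relative apartness complements, fill in details that the paper leaves implicit.
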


\begin{proof}
	For linearity, flatten the combined finite representation of the summands of $f$ and $g$ over $B$.
	The resulting supports are pairwise disjoint and representable by \Cref{lem_representable_set_flattening}.
	The positive part assigned to each support is the corresponding finite linear combination of the original positive parts, hence is continuous.
	Supports on which no original summand is present carry the zero formula and may remain only as redundant support data.
	The tagged-sum identity
	\[
		\Sigma(af+bg,\mathcal P)
		=
		a\Sigma(f,\mathcal P)+b\Sigma(g,\mathcal P)
	\]
	holds for every tagged multiblock $\mathcal P$, so the integral identity follows by passing to the Riemann limit.
	For products, write
	\[
		f=\sum_{i=1}^N f_i,
		\qquad
		g=\sum_{j=1}^M g_j
	\]
	with supports $X_i,Y_j\Subset B$.
	Flatten the finite family
	\[
		X_1,\ldots,X_N,Y_1,\ldots,Y_M
	\]
	over $B$.
	The resulting supports are pairwise disjoint and representable by \Cref{lem_representable_set_flattening}.
	For each flattened support, attach the product of the two finite sums of the original positive parts whose supports are present there.
	This formula is continuous, and it may be the zero formula.
	Thus $fg$ is regular.
	Finite products follow by induction, and finite sums by the linear part.
	For a polynomial expression, one may equivalently flatten the finitely many displayed supports and attach the polynomial formula to the corresponding support piece.
	If $\abs{g_I}\ge c>0$ on every support piece where the numerator formula may be nonzero, then $1/g_I$ is continuous there with the usual reciprocal modulus.
	Thus the formula $f_I/g_I$ represents $f/g$ on the corresponding support piece.
	The reciprocal is the special case with numerator $\one$ on the represented support.
\end{proof}

\begin{corollary}[Lattice closure]
	\label{cor_regular_functions_lattice_closure}
	Let $B$ be a full block and let $f,g\in\BReg{B}$ be scalar.
	Then
	\[
		f\land g,\qquad
		f\lor g,\qquad
		\abs{f}
	\]
	belong to $\BReg{B}$.
\end{corollary}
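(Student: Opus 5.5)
The plan is to mimic the product construction in \Cref{lem_regular_functions_finite_algebra}. We flatten the combined support data of $f$ and $g$ and attach, on each flattened piece, the pointwise minimum or maximum of the two local continuous formulas. Write
\[
	f=\sum_{i=1}^N f_i,\qquad g=\sum_{j=1}^M g_j
\]
with pairwise disjoint representable supports $X_i,Y_j\Subset B$. We flatten the family $X_1,\ldots,X_N,Y_1,\ldots,Y_M$ over $B$ via \Cref{dfn_representable_set_flattening}. \Cref{lem_representable_set_flattening} then yields pairwise disjoint representable pieces $Z_I$. Some pieces may have empty cores; they are kept as redundant data, as in \Cref{rem_regular_function_flattened_by_default}.

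On a piece $Z_I$, at most one $f_i$ and at most one $g_j$ are present through their positive branches, by disjointness within each family. All other summands are witnessed in their exterior branch and contribute $0$. So the local formula of $f$ on $Z_I$ is either $f_i^+$ or $\zero$, and similarly for $g$. We attach to $Z_I$ the positive branch $\phi_I\land\psi_I$, where $\phi_I,\psi_I$ are these local formulas. This is continuous with modulus $\min(\omega_{\phi_I},\omega_{\psi_I})$, since $(a,b)\mapsto a\land b$ is $1$-Lipschitz in each argument. The exterior branch is $0$, which is consistent because $0\land0=0$ on the apartness complement of $\supp f\cup\supp g$. The same construction with $\lor$ handles $f\lor g$, and $\abs{f}=f\lor(-f)$ follows using the linear part of \Cref{lem_regular_functions_finite_algebra}.

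Effective boundedness is immediate: if $M_f,M_g$ are bound witnesses, then $\max(M_f,M_g)$ bounds $f\land g$ and $f\lor g$, and $M_f$ bounds $\abs f$. Membership in $\BReg{B}$ therefore holds.

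I expect the main obstacle to be the identification of the local formula on a flattened piece. The flattened sets are defined through relative apartness complements, and we must check that a point of $Z_I$ positively witnesses the branches of every summand, so that $f$ and $g$ really are given there by $\phi_I,\psi_I$. By the definition of $Y_I$, a point of $Z_I$ lies in $X_i$ for $i\in I$ and in $B\setminus X_i$, an apartness witness, for $i\notin I$. This supplies exactly the branch witnesses required by \Cref{dfn_regular_function}. A piece indexed by two supports from the same family is contained in $X_i\cap X_{i'}=\emptyset$, so any formula may be attached to it vacuously; we attach $\zero$. Pointwise agreement of the new regular function with $f\land g$ on the determined domain then follows without any emptiness decision.
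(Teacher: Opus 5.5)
Your proposal is correct and follows the paper's route. You flatten the combined support data of $f$ and $g$ over $B$ via \Cref{lem_representable_set_flattening}, attach the pointwise $\min$, $\max$ (or absolute value) of the continuous local formulas, and keep the zero piece outside all supports only as refinement data. You additionally spell out the branch-witness bookkeeping on the flattened pieces and the bound witnesses $\max(M_f,M_g)$ and $M_f$, which the paper leaves implicit; obtaining $\abs{f}$ as $f\lor(-f)$ rather than applying the absolute value to the branches directly is only a cosmetic difference.
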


\begin{proof}
	Flatten fixed representations of $f$ and $g$ over $B$.
	The representing functions attached to the flattened supports are continuous.
	The pointwise operations $\min$, $\max$, and absolute value preserve continuity of scalar functions.
	Supports carrying the zero formula outside all supports may remain as refinement data and are not needed as summands.
	The finite representation is therefore regular by \Cref{lem_representable_set_flattening}.
\end{proof}

\begin{lemma}[Finite order and positivity]
	\label{lem_regular_integral_finite_order}
	Let $B$ be a full block and let $u,v\in\BReg{B}$.
	If $u\le v$ pointwise on $B$, then
	\[
		\int_Bu\le\int_Bv.
	\]
	In particular, if $0\le u\le c$, then
	\[
		0
		\le
		\int_Bu
		\le
		c\mu(B).
	\]
\end{lemma}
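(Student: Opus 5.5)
The plan is to reduce the order statement to positivity plus linearity. Set $w:=v-u$. By \Cref{lem_regular_functions_finite_algebra}, $w\in\BReg{B}$ and
\[
	\int_Bv-\int_Bu=\int_Bw,
\]
so it suffices to show that $w\ge0$ pointwise on its determined domain in $B$ forces $\int_Bw\ge0$. The second claim then follows by applying the first twice: with $u$ replaced by $\zero$ and $v$ by $u$, and with $v$ replaced by $c\indic{B'}$ for a full block $B'\Subset B$ approximating $B$. The comparison constant function has integral $c\mu(B')\le c\mu(B)$ by \Cref{dfn_integral_simple_regular_function}. Alternatively, the upper bound can be read off directly from tagged sums, since every tagged sum of a function with values in $[0,c]$ lies in $[0,c\mu(B)]$.

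For positivity I would work directly with tagged multiblock sums, since the integral is by \Cref{dfn_integral_regular_function} the Riemann limit. Fix $\eps>0$ and use \Cref{lem_regular_functions_riemann_integrable} to get a mesh $\delta$ at which every tagged sum is within $\eps$ of $\int_Bw$. The issue is that tags must lie in the determined domain of $w$, because only there is $w(\xi)\ge0$ asserted. Pick tags inside the trimmed cells of the construction in \Cref{lem_regular_functions_measurable}, that is, outside the exception multiblock $\mathcal E$. There each summand has a branch witness: core points are witnessed in $X_k$, and exterior points are witnessed in the apartness complement. Hence $w(\xi_k)$ is determined and nonnegative, so $\Sigma(w,\mathcal P)\ge0$.

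The one delicate point is partition cells lying entirely inside $\bar{\mathcal E}$, where no determined tag may be available. I would handle this as in the proof of \Cref{lem_regular_functions_riemann_integrable}. Compare $\Sigma(w,\mathcal P)$ with the tagged sum of the continuous comparison function $g$ from \Cref{lem_regular_functions_measurable}, which is defined everywhere. The discrepancy is at most $\alpha\mu(B)+2M\gamma(\mathcal E\oplus\delta_0)$, where $M$ is the bound witness of $w$, and this can be made at most $\eps$. On the good cells $g\ge w-\alpha\ge-\alpha$; on the bad cells contained in $\mathcal E\oplus\delta_0$ we have $\abs{g}\le M$. Together these give $\int_Bw\ge-\alpha\mu(B)-M\gamma(\mathcal E\oplus\delta_0)-\eps$, which can be made at most $-3\eps$ in absolute terms, i.e. $\int_Bw\ge-3\eps$.

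Since $\eps$ was arbitrary, $\int_Bw\ge0$. This is a constructive inequality between reals, obtained as the negation of $\int_Bw<0$, and that is exactly what the stated $\le$ requires. The main obstacle I expect is the bookkeeping that tags chosen outside $\mathcal E$ genuinely carry branch witnesses, so that the pointwise hypothesis $u\le v$ actually applies at them. For this I would rely on the trimmed-cell classification argument already used in the proof of \Cref{lem_regular_functions_measurable}.
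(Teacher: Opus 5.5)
Your argument is correct and is essentially the paper's: the paper compares $\Sigma(u,\mathcal P)\le\Sigma(v,\mathcal P)$ directly on a common fine tagged multiblock and passes to the Riemann limit, so the reduction to $w=v-u$ via \Cref{lem_regular_functions_finite_algebra} and the fallback through the continuous comparison function $g$ are harmless but unnecessary; the paper then gets the second claim from the same comparison with $c\one$, which is your direct tagged-sum alternative. If you keep the $c\indic{B'}$ variant instead, choose $B'\Subset B$ so that it contains the supports of $u$, since merely ``approximating $B$'' does not guarantee $u\le c\indic{B'}$.
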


\begin{proof}
	For every tagged multiblock $\mathcal P$ partitioning $B$,
	\[
		\Sigma(u,\mathcal P)\le\Sigma(v,\mathcal P).
	\]
	Let $I_u$ and $I_v$ denote the two integrals.
	For an arbitrary $\eta>0$, choose a mesh bound $\delta>0$ such that every tagged multiblock $\mathcal P$ with mesh at most $\delta$ satisfies
	\[
		\abs{I_u-\Sigma(u,\mathcal P)}\le\eta/2,
		\qquad
		\abs{I_v-\Sigma(v,\mathcal P)}\le\eta/2.
	\]
	For such a common tagged multiblock $\mathcal P$ one has
	\[
		I_u
		\le
		\Sigma(u,\mathcal P)+\eta/2
		\le
		\Sigma(v,\mathcal P)+\eta/2
		\le
		I_v+\eta.
	\]
	Since $\eta>0$ is arbitrary, this is precisely the constructive order estimate $I_u\le I_v$.
	The second claim follows by applying the first claim to $0\le u\le c\one$.
\end{proof}

\begin{definition}[Rectangularly regular function]
	\label{dfn_rectangularly_regular_function}
	Let $A\subset\R^n$ and $C\subset\R^m$ be full blocks.
	A scalar function $h:A\times C\to\R$ is called \emph{rectangularly regular} if it has a finite representation
	\[
		h(x,y)=\sum_{q=1}^N h_q(x,y)\indic{X_q}(x)\indic{Y_q}(y),
	\]
	where $X_q\Subset A$ and $Y_q\Subset C$ are representable and each branch $h_q:X_q\times Y_q\to\R$ is continuous.
	By \Cref{lem_representable_product}, every support $X_q\times Y_q$ is representable, so a rectangularly regular function is a regular function on the product block $A\times C$.
\end{definition}

\begin{remark}
	\label{rem_rectangularly_regular_classical}
	Classically, a rectangularly regular function is analogous to a finite sum of continuous functions carried by measurable rectangles.
	The word ``rectangular'' records that each support splits as an $x$-support times a $y$-support.
	This is stronger than being merely regular on $A\times C$, but it is exactly the structure that makes the inner integrals in \Cref{thm_fubini_rectangular_regular} regular again.
\end{remark}

\begin{theorem}[Fubini for rectangularly regular functions]
	\label{thm_fubini_rectangular_regular}
	Let $A\subset\R^n$ and $C\subset\R^m$ be full blocks.
	Let $h:A\times C\to\R$ be effectively bounded and rectangularly regular.
	Then the functions
	\[
		H_A(x):=\int_C h(x,y)\diff y,
		\qquad
		H_C(y):=\int_A h(x,y)\diff x,
	\]
	are regular on $A$ and $C$, respectively, and
	\[
		\int_{A\times C}h(x,y)\diff(x,y)
		=
		\int_A H_A(x)\diff x
		=
		\int_C H_C(y)\diff y.
	\]
\end{theorem}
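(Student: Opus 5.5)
By linearity of the integral (\Cref{lem_regular_functions_finite_algebra}) and the finiteness of the representation, it suffices to treat a single summand $h_q(x,y)\indic{X_q}(x)\indic{Y_q}(y)$. Write $X=X_q$, $Y=Y_q$, $g=h_q$, and let $M$ be the bound witness for $h$. Only the first identity and regularity of $H_A$ need proof; the statements for $H_C$ follow by interchanging the roles of $A$ and $C$. I will not try to show directly that $x\mapsto\int_C g(x,y)\indic{Y}(y)\diff y$ is continuous on $X$, since the branch $g$ is only continuous on $X\times Y$ and integrating over a representable $Y$ needs care. Instead I will use the representability witnesses of $Y$. For accuracy $\eta$, take the core $\mathcal M_Y$, the exception $\mathcal E_Y$, and the exterior $\mathcal M_Y^c$, and set
\[
G_\eta(x):=\int_C g(x,y)\,\theta_\eta(y)\diff y .
\]
Here $\theta_\eta$ is either the indicator of the normalized core cells or the continuous cutoff from the proof of \Cref{lem_regular_functions_measurable}, which is continuous on $C$, equal to $1$ on $\bar{\mathcal M}_Y$ and $0$ on $\overline{\mathcal M_Y^c}$. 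To make $g$ defined on the whole integration domain, first replace $g$ on the core blocks by a continuous extension to $C$ using \Cref{lem_finite_multiblock_interpolation}.

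The key steps are as follows. First, $G_\eta$ is continuous in $x$ on $X$ with modulus inherited from the uniform modulus of $g$, since $\abs{G_\eta(x)-G_\eta(x')}\le\mu(C)\sup_y\abs{g(x,y)-g(x',y)}$. Second, $\abs{H_A(x)-G_\eta(x)}\le 2M\gamma(\mathcal E_Y)+(\text{interpolation error})\,\mu(C)$ uniformly in $x\in X$, because the two integrands agree outside the exception layer. This estimate uses the tagged-sum bound already used in \Cref{lem_regular_functions_riemann_integrable} together with \Cref{lem_regular_integral_finite_order}. Hence $G_\eta\to H_A$ uniformly on $X$ with an explicit rate, so $H_A$ restricted to $X$ is a uniform limit of continuous functions with computable moduli and is therefore continuous. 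On $X^c$ the value is $0$. Thus $H_A=(H_A^+,0)$ is a bounded-support complemented function over the representable set $X$, which makes it regular and effectively bounded by $M\mu(C)$.

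For the iterated integral identity, first prove it for products $g(x,y)=\phi(x)\psi(y)$ with both factors simple, i.e.\ indicators of blocks. For these it is the product rule for volumes (\Cref{lem_representable_product}, and \Cref{lem_finite_multiblock_volume_monotone_additive} for block cells). Next, approximate $g\,\indic{X\times Y}$ by simple regular functions built on product cells $C_i\times D_j$ of the normalized core multiblocks of $X$ and $Y$ with tags, as in \Cref{lem_regular_simple_approximation}, but using product refinements so that every cell is a rectangle. Outside a product exception layer of cost at most $\gamma(\mathcal E_X)\mu(C)+\gamma(\mathcal E_Y)\mu(A)$ (the estimate from \Cref{lem_representable_product}), the error is at most $\eps$. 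Both sides of the identity change by at most $\eps\mu(A\times C)+2M\cdot(\text{exception cost})$: for the left side this follows from Riemann-sum bounds, and for the right side it follows by applying the same bound first to the inner integrals pointwise in $x$, outside the $x$-exception, and then integrating. Letting the accuracy go to $0$ gives the equality.

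I expect the main obstacle to be the pointwise inner-integral error on the right-hand side. The product exception set is not itself a product, so for a fixed $x$ the slice of the exception layer can be large when $x$ lies in $\mathcal E_X$. I would handle this by splitting the product exception into the two types listed in \Cref{lem_representable_product}: $\mathcal E_X\times(\cdot)$ and $(\cdot)\times\mathcal E_Y$. For $x$ in the apartness complement of $\bar{\mathcal E}_X$, only the $\mathcal E_Y$-slice contributes, which gives an inner error of at most $2M\gamma(\mathcal E_Y)$. For $x$ in $\bar{\mathcal E}_X$, the crude bound $2M\mu(C)$ applies. Integrating over $x$ then gives a total of $2M(\gamma(\mathcal E_Y)\mu(A)+\gamma(\mathcal E_X)\mu(C))$. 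Controlling the $x\in\bar{\mathcal E}_X$ part inside the Riemann sums of $H_A$ again uses the $\mathcal E\oplus\delta_0$ trick from \Cref{lem_regular_functions_riemann_integrable}.
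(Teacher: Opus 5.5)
Your proposal is correct. For the iterated-integral identity it follows the paper's route. The paper also works on products of the core multiblocks of witnesses for $X$ and $Y$, and it charges the remaining cells the product cost $\gamma(\mathcal E_X)\mu(C)+\gamma(\mathcal E_Y)\mu(A)$ times the bound witness. Where the paper applies ``Fubini for tagged-product Riemann sums of the continuous branch on core product blocks,'' you use simple functions on product cells; this is an equivalent packaging. Your split of the non-product exception into $\mathcal E_X\times(\cdot)$ and $(\cdot)\times\mathcal E_Y$ makes explicit the slicing estimate for $\int_A H_A$, which the paper's one-line argument leaves implicit.

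The genuine difference is in proving that $H_A$ is regular. The paper proves continuity of $H_A$ on $X$ directly. For $x,x'\in X$, the function $y\mapsto(g(x,y)-g(x',y))\indic{Y}(y)$ is regular with the single support $Y$. It is dominated by $\omega\,\indic{Y}$, where $\omega$ bounds the oscillation of $g$ given by its joint modulus. Linearity and \Cref{lem_regular_integral_finite_order} then give $\abs{H_A(x)-H_A(x')}\le\omega\,\mu(Y)$. The care you were worried about when integrating over a representable $Y$ is already contained in the finite-order lemma and in $\int_C\indic{Y}=\mu(Y)$, so the approximants $G_\eta$ are not needed.

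Your detour is still valid. It uses only Riemann integrals of continuous integrands together with the exception-layer estimate, at the cost of an extra uniform-limit step. If you take the continuous-cutoff variant, build the extension of $g$ with one common $y$-net for all $x$, as in the formula of \Cref{lem_finite_multiblock_interpolation}. Then $\tilde g(x,y)$ is a combination of the values $g(x,p_s)$ with nonnegative weights summing to at most $1$. This gives the bound $M$ on $\tilde g$, and it also transfers the $x$-modulus of $g$ to $\tilde g$, which your continuity estimate for $G_\eta$ needs. If you take the indicator-of-core-cells variant, no extension is needed at all.
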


\begin{proof}
	By linearity it is enough to consider one summand
	\[
		h(x,y)=\phi(x,y)\indic{X}(x)\indic{Y}(y).
	\]
	For fixed $x\in X$, the function $y\mapsto\phi(x,y)\indic{Y}(y)$ is effectively bounded regular on $C$ and hence integrable by \Cref{lem_regular_functions_riemann_integrable}.
	Set
	\[
		\Phi(x):=\int_C \phi(x,y)\indic{Y}(y)\diff y,
		\qquad x\in X.
	\]
	The map $\Phi$ is continuous on $X$.
	Indeed, uniform continuity of $\phi$ on the bounded set $X\times Y$ gives, for nearby $x,x'\in X$,
	\[
		\abs{\phi(x,y)-\phi(x',y)}
	\]
	uniformly small on $Y$, and \Cref{lem_regular_integral_finite_order} bounds the resulting integral difference by this uniform bound times $\mu(Y)$.
	Thus
	\[
		H_A=\Phi\indic{X}
	\]
	is regular on $A$.
	The same argument gives regularity of $H_C$.
	To prove the equality of the integrals, first take representability witnesses for $X$ and $Y$.
	On the product of the corresponding core multiblocks, the branch $\phi$ is continuous on a finite union of full product blocks, where ordinary tagged-product Riemann sums satisfy Fubini.
	The cells involving an exception multiblock have total product cost bounded by
	\[
		\gamma(\mathcal E_X)\mu(C)+\gamma(\mathcal E_Y)\mu(A).
	\]
	Multiplying this cost by the bound witness for $\phi$ makes the contribution of those cells arbitrarily small.
	Letting the witness accuracies tend to zero gives the displayed identity for one summand.
	Finite additivity and linearity complete the proof.
\end{proof}

\begin{corollary}[Regular kernel action]
	\label{cor_regular_kernel_action}
	Let $A\subset\R^n$ and $C\subset\R^m$ be full blocks.
	Let $K:A\times C\to\R$ be effectively bounded and rectangularly regular, and let $\rho\in\BReg{A}$.
	Then
	\[
		y\mapsto\int_A K(x,y)\rho(x)\diff x
	\]
	is an effectively bounded regular function on $C$.
\end{corollary}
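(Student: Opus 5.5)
The plan is to reduce the corollary to \Cref{thm_fubini_rectangular_regular} applied to the product $h(x,y):=K(x,y)\rho(x)$. First I will show that $h$ is effectively bounded and rectangularly regular on $A\times C$. Write $K=\sum_{q=1}^N K_q(x,y)\indic{X_q}(x)\indic{Y_q}(y)$ as in \Cref{dfn_rectangularly_regular_function}, and $\rho=\sum_{j=1}^P\rho_j$ with pairwise disjoint representable supports $Z_j\Subset A$ and continuous positive branches $\rho_j^+$.

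Next I flatten the finite family $X_1,\ldots,X_N,Z_1,\ldots,Z_P$ over $A$ by \Cref{lem_representable_set_flattening}, which yields pairwise disjoint representable pieces $W_I\Subset A$. For each pair $(q,j)$ the product $K_q(x,y)\rho_j^+(x)$ is continuous on $(X_q\cap Z_j)\times Y_q$. The products of uniformly continuous functions on bounded sets are uniformly continuous, with moduli computed from the two moduli and the bounds; if $\rho_j^+$ lacks an explicit bound I take it from the bound witness of $\rho$. Hence
\[
h(x,y)=\sum_{q,I}\Bigl(K_q(x,y)\sum_{j:\,Z_j\supseteq W_I}\rho_j^+(x)\Bigr)\indic{W_I\cap X_q}(x)\indic{Y_q}(y).
\]
Here $W_I\cap X_q$ is representable by \Cref{lem_representable_finite_intersection} and is well contained in $A$. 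This is again a rectangularly regular representation, in the same way that the product step is carried out in \Cref{lem_regular_functions_finite_algebra}. If $M_K$ and $M_\rho$ are bound witnesses for $K$ and $\rho$, then $M_KM_\rho$ is a bound witness for $h$.

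With these in hand, \Cref{thm_fubini_rectangular_regular} shows that $H_C(y)=\int_A h(x,y)\diff x=\int_A K(x,y)\rho(x)\diff x$ is regular on $C$. For the effective bound, \Cref{lem_regular_integral_finite_order} applied to $\pm h(\cdot,y)\le M_KM_\rho\one$ gives $\abs{H_C(y)}\le M_KM_\rho\mu(A)$, and a rational upper bound of this quantity serves as the witness.

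I expect the main obstacle to be bookkeeping rather than the integration itself. First, the theorem's proof yields $H_C$ in the form $\sum_q\Psi_q(y)\indic{Y_q}(y)$, and the supports $Y_q$ may overlap, so they must be flattened over $C$ to meet \Cref{dfn_regular_function}. Second, I must check that each $\Psi_q$ is continuous with a modulus on $Y_q$ even though $x$ runs over a set that is merely representable. For the second point I will follow the argument in the Fubini proof: a uniform modulus of $K_q\rho$ in $y$, multiplied by $\mu(W_I\cap X_q)$, controls the difference of integrals. For the first point I will handle the possible undetermined boundary layers of the $Y_q$ exactly as in \Cref{rem_regular_function_flattened_by_default}.
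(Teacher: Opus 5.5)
Your proposal is correct and follows the paper's proof. The paper also writes $K\rho$ as a rectangularly regular function with $x$-supports $X_q\cap Z_j$, applies \Cref{thm_fubini_rectangular_regular}, and takes $M_KM_\rho\mu(A)$ as the bound witness. Your extra flattening of the $X_q$ together with the $Z_j$ is harmless but unnecessary, since the $Z_j$ are already disjoint and rectangular regularity does not require disjoint $X_q$. Likewise, the bookkeeping you anticipate for $H_C$ (flattening the $Y_q$ and continuity of the inner integrals) is already part of the conclusion of \Cref{thm_fubini_rectangular_regular}, so you may simply cite it.
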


\begin{proof}
	Write
	\[
		K(x,y)
		=
		\sum_{q=1}^N K_q(x,y)\indic{X_q}(x)\indic{Y_q}(y),
		\qquad
		\rho(x)=\sum_{i=1}^M\rho_i(x).
	\]
	Let $R_i$ be the support of $\rho_i$.
	By \Cref{lem_representable_finite_intersection}, each $X_q\cap R_i$ is representable.
	Hence
	\[
		K(x,y)\rho(x)
		=
		\sum_{q=1}^N\sum_{i=1}^M
		K_q(x,y)\rho_i^+(x)\indic{X_q\cap R_i}(x)\indic{Y_q}(y)
	\]
	is rectangularly regular on $A\times C$.
	The claim follows from \Cref{thm_fubini_rectangular_regular}.
	A bound witness is obtained by multiplying bound witnesses for $K$ and $\rho$ and then by $\mu(A)$.
\end{proof}

\begin{remark}[Relation to Chan integration spaces]
	\label{rem_regular_chan_appendix}
	The finite calculus above is the part needed for finite algebraic manipulations of regular densities.
	For comparison, \Cref{app_chan_integration_space} records the stronger Chan integration-space statement of \cite{Chan2019FoundationsCon} and the Ye-style countable positivity lemma based on \cite[Lemma~6.1]{Ye2011StrictFinitism}.
	That material is useful for abstract Daniell-type extensions, but it is not needed for the finite closure and Fubini facts used here.
\end{remark}

\section{Optimal Functions}
\label{sec_optimal_functions}

\begin{remark}
	This section proves constructive extremum-value results for spaces of functions.
	The supremum-metric theorem uses finite interpolation and smoothing to obtain approximate minimizers for uniformly continuous functionals.
	The later $d_1$ and $d_2$ variants use the integration theory and impose finite support-geometry data to control the effect of moving discontinuity sets.
\end{remark}

\subsection{Function Spaces}

\begin{definition}[Supremum metric]
	\label{dfn_supremum_metric}
	Let $B$ be a block and let $f,g:B\to\R^m$ be continuous functions.
	Their supremum distance is
	\[
		d_\infty(f,g):=\sup_{x\in B}\nrm{f(x)-g(x)}.
	\]
\end{definition}

\begin{remark}
	\label{rem_supremum_metric_locally_regular}
	The supremum in \Cref{dfn_supremum_metric} exists constructively.
	Indeed,
	\[
		\psi(x):=\nrm{f(x)-g(x)}
	\]
	is continuous on the block $B$ and therefore has a modulus on $B$.
	A block has explicit finite rational nets at every precision, so the finite-net construction gives the supremum of $\psi$.
	If, in particular, $f$ and $g$ are $L$-Lipschitz, then
	\[
		\abs{\psi(x)-\psi(y)}
		\le \nrm{(f(x)-g(x))-(f(y)-g(y))}
		\le 2L\nrm{x-y}.
	\]
\end{remark}

\begin{definition}[Lipschitz-bounded function space]
	\label{dfn_lipschitz_bounded_function_space}
	Let $B$ be a block and let $L,K>0$ be rational.
	Denote by
	\[
		\mathcal F(B,L,K;m)
	\]
	the space of all Lipschitz-continuous functions $f:B\to\R^m$ such that
	\begin{equation}
		\label{eqn_lipschitz_space_lip}
		\Lip{B}(f)\le L,
	\end{equation}
	and
	\begin{equation}
		\label{eqn_lipschitz_space_bound}
		\forall x\in B\spc \nrm{f(x)}\le K.
	\end{equation}
	The metric on $\mathcal F(B,L,K;m)$ is $d_\infty$.
\end{definition}

\begin{definition}[Uniformly continuous functional]
	\label{dfn_uniformly_continuous_functional}
	Let $\mathcal F$ be a metric function space.
	A map $J:\mathcal F\to\R$ is called a \emph{uniformly continuous functional} if it is equipped with a modulus $\alpha_J:\Q_{>0}\to\Q_{>0}$ such that
	\[
		d_\infty(f,g)\le\alpha_J(\eps)
		\quad\Longrightarrow\quad
		\abs{J[f]-J[g]}\le\eps.
	\]
\end{definition}

\begin{definition}[Approximate minimizer]
	\label{dfn_approximate_minimizer}
	Let $J:\mathcal F\to\R$ and let $\eps>0$.
	A function $f_\eps\in\mathcal F$ is called an \emph{$\eps$-minimizer} of $J$ on $\mathcal F$ if
	\[
		\forall f\in\mathcal F\spc J[f_\eps]\le J[f]+\eps.
	\]
\end{definition}

\subsection{Smoothing}

\begin{definition}[Projection onto a block]
	\label{dfn_projection_block}
	Let $B=\prod_{k=1}^n[a_k,b_k]$ be a full block.
	The projection $\proj{\cdot}{B}:\R^n\to B$ is defined coordinatewise by
	\[
		(\proj{x}{B})_k:=\min\{b_k,\max\{a_k,x_k\}\}.
	\]
\end{definition}

\begin{lemma}
	\label{lem_projection_block_nonexpansive}
	The projection $\proj{\cdot}{B}$ is nonexpansive:
	\[
		\nrm{\proj{x}{B}-\proj{y}{B}}\le\nrm{x-y}.
	\]
\end{lemma}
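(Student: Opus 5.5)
The plan is to reduce the Euclidean estimate to a one-dimensional inequality in each coordinate and then sum the squares. By \Cref{dfn_projection_block}, the projection acts separately on each coordinate through the clamping map
\[
	c_k(t):=\min\{b_k,\max\{a_k,t\}\},\qquad t\in\R,
\]
so that $(\proj{x}{B})_k=c_k(x_k)$. The first step is therefore to prove that each $c_k$ is $1$-Lipschitz on $\R$:
\[
	\abs{c_k(s)-c_k(t)}\le\abs{s-t}.
\]

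For this scalar estimate, I would not split into cases according to where $s$ and $t$ lie relative to $a_k$ and $b_k$. Such a split would need decidable comparisons of arbitrary reals with rationals, which are not available constructively. Instead I would use the elementary inequalities for real $u,v,w$,
\[
	\abs{\max\{w,u\}-\max\{w,v\}}\le\abs{u-v},
	\qquad
	\abs{\min\{w,u\}-\min\{w,v\}}\le\abs{u-v}.
\]
These follow from the identities $\max\{w,u\}=\tfrac12(w+u+\abs{w-u})$ and $\min\{w,u\}=\tfrac12(w+u-\abs{w-u})$, combined with the reverse triangle inequality $\abs{\abs{w-u}-\abs{w-v}}\le\abs{u-v}$. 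All of these are constructively valid identities and inequalities between real numbers, proved at the level of rational approximations. Composing the two estimates, first with $w=a_k$ for the max and then with $w=b_k$ for the min, gives the $1$-Lipschitz bound for $c_k$.

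The final step is to square the scalar bounds and sum over $k=1,\ldots,n$:
\[
	\nrm{\proj{x}{B}-\proj{y}{B}}^2=\sum_{k=1}^n\abs{c_k(x_k)-c_k(y_k)}^2\le\sum_{k=1}^n\abs{x_k-y_k}^2=\nrm{x-y}^2.
\]
Taking square roots is monotone, so the claim follows. The only place needing care is the constructive handling of $\min$ and $\max$ without case distinctions. The absolute-value formulas remove that obstacle, so I expect no genuine difficulty.
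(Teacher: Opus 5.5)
Your proposal is correct and follows the paper's proof: show each coordinate clipping map $t\mapsto\min\{b_k,\max\{a_k,t\}\}$ is nonexpansive, then sum the squared coordinate inequalities. The one difference is that you spell out, through the absolute-value formulas for $\min$ and $\max$, why the scalar clipping estimate holds without deciding where a real number sits relative to $a_k$ and $b_k$, whereas the paper simply asserts it.
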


\begin{proof}
	Each scalar clipping map $t\mapsto\min\{b_k,\max\{a_k,t\}\}$ is nonexpansive on $\R$.
	Summing the squared coordinate inequalities gives the claim.
\end{proof}

\begin{definition}[Mollifier]
	\label{dfn_mollifier}
	A \emph{mollifier} is a continuously differentiable function $\varphi:\R^n\to\R$ such that
	\[
		\varphi\ge0,\qquad
		\int_{\R^n}\varphi(z)\diff z=1,\qquad
		\supp{\varphi}\subseteq \{z:\nrm{z}\le1\},
	\]
	and the quantities
	\[
		\Lambda_\varphi:=\int_{\R^n}\nrm{\nabla\varphi(z)}\diff z,
		\qquad
		M_\varphi:=\sup_z\varphi(z)
	\]
	are computable.
	For $r>0$, set
	\[
		\varphi_r(z):=\frac{1}{r^n}\varphi\left(\frac{z}{r}\right).
	\]
	The integrals over $\R^n$ are understood in the bounded-support sense of \Cref{dfn_integral_bounded_support}; the support condition reduces them to integrals over any block containing the unit ball.
\end{definition}

\begin{example}[Mollifiers]
	\label{ex_mollifiers}
	The standard radial example is
	\[
		\varphi(z)=c_n
		\begin{cases}
			\exp\left(-\frac{1}{1-\nrm{z}^2}\right), & \nrm{z}<1,\\
			0, & \nrm{z}\ge 1,
		\end{cases}
	\]
	where $c_n$ is chosen so that $\int_{\R^n}\varphi(z)\diff z=1$.
	A product example is obtained from any nonnegative $C^1$ function $\theta:\R\to\R$ supported in $[-1/\sqrt n,1/\sqrt n]$ with $\int_{\R}\theta(t)\diff t=1$ by setting
	\[
		\varphi(z)=\prod_{k=1}^n\theta(z_k).
	\]
	In both cases the support is contained in the unit ball and the constants required in \Cref{dfn_mollifier} are computable from the defining data.
\end{example}

\begin{definition}[Smoothed contraction]
	\label{dfn_smoothed_contraction}
	Let $f\in\mathcal F(B,L,K;m)$ and let $0<\rho<1$.
	First extend and contract $f$ by
	\[
		\bar f_\rho(x):=(1-\rho)f(\proj{x}{B}).
	\]
	For $r>0$, define
	\[
		S_{\rho,r}[f](x)
		:=
		(\varphi_r*\bar f_\rho)(x)
		\triangleq
		\int_{\R^n}\varphi_r(z)\bar f_\rho(x-z)\diff z.
	\]
	The integrand is continuous in $z$ and has bounded support contained in $\{z:\nrm{z}\le r\}$, so the integral exists by \Cref{rem_riemann_integral_continuous_block,dfn_integral_bounded_support}.
\end{definition}

\begin{lemma}[Differentiating a compactly supported convolution]
	\label{lem_differentiate_compact_convolution}
	Let $\psi:\R^n\to\R$ be continuously differentiable with bounded support, and let $h:\R^n\to\R^m$ be continuous and bounded on blocks.
	Define
	\[
		U(x):=\int_{\R^n}\psi(z)h(x-z)\diff z.
	\]
	Then $U$ is continuously differentiable and
	\[
		\Diff U(x)
		=
		\int_{\R^n}h(x-z)\nabla\psi(z)^\top\diff z.
	\]
	Here $\nabla\psi$ is understood as in \Cref{ntn_gradient}.
\end{lemma}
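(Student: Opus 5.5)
The plan is to reduce everything to a fixed bounded region and then run the classical difference-quotient argument with explicit moduli. Fix a rational $R>0$ with $\supp{\psi}\subseteq[-R,R]^n=:B_R$. All integrals in the definition of $U$ are then integrals over $B_R$ of continuous integrands, which exist by \Cref{rem_riemann_integral_continuous_block,dfn_integral_bounded_support}. Fix a point $x$, or rather a block $D$ of points $x$ (differentiability must be delivered with a modulus, and the derivative map must be continuous). Put $D':=D\oplus(R+1)$. Then $h$ is uniformly continuous on $D'$ with modulus $\omega_h(\cdot\mid D')$ and bounded there by some $M_h$. Likewise $\nabla\psi$ is uniformly continuous on the block $B_R\oplus1$ with modulus $\omega_{\nabla\psi}$, and $\psi$ vanishes outside $B_R$.

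The key step is to change variables so the increment falls on $\psi$ rather than on $h$, which is merely continuous. Using translation invariance of the bounded-support integral, write
\[
U(x)=\int_{\R^n}\psi(x-w)h(w)\diff w ,
\]
with integration over the block $D'$ for all $x\in D$. Translation invariance has to be justified. I would do this by tagged sums: translating a tagged multiblock by a rational vector gives a tagged multiblock of the same mesh with identical sum, and real shifts are handled by approximating with rational shifts using the uniform continuity of the integrand. For $y\in D$ with $\nrm{y-x}$ small I then get
\[
U(y)-U(x)-A_x(y-x)=\int_{D'}h(w)\bigl[\psi(y-w)-\psi(x-w)-\nabla\psi(x-w)^\top(y-x)\bigr]\diff w ,
\]
where $A_x v:=\int h(w)\,\nabla\psi(x-w)^\top v\diff w$. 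Changing variables back ($z=x-w$) gives the claimed formula $\int h(x-z)\nabla\psi(z)^\top\diff z$.

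The next task is to bound the bracket. The mean value inequality along the segment from $x-w$ to $y-w$ bounds it by $\nrm{y-x}\sup_{t\in[0,1]}\nrm{\nabla\psi(x-w+t(y-x))-\nabla\psi(x-w)}$. This requires a constructive one-dimensional mean value inequality. I would derive it by integrating the derivative of $t\mapsto\psi(x-w+t(y-x))$, that is, by a fundamental theorem of calculus for $C^1$ functions on an interval with uniform derivative modulus; this is routine via Riemann sums. The supremum factor is at most $\eps/(M_h\mu(D')+1)$ once $\nrm{y-x}\le\omega_{\nabla\psi}(\eps/(M_h\mu(D')+1))$. Then \Cref{lem_regular_integral_finite_order}, applied componentwise to continuous integrands, yields
\[
\nrm{U(y)-U(x)-A_x(y-x)}\le\eps\nrm{y-x}.
\]
This gives a modulus $\omega_{\Diff U,x}$ that is independent of $x\in D$.

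Continuity of $x\mapsto\Diff U(x)$ in operator norm follows the same way. The difference $\Diff U(x)-\Diff U(x')$ equals $\int h(w)[\nabla\psi(x-w)-\nabla\psi(x'-w)]^\top\diff w$, whose operator norm is at most $M_h\mu(D')\sup\nrm{\nabla\psi(x-w)-\nabla\psi(x'-w)}$; this is controlled by $\omega_{\nabla\psi}$. The main obstacle is the bookkeeping of the translation-invariance and mean-value steps in the paper's Riemann-sum framework. The estimates themselves are standard.
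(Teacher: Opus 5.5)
Your proposal is correct and follows the paper's own proof. Both substitute $w=x-z$ so that the increment falls on $\psi$ rather than on $h$. Both then localize $x$ to a block and $w$ to a larger block on which $h$ is bounded, apply a uniform first-order estimate for $\psi$ and integrate, and obtain continuity of $\Diff U$ from uniform continuity of $\nabla\psi$. You additionally justify the change of variables and the uniform Taylor estimate, which the paper simply asserts.

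One small slip: for $w\in D'$ the segments $x-w+t(y-x)$ need not lie in $B_R\oplus 1$. So either take the modulus of $\nabla\psi$ on the block of all such differences, as the paper does, or observe that $\nabla\psi$ vanishes off $B_R$.
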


\begin{proof}
	Using the change of variables $y=x-z$, write
	\[
		U(x)=\int_{\R^n}\psi(x-y)h(y)\diff y.
	\]
	Fix a block $B_0$ for the variable $x$.
	Since $\psi$ has bounded support, there exists a block $Q$ such that, for all $x\in B_0$, the functions $y\mapsto\psi(x-y)h(y)$ and $y\mapsto h(y)\nabla\psi(x-y)^\top$ vanish outside $Q$.
	The functions
	\[
		y\mapsto \psi(x-y)h(y),
		\qquad
		y\mapsto h(y)\nabla\psi(x-y)^\top
	\]
	are continuous on $Q$, hence Riemann integrable by \Cref{rem_riemann_integral_continuous_block}.
	Let
	\[
		M_Q:=\sup_{y\in Q}\nrm{h(y)},
		\qquad
		V_Q:=\mu(Q).
	\]
	If $M_QV_Q=0$ the claim is immediate, so assume $M_QV_Q>0$.
	Let $p\in\R^n$ be small.
	By differentiability of $\psi$ with continuous derivative, uniformly on the block containing all points $x-y$ and $x+p-y$ with $x\in B_0$ and $y\in Q$, we have
	\[
		\nrm{\psi(x+p-y)-\psi(x-y)-\nabla\psi(x-y)^\top p}
		\le
		\frac{\eps}{M_QV_Q}\nrm{p}
	\]
	whenever $\nrm{p}$ is below a suitable modulus.
	Therefore
	\[
		\nrm{
		U(x+p)-U(x)
		-
		\int_Q h(y)\nabla\psi(x-y)^\top p\diff y
		}
		\le
		\eps\nrm{p}.
	\]
	This is precisely differentiability on $B_0$, with
	\[
		\Diff U(x)p=
		\int_Q h(y)\nabla\psi(x-y)^\top p\diff y.
	\]
	Since the equality is linear in $p$, the displayed derivative operator is
	\[
		\Diff U(x)=\int_Q h(y)\nabla\psi(x-y)^\top\diff y.
	\]
	Changing variables back to $z=x-y$ gives the formula stated in the lemma.
	Finally, if $x,x'\in B_0$ are close, uniform continuity of $\nabla\psi$ on the same block gives
	\[
		\nrm{\nabla\psi(x-y)-\nabla\psi(x'-y)}
	\]
	uniformly small for $y\in Q$.
	Multiplication by the bounded function $h$ and integration over $Q$ therefore make $\nrm{\Diff U(x)-\Diff U(x')}_{\mathrm{op}}$ small.
	Hence $x\mapsto\Diff U(x)$ is continuous on $B_0$, and $B_0$ was arbitrary.
\end{proof}

\begin{lemma}[Smoothing estimates]
	\label{lem_smoothing_estimates}
	Let $f\in\mathcal F(B,L,K;m)$ and set $u:=S_{\rho,r}[f]$.
	Then
	\begin{align}
		\label{eqn_smoothing_lip}
		\nrm{u(x)-u(y)} &\le (1-\rho)L\nrm{x-y},
		\\
		\label{eqn_smoothing_error}
		\nrm{u(x)-(1-\rho)f(x)} &\le (1-\rho)Lr \quad (x\in B),
		\\
		\label{eqn_smoothing_bound}
		\nrm{u(x)} &\le (1-\rho)K.
	\end{align}
	Moreover, $u$ is continuously differentiable and its derivative satisfies the uniform modulus
	\begin{equation}
		\label{eqn_smoothing_derivative_modulus}
		\nrm{\Diff u(x)-\Diff u(y)}_{\mathrm{op}}
		\le
		(1-\rho)L\Lambda_\varphi \frac{1}{r}\nrm{x-y}.
	\end{equation}
\end{lemma}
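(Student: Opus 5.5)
The plan is to work directly from the convolution formula, using three elementary facts: the mollifier has total mass one, it is nonnegative, and it is supported in the ball of radius $r$. Write
\[
	u(x)=\int_{\R^n}\varphi_r(z)\bar f_\rho(x-z)\diff z .
\]
By \Cref{lem_projection_block_nonexpansive}, $\bar f_\rho$ is $(1-\rho)L$-Lipschitz on all of $\R^n$ and bounded by $(1-\rho)K$.

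The first three estimates all follow from one step. For any two points $x,y$ we have
\[
	\nrm{u(x)-u(y)}\le\int\varphi_r(z)\nrm{\bar f_\rho(x-z)-\bar f_\rho(y-z)}\diff z .
\]
Using $\varphi_r\ge0$ and $\int\varphi_r=1$ (rescaling \Cref{dfn_mollifier}), this gives \Cref{eqn_smoothing_lip}. The bound \Cref{eqn_smoothing_bound} follows the same way from the pointwise bound on $\bar f_\rho$. For \Cref{eqn_smoothing_error}, take $x\in B$. Then $\bar f_\rho(x)=(1-\rho)f(x)$, so
\[
	u(x)-(1-\rho)f(x)=\int\varphi_r(z)\bigl(\bar f_\rho(x-z)-\bar f_\rho(x)\bigr)\diff z .
\]
On the support of $\varphi_r$ we have $\nrm{z}\le r$, so the integrand is bounded by $\varphi_r(z)(1-\rho)Lr$. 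These steps need only the order and linearity properties of the Riemann integral for continuous integrands (\Cref{rem_riemann_integral_continuous_block}), applied on a block containing the ball of radius $r$. Since all integrals are of continuous functions over a fixed block, pulling the norm inside can be justified by passing through tagged sums, just as in \Cref{lem_regular_integral_finite_order}.

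For differentiability, apply \Cref{lem_differentiate_compact_convolution} with $\psi=\varphi_r$ and $h=\bar f_\rho$. Here $\bar f_\rho$ is continuous and bounded, and $\varphi_r$ is $C^1$ with bounded support. This gives
\[
	\Diff u(x)=\int\bar f_\rho(x-z)\nabla\varphi_r(z)^\top\diff z .
\]
The main step is the derivative modulus \Cref{eqn_smoothing_derivative_modulus}. Subtract the expressions at $x$ and $y$. For a unit vector $v$, bound
\[
	\nrm{(\Diff u(x)-\Diff u(y))v}\le\int\nrm{\bar f_\rho(x-z)-\bar f_\rho(y-z)}\,\nrm{\nabla\varphi_r(z)}\diff z ,
\]
using $\abs{\nabla\varphi_r(z)^\top v}\le\nrm{\nabla\varphi_r(z)}$. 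This is at most $(1-\rho)L\nrm{x-y}\int\nrm{\nabla\varphi_r}$. Finally compute the scaling: $\nabla\varphi_r(z)=r^{-n-1}(\nabla\varphi)(z/r)$, and the substitution $z=rw$ gives $\int\nrm{\nabla\varphi_r}=\Lambda_\varphi/r$. Taking the supremum over unit $v$ yields the operator-norm bound.

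I expect the only delicate points to be constructive ones. The first is to justify the change of variables $z=rw$ for Riemann integrals over blocks; this is a linear bijection that maps tagged multiblocks to tagged multiblocks and scales volumes by $r^n$. The second is to note that $\Lambda_\varphi$ is computable by hypothesis, so the modulus is explicit. Neither requires more than routine tagged-sum arguments.
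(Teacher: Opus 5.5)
Your proof is correct and follows essentially the same route as the paper. You use the $(1-\rho)L$-Lipschitz property and the bound on $\bar f_\rho$, together with $\varphi_r\ge0$ and $\int\varphi_r=1$, for the three pointwise estimates, and \Cref{lem_differentiate_compact_convolution} for the derivative formula. You then subtract the two derivative integrals directly and use $\int\nrm{\nabla\varphi_r}=\Lambda_\varphi/r$. That direct subtraction gives the stated constant without the paper's detour through $\int\nabla\varphi_r=0$, which is not needed, and your explicit scaling computation fills in a step the paper leaves implicit.
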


\begin{proof}
	The map $\bar f_\rho$ is $(1-\rho)L$-Lipschitz by \Cref{lem_projection_block_nonexpansive}.
	Thus
	\[
		\nrm{u(x)-u(y)}
		\le
		\int\varphi_r(z)\nrm{\bar f_\rho(x-z)-\bar f_\rho(y-z)}\diff z
		\le
		(1-\rho)L\nrm{x-y}.
	\]
	For $x\in B$ and $z\in\supp{\varphi_r}$, one has $\nrm{x-(x-z)}\le r$ and hence
	\[
		\nrm{\bar f_\rho(x-z)-\bar f_\rho(x)}
		\le
		(1-\rho)Lr.
	\]
	Integrating gives \Cref{eqn_smoothing_error}.
	The bound \Cref{eqn_smoothing_bound} follows from convexity of the norm and from $\int\varphi_r=1$.
	By \Cref{lem_differentiate_compact_convolution}, differentiation under the integral gives
	\[
		\Diff u(x)
		=
		\int \bar f_\rho(x-z)\nabla\varphi_r(z)^\top\diff z.
	\]
	Since $\int\nabla\varphi_r(z)\diff z=0$, this may be written with $\bar f_\rho(x-z)-\bar f_\rho(x)$ in place of $\bar f_\rho(x-z)$.
	Consequently,
	\[
		\nrm{\Diff u(x)-\Diff u(y)}_{\mathrm{op}}
		\le
		(1-\rho)L\nrm{x-y}\int\nrm{\nabla\varphi_r(z)}\diff z
		=
		(1-\rho)L\Lambda_\varphi \frac{1}{r}\nrm{x-y}.
	\]
\end{proof}

\subsection{Piecewise Affine Approximation}

\begin{definition}[Simplex]
	\label{dfn_simplex}
	Let $x_0,\ldots,x_n\in\R^n$.
	The simplex generated by these points is
	\[
		S:=\conv{\{x_0,\ldots,x_n\}}.
	\]
	It is called \emph{full} if the vectors $x_1-x_0,\ldots,x_n-x_0$ are linearly independent.
	In that case every $x\in S$ has unique barycentric coordinates $\lambda_0,\ldots,\lambda_n\ge0$ satisfying
	\[
		\sum_{j=0}^n\lambda_j=1,
		\qquad
		x=\sum_{j=0}^n\lambda_jx_j.
	\]
	They are obtained from
	\[
		(\lambda_1,\ldots,\lambda_n)^T
		=
		[x_1-x_0,\ldots,x_n-x_0]^{-1}(x-x_0),
		\qquad
		\lambda_0=1-\sum_{j=1}^n\lambda_j.
	\]
	The simplex is called rational if its vertices are rational vectors.
\end{definition}

\begin{definition}[Rational triangulation]
	\label{dfn_rational_triangulation}
	A \emph{rational triangulation} $\mathcal T$ of a full block $B$ is a finite sequence of full simplices with rational vertices whose union is $B$ and whose pairwise intersections are common faces.
	Its mesh size is
	\[
		|\mathcal T|:=\max_{S\in\mathcal T}\diam(S).
	\]
\end{definition}

\begin{definition}[Regular rational triangulation scheme]
	\label{dfn_regular_rational_triangulation_scheme}
	A \emph{regular rational triangulation scheme} assigns to every full block $B$ and every rational mesh parameter $h>0$ a rational triangulation $\mathcal T_h$ of $B$ such that:
	\begin{enumerate}
		\item $|\mathcal T_h|\le C_{B,n}h$ for a constant $C_{B,n}$ depending only on $B$ and $n$,
		\item for each simplex $S=\conv{\{x_0,\ldots,x_n\}}$ in $\mathcal T_h$, with $V_S=[x_1-x_0,\ldots,x_n-x_0]$, one has
		\[
			\nrm{V_S^{-1}}_{\mathrm{op}}\le \frac{C'_{B,n}}{h}
		\]
		for a constant $C'_{B,n}$ depending only on $B$ and $n$,
		\item only finitely many simplex shapes occur after rescaling by $h$.
	\end{enumerate}
	Such schemes are obtained by subdividing $B$ into rational subblocks with side lengths comparable to $h$ and splitting each subblock by a fixed ordering of its coordinate directions.
\end{definition}

\begin{definition}[Piecewise affine map]
	\label{dfn_rational_pl_map}
	Let $\mathcal T$ be a rational triangulation of $B$.
	A map $P:B\to\R^m$ is called \emph{piecewise affine} on $\mathcal T$ if it is affine on every simplex of $\mathcal T$.
	It is called \emph{rational piecewise affine} if, additionally, all its vertex values are rational vectors.
	If $S=\conv{\{x_0,\ldots,x_n\}}$, define
	\[
		V_S:=[x_1-x_0,\ldots,x_n-x_0],
		\qquad
		W_S:=[P(x_1)-P(x_0),\ldots,P(x_n)-P(x_0)].
	\]
	The derivative of the affine restriction of $P$ to $S$ is
	\[
		A_S:=W_SV_S^{-1}.
	\]
\end{definition}

\begin{definition}[Affine interpolation on a triangulation]
	\label{dfn_affine_interpolation_triangulation}
	Let $\mathcal T$ be a rational triangulation of $B$ and let $u:B\to\R^m$.
	The affine interpolation $\mathcal I_{\mathcal T}[u]$ is the piecewise affine map on $\mathcal T$ which agrees with $u$ at every vertex of $\mathcal T$.
	For the triangulation $\mathcal T_h$ from \Cref{dfn_regular_rational_triangulation_scheme}, write $\mathcal I_h[u]:=\mathcal I_{\mathcal T_h}[u]$.
\end{definition}

\begin{definition}[Lipschitz certificate]
	\label{dfn_lipschitz_certificate}
	A piecewise affine map $P$ on $\mathcal T$ has an \emph{$L$-Lipschitz certificate} if
	\[
		\forall S\in\mathcal T\spc \nrm{A_S}_{\mathrm{op}}\le L.
	\]
\end{definition}

\begin{remark}
	\label{rem_operator_norm_certificate_decidable}
	For rational $A_S$ and rational $L$, the condition $\nrm{A_S}_{\mathrm{op}}\le L$ is a finite rational check.
	Indeed, it is equivalent to the positive semidefiniteness of the rational symmetric matrix
	\[
		L^2I-A_S^\top A_S,
	\]
	where $I$ is the identity matrix of the appropriate dimension.
	This is checked by the finitely many principal minors.
\end{remark}

\begin{lemma}
	\label{lem_pl_certificate_lipschitz}
	Let $\mathcal T$ be a rational triangulation of a full block $B$.
	If a piecewise affine map $P:B\to\R^m$ has an $L$-Lipschitz certificate on $\mathcal T$, then $P$ is $L$-Lipschitz on $B$.
\end{lemma}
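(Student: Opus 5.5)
The plan is to connect the two points by the straight segment and reduce to the affine pieces along it. Fix $x,y\in B$ and set $\gamma(t):=x+t(y-x)$ for $t\in[0,1]$; since $B$ is convex, $\gamma([0,1])\subseteq B$. The aim is to find rational breakpoints
\[
	0=t_0<t_1<\cdots<t_J=1
\]
such that each subsegment $\gamma([t_{j-1},t_j])$ lies in a single simplex $S_j\in\mathcal T$. On that simplex $P$ is affine with derivative $A_{S_j}$. Hence
\[
	\nrm{P(\gamma(t_j))-P(\gamma(t_{j-1}))}
	=
	\nrm{A_{S_j}(t_j-t_{j-1})(y-x)}
	\le
	L(t_j-t_{j-1})\nrm{x-y}.
\]
Summing over $j$ and applying the triangle inequality gives $\nrm{P(x)-P(y)}\le L\nrm{x-y}$. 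Well-definedness of $P$ at the breakpoints, which may lie on shared faces, comes from the triangulation property that pairwise intersections are common faces: both affine restrictions agree there because $P$ is single-valued on $B$.

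The main obstacle is the constructive one. For real $x,y$ we cannot decide which simplex contains $\gamma(t)$, nor compute the exact crossing parameters. To avoid this, I would first prove the estimate for rational $x,y$, and then pass to arbitrary points by density and continuity. For rational $x,y$, each simplex $S=\conv{\{x_0,\ldots,x_n\}}$ is a finite intersection of rational half-spaces, given by the barycentric inequalities $\lambda_j\ge0$ from \Cref{dfn_simplex}. Along $\gamma$ each barycentric coordinate is affine in $t$ with rational coefficients. The set $\{t\in[0,1]:\gamma(t)\in S\}$ is therefore a rational closed interval, possibly empty or degenerate, and it is computable exactly.

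Collect all endpoints of these intervals over the finitely many $S\in\mathcal T$, and sort them to obtain the breakpoints $t_j$. On each open piece $(t_{j-1},t_j)$ the midpoint is rational. Since the simplices cover $B$, the midpoint lies in some simplex, and that simplex can be identified by exact rational membership tests. By convexity of the interval for that simplex, and because no endpoint lies strictly inside $(t_{j-1},t_j)$, the whole closed subsegment $\gamma([t_{j-1},t_j])$ lies in that simplex. This gives the decomposition used in the first paragraph.

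For general real $x,y\in B$, approximate them by rational points $x',y'\in B$; this is possible since $B$ has rational vertices, and one can clamp with $\proj{\cdot}{B}$. The function $P$ is continuous on $B$, because it is affine on each of finitely many closed simplices with a uniform bound $\nrm{A_S}_{\mathrm{op}}\le L$ on the pieces. The preferred route, though, is to use the rational estimate itself: it shows that $P$ restricted to rational points is $L$-Lipschitz, so it extends uniquely with the same constant. This extension agrees with $P$ because each affine piece is continuous on its closed simplex. Letting $x'\to x$ and $y'\to y$ then yields the claim without any decision about real-point membership.
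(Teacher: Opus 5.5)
Your proposal is correct and follows the same route as the paper: you prove the bound first for rational endpoints by cutting the segment at finitely many rational parameters, apply the simplexwise certificate $\nrm{A_S}_{\mathrm{op}}\le L$ on each piece and telescope, and then pass to arbitrary points of $B$ by rational density and continuity. The differences are minor. You cut at the endpoints of the exact rational intervals $\{t:\gamma(t)\in S\}$ rather than at transversal crossings of facet hyperplanes, which makes the containment of each closed piece in a single simplex immediate via your midpoint test. You also obtain continuity of $P$ from the rational estimate itself, whereas the paper simply assumes a modulus of continuity on $B$.
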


\begin{proof}
	On a simplex $S$, the affine formula gives
	\[
		P(y)-P(x)=A_S(y-x)
	\]
	whenever $x,y\in S$.
	Thus the certificate gives
	\[
		\nrm{P(y)-P(x)}\le L\nrm{y-x}.
	\]
	We first prove the estimate for rational points $\tilde x,\tilde y\in B$.
	Parametrize the segment by
	\[
		\ell(t):=(1-t)\tilde x+t\tilde y,
		\qquad
		t\in[0,1].
	\]
	Let
	\[
		H_s=\{z\in\R^n:a_s\cdot z=b_s\},
		\qquad
		s=1,\ldots,R,
	\]
	be a finite list of rational affine hyperplanes containing all facets of all simplexes in $\mathcal T$.
	For each $s$, define
	\[
		g_s(t):=a_s\cdot\ell(t)-b_s
		=
		\alpha_s+\beta_s t,
	\]
	where
	\[
		\alpha_s:=a_s\cdot\tilde x-b_s,
		\qquad
		\beta_s:=a_s\cdot(\tilde y-\tilde x).
	\]
	All numbers $\alpha_s,\beta_s$ are rational.
	If $\beta_s\ne0$ and
	\[
		\tau_s:=-\frac{\alpha_s}{\beta_s}\in[0,1],
	\]
	include $\tau_s$ in the cutting list.
	The alternatives $\beta_s=0$, $\beta_s\ne0$, and $\tau_s\in[0,1]$ are decidable rational comparisons.
	If $\beta_s=0$, then $g_s$ is constant: either the whole segment lies in $H_s$, when $\alpha_s=0$, or it does not meet $H_s$, when $\alpha_s\ne0$.
	Thus no nonrational parameter is introduced.
	Collect these finitely many rational parameters, add $0$ and $1$, and sort the finite rational list as
	\[
		0=t_0<t_1<\cdots<t_N=1.
	\]
	For $t\in(t_{q-1},t_q)$, no affine function $g_s$ with nonzero slope vanishes.
	Therefore the segment portion $\ell((t_{q-1},t_q))$ crosses no facet transversally.
	Let $S,S'\in\mathcal T$ be two simplexes sharing a facet $F=S\cap S'$.
	Write the affine restrictions as
	\[
		P_{\mid S}(z)=A_Sz+c_S,
		\qquad
		P_{\mid S'}(z)=A_{S'}z+c_{S'}.
	\]
	Because $P$ is a map on the triangulation, the two formulas agree at every common vertex $v$ of the facet $F$:
	\[
		A_Sv+c_S=A_{S'}v+c_{S'}.
	\]
	If
	\[
		z=\sum_{r=1}^n\lambda_rv_r,
		\qquad
		\lambda_r\ge0,
		\qquad
		\sum_{r=1}^n\lambda_r=1,
	\]
	is a point of $F=\conv{\{v_1,\ldots,v_n\}}$, then affinity gives
	\[
		A_Sz+c_S
		=
		\sum_{r=1}^n\lambda_r(A_Sv_r+c_S)
		=
		\sum_{r=1}^n\lambda_r(A_{S'}v_r+c_{S'})
		=
		A_{S'}z+c_{S'}.
	\]
	Thus the restrictions of adjacent affine formulas agree on their common facet.
	Consequently, after assigning endpoints to adjacent closed simplexes, each closed subsegment
	\[
		[\ell(t_{q-1}),\ell(t_q)]
	\]
	is evaluated by one affine formula, or by the common restriction of finitely many adjacent affine formulas on a shared face.
	Applying the simplexwise estimate and summing,
	\[
		\nrm{P(\tilde y)-P(\tilde x)}
		\le
		\sum_{q=1}^N L\nrm{\ell(t_q)-\ell(t_{q-1})}.
	\]
	Since
	\[
		\ell(t_q)-\ell(t_{q-1})
		=
		(t_q-t_{q-1})(\tilde y-\tilde x),
	\]
	we get
	\[
		\sum_{q=1}^N\nrm{\ell(t_q)-\ell(t_{q-1})}
		=
		\left(\sum_{q=1}^N(t_q-t_{q-1})\right)
		\nrm{\tilde y-\tilde x}
		=
		\nrm{\tilde y-\tilde x}.
	\]
	Hence
	\[
		\nrm{P(\tilde y)-P(\tilde x)}
		\le
		L\nrm{\tilde y-\tilde x}.
	\]
	Now let $x,y\in B$ be arbitrary and let $\eta>0$.
	The finite piecewise affine map $P$ has a modulus of continuity on the block $B$.
	Choose $\delta>0$ such that
	\[
		\nrm{u-v}\le\delta
		\quad\Longrightarrow\quad
		\nrm{P(u)-P(v)}\le\eta/3,
	\]
	and also $2L\delta\le\eta/3$.
	By density of rational points in the rational block $B$, with coordinatewise clipping to $B$ if necessary, choose rational points $\tilde x,\tilde y\in B$ with
	\[
		\nrm{x-\tilde x}\le\delta,
		\qquad
		\nrm{y-\tilde y}\le\delta.
	\]
	Then
	\[
	\begin{aligned}
		\nrm{P(y)-P(x)}
		&\le
		\nrm{P(y)-P(\tilde y)}
		+
		\nrm{P(\tilde y)-P(\tilde x)}
		+
		\nrm{P(\tilde x)-P(x)}
		\\
		&\le
		\frac{\eta}{3}
		+
		L\nrm{\tilde y-\tilde x}
		+
		\frac{\eta}{3}
		\\
		&\le
		L\nrm{y-x}
		+
		\eta.
	\end{aligned}
	\]
	Since $\eta>0$ is arbitrary, $\nrm{P(y)-P(x)}\le L\nrm{y-x}$.
\end{proof}

\begin{lemma}[Fixed-mesh interpolation of smoothed functions]
	\label{lem_fixed_mesh_interpolation}
	For every full block $B$, there are constants $C_{B,n},D_{B,n}>0$ with the following property.
	Let $u:B\to\R^m$ be continuously differentiable, let
	\[
		\nrm{\Diff u(x)}_{\mathrm{op}}\le L_0,
		\qquad
		\nrm{\Diff u(x)-\Diff u(y)}_{\mathrm{op}}\le \Omega\nrm{x-y}
	\]
	for all $x,y\in B$.
	For every rational mesh parameter $h>0$, let $\mathcal T_h$ be given by a regular rational triangulation scheme and let $\mathcal I_h[u]$ be the affine interpolation of $u$ on $\mathcal T_h$.
	Then
	\begin{align}
		\label{eqn_fixed_mesh_interpolation_lip}
		\Lip{B}(\mathcal I_h[u])
		&\le L_0+D_{B,n}\Omega h,
		\\
		\label{eqn_fixed_mesh_interpolation_err}
		d_\infty(\mathcal I_h[u],u)
		&\le L_0C_{B,n}h.
	\end{align}
\end{lemma}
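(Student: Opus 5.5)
The plan is to work simplex by simplex and use the facts already in place. The affine interpolation $\mathcal I_h[u]$ is continuous across common faces; this is the adjacency argument in the proof of \Cref{lem_pl_certificate_lipschitz}. Hence the Lipschitz estimate reduces, via that lemma, to bounding $\nrm{A_S}_{\mathrm{op}}$ for every simplex $S\in\mathcal T_h$.

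\textbf{Lipschitz bound.} Fix $S=\conv{\{x_0,\ldots,x_n\}}$. Write
\[
	A_S=W_SV_S^{-1}
	=
	\Diff u(x_0)+\bigl(W_S-\Diff u(x_0)V_S\bigr)V_S^{-1}.
\]
The $j$th column of $W_S-\Diff u(x_0)V_S$ is $u(x_j)-u(x_0)-\Diff u(x_0)(x_j-x_0)$. By the integral form of the mean-value identity along the segment, which lies in the convex set $B$, this column equals
\[
	\int_0^1\bigl(\Diff u(x_0+t(x_j-x_0))-\Diff u(x_0)\bigr)(x_j-x_0)\diff t .
\]
The derivative modulus then bounds its norm by $\tfrac12\Omega\nrm{x_j-x_0}^2\le\tfrac12\Omega|\mathcal T_h|^2\le\tfrac12\Omega C_{B,n}^2h^2$. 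Passing from columns to the operator norm costs a factor $\sqrt n$. Condition (2) of \Cref{dfn_regular_rational_triangulation_scheme} gives $\nrm{V_S^{-1}}_{\mathrm{op}}\le C'_{B,n}/h$. Combining,
\[
	\nrm{A_S}_{\mathrm{op}}
	\le
	L_0+\tfrac{\sqrt n}{2}C_{B,n}^2C'_{B,n}\,\Omega h ,
\]
so we set $D_{B,n}:=\tfrac{\sqrt n}{2}C_{B,n}^2C'_{B,n}$. Since $u$ need not have rational vertex values, \Cref{lem_pl_certificate_lipschitz} does not apply verbatim. Its proof, however, uses rationality only for the facet hyperplanes and the test points, and those remain rational here. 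So I would either invoke its argument directly or note that the certificate bound is used only as a real inequality.

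\textbf{Sup error.} For $x\in S$ with barycentric coordinates $\lambda_j$, we have $\mathcal I_h[u](x)=\sum_j\lambda_ju(x_j)$. Therefore
\[
	\nrm{\mathcal I_h[u](x)-u(x)}
	\le
	\sum_j\lambda_j\nrm{u(x_j)-u(x)}
	\le
	L_0\max_j\nrm{x_j-x}
	\le
	L_0|\mathcal T_h|
	\le
	L_0C_{B,n}h .
\]
Here $L_0$ is a Lipschitz constant for $u$ on the convex block $B$, obtained from the derivative bound and the same mean-value integral. Every $x\in B$ lies in some simplex only up to the usual constructive caveat. To avoid localization, I would prove the bound at rational points, which can be located in a simplex decidably, and extend it by continuity of both maps, exactly as at the end of the proof of \Cref{lem_pl_certificate_lipschitz}. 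The supremum in $d_\infty$ exists by \Cref{rem_supremum_metric_locally_regular}.

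\textbf{Main obstacle.} The delicate step is the constructive mean-value identity $u(y)-u(x)=\int_0^1\Diff u(x+t(y-x))(y-x)\diff t$ from \Cref{dfn_differentiability_point}. I would obtain it without a classical mean-value theorem. Apply the derivative definition on a fine uniform partition of $[0,1]$ with the uniform modulus coming from the continuity of $\Diff u$, then compare the telescoping sum with a Riemann sum of the continuous integrand (\Cref{rem_riemann_integral_continuous_block}). The error is $\eps\nrm{y-x}$ for arbitrary $\eps$. The remaining steps are bookkeeping with the scheme constants.
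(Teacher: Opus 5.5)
Your main argument is the paper's own: the simplex-wise decomposition $W_S=\Diff u(x_0)V_S+E_S$ with a column bound on $E_S$, the scheme bound $\nrm{V_S^{-1}}_{\mathrm{op}}\le C'_{B,n}/h$, an appeal to \Cref{lem_pl_certificate_lipschitz}, and the barycentric convex-combination bound for the sup error. Your constant $\tfrac12$ is sharper than the paper's, and your rational-point treatment of the sup bound is more careful than the paper's. Incidentally, \Cref{lem_pl_certificate_lipschitz} does apply verbatim. It is stated for arbitrary piecewise affine maps on a rational triangulation, and rational vertex values only matter for the decidability in \Cref{rem_operator_norm_certificate_decidable}.

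The step that fails as written is your justification of the mean-value identity. \Cref{dfn_differentiability_point} supplies only a pointwise modulus $\omega_{\Diff u,x}$. A telescoping sum over a uniform partition needs one mesh below $\omega_{\Diff u,p_k}(\eps)$ at every node $p_k$, and continuity of $\Diff u$ does not by itself provide such a uniform bound. Turning a continuity modulus of $\Diff u$ into a uniform differentiability modulus is itself a mean-value inequality, so your sketch is circular.

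A constructive repair uses the fact that the estimate you actually need,
\[
	\nrm{u(y)-u(x)-\Diff u(x)(y-x)}
	\le
	\Omega\nrm{y-x}^2,
\]
is an inequality and hence a negative statement, so you may argue by contradiction. Put $g(t):=u(x+t(y-x))-t\Diff u(x)(y-x)$; its pointwise derivative has norm at most $\Omega\nrm{y-x}^2$. Suppose $\nrm{g(1)-g(0)}>\Omega\nrm{y-x}^2+r$ for a rational $r>0$. Bisect $[0,1]$, each time keeping a half $[a,b]$ with $\nrm{g(b)-g(a)}>(\Omega\nrm{y-x}^2+r)(b-a)$. This is possible constructively, by the triangle inequality and the fact that $\alpha+\beta>0$ implies $\alpha>0$ or $\beta>0$. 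At the limit point of the nested intervals, the pointwise modulus of $u$ gives $\nrm{g(b)-g(a)}\le(\Omega\nrm{y-x}^2+\eta)(b-a)$ for short intervals, a contradiction once $\eta<r$.

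This inequality yields the paper's column bound $\Omega\diam(S)\nrm{x_j-x_0}$ directly. The same argument applied to $t\mapsto u(x+t(y-x))$ gives the $L_0$-Lipschitz bound on $u$ used in your sup estimate, so the integral identity becomes unnecessary. The paper uses both estimates without comment, so this is a gap in your added justification rather than a departure from the paper's route.
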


\begin{proof}
	Fix a simplex $S=\conv{\{x_0,\ldots,x_n\}}$ of $\mathcal T_h$.
	For $j=1,\ldots,n$,
	\[
		u(x_j)-u(x_0)=\Diff u(x_0)(x_j-x_0)+e_j,
	\]
	where
	\[
		\nrm{e_j}
		\le
		\Omega\diam(S)\nrm{x_j-x_0}.
	\]
	With $V_S,W_S$ as in \Cref{dfn_rational_pl_map}, this gives
	\[
		W_S=\Diff u(x_0)V_S+E_S.
	\]
	Since $|\mathcal T_h|\le C_{B,n}h$, the column estimates give
	\[
		\nrm{E_S}_{\mathrm{op}}\le C''_{B,n}\Omega h^2
	\]
	for a constant $C''_{B,n}$.
	Together with
	\[
		\nrm{V_S^{-1}}_{\mathrm{op}}\le C'_{B,n}/h
	\]
	this gives
	\[
		\nrm{E_SV_S^{-1}}_{\mathrm{op}}\le D_{B,n}\Omega h
	\]
	for a constant $D_{B,n}$.
	Therefore the derivative of the affine restriction of $\mathcal I_h[u]$ to $S$ has norm at most $L_0+D_{B,n}\Omega h$.
	\Cref{lem_pl_certificate_lipschitz} gives \Cref{eqn_fixed_mesh_interpolation_lip}.
	For the interpolation error, let $\lambda_0,\ldots,\lambda_n$ be the barycentric coordinates of $x\in S$, computed as in \Cref{dfn_simplex}.
	Then
	\[
		\nrm{\mathcal I_h[u](x)-u(x)}
		\le
		\sum_j\lambda_j\nrm{u(x_j)-u(x)}
		\le
		L_0\diam(S)
		\le
		L_0C_{B,n}h.
	\]
\end{proof}

\begin{lemma}[Rational rounding on a fixed triangulation]
	\label{lem_rational_rounding_triangulation}
	For every full block $B$, there is a constant $R_{B,n}>0$ with the following property.
	Let $\mathcal T_h$ be given by a regular rational triangulation scheme with mesh parameter $h$.
	Let $u:B\to\R^m$.
	For every vertex $v$ of $\mathcal T_h$, choose a rational vector $r_v\in\R^m$ such that
	\[
		\nrm{r_v-u(v)}\le q.
	\]
	Let $P$ be the rational piecewise affine map with vertex values $P(v):=r_v$.
	Then
	\begin{align}
		\label{eqn_rounding_error}
		d_\infty(P,\mathcal I_h[u])&\le q,
		\\
		\label{eqn_rounding_lip_error}
		\Lip{B}(P-\mathcal I_h[u])&\le R_{B,n}\frac{q}{h}.
	\end{align}
\end{lemma}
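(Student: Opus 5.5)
The plan is to work with the difference map $E:=P-\mathcal I_h[u]$. Both $P$ and $\mathcal I_h[u]$ are piecewise affine on the same triangulation $\mathcal T_h$, so $E$ is piecewise affine on $\mathcal T_h$. Its vertex values are
\[
	E(v)=r_v-u(v),
	\qquad
	\nrm{E(v)}\le q .
\]
The argument then reduces to two simplexwise estimates.

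For \Cref{eqn_rounding_error}, fix a simplex $S=\conv{\{x_0,\ldots,x_n\}}$ of $\mathcal T_h$ and a point $x\in S$. Take the barycentric coordinates $\lambda_0,\ldots,\lambda_n\ge0$ of $x$ from \Cref{dfn_simplex}. Affinity of $E$ on $S$ gives
\[
	E(x)=\sum_j\lambda_jE(x_j),
	\qquad
	\nrm{E(x)}\le\sum_j\lambda_j q=q .
\]
This bound holds on every simplex. Since $d_\infty$ is the supremum of the continuous function $\nrm{E}$ over $B$, which exists by \Cref{rem_supremum_metric_locally_regular}, the supremum is at most $q$. To avoid deciding which simplex contains an arbitrary real point, I will first prove the estimate at rational points, where membership in a rational simplex is a decidable rational check, and then pass to all of $B$ by continuity. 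This mirrors the last step of \Cref{lem_pl_certificate_lipschitz}.

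For \Cref{eqn_rounding_lip_error}, I compute the affine derivative of $E$ on $S$ as in \Cref{dfn_rational_pl_map}:
\[
	A_S^E=W_S^EV_S^{-1},
	\qquad
	W_S^E=[E(x_1)-E(x_0),\ldots,E(x_n)-E(x_0)].
\]
Each column of $W_S^E$ has norm at most $2q$, so $\nrm{W_S^E}_{\mathrm{op}}\le\nrm{W_S^E}_F\le 2\sqrt n\,q$. The regular scheme of \Cref{dfn_regular_rational_triangulation_scheme} gives $\nrm{V_S^{-1}}_{\mathrm{op}}\le C'_{B,n}/h$. Hence
\[
	\nrm{A_S^E}_{\mathrm{op}}\le 2\sqrt n\,C'_{B,n}\frac qh ,
\]
and I set $R_{B,n}:=2\sqrt n\,C'_{B,n}$.

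To turn this simplexwise bound into a Lipschitz bound on $B$, I will invoke \Cref{lem_pl_certificate_lipschitz}. One subtlety arises here: that lemma is stated for a certificate given as a finite rational check. Since $u(v)$ is real, $E$ need not be rational, and the bound on $A_S^E$ is a real inequality. However, the proof of \Cref{lem_pl_certificate_lipschitz} uses only the simplexwise inequalities and continuity across shared facets. Continuity across facets holds for $E$ because it holds for both $P$ and $\mathcal I_h[u]$, as both are defined by vertex values. The facet-cutting parameters depend only on the rational triangulation and the rational endpoints, not on the values of $E$. So the same argument applies verbatim. This is the main point to check; the remaining steps are the routine computations above.
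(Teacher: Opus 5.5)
Your proposal is correct and follows the paper's proof essentially step for step: barycentric convexity gives the uniform bound, and the $2q$ column bound on the difference matrix combined with $\nrm{V_S^{-1}}_{\mathrm{op}}\le C'_{B,n}/h$ gives the derivative bound. Two refinements are worth noting. Your explicit Frobenius constant $2\sqrt n$ makes $R_{B,n}$ visibly independent of $m$, whereas the paper writes $C'''_{n,m}$. The rationality subtlety you flag does not actually arise, because \Cref{lem_pl_certificate_lipschitz} is stated for arbitrary, not necessarily rational, piecewise affine maps and therefore applies directly to $P-\mathcal I_h[u]$.
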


\begin{proof}
	Fix a simplex $S=\conv{\{x_0,\ldots,x_n\}}$ and a point $x\in S$.
	Let $\lambda_0,\ldots,\lambda_n$ be the barycentric coordinates of $x$, computed as in \Cref{dfn_simplex}.
	For each vertex $x_j$ put
	\[
		\rho_j:=P(x_j)-u(x_j).
	\]
	Then $\nrm{\rho_j}\le q$ and
	\[
		P(x)-\mathcal I_h[u](x)
		=
		\sum_j\lambda_j\rho_j.
	\]
	Since $\lambda_j\ge0$ and $\sum_j\lambda_j=1$, convexity of the norm gives
	\[
		\nrm{P(x)-\mathcal I_h[u](x)}
		\le
		\sum_j\lambda_jq
		=
		q.
	\]
	For the second estimate, consider one simplex $S=\conv{\{x_0,\ldots,x_n\}}$.
	The derivative of $P-\mathcal I_h[u]$ on $S$ is
	\[
		\Delta W_SV_S^{-1},
	\]
	where the $j$th column of $\Delta W_S$ is
	\[
		\rho_j-\rho_0.
	\]
	Therefore every column of $\Delta W_S$ has norm at most $2q$.
	The column bound gives
	\[
		\nrm{\Delta W_S}_{\mathrm{op}}\le C'''_{n,m}q
	\]
	for a constant $C'''_{n,m}$ depending only on the dimensions.
	Using again $\nrm{V_S^{-1}}_{\mathrm{op}}\le C'_{B,n}/h$, we obtain
	\[
		\nrm{\Delta W_SV_S^{-1}}_{\mathrm{op}}
		\le
		R_{B,n}\frac{q}{h}
	\]
	for a constant $R_{B,n}$.
\end{proof}

\subsection{Total Boundedness}

\begin{theorem}[Finite piecewise affine approximation of Lipschitz-bounded functions]
	\label{thm_lipschitz_function_space_totally_bounded}
	Let $B$ be a full block, and let $L,K>0$ be rational.
	Let $\mathcal F(B,L,K;m)$ be the space of all functions $f:B\to\R^m$ satisfying
	\[
		\Lip{B}(f)\le L,
		\qquad
		\nrm{f(x)}\le K
		\quad(x\in B),
	\]
	with the metric $d_\infty$.
	For every $\eps>0$, there exists a finite set $\mathcal P_\eps$ of rational piecewise affine maps $P:B\to\R^m$ such that:
	\begin{enumerate}
		\item every $P\in\mathcal P_\eps$ belongs to $\mathcal F(B,L,K;m)$,
		\item for every $f\in\mathcal F(B,L,K;m)$, there exists $P\in\mathcal P_\eps$ such that
		\[
			d_\infty(f,P)\le\eps.
		\]
	\end{enumerate}
\end{theorem}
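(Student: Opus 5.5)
Following the machinery already prepared, the plan is smoothing, then fixed-mesh interpolation, then rational rounding. The output is a finite grid of rational piecewise affine maps on a fixed triangulation. The key point is that each stage shrinks the Lipschitz constant and the bound strictly below $L$ and $K$. This leaves slack to absorb the interpolation and rounding errors, so every net element lies in $\mathcal F(B,L,K;m)$ itself rather than in a slightly enlarged class.

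Fix $\eps>0$ and choose $\rho\in(0,1)$ rational with $\rho\max\{K,L\}$ small, e.g.\ $\rho K\le\eps/4$. Given $f\in\mathcal F(B,L,K;m)$, set $u:=S_{\rho,r}[f]$. By \Cref{lem_smoothing_estimates}, $u$ is $C^1$ with $\nrm{\Diff u}_{\mathrm{op}}\le(1-\rho)L=:L_0$ and derivative modulus $\Omega=(1-\rho)L\Lambda_\varphi/r$. Moreover $\nrm{u}\le(1-\rho)K$, and $d_\infty(u,f)\le(1-\rho)Lr+\rho K$. Choose $r$ with $Lr\le\eps/4$.

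Next pick the mesh $h$ so that three conditions hold: $D_{B,n}\Omega h\le\rho L/3$, $L_0C_{B,n}h\le\eps/4$, and $L_0C_{B,n}h\le\rho K/3$. Then \Cref{lem_fixed_mesh_interpolation} gives $\Lip{B}(\mathcal I_h[u])\le L-2\rho L/3$ and an interpolation error of at most $\min\{\eps/4,\rho K/3\}$. In particular $\nrm{\mathcal I_h[u]}\le K-2\rho K/3$, since at vertices the interpolant equals $u$, and on simplices it is a convex combination. Then choose a rational rounding step $q$ with $q\le\eps/4$, $q\le\rho K/3$, and $R_{B,n}q/h\le\rho L/3$. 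Define $\mathcal P_\eps$ as the set of all rational piecewise affine maps on $\mathcal T_h$ whose vertex values lie on the rational grid $q'\mathbb Z^m\cap\{\nrm{\cdot}\le K\}$, where $q'$ is chosen so that every point of the $K$-ball is within $q$ of a grid point. Among these maps, retain only those that pass the finite certificate check of \Cref{rem_operator_norm_certificate_decidable} with constant $L$ and satisfy $\nrm{P(v)}\le K$ at all vertices. Both checks are decidable rational tests on finite data, so $\mathcal P_\eps$ is a finite list. By \Cref{lem_pl_certificate_lipschitz} and convexity on simplices, every retained $P$ lies in $\mathcal F(B,L,K;m)$, which proves item~1.

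For item~2, given $f$, take $r_v$ to be a grid point within $q$ of $u(v)$. By \Cref{lem_rational_rounding_triangulation}, the resulting $P$ has simplexwise slopes at most $L-2\rho L/3+\rho L/3<L$, and its vertex norms are at most $(1-\rho)K+\rho K/3<K$. Hence $P$ passes the certificate checks and belongs to $\mathcal P_\eps$. The total distance is
\[
d_\infty(f,P)\le \rho K+(1-\rho)Lr+\eps/4+q\le\eps.
\]
The main obstacle is the constructive selection of $r_v$. Because $u(v)$ is only a real number, choosing a grid point within $q$ must be done with a tolerance: approximate $u(v)$ to precision $q/4$ and round. This is why $q'$ is taken slightly finer than $q$. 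The second delicate point is that the finiteness of $\mathcal P_\eps$ must rely only on decidable rational certificates, never on comparing real Lipschitz constants. This is exactly what the strict slack $\rho$ provides, since it pushes every comparison strictly inside the allowed bounds.
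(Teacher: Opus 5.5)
Your proposal is correct and follows the paper's proof essentially step for step. The paper also contracts by $1-\rho$ and mollifies via \Cref{lem_smoothing_estimates}, interpolates on the fixed mesh $\mathcal T_h$ via \Cref{lem_fixed_mesh_interpolation}, and rounds vertex values via \Cref{lem_rational_rounding_triangulation}, with the slack $\rho L$ and $\rho K$ absorbing the interpolation and rounding errors; the differences are minor. You use an explicit rational grid with a rounding tolerance where the paper takes an abstract finite $q$-net of the $K$-ball, you split the slack into thirds rather than quarters, and your extra condition $L_0C_{B,n}h\le\rho K/3$ and vertex-norm filter are redundant, since $\nrm{u(v)}\le(1-\rho)K$ together with convexity already controls $\nrm{P}$.
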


\begin{proof}
	Fix $\eps>0$.
	Choose $0<\rho<1$ so that $\rho K\le\eps/4$.
	Choose $r>0$ so that $(1-\rho)Lr\le\eps/4$.
	Set
	\[
		\Omega:=(1-\rho)L\Lambda_\varphi\frac{1}{r}.
	\]
	Choose a rational $h>0$ so that
	\[
		D_{B,n}\Omega h\le \rho L/4,
		\qquad
		(1-\rho)LC_{B,n}h\le\eps/4.
	\]
	Finally choose a rational $q>0$ so that
	\[
		q\le\eps/4,\qquad
		q\le\rho K/2,\qquad
		R_{B,n}q/h\le\rho L/4.
	\]
	Let $\mathcal T_h$ be the rational triangulation of $B$ produced by the chosen regular triangulation scheme with mesh parameter $h$.
	Let $\mathcal Y_q$ be a finite rational $q$-net of the closed ball $\{y\in\R^m:\nrm{y}\le K\}$.
	Define $\mathcal P_\eps$ to be the finite set of all rational piecewise affine maps on $\mathcal T_h$ whose vertex values lie in $\mathcal Y_q$ and whose simplexwise matrices satisfy the certificate of \Cref{dfn_lipschitz_certificate}.
	This is a finite construction by \Cref{rem_operator_norm_certificate_decidable}.

	Let $f\in\mathcal F(B,L,K;m)$ and set $u=S_{\rho,r}[f]$.
	By \Cref{lem_smoothing_estimates}, $u$ is $(1-\rho)L$-Lipschitz, $\nrm{u}\le(1-\rho)K$, and $\Diff u$ has modulus $\Omega$.
	Let $\mathcal I_h[u]$ be its affine interpolant on $\mathcal T_h$.
	By \Cref{lem_fixed_mesh_interpolation},
	\[
		\Lip{B}(\mathcal I_h[u])
		\le
		(1-\rho)L+\rho L/4.
	\]
	At every vertex $v$ of $\mathcal T_h$, choose a rational vector $y_v\in\mathcal Y_q$ with $\nrm{y_v-u(v)}\le q$.
	This is possible because $\nrm{u(v)}\le(1-\rho)K$ and $q\le\rho K/2$.
	Let $P$ be the rational piecewise affine map with these vertex values.
	By \Cref{lem_rational_rounding_triangulation},
	\[
		\Lip{B}(P)
		\le
		(1-\rho)L+\rho L/4+\rho L/4
		\le L.
	\]
	Moreover, all vertex values of $P$ lie in the convex ball $\nrm{y}\le K$, hence $\nrm{P(x)}\le K$ for all $x\in B$.
	Thus $P\in\mathcal P_\eps$.

	Combining the contraction, smoothing, interpolation, and rounding estimates gives
	\[
		d_\infty(f,P)
		\le
		\rho K
		+
		(1-\rho)Lr
		+
		(1-\rho)LC_{B,n}h
		+
		q
		\le
		\eps.
	\]
\end{proof}

\begin{corollary}
	\label{cor_lipschitz_function_space_totally_bounded}
	Let $B$ be a full block, and let $L,K>0$ be rational.
	Let $\mathcal F(B,L,K;m)$ be the space of all functions $f:B\to\R^m$ satisfying
	\[
		\Lip{B}(f)\le L,
		\qquad
		\nrm{f(x)}\le K
		\quad(x\in B),
	\]
	with the metric $d_\infty$.
	The space $\mathcal F(B,L,K;m)$ is totally bounded.
\end{corollary}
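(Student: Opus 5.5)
The corollary follows from \Cref{thm_lipschitz_function_space_totally_bounded} by a direct check against \Cref{dfn_finite_net}. Fix a rational $\eps>0$ (for real $\eps$, pass to a smaller rational one) and take the finite set $\mathcal P_\eps$ supplied by the theorem. It is presented as a finite list of rational piecewise affine maps $P_1,\ldots,P_N$ on the fixed triangulation $\mathcal T_h$. Each map is finite data: the list of vertex values in $\mathcal Y_q$, filtered by the decidable operator-norm certificate of \Cref{rem_operator_norm_certificate_decidable}. So the list itself is produced by an operation, as \Cref{dfn_finite_net} requires.

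Two things remain to verify.

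\emph{Membership.} Each $P_k$ must lie in $\mathcal F(B,L,K;m)$. This is item 1 of the theorem. It rests on \Cref{lem_pl_certificate_lipschitz}, which gives $\Lip{B}(P_k)\le L$, and on convexity of the ball of radius $K$, which gives the bound $\nrm{P_k(x)}\le K$.

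\emph{Covering.} For every $f\in\mathcal F(B,L,K;m)$ there must be an index $k$ with $d_\infty(f,P_k)\le\eps$. This is item 2 of the theorem.

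The only point needing care is constructive: the index $k$ must be produced from $f$, not merely shown to exist classically. The proof of the theorem does this. From $f$ it computes $u=S_{\rho,r}[f]$, evaluates $u$ at the finitely many vertices, and chooses for each vertex $v$ a rational $y_v\in\mathcal Y_q$ within $q$ of $u(v)$. That choice is made by approximating $u(v)$ to precision below $q/2$ and picking a net point that is certified close, which avoids any exact comparison of reals. Finally it recognizes the resulting vertex assignment as one of the listed $P_k$, since that assignment passed the certificate. Each $d_\infty$ involved exists by \Cref{rem_supremum_metric_locally_regular}.

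I expect the main obstacle to be making sure the selected $P$ actually lies in the pre-enumerated list, and is not just an element of the function space. This holds because $\mathcal P_\eps$ was defined as exactly the certified maps with vertex values in $\mathcal Y_q$, and the theorem shows that the constructed $P$ satisfies the certificate.

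Since $\eps$ was arbitrary, the finite nets $\mathcal P_\eps$ witness total boundedness. As a by-product, the space is also located in any ambient metric space containing it, by \Cref{lem_totally_bounded_located}.
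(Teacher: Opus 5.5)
Your proposal is correct and takes the same approach as the paper. The paper's proof is the one-line observation that the finite set $\mathcal P_\eps$ of \Cref{thm_lipschitz_function_space_totally_bounded} is an $\eps$-net in the sense of \Cref{dfn_finite_net}, and your membership and covering checks (including the point that the constructed $P$ passes the simplexwise certificate and so lies in the pre-enumerated list) simply unpack this. The closing aside is not needed; note also that \Cref{lem_totally_bounded_located} is stated only for subsets of $\R^n$, so applying it to the function space would require the metric-space version of its proof.
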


\begin{proof}
	This is exactly \Cref{thm_lipschitz_function_space_totally_bounded} in the terminology of \Cref{dfn_finite_net}.
\end{proof}

\subsection{Functional Extremum Value Theorem}

\begin{theorem}[Functional extremum value theorem]
	\label{thm_functional_evt}
	Let $B$ be a full block and let $L,K>0$ be rational.
	Let $\mathcal F(B,L,K;m)$ be the space of all functions $f:B\to\R^m$ satisfying
	\[
		\Lip{B}(f)\le L,
		\qquad
		\nrm{f(x)}\le K
		\quad(x\in B),
	\]
	with the metric $d_\infty$.
	Let
	\[
		J:\mathcal F(B,L,K;m)\to\R
	\]
	be a uniformly continuous functional.
	Then, for every $\eps>0$, there exists an $\eps$-minimizer $f_\eps\in\mathcal F(B,L,K;m)$.
\end{theorem}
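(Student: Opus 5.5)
The plan is to reduce the functional extremum value theorem to a finite minimization over the net supplied by \Cref{thm_lipschitz_function_space_totally_bounded}. Fix $\eps>0$ and let $\delta:=\alpha_J(\eps/3)$, where $\alpha_J$ is the modulus of uniform continuity of $J$ from \Cref{dfn_uniformly_continuous_functional}. Apply \Cref{thm_lipschitz_function_space_totally_bounded} with accuracy $\delta$. This gives a finite set $\mathcal P_\delta=\{P_1,\ldots,P_N\}$ of rational piecewise affine maps. Each of them lies in $\mathcal F(B,L,K;m)$, and every $f\in\mathcal F(B,L,K;m)$ is within $d_\infty$-distance $\delta$ of some $P_k$. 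The space is inhabited, for instance by the zero map, so $N\ge1$; this should be checked directly, or inherited from the construction, which covers the zero function.

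Next I will compute the finitely many real numbers $J[P_1],\ldots,J[P_N]$. Each is a constructive real given by its witness operation. I will take rational approximations $a_k$ with $\abs{a_k-J[P_k]}\le\eps/6$ and choose an index $k_0$ minimizing $a_k$ over the finite list. Comparing rationals is decidable, so this selection is a legitimate finite operation. This detour through rationals is the only point where the constructive setting matters. One cannot decide which real $J[P_k]$ is smallest, but one can pick an index whose value is within $\eps/3$ of the minimum. Indeed, $J[P_{k_0}]\le J[P_k]+\eps/3$ holds for every $k$.

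Set $f_\eps:=P_{k_0}\in\mathcal F(B,L,K;m)$. For an arbitrary $f\in\mathcal F(B,L,K;m)$, the net property provides $k$ with $d_\infty(f,P_k)\le\delta$. Uniform continuity then gives $J[P_k]\le J[f]+\eps/3$. Hence
\[
	J[f_\eps]\le J[P_k]+\eps/3\le J[f]+2\eps/3\le J[f]+\eps,
\]
which is exactly the $\eps$-minimizer property of \Cref{dfn_approximate_minimizer}.

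The real work is already contained in \Cref{thm_lipschitz_function_space_totally_bounded}. The essential point there is that the net members belong to the space itself: the contraction by $1-\rho$ leaves slack for the Lipschitz certificate and for the bound $K$. Without that, the minimizer could fall outside $\mathcal F(B,L,K;m)$, and $J$ would not even be defined on it. So I expect no further obstacle beyond the approximate selection step. I would present that step carefully, using rational approximations rather than comparisons of reals.
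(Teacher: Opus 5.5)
Your proposal is correct and follows the paper's proof essentially step for step: you take a finite $\alpha_J(\eps/3)$-net from the total-boundedness result, approximate each $J[P_i]$ by a rational within $\eps/6$, select the index with least rational approximation, and obtain the $2\eps/3$ estimate. Your remark that the net members must lie in $\mathcal F(B,L,K;m)$ itself is exactly what item~1 of the finite piecewise affine approximation theorem (equivalently, the definition of a finite net) supplies.
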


\begin{proof}
	Let $\alpha_J$ be the modulus of $J$.
	Choose a finite $\alpha_J(\eps/3)$-net
	\[
		P_1,\ldots,P_N
	\]
	of $\mathcal F(B,L,K;m)$ using \Cref{cor_lipschitz_function_space_totally_bounded}.
	Compute rational approximations $a_i$ of $J[P_i]$ such that
	\[
		\abs{a_i-J[P_i]}\le\eps/6
	\]
	for all $i$.
	Choose an index $i_\ast$ such that $a_{i_\ast}\le a_i$ for every $i$, taking the least such index if several occur, and set $f_\eps:=P_{i_\ast}$.
	Let $f\in\mathcal F(B,L,K;m)$.
	Choose $P_j$ in the net with $d_\infty(f,P_j)\le\alpha_J(\eps/3)$.
	Then
	\[
		J[P_j]\le J[f]+\eps/3.
	\]
	By the choice of $i_\ast$,
	\[
		J[P_{i_\ast}]
		\le
		a_{i_\ast}+\eps/6
		\le
		a_j+\eps/6
		\le
		J[P_j]+\eps/3.
	\]
	Thus
	\[
		J[f_\eps]\le J[f]+\frac{2\eps}{3}\le J[f]+\eps.
	\]
	Since $f$ was arbitrary, $f_\eps$ is an $\eps$-minimizer.
\end{proof}

\begin{corollary}
	\label{cor_functional_evt_max}
	Under the assumptions of \Cref{thm_functional_evt}, every uniformly continuous functional $J$ admits an $\eps$-maximizer for every $\eps>0$.
\end{corollary}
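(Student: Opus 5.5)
The plan is to reduce the maximizer statement to \Cref{thm_functional_evt} applied to the negated functional. Define
\[
	\tilde J:\mathcal F(B,L,K;m)\to\R,
	\qquad
	\tilde J[f]:=-J[f].
\]
The first step is to check that $\tilde J$ is a uniformly continuous functional in the sense of \Cref{dfn_uniformly_continuous_functional}, with the same modulus $\alpha_J$. This holds because
\[
	\abs{\tilde J[f]-\tilde J[g]}=\abs{J[f]-J[g]}.
\]
Negation is also a constructive operation on reals: a witness $q_x$ for $x$ gives the witness $-q_x$ for $-x$. So $\tilde J$ is a legitimate functional on the same space.

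Next, fix $\eps>0$ and apply \Cref{thm_functional_evt} to $\tilde J$. This produces $f_\eps\in\mathcal F(B,L,K;m)$ with
\[
	\tilde J[f_\eps]\le\tilde J[f]+\eps
	\qquad
	\text{for all } f\in\mathcal F(B,L,K;m).
\]
Rewriting this inequality gives
\[
	J[f_\eps]\ge J[f]-\eps
	\qquad
	\text{for all } f,
\]
which is exactly the $\eps$-maximizer property (the evident mirror of \Cref{dfn_approximate_minimizer}).

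Alternatively, one can repeat the finite-net argument from the proof of \Cref{thm_functional_evt} directly. In that version one selects an index $i_\ast$ maximizing the rational approximations $a_i$ instead of minimizing them, and the same $\eps/3$ and $\eps/6$ bookkeeping applies. There is no real obstacle. The only point to check is that the construction stays finite and decidable: choosing an extremal index among finitely many rationals $a_i$ is a decidable comparison, so nothing beyond the earlier theorem is required.
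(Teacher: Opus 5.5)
Your proof is correct and matches the paper's argument: the paper's proof is the single line applying \Cref{thm_functional_evt} to $-J$. Your check that $-J$ inherits the modulus $\alpha_J$, and your rewriting of the minimizer inequality as $J[f_\eps]\ge J[f]-\eps$, make that line explicit; the direct finite-net variant you sketch also works but is unnecessary.
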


\begin{proof}
	Apply \Cref{thm_functional_evt} to $-J$.
\end{proof}

\subsection{Examples for the Supremum Metric}

\begin{example}[Integral cost with uniformly Lipschitz integrand]
	\label{ex_sup_metric_integral_cost}
	Let $B$ be a full block and let
	\[
		\ell:B\times\{y\in\R^m:\nrm{y}\le K\}\to\R
	\]
	be continuous and Lipschitz in the second argument with constant $C_\ell$.
	For $f\in\mathcal F(B,L,K;m)$ put
	\[
		J[f]:=\int_B\ell(x,f(x))\diff x.
	\]
	Then
	\[
		\abs{J[f]-J[g]}
		\le
		C_\ell\mu(B)d_\infty(f,g),
	\]
	so \Cref{thm_functional_evt} applies to $J$.
\end{example}

\begin{example}[Finite sampling and terminal costs]
	\label{ex_sup_metric_sampling_cost}
	Fix points $x_1,\ldots,x_s\in B$ and a uniformly continuous function
	\[
		\Phi:\left(\R^m\right)^s\to\R
	\]
	on the product of the closed ball $\nrm{y}\le K$.
	The functional
	\[
		J[f]:=\Phi(f(x_1),\ldots,f(x_s))
	\]
	is uniformly continuous with respect to $d_\infty$.
	Thus the functional extremum value theorem covers finite terminal and sampling costs.
\end{example}

\begin{remark}[Relation to earlier applications]
	\label{rem_fevt_previous_applications}
	The construction above is the multidimensional version of the constructive extremum-value mechanism used for function-space optimization in \cite{Osinenko2018constructiveve,Osinenko2018Analysisextrem}.
	In particular, the finite-horizon optimal-control and dynamic-programming examples there fit the supremum-metric theorem when the admissible policy or value-function class is represented by a bounded Lipschitz ball.
	The present statement allows functions $B\to\R^m$ on full blocks $B\subset\R^n$.
\end{remark}

\subsection{Integral Metrics with Cone-Regular Supports}

\begin{definition}[Integral distances]
	\label{dfn_integral_distances}
	Let $B$ be a full block and let $f,g$ be vector-valued regular functions whose scalar components belong to $\BReg{B}$.
	Define
	\[
		d_1(f,g):=\int_B\nrm{f(x)-g(x)}\diff x
	\]
	and
	\[
		d_2(f,g):=
		\left(\int_B\nrm{f(x)-g(x)}^2\diff x\right)^{1/2}.
	\]
	The integrands are regular after flattening the finitely many supports and applying the continuous functions $y\mapsto\nrm{y}$ and $y\mapsto\nrm{y}^2$ on the resulting bounded branch ranges.
\end{definition}

\begin{definition}[Symmetric difference]
	\label{dfn_symmetric_difference}
	For sets $X,Y\subseteq B$, define their \emph{symmetric difference} by
	\[
		X\triangle Y
		:=
		(X\setminus Y)\cup(Y\setminus X).
	\]
	It is the part where the two supports disagree.
\end{definition}

\begin{remark}[Representability of symmetric differences]
	\label{rem_symmetric_difference_representable}
	If $X,Y\Subset B$ are representable, then $X\triangle Y$ is representable relative to $B$.
	Indeed,
	\[
		X\triangle Y
		=
		\left(X\cap(B\setminus Y)\right)
		\cup
		\left(Y\cap(B\setminus X)\right),
	\]
	where the complements are relative apartness complements inside $B$.
	These relative complements are representable by \Cref{lem_representable_relative_complement}, finite intersections are representable by \Cref{lem_representable_finite_intersection}, and finite unions are representable by \Cref{lem_representable_finite_union}.
	Thus $\mu(X\triangle Y)$ is the measure of a representable set, not an additional primitive.
\end{remark}

\begin{figure}[ht]
	\centering
	\resizebox{0.62\linewidth}{!}{\begin{tikzpicture}[x=1cm,y=1cm,font=\small]
	\fill[gray!8] (-0.4,-0.35) rectangle (5.4,2.85);
	\draw[gray!55] (-0.4,-0.35) rectangle (5.4,2.85);

	\begin{scope}
		\clip (1.25,1.2) ellipse (1.55 and 1.0);
		\fill[orange!35] (-0.4,-0.35) rectangle (5.4,2.85);
		\fill[white] (3.05,1.2) ellipse (1.55 and 1.0);
	\end{scope}
	\begin{scope}
		\clip (3.05,1.2) ellipse (1.55 and 1.0);
		\fill[orange!35] (-0.4,-0.35) rectangle (5.4,2.85);
		\fill[white] (1.25,1.2) ellipse (1.55 and 1.0);
	\end{scope}

	\begin{scope}
		\clip (1.25,1.2) ellipse (1.55 and 1.0);
		\clip (3.05,1.2) ellipse (1.55 and 1.0);
		\fill[blue!16] (-0.4,-0.35) rectangle (5.4,2.85);
	\end{scope}

	\draw[very thick,blue!70!black] (1.25,1.2) ellipse (1.55 and 1.0);
	\draw[very thick,green!45!black] (3.05,1.2) ellipse (1.55 and 1.0);

	\node[blue!70!black] at (0.70,2.45) {$X$};
	\node[green!45!black] at (3.75,2.45) {$Y$};
	\node[orange!90!black] at (0.8,1.20) {$X\triangle Y$};
	\node[orange!90!black] at (3.35,1.20) {$X\triangle Y$};	
	\node[blue!70!black] at (2.15,1.2) {$X\cap Y$};
\end{tikzpicture}}
	\caption{The symmetric difference $X\triangle Y$ is the union of the two one-sided discrepancies.}
	\label{fig_dfn_symmetric_difference}
\end{figure}

\begin{definition}[Finitely netted support sequence]
	\label{dfn_finitely_netted_support_sequence}
	Let $B$ be a full block.
	A support sequence
	\[
		\mathcal X=(X_i)_{i\in I}
	\]
	of representable subsets of $B$ is called \emph{finitely netted in measure} if, for every $\eta>0$, there are finitely many members
	\[
		Y_1^\eta,\ldots,Y_{M(\eta)}^\eta
	\]
	of the sequence such that, for every $i\in I$, some $Y_j^\eta$ satisfies
	\[
		\mu(X_i\triangle Y_j^\eta)\le\eta.
	\]
	Here $I$ is either a finite initial segment of $\N$ or $\N$ itself, and the sequence data include the routine which produces representability witnesses for $X_i$.
\end{definition}

\begin{figure}[ht]
	\centering
	\resizebox{0.78\linewidth}{!}{

\begin{tikzpicture}[x=1cm,y=1cm,font=\small]
	\fill[gray!8] (0,0) rectangle (8,4.3);
	\draw[gray!55] (0,0) rectangle (8,4.3);
	\node[gray!70!black] at (0.35,4.05) {$B$};
	
	\fill[blue!16] (0.7,0.8) rectangle (2.1,2.0);
	\draw[very thick,blue!70!black] (0.7,0.8) rectangle (2.1,2.0);
	
	\fill[blue!16]
	(3.0,0.55) -- (4.65,0.55) -- (4.65,1.95) -- (3.55,2.45) -- (3.0,1.75) -- cycle;
	\draw[very thick,blue!70!black]
	(3.0,0.55) -- (4.65,0.55) -- (4.65,1.95) -- (3.55,2.45) -- (3.0,1.75) -- cycle;
	
	\fill[blue!16] (6.1,1.25) ellipse (0.85 and 0.65);
	\draw[very thick,blue!70!black] (6.1,1.25) ellipse (0.85 and 0.65);
	
	\path[pattern=north east lines,pattern color=orange!45]
	(0.55,0.7) rectangle (2.25,2.13);
	\draw[orange!85!black,thick] (0.55,0.7) rectangle (2.25,2.13);
	
	\path[pattern=north east lines,pattern color=orange!45]
	(2.86,0.44) -- (4.76,0.44) -- (4.83,2.05) -- (3.48,2.63) -- (2.83,1.82) -- cycle;
	\draw[orange!85!black,thick]
	(2.86,0.44) -- (4.76,0.44) -- (4.83,2.05) -- (3.48,2.63) -- (2.83,1.82) -- cycle;
	
	\path[pattern=north east lines,pattern color=orange!45]
	(6.22,1.38) ellipse (0.93 and 0.73);
	\draw[orange!85!black,thick] (6.22,1.38) ellipse (0.93 and 0.73);
	
	\node[blue!70!black] at (1.4,1.25) {$Y_1^\eta$};
	\node[blue!70!black] at (3.82,1.45) {$Y_2^\eta$};
	\node[blue!70!black] at (6.1,1.25) {$Y_3^\eta$};		
	
	\node[orange!85!black,font=\scriptsize] at (2.35,2.33) {$X_{i_1}$};
	\node[orange!85!black,font=\scriptsize] at (5.10,2.22) {$X_{i_2}$};
	\node[orange!85!black,font=\scriptsize] at (7.5,1.55) {$X_{i_3}$};
	
	\node[align=center,font=\scriptsize] at (4.0,3.85)
	{finite list at scale $\eta$};
	\node[orange!85!black,align=center,font=\scriptsize] at (4.0,3.05)
	{sample supports $X_i$ are matched to\\ nearby finite representatives $Y_j^\eta$};
	
\end{tikzpicture}}
	\caption{A finitely netted support sequence need not itself be finite. At each precision $\eta$, finitely many representatives $Y_j^\eta$ approximate all members of the sequence in the measure of symmetric difference.}
	\label{fig_dfn_finitely_netted_support_sequence}
\end{figure}

\begin{remark}[Meaning of finite support nets]
	\label{rem_finitely_netted_support_sequence_intuition}
	The finite list $Y_1^\eta,\ldots,Y_{M(\eta)}^\eta$ is a finite description at the chosen precision, not a decision procedure for arbitrary real points.
	For a given support $X_i$, the data specify a representative $Y_j^\eta$ and a representability witness for the small discrepancy $X_i\triangle Y_j^\eta$.
	Thus the condition is a measure-theoretic finite approximation of support shapes.
	The quantity $\mu(X_i\triangle Y_j^\eta)$ is the measure analogue of a distance between supports, not a Hausdorff or pointwise distance.
\end{remark}

\begin{definition}[Cone-collar support data]
	\label{dfn_cone_collar_support_data}
	Let
	\[
		\mathcal X=(X_i)_{i\in I}
	\]
	be a support sequence of representable subsets of a full block $B$.
	We say that $\mathcal X$ has \emph{cone-collar data} if there is a rational constant $C_{\mathcal X}>0$ such that, for every rational $\lambda>0$, there are finitely many members
	\[
		Y_1^\lambda,\ldots,Y_{M(\lambda)}^\lambda
	\]
	of the sequence with the following property.
	For every $i\in I$, there are an index $j$ and a finite multiblock $\mathcal E_{i,\lambda}$ such that
	\[
		X_i\triangle Y_j^\lambda\subseteq\bar{\mathcal E}_{i,\lambda}.
	\]
	The cost estimate is
	\[
		\gamma(\mathcal E_{i,\lambda})\le C_{\mathcal X}\lambda.
	\]
	In particular, $\mathcal X$ is finitely netted in measure.
\end{definition}

\begin{figure}[ht]
	\centering
	\resizebox{0.82\linewidth}{!}{\begin{tikzpicture}[x=1cm,y=1cm,font=\small]
	\fill[gray!8] (0,0) rectangle (8,4.8);
	\draw[gray!55] (0,0) rectangle (8,4.8);

	\fill[orange!28,opacity=0.85] (0.45,1.90) rectangle (1.90,2.78);
	\fill[orange!28,opacity=0.85] (1.90,2.32) rectangle (3.35,3.13);
	\fill[orange!28,opacity=0.85] (3.35,2.20) rectangle (4.85,3.02);
	\fill[orange!28,opacity=0.85] (4.85,2.05) rectangle (6.25,2.82);
	\fill[orange!28,opacity=0.85] (6.25,2.15) rectangle (7.55,2.70);

	\path[pattern=north east lines,pattern color=orange!65!black]
		(0.45,2.22)
		.. controls (0.78,2.05) and (1.25,2.35) .. (2.05,2.72)
		.. controls (3.0,3.15) and (3.85,2.82) .. (4.75,2.42)
		.. controls (5.65,2.05) and (6.65,2.75) .. (7.55,2.36)
		-- (7.55,2.50) -- (6.10,2.32) -- (4.70,2.55) -- (3.20,2.90) -- (1.80,2.55) -- (0.45,2.12)
		-- cycle;

	\draw[very thick,blue!70!black]
		(0.45,2.22)
		.. controls (0.78,2.05) and (1.25,2.35) .. (2.05,2.72)
		.. controls (3.0,3.15) and (3.85,2.82) .. (4.75,2.42)
		.. controls (5.65,2.05) and (6.65,2.75) .. (7.55,2.36);
	\draw[very thick,green!45!black,dashed]
		(0.45,2.12) -- (1.80,2.55) -- (3.20,2.90) -- (4.70,2.55) -- (6.10,2.32) -- (7.55,2.50);	
		
	\fill[green!18,opacity=0.4]
	(0.45,0.55) -- (7.55,0.55) -- (7.55,2.50) --
	(6.10,2.32) -- (4.70,2.55) -- (3.20,2.90) -- (1.80,2.55) -- (0.45,2.12) -- cycle;
	
	\fill[blue!18,opacity=0.3]
	(0.45,0.55) -- (7.55,0.55) -- (7.55,2.36)
	.. controls (6.65,2.75) and (5.65,2.05) .. (4.75,2.42)
	.. controls (3.85,2.82) and (3.0,3.15) .. (2.05,2.72)
	.. controls (1.25,2.35) and (0.78,2.05) .. (0.45,2.22)
	-- cycle;
	
	\node[orange!90!black] at (3.95,2.4) {$\mathcal E_{i,\lambda}$};
	\node[orange!90!black,right,font=\scriptsize] at (5.58,3.28) {$\gamma(\mathcal E_{i,\lambda})\le C_{\mathcal X}\lambda$};
	\node[align=center,font=\scriptsize] at (1.45,3.4)
	{$X_i\triangle Y_j^\lambda\subseteq\bar{\mathcal E}_{i,\lambda}$};
	
	\draw[->,>=stealth,thick,red!40!black,shorten <=2pt,shorten >=2pt] (0.9,3.2) -- (2.55,2.75);	

	\fill[blue!18,opacity=0.5] (6.5,4.35) rectangle (6.9,4.55);
	\draw[blue!70!black,thick] (6.5,4.35) rectangle (6.9,4.55);
	\node[right,blue!70!black,font=\footnotesize] at (7.0,4.45) {$X_i$};
	
	\fill[green!18,opacity=0.6] (6.5,3.95) rectangle (6.9,4.15);
	\draw[green!45!black,thick,dashed] (6.5,3.95) rectangle (6.9,4.15);
	\node[right,green!45!black,font=\footnotesize] at (7.0,4.05) {$Y_j^\lambda$};
	
\end{tikzpicture}}
	\caption{Cone-collar data strengthen finite measure nets by requiring the discrepancy $X_i\triangle Y_j^\lambda$ to be contained in an explicit collar multiblock of cost at most $C_{\mathcal X}\lambda$.}
	\label{fig_dfn_cone_collar_support_data}
\end{figure}

\begin{remark}[No point localization in support nets]
	\label{rem_cone_collar_no_localization}
	Neither \Cref{dfn_finitely_netted_support_sequence} nor \Cref{dfn_cone_collar_support_data} asks whether a real point belongs to $X_i$ or to $Y_j^\lambda$.
	The data are finite set-level certificates: a representative support, a representability witness for the discrepancy, and a multiblock cost estimate.
	This is the same measurement style used for representable sets.
\end{remark}

\begin{remark}[Why a perimeter budget enters]
	\label{rem_why_cone_collar_data}
	The supremum-metric theorem controls functions by moving their values on a fixed block.
	In integral metrics, support motion is harmless only when the moved region has small measure.
	A bounded cone-collar constant plays the role of a constructive perimeter budget: a support perturbation of size $\lambda$ changes the support only inside a multiblock of cost $\bigo{\lambda}$.
	Without such a uniform budget, supports can oscillate with arbitrarily large boundary cost, and no finite net for the metrics $d_1$ or $d_2$ should be expected.
\end{remark}

\begin{example}[Finite support sequences]
	\label{ex_finite_support_catalog_cone_collar}
	Every finite support sequence
	\[
		\mathcal X=(X_1,\ldots,X_S)
	\]
	of representable subsets of $B$ is finitely netted in measure.
	It also has cone-collar data: at every scale take the whole finite sequence as its list of representatives and use the empty multiblock for every discrepancy.
	This is the degenerate case in which the support sequence itself is finite.
\end{example}

\begin{example}[Cone-chart support sequences]
	\label{ex_cone_chart_support_sequence}
	Fix a natural number $S$ of charts, rational chart blocks $Q_l=A_l\times[c_l,d_l]\subset B$ for $l=1,\ldots,S$, side signs, a rational Lipschitz bound $L_\theta$, and a bound $H$.
	Let $\mathcal X$ be a support sequence whose members, in each chart, are described by a subgraph or supergraph of a function
	\[
		\varphi_l\in\mathcal F(A_l,L_\theta,H;1),
	\]
	with the remaining pieces fixed by rational blocks.
	By \Cref{thm_lipschitz_function_space_totally_bounded}, the boundary functions admit finite $d_\infty$-nets at every precision.
	Replacing every $\varphi_l$ by a net representative gives finitely many support representatives.
	If
	\[
		\nrm{\varphi_l-\psi_l}_\infty\le\lambda,
	\]
	then the symmetric difference of the corresponding subgraphs in $Q_l$ is contained in the vertical strip
	\[
		\{(u,t)\in Q_l:\abs{t-\psi_l(u)}\le\lambda\}.
	\]
	This strip is covered by a finite multiblock of cost bounded by a chart constant times $\lambda$.
	Summing over $l=1,\ldots,S$ gives cone-collar data for $\mathcal X$.
\end{example}

\begin{figure}[ht]
	\centering
	\resizebox{0.82\linewidth}{!}{\begin{tikzpicture}[x=1cm,y=1cm,font=\small]
	\fill[gray!8] (0,0) rectangle (7.4,4.6);
	\draw[very thick] (0,0) rectangle (7.4,4.6);
	\node[gray!70!black,anchor=west] at (0.25,3.8) {$Q_l=A_l\times[c_l,d_l]$};

	\fill[blue!16]
		(0,0) -- (7.4,0) -- (7.4,2.55)
		.. controls (6.35,2.9) and (5.45,2.2) .. (4.45,2.62)
		.. controls (3.45,3.05) and (2.35,3.25) .. (1.35,2.75)
		.. controls (0.75,2.45) and (0.35,2.25) .. (0,2.4)
		-- cycle;
	\draw[very thick,blue!70!black]
		(0,2.4)
		.. controls (0.35,2.25) and (0.75,2.45) .. (1.35,2.75)
		.. controls (2.35,3.25) and (3.45,3.05) .. (4.45,2.62)
		.. controls (5.45,2.2) and (6.35,2.9) .. (7.4,2.55);

	\draw[very thick,orange!90!black,dashed]
		(0,2.28) -- (1.45,2.62) -- (2.95,3.05) -- (4.45,2.48) -- (5.95,2.58) -- (7.4,2.68);

	\coordinate (p) at (3.45,2.87);
	\fill[green!45!black,opacity=0.20] (p) -- ++(-0.85,-1.15) -- ++(1.70,0) -- cycle;
	\draw[green!45!black,thick,dashed] (p) -- ++(-0.85,-1.15);
	\draw[green!45!black,thick,dashed] (p) -- ++(0.85,-1.15);
	\node[green!45!black] at (4.15,2.30) {$L_\theta$};

	\draw[latex-latex,gray!95] (0.45,0.42) -- (6.95,0.42);
	\node[gray!95] at (3.7,0.65) {$A$};
	\draw[latex-latex,gray!95] (6.95,0.42) -- (6.95,4.15);
	\node[gray!95,rotate=90] at (6.68,1.85) {$[c,d]$};

	\node[blue!70!black] at (2.05,1.25) {$X_\varphi$};
	\node[blue!70!black] at (2.15,3.3) {$t=\varphi(u)$};
	\node[orange!90!black] at (5.15,2.8) {$t=\psi(u)$};
	\node[align=center] at (3.7,4.25)
		{one fixed chart $Q_l$};
\end{tikzpicture}}
	\caption{A cone-chart support sequence. The graph $t=\varphi(u)$ is approximated by a finite representative $t=\psi(u)$, and the cone slope gives the uniform Lipschitz bound needed for finite nets and collar estimates.}
	\label{fig_ex_cone_chart_support_sequence}
\end{figure}

\begin{example}[Rolling-ball support sequences]
	\label{ex_rolling_ball_support_sequence}
	Fix a rational rolling radius $r>0$, a fixed number $S$ of charts, and rational ambient blocks.
	A support sequence with interior and exterior rolling-ball data of radius $r$ has, after shrinking charts if needed, a two-sided cone presentation with a cone slope depending only on the fixed chart geometry and on $r$.
	Therefore \Cref{ex_cone_chart_support_sequence} gives cone-collar data.
	The point is not that the sequence is finite, but that its boundary functions range over a finitely netted Lipschitz-bounded function space.
\end{example}

\begin{definition}[Cone-regular branch class]
	\label{dfn_cone_regular_branch_class}
	Let $B$ be a full block, let $L,K>0$, let $N\in\N$, and let $\mathcal X=(X_i)_{i\in I}$ be a support sequence of representable subsets of $B$.
	Denote by
	\[
		\mathcal F_{\mathcal X}^{N}(B,L,K;m)
	\]
	the class of regular functions obtained from a finite formal sum
	\[
		f=\sum_{k=1}^N u_k\indic{X_{i_k}}
	\]
	by applying the finite flattening construction to its support data, where each index $i_k$ belongs to $I$ and each ambient branch satisfies
	\[
		u_k\in\mathcal F(B,L,K;m).
	\]
	Empty and zero summands are allowed.
	The constant $K$ is a branch bound; after flattening, the resulting function has the evident bound witness $NK$.
\end{definition}

\begin{remark}[Branches are not restricted to constants]
	\label{rem_cone_regular_not_piecewise_constant}
	The class $\mathcal F_{\mathcal X}^{N}(B,L,K;m)$ is not a class of piecewise-constant functions.
	The supports may move along the support sequence $\mathcal X$, and each branch is an arbitrary member of the already studied Lipschitz-bounded space $\mathcal F(B,L,K;m)$.
	The flattening step only converts the finite formal sum into the disjoint-support representation required by regular functions.
\end{remark}

\begin{theorem}[Total boundedness in integral metrics]
	\label{thm_cone_regular_integral_total_bounded}
	Let $B$ be a full block and let $\mathcal X=(X_i)_{i\in I}$ be a support sequence of representable subsets of $B$ with cone-collar data.
	Let $L,K>0$ be rational and let $N\in\N$.
	Let $\mathcal F_{\mathcal X}^{N}(B,L,K;m)$ be the class of regular functions obtained from formal sums of at most $N$ terms $u_k\indic{X_{i_k}}$ after applying the finite flattening construction to the support data, where every $i_k$ belongs to $I$ and $u_k\in\mathcal F(B,L,K;m)$.
	Then
	\[
		\mathcal F_{\mathcal X}^{N}(B,L,K;m)
	\]
	is totally bounded with respect to both $d_1$ and $d_2$.
\end{theorem}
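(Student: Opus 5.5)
The plan is to build, for a prescribed $\eps>0$, an explicit finite $\eps$-net by combining two finite nets: the $d_\infty$-net of branches from \Cref{thm_lipschitz_function_space_totally_bounded} and the finite support representatives supplied by the cone-collar data of \Cref{dfn_cone_collar_support_data}. We treat $d_1$ first and derive $d_2$ from it. Every member of the class is effectively bounded by $NK$, and so is every candidate net element. Hence
\[
	d_2(f,g)^2=\int_B\nrm{f-g}^2\le 2NK\,d_1(f,g)
\]
by \Cref{lem_regular_integral_finite_order}, applied after flattening, with \Cref{cor_regular_functions_lattice_closure} ensuring that the integrands are regular. A $d_1$-net at accuracy $\eps^2/(2NK)$ is therefore a $d_2$-net at accuracy $\eps$.

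For $d_1$, fix $\eps>0$. Choose $\delta:=\eps/(2N\mu(B)+1)$ and take the finite family $\mathcal P_\delta\subset\mathcal F(B,L,K;m)$ from \Cref{thm_lipschitz_function_space_totally_bounded}. Choose a rational $\lambda$ with $NKC_{\mathcal X}\lambda\le\eps/2$, and take the representatives $Y^\lambda_1,\ldots,Y^\lambda_{M(\lambda)}$, which are members of $\mathcal X$. The candidate net consists of all flattened formal sums $\sum_{k=1}^{N'}P_k\indic{Y^\lambda_{j_k}}$ with $N'\le N$, $P_k\in\mathcal P_\delta$, and $j_k\le M(\lambda)$. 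Zero summands cover $N'<N$. This family is finite and lies in the class, since each $Y^\lambda_j$ is some $X_i$.

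Given $f=\sum_k u_k\indic{X_{i_k}}$, pick $P_k$ with $d_\infty(u_k,P_k)\le\delta$, and pick $j_k$ together with a multiblock $\mathcal E_k$ of cost at most $C_{\mathcal X}\lambda$ containing $X_{i_k}\triangle Y^\lambda_{j_k}$. Insert the intermediate function $\sum_k P_k\indic{X_{i_k}}$ and use the triangle inequality for $d_1$, which follows from linearity and order of the regular integral (\Cref{lem_regular_functions_finite_algebra,lem_regular_integral_finite_order}). This splits the error into two terms, both handled summand by summand:
\[
	\int_B\nrm{(u_k-P_k)\indic{X_{i_k}}}\le\delta\mu(B),
	\qquad
	\int_B\nrm{P_k(\indic{X_{i_k}}-\indic{Y^\lambda_{j_k}})}\le K\mu(X_{i_k}\triangle Y^\lambda_{j_k})\le KC_{\mathcal X}\lambda .
\]
Summing over the at most $N$ summands gives a total below $\eps$.

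The main obstacle is the second estimate. It needs the integrand $\indic{X}-\indic{Y}$ to be dominated by $\indic{X\triangle Y}$ on the determined part of $B$, and it needs a bound on the integral of a function supported inside $\bar{\mathcal E}_k$ by $K\gamma(\mathcal E_k)$. Pointwise this is delicate constructively, because points are not localized in $X$ or $Y$. The first route I would try is to argue at the level of representability witnesses rather than pointwise. By \Cref{rem_symmetric_difference_representable}, flatten $X,Y$ over $B$ into the pieces $X\cap Y$, $X\setminus Y$, $Y\setminus X$, and the relative complement of both. The difference of the two indicators carries the zero formula on the first and last pieces. On the remaining two pieces it is a regular function bounded by $K$, so \Cref{lem_regular_integral_finite_order}, combined with finite additivity of simple integrals (\Cref{dfn_integral_simple_regular_function}), bounds its integral by $K\mu(X\triangle Y)$. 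Finally, $\mu(X\triangle Y)\le\gamma(\mathcal E_k)$, because the core of any witness for $X\triangle Y$ is a finite multiblock contained in $\bar{\mathcal E}_k$, and \Cref{lem_finite_multiblock_volume_monotone_additive} together with \Cref{rem_block_cost_not_measure} applies to it.
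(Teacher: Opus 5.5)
Your proof is correct and follows essentially the paper's route: it uses the same finite net of flattened sums $\sum_k P_{r_k}\indic{Y_{j_k}}$, built from a $d_\infty$-net of branches and the cone-collar representatives, and the same summand-by-summand bound, which splits into a branch error of order $\mu(B)\rho$ and a support-discrepancy error of order $K\mu(X_{i_k}\triangle Y_{j_k})$. There are two deviations. First, you obtain $d_2$ from $d_1$ through the uniform bound $\nrm{f-g}\le 2NK$, so that $d_2^2\le 2NK\,d_1$, whereas the paper makes a separate quadratic estimate $N\left(V\rho^2+4K^2\eta\right)^{1/2}$. Second, you justify $\mu(X_{i_k}\triangle Y_{j_k})\le\gamma(\mathcal E_{i_k,\lambda})$ via the cores of witnesses for the symmetric difference, a step the paper only asserts.
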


\begin{proof}
	Fix $\eps>0$.
	Let
	\[
		V:=\mu(B).
	\]
	Choose $\eta>0$ and $\rho>0$ so small that
	\[
		N(V\rho+2K\eta)\le\eps
	\]
	and
	\[
		N\left(V\rho^2+4K^2\eta\right)^{1/2}\le\eps.
	\]
	Choose rational $\lambda>0$ with
	\[
		C_{\mathcal X}\lambda\le\eta.
	\]
	Take the finite support representatives
	\[
		Y_1,\ldots,Y_M
	\]
	supplied by the cone-collar data at scale $\lambda$.
	Take a finite $\rho$-net
	\[
		P_1,\ldots,P_R
	\]
	of $\mathcal F(B,L,K;m)$ in $d_\infty$ using \Cref{cor_lipschitz_function_space_totally_bounded}.
	The constants $L$ and $K$ enter here: $L$ gives the finite branch net, while $K$ bounds the contribution of the support discrepancy.
	Form the finite family of all regular functions
	\[
		g=\sum_{k=1}^N P_{r_k}\indic{Y_{j_k}},
		\qquad
		r_k\in\{1,\ldots,R\},\quad j_k\in\{1,\ldots,M\}.
	\]
	After flattening, each such expression is regular.
	Let
	\[
		f=\sum_{k=1}^N u_k\indic{X_{i_k}}
	\]
	belong to $\mathcal F_{\mathcal X}^{N}(B,L,K;m)$.
	Choose $Y_{j_k}$ and $\mathcal E_{i_k,\lambda}$ with
	\[
		X_{i_k}\triangle Y_{j_k}\subseteq\bar{\mathcal E}_{i_k,\lambda}.
	\]
	This containment and the inequality $C_{\mathcal X}\lambda\le\eta$ imply
	\[
		\mu(X_{i_k}\triangle Y_{j_k})\le\eta.
	\]
	Choose $P_{r_k}$ with
	\[
		d_\infty(u_k,P_{r_k})\le\rho.
	\]
	Set
	\[
		g:=\sum_{k=1}^N P_{r_k}\indic{Y_{j_k}}.
	\]
	For each summand,
	\[
		\int_B\nrm{u_k\indic{X_{i_k}}-P_{r_k}\indic{Y_{j_k}}}\diff x
		\le
		V\rho+2K\eta.
	\]
	Summing over $k$ gives
	\[
		d_1(f,g)\le N(V\rho+2K\eta)\le\eps.
	\]
	For the quadratic estimate, use
	\[
		\nrm{\sum_{k=1}^N a_k}^2\le N\sum_{k=1}^N\nrm{a_k}^2.
	\]
	Each summand satisfies
	\[
		\int_B\nrm{u_k\indic{X_{i_k}}-P_{r_k}\indic{Y_{j_k}}}^2\diff x
		\le
		V\rho^2+4K^2\eta.
	\]
	Hence
	\[
		d_2(f,g)
		\le
		N\left(V\rho^2+4K^2\eta\right)^{1/2}
		\le\eps.
	\]
	The displayed finite family is therefore an $\eps$-net for both integral metrics.
\end{proof}

\begin{definition}[$L_p$-continuous functional]
	\label{dfn_Lp_continuous_functional}
	Let $p\in\{1,2\}$.
	A functional
	\[
		J:\mathcal F_{\mathcal X}^{N}(B,L,K;m)\to\R
	\]
	is called \emph{uniformly $L_p$-continuous} if it has a modulus $\alpha_J$ such that
	\[
		d_p(f,g)\le\alpha_J(\eps)
		\quad\Longrightarrow\quad
		\abs{J[f]-J[g]}\le\eps.
	\]
\end{definition}

\begin{theorem}[Functional extremum value theorem in integral metrics]
	\label{thm_integral_metric_functional_evt}
	Let $B$ be a full block and let $\mathcal X=(X_i)_{i\in I}$ be a support sequence of representable subsets of $B$ with cone-collar data.
	Let $L,K>0$ be rational and let $N\in\N$.
	Let $\mathcal F_{\mathcal X}^{N}(B,L,K;m)$ be the class of regular functions obtained from formal sums of at most $N$ terms $u_k\indic{X_{i_k}}$ after applying the finite flattening construction to the support data, where every $i_k$ belongs to $I$ and $u_k\in\mathcal F(B,L,K;m)$.
	Let $p\in\{1,2\}$.
	If
	\[
		J:\mathcal F_{\mathcal X}^{N}(B,L,K;m)\to\R
	\]
	is uniformly $L_p$-continuous, then, for every $\eps>0$, there exists an $\eps$-minimizer of $J$ on
	\[
		\mathcal F_{\mathcal X}^{N}(B,L,K;m).
	\]
\end{theorem}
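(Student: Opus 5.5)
The proof will mirror that of \Cref{thm_functional_evt}, with the supremum-metric total boundedness of \Cref{cor_lipschitz_function_space_totally_bounded} replaced by the integral-metric total boundedness of \Cref{thm_cone_regular_integral_total_bounded}. Fix $p\in\{1,2\}$ and $\eps>0$, and let $\alpha_J$ be the $L_p$-modulus of $J$. Apply \Cref{thm_cone_regular_integral_total_bounded} at accuracy $\alpha_J(\eps/3)$. This yields a finite family
\[
	g_1,\ldots,g_S,\qquad g_s=\sum_{k=1}^N P_{r_k}\indic{Y_{j_k}}\ \text{(flattened)},
\]
such that every $f\in\mathcal F_{\mathcal X}^{N}(B,L,K;m)$ satisfies $d_p(f,g_s)\le\alpha_J(\eps/3)$ for some $s$.

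The one point needing care is that the net elements must themselves lie in the class on which $J$ is defined. Otherwise $J[g_s]$ is meaningless and cannot serve as a minimizer. I will check membership directly from the construction in the proof of \Cref{thm_cone_regular_integral_total_bounded}. Each $P_{r_k}$ belongs to $\mathcal F(B,L,K;m)$ by \Cref{thm_lipschitz_function_space_totally_bounded}, item~1. Each $Y_{j_k}$ is, by \Cref{dfn_cone_collar_support_data}, a \emph{member} of the support sequence $\mathcal X$, so $Y_{j_k}=X_{i}$ for some index $i\in I$. Hence every $g_s$ is a formal sum of at most $N$ admissible terms, and after flattening it lies in $\mathcal F_{\mathcal X}^{N}(B,L,K;m)$. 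I expect this verification to be the main (though mild) obstacle, and it is exactly why the cone-collar definition insists that the representatives be members of the sequence rather than arbitrary sets.

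Then compute rational approximations $a_s$ with $\abs{a_s-J[g_s]}\le\eps/6$, choose the least index $s_\ast$ minimizing $a_s$ (a finite rational comparison), and set $f_\eps:=g_{s_\ast}$. For arbitrary $f$ in the class, pick $g_t$ with $d_p(f,g_t)\le\alpha_J(\eps/3)$, so that $J[g_t]\le J[f]+\eps/3$. Then
\[
	J[f_\eps]\le a_{s_\ast}+\eps/6\le a_t+\eps/6\le J[g_t]+\eps/3\le J[f]+2\eps/3,
\]
which gives the $\eps$-minimizer. If necessary, the $d_p$ values between flattened representatives are well defined by \Cref{dfn_integral_distances}, since flattening does not change the represented function on determined points, so the modulus applies unchanged.
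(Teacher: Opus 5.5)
Your proposal is correct and is essentially the paper's proof: the paper just says to repeat the finite-net argument of \Cref{thm_functional_evt} using the finite $d_p$-net from \Cref{thm_cone_regular_integral_total_bounded}. The paper leaves implicit that the net elements $\sum_k P_{r_k}\indic{Y_{j_k}}$ lie in $\mathcal F_{\mathcal X}^{N}(B,L,K;m)$; it is built into its definition of an $\eps$-net, whose points must belong to the space. Your explicit check of this, via item~1 of \Cref{thm_lipschitz_function_space_totally_bounded} and the fact that the cone-collar representatives are members of $\mathcal X$, is the right justification.
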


\begin{proof}
	Repeat the finite-net proof of \Cref{thm_functional_evt}, using the finite $d_p$-net supplied by \Cref{thm_cone_regular_integral_total_bounded}.
\end{proof}

\subsection{Examples for Integral Metrics}

\begin{example}[Least-squares tracking]
	\label{ex_L2_least_squares_tracking}
	Let $r\in\BReg{B}$ be effectively bounded.
	The functional
	\[
		J[f]:=\int_B\nrm{f(x)-r(x)}^2\diff x
	\]
	is uniformly $L_2$-continuous on every class $\mathcal F_{\mathcal X}^{N}(B,L,K;m)$.
	Indeed, if $\nrm{f},\nrm{g},\nrm{r}\le M$ on $B$, then
	\[
		\abs{J[f]-J[g]}
		\le
		4M\,d_1(f,g)
		\le
		4M\mu(B)^{1/2}d_2(f,g).
	\]
	Thus \Cref{thm_integral_metric_functional_evt} gives approximate least-squares minimizers.
\end{example}

\begin{example}[Integral stage cost]
	\label{ex_L1_stage_cost}
	Let
	\[
		\ell:B\times\{y\in\R^m:\nrm{y}\le K\}\to\R
	\]
	be bounded and Lipschitz in $y$ with constant $C_\ell$.
	For
	\[
		J[f]:=\int_B\ell(x,f(x))\diff x
	\]
	one has
	\[
		\abs{J[f]-J[g]}
		\le
		C_\ell d_1(f,g).
	\]
	Hence $J$ is uniformly $L_1$-continuous.
	On a finite-volume block it is also uniformly $L_2$-continuous, since
	\[
		d_1(f,g)\le\mu(B)^{1/2}d_2(f,g).
	\]
\end{example}

\begin{example}[Switching regions with cone geometry]
	\label{ex_switching_regions_cone_geometry}
	Fix rational chart blocks
	\[
		Q_l=A_l\times[c_l,d_l]\subset B,
		\qquad
		l=1,\ldots,S,
	\]
	side signs $\sigma_l\in\{-1,1\}$, a rational cone slope $L_\theta$, and a bound $H$.
	For each precision index $q\in\N$, take the dyadic grid of mesh $2^{-q}$ on every $A_l$, with its standard rational triangulation.
	A finite rational code consists of $q$ and rational vertex values
	\[
		a_{l,v}\in[c_l,d_l]\cap[-H,H],
		\qquad
		v\in V_{l,q},
	\]
	satisfying the finite cone inequalities
	\[
		\abs{a_{l,v}-a_{l,w}}
		\le
		L_\theta\nrm{v-w}
		\qquad
		v,w\in V_{l,q}.
	\]
	Let $P_{l,q,a}:A_l\to[c_l,d_l]$ be the piecewise affine interpolation of these vertex values on that triangulation.
	Enumerate all such finite rational codes as $e_1,e_2,\ldots$.
	The code $e_i$ produces a switching support $X_i\subseteq B$ by the chart formulas
	\[
		X_i\cap Q_l
		=
		\{(u,t)\in Q_l:\sigma_l(t-P_{l,e_i}(u))\le0\},
		\qquad
		l=1,\ldots,S,
	\]
	together with the fixed rational-block pieces outside the chart interiors.
	Thus
	\[
		\mathcal X=(X_i)_{i\in\N}
	\]
	is a support sequence: the index $i$ is decoded into finite rational data, and those data mechanically produce $X_i$ and its representability witnesses.
	The fixed cone slope and fixed number of charts put this sequence under \Cref{ex_cone_chart_support_sequence}, so $\mathcal X$ has cone-collar data.
	Regular functions whose branches are bounded Lipschitz and whose active supports are chosen from this sequence belong to a class covered by \Cref{thm_integral_metric_functional_evt}.
	Such finitely enumerable switching regions are the form one obtains in a finite search over piecewise feedback laws, including sliding-mode-style candidates.
\end{example}

\section{Multifunctions}
\label{sec_multifunctions}

\begin{remark}
	This section develops set-valued maps in a form compatible with the regular-function machinery.
	Domain conditions and value conditions are kept separate, so that total, domain-regular, marginalized, block-full, convex, and Hausdorff-Lipschitz assumptions can be combined only when needed.
	The main output is a collection of exact, approximate, regular, and Lipschitz selector mechanisms.
\end{remark}

\subsection{Common Definitions}
\label{subsec_multifunctions}

\begin{definition}[Multifunction]
	\label{dfn_multifunction}
	Let $X\subset\R^n$.
	A \emph{multifunction} from $X$ to $\R^m$ is an assignment
	\[
		F:X\rightrightarrows\R^m,
		\qquad
		x\mapsto F(x)\subset\R^m.
	\]
	The set $X$ is denoted by $\dom F$.
	For $A\subset\R^n$, its graph over $A$ is
	\[
		\Gamma_A(F)
		:=
		\{(x,y)\in (A\cap\dom F)\times\R^m:y\in F(x)\}.
	\]
\end{definition}

\begin{definition}[Domain conditions for multifunctions]
	\label{dfn_multifunction_domain_conditions}
	Let $B\subset\R^n$ be a full block.
	A multifunction $F$ is called \emph{$B$-total} if
	\[
		B\subseteq\dom F.
	\]
	It is called \emph{total} if $\dom F=\R^n$.
	It is called \emph{domain-regular on $B$} if it is given with pairwise disjoint representable supports
	\[
		X_1,\ldots,X_N\Subset B
	\]
	and
	\[
		\supp F:=\bigcup_{k=1}^N X_k,
		\qquad
		\dom F:=\supp F\cup(\supp F)^c.
	\]
	For $x\in\supp F$, the value $F(x)$ is a subset of $\R^m$.
	For $x\in(\supp F)^c$, the value is defined to be $\{0\}$.
\end{definition}

\begin{remark}
	\label{rem_domain_regular_disjoint_not_apart}
	The pairwise disjointness in \Cref{dfn_multifunction_domain_conditions} is the ordinary no-overlap condition of \Cref{dfn_disjoint_sets}.
	It is sufficient for the domain bookkeeping and for exception-set estimates.
	Exact pointwise selector constructions that assemble separate complemented summands may require the stronger mutual-apartness datum of \Cref{dfn_mutual_apartness}; this additional datum is stated where it is used.
\end{remark}

\begin{definition}[Locally domain-regular multifunction]
	\label{dfn_locally_domain_regular_multifunction}
	A multifunction $F$ is called \emph{locally domain-regular} if, for every full block $B$, there is a domain-regular multifunction $\hat F_B$ on $B$ such that $F$ and $\hat F_B$ agree on their common determined part in $B$.
	The local domain on $B$ is taken to be
	\[
		B\cap\dom{\hat F_B}.
	\]
\end{definition}

\begin{remark}
	\label{rem_multifunction_domain_value_separation}
	The conditions in \Cref{dfn_multifunction_domain_conditions} concern only where the multifunction is determined.
	They impose no regularity, convexity, locatedness, or thickness condition on its values.
	Value conditions are recorded separately below and may be combined with the domain conditions independently.
\end{remark}

\subsection{Value Conditions}
\label{subsec_multifunction_value_conditions}

\begin{definition}[Marginalized multifunction]
	\label{dfn_marginalized_multifunction}
	Let $B\subset\R^n$ be a full block.
	A multifunction $F$ is called \emph{marginalized on $B$} if there are scalar regular functions
	\[
		\ell_j,u_j\in\BReg{B},
		\qquad
		j=1,\ldots,m,
	\]
	such that
	\[
		\ell_j(x)\le u_j(x),
		\qquad
		j=1,\ldots,m,
	\]
	and
	\[
		F(x)
		=
		\prod_{j=1}^m[\ell_j(x),u_j(x)]
	\]
	whenever the displayed marginal values are determined.
	The functions $\ell_j,u_j$ are called \emph{margins} of $F$.
\end{definition}

\begin{definition}[Block-full values]
	\label{dfn_block_full_multifunction}
	Let $B\subset\R^n$ be a full block.
	The admissible base blocks for a $B$-total multifunction are the full blocks $Q\subseteq B$.
	The admissible base blocks for a domain-regular multifunction on $B$ with supports $X_1,\ldots,X_N$ are the full blocks $Q\subseteq B$ satisfying
	\[
		Q\Subset X_k
	\]
	for some $k$.
	The multifunction $F$ is said to have \emph{block-full values on $B$} if, for every admissible base block $Q$, there exists a full block $D_Q\subset\R^m$ such that
	\[
		Q\times D_Q\subseteq\Gamma_B(F).
	\]
\end{definition}

\begin{remark}
	\label{rem_block_full_not_image_full}
	The condition in \Cref{dfn_block_full_multifunction} is stronger than saying that the union
	\[
		F(Q):=\bigcup_{x\in Q}F(x)
	\]
	contains a full block.
	The useful datum is the uniform graph inclusion
	\[
		x\in Q,\quad y\in D_Q
		\quad\Longrightarrow\quad
		y\in F(x).
	\]
	In this sense the block $D_Q$ project-covers $Q$ through the graph of $F$.
\end{remark}

\begin{definition}[Hausdorff distance]
	\label{dfn_hausdorff_distance}
	Let $A,C\subset\R^m$ be inhabited totally bounded sets.
	Their Hausdorff distance is
	\[
		d_H(A,C)
		:=
		\max\left\{
			\sup_{a\in A} d(a,C),
			\sup_{c\in C} d(c,A)
		\right\}.
	\]
	The distance operations for $A$ and $C$ are supplied by \Cref{lem_totally_bounded_located}, and the displayed suprema exist by finite-net approximation.
\end{definition}

\begin{definition}[Hausdorff-Lipschitz values]
	\label{dfn_hausdorff_lipschitz_multifunction}
	Let $B\subset\R^n$ be a full block.
	A multifunction $F$ is called \emph{$L$-Hausdorff-Lipschitz on $B$} if every value $F(x)$, $x\in B\cap\dom F$, is inhabited and totally bounded, and if the Hausdorff distance between the values satisfies
	\[
		d_H(F(x),F(y))
		\le
		L\nrm{x-y},
		\qquad
		x,y\in B\cap\dom F.
	\]
\end{definition}

\begin{definition}[Hausdorff-continuous values]
	\label{dfn_hausdorff_continuous_multifunction}
	Let $B\subset\R^n$ be a full block and let $\omega:\Q_{>0}\to\Q_{>0}$ satisfy $\omega(r)\to0$ as $r\to0$.
	A multifunction $F$ is called \emph{Hausdorff-continuous on $B$ with modulus $\omega$} if every value $F(x)$, $x\in B\cap\dom F$, is inhabited and totally bounded, and
	\[
		\nrm{x-y}\le r
		\quad\Longrightarrow\quad
		d_H(F(x),F(y))
		\le
		\omega(r),
		\qquad
		x,y\in B\cap\dom F,\quad r\in\Q_{>0}.
	\]
\end{definition}

\begin{remark}
	\label{rem_lipschitz_multifunction_options}
	The definition is meaningful when the values are equipped with the finite-net data needed for \Cref{dfn_hausdorff_distance}.
	It compares the fibers $F(x)$ directly, rather than comparing finite enclosures of the graph.
	The estimate supplies the finite oscillation control used in \Cref{thm_steiner_hausdorff_lipschitz_selector,thm_regular_metrical_selector_convex_hausdorff_lipschitz}.
\end{remark}

\begin{definition}[Convex values]
	\label{dfn_convex_value_multifunction}
	Let $B\subset\R^n$ be a full block.
	A multifunction $F$ is said to have \emph{convex values on $B$} if, for every $x\in B\cap\dom F$ and all $y_0,y_1\in F(x)$,
	\[
		\lambda y_0+(1-\lambda)y_1\in F(x),
		\qquad
		0\le\lambda\le1.
	\]
\end{definition}

\begin{definition}[Support function]
	\label{dfn_support_function}
	Let $C\subset\R^m$ be inhabited and totally bounded.
	For $u\in\R^m$, its \emph{support function} is
	\[
		h_C(u):=\sup_{y\in C}\scal{u,y}.
	\]
	The supremum exists by finite-net approximation of the totally bounded set $C$.
\end{definition}

\begin{figure}[ht]
	\centering
	\begin{tikzpicture}[>=Stealth,x=1cm,y=1cm,font=\small]
	\draw[->,black!70] (-0.9,0) -- (4.25,0) node[right] {$x_1$};
	\draw[->,black!70] (0,-0.7) -- (0,3.25) node[above] {$x_2$};
	\node[below left] at (0,0) {$0$};

	\def\ang{30}
	\coordinate (H) at (\ang:3.30);
	\coordinate (U) at (\ang:0.72);
	\coordinate (Ooff) at ({\ang-90}:0.28);
	\coordinate (Hoff) at ($(H)+({\ang-90}:0.28)$);

	\filldraw[fill=blue!12,draw=blue!65,very thick,smooth cycle]
		plot coordinates {(0.95,0.62) (1.18,2.15) (2.12,2.32) (H) (2.42,0.50)};
	\node[blue!70!black] at (1.72,1.42) {$C$};

	\draw[dashed,black!65] ({\ang+180}:0.75) -- (\ang:4.05);
	\draw[->,red!80!black,very thick] (0,0) -- (U) node[above left] {$u$};

	\draw[orange!85!black,very thick]
		($(H)+({\ang+90}:1.45)$) -- ($(H)+({\ang-90}:1.45)$)
		node[pos=0.88,above right] {$\scal{u,z}=h_C(u)$};

	\filldraw[black] (H) circle (1.5pt) node[above right] {$y$};
	\draw[dotted,magenta!70!black] (0,0) -- (Ooff);
	\draw[dotted,magenta!70!black] (H) -- (Hoff);
	\draw[<->,magenta!80!black,thick] (Ooff) -- (Hoff)
		node[midway,below,sloped] {$h_C(u)$};

	\draw[black!70] ($(H)+({\ang+90}:0.20)$) --
		($(H)+({\ang+90}:0.20)+({\ang+180}:0.20)$) --
		($(H)+({\ang+180}:0.20)$);
\end{tikzpicture}
	\caption{For unit $u$, the support value $h_C(u)$ is the signed distance from the origin to the supporting hyperplane perpendicular to $u$.}
	\label{fig_support_function}
\end{figure}

\subsection{Steiner Point}
\label{subsec_steiner_point}

\begin{definition}[Normalized spherical area and integral]
	\label{dfn_normalized_spherical_area}
	Let
	\[
		\mathbb S^{m-1}:=\{u\in\R^m:\nrm{u}=1\}
	\]
	be the unit sphere.
	The \emph{normalized spherical area} on $\mathbb S^{m-1}$ is denoted by $\sigma$ and satisfies
	\[
		\sigma(\mathbb S^{m-1})=1.
	\]
	Here $\phi$ below denotes an arbitrary uniformly continuous integrand on the sphere.
	For $m=1$, put
	\[
		\int_{\mathbb S^0}\phi(u)\diff\sigma(u)
		:=
		\frac{\phi(-1)+\phi(1)}{2}.
	\]
	For $m\ge2$, let
	\[
		D_m:=[0,\pi]^{m-2}\times[0,2\pi].
	\]
	For $\theta=(\theta_1,\ldots,\theta_{m-1})\in D_m$, define
	\[
		(U_m(\theta))_j
		:=
		\left(\prod_{i=1}^{j-1}\sin\theta_i\right)\cos\theta_j,
		\qquad
		j=1,\ldots,m-1,
	\]
	and
	\[
		(U_m(\theta))_m
		:=
		\prod_{i=1}^{m-2}\sin\theta_i\,\sin\theta_{m-1},
	\]
	with the usual convention that an empty product is $1$.
	Define the surface Jacobian
	\[
		J_m(\theta)
		:=
		\prod_{i=1}^{m-2}(\sin\theta_i)^{m-1-i}.
	\]
	Let
	\[
		\omega_m:=\int_{D_m}J_m(\theta)\diff\theta.
	\]
	Then $\omega_m>0$; for example, on
	\[
		Q_m:=[\pi/3,2\pi/3]^{m-2}\times[0,1]\subset D_m
	\]
	one has
	\[
		J_m(\theta)
		\ge
		2^{-(m-1)(m-2)/2},
	\]
	so the integral over $Q_m$ is strictly positive.
	If $\phi:\mathbb S^{m-1}\to\R^r$ is uniformly continuous, its spherical integral
	\[
		\int_{\mathbb S^{m-1}}\phi(u)\diff\sigma(u)
	\]
	is defined, for $m\ge2$, by the ordinary block integral
	\[
		\int_{\mathbb S^{m-1}}\phi(u)\diff\sigma(u)
		:=
		\frac{1}{\omega_m}
		\int_{D_m}\phi(U_m(\theta))J_m(\theta)\diff\theta .
	\]
\end{definition}

\begin{remark}
	\label{rem_normalized_spherical_area_details}
	The functional $\sigma$ is not the block volume $\mu$ from \Cref{sec_sets}.
	It is the usual surface area on the sphere, normalized by the total surface area $\omega_m$.
	The parametrization $U_m$ may represent a boundary point of the angular block more than once, but the repeated part is contained in finitely many faces of $D_m$ and has block volume zero.
	No point of the sphere is therefore assigned to a chart or to a preferred representative.
	In low dimensions the preceding formula gives
	\[
		\int_{\mathbb S^1}\phi(u)\diff\sigma(u)
		=
		\frac{1}{2\pi}\int_0^{2\pi}\phi(\cos\theta,\sin\theta)\diff\theta
	\]
	and
	\[
		\int_{\mathbb S^2}\phi(u)\diff\sigma(u)
		=
		\frac{1}{4\pi}
		\int_0^\pi\int_0^{2\pi}
		\phi(\cos\theta,\sin\theta\cos\varphi,\sin\theta\sin\varphi)
		\sin\theta\diff\varphi\diff\theta .
	\]
	If a spherical cell $A\subseteq\mathbb S^{m-1}$ is separately given by finitely many inequalities whose pullbacks under $U_m$ are representable subsets of $D_m$, its normalized area is
	\[
		\sigma(A)
		:=
		\frac{1}{\omega_m}
		\int_{D_m}\indic{U_m^{-1}(A)}(\theta)J_m(\theta)\diff\theta.
	\]
	More generally, for such a cell and a uniformly continuous $\phi$,
	\[
		\int_A\phi(u)\diff\sigma(u)
		:=
		\frac{1}{\omega_m}
		\int_{D_m}\indic{U_m^{-1}(A)}(\theta)\phi(U_m(\theta))J_m(\theta)\diff\theta.
	\]
	All integrals are ordinary Riemann integrals on the angular block $D_m$.
	These cell formulas are not part of the definition of the Steiner point; they only explain how the same notation applies to explicitly described spherical regions.
\end{remark}

\begin{definition}[Steiner point]
	\label{dfn_steiner_point}
	Let $C\subset\R^m$ be inhabited, closed, convex, and totally bounded.
	Its \emph{Steiner point} is
	\[
		\operatorname{St}(C)
		:=
		m\int_{\mathbb S^{m-1}}u\,h_C(u)\diff\sigma(u).
	\]
	The integral is the spherical integral of \Cref{dfn_normalized_spherical_area}.
	Indeed, total boundedness gives a bound $\nrm{y}\le R$ on $C$, and hence
	\[
		\abs{h_C(u)-h_C(v)}
		\le
		R\nrm{u-v},
		\qquad
		u,v\in\mathbb S^{m-1}.
	\]
	Thus $u\mapsto u\,h_C(u)$ is uniformly continuous on the sphere, so the defining integral exists.
\end{definition}

\begin{theorem}[Finite-dimensional Steiner point theorem]
	\label{thm_steiner_point_finite_dimensional}
	Let $A,C\subset\R^m$ be inhabited, closed, convex, and totally bounded.
	Then
	\[
		\operatorname{St}(C)\in C
	\]
	and
	\[
		\nrm{\operatorname{St}(A)-\operatorname{St}(C)}
		\le
		m d_H(A,C).
	\]
\end{theorem}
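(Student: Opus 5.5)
The Lipschitz estimate is elementary, so I would prove it first. The map $C\mapsto h_C$ is additive and positively homogeneous, and the Hausdorff distance is the supremum distance of support functions on the sphere:
\[
	\sup_{u\in\mathbb S^{m-1}}\abs{h_A(u)-h_C(u)}\le d_H(A,C).
\]
Here only this inequality is needed. If every $a\in A$ has some $c\in C$ with $\nrm{a-c}\le d_H(A,C)+\eta$, then $\scal{u,a}\le h_C(u)+d_H(A,C)+\eta$ for unit $u$. Taking suprema and letting $\eta\to0$ gives one direction, and symmetry gives the other. Since $\nrm{u}=1$, the normalized spherical integral of \Cref{dfn_normalized_spherical_area} then yields
\[
	\nrm{\operatorname{St}(A)-\operatorname{St}(C)}\le m\sup_u\abs{h_A(u)-h_C(u)}\le m\,d_H(A,C).
\]
This uses monotonicity of the Riemann integral (\Cref{lem_regular_integral_finite_order}, applied to continuous integrands on $D_m$) together with $\int J_m/\omega_m=1$.

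For membership $\operatorname{St}(C)\in C$, I would avoid classical arguments by contradiction. I would reduce to polytopes and then pass to a limit using closedness.

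First, I would record that $\operatorname{St}$ is additive under Minkowski sums, translation-equivariant ($\operatorname{St}(C+a)=\operatorname{St}(C)+a$), and that $\operatorname{St}(\{p\})=p$. These follow from $h_{\{p\}}(u)=\scal{u,p}$ and the identity
\[
	m\int u\,u^\top\diff\sigma=I,
\]
which is a finite symmetry computation in the angular coordinates.

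Second, for a polytope $P=\conv{\{x_1,\ldots,x_N\}}$ with rational vertices, I would use the classical formula
\[
	\operatorname{St}(P)=\sum_k\sigma(N_k)\,x_k,
\]
where $N_k$ is the normal-cone cell of $x_k$ on the sphere. On $N_k$ one has $h_P(u)=\scal{u,x_k}$, so the integral splits into terms $m\int_{N_k}u\,u^\top\diff\sigma\,x_k$. The classical step is to show that these terms equal $\sigma(N_k)x_k$ up to a total that cancels. The cleanest route is the Gauss divergence argument, or equivalently the fact that the Steiner point is the limit of the curvature centroid. Since this is a finite, rational-data identity with representable cells as in \Cref{rem_normalized_spherical_area_details}, its classical proof transfers. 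With the weights $\sigma(N_k)\ge0$ summing to $1$, the identity places $\operatorname{St}(P)$ in $P$.

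Third, for general $C$, I would take a $1/i$-net $x_1,\ldots,x_N$ of $C$ and set $P_i:=\conv{\{x_1,\ldots,x_N\}}\subseteq C$, using convexity, so that $d_H(P_i,C)\le1/i$. The Lipschitz estimate then gives $\nrm{\operatorname{St}(P_i)-\operatorname{St}(C)}\le m/i$ with $\operatorname{St}(P_i)\in P_i\subseteq C$. Hence $\operatorname{St}(C)$ is a limit of points of $C$, and closedness gives $\operatorname{St}(C)\in C$. Net points are only real, not rational, so I would either allow real vertices, replacing rational sign tests by approximations that are harmless because boundary cells have $\sigma$-measure zero, or perturb the vertices to rational points within $1/i$ and accept a slightly enlarged convex hull, still at Hausdorff distance $O(1/i)$ from $C$. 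In the second case I would argue that $d(\operatorname{St}(C),C)=0$ via locatedness (\Cref{lem_totally_bounded_located}) and then invoke closedness.

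The main obstacle is the polytope identity $\operatorname{St}(P)=\sum_k\sigma(N_k)x_k$. Equivalently, it is the vanishing of the boundary terms, $\int_{N_k}(m\,u\,u^\top-I)\diff\sigma\,x_k$ summed over $k$. If the direct computation is messy, I would first try an inductive alternative: Minkowski additivity plus the segment case (computed explicitly as the midpoint) gives membership for zonotopes. However, general polytopes are not zonotopes, so I expect the divergence-theorem identity on spherical cells to be unavoidable. I would prove it by applying a finite Stokes formula to the piecewise-linear function $h_P$, cell by cell on $D_m$.
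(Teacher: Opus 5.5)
Your proof of the Lipschitz bound is the same as the paper's. You bound $\abs{h_A(u)-h_C(u)}\le d_H(A,C)$ pointwise via $\eta$-close points, then integrate against the normalized spherical measure using $\nrm{U_m(\theta)}=1$.

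For $\operatorname{St}(C)\in C$ the paper gives no argument of its own; it cites the classical Steiner point theorem for compact convex bodies (Schneider, Section~1.7). Your route is a sound constructive reconstruction of that classical argument. It combines the external-angle formula $\operatorname{St}(P)=\sum_k\sigma(N_k)x_k$ for rational polytopes, approximation of $C$ by hulls of finite nets, the Lipschitz bound, and closedness. It buys a self-contained proof that exhibits explicit points of $C$ converging to $\operatorname{St}(C)$ at rate $\bigo{1/i}$. It costs the polytope identity, which you only sketch. That identity needs a divergence-theorem and polar-coordinates computation outside the paper's toolkit; at the rigor level of the paper's citation this is acceptable.

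The cleanest form of the identity is $m\int_{\mathbb S^{m-1}}u\,h(u)\diff\sigma(u)=\int_{\mathbb S^{m-1}}\nabla h(u)\diff\sigma(u)$ for $1$-homogeneous piecewise-linear $h$. Apply the divergence theorem on the unit ball to the fields $h\,e_j$. Because $\nabla h$ is $0$-homogeneous, polar coordinates give the factor $1/m$. Since $\nabla h_P=x_k$ on the interior of the $k$-th normal cone, the identity follows at once, with no cell-by-cell computation in $D_m$.

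Two refinements to the approximation step.

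First, your first option (real net points as vertices) fails as stated. The weights $\sigma(N_k)$ are discontinuous in the vertex data at coincident points. For a segment they are $(1/2,1/2)$ when $x_1\ne x_2$, but $(1,0)$ under index tie-breaking when $x_1=x_2$. Equality of reals is undecidable, and near a coincidence the uncertain region is not of small measure, so the measure-zero-boundary argument does not apply.

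Second, take the rational perturbation instead, where coincidences and vertex status are decidable; then locatedness is not needed. Suppose $\operatorname{St}(P_i')=\sum_kw_kx_k'$ with $w_k\ge0$, $\sum_kw_k=1$, and $\nrm{x_k'-x_k}\le1/i$. Then $q_i:=\sum_kw_kx_k\in C$ by convexity, and $\nrm{\operatorname{St}(C)-q_i}\le m\,d_H(P_i',C)+1/i\le(2m+1)/i$. Closedness then gives $\operatorname{St}(C)\in C$ directly.
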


\begin{proof}
	The containment $\operatorname{St}(C)\in C$ is the finite-dimensional Steiner point theorem for compact convex sets; see, for example, \cite[Section~1.7]{Schneider2014ConvexBodies}.
	In the present hypotheses, closed and totally bounded convex sets are the compact convex bodies to which that theorem applies.
	No measurable-selection principle is involved: the point is defined by the explicit support-function integral in \Cref{dfn_steiner_point}, and the standard proof is a finite-dimensional convex-geometric argument using support functions, finite polytope approximation, and separation.
	We record the Lipschitz estimate in the notation of this paper, because it is the estimate used below.
	Total boundedness gives locatedness by \Cref{lem_totally_bounded_located}, so the support values and Hausdorff distances appearing in the estimate are available.

	For every $u\in\mathbb S^{m-1}$ and every $\eta>0$, the definition of Hausdorff distance gives, for each $a\in A$, a point $c_\eta\in C$ such that
	\[
		\nrm{a-c_\eta}\le d_H(A,C)+\eta.
	\]
	Thus
	\[
		\scal{u,a}
		\le
		\scal{u,c_\eta}+\nrm{a-c_\eta}
		\le
		h_C(u)+d_H(A,C)+\eta.
	\]
	Taking the supremum over $a\in A$ and using that $\eta>0$ is arbitrary gives
	\[
		h_A(u)\le h_C(u)+d_H(A,C).
	\]
	The reverse inequality is identical, hence
	\[
		\abs{h_A(u)-h_C(u)}
		\le
		d_H(A,C).
	\]
	For $m=1$, the definition on $\mathbb S^0=\{-1,1\}$ gives the same estimate by averaging over the two points.
	Assume now that $m\ge2$.
	Therefore
	\[
	\begin{aligned}
		\nrm{\operatorname{St}(A)-\operatorname{St}(C)}
		&=
		\nrm{
			\frac{m}{\omega_m}
			\int_{D_m}
			U_m(\theta)
			\bigl(h_A(U_m(\theta))-h_C(U_m(\theta))\bigr)
			J_m(\theta)\diff\theta
		}
		\\
		&\le
		\frac{m}{\omega_m}
		\int_{D_m}
		\nrm{U_m(\theta)}
		\abs{h_A(U_m(\theta))-h_C(U_m(\theta))}
		J_m(\theta)\diff\theta
		\\
		&\le
		\frac{m}{\omega_m}
		d_H(A,C)
		\int_{D_m}J_m(\theta)\diff\theta
		=
		m d_H(A,C).
	\end{aligned}
	\]
\end{proof}

\subsection{Selection}
\label{subsec_multifunction_selection}

\begin{definition}[Selector]
	\label{dfn_selector}
	Let $F$ be a multifunction and let $A\subset\dom F$.
	A function $s$ is called an \emph{exact selector} of $F$ on $A$ if $A\subset\dom s$ and
	\[
		s(x)\in F(x),
		\qquad
		x\in A.
	\]
	Without specifying a set, an exact selector means an exact selector on $\dom F$.
	If $s$ is regular, it is called a \emph{regular exact selector}.
	If $s$ is measurable, it is called a \emph{measurable exact selector}.
\end{definition}

\begin{remark}
	\label{rem_regular_selector_integrability}
	The distinction between measurable and regular selectors matters for integration.
	An effectively bounded regular selector on a block, with supports well contained in that block, is Riemann integrable by \Cref{lem_regular_functions_riemann_integrable}.
	A merely measurable selector need not be integrable in this Riemann sense, even when it is bounded.
\end{remark}

\begin{definition}[Full and metrical selectors]
	\label{dfn_approximate_selector}
	Let $F$ be a multifunction on a full block $B$, let $s$ be a regular function, and let $\eps>0$.
	The function $s$ is called an \emph{$\eps$-full selector} of $F$ on $B$ if there is a finite full multiblock $\mathcal E\Subset B$ with $\gamma(\mathcal E)\le\eps$ such that $s$ is an exact selector on
	\[
		B\cap\dom F\cap\bar{\mathcal E}^c.
	\]
	It is called an \emph{$\eps$-metrical selector} of $F$ on $B$ if, for every $x\in B\cap\dom F\cap\dom s$,
	\[
		(\{s(x)\}\oplus\eps)\cap F(x)
	\]
	is inhabited.
\end{definition}

\begin{definition}[Lipschitz selector]
	\label{dfn_lipschitz_selector}
	A regular selector $s$ is called \emph{Lipschitz} on a set $A$ if the function $s:A\cap\dom s\to\R^m$ is Lipschitz in the sense of \Cref{dfn_lipschitz_constant}.
\end{definition}

\begin{remark}
	\label{rem_total_block_full_exact_selector}
	If $F$ is $B$-total and has block-full values on $B$, then applying \Cref{dfn_block_full_multifunction} to $Q=B$ gives a full block $D_B\subset\R^m$ with
	\[
		B\times D_B\subseteq\Gamma_B(F).
	\]
	Any rational point of $D_B$ therefore defines a constant exact, hence Lipschitz, selector on $B$.
\end{remark}

\begin{proposition}[Exact Lipschitz selectors from block-full data]
	\label{prop_block_full_lipschitz_selector}
	Let $F$ be $B$-total and have block-full values on a full block $B\subset\R^n$.
	Let $D_B$ be a block-full witness with
	\[
		B\times D_B\subseteq\Gamma_B(F).
	\]
	If $s:B\to D_B$ is $L$-Lipschitz, then $s$ is an exact $L$-Lipschitz selector of $F$ on $B$.
	If, moreover, $s$ is regular, then it is an exact Lipschitz-regular selector.
\end{proposition}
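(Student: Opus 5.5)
The statement is essentially an unpacking of definitions, so I would prove it directly from the graph inclusion in \Cref{dfn_block_full_multifunction}, with no approximation step. First I would check the selector property pointwise. Fix $x\in B$. Since $F$ is $B$-total, $x\in\dom F$, so the value $F(x)$ is determined. Since $s$ maps $B$ into $D_B$, we have $s(x)\in D_B$, and hence $(x,s(x))\in B\times D_B$. The assumed inclusion $B\times D_B\subseteq\Gamma_B(F)$ and the definition of $\Gamma_B(F)$ in \Cref{dfn_multifunction} then give $s(x)\in F(x)$. This is exactly the uniform graph inclusion recorded in \Cref{rem_block_full_not_image_full}. The argument uses only the positive membership witness $s(x)\in D_B$, which is a finite conjunction of coordinate inequalities. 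No case distinction on $x$ and no decision about membership in supports is required, so it is constructively unproblematic. Because $B\subseteq\dom s$, the function $s$ is therefore an exact selector on $B$ in the sense of \Cref{dfn_selector}.

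Second, the Lipschitz property is part of the hypothesis. The estimate $\nrm{s(x)-s(y)}\le L\nrm{x-y}$ for $x,y\in B$ is precisely \Cref{dfn_lipschitz_selector} with $A=B$. By \Cref{dfn_lipschitz_constant}, this estimate also supplies the modulus of continuity $\omega_s(\eps)=\eps/L$, or any modulus when $L=0$. So $s$ is a function in the constructive sense, together with its continuity data.

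Third, for the regular clause, I would combine the first step with the assumed regular representation of $s$. The only point needing care is that \Cref{dfn_lipschitz_regular_function} asks for the positive branches of a regular representation to be $L$-Lipschitz on their supports. This is the step I expect to be the main, though mild, obstacle. If the given representation is, for instance, a single summand with support well containing $B$ and positive branch equal to $s$, then the branches are restrictions of $s$ to subsets of $B$, and restriction preserves the Lipschitz constant. More generally, I would use the reading that an \emph{exact Lipschitz-regular selector} means a regular function that is Lipschitz on $B$ in the sense of \Cref{dfn_lipschitz_selector}. Every positive branch $s_k^+$ agrees with $s$ wherever the representation determines a value in $B$, so it inherits the bound $L$ on $X_k\cap B$. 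With that reading, the regular clause follows immediately from the first two steps.
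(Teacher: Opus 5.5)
Your proof is correct and follows the paper's argument: $s(x)\in D_B$ together with $B\times D_B\subseteq\Gamma_B(F)$ gives $(x,s(x))\in\Gamma_B(F)$, hence $s(x)\in F(x)$, and the Lipschitz and regularity clauses are read directly off the hypotheses on $s$, which is all the paper does as well. In your third step, the branchwise inheritance gives the bound only on the determined part of $X_k\cap B$. A cleaner way to meet the branchwise condition of \Cref{dfn_lipschitz_regular_function} is to represent $s$ by the single summand with support $B$ and positive branch $s$, as the paper does for $P$ in \Cref{thm_regular_metrical_selector_convex_hausdorff_lipschitz}.
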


\begin{proof}
	For every $x\in B$, the inclusion $s(x)\in D_B$ and the graph containment give
	\[
		(x,s(x))\in\Gamma_B(F).
	\]
	Hence $s(x)\in F(x)$.
	The Lipschitz and regularity assertions are exactly the assumptions on $s$.
\end{proof}

\begin{corollary}[Steiner point is a Lipschitz selector]
	\label{crl_steiner_point_lipschitz_selector}
	Let $A,C\subset\R^m$ be inhabited, closed, convex, and totally bounded.
	Then
	\[
		\operatorname{St}(C)\in C
	\]
	and
	\[
		\nrm{\operatorname{St}(A)-\operatorname{St}(C)}
		\le
		m d_H(A,C).
	\]
\end{corollary}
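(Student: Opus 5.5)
This corollary restates \Cref{thm_steiner_point_finite_dimensional}, so the plan is to invoke that theorem directly and then explain why its two conclusions are exactly what the word ``selector'' requires. The hypotheses are identical: $A$ and $C$ are inhabited, closed, convex and totally bounded subsets of $\R^m$. The theorem therefore gives $\operatorname{St}(C)\in C$ and the Lipschitz bound $\nrm{\operatorname{St}(A)-\operatorname{St}(C)}\le m\,d_H(A,C)$ verbatim.

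To justify the name, I would add a short interpretive step. Consider the class of inhabited, closed, convex, totally bounded sets, metrized by $d_H$ from \Cref{dfn_hausdorff_distance}. On this class, $\operatorname{St}$ is a single-valued map defined by the explicit support-function integral of \Cref{dfn_steiner_point}, so no choice is involved.

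The containment then says that $C\mapsto\operatorname{St}(C)$ is an exact selector in the sense of \Cref{dfn_selector} for the set-valued assignment $C\mapsto C$. The estimate says this selector is $m$-Lipschitz with respect to $d_H$, in the sense of \Cref{dfn_lipschitz_constant}.

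Consequently, if $F$ has such values and is $L$-Hausdorff-Lipschitz as in \Cref{dfn_hausdorff_lipschitz_multifunction}, the composition $x\mapsto\operatorname{St}(F(x))$ is an exact selector with Lipschitz constant $mL$. This remark is what makes the statement usable later; it is not needed for the proof itself.

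There is no real obstacle here. The only point worth checking is computability: the well-definedness of $\operatorname{St}$ needs locatedness of the values, which comes from \Cref{lem_totally_bounded_located}. That has already been handled in \Cref{dfn_steiner_point} and in the proof of \Cref{thm_steiner_point_finite_dimensional}, so nothing beyond citing the theorem is required.
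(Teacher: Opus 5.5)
Your proposal is correct and coincides with the paper's proof, which is the single line that the corollary is exactly \Cref{thm_steiner_point_finite_dimensional} under identical hypotheses. Your interpretive paragraph is harmless but slightly informal, since \Cref{dfn_selector} and \Cref{dfn_lipschitz_constant} are stated for maps on subsets of $\R^n$ rather than on the class of convex sets metrized by $d_H$. The actual selector statement, the composition $x\mapsto\operatorname{St}(F(x))$, is recorded separately in the paper as \Cref{thm_steiner_hausdorff_lipschitz_selector}.
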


\begin{proof}
	This is \Cref{thm_steiner_point_finite_dimensional}.
\end{proof}

\begin{theorem}[Steiner selector for convex Hausdorff-Lipschitz values]
	\label{thm_steiner_hausdorff_lipschitz_selector}
	Let $F$ be $B$-total and $L$-Hausdorff-Lipschitz on a full block $B\subset\R^n$.
	Assume that every value $F(x)$, $x\in B$, is closed and convex.
	Then
	\[
		s(x):=\operatorname{St}(F(x))
	\]
	is an exact selector of $F$ on $B$ and satisfies
	\[
		\nrm{s(x)-s(y)}
		\le
		m L\nrm{x-y},
		\qquad
		x,y\in B.
	\]
\end{theorem}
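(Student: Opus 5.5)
The plan is to reduce the theorem pointwise to \Cref{thm_steiner_point_finite_dimensional}, equivalently \Cref{crl_steiner_point_lipschitz_selector}, and then to supply the computational content of $x\mapsto\operatorname{St}(F(x))$ as a function in the sense of \Cref{dfn_func}.

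First, fix $x\in B$. By $B$-totality, $x\in\dom F$. By the $L$-Hausdorff-Lipschitz hypothesis (\Cref{dfn_hausdorff_lipschitz_multifunction}), the value $F(x)$ is inhabited and totally bounded. By assumption it is also closed and convex. Hence $\operatorname{St}(F(x))$ is defined by \Cref{dfn_steiner_point}, and \Cref{thm_steiner_point_finite_dimensional} gives $\operatorname{St}(F(x))\in F(x)$. This is the exact selector property on all of $B$.

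Second, for $x,y\in B$, apply the Lipschitz estimate of the same theorem with $A=F(x)$ and $C=F(y)$. Then combine it with $d_H(F(x),F(y))\le L\nrm{x-y}$:
\[
	\nrm{s(x)-s(y)}\le m\,d_H(F(x),F(y))\le mL\nrm{x-y}.
\]

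The step I expect to be the main obstacle is showing that $s$ is genuinely a function, that is, that $s(x)$ can be approximated to any precision from the data of $F(x)$. The plan for this step is as follows.
\begin{itemize}
	\item The value $F(x)$ comes with finite nets. Using a $\delta$-net $y_1,\ldots,y_N$ of $F(x)$, the maximum $\max_k\scal{u,y_k}$ approximates $h_{F(x)}(u)$ within $\delta$ uniformly for $u\in\mathbb S^{m-1}$.
	\item The integrand $u\mapsto u\,h_{F(x)}(u)$ is $R$-Lipschitz, where $R$ is a bound on $F(x)$ obtained from the net.
	\item The spherical integral is an ordinary Riemann integral on $D_m$ of a continuous function (\Cref{rem_riemann_integral_continuous_block}), so it is approximated by tagged sums with an explicit modulus.
	\item Combining the net error $m\delta$ with the Riemann-sum error yields rational approximations of $s(x)$ with a Cauchy witness.
\end{itemize}
The modulus of continuity is then $\eps/(mL)$, read off from the estimate in the second step, with any positive modulus allowed if $L=0$, as in \Cref{dfn_lipschitz_constant}.
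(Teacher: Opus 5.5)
Your proposal is correct and matches the paper's proof, which consists exactly of your first two steps: a pointwise application of \Cref{crl_steiner_point_lipschitz_selector} to get $\operatorname{St}(F(x))\in F(x)$, and the Steiner Lipschitz estimate combined with $d_H(F(x),F(y))\le L\nrm{x-y}$. The paper omits your extra computability step and relies on \Cref{dfn_steiner_point} for the existence of the defining integral; your step is sound, except for a minor constant: the integrand $u\mapsto u\,h_{F(x)}(u)$ is $2R$-Lipschitz rather than $R$-Lipschitz, which does not affect the argument.
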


\begin{proof}
	For each $x\in B$, \Cref{crl_steiner_point_lipschitz_selector} gives
	\[
		s(x)=\operatorname{St}(F(x))\in F(x).
	\]
	For $x,y\in B$, the same lemma and the Hausdorff-Lipschitz estimate give
	\[
		\nrm{s(x)-s(y)}
		\le
		m d_H(F(x),F(y))
		\le
		m L\nrm{x-y}.
	\]
\end{proof}

\begin{theorem}[Regular $\eps$-metrical selector for convex Hausdorff-Lipschitz values]
	\label{thm_regular_metrical_selector_convex_hausdorff_lipschitz}
	Let $F$ satisfy the assumptions of \Cref{thm_steiner_hausdorff_lipschitz_selector}.
	Then, for every $\eps>0$, $F$ admits a Lipschitz-regular $\eps$-metrical selector on $B$.
\end{theorem}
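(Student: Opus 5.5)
Starting from the Steiner selector $s(x)=\operatorname{St}(F(x))$ of \Cref{thm_steiner_hausdorff_lipschitz_selector}, which is exact and $mL$-Lipschitz on $B$, the plan is to replace it by a regular function that stays within $\eps$ of it everywhere on $B$. Because $s(x)\in F(x)$, any function $\tilde s$ satisfying $\nrm{\tilde s(x)-s(x)}\le\eps$ has $s(x)\in(\{\tilde s(x)\}\oplus\eps)\cap F(x)$. The ball-versus-block reading of $\oplus$ causes no trouble, since the Euclidean $\eps$-ball sits inside the $\eps$-enlarged block. Consequently the $\eps$-metrical property holds at every point where $\tilde s$ is determined, and no exception multiblock is required.

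To build $\tilde s$, I would choose $K$ as a bound for $s$ on $B$. It can be taken as $\nrm{s(x_0)}+mL\,\diam(B)$ for a rational $x_0\in B$, plus a rational safety margin, so that $s\in\mathcal F(B,mL,K;m)$. Then \Cref{thm_lipschitz_function_space_totally_bounded}, applied with precision $\eps$, yields a rational piecewise affine $P\in\mathcal F(B,mL,K;m)$ satisfying $d_\infty(s,P)\le\eps$. A more direct route, which does not require running the net construction, smooths $s$ as in \Cref{lem_smoothing_estimates} and then interpolates and rounds using \Cref{lem_fixed_mesh_interpolation,lem_rational_rounding_triangulation}. The proof of \Cref{thm_lipschitz_function_space_totally_bounded} produces exactly this $P$ from $s$, with Lipschitz constant at most $mL$.

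The remaining task is to present $P$ as a regular function on $B$. Since $P$ is a single continuous branch, I take one support: the whole block $B$, or, following the convention of \Cref{dfn_regular_functions_on_block}, a slightly larger block $B'$ with $B\Subset B'$ and $P\circ\proj{\cdot}{B}$ used as the positive branch. A block is representable because it is a rational polytope, by \Cref{ex_representable_sets_positive}. Its positive branch is Lipschitz by \Cref{lem_pl_certificate_lipschitz} and \Cref{lem_projection_block_nonexpansive}, and its exterior branch is zero. This gives a single-summand $mL$-Lipschitz-regular function in the sense of \Cref{dfn_lipschitz_regular_function}, and it is determined at every $x\in B$ because $x\in B\subseteq B'$ supplies the positive membership witness.

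I expect the main obstacle to be bookkeeping rather than analysis. One has to check that points of $B$ are determined points of the representation, so the support must contain $B$ rather than merely being well contained in it. One also has to confirm that $s$ is a legitimate constructive function whose values can be computed. The second point follows from the Steiner integral being an ordinary block integral of a continuous integrand, as in \Cref{dfn_steiner_point}, with support values supplied through \Cref{lem_totally_bounded_located}. If the version via $B'$ and projection looks awkward, the fallback is to approximate $s$ directly by a finite simple regular function on a rational subdivision. That version, however, would bring in boundary exception layers, so I would prefer the single-support form.
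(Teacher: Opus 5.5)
Your proposal is correct and follows essentially the same route as the paper. You take the exact $mL$-Lipschitz Steiner selector $s$, bound it by a rational $K$, apply \Cref{thm_lipschitz_function_space_totally_bounded} in $\mathcal F(B,mL,K;m)$ to get a rational piecewise affine $P$ with $d_\infty(P,s)\le\eps$, view $P$ as a single-support regular function with zero exterior branch, and use $s(x)\in F(x)$ to witness that $(\{P(x)\}\oplus\eps)\cap F(x)$ is inhabited. The differences are cosmetic: you obtain $K$ from the Lipschitz estimate for $s$ rather than from a bound on $F(x_0)$ plus $L\,\diam(B)$, and you spell out the support and $\oplus$ bookkeeping that the paper leaves implicit.
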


\begin{proof}
	Let
	\[
		s(x):=\operatorname{St}(F(x)).
	\]
	By \Cref{thm_steiner_hausdorff_lipschitz_selector}, the function $s$ is an exact $mL$-Lipschitz selector.
	Choose a rational $K>0$ such that
	\[
		\nrm{s(x)}\le K,
		\qquad
		x\in B.
	\]
	Such a $K$ is obtained from a bound on one totally bounded value $F(x_0)$ and the estimate
	\[
		d_H(F(x),F(x_0))\le L\diam(B).
	\]
	The role of \Cref{thm_lipschitz_function_space_totally_bounded} is to approximate a bounded Lipschitz function on a block by rational piecewise affine functions while keeping the prescribed Lipschitz bound.
	Applied to the exact Steiner selector, this gives a regular function uniformly close to an exact selector.
	Apply \Cref{thm_lipschitz_function_space_totally_bounded} to the function space $\mathcal F(B,mL,K;m)$ with accuracy $\eps$.
	This gives a rational piecewise affine map $P:B\to\R^m$ such that
	\[
		d_\infty(P,s)\le\eps.
	\]
	The map $P$, with support $B$ and exterior value zero, is a Lipschitz-regular function.
	For every $x\in B$, exactness of $s$ gives $s(x)\in F(x)$ and the last estimate gives
	\[
		\nrm{P(x)-s(x)}\le\eps.
	\]
	Thus
	\[
		(\{P(x)\}\oplus\eps)\cap F(x)
	\]
	is inhabited.
\end{proof}

\begin{theorem}[$\eps$-full selector for domain-regular block-full values]
	\label{thm_domain_regular_block_full_selector}
	Let $F$ be domain-regular on a full block $B\subset\R^n$ and have block-full values on $B$.
	Then, for every $\eps>0$, $F$ admits a simple regular $\eps$-full selector on $B$.
\end{theorem}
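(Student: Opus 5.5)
The plan is to build the selector directly from the representability witnesses of the supports, assigning one constant rational value to each small trimmed core cell. Write the supports of $F$ as $X_1,\ldots,X_N\Subset B$. If $N=0$, then $F(x)=\{0\}$ on the determined part, and the zero function with empty exception multiblock works. Otherwise, fix $\eps>0$ and, for each $k$, choose a representability witness $B,\mathcal E_{X_k},\mathcal M_{X_k},\mathcal M^c_{X_k}$ with common bounding block $B$ (enlarged as in \Cref{lem_representable_finite_union}) and $\gamma(\mathcal E_{X_k})\le\eps/(2N)$. Let $\mathcal A$ be the concatenation of all these witness multiblocks. Choose an admissible $\rho>0$ with $\gamma(\mathcal H_\rho(B;\mathcal A))\le\eps/2$, and let $\mathcal C_\rho:=\mathcal R_\rho(B;\mathcal A)$. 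As in the proofs of \Cref{lem_representable_finite_union,lem_regular_functions_measurable}, every $C\in\mathcal C_\rho$ satisfies, for each $k$, exactly one of $C\Subset\bar{\mathcal M}_{X_k}$, $C\Subset\bar{\mathcal E}_{X_k}$, $C\Subset\overline{\mathcal M^c_{X_k}}$, and this is a finite rational endpoint test.

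Next I assign the values. Let $\mathcal C_M$ be the set of cells $C\in\mathcal C_\rho$ with $C\Subset\bar{\mathcal M}_{X_k}$ for some $k$. By \Cref{rem_mutual_apart_disjoint} and the disjointness of the $X_k$, such a $k$ is unique, since a cell cannot be well contained in two disjoint supports. Then $C\Subset X_k$, so $C$ is an admissible base block in the sense of \Cref{dfn_block_full_multifunction}. Block-fullness gives a full block $D_C$ with $C\times D_C\subseteq\Gamma_B(F)$, and I pick a rational point $d_C\in D_C$. The candidate selector is
\[
	s:=\sum_{C\in\mathcal C_M} d_C\,\indic{C}.
\]
This is a finite sum of simple complemented functions with block supports. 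Blocks are representable (\Cref{ex_representable_sets_positive}), and the trimmed cells are pairwise disjoint, so $s$ is a simple regular function. The exception multiblock is
\[
	\mathcal E:=\mathcal H_\rho(B;\mathcal A)\sqcup\bigsqcup_k\mathcal E_{X_k},
\]
which is full and has cost at most $\eps$.

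To verify exactness, take $x\in B\cap\dom F\cap\bar{\mathcal E}^c$. Since $x\notin\bar{\mathcal H}_\rho$, the argument of \Cref{lem_regular_functions_measurable} places $x$ in some trimmed cell $C$, which is full and measure-containment forced to lie in one of the three supports for each $k$. Since $x\notin\bar{\mathcal E}_{X_k}$, the cell $C$ is either a core cell for some $k$ or an exterior cell for every $k$. In the first case $s(x)=d_C\in F(x)$ by the graph inclusion $C\times D_C\subseteq\Gamma_B(F)$. In the second case $C\Subset B\setminus X_k$ for all $k$, so taking the minimum of the finitely many margins puts $x$ in $(\supp F)^c$. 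There $F(x)=\{0\}$ by \Cref{dfn_multifunction_domain_conditions}, and also $s(x)=0$.

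The step I expect to be the main obstacle is making the pointwise evaluation of the formal sum $s$ legitimate, as stressed in \Cref{rem_regular_function_piecewise,rem_domain_regular_disjoint_not_apart}. Evaluating $s(x)$ requires exterior witnesses $x\in C'^c$ for every other cell $C'$, not merely disjointness. I would obtain these from the trimming: distinct trimmed cells are separated by gaps of width at least $2\rho$, so they are uniformly apart, hence mutually apart in the sense of \Cref{dfn_mutual_apartness}. Thus a witness $x\in C$ yields $x\in C'^c$ for all $C'\ne C$, and in the exterior case $x$ lies in a cell not in $\mathcal C_M$, hence is apart from every cell of $\mathcal C_M$ by the same $2\rho$ margin.
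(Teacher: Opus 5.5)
Your proposal is correct and follows essentially the same route as the paper. Both use one common trimmed coordinate refinement of $B$ against all witness multiblocks, constant rational tags from the block-full witnesses on the core cells, zero on cells exterior to every support, and an exception multiblock built from the coordinate-boundary strips together with the original exception multiblocks. Your extra observation that distinct trimmed cells are separated by gaps of width $2\rho$, and hence mutually apart, makes explicit the branch witnesses needed to evaluate the formal sum pointwise, which the paper's proof leaves implicit.
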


\begin{proof}
	Let $X_1,\ldots,X_N$ be the representable supports of $F$.
	If $N=0$, take the empty-sum regular function, which is identically zero on $B$.
	This is an exact selector because $F(x)=\{0\}$ on $B$.
	Assume $N>0$.
	Choose representability witnesses for the supports $X_k$ with exception costs at most $\eps/(3N)$.
	Apply one common coordinate refinement to $B$ and to the finite multiblock obtained by concatenating all
	\[
		\mathcal M_{X_k},\qquad
		\mathcal E_{X_k},\qquad
		\mathcal M^c_{X_k},
		\qquad k=1,\ldots,N.
	\]
	Choose the coordinate-boundary multiblock $\mathcal H$ with cost at most $\eps/3$.
	Let $\mathcal C$ be the corresponding trimmed refinement.
	Let $\mathcal A$ be the concatenation of $\mathcal H$ and all original exception multiblocks $\mathcal E_{X_k}$.
	Then
	\[
		\gamma(\mathcal A)\le\eps.
	\]
	For every cell $C\in\mathcal C$ outside $\bar{\mathcal A}$ and every $k$, the cell lies either in the core multiblock $\mathcal M_{X_k}$ or in the exterior multiblock $\mathcal M^c_{X_k}$.
	Since the supports are pairwise disjoint, at most one index $k(C)$ satisfies
	\[
		C\Subset X_{k(C)}.
	\]
	If such an index exists, block-fullness gives a full block $D_C\subset\R^m$ with
	\[
		C\times D_C\subseteq\Gamma_B(F).
	\]
	Choose a rational tag $y_C\in D_C$.
	If no such index exists, put $y_C:=0$.
	Call the cells of the first kind support cells.
	The finite sum of the simple complemented functions with supports equal to the support cells is a simple regular function $s$.
	For $x\in B\cap\bar{\mathcal A}^c$, the point $x$ lies in one of the retained cells.
	On a core cell, the inclusion $C\times D_C\subseteq\Gamma_B(F)$ gives
	\[
		s(x)=y_C\in F(x).
	\]
	On a cell exterior to all supports, the definition of domain-regular multifunction gives
	\[
		F(x)=\{0\},
	\]
	and $s(x)=0$.
	Thus $s$ is exact on $B\cap\dom F\cap\bar{\mathcal A}^c$, and hence is an $\eps$-full selector on $B$.
\end{proof}

\begin{theorem}[Midpoint selector for marginalized multifunctions]
	\label{thm_marginalized_midpoint_selector}
	Let $F$ be marginalized on a full block $B$ with margins $\ell_j,u_j\in\BReg{B}$.
	Define
	\[
		s_j:=\frac{\ell_j+u_j}{2},
		\qquad
		j=1,\ldots,m.
	\]
	Then $s=(s_1,\ldots,s_m)$ is a regular function and is an exact selector of $F$ wherever the marginal values are determined.
\end{theorem}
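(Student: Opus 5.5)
The plan has two parts: regularity of $s$, then the pointwise selector property. Both are short, because the margins already carry regular data and the selector check is coordinatewise.

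For regularity, the finite algebra of regular functions does all the work. By \Cref{lem_regular_functions_finite_algebra}, applied to $\ell_j,u_j\in\BReg{B}$ with coefficients $a=b=1/2$, each $s_j=\tfrac12\ell_j+\tfrac12 u_j$ lies in $\BReg{B}$. Concretely, one flattens the combined support data of $\ell_j$ and $u_j$ over $B$ and attaches the averaged continuous branch formula to every flattened piece. Representability of the pieces comes from \Cref{lem_representable_set_flattening}, and a bound witness is $(M_{\ell_j}+M_{u_j})/2$. To make $s=(s_1,\ldots,s_m)$ a single vector-valued regular function, I would flatten once more, over all $2m$ margins simultaneously. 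On each common piece the branch is the vector of averaged branch formulas, which is continuous. The pieces remain pairwise disjoint and representable, and zero or empty pieces may stay as redundant data, per \Cref{rem_regular_function_flattened_by_default}.

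For the selector property, fix $x$ at which all margins are determined, meaning the representation supplies the branch witnesses for every $\ell_j,u_j$. The same witnesses determine $s(x)$, because the flattened supports of $s$ are intersections of supports and relative apartness complements of the margins' supports. A membership or apartness witness for each original support therefore yields a witness for the corresponding flattened piece. I expect this bookkeeping to be the only delicate point. It reduces to checking that the flattened piece $Y_I$ is exactly the conjunction of the given witnesses, so no new decision about real points is needed. With the witnesses in hand, $s_j(x)=(\ell_j(x)+u_j(x))/2$ holds as real numbers. From $\ell_j(x)\le u_j(x)$ we get $\ell_j(x)\le s_j(x)\le u_j(x)$ by elementary real arithmetic, which is constructively valid since it only uses $\le$ and no case split. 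Hence $s(x)\in\prod_j[\ell_j(x),u_j(x)]=F(x)$ by \Cref{dfn_marginalized_multifunction}, and $s$ is an exact selector there in the sense of \Cref{dfn_selector}.
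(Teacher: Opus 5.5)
Your proposal is correct and takes the paper's approach: each $s_j$ is regular by \Cref{lem_regular_functions_finite_algebra}, and the coordinatewise inequality $\ell_j\le s_j\le u_j$ gives $s(x)\in F(x)$ wherever the margins are determined. Your simultaneous flattening over all $2m$ margins and your witness bookkeeping spell out details the paper leaves implicit; one small addition is that evaluating the flattened sum also needs exterior witnesses $x\in Y_J^c$ for the other pieces, and these follow from an index on which $J$ differs from the piece containing $x$.
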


\begin{proof}
	By \Cref{lem_regular_functions_finite_algebra}, each $s_j$ is regular.
	Since
	\[
		\ell_j\le s_j\le u_j,
		\qquad
		j=1,\ldots,m,
	\]
	the vector $s(x)$ belongs to
	\[
		\prod_{j=1}^m[\ell_j(x),u_j(x)]
		=
		F(x)
	\]
	whenever the displayed values are determined.
\end{proof}

\begin{corollary}[Lipschitz midpoint selector for marginalized multifunctions]
	\label{cor_lipschitz_marginalized_midpoint_selector}
	Let $F$ be marginalized on a full block $B$ with margins $\ell_j,u_j\in\BReg{B}$.
	Assume that every margin is $L$-Lipschitz-regular.
	Then the midpoint selector from \Cref{thm_marginalized_midpoint_selector} is $\sqrt m L$-Lipschitz-regular.
\end{corollary}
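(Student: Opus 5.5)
The plan is to check regularity and the Lipschitz bound coordinatewise and then combine the coordinates in the Euclidean norm. By \Cref{thm_marginalized_midpoint_selector}, $s=(s_1,\ldots,s_m)$ with $s_j=(\ell_j+u_j)/2$ is already a regular exact selector wherever the margins are determined. Only the Lipschitz-regularity remains, in the sense of \Cref{dfn_lipschitz_regular_function}: every positive branch of the flattened representation of $s$ must be $\sqrt m L$-Lipschitz on its support.

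First I would recall how the representation of $s_j$ is produced in the linear part of \Cref{lem_regular_functions_finite_algebra}. One flattens the combined support data of $\ell_j$ and $u_j$ (and, to obtain a single vector representation, of all margins $\ell_1,u_1,\ldots,\ell_m,u_m$ together) via \Cref{lem_representable_set_flattening}. On each flattened piece $Y_I$, the positive branch of $s_j$ is then $\tfrac12(\ell_j^{I}+u_j^{I})$. Here $\ell_j^{I}$ and $u_j^{I}$ are the branches of $\ell_j$ and $u_j$ present on $Y_I$, or zero if no summand is present there. Each such branch is the restriction of an original positive branch to the subset $Y_I$ of its support, so it is $L$-Lipschitz on $Y_I$; the zero branch is trivially so. Consequently, for $x,y\in Y_I$,
\[
	\abs{s_j(x)-s_j(y)}\le\tfrac12\abs{\ell_j(x)-\ell_j(y)}+\tfrac12\abs{u_j(x)-u_j(y)}\le L\nrm{x-y}.
\]

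Second, I would sum the squares over $j$ on a common flattened piece:
\[
	\nrm{s(x)-s(y)}^2=\sum_{j=1}^m\abs{s_j(x)-s_j(y)}^2\le mL^2\nrm{x-y}^2,
\]
and take square roots to obtain the constant $\sqrt mL$.

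The main obstacle is the bookkeeping of branches. A given flattened piece $Y_I$ may lie inside supports of $\ell_j$ and $u_j$ with different original indices. I therefore need that the positive branch attached to $Y_I$ for each margin is exactly one original branch (restricted), or zero, and never a sum of several branches. This is where the pairwise disjointness of supports in \Cref{dfn_regular_function} is used: inside one margin's representation at most one summand is present on each flattened cell. Given that, the Lipschitz estimate holds piecewise, which is all that Lipschitz-regularity requires. No claim of global Lipschitz continuity across support boundaries is needed.
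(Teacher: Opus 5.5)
Your proposal is correct and follows the paper's route. You flatten the margin supports jointly and note that on each common cell every branch of $s_j=\tfrac12(\ell_j+u_j)$ is $L$-Lipschitz; combining the coordinate bounds in the Euclidean norm then gives $\sqrt m L$. Your use of pairwise disjointness (at most one original branch per margin on each flattened cell) and your sum-of-squares step spell out what the paper's brief ``summing the coordinate estimates'' leaves implicit.
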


\begin{proof}
	After finite flattening, every scalar branch of
	\[
		s_j:=\frac{\ell_j+u_j}{2}
	\]
	is $L$-Lipschitz on its support.
	Hence, for every common support cell and all $x,y$ in that cell,
	\[
		\abs{s_j(x)-s_j(y)}\le L\nrm{x-y},
		\qquad
		j=1,\ldots,m.
	\]
	Summing the coordinate estimates gives
	\[
		\nrm{s(x)-s(y)}\le \sqrt m L\nrm{x-y}.
	\]
	Thus the vector-valued midpoint selector is $\sqrt m L$-Lipschitz-regular.
\end{proof}

\begin{corollary}[Domain-regular support-wise Steiner selectors]
	\label{cor_domain_regular_supportwise_steiner_selector}
	Let $F$ be domain-regular on a full block $B$ with supports $X_1,\ldots,X_N$.
	Thus the supports are representable by \Cref{dfn_multifunction_domain_conditions}.
	Assume, in addition, that these supports are mutually apart in the sense of \Cref{dfn_mutual_apartness}.
	Assume that, on every support piece $X_k$, the values are inhabited, closed, convex, and totally bounded, and satisfy
	\[
		d_H(F(x),F(y))
		\le
		L\nrm{x-y},
		\qquad
		x,y\in X_k.
	\]
	For $x\in X_k$ put
	\[
		s_k^+(x):=\operatorname{St}(F(x))
	\]
	and give the corresponding complemented summand the zero exterior branch on $X_k^c$.
	Then
	\[
		s:=\sum_{k=1}^N s_k
	\]
	is an $mL$-Lipschitz-regular function.
	It is an exact selector of $F$ on $\dom F$.
\end{corollary}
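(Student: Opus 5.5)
The plan is to reduce everything to \Cref{thm_steiner_hausdorff_lipschitz_selector}, applied separately on each support piece, and then assemble the pieces through the complemented-summand structure of \Cref{dfn_regular_function}.

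\emph{Branches.} Fix $k$. The restriction of $F$ to $X_k$ has inhabited, closed, convex, totally bounded values and is $L$-Hausdorff-Lipschitz there. The argument of \Cref{thm_steiner_hausdorff_lipschitz_selector} is purely pointwise: it uses \Cref{crl_steiner_point_lipschitz_selector} for each pair $x,y\in X_k$, so it never needs the domain to be a full block. It therefore gives
\[
	s_k^+(x)=\operatorname{St}(F(x))\in F(x),
	\qquad
	\nrm{s_k^+(x)-s_k^+(y)}\le mL\nrm{x-y},
	\qquad x,y\in X_k .
\]
Thus $s_k^+$ is continuous on $X_k$ with modulus $\eps/(mL)$ in the sense of \Cref{dfn_lipschitz_constant}. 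The exterior branch $s_k^-\equiv0$ on $X_k^c$ is trivially continuous, so $s_k=(s_k^+,s_k^-)$ is a bounded-support complemented function over $X_k$. Its support is bounded because $X_k\Subset B$.

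\emph{Regularity.} The supports $X_1,\ldots,X_N$ are representable and pairwise disjoint by \Cref{dfn_multifunction_domain_conditions}, which is exactly what \Cref{dfn_regular_function} requires. Hence $s=\sum_k s_k$ is regular, and since every positive branch is $mL$-Lipschitz, it is $mL$-Lipschitz-regular by \Cref{dfn_lipschitz_regular_function}.

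\emph{Exactness.} This is the main obstacle, because pointwise evaluation of the formal sum needs branch witnesses for every summand (\Cref{rem_regular_function_piecewise}). Here mutual apartness is essential. Take $x\in\dom F=\supp F\cup(\supp F)^c$.
\begin{itemize}
	\item If the witness is $x\in X_l$ for some $l$, then the inclusion $X_l\subseteq X_k^c$ from \Cref{dfn_mutual_apartness} supplies the exterior witnesses for all $k\neq l$. Hence $s(x)=s_l^+(x)=\operatorname{St}(F(x))\in F(x)$.
	\item If the witness is $x\in(\supp F)^c$, then \Cref{rem_disjoint_apart_use} gives $x\in X_k^c$ for every $k$. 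So $s(x)=0$, and $F(x)=\{0\}$ by definition.
\end{itemize}
In both cases $s(x)\in F(x)$, which proves exactness on $\dom F$.
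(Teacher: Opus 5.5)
Your proposal is correct and follows the paper's proof essentially step for step. You obtain the $mL$-Lipschitz Steiner branches on each $X_k$ pointwise; the paper cites \Cref{crl_steiner_point_lipschitz_selector} directly, while you route through the identical argument of \Cref{thm_steiner_hausdorff_lipschitz_selector}. Regularity then comes from the representable, pairwise disjoint supports with zero exterior branches, and exactness from the same two-case split in which mutual apartness supplies the exterior witnesses.
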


\begin{proof}
	For each $k$, \Cref{crl_steiner_point_lipschitz_selector} gives, for all $x,y\in X_k$,
	\[
		s_k^+(x)=\operatorname{St}(F(x))\in F(x)
	\]
	and
	\[
		\nrm{s_k^+(x)-s_k^+(y)}
		\le
		mL\nrm{x-y},
		\qquad
		x,y\in X_k.
	\]
	Thus each positive branch is $mL$-Lipschitz on its support.
	Because the supports are representable and mutually apart, the support representation determines the exterior branches whenever a point lies in one of the other supports.
	Thus the displayed finite sum is a regular function with these supports and zero exterior branches.
	Let $x\in\dom F$.
	If $x\in X_l$ for some $l$, then the mutual-apartness data give $x\in X_k^c$ for every $k\ne l$, and hence
	\[
		s(x)
		=
		s_l^+(x)+\sum_{k\ne l}0
		=
		\operatorname{St}(F(x))
		\in F(x).
	\]
	If $x\in(\supp F)^c$, then $x\in X_k^c$ for every $k$, and therefore
	\[
		s(x)=\sum_{k=1}^N0=0.
	\]
	The domain-regular convention gives $F(x)=\{0\}$ on $(\supp F)^c$, so again $s(x)\in F(x)$.
	Thus $s$ is exact on $\dom F$.
\end{proof}

\subsection{Selector Limits and Coherence}
\label{subsec_inside_selector_limits}

\begin{remark}[A weak convergence-transfer principle]
	\label{rem_inside_selectors_converge_in_measure}
	Let $A\subseteq B\subset\R^n$ with $B$ a block, and let $F:A\rightrightarrows\R^m$ be a multifunction.
	Suppose that $(s_i)_i$ is a sequence of measurable functions $s_i:A\to\R^m$ converging in measure to a measurable function $s:A\to\R^m$ on $A\cap B$.
	Assume that there is a sequence $(\alpha_i)_i$ of positive rationals with $\alpha_i\to0$ such that, for every $i$, there exists a finite full multiblock $\mathcal E_i\Subset B$ with
	\[
		\gamma(\mathcal E_i)\le\alpha_i
	\]
	and
	\[
		s_i(x)\in F(x),
		\qquad
		x\in A\cap B\cap\bar{\mathcal E}_i^c.
	\]
	Then, for every $\eps>0$, there is a finite full multiblock $\mathcal E\Subset B$ with
	\[
		\gamma(\mathcal E)\le\eps
	\]
	such that
	\[
		(\{s(x)\}\oplus\eps)\cap F(x)
	\]
	is inhabited for every $x\in A\cap B\cap\bar{\mathcal E}^c$.
	Indeed, choose $i$ so large that
	\[
		\alpha_i\le\eps/2.
	\]
	By convergence in measure, possibly increasing $i$, there exists a finite full multiblock $\mathcal H_i\Subset B$ such that
	\[
		\gamma(\mathcal H_i)\le\eps/2
	\]
	and
	\[
		x\in A\cap B\cap\bar{\mathcal H}_i^c
		\quad\Longrightarrow\quad
		\nrm{s_i(x)-s(x)}\le\eps.
	\]
	Let $\mathcal E$ be the concatenation of $\mathcal E_i$ and $\mathcal H_i$.
	Then $\gamma(\mathcal E)\le\eps$.
	If $x\in A\cap B\cap\bar{\mathcal E}^c$, then
	\[
		s_i(x)\in F(x)
	\]
	and
	\[
		\nrm{s(x)-s_i(x)}\le\eps.
	\]
	Hence $s_i(x)$ witnesses that
	\[
		(\{s(x)\}\oplus\eps)\cap F(x)
	\]
	is inhabited.
\end{remark}

\begin{remark}[Why coherence is extra data]
	\label{rem_constructing_inside_selector_sequences}
	\Cref{rem_inside_selectors_converge_in_measure} is only a transfer principle.
	It does not construct a convergent sequence.
	For arbitrary $\eps$-full simple selectors obtained by unrelated tags, no Cauchy property follows merely from $\eps\to0$.
	At step $i+1$, the new tag may be close to the fiber but far from the tag chosen at step $i$.
	Nesting the domain exception multiblocks does not fix this by itself, because it controls where the selector is certified, not which nearby value is selected there.
	A genuine nested construction would have to impose a second condition, for instance
	\[
		\nrm{s_{i+1}(x)-s_i(x)}\le 2^{-i}
	\]
	on the retained full cells, in addition to metrical membership in the fiber.
	In the tractable inverse-image situation of \cite[Theorem~2]{Osinenko2021constructiveex}, such coherence is obtained from representable slices of the form
	\[
		\{x\in B:\ d(r,F(x))<\delta\}
	\]
	together with the additional predecessor-closeness restrictions.
	The present multifunction hypotheses are intentionally looser and do not include that inverse-slice structure.
	Thus \Cref{thm_domain_regular_block_full_selector} supplies finite $\eps$-full selectors, but does not by itself supply a Cauchy-in-measure sequence of such selectors.
	In the Picard--Steiner construction of \Cref{thm_picard_steiner_certified_approximants}, this role is played by the explicit factorial estimate.
\end{remark}

\begin{remark}
	\label{rem_inside_limit_eps}
	The $\eps$ in \Cref{rem_inside_selectors_converge_in_measure} is unavoidable at this level of data.
	Convergence in measure gives closeness to the limiting function only outside a small multiblock and only at a prescribed numerical accuracy.
	Therefore exactness of $s_i$ outside a small multiblock transfers to $\eps$-metrical membership of the limit outside a small multiblock.
	Exact membership of the limit requires additional inside data, such as a fixed block-full witness as in \Cref{thm_block_full_cauchy_limit_selector}, or a pointwise almost-uniform limiting argument with closed fibers.
	Moreover, a limit obtained by convergence in measure is generally measurable rather than Lipschitz-regular.
\end{remark}

\begin{remark}
	\label{rem_block_full_cauchy_sequence_not_constructed}
	The block-full Cauchy limit statement below is a transfer principle.
	It does not construct the Cauchy-in-measure sequence of selectors.
	It records what follows once such a coherent sequence has been supplied by some additional construction.
\end{remark}

\begin{theorem}[Block-full Cauchy limit selector]
	\label{thm_block_full_cauchy_limit_selector}
	Let $F$ have block-full values on a full block $B\subset\R^n$.
	Let $Q$ be an admissible base block for $F$, and let $D_Q$ be a block-full witness, so that
	\[
		Q\times D_Q\subseteq\Gamma_B(F).
	\]
	Let $(s_k)_{k\in\N}$ be simple regular functions such that
	\[
		s_k(x)\in D_Q,
		\qquad
		x\in Q,\quad k\in\N.
	\]
	Assume that $(s_k)_k$ is Cauchy in measure on $Q$ in the sense of \Cref{dfn_cauchy_in_measure}.
	Then there exists a measurable function $s:Q\to\R^m$, a subsequence $(s_{i_k})_k$, and a subset $Q_\ast\subseteq Q$ full in $Q$ relative to $B$ such that $(s_k)_k$ converges to $s$ in measure on $Q$, $(s_{i_k})_k$ converges almost uniformly to $s$ on $Q$, and $s$ is an exact selector of $F$ on $Q_\ast$.
\end{theorem}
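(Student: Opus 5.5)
The plan is to apply \Cref{thm_cauchy_in_measure_completion} directly and then use closedness of the witness block $D_Q$ to pass exact membership to the limit on the full set $Q_\ast$. First I check the hypotheses of the completion theorem with $A:=Q$ and the ambient block $B$. Each $s_k$ is simple regular, hence measurable by \Cref{lem_regular_functions_measurable}. By assumption $(s_k)_k$ is Cauchy in measure on $Q$. Since $Q\subseteq B$, \Cref{thm_cauchy_in_measure_completion} gives the following: a measurable $s:Q\to\R^m$, a subsequence $(s_{i_k})_k$, and a set $Q_\ast\subseteq Q$ full in $Q$ relative to $B$. Moreover, $(s_k)_k\to s$ in measure on $Q$, $(s_{i_k})_k\to s$ almost uniformly on $Q$, and $(s_{i_k})_k\to s$ pointwise on $Q_\ast$. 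This already gives every claim except exact selection.

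For exactness, fix $x\in Q_\ast$. Pointwise convergence means that for each $\eps>0$ there is an index $N$ with $\nrm{s_{i_k}(x)-s(x)}\le\eps$ whenever $k\ge N$. Each value $s_{i_k}(x)$ lies in $D_Q$, because $x\in Q$ and $s_k(Q)\subseteq D_Q$ is a hypothesis. Write $D_Q=\prod_j[a_j,b_j]$ with rational endpoints. Then each coordinate satisfies $a_j-\eps\le s(x)_j\le b_j+\eps$ for every $\eps>0$. Constructively this is exactly $a_j\le s(x)_j\le b_j$, since a real inequality $\le$ is the negative statement that is established by such $\eps$-estimates, as in the proof of \Cref{lem_regular_integral_finite_order}. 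Hence $s(x)\in D_Q$. Since $x\in Q$, the graph inclusion $Q\times D_Q\subseteq\Gamma_B(F)$ gives $s(x)\in F(x)$. So $s$ is an exact selector of $F$ on $Q_\ast$. We need $Q_\ast\subseteq\dom F$, and this holds because $Q\subseteq Q\times D_Q$-projection lies in the determined part of the graph.

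The one point needing care, which I expect to be the main obstacle, is the membership witness $x\in Q_\ast$. The completion theorem produces $Q_\ast$ as a set on which pointwise limits exist. It might not a priori know that the evaluations $s_{i_k}(x)$ are determined there, since simple regular functions are evaluated only with branch witnesses. To handle this, I will intersect $Q_\ast$ with the determined domains $\bigcap_k\dom s_k$. This keeps $Q_\ast$ full relative to $B$, because each $\dom s_k\cap Q$ is full: its complement in $Q$ lies within the exception layers of the support witnesses, and those layers have arbitrarily small cost. Fullness under countable intersection then follows by choosing the exceptional costs $\eta2^{-k-1}$ and concatenating them. If the completion theorem's $Q_\ast$ is already defined inside the common domain, as in Bishop--Bridges, this refinement is unnecessary. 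Either way the argument ends with the closedness of $D_Q$.
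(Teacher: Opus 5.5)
Your proof is correct and is the paper's argument: apply \Cref{thm_cauchy_in_measure_completion} with $A=Q$, then for $x\in Q_\ast$ pass the coordinate bounds $a_r\le (s_{i_k}(x))_r\le b_r$ to the limit through the $\eta$-estimates, so that $s(x)\in D_Q$ and the graph inclusion $Q\times D_Q\subseteq\Gamma_B(F)$ gives $s(x)\in F(x)$.

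Drop the domain-refinement paragraph. It is unnecessary, because pointwise convergence on $Q_\ast$ in the sense of \Cref{dfn_pointwise_convergence} already refers to the values $s_{i_k}(x)$ of functions defined on $Q$. Its justification also fails: concatenating countably many exception multiblocks of costs $\eta 2^{-k-1}$ gives an infinite multiblock, whereas fullness in \Cref{dfn_full_subset_relative_block} requires a single finite one.
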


\begin{proof}
	By \Cref{thm_cauchy_in_measure_completion}, applied to the measurable functions $s_k:Q\to\R^m$, there exist a measurable limit $s$, a subsequence $(s_{i_k})_k$, and a subset $Q_\ast\subseteq Q$ full in $Q$ relative to $B$ such that the stated convergence properties hold and $(s_{i_k})_k$ converges pointwise to $s$ on $Q_\ast$.
	Fix $x\in Q_\ast$.
	Write the rational block $D_Q$ as
	\[
		D_Q=\prod_{r=1}^m[a_r,b_r].
	\]
	Since $s_{i_k}(x)\in D_Q$ for every $k$, each coordinate satisfies
	\[
		a_r\le (s_{i_k}(x))_r\le b_r.
	\]
	Pointwise convergence gives
	\[
		a_r\le s_r(x)\le b_r,
		\qquad
		r=1,\ldots,m.
	\]
	Indeed, for every $\eta>0$ there is $K$ such that
	\[
		k\ge K
		\quad\Longrightarrow\quad
		\abs{(s_{i_k}(x))_r-s_r(x)}\le\eta.
	\]
	Hence
	\[
		a_r-\eta
		\le
		s_r(x)
		\le
		b_r+\eta.
	\]
	Since $\eta>0$ is arbitrary, the displayed coordinate bounds follow.
	Thus $s(x)\in D_Q$.
	The graph inclusion
	\[
		Q\times D_Q\subseteq\Gamma_B(F)
	\]
	now yields $s(x)\in F(x)$.
\end{proof}

\begin{remark}
	\label{rem_block_full_selector_no_limit}
	\Cref{thm_block_full_cauchy_limit_selector} is an inside limit statement.
	The approximants take their values in the fixed rational block $D_Q$ before the limit is formed.
	The closedness used in the proof is therefore only the elementary closedness of $D_Q$, not a separate sequential closedness assumption on the fibers of $F$.
\end{remark}

\begin{remark}
	\label{rem_metrical_selector_limits_not_treated}
	A limit theorem for $\eps$-metrical selectors is not used here.
	Passing from
	\[
		(\{s_k(x)\}\oplus\eps_k)\cap F(x)
	\]
	being inhabited with $\eps_k\to0$ to exact membership of a limit requires a closedness principle for the fiber or graph.
	Without such data, $\eps$-metrical selectors are finite approximation objects rather than a route to an exact limiting selector.
\end{remark}

\begin{table}[ht]
	\centering
	{\setlength{\arrayrulewidth}{0.8pt}
	\begin{tabular}{@{}ccc@{}}
		\hline
		\begin{minipage}[t]{0.22\linewidth}\centering\textbf{Domain}\end{minipage}
		&
		\begin{minipage}[t]{0.30\linewidth}\centering\textbf{Value}\end{minipage}
		&
		\begin{minipage}[t]{0.36\linewidth}\centering\textbf{Selector type}\end{minipage}
		\\[0.8ex]
		\hline
		\begin{minipage}[t]{0.22\linewidth}\raggedright $B$-total\end{minipage}
		&
		\begin{minipage}[t]{0.30\linewidth}\raggedright block-full values\end{minipage}
		&
		\begin{minipage}[t]{0.36\linewidth}\raggedright exact Lipschitz; regular for constant tags\end{minipage}
		\\[1.6ex]
		\noalign{\hrule height 0.25pt}
		\noalign{\vskip 1.35ex}
		\begin{minipage}[t]{0.22\linewidth}\raggedright $B$-total\end{minipage}
		&
		\begin{minipage}[t]{0.30\linewidth}\raggedright closed convex Hausdorff-Lipschitz values\end{minipage}
		&
		\begin{minipage}[t]{0.36\linewidth}\raggedright exact Lipschitz Steiner\end{minipage}
		\\[1.6ex]
		\noalign{\hrule height 0.25pt}
		\noalign{\vskip 1.35ex}
		\begin{minipage}[t]{0.22\linewidth}\raggedright $B$-total\end{minipage}
		&
		\begin{minipage}[t]{0.30\linewidth}\raggedright closed convex Hausdorff-Lipschitz values\end{minipage}
		&
		\begin{minipage}[t]{0.36\linewidth}\raggedright Lipschitz $\eps$-metrical regular\end{minipage}
		\\[1.6ex]
		\noalign{\hrule height 0.25pt}
		\noalign{\vskip 1.35ex}
		\begin{minipage}[t]{0.22\linewidth}\raggedright domain-regular\end{minipage}
		&
		\begin{minipage}[t]{0.30\linewidth}\raggedright block-full values\end{minipage}
		&
		\begin{minipage}[t]{0.36\linewidth}\raggedright $\eps$-full simple regular\end{minipage}
		\\[1.6ex]
		\noalign{\hrule height 0.25pt}
		\noalign{\vskip 1.35ex}
		\begin{minipage}[t]{0.22\linewidth}\raggedright domain-regular\end{minipage}
		&
		\begin{minipage}[t]{0.30\linewidth}\raggedright closed convex support-wise Hausdorff-Lipschitz values\end{minipage}
		&
		\begin{minipage}[t]{0.36\linewidth}\raggedright $\eps$-full Lipschitz-regular Steiner\end{minipage}
		\\[1.6ex]
		\noalign{\hrule height 0.25pt}
		\noalign{\vskip 1.35ex}
		\begin{minipage}[t]{0.22\linewidth}\raggedright any domain condition\end{minipage}
		&
		\begin{minipage}[t]{0.30\linewidth}\raggedright marginalized regular values\end{minipage}
		&
		\begin{minipage}[t]{0.36\linewidth}\raggedright exact regular midpoint\end{minipage}
		\\[1.6ex]
		\noalign{\hrule height 0.25pt}
		\noalign{\vskip 1.35ex}
		\begin{minipage}[t]{0.22\linewidth}\raggedright any domain condition\end{minipage}
		&
		\begin{minipage}[t]{0.30\linewidth}\raggedright marginalized Lipschitz values\end{minipage}
		&
		\begin{minipage}[t]{0.36\linewidth}\raggedright Lipschitz-regular midpoint\end{minipage}
		\\
		\hline
	\end{tabular}
	}
	\par\bigskip
	{\setlength{\arrayrulewidth}{0.8pt}
	\begin{tabular}{@{}ccc@{}}
		\hline
		\begin{minipage}[t]{0.22\linewidth}\centering\textbf{Domain}\end{minipage}
		&
		\begin{minipage}[t]{0.30\linewidth}\centering\textbf{Value}\end{minipage}
		&
		\begin{minipage}[t]{0.36\linewidth}\centering\textbf{Limit selector}\end{minipage}
		\\[0.8ex]
		\hline
		\begin{minipage}[t]{0.22\linewidth}\raggedright $B$-total\end{minipage}
		&
		\begin{minipage}[t]{0.30\linewidth}\raggedright block-full values\end{minipage}
		&
		\begin{minipage}[t]{0.36\linewidth}\raggedright exact on a full subset for Cauchy limits staying in one witness block\end{minipage}
		\\[1.6ex]
		\noalign{\hrule height 0.25pt}
		\noalign{\vskip 1.35ex}
		\begin{minipage}[t]{0.22\linewidth}\raggedright $B$-total\end{minipage}
		&
		\begin{minipage}[t]{0.30\linewidth}\raggedright closed convex Hausdorff-Lipschitz values\end{minipage}
		&
		\begin{minipage}[t]{0.36\linewidth}\raggedright not needed before regular approximation\end{minipage}
		\\[1.6ex]
		\noalign{\hrule height 0.25pt}
		\noalign{\vskip 1.35ex}
		\begin{minipage}[t]{0.22\linewidth}\raggedright $B$-total\end{minipage}
		&
		\begin{minipage}[t]{0.30\linewidth}\raggedright closed convex Hausdorff-Lipschitz values\end{minipage}
		&
		\begin{minipage}[t]{0.36\linewidth}\raggedright no exact limit without additional closed graph data\end{minipage}
		\\[1.6ex]
		\noalign{\hrule height 0.25pt}
		\noalign{\vskip 1.35ex}
		\begin{minipage}[t]{0.22\linewidth}\raggedright domain-regular\end{minipage}
		&
		\begin{minipage}[t]{0.30\linewidth}\raggedright block-full values\end{minipage}
		&
		\begin{minipage}[t]{0.36\linewidth}\raggedright convergence in measure gives $\eps$-metrical membership outside a small multiblock\end{minipage}
		\\[1.6ex]
		\noalign{\hrule height 0.25pt}
		\noalign{\vskip 1.35ex}
		\begin{minipage}[t]{0.22\linewidth}\raggedright domain-regular\end{minipage}
		&
		\begin{minipage}[t]{0.30\linewidth}\raggedright closed convex support-wise Hausdorff-Lipschitz values\end{minipage}
		&
		\begin{minipage}[t]{0.36\linewidth}\raggedright no limit needed for the support-wise exact construction\end{minipage}
		\\[1.6ex]
		\noalign{\hrule height 0.25pt}
		\noalign{\vskip 1.35ex}
		\begin{minipage}[t]{0.22\linewidth}\raggedright any domain condition\end{minipage}
		&
		\begin{minipage}[t]{0.30\linewidth}\raggedright marginalized regular values\end{minipage}
		&
		\begin{minipage}[t]{0.36\linewidth}\raggedright exact wherever margins are determined\end{minipage}
		\\[1.6ex]
		\noalign{\hrule height 0.25pt}
		\noalign{\vskip 1.35ex}
		\begin{minipage}[t]{0.22\linewidth}\raggedright any domain condition\end{minipage}
		&
		\begin{minipage}[t]{0.30\linewidth}\raggedright marginalized Lipschitz values\end{minipage}
		&
		\begin{minipage}[t]{0.36\linewidth}\raggedright finite-stage Lipschitz selector\end{minipage}
		\\
		\hline
	\end{tabular}
	}
	\caption{Selector mechanisms used in \Cref{sec_multifunctions}. The last column records what the selector-limit principles of \Cref{subsec_inside_selector_limits} add after passing to measure limits.}
	\label{tab_selector_mechanisms}
\end{table}

\section{Filippov Solutions}
\label{sec_filippov_solutions}

\begin{remark}
	This section applies the selector machinery to differential inclusions.
	The main objects are exact and certified approximate Filippov solutions, sample-and-hold certificates, Picard-type approximants, and viability certificates.
	The statements are arranged so that the finite data certifying the inclusion, the time exceptions, and the integral trajectory are explicit.
\end{remark}

\subsection{Differential Inclusions}
\label{subsec_differential_inclusions}

\begin{definition}[Autonomous differential inclusion]
	\label{dfn_autonomous_differential_inclusion}
	Let $B\subset\R^n$ be a full block, let $I:=[0,T]\subset\R$ be a rational time interval with $T>0$, and let
	\[
		F:B\rightrightarrows\R^n
	\]
	be a multifunction.
	The corresponding autonomous differential inclusion is written
	\[
		\dot x(t)\in F(x(t)),
		\qquad
		t\in I.
	\]
	The block $B$ is called the state block and $F$ is called the velocity multifunction.
\end{definition}

\begin{definition}[Path]
	\label{dfn_path}
	Let $I=[0,T]\subset\R$ be a rational interval.
	A \emph{path} in $\R^n$ on $I$ is a continuous function
	\[
		x:I\to\R^n.
	\]
	If $x(I)\subseteq B$, it is called a path in the block $B$.
\end{definition}

\begin{definition}[Pullback velocity multifunction]
	\label{dfn_pullback_velocity_multifunction}
	Let $F:B\rightrightarrows\R^n$ be a multifunction and let $x:I\to B$ be a path.
	The pullback of $F$ along $x$ is the multifunction
	\[
		F_x:I\rightrightarrows\R^n,
		\qquad
		F_x(t):=F(x(t)).
	\]
	Thus a velocity function $v:I\to\R^n$ is an exact selector of $F_x$ on $A\subseteq I$ precisely when
	\[
		v(t)\in F(x(t)),
		\qquad
		t\in A.
	\]
\end{definition}

\begin{definition}[Integral trajectory]
	\label{dfn_integral_trajectory}
	Let $v:I\to\R^n$ be Riemann integrable on every rational subinterval $[a,b]\subseteq I$.
	A continuous path $x:I\to\R^n$ is called an \emph{integral trajectory} of $v$ with initial value $x_0$ if
	\[
		x(0)=x_0
	\]
	and, for every rational $0\le a\le b\le T$,
	\[
		x(b)-x(a)
		=
		\int_a^b v(t)\diff t.
	\]
\end{definition}

\begin{remark}
	\label{rem_integral_trajectory_absolute_continuity}
	Classically, Filippov solutions are absolutely continuous paths.
	For the present constructive development the integral identity in \Cref{dfn_integral_trajectory} is the useful datum.
	When read classically, a continuous path satisfying that identity with an integrable velocity is absolutely continuous, and its derivative agrees with the velocity almost everywhere.
\end{remark}

\begin{definition}[Exact Filippov solution]
	\label{dfn_exact_filippov_solution}
	Let $F:B\rightrightarrows\R^n$ be a velocity multifunction and let $x_0\in B$.
	A pair $(x,v)$ is called an \emph{exact Filippov solution} on $I$ if:
	\begin{enumerate}
		\item $x:I\to B$ is an integral trajectory of $v$ with initial value $x_0$,
		\item $v$ is an exact selector of the pullback multifunction $F_x$ on $I\cap\dom v$.
	\end{enumerate}
\end{definition}

\begin{remark}
	\label{rem_exact_solution_not_general_existence}
	\Cref{dfn_exact_filippov_solution} is a verification definition.
	It does not assert that an exact selector of the pullback multifunction can be extracted in general.
	The selector results in \Cref{sec_multifunctions} identify concrete cases where such selector data are available.
\end{remark}

\subsection{Certified Approximate Solutions}
\label{subsec_certified_approximate_solutions}

\begin{definition}[$\eps$-full Filippov certificate]
	\label{dfn_eps_full_filippov_certificate}
	Let $F:B\rightrightarrows\R^n$ be a velocity multifunction, let $x_0\in B$, and let $\eps>0$.
	An \emph{$\eps$-full Filippov certificate} on $I$ is a tuple
	\[
		(x,v,\mathcal E,M)
	\]
	such that:
	\begin{enumerate}
		\item $x:I\to B$ is an integral trajectory of $v$ with initial value $x_0$,
		\item $\mathcal E\Subset I$ is a finite full multiblock with $\gamma(\mathcal E)\le\eps$,
		\item $v$ is an exact selector of $F_x$ on
		\[
			I\cap\dom v\cap\bar{\mathcal E}^c,
		\]
		\item $\nrm{v(t)}\le M$ for every $t\in I\cap\dom v$.
	\end{enumerate}
	The multiblock $\mathcal E$ is called the time-exception multiblock of the certificate.
\end{definition}

\begin{definition}[$(\eps,\eta)$-metrical Filippov certificate]
	\label{dfn_metrical_filippov_certificate}
	Let $F:B\rightrightarrows\R^n$ be a velocity multifunction, let $x_0\in B$, and let $\eps,\eta\ge0$.
	An \emph{$(\eps,\eta)$-metrical Filippov certificate} on $I$ is a tuple
	\[
		(x,v,\mathcal E,M)
	\]
	such that:
	\begin{enumerate}
		\item $x:I\to B$ is an integral trajectory of $v$ with initial value $x_0$,
		\item $\mathcal E\Subset I$ is a finite full multiblock with $\gamma(\mathcal E)\le\eps$,
		\item for every $t\in I\cap\dom v\cap\bar{\mathcal E}^c$, the set
		\[
			(\{v(t)\}\oplus\eta)\cap F(x(t))
		\]
		is inhabited,
		\item $\nrm{v(t)}\le M$ for every $t\in I\cap\dom v$.
	\end{enumerate}
\end{definition}

\begin{remark}
	\label{rem_certificate_meaning}
	The certificate in \Cref{dfn_eps_full_filippov_certificate} is not a comparison with an unavailable ideal solution.
	It records the finite statement that the integral path follows velocities from $F(x(t))$ outside a time set of cost at most $\eps$, while the velocity is bounded everywhere it is determined.
	Reducing $\eps$ refines this statement in the same sense as the exception multiblocks used for representable sets and measurable functions.
\end{remark}

\begin{proposition}[Integral effect of a time exception]
	\label{prop_time_exception_integral_effect}
	Let $v,w:I\to\R^n$ be Riemann integrable on rational subintervals and let $\mathcal E\Subset I$ be a finite full multiblock.
	Assume that, for some $M,\eta>0$,
	\[
		\nrm{v(t)}\le M,
		\qquad
		\nrm{w(t)}\le M,
		\qquad
		t\in I,
	\]
	and
	\[
		t\in I\cap\bar{\mathcal E}^c
		\quad\Longrightarrow\quad
		\nrm{v(t)-w(t)}\le\eta.
	\]
	If $x$ and $z$ are integral trajectories of $v$ and $w$ with the same initial value, then, for every rational $t\in I$,
	\[
		\nrm{x(t)-z(t)}
		\le
		\eta T+2M\gamma(\mathcal E).
	\]
\end{proposition}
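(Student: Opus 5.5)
The plan is to reduce the claim to a single estimate for the integral of the velocity difference over $[0,t]$, and then split that integral into a part where the two velocities are close and a part carried by the exception multiblock. Since $x$ and $z$ are integral trajectories with the same initial value, \Cref{dfn_integral_trajectory} with $a=0$, $b=t$ gives
\[
	x(t)-z(t)=\int_0^t\bigl(v(s)-w(s)\bigr)\diff s,
\]
and the integrand is Riemann integrable on $[0,t]$ by linearity of tagged sums. I will estimate this integral directly through tagged multiblock sums, since the integrals are defined as limits of such sums and this avoids any pointwise decision about membership in $\bar{\mathcal E}$.

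Fix $\delta>0$. Because $\mathcal E$ is a finite multiblock, I will choose $\delta_0>0$ with $\gamma(\mathcal E\oplus\delta_0)\le\gamma(\mathcal E)+\delta$, exactly as in the proof of \Cref{lem_regular_functions_riemann_integrable}. Then I take a tagged partition $\mathcal P$ of $[0,t]$ with mesh at most $\delta_0$, fine enough that $\Sigma(v-w,\mathcal P)$ is within $\delta$ of the integral. For each cell I choose its tag constructively: if the cell is well contained in the apartness complement of $\bar{\mathcal E}$, any tag is in $\bar{\mathcal E}^c$ and gives $\nrm{v-w}\le\eta$. Otherwise the cell is treated as an exception cell, where the tag gives the crude bound $\nrm{v-w}\le 2M$.

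To make this split decidable, I will refine $\mathcal P$ by the rational endpoints of $\mathcal E$ and trim as in \Cref{dfn_coordinate_refinement}. After that, each cell either lies in $\bar{\mathcal E}\oplus\delta_0$ or has a tag chosen at positive distance from $\bar{\mathcal E}$; this is a finite rational endpoint check. The cells of the first kind have total length at most $\mu(\overline{\mathcal E\oplus\delta_0})\le\gamma(\mathcal E)+\delta$, while the remaining cells have total length at most $t\le T$. Summing gives
\[
	\nrm{\Sigma(v-w,\mathcal P)}\le\eta T+2M(\gamma(\mathcal E)+\delta).
\]
Combining this with the $\delta$-closeness of the sum to the integral, and letting $\delta\to0$, yields the stated bound in the constructive sense of $\le$.

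The main obstacle I expect is the tag-selection step: tags must be placed in $\bar{\mathcal E}^c$ positively, not merely outside $\bar{\mathcal E}$ by negation, and the boundary cells of the exception must be absorbed into the enlarged multiblock. The refinement-by-endpoints argument above should handle this. The trimming strips around the endpoints add only arbitrarily small cost, and that cost is absorbed into $\delta$.
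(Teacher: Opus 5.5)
Your proposal is correct and follows the paper's argument. You reduce to $\int_0^t(v-w)$, split $[0,t]$ by the rational endpoints of $\mathcal E$, and bound the integrand by $\eta$ off $\bar{\mathcal E}$ and by $2M$ on the covered pieces; your tagged-sum version just makes explicit the constructive tag placement that the paper's one-line split leaves implicit. The $\delta_0$-enlargement and the trimming are harmless but unnecessary: after refining by the endpoints of $\mathcal E$, a rational tag in the open interior of any cell not contained in a constituent of $\mathcal E$ already lies at positive distance from $\bar{\mathcal E}$.
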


\begin{proof}
	Fix a rational $t\in I$.
	By the integral-trajectory identities,
	\[
		x(t)-z(t)
		=
		\int_0^t (v(\tau)-w(\tau))\diff\tau.
	\]
	Split the interval $[0,t]$ into the finite multiblock pieces covered by $\mathcal E$ and their complement in the one-dimensional partition induced by the endpoints of $\mathcal E$.
	On the complement of $\bar{\mathcal E}$ the integrand has norm at most $\eta$.
	On the pieces covered by $\mathcal E$ it has norm at most $2M$.
	Therefore
	\[
		\nrm{x(t)-z(t)}
		\le
		\eta\mu([0,t])+2M\gamma(\mathcal E)
		\le
		\eta T+2M\gamma(\mathcal E).
	\]
\end{proof}

\begin{corollary}[Certificate stability]
	\label{cor_filippov_certificate_stability}
	Let $(x,v,\mathcal E,M)$ be an $\eps$-full Filippov certificate.
	Let $z$ be an integral trajectory of a velocity $w$ with the same initial value.
	If $\nrm{w(t)}\le M$ on $I$ and
	\[
		t\in I\cap\bar{\mathcal E}^c
		\quad\Longrightarrow\quad
		\nrm{v(t)-w(t)}\le\eta,
	\]
	then, for every rational $t\in I$,
	\[
		\nrm{x(t)-z(t)}
		\le
		\eta T+2M\eps.
	\]
\end{corollary}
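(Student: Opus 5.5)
This is a direct specialization of \Cref{prop_time_exception_integral_effect}, so the plan is to check its hypotheses and substitute the certificate's cost bound.

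First, the certificate $(x,v,\mathcal E,M)$ from \Cref{dfn_eps_full_filippov_certificate} supplies the data the proposition needs. By item~1, $x$ is an integral trajectory of $v$ with initial value $x_0$. By item~2, $\mathcal E\Subset I$ is a finite full multiblock with $\gamma(\mathcal E)\le\eps$. By item~4, $\nrm{v(t)}\le M$ wherever $v$ is determined. The hypotheses of the corollary add that $z$ is an integral trajectory of $w$ with the same initial value, that $\nrm{w}\le M$ on $I$, and that $\nrm{v-w}\le\eta$ on $I\cap\bar{\mathcal E}^c$. Riemann integrability of $v$ and $w$ on rational subintervals is already part of the definition of integral trajectory in \Cref{dfn_integral_trajectory}.

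With the hypotheses matched, I would apply \Cref{prop_time_exception_integral_effect} with the same $M$ and $\eta$. For every rational $t\in I$ this yields
\[
	\nrm{x(t)-z(t)}\le \eta T+2M\gamma(\mathcal E).
\]
Monotonicity in $\gamma(\mathcal E)$ together with $\gamma(\mathcal E)\le\eps$ then gives the claimed bound $\eta T+2M\eps$.

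The only point needing care is a domain mismatch. The proposition requires the bound on $v$ for all $t\in I$, while the certificate gives it only on $I\cap\dom v$. Because $v$ is Riemann integrable, tagged sums only evaluate $v$ at points where it is determined. So in the splitting argument of the proposition, the integrand is bounded by $2M$ on the pieces covered by $\mathcal E$ and by $\eta$ on the complement wherever it is evaluated, and the same estimate goes through. I would state this in one sentence rather than re-prove the proposition. No other obstacle is expected.
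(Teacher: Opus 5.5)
Your proposal is correct and follows the paper's own proof, which is a one-line application of \Cref{prop_time_exception_integral_effect} combined with $\gamma(\mathcal E)\le\eps$. Your remark about the domain of $v$ is harmless extra care. \Cref{dfn_integral_trajectory} already takes $v:I\to\R^n$ to be Riemann integrable on $I$, so the bound on $I\cap\dom v$ effectively covers what the proposition needs.
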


\begin{proof}
	This is \Cref{prop_time_exception_integral_effect} and the estimate $\gamma(\mathcal E)\le\eps$.
\end{proof}

\begin{proposition}[Practical stability transfer]
	\label{prop_practical_stability_transfer}
	Let $(x,v,\mathcal E,M)$ be an $\eps$-full Filippov certificate on $I=[0,T]$.
	Let $z$ be an integral trajectory of a velocity $w$ with the same initial value and with $\nrm{w(t)}\le M$ on $I$.
	Assume that
	\[
		t\in I\cap\bar{\mathcal E}^c
		\quad\Longrightarrow\quad
		\nrm{v(t)-w(t)}\le\eta.
	\]
	Set
	\[
		\rho:=\eta T+2M\eps.
	\]
	If, for some rational $\tau\in I$ and $r>0$,
	\[
		\nrm{x(t)}\le r,
		\qquad
		t\in[\tau,T],
	\]
	then
	\[
		\nrm{z(t)}\le r+\rho,
		\qquad
		t\in[\tau,T].
	\]
\end{proposition}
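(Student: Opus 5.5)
The plan is to reduce \Cref{prop_practical_stability_transfer} to \Cref{cor_filippov_certificate_stability} followed by a triangle inequality. The certificate $(x,v,\mathcal E,M)$, the velocity $w$ with $\nrm{w(t)}\le M$, and the closeness hypothesis outside $\bar{\mathcal E}$ are exactly the assumptions of that corollary. It gives, for every rational $t\in I$,
\[
	\nrm{x(t)-z(t)}\le\eta T+2M\eps=\rho .
\]
For rational $t\in[\tau,T]$ we then combine this with the hypothesis $\nrm{x(t)}\le r$:
\[
	\nrm{z(t)}\le\nrm{x(t)}+\nrm{z(t)-x(t)}\le r+\rho .
\]

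The only remaining step is to pass from rational times to arbitrary real $t\in[\tau,T]$. Both $x$ and $z$ are continuous paths, since integral trajectories are continuous by \Cref{dfn_integral_trajectory}. Hence $t\mapsto\nrm{z(t)}$ is continuous on $I$ and has a modulus $\omega$. Fix a real $t\in[\tau,T]$ and $\delta>0$. We pick a rational $t'\in[\tau,T]$ with $\abs{t-t'}\le\omega(\delta)$, clipping to the interval if needed; this is possible because $\tau$ and $T$ are rational and $\tau\le T$. Then
\[
	\nrm{z(t)}\le\nrm{z(t')}+\delta\le r+\rho+\delta .
\]
Since $\delta>0$ is arbitrary, this is the constructive meaning of $\nrm{z(t)}\le r+\rho$. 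The argument is the same density argument as at the end of \Cref{lem_pl_certificate_lipschitz}.

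There is one subtle point. \Cref{cor_filippov_certificate_stability} rests on \Cref{prop_time_exception_integral_effect}, which assumes $\nrm{v(t)}\le M$ on all of $I$. The certificate supplies this bound on $I\cap\dom v$, which is the same domain on which the integrals are evaluated, so the hypotheses match. No further obstacle is expected: the proposition is a direct corollary, and the density step is the only part that needs care.
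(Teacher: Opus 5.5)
Your proof is correct and follows the paper's argument: invoke \Cref{cor_filippov_certificate_stability} to get $\nrm{x(t)-z(t)}\le\rho$, then apply the triangle inequality on $[\tau,T]$. The extra density step from rational to arbitrary real $t\in[\tau,T]$, using the modulus of the continuous path $z$, is a harmless refinement that the paper omits; the paper applies the corollary to all $t\in I$, although it is stated only for rational $t$.
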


\begin{proof}
	By \Cref{cor_filippov_certificate_stability},
	\[
		\nrm{x(t)-z(t)}\le\rho,
		\qquad
		t\in I.
	\]
	For $t\in[\tau,T]$, the triangle inequality gives
	\[
		\nrm{z(t)}
		\le
		\nrm{x(t)}+\nrm{x(t)-z(t)}
		\le
		r+\rho.
	\]
\end{proof}

\begin{remark}
	\label{rem_practical_stability_asymptotic_reading}
	If a family of certificates for increasing horizons has the property that, for every enlarged target radius $r+\delta$, there is a finite reaching time after which the certified path remains in the ball of radius $r+\delta$, then \Cref{prop_practical_stability_transfer} transfers the same statement to every trajectory whose velocity is close in the displayed certificate sense.
	The radius is enlarged by the explicit quantity $\rho=\eta T+2M\eps$ on the considered finite horizon.
\end{remark}

\begin{proposition}[Stability through a Lipschitz selection mechanism]
	\label{prop_lipschitz_selection_mechanism_stability}
	Let $s:B\to\R^n$ be an $L_s$-Lipschitz function with
	\[
		\nrm{s(y)}\le M,
		\qquad
		y\in B.
	\]
	Let $x,z:I\to B$ be integral trajectories of velocities $v,w$ with the same initial value, and assume
	\[
		\nrm{v(t)}\le M,
		\qquad
		\nrm{w(t)}\le M.
	\]
	Let $\mathcal E\Subset I$ be a finite full multiblock with $\gamma(\mathcal E)\le\eps$.
	Assume that, for every $t\in I\cap\bar{\mathcal E}^c$,
	\[
		\nrm{v(t)-s(x(t))}\le\alpha,
		\qquad
		\nrm{w(t)-s(z(t))}\le\beta.
	\]
	Then, for every rational $t\in I$,
	\[
		\nrm{x(t)-z(t)}
		\le
		\bigl((\alpha+\beta)T+2M\eps\bigr)\exp(L_sT).
	\]
\end{proposition}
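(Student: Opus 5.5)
The plan is a Gronwall-type estimate for the scalar function
\[
	e(t):=\nrm{x(t)-z(t)},
\]
with the time exception $\mathcal E$ handled exactly as in \Cref{prop_time_exception_integral_effect}. Fix a rational $t\in I$. By the integral-trajectory identities,
\[
	x(t)-z(t)=\int_0^t\bigl(v(\tau)-w(\tau)\bigr)\diff\tau .
\]
Split the integrand as
\[
	v-w=\bigl(v-s(x)\bigr)+\bigl(s(x)-s(z)\bigr)+\bigl(s(z)-w\bigr).
\]
As in the proof of \Cref{prop_time_exception_integral_effect}, partition $[0,t]$ by the endpoints of $\mathcal E$.

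On the pieces covered by $\mathcal E$, I bound the full integrand $v-w$ crudely by $2M$, which contributes at most $2M\gamma(\mathcal E)\le 2M\eps$. Off $\bar{\mathcal E}$, the first and third terms contribute at most $(\alpha+\beta)t$. The middle term is bounded pointwise by $L_s e(\tau)$. On the exception pieces I do not use the Lipschitz bound, since the cruder $2M$ bound already covers them. Since $L_se\ge0$, I may integrate that bound over all of $[0,t]$. This yields the integral inequality
\[
	e(t)\le a+L_s\int_0^t e(\tau)\diff\tau,
	\qquad
	a:=(\alpha+\beta)T+2M\eps,
\]
for rational $t$. It extends to all $t\in I$ by continuity of $x$ and $z$.

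The one step needing care is that $\tau\mapsto s(x(\tau))-s(z(\tau))$ be integrable and that the split above be legitimate. The function $s\circ x$ is continuous, being a Lipschitz map composed with a continuous path, so it is Riemann integrable by \Cref{rem_riemann_integral_continuous_block}. The integrals of $v-s(x)$ and $s(z)-w$ are then integrals of Riemann integrable functions, and linearity of tagged sums justifies the decomposition. The function $e$ is continuous, so its integral exists.

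It remains to run Gronwall's inequality constructively. I would avoid any supremum or contradiction argument and iterate the integral inequality directly. Write $E:=\sup_{[0,T]}e$, which exists by the finite-net argument for continuous functions on a block. After $k$ substitutions,
\[
	e(t)\le a\sum_{j<k}\frac{(L_st)^j}{j!}+E\,\frac{(L_st)^k}{k!}.
\]
Each iterated integral of the constant $a$ is computed explicitly, and monotonicity holds by \Cref{lem_regular_integral_finite_order}. Letting $k\to\infty$, the remainder term tends to zero with an explicit rate. This gives
\[
	e(t)\le a\exp(L_st)\le a\exp(L_sT),
\]
which is the claim. I expect this iteration, with its explicit remainder, to be the main bookkeeping step, though it is routine.
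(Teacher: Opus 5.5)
Your proposal is correct and takes essentially the same route as the paper. You split $v-w$ through $s(x)$ and $s(z)$ off $\bar{\mathcal E}$, bound the exception pieces by $2M\gamma(\mathcal E)$, obtain $e(t)\le(\alpha+\beta)T+2M\eps+L_s\int_0^t e(\tau)\diff\tau$, and close with Gronwall. The only difference is that you carry out the Gronwall step by explicit iteration with remainder $E(L_st)^k/k!$, whereas the paper simply cites the elementary integral Gronwall estimate.
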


\begin{proof}
	Put
	\[
		u(t):=\nrm{x(t)-z(t)}.
	\]
	Outside $\bar{\mathcal E}$,
	\[
	\begin{aligned}
		\nrm{v(t)-w(t)}
		&\le
		\nrm{v(t)-s(x(t))}
		+
		\nrm{s(x(t))-s(z(t))}
		+
		\nrm{s(z(t))-w(t)}
		\\
		&\le
		\alpha+\beta+L_su(t).
	\end{aligned}
	\]
	On $\bar{\mathcal E}$ we use the crude bound $\nrm{v(t)-w(t)}\le2M$.
	Hence
	\[
		u(t)
		\le
		(\alpha+\beta)T+2M\gamma(\mathcal E)+L_s\int_0^t u(\tau)\diff\tau.
	\]
	The elementary integral Gronwall estimate gives the displayed bound.
\end{proof}

\begin{remark}
	\label{rem_inclusion_membership_not_stability}
	Plain membership $v(t)\in F(x(t))$ and $w(t)\in F(z(t))$ does not by itself imply that the velocities are close.
	A single fiber may contain two far apart velocities.
	The preceding proposition records the useful finite situation: both trajectories follow the same Lipschitz selection mechanism, up to the displayed tolerances and a small time-exception multiblock.
\end{remark}

\begin{remark}
	\label{rem_stability_estimates_relation}
	\Cref{cor_filippov_certificate_stability} is an a posteriori comparison: it applies once the two velocity functions are already known to be close outside a small time-exception multiblock.
	\Cref{prop_lipschitz_selection_mechanism_stability} is a sufficient mechanism for obtaining such closeness.
	It says that if both velocities are generated, up to small errors, by the same Lipschitz selection rule evaluated along their own paths, then trajectory closeness follows from the Lipschitz estimate and Gronwall's inequality.
	Without a common selection mechanism or a direct velocity-closeness certificate, arbitrary choices from a wide fiber can produce trajectories that separate.
\end{remark}

\subsection{Sample-Hold Solutions}
\label{subsec_sample_hold_solutions}

\begin{definition}[Sample-and-hold data]
	\label{dfn_sample_hold_data}
	Let $I=[0,T]$ and let $x_0\in\Q^n$.
	A \emph{sample-and-hold datum} consists of rational times
	\[
		0=t_0<t_1<\cdots<t_N=T
	\]
	and rational velocity tags
	\[
		y_j\in\Q^n,
		\qquad
		j=0,\ldots,N-1.
	\]
	The associated polygonal path $x_h:I\to\R^n$ and held velocity $v_h:I\to\R^n$ are defined by
	\[
		x_h(t):=x_j+(t-t_j)y_j,
		\qquad
		v_h(t):=y_j,
		\qquad
		t\in[t_j,t_{j+1}),
	\]
	where
	\[
		x_{j+1}:=x_j+(t_{j+1}-t_j)y_j,
		\qquad
		j=0,\ldots,N-1.
	\]
	At $t=T$ one sets
	\[
		x_h(T):=x_N,
		\qquad
		v_h(T):=y_{N-1}.
	\]
\end{definition}

\begin{definition}[Switching set and switching layer]
	\label{dfn_switching_set_layer}
	For a sample-and-hold datum, the \emph{switching set} is the finite set
	\[
		\mathcal S_h:=\{t_j:j=1,\ldots,N-1\}.
	\]
	For $\eps>0$, an \emph{$\eps$-switching layer} is a finite full multiblock $\mathcal E_\eps\Subset I$ such that
	\[
		\mathcal S_h\subseteq\bar{\mathcal E}_\eps,
		\qquad
		\gamma(\mathcal E_\eps)\le\eps.
	\]
\end{definition}

\begin{definition}[Certified sample-and-hold datum]
	\label{dfn_certified_sample_hold_data}
	Let $F:B\rightrightarrows\R^n$ be a velocity multifunction.
	A sample-and-hold datum is called \emph{certified for $F$ on $B$} if, for every $j=0,\ldots,N-1$, there are full blocks
	\[
		Q_j\subseteq B,
		\qquad
		D_j\subset\R^n
	\]
	such that:
	\begin{enumerate}
		\item $Q_j\times D_j\subseteq\Gamma_B(F)$,
		\item $y_j\in D_j$,
		\item $x_j,x_{j+1}\in Q_j$.
	\end{enumerate}
	The third condition implies
	\[
		x_h([t_j,t_{j+1}])\subseteq Q_j
	\]
	since every block is convex.
\end{definition}

\begin{theorem}[Sample-and-hold Filippov certificate]
	\label{thm_sample_hold_filippov_certificate}
	Let $F:B\rightrightarrows\R^n$ be a velocity multifunction and let a sample-and-hold datum be certified for $F$ on $B$.
	Let
	\[
		M\ge\max_{j=0,\ldots,N-1}\nrm{y_j}.
	\]
	Then, for every $\eps>0$, the associated path and held velocity give an $\eps$-full Filippov certificate
	\[
		(x_h,v_h,\mathcal E_\eps,M)
	\]
	for a finite time-exception multiblock $\mathcal E_\eps\Subset I$ with $\gamma(\mathcal E_\eps)\le\eps$.
\end{theorem}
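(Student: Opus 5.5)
The plan is to verify the four clauses of \Cref{dfn_eps_full_filippov_certificate} directly, with the time-exception multiblock chosen as a thin switching layer.

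\textbf{Exception multiblock.} Fix $\eps>0$. If $N=1$ the switching set $\mathcal S_h$ is empty. In that case take a single tiny full rational interval well inside $(0,T)$ of length at most $\eps$; the empty multiblock should also be acceptable. Otherwise choose a rational $\delta>0$ with
\[
	2\delta(N-1)\le\eps,
	\qquad
	\delta<\tfrac12\min_j(t_{j+1}-t_j).
\]
Put
\[
	\mathcal E_\eps:=\{[t_j-\delta,t_j+\delta]:j=1,\ldots,N-1\}.
\]
These are full rational blocks. They are well contained in $I$ because each $t_j$ with $1\le j\le N-1$ lies at rational positive distance from $0$ and from $T$; shrinking $\delta$ further makes this explicit. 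Also $\gamma(\mathcal E_\eps)\le\eps$. The bound $\nrm{v_h(t)}\le M$ holds because $v_h$ only takes the values $y_j$.

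\textbf{Integral trajectory.} The held velocity $v_h$ is a simple step function. Its integral over a rational subinterval $[a,b]$ is a finite sum of the terms $y_j$ times interval lengths; this is Riemann integrable by the usual tagged-sum argument, since the discontinuities are finitely many rational points. Continuity of $x_h$ follows from the recursion $x_{j+1}=x_j+(t_{j+1}-t_j)y_j$, which makes the affine pieces agree at the nodes. The identity
\[
	x_h(b)-x_h(a)=\int_a^b v_h
\]
holds on each $[t_j,t_{j+1}]$ because $x_h$ is affine there with slope $y_j$. For general $a\le b$, split at the rational nodes between them; comparisons of rational numbers are decidable. Finally $x_h(0)=x_0$, and $x_h(I)\subseteq B$ by clause 3 of \Cref{dfn_certified_sample_hold_data} together with $Q_j\subseteq B$.

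\textbf{Selector clause (main obstacle).} This is the only genuinely constructive point: for a real $t\in I\cap\bar{\mathcal E}_\eps^c$ we must decide which interval $[t_j,t_{j+1}]$ contains $t$, so as to know $v_h(t)=y_j$. A bare inequality $t\ne t_j$ does not give this. Here the apartness reading of $\bar{\mathcal E}_\eps^c$ from \Cref{ntn_apartness_complement} helps: such a $t$ has positive distance from every switching block, hence $\abs{t-t_j}>\delta$ for every switching time $t_j$. Comparing $t$ with the finitely many rational values $t_j\pm\delta/2$ is a decision between overlapping cases. Using it, we locate a unique $j$ with
\[
	t\in(t_j+\delta,\;t_{j+1}-\delta),
\]
with the obvious modification at the ends, for instance $t\in[0,t_1-\delta)$. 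Then
\[
	v_h(t)=y_j\in D_j,
	\qquad
	x_h(t)\in x_h([t_j,t_{j+1}])\subseteq Q_j,
\]
and the graph inclusion $Q_j\times D_j\subseteq\Gamma_B(F)$ gives $v_h(t)\in F(x_h(t))$. This makes $v_h$ an exact selector of $F_{x_h}$ off the exception multiblock, and all four clauses of the certificate hold.
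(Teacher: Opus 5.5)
Your proposal is correct and follows the paper's proof: you choose a thin switching layer, use $Q_j\times D_j\subseteq\Gamma_B(F)$ on each holding interval to get $v_h(t)=y_j\in F(x_h(t))$, sum the constant-velocity pieces for the integral identity, and read off the bound $M$. Your use of the apartness complement to locate $t$ in a holding interval spells out what the paper leaves implicit in ``on each interval component''. Note, however, that apartness of $t$ from the switching points $t_j$ alone already decides $t<t_j$ or $t>t_j$, so your $\delta$-margin is convenient but not essential.
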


\begin{proof}
	Choose an $\eps$-switching layer $\mathcal E_\eps$ in the sense of \Cref{dfn_switching_set_layer}.
	On each interval component of
	\[
		I\cap\bar{\mathcal E}_\eps^c
	\]
	the held velocity is constant and equal to some $y_j$.
	For such a time $t\in[t_j,t_{j+1})$, \Cref{dfn_certified_sample_hold_data} gives
	\[
		x_h(t)\in Q_j,
		\qquad
		y_j\in D_j,
		\qquad
		Q_j\times D_j\subseteq\Gamma_B(F).
	\]
	Therefore
	\[
		v_h(t)=y_j\in F(x_h(t)).
	\]
	The polygonal identity
	\[
		x_h(b)-x_h(a)=\int_a^b v_h(t)\diff t
	\]
	for rational $0\le a\le b\le T$ follows by summing the constant-velocity contributions over the finite partition.
	The bound by $M$ is immediate from the definition of $M$.
\end{proof}

\begin{corollary}[Exact sample-and-hold solutions away from switches]
	\label{cor_exact_sample_hold_away_switches}
	Under the assumptions of \Cref{thm_sample_hold_filippov_certificate}, the sample-and-hold pair $(x_h,v_h)$ satisfies
	\[
		v_h(t)\in F(x_h(t))
	\]
	for every $j=0,\ldots,N-1$ and every $t\in(t_j,t_{j+1})$.
	Equivalently,
	\[
		t\in I\setminus\mathcal S_h
		\quad\Longrightarrow\quad
		v_h(t)\in F(x_h(t)).
	\]
	For every $\eps>0$ and every $\eps$-switching layer $\mathcal E_\eps$ satisfying
	\[
		\mathcal S_h\subseteq\bar{\mathcal E}_\eps,
		\qquad
		\gamma(\mathcal E_\eps)\le\eps,
	\]
	the tuple $(x_h,v_h,\mathcal E_\eps,M)$ is an $\eps$-full Filippov certificate.
\end{corollary}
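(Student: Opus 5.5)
The plan is to read the corollary almost directly off the proof of \Cref{thm_sample_hold_filippov_certificate}. The one new point is the pointwise statement away from the switching set. The certificate statement then holds for an \emph{arbitrary} $\eps$-switching layer, rather than for one chosen inside that proof.

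First I will prove the pointwise inclusion. Fix $j$ and $t\in(t_j,t_{j+1})$. By \Cref{dfn_sample_hold_data}, $v_h(t)=y_j$ and $x_h(t)=x_j+(t-t_j)y_j$ is a convex combination of $x_j$ and $x_{j+1}$. Condition (3) of \Cref{dfn_certified_sample_hold_data} gives $x_j,x_{j+1}\in Q_j$, and blocks are convex, so $x_h(t)\in Q_j$. Condition (2) gives $y_j\in D_j$, and condition (1) gives $Q_j\times D_j\subseteq\Gamma_B(F)$. Together these yield $v_h(t)\in F(x_h(t))$.

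The reformulation via $I\setminus\mathcal S_h$ needs some care. A witness $t\in I\setminus\mathcal S_h$ gives, for every switching time, a positive rational distance $\abs{t-t_j}\ge\delta_j$. Comparing $t$ with the finitely many rational times $t_j$ at precision below $\min_j\delta_j$ then decides which open interval $(t_j,t_{j+1})$ contains $t$. The endpoints need separate treatment. At $t=0$ the datum gives $x_h(0)=x_0\in Q_0$ and $v_h(0)=y_0$. At $t=T$ it gives $x_h(T)=x_N\in Q_{N-1}$ and $v_h(T)=y_{N-1}$. Both cases are handled by the same three conditions. Hence the inclusion holds on all of $I\setminus\mathcal S_h$.

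For the certificate, take any $\eps$-switching layer $\mathcal E_\eps$. A point $t\in I\cap\bar{\mathcal E}_\eps^c$ is apart from $\bar{\mathcal E}_\eps\supseteq\mathcal S_h$, so $t\in I\setminus\mathcal S_h$ and the inclusion above applies. The integral-trajectory identity is exactly the polygonal summation used in the theorem. The bound $\nrm{v_h}\le M$ comes from the choice of $M$, and the cost bound $\gamma(\mathcal E_\eps)\le\eps$ is part of the definition of a switching layer.

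The main subtle step is the constructive localization of a point $t\in I\setminus\mathcal S_h$ into a single sampling interval. The apartness witnesses supply exactly the positive margins needed for this, so the obstacle is minor.
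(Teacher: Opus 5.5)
Your proposal is correct and is essentially the paper's argument. The paper's proof simply points back to the membership calculation $x_h(t)\in Q_j$, $y_j\in D_j$, $Q_j\times D_j\subseteq\Gamma_B(F)$ in the proof of \Cref{thm_sample_hold_filippov_certificate}, which you carry out with the apartness bookkeeping made explicit.

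One small refinement concerns the endpoints. Comparing $t$ with the switching times only places $t\in I\setminus\mathcal S_h$ in $[0,t_1)$, in some $(t_j,t_{j+1})$, or in $(t_{N-1},T]$; you cannot further decide whether $t=0$ or $t>0$, and similarly at $T$. No endpoint case split is needed, though. Your convex-combination argument already works for every $t\in[t_j,t_{j+1}]$, and $v_h$ is constant on those two half-open end intervals.
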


\begin{proof}
	This is the membership calculation in the proof of \Cref{thm_sample_hold_filippov_certificate}.
\end{proof}

\begin{remark}
	\label{rem_exact_sample_hold_engineering}
	The corollary is the finite verification statement behind sample-and-hold simulation.
	An engineer checks finitely many blocks $Q_j$ and velocity tags $y_j$.
	The differential inclusion is then certified on each open holding interval, and the only times not certified exactly are the finitely many switch times, which are covered by an arbitrarily small switching layer.
\end{remark}

\begin{remark}
	\label{rem_sample_hold_relation_classical}
	Sample-and-hold arcs are the finite objects behind Euler-type constructions for differential inclusions; compare the classical treatments of differential inclusions and viability theory in \cite{Aubin2012DifferentialIn,Filippov1962certainquestio}.
	In the present setting the finite certificate is the main object.
	One records the blocks $Q_j$, the velocity blocks $D_j$, the tags $y_j$, and the confinement checks $x_j,x_{j+1}\in Q_j$.
	No measurable-selection theorem is used in this verification step.
\end{remark}

\begin{theorem}[Steiner sample-and-hold metrical certificate]
	\label{thm_steiner_sample_hold_metrical_certificate}
	Let $F:B\rightrightarrows\R^n$ be $B$-total and Hausdorff-continuous on a full block $B$ with modulus $\omega$.
	Assume that every value is closed and convex.
	Let $x_0\in B$ and assume that there is $M>0$ such that
	\[
		\nrm{\operatorname{St}(F(x))}\le M,
		\qquad
		x\in B.
	\]
	Let
	\[
		0=t_0<t_1<\cdots<t_N=T
	\]
	be rational times, put
	\[
		h:=\max_{j=0,\ldots,N-1}(t_{j+1}-t_j),
	\]
	let $r_h\in\Q_{>0}$ satisfy
	\[
		Mh\le r_h,
	\]
	and define
	\[
		y_j:=\operatorname{St}(F(x_j)),
		\qquad
		x_{j+1}:=x_j+(t_{j+1}-t_j)y_j.
	\]
	Assume the confinement condition
	\[
		\{z\in\R^n:\nrm{z-x_0}\le MT\}\subseteq B.
	\]
	Then the associated sample-and-hold path and velocity give, for every $\eps>0$ and every $\rho>0$, an
	\[
		(\eps,\omega(r_h)+\rho)
	\]
	-metrical Filippov certificate.
\end{theorem}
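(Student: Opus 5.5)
The plan is to verify the four clauses of \Cref{dfn_metrical_filippov_certificate} directly for the Euler--Steiner polygon $(x_h,v_h)$ of \Cref{dfn_sample_hold_data}, with $\mathcal E$ taken to be an $\eps$-switching layer as in \Cref{dfn_switching_set_layer} and with the given bound $M$.

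\textbf{Step 1: confinement by induction.} First I will show by induction on $j$ that
\[
	\nrm{x_j-x_0}\le M t_j.
\]
The inductive step uses $\nrm{y_j}=\nrm{\operatorname{St}(F(x_j))}\le M$, which is valid as soon as $x_j\in B$. Since $t_j\le T$, the confinement hypothesis then gives $x_j\in B$, so that $F(x_j)$ and hence $y_j$ are well defined at the next stage. The polygon segments are convex combinations of consecutive nodes, and the closed ball $\{z:\nrm{z-x_0}\le MT\}$ is convex. Therefore $x_h(I)\subseteq B$. The tags $y_j$ are real rather than rational, but this is harmless: \Cref{dfn_metrical_filippov_certificate} only needs an integral trajectory.

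\textbf{Step 2: integral identity and velocity bound.} The identity
\[
	x_h(b)-x_h(a)=\int_a^b v_h(t)\diff t
\]
follows exactly as in the proof of \Cref{thm_sample_hold_filippov_certificate}, by summing piecewise constant contributions over the finite partition. The bound $\nrm{v_h}\le M$ is immediate from Step~1.

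\textbf{Step 3: metrical membership away from switches.} Fix $t\notin\bar{\mathcal E}$. Then $t$ lies in the interior of some $[t_j,t_{j+1})$, or at $T$, which reduces to the last interval. I will use the containment
\[
	y_j=\operatorname{St}(F(x_j))\in F(x_j),
\]
which is \Cref{thm_steiner_point_finite_dimensional}. I will also use
\[
	\nrm{x_h(t)-x_j}\le (t-t_j)M\le Mh\le r_h .
\]
Hausdorff continuity then gives $d_H(F(x_j),F(x_h(t)))\le\omega(r_h)$. By \Cref{dfn_hausdorff_distance}, $d(y_j,F(x_h(t)))\le\omega(r_h)$.

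\textbf{Step 4: the main obstacle, producing a witness.} Here a bound on the infimum must be turned into an actual point, and the slack $\rho>0$ is what makes this possible. Since $F(x_h(t))$ is totally bounded, it is located by \Cref{lem_totally_bounded_located}, so we have
\[
	d(y_j,F(x_h(t)))<\omega(r_h)+\rho .
\]
By the upper-estimate convention in \Cref{dfn_located_set}, this yields a point $w\in F(x_h(t))$ with
\[
	\nrm{w-y_j}<\omega(r_h)+\rho .
\]
Concretely, $w$ can be chosen from a finite $\rho/2$-net of the fiber. Such a $w$ witnesses that
\[
	(\{v_h(t)\}\oplus(\omega(r_h)+\rho))\cap F(x_h(t))
\]
is inhabited; note that Euclidean closeness implies containment in the $\oplus$ box.

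Steps 1--4 together verify all four clauses of \Cref{dfn_metrical_filippov_certificate}, which gives the claimed $(\eps,\omega(r_h)+\rho)$-metrical Filippov certificate.
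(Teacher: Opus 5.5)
Your proposal is correct and follows the paper's proof essentially step by step. Both arguments use confinement from the Steiner bound, an $\eps$-switching layer as the time exception, $y_j\in F(x_j)$ from the Steiner point theorem, the Hausdorff modulus at scale $r_h\ge Mh$, and locatedness with the slack $\rho$ to extract the witness point. Your induction showing $x_j\in B$ before $\operatorname{St}(F(x_j))$ is evaluated, and your remark that the tags $y_j$ need not be rational, make explicit two points the paper leaves implicit.
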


\begin{proof}
	The bound on the Steiner values gives $\nrm{y_j}\le M$.
	Hence the polygonal path satisfies
	\[
		\nrm{x_h(t)-x_0}\le Mt\le MT,
	\]
	and the confinement condition gives $x_h(t)\in B$.
	Choose an $\eps$-switching layer for the finite switching set.
	For $t\in[t_j,t_{j+1})$ outside this layer,
	\[
		v_h(t)=y_j\in F(x_j)
	\]
	by \Cref{crl_steiner_point_lipschitz_selector}, applied with $A=C=F(x_j)$.
	Moreover,
	\[
		\nrm{x_h(t)-x_j}\le M(t-t_j)\le Mh.
	\]
	Hausdorff-continuity gives
	\[
		d(y_j,F(x_h(t)))\le\omega(r_h).
	\]
	Since
	\[
		\omega(r_h)<\omega(r_h)+\rho,
	\]
	locatedness of the value $F(x_h(t))$ supplies a point
	\[
		w_t\in F(x_h(t))
	\]
	with
	\[
		\nrm{w_t-y_j}<\omega(r_h)+\rho.
	\]
	This point witnesses the metrical membership condition.
	The integral-trajectory identity and the velocity bound are the same as in \Cref{thm_sample_hold_filippov_certificate}.
\end{proof}

\begin{remark}
	\label{rem_constructive_filippov_counterpart}
	\Cref{thm_steiner_sample_hold_metrical_certificate} is the closest finite counterpart here to the Euler part of the classical Filippov theorem.
	Classically, one would pass from such polygonal arcs to an exact solution by compactness and a closed-graph or upper-semicontinuity argument.
	In the present constructive setting the finite output is retained as a metrical certificate.
	The exact limit is recovered in \Cref{thm_picard_steiner_certified_approximants} only when Hausdorff-continuity is strengthened to Hausdorff-Lipschitz continuity, so that the Steiner-selected field is a Lipschitz single-valued field.
\end{remark}

\begin{table}[ht]
	\centering
	\begin{tabular}{@{}lll@{}}
		\hline
		\textbf{Classical step} & \textbf{Constructive replacement} & \textbf{Output here}
		\\
		\hline
		Euler polygonal arcs & sample-and-hold data & finite certificate
		\\
		compactness subsequence & avoided at finite stage & no choice principle
		\\
		upper-semicontinuity closure & Hausdorff modulus estimate & metrical membership
		\\
		measurable selector & regular or Steiner selector when available & integrable velocity
		\\
		Lipschitz selected field & Picard--Steiner iteration & exact solution
		\\
		\hline
	\end{tabular}
	\caption{Classical Filippov existence mechanisms and the corresponding constructive objects used in this section.}
	\label{tab_filippov_constructive_comparison}
\end{table}

\subsection{Picard-Type Certified Approximants}
\label{subsec_picard_type_certified_approximants}

\begin{theorem}[Picard--Lindelof theorem for Steiner-selected inclusions]
	\label{thm_picard_steiner_certified_approximants}
	Let $F:B\rightrightarrows\R^n$ be $B$-total and $L$-Hausdorff-Lipschitz on a full block $B\subset\R^n$.
	Assume that every value $F(x)$, $x\in B$, is closed and convex.
	Thus every value is inhabited and totally bounded by \Cref{dfn_hausdorff_lipschitz_multifunction}, and hence located by \Cref{lem_totally_bounded_located}.
	Write $C_{\mathrm{St},n}$ for the Steiner Lipschitz constant in dimension $n$.
	Let $x_0\in B$ and assume that there is $M>0$ such that
	\[
		\nrm{\operatorname{St}(F(x))}\le M,
		\qquad
		x\in B.
	\]
	Assume also the confinement condition
	\[
		\{z\in\R^n:\nrm{z-x_0}\le MT\}\subseteq B.
	\]
	Define paths and velocities inductively by
	\[
		x^{(0)}(t):=x_0,
		\qquad
		v^{(k)}(t):=\operatorname{St}(F(x^{(k)}(t))),
	\]
	and
	\[
		x^{(k+1)}(t)
		:=
		x_0+\int_0^t v^{(k)}(\tau)\diff\tau.
	\]
	Then each $x^{(k)}$ is a path in $B$, each $v^{(k)}$ is integrable on rational subintervals, and, for every $k\in\N$, the tuple
	\[
		(x^{(k+1)},v^{(k)},\emptyset,M)
	\]
	is a $(0,\eta_k)$-metrical Filippov certificate with
	\[
		\eta_k
		:=
		L M\frac{(C_{\mathrm{St},n} L)^kT^{k+1}}{(k+1)!}
		+
		2^{-(k+1)}.
	\]
	Consequently, for every $\eta>0$, there exists $k$ such that $(x^{(k+1)},v^{(k)},\emptyset,M)$ is a $(0,\eta)$-metrical Filippov certificate.
	Moreover, the paths $x^{(k)}$ are uniformly Cauchy.
	By uniform completeness of the path space, their uniform limit $x$ satisfies
	\[
		x(t)=x_0+\int_0^t s(x(\tau))\diff\tau,
		\qquad
		s(y):=\operatorname{St}(F(y)).
	\]
	Thus $(x,s\circ x)$ is an exact Filippov solution.
\end{theorem}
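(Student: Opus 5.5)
The argument runs the classical Picard--Lindel\"of scheme on the single-valued field $s(y):=\operatorname{St}(F(y))$. By \Cref{thm_steiner_hausdorff_lipschitz_selector} (with $C_{\mathrm{St},n}=n$), this field is an exact selector of $F$ on $B$, is $C_{\mathrm{St},n}L$-Lipschitz there, and is bounded by $M$. The first step is an induction showing that every $x^{(k)}$ is a path in $B$. For $k=0$ this is immediate. Assume $x^{(k)}:I\to B$ is continuous. Then $v^{(k)}=s\circ x^{(k)}$ is continuous, as a composition of a Lipschitz map with a continuous path. Hence it is Riemann integrable on rational subintervals by \Cref{rem_riemann_integral_continuous_block}, and $\nrm{v^{(k)}}\le M$. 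It follows that $x^{(k+1)}$ is continuous (indeed $M$-Lipschitz), satisfies $\nrm{x^{(k+1)}(t)-x_0}\le Mt\le MT$, and so lies in $B$ by the confinement hypothesis. By construction, $x^{(k+1)}$ is an integral trajectory of $v^{(k)}$ in the sense of \Cref{dfn_integral_trajectory}.

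Next comes the factorial estimate. Set $\Lambda:=C_{\mathrm{St},n}L$ and $e_k(t):=\nrm{x^{(k+1)}(t)-x^{(k)}(t)}$. First, $e_0(t)\le Mt$. For $k\ge1$ the Lipschitz bound gives
\[
e_k(t)\le\int_0^t\nrm{s(x^{(k)}(\tau))-s(x^{(k-1)}(\tau))}\diff\tau\le\Lambda\int_0^te_{k-1}(\tau)\diff\tau .
\]
Induction then yields $e_k(t)\le M\Lambda^kt^{k+1}/(k+1)!$. Each integral inequality is a comparison of Riemann integrals of continuous functions, covered by \Cref{lem_regular_integral_finite_order}; to keep everything at rational $t$, I would use continuity of the paths. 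Summing the tail $\sum_{j\ge k}M\Lambda^jT^{j+1}/(j+1)!$ shows the sequence is uniformly Cauchy with an explicit modulus.

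For the certificate, fix $t$. Then $v^{(k)}(t)\in F(x^{(k)}(t))$, and the Hausdorff-Lipschitz property gives
\[
d\bigl(v^{(k)}(t),F(x^{(k+1)}(t))\bigr)\le L e_k(t)\le LM\Lambda^kT^{k+1}/(k+1)! .
\]
This bound is strictly less than $\eta_k$ because of the added $2^{-(k+1)}$, and the strict inequality matters. Since $F(x^{(k+1)}(t))$ is totally bounded and hence located by \Cref{lem_totally_bounded_located}, the upper-estimate convention in \Cref{dfn_located_set} supplies a point $w\in F(x^{(k+1)}(t))$ with $\nrm{w-v^{(k)}(t)}<\eta_k$. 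Thus the set $(\{v^{(k)}(t)\}\oplus\eta_k)\cap F(x^{(k+1)}(t))$ is inhabited, so $(x^{(k+1)},v^{(k)},\emptyset,M)$ is a $(0,\eta_k)$-metrical Filippov certificate. Because $\eta_k\to0$ computably, the existence of $k$ with $\eta_k\le\eta$ follows.

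For the exact solution, let $x$ be the uniform limit of the $x^{(k)}$. By closedness of $B$, $x$ is a continuous path in $B$. Uniform convergence passes through the Lipschitz map $s$, and integration against a uniform bound passes to the limit:
\[
\nrm{\int_0^t\bigl(s(x^{(k)})-s(x)\bigr)}\le\Lambda T\,d_\infty(x^{(k)},x)\to0 .
\]
Passing to the limit in the recursion therefore gives $x(t)=x_0+\int_0^ts(x(\tau))\diff\tau$. The velocity $s\circ x$ is continuous and, by the Steiner theorem, an exact selector of $F_x$. Hence $(x,s\circ x)$ is an exact Filippov solution.

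The main obstacle is the completeness step: constructing the limit path as a genuine function with a modulus. I plan to define $x(t)$ pointwise as the limit of the Cauchy reals $x^{(k)}(t)$, with a rate taken from the explicit tail bound, and to inherit the uniform $M$-Lipschitz modulus from the $x^{(k)}$. This avoids any abstract compactness argument.
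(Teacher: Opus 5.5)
Your proposal is correct and follows the paper's proof essentially step for step: the Steiner field with Lipschitz constant $C_{\mathrm{St},n}L$ and bound $M$, the induction via the confinement condition, the factorial estimate on successive differences, the Hausdorff-Lipschitz distance bound made strict by the $2^{-(k+1)}$ term so that locatedness supplies a witness point, and passage to the uniform limit through the Lipschitz field. The differences are only in detail: you make the limit construction and the use of closedness of $B$ explicit, while the paper spells out, via a ratio argument, why $\eta_k\to0$.
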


\begin{proof}
	By \Cref{thm_steiner_hausdorff_lipschitz_selector}, the map
	\[
		s(x):=\operatorname{St}(F(x))
	\]
	is an exact $C_{\mathrm{St},n}L$-Lipschitz selector of $F$ on $B$.
	The assumed bound gives
	\[
		\nrm{s(x)}\le M,
		\qquad
		x\in B.
	\]
	Inductively, if $x^{(k)}$ is a path in $B$, then
	\[
		v^{(k)}=s\circ x^{(k)}
	\]
	is continuous and hence Riemann integrable on rational subintervals.
	Moreover,
	\[
		\nrm{x^{(k+1)}(t)-x_0}
		\le
		\int_0^t\nrm{v^{(k)}(\tau)}\diff\tau
		\le
		Mt
		\le
		MT,
	\]
	so the confinement condition gives $x^{(k+1)}(t)\in B$.
	This proves the path and integrability assertions.

	Put
	\[
		\Delta_k(t):=\nrm{x^{(k+1)}(t)-x^{(k)}(t)}.
	\]
	For $k=0$,
	\[
		\Delta_0(t)
		\le
		Mt.
	\]
	If
	\[
		\Delta_k(t)\le M\frac{(C_{\mathrm{St},n}L)^kt^{k+1}}{(k+1)!},
	\]
	then
	\[
	\begin{aligned}
		\Delta_{k+1}(t)
		&\le
		\int_0^t\nrm{v^{(k+1)}(\tau)-v^{(k)}(\tau)}\diff\tau
		\\
		&\le
		C_{\mathrm{St},n}L\int_0^t\Delta_k(\tau)\diff\tau
		\\
		&\le
		M\frac{(C_{\mathrm{St},n}L)^{k+1}t^{k+2}}{(k+2)!}.
	\end{aligned}
	\]
	Thus, for every $t\in I$,
	\[
		\nrm{x^{(k+1)}(t)-x^{(k)}(t)}
		\le
		M\frac{(C_{\mathrm{St},n}L)^kT^{k+1}}{(k+1)!}.
	\]
	Put
	\[
		\theta_k:=
		LM\frac{(C_{\mathrm{St},n}L)^kT^{k+1}}{(k+1)!}.
	\]
	Since $v^{(k)}(t)\in F(x^{(k)}(t))$ and $F$ is $L$-Hausdorff-Lipschitz,
	\[
		d\bigl(v^{(k)}(t),F(x^{(k+1)}(t))\bigr)
		\le
		L\nrm{x^{(k+1)}(t)-x^{(k)}(t)}
		\le
		\theta_k.
	\]
	Since
	\[
		\theta_k<\eta_k,
	\]
	the upper-estimate part of locatedness supplies a point
	\[
		w_t\in F(x^{(k+1)}(t))
	\]
	with
	\[
		\nrm{w_t-v^{(k)}(t)}<\eta_k.
	\]
	This point $w_t$ witnesses that
	\[
		(\{v^{(k)}(t)\}\oplus\eta_k)\cap F(x^{(k+1)}(t))
	\]
	is inhabited for every $t\in I$.
	The path $x^{(k+1)}$ is an integral trajectory of $v^{(k)}$ by definition:
	\[
		x^{(k+1)}(b)-x^{(k+1)}(a)
		=
		\int_a^b v^{(k)}(\tau)\diff\tau
	\]
	for rational $0\le a\le b\le T$.
	The bound $\nrm{v^{(k)}(t)}\le M$ was shown above.
	Thus all conditions of \Cref{dfn_metrical_filippov_certificate} hold with empty exception multiblock.
	Finally, put
	\[
		q:=C_{\mathrm{St},n}LT.
	\]
	Then
	\[
		\theta_k=LMT\frac{q^k}{(k+1)!}
	\]
	If $q=0$, then $\theta_k=0$ for every $k$.
	If $q>0$, then
	\[
		\frac{\theta_{k+1}}{\theta_k}
		=
		\frac{q}{k+2}.
	\]
	In the second case, for all sufficiently large $k$, this ratio is at most $1/2$, so the tail of $(\theta_k)_k$ is bounded by a geometric tail.
	The added term $2^{-(k+1)}$ tends to zero as well.
	Hence $\eta_k\to0$.
	The estimates in the proof show that $(x^{(k)})_k$ is uniformly Cauchy, because
	\[
		\nrm{x^{(k+1)}(t)-x^{(k)}(t)}
		\le
		M\frac{(C_{\mathrm{St},n}L)^kT^{k+1}}{(k+1)!}
	\]
	uniformly in $t\in I$.
	If $x$ denotes the uniform limit and $s(y):=\operatorname{St}(F(y))$, then the Lipschitz estimate for $s$ gives uniform convergence
	\[
		s(x^{(k)}(\cdot))\to s(x(\cdot)).
	\]
	The limiting velocity is continuous, hence Riemann integrable, and satisfies
	\[
		s(x(t))\in F(x(t)).
	\]
	Uniform convergence of the velocities allows the integral identity
	\[
		x(t)=x_0+\int_0^t s(x(\tau))\diff\tau
	\]
	to be obtained by passing to the limit in the integral identities for $x^{(k+1)}$.
	This proves the exact solution claim.
\end{proof}

\begin{remark}
	\label{rem_picard_steiner_scope}
	Thus, under the stronger Steiner--Hausdorff assumptions of \Cref{thm_picard_steiner_certified_approximants}, there is no integrability obstruction: the limiting velocity is continuous.
	This is a special uniform-convergence route, essentially the constructive ordinary differential equation proof for the Lipschitz single-valued field
	\[
		x\mapsto\operatorname{St}(F(x)).
	\]
	The more general selector-limit statements in \Cref{subsec_inside_selector_limits} are weaker.
	There the limiting object may be merely measurable, and exact fiber membership still requires closed inside data.
	Classical Filippov existence theorems cover broader semicontinuous and measurable settings where no canonical continuous selector is available; those are the situations in which measurable-selection machinery becomes relevant.
\end{remark}

\begin{corollary}[Stability of Picard--Steiner approximants]
	\label{cor_picard_steiner_approximant_stability}
	Under the assumptions of \Cref{thm_picard_steiner_certified_approximants}, let $x$ be the exact solution obtained there.
	Then, for every $k$ and every rational $t\in I$,
	\[
		\nrm{x(t)-x^{(k)}(t)}
		\le
		\sum_{r=k}^{\infty}
		M\frac{(C_{\mathrm{St},n}L)^rT^{r+1}}{(r+1)!}.
	\]
	In particular, every sufficiently high Picard--Steiner certificate remains uniformly close to the exact solution on the whole interval.
\end{corollary}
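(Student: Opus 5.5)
The plan is to run a telescoping argument on the uniformly convergent Picard--Steiner sequence, using the increment bound already established in the proof of \Cref{thm_picard_steiner_certified_approximants}. Put
\[
	a_r:=M\frac{(C_{\mathrm{St},n}L)^rT^{r+1}}{(r+1)!}.
\]
That proof shows
\[
	\nrm{x^{(r+1)}(t)-x^{(r)}(t)}\le a_r
\]
uniformly in $t\in I$. The first step is to check that $\sum_r a_r$ converges with a computable modulus. With $q:=C_{\mathrm{St},n}LT$ we have $a_r=MT\,q^r/(r+1)!$. The ratio argument from the theorem's proof gives, for $r$ beyond an explicit index, $a_{r+1}\le a_r/2$. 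So the tails are dominated by geometric tails, and the infinite sum on the right-hand side is a well-defined constructive real.

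Next, fix $k$ and a rational $t\in I$. For any $p>k$, the finite telescoping identity and the triangle inequality give
\[
	\nrm{x^{(p)}(t)-x^{(k)}(t)}\le\sum_{r=k}^{p-1}a_r\le\sum_{r=k}^{\infty}a_r .
\]
By the theorem, $x^{(p)}\to x$ uniformly. Hence for every $\eta>0$ there is $p$ with $\nrm{x(t)-x^{(p)}(t)}\le\eta$, and so
\[
	\nrm{x(t)-x^{(k)}(t)}\le\sum_{r=k}^\infty a_r+\eta .
\]
Since $\eta$ is arbitrary, this is exactly the constructive meaning of the claimed inequality, in the same way the order estimate is read in \Cref{lem_regular_integral_finite_order}.

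The final clause follows directly. The tail $\sum_{r\ge k}a_r$ tends to zero with an explicit modulus, so for any prescribed tolerance one can compute a $k$ making the bound small. This bound is uniform in $t$, and it extends from rational $t$ to all $t\in I$ by continuity of the paths.

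The only delicate point is making sure the tail sum is a genuine real with a Cauchy modulus, rather than a classical limit. I would handle this by exhibiting the explicit index after which the ratio $q/(r+2)$ drops below $1/2$, namely $r\ge 2q$. The rest is routine.
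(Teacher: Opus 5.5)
Your proposal is correct and is essentially the paper's proof. You telescope the uniform increment bound $\nrm{x^{(r+1)}(t)-x^{(r)}(t)}\le a_r$ from the proof of \Cref{thm_picard_steiner_certified_approximants} and pass to the uniform limit. Your extra care makes explicit what the paper leaves implicit: the explicit tail modulus via $a_{r+1}\le a_r/2$ once $r\ge 2q$, which also sidesteps the paper's case split on $q=0$ versus $q>0$, and the $\eta$-reading of the limiting inequality.
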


\begin{proof}
	The proof of \Cref{thm_picard_steiner_certified_approximants} gives
	\[
		\nrm{x^{(r+1)}(t)-x^{(r)}(t)}
		\le
		M\frac{(C_{\mathrm{St},n}L)^rT^{r+1}}{(r+1)!}
	\]
	uniformly in $t$.
	Summing the telescopic tail and passing to the uniform limit gives the estimate.
\end{proof}

\subsection{Exact Solutions from Exact Selectors}
\label{subsec_exact_solutions_from_exact_selectors}

\begin{remark}[Exact selector gives an exact inclusion trajectory]
	\label{rem_exact_selector_exact_inclusion_trajectory}
	Let $F:B\rightrightarrows\R^n$ be a velocity multifunction and let $x:I\to B$ be a path.
	Assume that a velocity $v:I\to\R^n$ is Riemann integrable on rational subintervals, that $x$ is an integral trajectory of $v$ with initial value $x_0$, and that $v$ is an exact selector of $F_x$ on $I\cap\dom v$.
	Then $(x,v)$ is an exact Filippov solution in the sense of \Cref{dfn_exact_filippov_solution}, because the two displayed assumptions are precisely the two conditions in that definition.
\end{remark}

\begin{corollary}[Block-full exact sample-and-hold case]
	\label{cor_block_full_exact_sample_hold_case}
	Let $F$ be $B$-total and have block-full values on $B$.
	Assume that a sample-and-hold datum satisfies
	\[
		x_h(I)\subseteq B
	\]
	and that all velocity tags belong to one block-full witness $D_B$ with
	\[
		B\times D_B\subseteq\Gamma_B(F).
	\]
	Then the associated sample-and-hold pair is exact away from its finite switching set and gives an $\eps$-full Filippov certificate for every $\eps>0$.
\end{corollary}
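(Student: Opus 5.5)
The plan is to reduce \Cref{cor_block_full_exact_sample_hold_case} to \Cref{thm_sample_hold_filippov_certificate} and \Cref{cor_exact_sample_hold_away_switches}. That means showing the given sample-and-hold datum is \emph{certified for $F$ on $B$} in the sense of \Cref{dfn_certified_sample_hold_data}, with the uniform choice
\[
	Q_j:=B,\qquad D_j:=D_B,\qquad j=0,\ldots,N-1.
\]
Once this is in place, the exactness away from the switching set and the $\eps$-full certificate follow verbatim from those earlier results.

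The three conditions of \Cref{dfn_certified_sample_hold_data} are checked one at a time.

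\begin{enumerate}
	\item The graph inclusion $Q_j\times D_j=B\times D_B\subseteq\Gamma_B(F)$ is exactly the block-full witness in the hypothesis. This is the witness supplied by \Cref{dfn_block_full_multifunction} for the admissible base block $Q=B$, which is admissible since $F$ is $B$-total.
	\item The tag condition $y_j\in D_j=D_B$ is the assumption that all velocity tags lie in $D_B$.
	\item The confinement condition $x_j,x_{j+1}\in Q_j=B$ follows from $x_h(I)\subseteq B$, because $x_j=x_h(t_j)$ for $j=0,\ldots,N$ by \Cref{dfn_sample_hold_data}. Here the endpoint convention $x_h(T)=x_N$ covers the last node.
\end{enumerate}

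Since $B$ is a full block and $D_B$ is full, the blocks $Q_j,D_j$ are admissible full blocks, as the definition requires.

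Next, set $M:=\max_j\nrm{y_j}$. This is a finite maximum of computable reals, so it is available. Apply \Cref{thm_sample_hold_filippov_certificate} to obtain, for each $\eps>0$, an $\eps$-full Filippov certificate $(x_h,v_h,\mathcal E_\eps,M)$. Then \Cref{cor_exact_sample_hold_away_switches} gives $v_h(t)\in F(x_h(t))$ for every $t$ in the open holding intervals $(t_j,t_{j+1})$, that is, away from the finite switching set $\mathcal S_h$.

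There is little real obstacle here. The only point needing care is the confinement check at the nodes, which uses $x_h(t_j)=x_j$ together with the hypothesis $x_h(I)\subseteq B$. One might also note a slightly more direct route: for $t\in[t_j,t_{j+1})$ we have $x_h(t)\in B$ and $v_h(t)=y_j\in D_B$, so $(x_h(t),v_h(t))\in B\times D_B\subseteq\Gamma_B(F)$. This gives exactness on the half-open intervals directly, and the switching layer is needed only to match the certificate format.
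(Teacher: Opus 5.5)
Your proposal is correct and matches the paper's proof, which likewise sets $Q_j:=B$ and $D_j:=D_B$ in the definition of a certified sample-and-hold datum and then invokes the sample-and-hold certificate theorem together with its corollary on exactness away from switches. Your explicit check of node confinement via $x_h(t_j)=x_j$, using the endpoint convention $x_h(T)=x_N$, fills in the one detail the paper leaves implicit.
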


\begin{proof}
	Take $Q_j:=B$ and $D_j:=D_B$ in \Cref{dfn_certified_sample_hold_data}.
	Then \Cref{thm_sample_hold_filippov_certificate,cor_exact_sample_hold_away_switches} apply.
\end{proof}

\begin{remark}
	\label{rem_exact_solution_rarity}
	Exact regular selectors are useful when they are supplied by the value structure, for instance by block-full data with a constant tag or by the midpoint construction for marginalized values.
	For general multifunctions the constructive output expected from this section is the finite certificate of \Cref{dfn_eps_full_filippov_certificate} or the sample-and-hold certificate of \Cref{thm_sample_hold_filippov_certificate}, rather than an unrestricted exact existence theorem.
\end{remark}

\subsection{Viable Solutions}
\label{subsec_viable_solutions}

\begin{definition}[Viable and approximately viable paths]
	\label{dfn_viable_path}
	Let $K\subseteq B$ be located.
	A path $x:I\to B$ is called \emph{viable in $K$} if
	\[
		x(t)\in K,
		\qquad
		t\in I.
	\]
	It is called \emph{$\rho$-viable in $K$} if
	\[
		d(x(t),K)\le\rho,
		\qquad
		t\in I.
	\]
	A Filippov solution or certificate is called viable, respectively $\rho$-viable, if its path has the corresponding property.
\end{definition}

\begin{definition}[Projection witness]
	\label{dfn_projection_witness_constraint}
	Let $K\subseteq B$ be located and closed.
	A \emph{projection witness} for $K$ on $B$ is a map
	\[
		\proj{\cdot}{K}:B\to K
	\]
	such that, for every $z\in B$,
	\[
		d(z,K)=\nrm{z-\proj{z}{K}}
	\]
	and, for every $y\in K$,
	\[
		\scal{z-\proj{z}{K},y-\proj{z}{K}}\le0.
	\]
\end{definition}

\begin{figure}[ht]
	\centering
	\begin{tikzpicture}[>=Stealth, scale=1.0]
	\coordinate (p) at (3.1,1.4);
	\begin{scope}[shift={(0.15,0.15)}]
		\fill[blue!8] (-0.2,-0.08) .. controls (0.0,1.28) and (1.62,2.0) .. (3.0,1.22)
			.. controls (3.92,0.64) and (3.34,-0.56) .. (1.8,-0.78)
			.. controls (0.68,-0.9) and (-0.42,-0.68) .. (-0.2,-0.08);
		\draw[blue!70!black, thick] (-0.2,-0.08) .. controls (0.0,1.28) and (1.62,2.0) .. (3.0,1.22)
			.. controls (3.92,0.64) and (3.34,-0.56) .. (1.8,-0.78)
			.. controls (0.68,-0.9) and (-0.42,-0.68) .. (-0.2,-0.08);
	\end{scope}
	\node[blue!70!black] at (1.65,-0.3) {$K$};

	\coordinate (z) at (4.39,2.40);
	\coordinate (y) at (1.02,0.8);

	\draw[orange!80!black, thick] ($(p)+(-0.78,0.96)$) -- ($(p)+(0.98,-1.2)$);
	\draw[gray!60, dashed] ($(p)+(-1.02,-0.82)$) -- ($(p)+(0.01,0.01)$);
	\fill[black] (p) circle (1.5pt);
	\fill[black] (z) circle (1.5pt) node[above right] {$z$};
	\fill[black] (y) circle (1.3pt) node[above left] {$y$};

	\draw[->, red!75!black, thick] (p) -- (z) node[midway, above left, xshift=-0.08cm, yshift=0.25cm] {$z-\proj{z}{K}$};
	\draw[->, blue!70!black, thick] (p) -- (y) node[midway, below left, xshift=0.9cm, yshift=-0.2cm] {$y-\proj{z}{K}$};
	\node[black, anchor=west] at ($(p)+(0.17,0.05)$) {$\proj{z}{K}$};

	\draw[gray!80] ($(p)+(-0.10,0.12)$) -- ($(p)+(0.03,0.23)$) -- ($(p)+(0.14,0.10)$);
	\node[orange!80!black, anchor=west] at (3.8,0.5) {$\scal{z-\proj{z}{K},\,y-\proj{z}{K}}\le0$};
\end{tikzpicture}
	\caption{Projection witness for a convex constraint. The residual $z-\proj{z}{K}$ is outward normal at $\proj{z}{K}$, hence its scalar product with every feasible displacement $y-\proj{z}{K}$ is non-positive.}
	\label{fig_projection_witness}
\end{figure}

\begin{remark}
	\label{rem_projection_witness_convex_constraints}
	For a closed convex located set in finite-dimensional Euclidean space, \Cref{dfn_projection_witness_constraint} is the usual metric projection and normal-cone inequality.
	We state it as data because the projection is the constructive content used in the proof below.
	The point $\proj{z}{K}$ is an exact point of $K$ realizing the exact distance from $z$ to $K$.
	Approximate projections can be useful for numerical certificates, but the exact viability theorem below uses the exact witness.
\end{remark}

\begin{example}[Block constraint]
	\label{ex_projection_witness_block_constraint}
	Let
	\[
		K=\prod_{r=1}^n[a_r,b_r]
	\]
	be a full block.
	Define $\proj{z}{K}$ coordinatewise by clipping:
	\[
		(\proj{z}{K})_r:=\min\{b_r,\max\{a_r,z_r\}\}.
	\]
	Then
	\[
		d(z,K)=\nrm{z-\proj{z}{K}}
	\]
	and, for every $y\in K$,
	\[
		(z_r-(\proj{z}{K})_r)(y_r-(\proj{z}{K})_r)\le0
	\]
	in each coordinate.
	Summing over $r$ gives the projection-witness inequality.
\end{example}

\begin{example}[Ball constraint]
	\label{ex_projection_witness_ball_constraint}
	Let
	\[
		K:=\{y\in\R^n:\nrm{y-c}\le R\}
	\]
	with $R>0$.
	If $\nrm{z-c}\le R$, put $\proj{z}{K}:=z$.
	If $\nrm{z-c}>R$, put
	\[
		\proj{z}{K}:=c+R\frac{z-c}{\nrm{z-c}}.
	\]
	Then $d(z,K)=\nrm{z-\proj{z}{K}}$.
	For $y\in K$, the Cauchy--Schwarz inequality gives
	\[
		\scal{z-\proj{z}{K},y-\proj{z}{K}}
		=
		\frac{\nrm{z-c}-R}{\nrm{z-c}}\scal{z-c,y-\proj{z}{K}}
		\le0.
	\]
\end{example}

\begin{example}[Rational convex polytope]
	\label{ex_projection_witness_rational_polytope_constraint}
	Let
	\[
		K:=\{y\in\R^n:Ay\le b\}
	\]
	be a bounded rational convex polytope.
	A projection witness is obtained by solving the finite convex quadratic problem
	\[
		\text{minimize }\nrm{z-y}^2
		\quad\text{subject to }Ay\le b.
	\]
	The minimizer $p=\proj{z}{K}$ is unique because the objective is strictly convex.
	The finite Karush--Kuhn--Tucker conditions give multipliers $\lambda_i\ge0$ such that
	\[
		z-p=\sum_i\lambda_i a_i,
		\qquad
		\lambda_i(a_i^\top p-b_i)=0,
	\]
	where $a_i^\top$ are the rows of $A$.
	For $y\in K$,
	\[
		\scal{z-p,y-p}
		=
		\sum_i\lambda_i a_i^\top(y-p)
		\le0,
	\]
	which is the projection-witness inequality.
\end{example}

\begin{definition}[Projection-inward field]
	\label{dfn_projection_inward_field}
	Let $K\subseteq B$ have a projection witness, and let $s:B\to\R^n$.
	For $\delta\ge0$, the field $s$ is called \emph{$\delta$-inward for $K$} if, for every $z\in B$,
	\[
		\scal{z-\proj{z}{K},s(\proj{z}{K})}
		\le
		\delta\nrm{z-\proj{z}{K}}.
	\]
	The case $\delta=0$ is called \emph{inward}.
\end{definition}

\begin{remark}
	\label{rem_viability_upper_derivative_recall}
	We use the upper right derivative from \Cref{dfn_upper_right_derivative}.
	For a scalar function $\phi$ on an interval,
	\[
		\Dplus\phi(t)
		=
		\limsup_{h\downarrow0}
		\frac{\phi(t+h)-\phi(t)}{h}.
	\]
	In the theorem below, it is applied to the squared distance from a trajectory to the constraint.
\end{remark}

\begin{theorem}[Steiner viability for projective constraints]
	\label{thm_steiner_viability_projective_constraint}
	Let the assumptions of \Cref{thm_picard_steiner_certified_approximants} hold, and let
	\[
		s(x):=\operatorname{St}(F(x)).
	\]
	Let $K\subseteq B$ be located and closed with a projection witness, and let $x_0\in K$.
	Assume that $s$ is $\delta$-inward for $K$.
	Let $x$ be the exact Picard--Steiner solution from \Cref{thm_picard_steiner_certified_approximants}.
	Put
	\[
		L_s:=C_{\mathrm{St},n}L.
	\]
	Then, for every $t\in I$,
	\[
		d(x(t),K)
		\le
		\begin{cases}
			\delta t, & L_s=0,\\[0.4ex]
			\displaystyle \frac{\delta}{L_s}\bigl(\exp(L_st)-1\bigr), & L_s>0.
		\end{cases}
	\]
	In particular, if $s$ is inward, then $(x,s\circ x)$ is an exact Filippov solution viable in $K$.
\end{theorem}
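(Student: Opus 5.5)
We will work with the squared distance $\phi(t):=d(x(t),K)^2$ along the exact Picard--Steiner solution and derive a one-sided differential inequality for it. Then we pass to the distance itself through a comparison argument. Throughout, write $p(t):=\proj{x(t)}{K}$ and $r(t):=x(t)-p(t)$, so that $d(x(t),K)=\nrm{r(t)}$. By \Cref{thm_picard_steiner_certified_approximants}, $x(t)=x_0+\int_0^t s(x(\tau))\diff\tau$, the velocity $s\circ x$ is continuous, and $s$ is $L_s$-Lipschitz with $\nrm{s}\le M$.

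\textbf{Step 1 (upper difference quotient).} For $h>0$ the point $p(t)$ lies in $K$, so
\[
	\phi(t+h)\le\nrm{x(t+h)-p(t)}^2=\nrm{r(t)+\textstyle\int_t^{t+h}s(x(\tau))\diff\tau}^2 .
\]
Expanding the square and using continuity of $s\circ x$ gives
\[
	\frac{\phi(t+h)-\phi(t)}{h}\le 2\scal{r(t),s(x(t))}+\bigo{h}.
\]
This is the required estimate for $\Dplus\phi(t)$ in the quantified sense of \Cref{dfn_upper_right_derivative}. No differentiability of the distance is needed.

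\textbf{Step 2 (inwardness and Lipschitz splitting).} Split the inner product as
\[
	\scal{r,s(x)}=\scal{r,s(p)}+\scal{r,s(x)-s(p)}.
\]
The first term is at most $\delta\nrm{r}$ by \Cref{dfn_projection_inward_field}. The second is at most $L_s\nrm{r}\nrm{x-p}=L_s\nrm{r}^2$. Hence
\[
	\Dplus\phi(t)\le 2\delta\sqrt{\phi(t)}+2L_s\phi(t),
\]
and formally $\psi:=\sqrt\phi=d(x(\cdot),K)$ satisfies $\Dplus\psi\le\delta+L_s\psi$, with $\psi(0)=0$ because $x_0\in K$.

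\textbf{Step 3 (comparison), the main obstacle.} To avoid dividing by $\sqrt\phi$ where it may vanish, and to avoid any case split on $\phi(t)=0$, compare $\phi$ with $\Phi_\eta(t):=(u(t)+\eta)^2$ for each $\eta>0$. Here $u$ is the claimed bound, namely $u=\delta t$ if $L_s=0$ and $u=\tfrac{\delta}{L_s}(e^{L_st}-1)$ if $L_s>0$; it solves $\dot u=\delta+L_su$ with $u(0)=0$. Then
\[
	\dot\Phi_\eta=2(u+\eta)(\delta+L_su)\ge 2\delta(u+\eta)+2L_s(u+\eta)^2-2L_s\eta(u+\eta).
\]
To absorb the last term, use instead $u_\eta$, the solution of $\dot u_\eta=\delta+\eta+L_su_\eta$ with $u_\eta(0)=\eta$. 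This gives a strict inequality $\dot\Phi_\eta>2\delta\sqrt{\Phi_\eta}+2L_s\Phi_\eta$.

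The constructive comparison argument then runs as follows. Assume $\phi(t^\ast)>\Phi_\eta(t^\ast)$ at some rational $t^\ast$. Using uniform continuity, locate a last rational time $t_1<t^\ast$ with $\phi\le\Phi_\eta+\eta'$ up to that time. At $t_1$ the strict differential gap, made uniform on $[0,T]$ by continuity, contradicts the crossing. This yields $\phi\le\Phi_\eta$ on $I$ as a negated strict inequality, which constructively is exactly $\le$. Letting $\eta\to0$ gives $\psi\le u$.

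An alternative route that I would try first, if it is cleaner, is an integral version. Summing the Step 1 quotient estimate over a fine rational partition shows
\[
	\phi(t)\le\int_0^t\bigl(2\delta\sqrt{\phi}+2L_s\phi\bigr)\diff\tau+o(1),
\]
and the standard Bihari/Gronwall estimate for $\sqrt\phi$ then applies. This is the finite Gronwall argument alluded to in \Cref{rem_upper_right_derivative_viability}. Finally, with $\delta=0$ the bound is $d(x(t),K)\le0$. Since $K$ is closed and located, it follows that $x(t)\in K$, so the solution is viable.
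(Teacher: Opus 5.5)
Your proposal is correct, and Steps~1--2 coincide with the paper's proof. Both bound $d(x(t+h),K)^2$ by $\nrm{x(t+h)-\proj{x(t)}{K}}^2$, expand, and split $\scal{x-p,s(x)}$ into the $\delta$-inward term and the Lipschitz term, arriving at $\Dplus\tfrac12u^2\le\delta u+L_su^2$ for $u=d(x(\cdot),K)$.

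The routes diverge at the square-root degeneracy at $u=0$. The paper regularizes the distance itself, $u_\rho=\sqrt{u^2+\rho^2}\ge\rho$. This turns the quadratic inequality into the linear Dini inequality $\Dplus u_\rho\le\delta+L_su_\rho$ with $u_\rho(0)=\rho$. The paper then cites an ``elementary Gronwall comparison'' and lets $\rho\downarrow0$. You keep $\phi=u^2$ and move the regularization into the comparison function: $u_\eta$ with $\dot u_\eta=\delta+\eta+L_su_\eta$ and $u_\eta(0)=\eta$, whose square is a strict supersolution with uniform gap $2\eta u_\eta\ge2\eta^2$.

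The paper's version is shorter and ends in the textbook linear inequality, but it leaves the comparison step as a citation. Constructively, that step needs the one-step estimates to hold with an $h$-threshold independent of $t$, which the pointwise reading of \Cref{dfn_upper_right_derivative} does not give. Your version exposes exactly what is needed and lets you prove the comparison outright. Your Step~1 error is uniform: using $\nrm{x(t)-p(t)}\le\diam(B)$ and $\nrm{s(x(\tau))-s(x(t))}\le L_sM(\tau-t)$, you get
\[
\phi(t+h)-\phi(t)\le h\bigl(2\delta\sqrt{\phi(t)}+2L_s\phi(t)\bigr)+Ch^2,
\qquad
C:=M^2+L_sM\diam(B).
\]
Since $2u_\eta\dot u_\eta$ is nondecreasing, you also have $u_\eta(t+h)^2-u_\eta(t)^2\ge h\bigl(2\delta u_\eta(t)+2L_su_\eta(t)^2+2\eta^2\bigr)$.

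Hence, on any rational grid with step $h\le2\eta^2/C$, a plain forward induction from $\phi(0)=0<\eta^2$ gives $\phi(t_j)\le u_\eta(t_j)^2$ at every node. No ``last rational time'' needs to be located, so drop that contradiction argument. It is only approximate constructively, it works only once this uniform bound is in hand, and at that point the induction is simpler. The grid form also uses the integral identity only between rational times, which is all that \Cref{dfn_integral_trajectory} provides. Real $t$ and the viability conclusion for $\delta=0$ then follow by continuity and closedness, as in the paper.

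Your integral/Bihari alternative is equally sound. The substitution $v=w^2$ turns $\int dv/(2\delta\sqrt v+2L_sv)$ into $\int dw/(\delta+L_sw)$, which reproduces the claimed bound after an $\eta$-shift.
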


\begin{proof}
	Let
	\[
		u(t):=d(x(t),K).
	\]
	Fix $t\in I$ and write
	\[
		p:=\proj{x(t)}{K}.
	\]
	For small $h>0$ with $t+h\in I$, the integral identity gives
	\[
		x(t+h)
		=
		x(t)+h s(x(t))+r_h,
	\]
	where
	\[
		\nrm{r_h}
		\le
		\int_t^{t+h}\nrm{s(x(\tau))-s(x(t))}\diff\tau
		\le
		\frac12 L_sMh^2.
	\]
	Since $p\in K$,
	\[
		u(t+h)^2
		\le
		\nrm{x(t)+hs(x(t))+r_h-p}^2.
	\]
	Because $p$ is an exact projection witness at $x(t)$,
	\[
		u(t)^2=\nrm{x(t)-p}^2.
	\]
	Put $a:=x(t)-p$.
	Subtracting the last display gives
	\[
	\begin{aligned}
		u(t+h)^2-u(t)^2
		&\le
		\nrm{a+hs(x(t))+r_h}^2-\nrm{a}^2
		\\
		&=
		2h\scal{a,s(x(t))}
		+
		2\scal{a,r_h}
		+
		h^2\nrm{s(x(t))}^2
		\\
		&\quad+
		2h\scal{s(x(t)),r_h}
		+
		\nrm{r_h}^2.
	\end{aligned}
	\]
	Here $\nrm{r_h}\le L_sMh^2/2$, the vector $a$ is bounded because $x(t),p\in B$, and $\nrm{s(x(t))}\le M$.
	After division by $2h$, all terms except $\scal{a,s(x(t))}$ tend to zero as $h\downarrow0$.
	Thus
	\[
		\Dplus\frac{u(t)^2}{2}
		\le
		\scal{x(t)-p,s(x(t))}.
	\]
	Now
	\[
	\begin{aligned}
		\scal{x(t)-p,s(x(t))}
		&=
		\scal{x(t)-p,s(p)}
		+
		\scal{x(t)-p,s(x(t))-s(p)}
		\\
		&\le
		\delta u(t)+L_su(t)^2.
	\end{aligned}
	\]
	Hence
	\[
		\Dplus\frac{u(t)^2}{2}
		\le
		\delta u(t)+L_su(t)^2.
	\]
	Fix $\rho>0$ and define
	\[
		u_\rho(t):=\sqrt{u(t)^2+\rho^2}.
	\]
	Since $u_\rho(t)\ge\rho$, the identity
	\[
		u_\rho(t+h)-u_\rho(t)
		=
		\frac{u(t+h)^2-u(t)^2}{u_\rho(t+h)+u_\rho(t)}
	\]
	and the continuity of $u_\rho$ give
	\[
		\Dplus u_\rho(t)
		\le
		\frac{\delta u(t)+L_su(t)^2}{u_\rho(t)}
		\le
		\delta+L_su_\rho(t).
	\]
	Since $x_0\in K$, one has $u(0)=0$.
	Thus
	\[
		u_\rho(0)=\rho.
	\]
	The elementary Gronwall comparison gives
	\[
		u_\rho(t)
		\le
		\begin{cases}
			\rho+\delta t, & L_s=0,\\[0.4ex]
			\displaystyle \rho\exp(L_st)+\frac{\delta}{L_s}\bigl(\exp(L_st)-1\bigr), & L_s>0.
		\end{cases}
	\]
	Because $u(t)\le u_\rho(t)$ for every $\rho>0$, letting $\rho\downarrow0$ yields the displayed bound.
	If $\delta=0$, then $u(t)=0$ for all rational $t\in I$.
	By continuity of $u$, this holds for all $t\in I$.
	Closedness of $K$ gives $x(t)\in K$ for all $t\in I$.
\end{proof}

\begin{proposition}[Viability from approximate projections]
	\label{prop_viability_approximate_projection_witnesses}
	Let $K\subseteq B$ be located and closed.
	Let $s:B\to\R^n$ be $L_s$-Lipschitz and bounded by $M$ on $B$.
	For every $z\in B$ and every $\eps>0$, let
	\[
		\aproj{z}{K}{\eps}\in K
	\]
	be a point supplied by locatedness such that
	\[
		\nrm{z-\aproj{z}{K}{\eps}}\le d(z,K)+\eps.
	\]
	Assume that these approximate projections may be chosen so that
	\[
		\scal{z-\aproj{z}{K}{\eps},s(\aproj{z}{K}{\eps})}
		\le
		\delta\nrm{z-\aproj{z}{K}{\eps}}+\eps
	\]
	for every $z\in B$ and every $\eps>0$.
	Let $x:I\to B$ be an integral trajectory of $s\circ x$ with $x(0)\in K$.
	Then, for every $t\in I$,
	\[
		d(x(t),K)
		\le
		\begin{cases}
			\delta t, & L_s=0,\\[0.4ex]
			\displaystyle \frac{\delta}{L_s}\bigl(\exp(L_st)-1\bigr), & L_s>0.
		\end{cases}
	\]
	In particular, if $\delta=0$, then $x$ is viable in $K$.
\end{proposition}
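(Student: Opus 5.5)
The plan is to mimic the proof of \Cref{thm_steiner_viability_projective_constraint}, but with the exact projection replaced by the approximate projection $p_\eps:=\aproj{x(t)}{K}{\eps}$. The error $\eps$ is then sent to zero at the level of the one-sided derivative estimate. The key object is again
\[
	u(t):=d(x(t),K),
\]
and the target is the inequality $\Dplus\tfrac{u^2}{2}\le\delta u+L_su^2$ at every $t$. Once this holds, the regularization $u_\rho=\sqrt{u^2+\rho^2}$, the Gronwall comparison, the limit $\rho\downarrow0$, and the closedness argument for $\delta=0$ are copied verbatim from that proof.

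First, fix $t$ and $\eps>0$, and write $a_\eps:=x(t)-p_\eps$, so that $u(t)\le\nrm{a_\eps}\le u(t)+\eps$. As before, the integral identity gives $x(t+h)=x(t)+hs(x(t))+r_h$ with $\nrm{r_h}\le\tfrac12L_sMh^2$. Since $p_\eps\in K$, we have $u(t+h)\le\nrm{a_\eps+hs(x(t))+r_h}$.

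Second, expanding the square gives
\[
	u(t+h)^2-u(t)^2
	\le
	\nrm{a_\eps}^2-u(t)^2+2h\scal{a_\eps,s(x(t))}+O(h^2),
\]
where the $O(h^2)$ term is uniform in $\eps$, because $a_\eps$ is bounded by $\diam B$. The term $\nrm{a_\eps}^2-u(t)^2$ is at most $\eps(2u(t)+\eps)$. This does not vanish after division by $h$ unless $\eps$ is tied to $h$. I would therefore choose $\eps=\eps(h):=h^2$, for example. Then that term is $O(h^2)$ and disappears in the difference quotient.

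Third, the inner product is handled as in the exact case. We have $\scal{a_\eps,s(x(t))}\le\scal{a_\eps,s(p_\eps)}+L_s\nrm{a_\eps}^2$, and the hypothesis bounds $\scal{a_\eps,s(p_\eps)}$ by $\delta\nrm{a_\eps}+\eps$. Substituting $\nrm{a_\eps}\le u(t)+\eps$ gives, after dividing by $2h$,
\[
	\frac{u(t+h)^2-u(t)^2}{2h}
	\le
	\delta u(t)+L_su(t)^2+O(h)
\]
with $\eps=h^2$. Hence $\Dplus\tfrac{u^2}{2}\le\delta u+L_su^2$ in the quantified sense of \Cref{dfn_upper_right_derivative}.

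The main obstacle is this coupling of $\eps$ to $h$. We must check that every $\eps$-dependent term (the distance slack, the $+\eps$ in the inward hypothesis, and $L_s$ times the slack) is $o(h)$ once $\eps=h^2$. We must also check that the choice of $p_\eps$ is an operation supplied by the hypothesis, so that no choice principle enters. The remaining steps (the $u_\rho$ trick, Gronwall, and viability via closedness of $K$ and continuity of $u$) then go through unchanged.
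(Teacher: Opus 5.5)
Your proposal is correct and follows essentially the same route as the paper's proof. The paper also couples the approximation accuracy to the step by taking $\hat p_h:=\aproj{x(t)}{K}{h^2}$, bounds the distance slack by $2h^2u(t)+h^4$, splits $\scal{x(t)-\hat p_h,s(x(t))}$ through $s(\hat p_h)$ using the approximate inward hypothesis and the Lipschitz bound to get $\Dplus\tfrac{u^2}{2}\le\delta u+L_su^2$, and then reuses the $u_\rho$ regularization, the Gronwall comparison, the limit $\rho\downarrow0$, and closedness of $K$ exactly as in the exact-projection theorem.
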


\begin{proof}
	Let
	\[
		u(t):=d(x(t),K).
	\]
	Fix $t\in I$ and let $h>0$ be small enough that $t+h\in I$.
	Choose
	\[
		\hat p_h:=\aproj{x(t)}{K}{h^2}.
	\]
	As in the proof of \Cref{thm_steiner_viability_projective_constraint},
	\[
		x(t+h)=x(t)+hs(x(t))+r_h,
		\qquad
		\nrm{r_h}\le\frac12L_sMh^2.
	\]
	Since $\hat p_h\in K$,
	\[
		u(t+h)^2\le\nrm{x(t)+hs(x(t))+r_h-\hat p_h}^2.
	\]
	By the approximate projection estimate,
	\[
		\nrm{x(t)-\hat p_h}^2-u(t)^2
		\le
		(u(t)+h^2)^2-u(t)^2
		=
		2h^2u(t)+h^4.
	\]
	Expanding the remaining square terms gives
	\[
		\Dplus\frac{u(t)^2}{2}
		\le
		\limsup_{h\downarrow0}
		\scal{x(t)-\hat p_h,s(x(t))}.
	\]
	Furthermore,
	\[
	\begin{aligned}
		\scal{x(t)-\hat p_h,s(x(t))}
		&=
		\scal{x(t)-\hat p_h,s(\hat p_h)}
		+
		\scal{x(t)-\hat p_h,s(x(t))-s(\hat p_h)}
		\\
		&\le
		\delta\nrm{x(t)-\hat p_h}
		+
		h^2
		+
		L_s\nrm{x(t)-\hat p_h}^2.
	\end{aligned}
	\]
	Since $\nrm{x(t)-\hat p_h}\le u(t)+h^2$, letting $h\downarrow0$ yields
	\[
		\Dplus\frac{u(t)^2}{2}
		\le
		\delta u(t)+L_su(t)^2.
	\]
	Fix $\rho>0$ and put
	\[
		u_\rho(t):=\sqrt{u(t)^2+\rho^2}.
	\]
	As in the proof of \Cref{thm_steiner_viability_projective_constraint},
	\[
		\Dplus u_\rho(t)\le\delta+L_su_\rho(t).
	\]
	Since $u_\rho(0)=\rho$, Gronwall gives
	\[
		u_\rho(t)
		\le
		\begin{cases}
			\rho+\delta t, & L_s=0,\\[0.4ex]
			\displaystyle \rho\exp(L_st)+\frac{\delta}{L_s}\bigl(\exp(L_st)-1\bigr), & L_s>0.
		\end{cases}
	\]
	Letting $\rho\downarrow0$ gives the displayed bound.
	If $\delta=0$, then $d(x(t),K)=0$ for all $t\in I$, and closedness of $K$ gives $x(t)\in K$.
\end{proof}

\begin{corollary}[Viability of Picard--Steiner approximants]
	\label{cor_picard_steiner_approximant_viability}
	Under the assumptions of \Cref{thm_steiner_viability_projective_constraint}, put
	\[
		L_s:=C_{\mathrm{St},n}L.
	\]
	Then the Picard--Steiner approximants satisfy, for every $k$ and every $t\in I$,
	\[
		d(x^{(k)}(t),K)
		\le
		R_\delta(t)
		+
		\sum_{r=k}^{\infty}
		M\frac{(C_{\mathrm{St},n}L)^rT^{r+1}}{(r+1)!},
	\]
	where
	\[
		R_\delta(t)
		:=
		\begin{cases}
			\delta t, & L_s=0,\\[0.4ex]
			\displaystyle \frac{\delta}{L_s}\bigl(\exp(L_st)-1\bigr), & L_s>0.
		\end{cases}
	\]
\end{corollary}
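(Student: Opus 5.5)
The plan is to combine two estimates by the triangle inequality for the distance to a located set. The distance function $z\mapsto d(z,K)$ is $1$-Lipschitz, since for every $y\in K$ we have $\nrm{z-y}\le\nrm{z-z'}+\nrm{z'-y}$, and taking infima gives $d(z,K)\le\nrm{z-z'}+d(z',K)$. Applied with $z=x^{(k)}(t)$ and $z'=x(t)$, where $x$ is the exact Picard--Steiner solution, this gives
\[
	d(x^{(k)}(t),K)\le d(x(t),K)+\nrm{x^{(k)}(t)-x(t)}.
\]

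\textbf{First term.} This is controlled by \Cref{thm_steiner_viability_projective_constraint}. Its hypotheses are exactly those assumed here, and it gives $d(x(t),K)\le R_\delta(t)$ for every $t\in I$, covering both cases $L_s=0$ and $L_s>0$.

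\textbf{Second term.} This is the tail estimate of \Cref{cor_picard_steiner_approximant_stability}. It bounds $\nrm{x(t)-x^{(k)}(t)}$ by $\sum_{r\ge k}M(C_{\mathrm{St},n}L)^rT^{r+1}/(r+1)!$. The series converges because its terms are eventually dominated by a geometric tail, as shown in the proof of \Cref{thm_picard_steiner_certified_approximants}, so the right-hand side is a genuine real number.

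\textbf{Extension to all $t\in I$.} The stability corollary is stated for rational $t$, whereas the present claim is for every $t\in I$. I expect this to be the only delicate point. Both sides are continuous in $t$: the left side because $x^{(k)}$ is a path and $d(\cdot,K)$ is $1$-Lipschitz, the bound $R_\delta$ by its explicit formula, and the tail is independent of $t$. Alternatively, the telescoping estimate in the proof of \Cref{thm_picard_steiner_certified_approximants} already holds uniformly in $t\in I$, so passing to the uniform limit gives the bound directly for all $t$. In either case, approximating $t$ by rationals and using the constructive reading of $\le$ (an inequality holding up to every $\eta>0$) extends the inequality to all $t\in I$.
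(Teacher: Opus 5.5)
Your proposal is correct and follows the paper's proof: the $1$-Lipschitz property of $d(\cdot,K)$, the bound $d(x(t),K)\le R_\delta(t)$ from \Cref{thm_steiner_viability_projective_constraint}, the tail estimate of \Cref{cor_picard_steiner_approximant_stability}, and then extension from rational $t$ to all $t\in I$ by continuity and density. Your alternative remark is also valid: the telescoping bound in \Cref{thm_picard_steiner_certified_approximants} holds uniformly in $t\in I$, so the density step can be skipped entirely.
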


\begin{proof}
	First let $t$ be rational.
	By \Cref{thm_steiner_viability_projective_constraint},
	\[
		d(x(t),K)\le R_\delta(t).
	\]
	By \Cref{cor_picard_steiner_approximant_stability},
	\[
		\nrm{x(t)-x^{(k)}(t)}
		\le
		\sum_{r=k}^{\infty}
		M\frac{(C_{\mathrm{St},n}L)^rT^{r+1}}{(r+1)!}.
	\]
	The distance function is $1$-Lipschitz, so the claim follows.
	The estimate for arbitrary $t\in I$ follows by continuity of both sides and density of rational times.
\end{proof}

\begin{proposition}[Viable sample-and-hold certificate]
	\label{prop_viable_sample_hold_certificate}
	Let $K\subseteq B$ be convex.
	Let a sample-and-hold datum be certified for $F$ on $B$ in the sense of \Cref{dfn_certified_sample_hold_data}.
	Assume that every node belongs to $K$:
	\[
		x_j\in K,
		\qquad
		j=0,\ldots,N.
	\]
	Then the associated sample-and-hold path satisfies
	\[
		x_h(t)\in K,
		\qquad
		t\in I.
	\]
	Consequently, for every $\eps>0$, the tuple from \Cref{thm_sample_hold_filippov_certificate} is a viable $\eps$-full Filippov certificate.
\end{proposition}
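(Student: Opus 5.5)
The argument is a direct convexity check on each holding interval, followed by an appeal to \Cref{thm_sample_hold_filippov_certificate} for the certificate part. By \Cref{dfn_sample_hold_data}, on each interval $[t_j,t_{j+1}]$ the polygonal path is
\[
	x_h(t)=x_j+(t-t_j)y_j .
\]
Using the recursion $x_{j+1}=x_j+(t_{j+1}-t_j)y_j$, this can be rewritten as the convex combination
\[
	x_h(t)=(1-\lambda)x_j+\lambda x_{j+1},
	\qquad
	\lambda:=\frac{t-t_j}{t_{j+1}-t_j}\in[0,1].
\]
This identity is purely algebraic and needs no decision about the position of $t$ beyond its interval. The endpoint convention $x_h(T)=x_N$ is consistent with the last interval.

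The first step handles times $t\in[t_j,t_{j+1}]$ for a fixed $j$. Since $x_j,x_{j+1}\in K$ and $K$ is convex, the displayed combination lies in $K$. For rational $t$ the interval containing $t$ is decidable: the breakpoints are rational, so one can compare $t$ with $t_1,\ldots,t_{N-1}$. This gives $x_h(t)\in K$ for every rational $t\in I$.

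The main obstacle is a real $t$, because locating it in a single holding interval is not decidable. I would avoid localization altogether. For every real $t\in I$ and each $j$, set
\[
	\lambda_j(t):=\min\{1,\max\{0,(t-t_j)/(t_{j+1}-t_j)\}\}.
\]
Then
\[
	x_h(t)=x_0+\sum_j (x_{j+1}-x_j)\lambda_j(t),
\]
which is a continuous formula valid for all $t$. The required membership can be established either by an approximation argument or by a direct one. For the approximation argument, a rational $\tilde t$ close to $t$ gives $x_h(\tilde t)\in K$. Since $x_h$ is Lipschitz with constant $\max_j\nrm{y_j}$, this yields $d(x_h(t),K)=0$. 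That step is valid if $K$ is located and closed, as in \Cref{dfn_viable_path}, and I would read the hypothesis that way. Alternatively, the telescoping formula can be written directly as a convex combination of consecutive nodes, because at most two of the coefficients $\lambda_j(t)$ are strictly between $0$ and $1$. To stay safe constructively, I would use the continuity/closedness route.

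Finally, \Cref{thm_sample_hold_filippov_certificate} already provides the $\eps$-full Filippov certificate $(x_h,v_h,\mathcal E_\eps,M)$. The path part of that certificate is $x_h$, so \Cref{dfn_viable_path} immediately makes it a viable certificate.
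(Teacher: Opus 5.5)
Your per-interval step and the final appeal to \Cref{thm_sample_hold_filippov_certificate} match the paper's proof exactly. The paper writes $x_h(t)=(1-\lambda)x_j+\lambda x_{j+1}$ for $t\in[t_j,t_{j+1}]$, uses convexity, and does not discuss how a real $t$ is placed in a single holding interval. Your concern about that placement is legitimate, but the resolution you commit to has a gap. Approximating by rational times only yields $d(x_h(t),K)=0$, and concluding $x_h(t)\in K$ from this needs $K$ closed. The proposition assumes only convexity, and \Cref{dfn_viable_path} asks for locatedness, not closedness. So ``reading the hypothesis that way'' proves a strictly weaker statement than the one posed.

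The route you set aside is the one that works from convexity alone. With your clamped coefficients and the conventions $\lambda_{-1}:=1$, $\lambda_N:=0$, summation by parts gives
\[
	x_h(t)=\sum_{j=0}^{N}\mu_j x_j,
	\qquad
	\mu_j:=\lambda_{j-1}(t)-\lambda_j(t).
\]
Here $\sum_j\mu_j=1$ and $\mu_j\ge0$. The latter is a non-strict inequality, proved by refuting $\lambda_{j-1}(t)<\lambda_j(t)$: that would force $t>t_j$ and hence $\lambda_{j-1}(t)=1$. No decision about which weights vanish is needed.

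Binary convexity then gives closure under such finite combinations constructively. Take a convex combination of $k\ge3$ points of $K$, and find from rational approximations an index $j^\ast$ with weight $\mu_{j^\ast}>1/(2k)$. Put $c:=\mu_{j^\ast}/(k-1)$ and $a_j:=(c\,x_{j^\ast}+\mu_jx_j)/(c+\mu_j)$ for $j\ne j^\ast$. Each $a_j$ lies in $K$, since its denominator is at least $c>0$. The point equals $\sum_{j\ne j^\ast}(c+\mu_j)a_j$, a convex combination of one point fewer, and you induct down to two points.

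Incidentally, at most one $\lambda_j(t)$ lies strictly between $0$ and $1$. It is the weights $\mu_j$ of which at most two, on consecutive nodes, are nonzero.
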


\begin{proof}
	For $t\in[t_j,t_{j+1}]$ there is $\lambda\in[0,1]$ such that
	\[
		x_h(t)=(1-\lambda)x_j+\lambda x_{j+1}.
	\]
	Since $x_j,x_{j+1}\in K$ and $K$ is convex, this point belongs to $K$.
	The certificate statement is exactly \Cref{thm_sample_hold_filippov_certificate} with the added viability of the path.
\end{proof}

\begin{definition}[Robust Euler tangency]
	\label{dfn_robust_euler_tangency}
	Let $K\subseteq B$ be located, let $s:B\to\R^n$, let $\delta\ge0$, and let $\omega:\Q_{>0}\to\Q_{\ge0}$ satisfy $\omega(h)\to0$ as $h\to0$.
	The field $s$ is called \emph{$(\delta,\omega)$-Euler tangent to $K$ on $B$} if, for all $z\in B$ and all rational $h>0$ with $z+hs(z)\in B$,
	\[
		d(z+hs(z),K)
		\le
		d(z,K)+h\delta+h\omega(h).
	\]
\end{definition}

\begin{figure}[ht]
	\centering
	\begin{tikzpicture}[>=Stealth, scale=1.0]
	\fill[blue!8] (-0.3,0.0) .. controls (0.0,1.3) and (1.5,2.1) .. (2.8,1.45)
		.. controls (4.0,0.8) and (3.5,-0.6) .. (2.1,-0.75)
		.. controls (0.8,-0.9) and (-0.6,-0.55) .. (-0.3,0.0);
	\draw[blue!70!black, thick] (-0.3,0.0) .. controls (0.0,1.3) and (1.5,2.1) .. (2.8,1.45)
		.. controls (4.0,0.8) and (3.5,-0.6) .. (2.1,-0.75)
		.. controls (0.8,-0.9) and (-0.6,-0.55) .. (-0.3,0.0);
	\node[blue!70!black] at (1.4,0.32) {$K$};

	\coordinate (z) at (2.38,1.18);
	\coordinate (e) at (3.62,1.60);
	\coordinate (p) at (3.23,1.07);

	\fill[black] (z) circle (1.5pt) node[below left] {$z$};
	\fill[black] (e) circle (1.5pt) node[above right, xshift=0.06cm] {$z+h s(z)$};
	\fill[black] (p) circle (1.2pt);

	\draw[->, red!75!black, thick] (z) -- (e) node[midway, above, yshift=0.07cm] {$h s(z)$};
	\draw[magenta!80!black, thick] (e) -- (p);
	\draw[dashed, gray!85] (e) circle (0.66);
	\node[magenta!80!black, anchor=west] at (3.65,0.8) {$d(z+hs(z),K)\le d(z,K)+h\delta+h\omega(h)$};
\end{tikzpicture}
	\caption{Robust Euler tangency. A forward Euler point $z+hs(z)$ may leave the constraint. The dashed circle is centered at $z+hs(z)$ and represents the distance radius $d(z+hs(z),K)$; in the picture it is tangent to $K$ at a nearest point. The condition bounds this radius by the previous distance, the slack $h\delta$, and the geometric modulus $h\omega(h)$.}
	\label{fig_robust_euler_tangency}
\end{figure}

\begin{proposition}[Viability from robust Euler tangency]
	\label{prop_viability_robust_euler_tangency}
	Let $K\subseteq B$ be located and closed.
	Let $s:B\to\R^n$ be $L_s$-Lipschitz and bounded by $M$ on $B$.
	Assume that $s$ is $(\delta,\omega)$-Euler tangent to $K$ on $B$.
	Let $x:I\to B$ be an integral trajectory of $s\circ x$ with $x(0)\in K$.
	Then, for every $t\in I$,
	\[
		d(x(t),K)\le\delta t.
	\]
	In particular, if $\delta=0$, then $x$ is viable in $K$.
\end{proposition}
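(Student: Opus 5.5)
The plan is to compare $u(t):=d(x(t),K)$ with its values at a forward Euler step and use the robust Euler tangency in place of a projection witness. Fix $t\in I$ and a small rational $h>0$ with $t+h\in I$. As in the proof of \Cref{thm_steiner_viability_projective_constraint}, the integral identity gives
\[
	x(t+h)=x(t)+hs(x(t))+r_h,
	\qquad
	\nrm{r_h}\le\int_t^{t+h}\nrm{s(x(\tau))-s(x(t))}\diff\tau\le\tfrac12L_sMh^2,
\]
since $\nrm{x(\tau)-x(t)}\le M(\tau-t)$ by the bound on $s$. The distance to the located set $K$ is $1$-Lipschitz, so
\[
	u(t+h)\le d(x(t)+hs(x(t)),K)+\tfrac12L_sMh^2.
\]

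Provided $x(t)+hs(x(t))\in B$, the Euler tangency hypothesis then yields
\[
	u(t+h)\le u(t)+h\delta+h\omega(h)+\tfrac12L_sMh^2.
\]
Dividing by $h$ and letting $h\downarrow0$ gives $\Dplus u(t)\le\delta$ in the sense of \Cref{dfn_upper_right_derivative}, because $\omega(h)\to0$. Unlike the projection proofs, there is no $L_su$ term. The Gronwall step therefore degenerates to a linear comparison, which is why the bound is $\delta t$ with no exponential factor.

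The main obstacle is passing from $\Dplus u\le\delta$ to $u(t)\le\delta t$ constructively, without a mean-value theorem or a least-counterexample argument. I would avoid this step altogether by telescoping directly. Take a uniform rational partition $0=\tau_0<\cdots<\tau_N=t$ with step $h=t/N$ (first for rational $t$), and apply the one-step inequality at each node. Summing gives
\[
	u(t)\le u(0)+\delta t+t\,\omega(h)+\tfrac12L_sMht.
\]
Here $u(0)=0$ since $x(0)\in K$. Letting $N\to\infty$ sends the last two terms to zero, so $u(t)\le\delta t$ for rational $t$, and continuity of $u$ extends this to all $t\in I$.

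A secondary technical point is the side condition $z+hs(z)\in B$. I would handle it by assuming, as in the earlier confinement hypotheses, that $x(I)$ stays a positive distance inside $B$, so the Euler point lies in $B$ for $h$ small; otherwise I would shrink to the interior window where $x$ lives. For $\delta=0$, the bound gives $d(x(t),K)=0$, and closedness of $K$ gives $x(t)\in K$.
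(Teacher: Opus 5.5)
Your telescoping argument is the paper's proof. The paper also sums the one-step bound $d(x(t_{j+1}),K)\le d(x(t_j),K)+h_j\delta+h_j\omega(h_j)+\tfrac12L_sMh_j^2$ over a fine rational partition of $[0,t]$. That bound comes from the remainder estimate, the $1$-Lipschitz property of $d(\cdot,K)$, and Euler tangency. The paper then lets the mesh go to zero, passes from rational to real $t$ by continuity, and uses closedness of $K$ for $\delta=0$. Your preliminary remark about $\Dplus u\le\delta$ is, as you say, not needed.

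The one place you depart from the paper is the side condition $x(t_j)+h_js(x(t_j))\in B$ of \Cref{dfn_robust_euler_tangency}. The paper applies tangency at every node without checking it. You resolve it by adding an interior-confinement hypothesis that the statement does not contain, and shrinking to an interior window presupposes the same confinement. So, as written, your proof covers a weaker statement.

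The stated hypotheses suffice if you move the base point by $O(h^2)$ instead of restricting the trajectory. Use the following notation:
\begin{itemize}
\item $B=\prod_i[a_i,b_i]$, a full block as throughout the section, with centre $c$;
\item $w:=\min_i(b_i-a_i)$, $D:=\diam(B)$, $\kappa:=\tfrac12L_sM$;
\item $e_j:=x(t_j)+hs(x(t_j))$.
\end{itemize}
Because $x(t_{j+1})\in B$ and $\nrm{x(t_{j+1})-e_j}\le\kappa h^2$, each coordinate of $e_j$ lies within $\kappa h^2$ of $[a_i,b_i]$.

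Put $z_j:=(1-\theta)x(t_j)+\theta c$ with $\theta:=4\kappa h^2/w\le1$ for $h$ small, so $z_j\in B$. Then $z_j+hs(z_j)=(1-\theta)e_j+\theta c+p_j$ with $\nrm{p_j}\le\theta h(M+L_sD)$. Hence each coordinate of $z_j+hs(z_j)$ is at least $a_i+\kappa h^2\bigl(1-4h(M+L_sD)/w\bigr)$, and symmetrically at most $b_i-\kappa h^2\bigl(1-4h(M+L_sD)/w\bigr)$. For $h\le w/(4(M+L_sD))$ this gives $z_j+hs(z_j)\in B$ by direct inequalities, with no case split on whether $e_j\in B$.

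Tangency at $z_j$ now applies. Both error terms are $O(h^2)$:
\begin{itemize}
\item $\nrm{z_j-x(t_j)}\le\theta D$;
\item $\nrm{x(t_{j+1})-z_j-hs(z_j)}\le\theta(D+hM)+\kappa h^2+\theta h(M+L_sD)$.
\end{itemize}
So the telescoped sum only gains a term $O(h)\,t$, which disappears as the mesh goes to zero.
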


\begin{proof}
	First let $t$ be rational.
	Let $0=t_0<\cdots<t_N=t$ be a rational partition with mesh $h_\ast$.
	Set
	\[
		h_j:=t_{j+1}-t_j.
	\]
	Fix $\eta>0$.
	Refine the partition so that
	\[
		\omega(h_j)\le\eta
		\quad\text{and}\quad
		h_j\le\eta
	\]
	for all $j$.
	For each $j$, the integral identity gives
	\[
		x(t_{j+1})
		=
		x(t_j)+h_js(x(t_j))+r_j
	\]
	with
	\[
		\nrm{r_j}
		\le
		\frac12 L_sMh_j^2.
	\]
	Using the $1$-Lipschitz property of the distance to $K$ and the Euler-tangency condition,
	\[
		d(x(t_{j+1}),K)
		\le
		d(x(t_j),K)+h_j\delta+h_j\omega(h_j)+\frac12 L_sMh_j^2.
	\]
	Summing over $j$ gives
	\[
		d(x(t),K)
		\le
		\delta t
		+
		t\eta
		+
		\frac12 L_sMt\eta.
	\]
	Since $\eta>0$ is arbitrary, the displayed estimate follows.
	The estimate for arbitrary $t\in I$ follows by continuity of both sides and density of rational times.
	If $\delta=0$, then $d(x(t),K)=0$, and closedness of $K$ gives $x(t)\in K$.
\end{proof}

\begin{remark}
	\label{rem_viability_relation_classical}
	The bare contingent cone condition from classical viability theory is not used in the preceding results.
	The projection condition in \Cref{thm_steiner_viability_projective_constraint} is the constructive Nagumo condition for convex projective constraints and the canonical Steiner field.
	The Euler-tangency condition in \Cref{dfn_robust_euler_tangency} is stronger than the classical contingent condition, but it is quantitative and directly checkable from geometric data.
	For block constraints or rational polytope constraints it reduces to finitely many face inequalities on the relevant cells; for smooth constraints the modulus $\omega$ records the local curvature error of a straight Euler step.
\end{remark}

\section{Probability Densities}
\label{sec_probability_densities}

\begin{remark}
	This section introduces probability without abstract sample spaces.
	Regular densities are effectively bounded locally regular functions normalized by the improper integral, and events, expectations, conditionals, kernels, and finite-horizon Markov chains are expressed through the integration tools developed earlier.
	The local formulation covers the usual unbounded Euclidean state spaces, while bounded-support models appear as a special case.
\end{remark}

\subsection{Densities and Events}
\label{subsec_densities_events}

\begin{definition}[Regular densities]
	\label{dfn_regular_density}
	\label{dfn_global_regular_density}
	A scalar function $p\in\BReg{\R^n}$ is called a \emph{regular density} on $\R^n$ if
	\[
		p(x)\ge0
	\]
	on its determined domain, the improper integral exists, and
	\[
		\int_{\R^n}p(x)\diff x=1.
	\]
	A scalar function $p\in\BReg{\R^n}$ is called a \emph{regular subdensity} on $\R^n$ if $p\ge0$, the improper integral exists, and
	\[
		\int_{\R^n}p(x)\diff x\le1.
	\]
	Let $B$ be a full block.
	A scalar function $p\in\BReg{B}$ is called a \emph{block regular density}, or simply a \emph{regular density on $B$}, if
	\[
		p(x)\ge0,
		\qquad
		x\in B\cap\dom p,
	\]
	and
	\[
		\int_B p(x)\diff x=1.
	\]
	A scalar function $p\in\BReg{B}$ is called a \emph{block regular subdensity}, or a \emph{regular subdensity on $B$}, if $p\ge0$ and
	\[
		\int_B p(x)\diff x\le1.
	\]
\end{definition}

\begin{remark}
	\label{rem_density_no_sample_space}
	A regular density is not introduced as a measure on an abstract sample space.
	It is an effectively bounded locally regular function normalized by the improper integral developed above.
	All probabilistic quantities below are therefore ordinary integrals of regular functions over Euclidean spaces or over finite localization blocks.
	Unless a full block is explicitly named, ``regular density'' means a density on the ambient Euclidean space in the first part of \Cref{dfn_regular_density}.
	A block regular density is the finite localization used inside proofs and is the special case in which the support is contained in one full block and the tail is empty.
\end{remark}

\begin{definition}[Global event probability and expectation]
	\label{dfn_global_event_probability_expectation}
	Let $p$ be a regular density on $\R^n$.
	If $A\subset\R^n$ is a representable event and $B$ is any full block with $A\Subset B$, define
	\[
		\PP{A}_p
		:=
		\int_B\indic{A}(x)p(x)\diff x.
	\]
	The value is independent of the chosen block $B$, because enlarging $B$ only adds a region where the indicator $\indic{A}$ vanishes.
	If $g\in\BReg{\R^n}$ and the improper integral of $gp$ exists, define
	\[
		\E[p]{g}
		:=
		\int_{\R^n}g(x)p(x)\diff x.
	\]
\end{definition}

\begin{example}[Gaussian density]
	\label{ex_gaussian_global_regular_density}
	The standard Gaussian density
	\[
		\varphi_n(x)
		:=
		(2\pi)^{-n/2}\exp\left(-\frac{\nrm{x}^2}{2}\right),
		\qquad
		x\in\R^n,
	\]
	is a regular density on $\R^n$.
\end{example}

\begin{proof}
	On every full block $B$, choose full blocks $X$ and $C$ with
	\[
		B\Subset X\Subset C.
	\]
	The function
	\[
		x\mapsto(2\pi)^{-n/2}\exp\left(-\frac{\nrm{x}^2}{2}\right)
	\]
	is continuous on $X$ with a modulus inherited from the local modulus on $C$ and is bounded by $(2\pi)^{-n/2}$.
	If $X\subseteq[-R,R]^n$, then a positive rational number below
	\[
		(2\pi)^{-n/2}\exp\left(-\frac{nR^2}{2}\right)
	\]
	is a lower bound on $X$.
	Taking $X$ as the support and the zero exterior branch gives a member of $\BReg{C}$ which agrees with $\varphi_n$ on $B$.
	Hence $\varphi_n\in\BReg{\R^n}$.
	The same local lower bounds show that $\varphi_n$ is locally bounded away from zero.
	It is nonnegative.
	The improper integral exists because the tails satisfy the standard estimate
	\[
		\int_{\R^n\setminus[-R,R]^n}\varphi_n(x)\diff x
		\le
		2n\int_R^\infty(2\pi)^{-1/2}\exp\left(-\frac{s^2}{2}\right)\diff s
		\le
		\frac{2n}{R}(2\pi)^{-1/2}\exp\left(-\frac{R^2}{2}\right)
	\]
	for $R>0$.
	The normalization
	\[
		\int_{\R^n}\varphi_n(x)\diff x=1
	\]
	follows from the one-dimensional Gaussian integral and finite products.
\end{proof}

\begin{definition}[Block probability of a representable event]
	\label{dfn_event_probability_regular_density}
	Let $p$ be a block regular density on a full block $B$.
	Let $A\Subset B$ be representable.
	The probability of $A$ under $p$ is
	\[
		\PP{A}_p
		:=
		\int_B \indic{A}(x)p(x)\diff x.
	\]
\end{definition}

\begin{proposition}[Elementary event calculus]
	\label{prop_event_probability_calculus}
	Let $p$ be a regular density on $\R^n$.
	For every representable event $A\subset\R^n$, the number $\PP{A}_p$ is well-defined and satisfies
	\[
		0\le\PP{A}_p\le1.
	\]
	If $A_1,\ldots,A_N\subset\R^n$ are pairwise disjoint representable events, then
	\[
		\PP{\bigcup_{i=1}^N A_i}_p
		=
		\sum_{i=1}^N\PP{A_i}_p.
	\]
\end{proposition}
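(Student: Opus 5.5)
The plan is to reduce everything to the block integral calculus of \Cref{sec_functions}. For well-definedness, I would fix a representable event $A$ with bounding block $B_A$ and choose a full block $B$ with $A\Subset B$. By \Cref{dfn_global_bounded_regular_functions} there are a full block $C\Supset B$ and a local representative $p_C\in\BReg{C}$ agreeing with $p$ on the determined part of $B$. The indicator $\indic{A}$ is a simple regular function with support well contained in $C$. \Cref{lem_regular_functions_finite_algebra} then gives $\indic{A}p_C\in\BReg{C}$, and \Cref{lem_regular_functions_riemann_integrable} makes it Riemann integrable. Independence of the choice of $B$ (and of the local representative) is the argument already sketched in \Cref{lem_global_breg_local_integrability}: pass to a common well-containing block, where the two products agree on $A$ and vanish on the apartness complement of $A$. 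Their tagged sums then differ only on the exception layer of a representability witness of $A$, whose contribution is bounded by (bound witness)$\,\times\,\gamma(\mathcal E_A)$ and can be made arbitrarily small.

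For the bounds, nonnegativity follows from \Cref{lem_regular_integral_finite_order} applied to $0\le\indic{A}p_C$. For the upper bound I would use $\indic{A}p\le\indic{B}p$ pointwise on the determined part. \Cref{lem_regular_integral_finite_order} then gives
\[
	\PP{A}_p\le\int_B p\le\int_{[-R,R]^n}p
\]
for any rational $R$ with $B\subseteq[-R,R]^n$, using nonnegativity of $p$ on the difference. Letting $R\to\infty$ in the Cauchy net of \Cref{dfn_improper_integral}, together with $\int_{\R^n}p=1$, yields $\PP{A}_p\le1$ in the constructive sense. That is, $\PP{A}_p\le1+\eta$ for every $\eta>0$, read off from the Cauchy modulus.

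For finite additivity, I would take one full block $B$ well containing all $A_i$. By \Cref{lem_representable_finite_union} the union is representable, so its probability is defined. The key identity is
\[
	\indic{\bigcup_i A_i}=\sum_i\indic{A_i}
\]
outside an exception multiblock of arbitrarily small cost. Both sides are regular functions whose supports have pairwise disjoint pieces, so by linearity (\Cref{lem_regular_functions_finite_algebra}) it suffices to compare integrals. The main obstacle is that disjointness is only a negative no-overlap condition (\Cref{rem_disjointness_negative}), so I cannot evaluate pointwise branch by branch. I would instead work with witnesses at accuracy $\eps$: on the common trimmed coordinate refinement, as in the proof of \Cref{lem_representable_finite_union}, every retained cell is either in the exterior of all $A_i$ or well contained in the core of some $A_k$. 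Disjointness of the $A_i$ forces such a cell to lie in the exterior of every other $A_j$, since a cell of positive volume cannot lie in two cores, and the witness decomposition is measure-contained in $B$. On these cells the two functions agree. The remaining cells lie in the concatenated exception and coordinate-boundary layers, whose cost is at most $\eps$. Multiplying by the bound witness of $p_C$ bounds the discrepancy of the integrals by $2M\eps$, and letting $\eps\to0$ gives the additivity formula.
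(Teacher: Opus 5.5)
Your proposal is correct and follows the paper's route. You localize $p$, obtain $\indic{A}p_C\in\BReg{C}$ from \Cref{lem_regular_functions_finite_algebra}, and get the bounds from \Cref{lem_regular_integral_finite_order}; you also spell out the monotone passage to the improper limit, which the paper compresses into $\PP{A}_p\le\int_{\R^n}p(x)\diff x=1$. Additivity then comes, in both versions, from $\indic{\bigcup_i A_i}=\sum_i\indic{A_i}$ together with linearity.

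The one real difference is how you handle that indicator identity: you verify it only off a small exception layer and let $\eps\to0$, whereas the paper uses it directly. In fact the identity holds exactly on the common determined domain, because once every $j$ comes with a branch witness ($x\in A_j$ or $x\in A_j^c$), disjointness rules out a second positive branch. Your layer estimate is therefore a safe but unnecessary detour.
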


\begin{proof}
	Choose a full block $B$ with $A\Subset B$ and a local representative of $p$ on a well-containing block.
	The indicator $\indic{A}$ is a simple regular function on this localization block, and $\indic{A}p\in\BReg{B}$ by \Cref{lem_regular_functions_finite_algebra}.
	Moreover,
	\[
		0\le \indic{A}p\le p.
	\]
	Hence \Cref{lem_regular_integral_finite_order}, applied on a localization block containing $A$, gives
	\[
		0\le\PP{A}_p\le\int_{\R^n}p(x)\diff x=1.
	\]
	For pairwise disjoint events, the indicators satisfy
	\[
		\indic{\bigcup_{i=1}^N A_i}
		=
		\sum_{i=1}^N\indic{A_i}
	\]
	on the represented support, after finite flattening of the event supports.
	Linearity of the regular integral gives the displayed additivity.
\end{proof}

\begin{definition}[Block expectation under a regular density]
	\label{dfn_expectation_regular_density}
	Let $p$ be a block regular density on a full block $B$.
	Let $g\in\BReg{B}$.
	The expectation of $g$ under $p$ is
	\[
		\E[p]{g}
		:=
		\int_B g(x)p(x)\diff x.
	\]
\end{definition}

\begin{proposition}[Finite expectation rules]
	\label{prop_expectation_rules_regular_density}
	Let $p$ be a regular density on $\R^n$.
	Let $g,h\in\BReg{\R^n}$, let $a,b\in\R$, and assume that the improper integrals of $gp$, $hp$, and $(ag+bh)p$ exist.
	Then
	\[
		\E[p]{ag+bh}
		=
		a\E[p]{g}+b\E[p]{h}.
	\]
	If $g\le h$ and the displayed expectations exist, then
	\[
		\E[p]{g}\le\E[p]{h}.
	\]
\end{proposition}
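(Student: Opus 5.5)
Both claims reduce to the block-level facts already established, by passing through the truncated integrals that define the improper integral. Linearity comes from \Cref{lem_regular_functions_finite_algebra} and monotonicity from \Cref{lem_regular_integral_finite_order}.

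For linearity, fix a rational $R>0$ and put $B_R=[-R,R]^n$. By \Cref{lem_global_breg_local_algebra}, the products $gp$, $hp$ and $(ag+bh)p$ lie in $\BReg{\R^n}$. Choose one common full block $C$ with $B_R\Subset C$ on which all three have local representatives $g_C$, $h_C$, $p_C\in\BReg{C}$. As in \Cref{lem_global_breg_local_integrability}, each truncated integral is the integral over $C$ of $\indic{B_R}$ times the local product. On the determined part of $C$ we have the pointwise identity
\[
	\indic{B_R}(ag_C+bh_C)p_C=a\,\indic{B_R}g_Cp_C+b\,\indic{B_R}h_Cp_C .
\]
The linearity clause of \Cref{lem_regular_functions_finite_algebra} then gives
\[
	\int_{B_R}(ag+bh)p=a\int_{B_R}gp+b\int_{B_R}hp .
\]
Independence of the local representative is already settled in \Cref{lem_global_breg_local_integrability}, so this identity holds for every rational $R$.

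The improper integrals of $gp$ and $hp$ exist by hypothesis, so their truncations are Cauchy in $R$ with given moduli. A modulus for the left side is obtained by combining these two moduli with the weights $\abs{a}$ and $\abs{b}$. Passing to the limit $R\to\infty$ yields the expectation identity; equivalently, the assumed existence for $(ag+bh)p$ is consistent with the limit just computed.

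For monotonicity, again fix $R$ and the common block $C$. Since $p\ge0$ on its determined domain and $g\le h$, we get $\indic{B_R}g_Cp_C\le\indic{B_R}h_Cp_C$ pointwise on the determined part of $C$. \Cref{lem_regular_integral_finite_order} then gives $\int_{B_R}gp\le\int_{B_R}hp$. Finally, a non-strict inequality between the terms of two convergent nets passes to their limits in the usual constructive sense. Concretely, for each $\eta>0$, pick $R$ beyond both Cauchy moduli at accuracy $\eta/2$; then $\E[p]{g}\le\E[p]{h}+\eta$.

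The main obstacle I expect is the pointwise-order hypothesis in \Cref{lem_regular_integral_finite_order}. That lemma is phrased pointwise on all of $B$, but the local representatives are only determined off the boundary layers. I plan to handle this in one of two ways. The first option is to observe that the proof of that lemma compares only tagged sums, so it suffices for the inequality to hold at tags in the determined part; tags falling in an exception multiblock of cost at most $\eta$ contribute at most $2M\eta$ to the tagged sums, where $M$ bounds the products, as in \Cref{lem_regular_functions_riemann_integrable}. The second option is to flatten the supports of $g$, $h$ and $p$ jointly and compare the continuous branch formulas support piece by support piece.
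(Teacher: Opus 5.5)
Your proposal is correct and follows the paper's route. You localize on blocks via \Cref{lem_global_breg_local_algebra}, apply \Cref{lem_regular_functions_finite_algebra} and \Cref{lem_regular_integral_finite_order} to the truncated integrals, and pass to the improper limits using the supplied Cauchy witnesses; the paper's four-line proof is a sketch of exactly this, and it tacitly reads ``pointwise'' in the order lemma as ``on the determined domain where tags are evaluated,'' so your remark about determined tags fills in a detail the paper leaves implicit rather than adding a new idea.
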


\begin{proof}
	The products $gp$ and $hp$ belong locally to $\BReg{\R^n}$ by \Cref{lem_global_breg_local_algebra}.
	The first identity follows from linearity of the improper integral, using the supplied Cauchy witnesses.
	If $g\le h$, then
	\[
		gp\le hp
	\]
	because $p\ge0$.
	The order claim follows by applying \Cref{lem_regular_integral_finite_order} on localization blocks and passing to the improper limits.
\end{proof}

\subsection{Joint, Marginal, and Conditional Densities}
\label{subsec_joint_marginal_conditional_densities}

\begin{proposition}[Product densities]
	\label{prop_product_densities}
	Let $p$ be a regular density on $\R^n$ and let $q$ be a regular density on $\R^m$.
	Then
	\[
		r(x,y):=p(x)q(y)
	\]
	is a regular density on $\R^{n+m}$.
\end{proposition}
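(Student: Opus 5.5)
The plan is to verify the three defining properties of a regular density on $\R^{n+m}$ in turn: membership in $\BReg{\R^{n+m}}$, nonnegativity, and existence and value of the improper integral.

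\emph{Local regularity.} Fix a full block $W\subset\R^{n+m}$ and enclose it in a product block $B\times D$ with $B\subset\R^n$ and $D\subset\R^m$ full. By \Cref{dfn_global_bounded_regular_functions}, choose full blocks $C\Supset B$ and $E\Supset D$ with $B\Subset C$ and $D\Subset E$, together with local representatives $p_C\in\BReg{C}$ and $q_E\in\BReg{E}$. Write
\[
	p_C=\sum_i p_i,
	\qquad
	q_E=\sum_j q_j,
\]
with supports $X_i\Subset C$ and $Y_j\Subset E$. Then
\[
	p_C(x)q_E(y)=\sum_{i,j}p_i^+(x)q_j^+(y)\indic{X_i}(x)\indic{Y_j}(y).
\]
This is a rectangularly regular function in the sense of \Cref{dfn_rectangularly_regular_function}. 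Its supports $X_i\times Y_j$ are representable by \Cref{lem_representable_product} and pairwise disjoint because the factor supports are. Each branch $p_i^+q_j^+$ is continuous on $X_i\times Y_j$, since products of bounded uniformly continuous functions are uniformly continuous. The product of the two bound witnesses is a bound witness. Finally $W\Subset B\times D\Subset C\times E$, so $r\in\BReg{\R^{n+m}}$. Nonnegativity of $r$ on its determined domain is immediate from $p,q\ge0$.

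\emph{Improper integral.} For the cube $[-R,R]^{n+m}=B_R^{(n)}\times B_R^{(m)}$, \Cref{thm_fubini_rectangular_regular} applied to the local product representative gives
\[
	\int_{[-R,R]^{n+m}}r
	=
	\left(\int_{B_R^{(n)}}p\right)\left(\int_{B_R^{(m)}}q\right)
	=:P_RQ_R.
\]
To apply the theorem, first multiply by $\indic{B_R}$ as in \Cref{lem_global_breg_local_integrability}; this keeps the function rectangular. By \Cref{lem_regular_integral_finite_order} we have $0\le P_R,Q_R\le 1$ up to the Cauchy error, and both nets are Cauchy by hypothesis. Then
\[
	\abs{P_RQ_R-P_SQ_S}\le\abs{P_R-P_S}+\abs{Q_R-Q_S},
\]
with the bounds on $P,Q$ absorbing small slack. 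Given $\eps$, take $R_0$ to be the maximum of the two moduli at $\eps/2$. This gives a Cauchy witness, and the limit is $1\cdot1=1$.

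The main obstacle is making the Fubini application honest at the level of local representatives. The representatives on $C\times E$ must agree with $r$ on the determined part of the cube, and the determined domain of a product of complemented functions involves exterior branches on $(X_i\times Y_j)^c$. This requires recording that a point apart from $X_i$ (or from $Y_j$) is apart from $X_i\times Y_j$, so the zero exterior branch is correctly determined. I would check this explicitly via the product-exterior argument in the proof of \Cref{lem_representable_product}. I would also check independence of the chosen local blocks using the finite additivity remark in \Cref{lem_global_breg_local_integrability}.
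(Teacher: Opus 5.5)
Your proposal is correct and follows essentially the same route as the paper: you form products of local representatives to get locally rectangularly regular, effectively bounded data via \Cref{lem_representable_product}, then apply \Cref{thm_fubini_rectangular_regular} on $[-R,R]^{n}\times[-R,R]^{m}$ to factor the partial integrals as $P_RQ_R$. The only cosmetic difference is that the paper writes the Cauchy estimate as the one-sided tail split $P_SQ_S-P_RQ_R=P_S(Q_S-Q_R)+(P_S-P_R)Q_R$ with $P_S,Q_R\le1$, which is the same bound as your $\abs{P_R-P_S}+\abs{Q_R-Q_S}$.
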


\begin{proof}
	On every product block, choose local representatives of $p$ and $q$.
	Their product has local supports of the form $X_k\times Y_l$, hence is locally rectangularly regular by \Cref{lem_representable_product}.
	It is locally effectively bounded by the product of the local bound witnesses for $p$ and $q$.
	It is nonnegative because both factors are nonnegative.
	For the improper integral, write $B_R=[-R,R]^n$ and $C_R=[-R,R]^m$.
	The partial product integrals satisfy
	\[
		\int_{B_R\times C_R}p(x)q(y)\diff(x,y)
		=
		\left(\int_{B_R}p(x)\diff x\right)
		\left(\int_{C_R}q(y)\diff y\right)
	\]
	by \Cref{thm_fubini_rectangular_regular} on $B_R\times C_R$.
	The two factors converge to $1$, so the product partial integrals converge to $1$.
	The tails are Cauchy because, for $S\ge R$,
	\[
	\begin{aligned}
		0\le
		\int_{B_S\times C_S}pq-\int_{B_R\times C_R}pq
		&\le
		\left(\int_{B_S}p(x)\diff x\right)
		\left(\int_{C_S\setminus C_R}q(y)\diff y\right)
		\\
		&\quad+
		\left(\int_{B_S\setminus B_R}p(x)\diff x\right)
		\left(\int_{C_R}q(y)\diff y\right),
	\end{aligned}
	\]
	and the right-hand side is bounded by the tail witnesses of $p$ and $q$.
\end{proof}

\begin{remark}
	\label{rem_block_product_density_special_case}
	If the two densities are block regular densities on full blocks $A$ and $C$, \Cref{prop_product_densities} reduces to the finite statement that $p(x)q(y)$ is a block regular density on $A\times C$.
	In that case all improper-tail clauses are empty.
\end{remark}

\begin{definition}[Joint and marginal regular densities]
	\label{dfn_joint_marginal_regular_density}
	A \emph{joint regular density} of variables in $\R^n$ and $\R^m$ is a regular density $p_{AC}$ on $\R^{n+m}$.
	It is called \emph{locally rectangularly regular} if, on every product of full blocks $A\subset\R^n$ and $C\subset\R^m$, it has an effectively bounded rectangularly regular local representative on a well-containing product block.
	A \emph{marginal tail--Fubini certificate} for such a joint density is the following data.
	First, the improper inner integrals
	\[
		p_A(x)
		:=
		\int_{\R^m}p_{AC}(x,y)\diff y,
		\qquad
		p_C(y)
		:=
		\int_{\R^n}p_{AC}(x,y)\diff x
	\]
	define functions $p_A\in\BReg{\R^n}$ and $p_C\in\BReg{\R^m}$: for every full block in the remaining variable, the inner integrals have local effectively bounded regular representatives on a well-containing block.
	Second, the two successive improper integrations satisfy
	\[
		\int_{\R^n}
		\left(\int_{\R^m}p_{AC}(x,y)\diff y\right)
		\diff x
		=
		\int_{\R^{n+m}}p_{AC}(x,y)\diff(x,y)
	\]
	and
	\[
		\int_{\R^m}
		\left(\int_{\R^n}p_{AC}(x,y)\diff x\right)
		\diff y
		=
		\int_{\R^{n+m}}p_{AC}(x,y)\diff(x,y).
	\]
	If $A$ and $C$ are full blocks and $p_{AC}$ is effectively bounded and rectangularly regular on $A\times C$, the same notation denotes the block marginals obtained by replacing $\R^m$ and $\R^n$ with $C$ and $A$.
\end{definition}

\begin{remark}[Why the marginal tail certificate is separate]
	\label{rem_marginal_tail_fubini_certificate}
	In classical measure theory, Tonelli--Fubini theorems are commonly invoked in a single step for nonnegative or integrable functions on product measure spaces.
	Here the global integral over $\R^{n+m}$ is an improper limit of finite block integrals, and the marginal functions are expected to be regular functions in the remaining variable.
	The certificate records the extra effective information needed to pass from finite block Fubini, \Cref{thm_fubini_rectangular_regular}, to these improper marginal identities.

	The certificate is automatic for block-supported joint densities: if the support is contained in $A\times C$ for full blocks $A$ and $C$, all tails outside $A\times C$ are empty and \Cref{thm_fubini_rectangular_regular} applies directly.
	It is also supplied by dominated tail estimates.
	For example, if
	\[
		0\le p_{AC}(x,y)\le r(x)s(y),
	\]
	where $r$ and $s$ are regular densities with known improper-tail witnesses, then the tails of $p_{AC}$ outside $B_R^n\times B_R^m$ are bounded by the tails of $r$ and $s$.
	Gaussian bounds are one instance of this pattern, but bounded supports, exponential tails, and compactly supported mixtures are equally finite certificates when the corresponding tail estimates are part of the data.
\end{remark}

\begin{example}[Block-supported joint density]
	\label{ex_block_supported_marginal_tail_certificate}
	Let $A\subset\R^n$ and $C\subset\R^m$ be full blocks.
	If $p_{AC}$ is a block regular density on $A\times C$ and is extended by zero outside $A\times C$, then its marginal tail--Fubini certificate is finite:
	\[
		p_A(x)=\int_Cp_{AC}(x,y)\diff y,
		\qquad
		p_C(y)=\int_Ap_{AC}(x,y)\diff x,
	\]
	and all integrals outside $A\times C$ vanish.
	Thus bounded-support densities are not exceptional; they are the case where the tail part of the certificate has zero cost.
\end{example}

\begin{example}[Dominated product tails]
	\label{ex_dominated_product_tail_certificate}
	Suppose $p_{AC}$ is locally rectangularly regular and
	\[
		0\le p_{AC}(x,y)\le r(x)s(y)
	\]
	for regular subdensities $r$ on $\R^n$ and $s$ on $\R^m$ with supplied improper-tail witnesses.
	If $B_R\subset\R^n$ and $C_R\subset\R^m$ are increasing full blocks, then
	\[
		\int_{(B_R\times C_R)^c}p_{AC}(x,y)\diff(x,y)
		\le
		\int_{\R^n\setminus B_R}r(x)\diff x
		+
		\int_{\R^m\setminus C_R}s(y)\diff y.
	\]
	Hence the product tail witnesses for $r$ and $s$ give a marginal tail--Fubini certificate for $p_{AC}$.
\end{example}

\begin{proposition}[Marginals are densities]
	\label{prop_marginals_regular_densities}
	Let $p_{AC}$ be a locally rectangularly regular joint density on $\R^{n+m}$ with a marginal tail--Fubini certificate.
	Then the marginals $p_A$ and $p_C$ are regular densities on $\R^n$ and $\R^m$, respectively.
\end{proposition}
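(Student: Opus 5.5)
The plan is to check the three clauses of \Cref{dfn_regular_density} separately for $p_A$; the argument for $p_C$ is the same with the roles of the variables exchanged. Membership $p_A\in\BReg{\R^n}$ is already part of the first item of the marginal tail--Fubini certificate in \Cref{dfn_joint_marginal_regular_density}, so no construction is needed there. Only nonnegativity, existence of the improper integral, and normalization have to be shown.

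\emph{Nonnegativity.} Fix $x$ in the determined domain of $p_A$ and a full block $C_R=[-R,R]^m$. The slice $y\mapsto p_{AC}(x,y)$ comes from a rectangularly regular local representative on a well-containing product block, so it is effectively bounded regular on $C_R$. Since $p_{AC}\ge0$, \Cref{lem_regular_integral_finite_order} gives
\[
	\int_{C_R}p_{AC}(x,y)\diff y\ge0 .
\]
The certificate says the improper inner integral exists and equals $p_A(x)$. A limit of nonnegative partial integrals is nonnegative in the constructive sense: for every $\eta>0$ one has $p_A(x)\ge-\eta$. Hence $p_A(x)\ge0$.

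\emph{Improper integral and normalization.} The second item of the certificate asserts that the iterated improper integral $\int_{\R^n}\bigl(\int_{\R^m}p_{AC}\diff y\bigr)\diff x$ exists and equals $\int_{\R^{n+m}}p_{AC}\diff(x,y)$. Since $p_{AC}$ is a regular density on $\R^{n+m}$, the right-hand side is $1$. The left-hand side is by definition $\int_{\R^n}p_A(x)\diff x$. So the improper integral of $p_A$ exists, with its Cauchy witness inherited from the certificate, and equals $1$.

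\emph{Main obstacle.} The delicate point is reading the certificate precisely. Its identity must be an assertion that the improper integral of the regular function $p_A$, in the sense of \Cref{dfn_improper_integral}, exists. It is not enough that some iterated limit of double integrals converges. If only the joint-block version were given, I would recover the Cauchy property as follows. Use \Cref{thm_fubini_rectangular_regular} on $B_R\times C_S$. Use monotonicity in $S$, from nonnegativity and \Cref{lem_regular_integral_finite_order}. Then pass to $S\to\infty$ inside $\int_{B_R}$, justified by the tail witness in the $y$ variable. For the nonnegativity step I should also confirm that the determined domain of $p_A$ is exactly where the inner improper integral is supplied, which follows from the localization convention in \Cref{rem_global_breg_localization}.
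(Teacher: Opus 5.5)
Your proposal is correct and matches the paper's argument. Membership in $\BReg{\R^n}$ comes from the certificate (the paper phrases this as window-wise Fubini plus the certificate's local representatives), nonnegativity comes from \Cref{lem_regular_integral_finite_order} applied to the nonnegative inner integrands and passed to the improper limit, and normalization comes from the certified successive-integration identity with $\int_{\R^{n+m}}p_{AC}=1$; your contingency paragraph is unnecessary, since the certificate in \Cref{dfn_joint_marginal_regular_density} already asserts that the iterated improper integral, i.e.\ the improper integral of $p_A$, exists and equals the joint integral.
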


\begin{proof}
	Local regularity of $p_A$ and $p_C$ follows by applying \Cref{thm_fubini_rectangular_regular} on each product of full blocks and then using the local representatives supplied by the certificate.
	Nonnegativity follows from \Cref{lem_regular_integral_finite_order} applied to the nonnegative inner integrands.
	The certified successive integration gives
	\[
		\int_{\R^n}p_A(x)\diff x
		=
		\int_{\R^{n+m}}p_{AC}(x,y)\diff(x,y)
		=
		1,
	\]
	and analogously for $p_C$.
\end{proof}

\begin{remark}
	\label{rem_block_marginals_special_case}
	For a block joint density on $A\times C$, \Cref{prop_marginals_regular_densities} reduces to the earlier finite statement: the marginal functions are block regular densities on $A$ and $C$.
	The marginal tail--Fubini certificate is automatic from \Cref{thm_fubini_rectangular_regular}.
\end{remark}

\begin{definition}[Conditional regular density]
	\label{dfn_conditional_regular_density}
	Let $p_{AC}$ be a locally rectangularly regular joint density on $\R^{n+m}$ with marginal $p_A$.
	Assume that $p_A$ is locally bounded away from zero in the sense of \Cref{dfn_global_bounded_regular_functions}.
	The conditional density of $y\in\R^m$ given $x\in\R^n$ is
	\[
		p_{C\mid A}(y\mid x)
		:=
		\frac{p_{AC}(x,y)}{p_A(x)}.
	\]
	In the block case $A\times C$, this is the same definition with a supplied rational lower bound for $p_A$ on $A$.
\end{definition}

\begin{proposition}[Conditional density normalization]
	\label{prop_conditional_density_normalization}
	Under the assumptions of \Cref{dfn_conditional_regular_density}, the function $p_{C\mid A}$ is nonnegative and locally rectangularly regular.
	Moreover,
	\[
		\int_{\R^m}p_{C\mid A}(y\mid x)\diff y=1
	\]
	for every $x$ for which the local data determine the value of $p_A(x)$.
\end{proposition}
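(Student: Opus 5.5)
The proof works on local windows. Fix full blocks $A\subset\R^n$ and $C\subset\R^m$ and use the data of \Cref{dfn_global_bounded_regular_functions}. For $p_A$ it supplies blocks $A\Subset X\Subset A'$, a local representative $p_{A,A'}\in\BReg{A'}$ and a rational constant $c_A>0$ with $\abs{p_{A,A'}}\ge c_A$ on the determined part of $X$. For the joint density it supplies an effectively bounded rectangularly regular representative on a product block well containing $X\times C$; shrink this to $X\times C'$ with $C\Subset C'$.

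\textbf{Regularity.} Write the joint representative as $\sum_q h_q(x,y)\indic{X_q}(x)\indic{Y_q}(y)$. Flatten the $x$-supports $X_q$ together with the supports of $p_{A,A'}$ over $X$ using \Cref{lem_representable_set_flattening}. On each flattened piece $Z_I$, the denominator is a single continuous branch $g_I$ with $\abs{g_I}\ge c_A$, so $1/g_I$ is continuous with the reciprocal modulus. Then
\[
	\frac{h_q(x,y)}{g_I(x)}\indic{X_q\cap Z_I}(x)\indic{Y_q}(y)
\]
is a rectangular summand, using \Cref{lem_representable_finite_intersection}. It has bound witness $M/c_A$, where $M$ bounds the joint representative. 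This is the product construction of \Cref{lem_regular_functions_finite_algebra}, applied branchwise in the rectangular format. Nonnegativity holds because both numerator and denominator are nonnegative on determined points.

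\textbf{Normalization.} Fix $x$ at which $p_A(x)$ is determined. The factor $1/p_A(x)$ is a constant in $y$, so linearity of the block integral (\Cref{lem_regular_functions_finite_algebra}) gives
\[
	\int_{C_R}p_{C\mid A}(y\mid x)\diff y=\frac{1}{p_A(x)}\int_{C_R}p_{AC}(x,y)\diff y
\]
for every $R$. The marginal tail--Fubini certificate of \Cref{dfn_joint_marginal_regular_density} states that the right-hand inner integrals converge, as an improper integral, to $p_A(x)$. Dividing the Cauchy witnesses by $p_A(x)\ge c_A$ therefore gives Cauchy witnesses for the conditional's improper integral, and the limit is $p_A(x)/p_A(x)=1$.

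\textbf{Main obstacle.} The delicate step is the pointwise identification that the certificate's inner improper integral at this particular $x$ equals the value of the local representative $p_A(x)$. This requires that the local representatives supplied in the certificate agree with $p_A$ on the determined part, which is exactly the agreement clause in \Cref{dfn_global_bounded_regular_functions}. I would also check that this $x$ lies in the determined part of the window $X$, so that the lower bound $c_A$ is available.
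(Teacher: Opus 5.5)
Your proposal is correct and follows essentially the same route as the paper. Both obtain $1/p_A$ locally from the boundedness-away-from-zero data, multiply it into the local rectangular representation of $p_{AC}$ after flattening the $x$-supports, and pull the constant $1/p_A(x)$ out of the $y$-integral so that the marginal definition gives $1$; the paper just cites the reciprocal clause of \Cref{lem_global_breg_local_algebra}, where you spell out the branchwise quotient, the bound witness $M/c_A$, and the rescaling of the Cauchy witnesses.
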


\begin{proof}
	By the reciprocal part of \Cref{lem_global_breg_local_algebra}, $1/p_A\in\BReg{\R^n}$ locally.
	Multiplying the local rectangular representation of $p_{AC}$ by this $x$-dependent local regular factor and flattening the $x$-supports gives a local rectangular representation of $p_{C\mid A}$.
	Nonnegativity follows from $p_{AC}\ge0$ and $p_A>0$.
	For a determined value of $x$ in the local domain of $p_A$,
	\[
		\int_{\R^m}p_{C\mid A}(y\mid x)\diff y
		=
		\frac{1}{p_A(x)}
		\int_{\R^m}p_{AC}(x,y)\diff y
		=
		1.
	\]
\end{proof}

\begin{definition}[Conditional expectation]
	\label{dfn_conditional_expectation_regular_density}
	Under the assumptions of \Cref{dfn_conditional_regular_density}, let $g\in\BReg{\R^m}$ and assume that the improper integral below exists locally in $x$.
	The conditional expectation of $g$ given $x\in\R^n$ is
	\[
		\E[p_{C\mid A}]{g\mid x}
		:=
		\int_{\R^m}g(y)p_{C\mid A}(y\mid x)\diff y.
	\]
\end{definition}

\begin{proposition}[Conditional expectations are regular]
	\label{prop_conditional_expectation_regular}
	Under the assumptions of \Cref{dfn_conditional_expectation_regular_density}, the map
	\[
		x\mapsto\E[p_{C\mid A}]{g\mid x}
	\]
	belongs to $\BReg{\R^n}$.
\end{proposition}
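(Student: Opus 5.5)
The plan is to localize on a block in $x$, reduce the conditional expectation to a quotient of two marginal-type integrals, and then read off membership in $\BReg{\R^n}$ from the local algebra.

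\textbf{Step 1 (localization).} Fix a full block $B\subset\R^n$. By \Cref{dfn_conditional_regular_density}, $p_A$ is locally bounded away from zero. So there are blocks $B\Subset X\Subset C$, a representative of $p_A$ in $\BReg{C}$, and a rational $c_B>0$ with $p_A\ge c_B$ on the determined part of $X$. On that part we write
\[
	\E[p_{C\mid A}]{g\mid x}
	=
	\frac{1}{p_A(x)}\,G(x),
	\qquad
	G(x):=\int_{\R^m}g(y)p_{AC}(x,y)\diff y.
\]
This identity holds because the factor $1/p_A(x)$ does not depend on $y$ and can be pulled out of the partial block integrals before taking the improper limit. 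By the reciprocal part of \Cref{lem_global_breg_local_algebra}, $\indic{X}/p_A\in\BReg{C}$. It therefore suffices to show that $G$ has a local representative in $\BReg{C}$ agreeing with $G$ on $B$. The product claim of \Cref{lem_global_breg_local_algebra} then gives the result.

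\textbf{Step 2 (finite truncations).} For rational $R$ let $C_R=[-R,R]^m$. Choose local rectangular representatives of $p_{AC}$ on a block well containing $C\times C_R$, and of $g$ on a block well containing $C_R$. Intersect the $y$-supports as in \Cref{cor_regular_kernel_action}. This makes $\indic{C}(x)\indic{C_R}(y)g(y)p_{AC}(x,y)$ effectively bounded and rectangularly regular. \Cref{thm_fubini_rectangular_regular} then shows that
\[
	G_R(x):=\int_{C_R}g(y)p_{AC}(x,y)\diff y
\]
is regular on $C$. Its bound witness is $M_gM_p\mu(C_R)$.

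\textbf{Step 3 (passage to the limit; main obstacle).} We need $G_R\to G$ in a way that preserves regularity. A pointwise improper limit of regular functions need not be regular, so this is the real difficulty. I would use the hypothesis that the improper integral exists locally in $x$ in the strong form: a tail witness uniform over $x$ in the block. Concretely,
\[
	\sup_{x\in C}\abs{G(x)-G_R(x)}\le \tau(R)\to0.
\]
This can be obtained from $\abs{g}\le M_g$ on tails together with the marginal tail--Fubini certificate, which gives $\int_{\R^m\setminus C_R}p_{AC}(x,y)\diff y\le$ a uniform tail of $p_A$ on $C$. Two facts then need to be checked.

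First, the branch data of $G_R$ can be kept on one fixed flattened $x$-support family as $R$ grows. The $x$-supports come only from the rectangular representation of $p_{AC}$ on $C$, and enlarging $C_R$ adds only $y$-pieces. Hence $G_R=\sum_q \Phi_{q,R}\indic{X_q}$ with the same $X_q$ for every $R$.

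Second, each $\Phi_{q,R}$ is continuous, by the argument in \Cref{thm_fubini_rectangular_regular}, and the sequence is uniformly Cauchy on $X_q$. A uniform limit of continuous branches with moduli is continuous, with modulus obtained from $\tau$ and the moduli of $\Phi_{q,R}$. The limit branches $\Phi_q$ therefore define a regular function on the fixed supports.

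If the certificate only provides Cauchy data pointwise in $x$, I would instead strengthen the hypothesis by reading ``exists locally in $x$'' as uniform on blocks. That reading is the natural one in this framework.

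\textbf{Step 4 (bound).} The effective bound for $G$ on $C$ is $M_g\sup_{C}p_A$, which is finite by the local bound witness of $p_A$. Combined with Step 1, this gives the bound $M_g\sup_C p_A/c_B$ for the conditional expectation. This completes membership in $\BReg{\R^n}$.
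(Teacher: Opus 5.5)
The gap is the first fact in your Step~3. Local rectangular regularity of $p_{AC}$ only supplies a rectangular representative on each product window. The representative on $C\times C_{R'}$ may therefore use $x$-supports that do not occur on $C\times C_R$; nothing in the hypotheses says that enlarging $C_R$ ``adds only $y$-pieces''. Without one fixed finite $x$-support family, even a uniform estimate $\sup_{x\in C}\abs{G(x)-G_R(x)}\le\tau(R)$ does not make $G$ regular. A regular function is continuous off the closed null set formed by the boundaries of its finitely many representable supports, since each boundary lies in exception layers of arbitrarily small cost. That property is not stable under uniform limits in which the number of supports grows. Separately, the marginal tail--Fubini certificate gives $p_A\in\BReg{\R^n}$ and improper integrals pointwise in $x$, not the uniform tail bound you derive from it. You acknowledge this and strengthen the hypothesis, but the strengthened version still fails for the reason above.

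Here is an instance satisfying your strengthened hypotheses. Take $n=m=1$, and let $J_1,J_2,\ldots$ be the closures of the intervals removed in the Smith--Volterra--Cantor construction of \Cref{ex_located_not_representable}. Let $\psi\ge0$ be a continuous bump supported in $[-1,1]$ with unit integral, and let $\phi,\Psi_0$ be standard Gaussians. Put $c:=\min_{[0,1]}\phi$, $\Psi:=\Psi_0+c^{-1}\sum_k2^{-k}\psi(\cdot-4k-2)$, $Z:=\int\Psi$, and
\[
	p_{AC}(x,y):=Z^{-1}\Bigl(\phi(x)\Psi(y)+\sum_{k}2^{-k}\indic{J_k}(x)\bigl(\psi(y-4k)-\psi(y-4k-2)\bigr)\Bigr).
\]
Every window sees finitely many terms, $p_{AC}\ge0$, $p_A=\phi$, and all tails are uniform in $x$. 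With $g:=\sum_k\indic{[4k-1,4k+1]}\in\BReg{\R}$ one gets
\[
	\E[p_{C\mid A}]{g\mid x}=Z^{-1}\int g\Psi+Z^{-1}\phi(x)^{-1}\sum_k2^{-k}\indic{J_k}(x).
\]
Its jump points are dense in a set of measure $1/2$, so they cannot all lie in a closed null set, and it is not in $\BReg{\R}$. Yet your truncations $G_R$ are regular and converge uniformly.

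The paper's own two-line proof, window-wise \Cref{thm_fubini_rectangular_regular} plus the improper-integral witness, passes over exactly this limit. So you have located the real difficulty, but the limit step, and with it the statement as formulated, needs an extra datum. One option is a conditional analogue of the marginal tail--Fubini certificate that supplies $x\mapsto\int g(y)p_{AC}(x,y)\diff y$ directly as a member of $\BReg{\R^n}$. Another is $x$-supports independent of the $y$-window, together with uniform tails. With such a datum your Steps~1, 2 and~4 go through.
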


\begin{proof}
	The product
	\[
		(x,y)\mapsto g(y)p_{C\mid A}(y\mid x)
	\]
	is locally effectively bounded and locally rectangularly regular.
	The claim follows window by window from \Cref{thm_fubini_rectangular_regular} and from the improper-integral witness in \Cref{dfn_conditional_expectation_regular_density}.
\end{proof}

\subsection{Regular Markov Kernels}
\label{subsec_regular_markov_kernels}

\begin{definition}[Local regular policies and controlled kernels]
	\label{dfn_local_regular_policy_kernel}
	Let $\mathbb X\subset\R^n$, $\mathbb U\subset\R^m$, and $\mathbb Y\subset\R^n$ be locally representable.
	A function
	\[
		\pi:\mathbb X\times\mathbb U\to\R
	\]
	is called a \emph{local regular policy density} from $\mathbb X$ to $\mathbb U$ if the following data are supplied:
	\begin{enumerate}
		\item for every pair of full blocks $B_{\mathbb X}\subset\R^n$ and $B_{\mathbb U}\subset\R^m$, the restriction of $\pi$ to
		\[
			(\mathbb X\cap B_{\mathbb X})\times(\mathbb U\cap B_{\mathbb U})
		\]
		has an effectively bounded rectangularly regular local representative on a well-containing product block;
		\item for every determined $x\in\mathbb X$, the function $u\mapsto\pi(x,u)$, extended by zero outside $\mathbb U$ through the local representability witnesses, is a regular density on $\R^m$.
	\end{enumerate}
	A function
	\[
		K:\mathbb X\times\mathbb U\times\mathbb Y\to\R
	\]
	is called a \emph{local controlled regular Markov kernel} from $\mathbb X\times\mathbb U$ to $\mathbb Y$ if the analogous finite-window rectangular regularity holds on every product of full blocks and, for every determined $(x,u)\in\mathbb X\times\mathbb U$, the function $y\mapsto K(x,u,y)$, extended by zero outside $\mathbb Y$, is a regular density on $\R^n$.
\end{definition}

\begin{remark}
	\label{rem_local_markov_kernel_windows}
	The local representatives in \Cref{dfn_local_regular_policy_kernel} carry their own supports and determined domains, exactly as in \Cref{dfn_global_bounded_regular_functions}.
	Thus no decision is made about whether an arbitrary real point lies in $\mathbb X$, $\mathbb U$, or $\mathbb Y$.
	In the local Markov-chain statements, an integral over a locally representable set means the improper integral over the ambient Euclidean space of the function extended by zero outside that set through its local witnesses.
	When the state and action sets are full blocks, these local definitions reduce to the finite block definitions below.
\end{remark}

\begin{remark}
	\label{rem_markov_kernel_global_default}
	The finite block kernels below are localization tools.
	They are used to prove window-wise regularity and the finite identities of \Cref{thm_fubini_rectangular_regular}, while tail estimates are supplied separately when the state or action spaces are unbounded.
\end{remark}

\begin{definition}[Block regular Markov kernel]
	\label{dfn_regular_markov_kernel}
	Let $A\subset\R^n$ and $C\subset\R^m$ be full blocks.
	A function $K:A\times C\to\R$ is called a \emph{block regular Markov kernel from $A$ to $C$} if it is effectively bounded, nonnegative, and rectangularly regular, and if
	\[
		\int_CK(x,y)\diff y=1
	\]
	for every determined $x\in A$.
\end{definition}

\begin{proposition}[Kernel action on densities]
	\label{prop_kernel_action_density}
	Let $K$ be a block regular Markov kernel from $A$ to $C$.
	Let $\rho$ be a regular density on $A$.
	Define
	\[
		\mathcal K_K[\rho](y)
		:=
		\int_AK(x,y)\rho(x)\diff x.
	\]
	Then $\mathcal K_K[\rho]$ is a regular density on $C$.
\end{proposition}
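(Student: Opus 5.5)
The plan is to verify the three defining properties of a block regular density on $C$ from \Cref{dfn_regular_density}: membership in $\BReg{C}$, nonnegativity, and unit integral. Regularity and effective boundedness come directly from \Cref{cor_regular_kernel_action}. Here $K$ is effectively bounded and rectangularly regular on $A\times C$, and $\rho\in\BReg{A}$. So the map $y\mapsto\int_AK(x,y)\rho(x)\diff x$ is an effectively bounded regular function on $C$. Its bound witness is $M_KM_\rho\mu(A)$, where $M_K$ and $M_\rho$ are the bound witnesses of $K$ and $\rho$. One point needs checking: the supports of this function must be well contained in $C$. These supports are the $Y_q\Subset C$ appearing in the rectangular representation of the product $K\rho$, so I would record this explicitly from the proof of \Cref{cor_regular_kernel_action}.

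For nonnegativity, fix a determined $y\in C$. The integrand $x\mapsto K(x,y)\rho(x)$ is a product of nonnegative members of $\BReg{A}$. It is therefore a nonnegative member of $\BReg{A}$ by \Cref{lem_regular_functions_finite_algebra}. \Cref{lem_regular_integral_finite_order}, applied with $u=0$ and $v$ equal to this integrand, then gives $\mathcal K_K[\rho](y)\ge0$.

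For normalization, I would use \Cref{thm_fubini_rectangular_regular} on the rectangularly regular, effectively bounded function $h(x,y):=K(x,y)\rho(x)$ on $A\times C$. The proof of \Cref{cor_regular_kernel_action} shows that $h$ is of this type after intersecting the $x$-supports $X_q\cap R_i$. Fubini gives
\[
	\int_C\mathcal K_K[\rho](y)\diff y
	=
	\int_{A\times C}h
	=
	\int_A\rho(x)\left(\int_CK(x,y)\diff y\right)\diff x .
\]
The inner integral equals $1$ for every determined $x$, so the last expression is $\int_A\rho=1$.

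I expect the main obstacle to be this last substitution. The identity $\int_CK(x,y)\diff y=1$ is known only at determined $x$. Fubini returns $H_A(x)=\sum_q\rho_i^+(x)\Phi_q(x)\indic{X_q\cap R_i}(x)$ as a regular function, and we must conclude that $H_A$ and $\rho$ have the same integral. I would handle this on the regular representation of $\rho$. On each flattened support piece $X_q\cap R_i$, $H_A$ agrees with $\rho$ at every determined point. Both functions vanish on the exterior branches. Therefore $H_A-\rho$ is zero outside an exception multiblock of arbitrarily small cost, as in the proof of \Cref{lem_regular_functions_measurable}. Its tagged sums are then bounded by $2M\gamma(\mathcal E)$, so the two integrals coincide, exactly as in the proof of \Cref{lem_regular_functions_riemann_integrable}.
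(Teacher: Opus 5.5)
Your proposal is correct and follows essentially the paper's proof: regularity and the bound witness come from \Cref{cor_regular_kernel_action}, nonnegativity from \Cref{lem_regular_integral_finite_order}, and the unit integral from \Cref{thm_fubini_rectangular_regular} combined with the kernel normalization. Your additional arguments fill in details the paper passes over in its one-line equality chain. These are the check that the supports $Y_q$ are well contained in $C$, and the justification that $\int_A H_A=\int_A\rho$ because the two regular functions agree wherever both are determined, hence outside an exception multiblock of arbitrarily small cost.
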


\begin{proof}
	Regularity and effective boundedness follow from \Cref{cor_regular_kernel_action}.
	Nonnegativity follows from \Cref{lem_regular_integral_finite_order}.
	By \Cref{thm_fubini_rectangular_regular} and the kernel normalization,
	\[
		\int_C\mathcal K_K[\rho](y)\diff y
		=
		\int_A\rho(x)\left(\int_CK(x,y)\diff y\right)\diff x
		=
		\int_A\rho(x)\diff x
		=
		1.
	\]
\end{proof}

\begin{definition}[Local controlled regular Markov kernel]
	\label{dfn_controlled_regular_markov_kernel}
	Let $\mathbb X\subset\R^n$, $\mathbb U\subset\R^m$, and $\mathbb Y\subset\R^n$ be locally representable.
	A function
	\[
		K:\mathbb X\times\mathbb U\times\mathbb Y\to\R
	\]
	is called a \emph{local controlled regular Markov kernel} if it satisfies the kernel part of \Cref{dfn_local_regular_policy_kernel}.
\end{definition}

\begin{definition}[Local regular policy density]
	\label{dfn_regular_policy_density}
	Let $\mathbb X\subset\R^n$ and $\mathbb U\subset\R^m$ be locally representable.
	A function
	\[
		\pi:\mathbb X\times\mathbb U\to\R
	\]
	is called a \emph{local regular policy density} from $\mathbb X$ to $\mathbb U$ if it satisfies the policy part of \Cref{dfn_local_regular_policy_kernel}.
\end{definition}

\begin{proposition}[One-step controlled density update]
	\label{prop_controlled_density_update}
	Let $\mathbb X\subset\R^n$, $\mathbb U\subset\R^m$, and $\mathbb Y\subset\R^n$ be locally representable.
	Let $\rho$ be a regular density on $\R^n$ supported on $\mathbb X$.
	Let $\pi$ be a local regular policy density from $\mathbb X$ to $\mathbb U$.
	Let $K$ be a local controlled regular Markov kernel from $\mathbb X\times\mathbb U$ to $\mathbb Y$.
	Assume that the improper integral below has a one-step tail--Fubini certificate.
	Define
	\[
		\rho^+(y)
		:=
		\int_{\mathbb X\times\mathbb U}K(x,u,y)\pi(x,u)\rho(x)\diff(x,u).
	\]
	Then $\rho^+$, extended by zero outside $\mathbb Y$, is a regular density on $\R^n$ supported on $\mathbb Y$.
\end{proposition}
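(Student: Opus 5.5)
The plan is to reduce the claim to finite block computations on windows, using the local data in \Cref{dfn_local_regular_policy_kernel} and the supplied one-step tail--Fubini certificate. There are three things to establish: \emph{local regularity} with effective bounds, \emph{nonnegativity}, and \emph{normalization} via the improper integral.

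\emph{Local regularity and bounds.} Fix a full block $B_{\mathbb Y}\subset\R^n$ in the output variable. For a pair of full blocks $B_{\mathbb X}\subset\R^n$ and $B_{\mathbb U}\subset\R^m$, take the local rectangularly regular representatives of $K$ and $\pi$ on well-containing blocks, together with a local representative of $\rho$ in $\BReg{\R^n}$. As in the proof of \Cref{cor_regular_kernel_action}, I would intersect the $x$-supports of $K$, $\pi$ and $\rho$ using \Cref{lem_representable_finite_intersection}, and the $u$-supports of $K$ and $\pi$ likewise. This writes the window integrand $(x,u,y)\mapsto K(x,u,y)\pi(x,u)\rho(x)$ as a rectangularly regular function on $(B_{\mathbb X}\times B_{\mathbb U})\times B_{\mathbb Y}$, with the block split as the $(x,u)$-factor times the $y$-factor, using \Cref{lem_representable_product}. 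Its bound witness is the product of the three local bounds.

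Then \Cref{thm_fubini_rectangular_regular}, or directly \Cref{cor_regular_kernel_action}, gives that the window integral
\[
	\rho^+_{R}(y):=\int_{B_R^n\times B_R^m}K(x,u,y)\pi(x,u)\rho(x)\diff(x,u)
\]
is effectively bounded regular on $B_{\mathbb Y}$. The certificate, read as in \Cref{dfn_joint_marginal_regular_density}, supplies the passage $R\to\infty$. It gives a local effectively bounded regular representative of the improper integral $\rho^+$ on each window, so $\rho^+\in\BReg{\R^n}$. Supportedness on $\mathbb Y$ comes from the zero extension of $K$ outside $\mathbb Y$ through the local witnesses. Nonnegativity follows from \Cref{lem_regular_integral_finite_order} on each window, then passing to the limit.

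\emph{Normalization.} This is the main obstacle, since it needs an interchange of improper integrations. On finite windows, \Cref{thm_fubini_rectangular_regular} gives
\[
	\int_{B_S}\rho^+_R(y)\diff y=\int_{B_R^n\times B_R^m}\pi(x,u)\rho(x)\Bigl(\int_{B_S}K(x,u,y)\diff y\Bigr)\diff(x,u).
\]
The one-step certificate, which I take as the analogue of the marginal certificate, lets me let $S\to\infty$ and then $R\to\infty$ in either order. Using the kernel normalization $\int K(x,u,y)\diff y=1$ for determined $(x,u)$ reduces the expression to $\int\rho(x)\bigl(\int\pi(x,u)\diff u\bigr)\diff x$. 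A second application of the same interchange, with the policy normalization, reduces this to $\int\rho=1$.

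The delicate point is that the inner normalizations hold only pointwise for determined arguments, whereas the windowed integrals are Riemann limits. I would handle this by the exception-multiblock estimate of \Cref{lem_regular_functions_riemann_integrable}: undetermined boundary layers have arbitrarily small cost times the bound witness. Monotone windows, together with the nonnegativity order estimates, then make the Cauchy witnesses of the improper limits explicit.
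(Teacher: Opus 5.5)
Your proposal is correct and follows essentially the same route as the paper. Window-wise rectangular regularity of $K\pi\rho$ in the split $(x,u)\mid y$, combined with \Cref{thm_fubini_rectangular_regular}, gives local regularity; \Cref{lem_regular_integral_finite_order} gives nonnegativity; and the one-step tail--Fubini certificate justifies the successive integration that collapses to $\int\rho=1$ via the kernel and policy normalizations. You are more careful than the paper in noting that Fubini only handles the window integrals $\rho^+_R$, so the certificate must supply local regular representatives of the improper integral $\rho^+$, whereas the paper cites Fubini directly. One small imprecision: if $K$ is rectangular only in the split $(x,u)\mid y$, you should intersect its $(x,u)$-supports with the product supports coming from $\pi$ and $\rho$, rather than intersecting separate $x$- and $u$-supports of $K$.
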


\begin{proof}
	The product
	\[
		(x,u,y)\mapsto K(x,u,y)\pi(x,u)\rho(x)
	\]
	is effectively bounded and rectangularly regular as a function of $(x,u)$ and $y$ on every product of full localization blocks, by the local representatives of $K$, $\pi$, and $\rho$.
	Hence $\rho^+$ is locally regular by \Cref{thm_fubini_rectangular_regular}.
	It is nonnegative by \Cref{lem_regular_integral_finite_order}.
	The one-step tail--Fubini certificate supplies the improper Cauchy witness and permits the following successive integration:
	\[
	\begin{aligned}
		\int_{\mathbb Y}\rho^+(y)\diff y
		&=
		\int_{\mathbb X}\rho(x)
		\left(
			\int_{\mathbb U}\pi(x,u)
			\left(
				\int_{\mathbb Y}K(x,u,y)\diff y
			\right)
			\diff u
		\right)
		\diff x
		\\
		&=
		\int_{\mathbb X}\rho(x)\diff x
		=
		1.
	\end{aligned}
	\]
\end{proof}

\subsection{Finite-Horizon Density Calculus}
\label{subsec_finite_horizon_density_calculus}

\begin{definition}[Local finite-horizon controlled density model]
	\label{dfn_local_finite_horizon_density_model}
	A local finite-horizon controlled density model consists of:
	\begin{enumerate}
		\item locally representable state sets $\mathbb X_0,\ldots,\mathbb X_T$,
		\item locally representable action sets $\mathbb U_0,\ldots,\mathbb U_{T-1}$,
		\item an initial regular density $\rho_0$ on $\R^n$ supported on $\mathbb X_0$,
		\item local regular policy densities $\pi_t:\mathbb X_t\times\mathbb U_t\to\R$ for $t=0,\ldots,T-1$,
		\item local controlled regular Markov kernels $K_t:\mathbb X_t\times\mathbb U_t\times\mathbb X_{t+1}\to\R$ for $t=0,\ldots,T-1$.
	\end{enumerate}
	The corresponding trajectory set is
	\[
		\mathcal Z_T
		:=
		\mathbb X_0\times \mathbb U_0\times \mathbb X_1\times \mathbb U_1\times\cdots\times \mathbb U_{T-1}\times \mathbb X_T.
	\]
	A \emph{finite-horizon tail--Fubini certificate} for the model is a witness that the trajectory product
	\[
		p_T(z)
		:=
		\rho_0(x_0)
		\prod_{t=0}^{T-1}
		\pi_t(x_t,u_t)K_t(x_t,u_t,x_{t+1})
	\]
	has an improper integral over $\mathcal Z_T$, that the tails outside sufficiently large product blocks have arbitrarily small integral, and that the successive integrations in the order
	\[
		x_T,\ u_{T-1},\ x_{T-1},\ldots,u_0,\ x_0
	\]
	are the corresponding improper limits.
\end{definition}

\begin{remark}
	\label{rem_engineer_tail_certificate}
	The engineer-friendly way to provide the certificate in \Cref{dfn_local_finite_horizon_density_model} is not to reason about an abstract trajectory space.
	It is to give finite truncation blocks and explicit tail bounds for the initial density, policies, and kernels, uniformly on the previously retained truncation blocks.
	\Cref{prop_sufficient_finite_horizon_tail_certificate} records this finite check.
\end{remark}

\begin{definition}[Block finite-horizon controlled density model]
	\label{dfn_finite_horizon_density_model}
	A block finite-horizon controlled density model is the special case of \Cref{dfn_local_finite_horizon_density_model} in which the data are carried by:
	\begin{enumerate}
		\item full state blocks $\mathbb X_0,\ldots,\mathbb X_T$,
		\item full action blocks $\mathbb U_0,\ldots,\mathbb U_{T-1}$,
		\item an initial regular density $\rho_0$ on $\mathbb X_0$,
		\item block regular policy densities $\pi_t:\mathbb X_t\times \mathbb U_t\to\R$ for $t=0,\ldots,T-1$, meaning block regular Markov kernels from $\mathbb X_t$ to $\mathbb U_t$,
		\item block controlled regular Markov kernels $K_t:\mathbb X_t\times \mathbb U_t\times \mathbb X_{t+1}\to\R$ for $t=0,\ldots,T-1$, meaning block regular Markov kernels from $\mathbb X_t\times\mathbb U_t$ to $\mathbb X_{t+1}$.
	\end{enumerate}
\end{definition}

\begin{definition}[Trajectory density]
	\label{dfn_trajectory_density}
	For a block finite-horizon controlled density model, define the trajectory block
	\[
		\mathcal Z_T
		:=
		\mathbb X_0\times \mathbb U_0\times \mathbb X_1\times \mathbb U_1\times\cdots\times \mathbb U_{T-1}\times \mathbb X_T.
	\]
	For
	\[
		z=(x_0,u_0,x_1,u_1,\ldots,u_{T-1},x_T)\in\mathcal Z_T,
	\]
	set
	\[
		p_T(z)
		:=
		\rho_0(x_0)
		\prod_{t=0}^{T-1}
		\pi_t(x_t,u_t)K_t(x_t,u_t,x_{t+1}).
	\]
\end{definition}

\begin{theorem}[Block finite-horizon trajectory density]
	\label{thm_finite_horizon_trajectory_density}
	For every block finite-horizon controlled density model, the function $p_T$ is a regular density on $\mathcal Z_T$.
\end{theorem}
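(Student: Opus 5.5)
The plan is induction on the horizon $T$, with each step being one application of the block kernel calculus of \Cref{prop_kernel_action_density} together with \Cref{thm_fubini_rectangular_regular}. For $T=0$ we have $p_0=\rho_0$ on $\mathcal Z_0=\mathbb X_0$, which is a block regular density by assumption. For the inductive step, write $z=(z',u_{T-1},x_T)$ with $z'\in\mathcal Z_{T-1}$. Then
\[
	p_T(z)=p_{T-1}(z')\,\pi_{T-1}(x_{T-1},u_{T-1})\,K_{T-1}(x_{T-1},u_{T-1},x_T).
\]
We will verify three things: regularity with a bound witness, nonnegativity, and normalization.

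\textbf{Regularity and nonnegativity.} Each factor is a rectangularly regular function in its own variables. To see $p_T$ as a regular function on the full product block $\mathcal Z_T$, lift each factor to $\mathcal Z_T$ by multiplying with indicators of the remaining full block coordinates. By \Cref{lem_representable_product}, each lifted support (a product of a representable set with full blocks) is representable, and each branch stays continuous after composing with a coordinate projection. The finite product is then in $\BReg{\mathcal Z_T}$ by \Cref{lem_regular_functions_finite_algebra}. The bound witness is the product of the bound witnesses of $\rho_0$, the $\pi_t$, and the $K_t$. Nonnegativity holds pointwise, since every factor is nonnegative on its determined domain.

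\textbf{Normalization.} We integrate successively in the order $x_T,u_{T-1}$, then apply the induction hypothesis. To do this, $p_T$ must be rectangularly regular with respect to the split $(z',u_{T-1})$ versus $x_T$, and then $(z')$ versus $u_{T-1}$. This holds because each factor's supports are products in the relevant coordinates. After flattening, we intersect the $x_{T-1}$-supports of $K_{T-1}$, $\pi_{T-1}$ and $p_{T-1}$ using \Cref{lem_representable_finite_intersection}, exactly as in the proof of \Cref{cor_regular_kernel_action}. With that in place, \Cref{thm_fubini_rectangular_regular} gives
\[
	\int_{\mathcal Z_T}p_T
	=\int_{\mathcal Z_{T-1}\times\mathbb U_{T-1}}p_{T-1}(z')\pi_{T-1}(x_{T-1},u_{T-1})\Bigl(\int_{\mathbb X_T}K_{T-1}\diff x_T\Bigr)\diff(z',u_{T-1}).
\]
Kernel normalization makes the inner integral equal to $1$ on determined points. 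The exceptional set is contained in finitely many exception layers and contributes nothing, by the bound witness times an arbitrarily small cost. A second application, using normalization of $\pi_{T-1}$, reduces the right-hand side to $\int_{\mathcal Z_{T-1}}p_{T-1}=1$ by the induction hypothesis.

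\textbf{Main obstacle.} The hard step is justifying that $p_T$ is genuinely rectangularly regular for the needed splits. The kernel factor depends on $x_{T-1}$, which is shared with $p_{T-1}$ and $\pi_{T-1}$, so the support of $p_T$ is not automatically a product of an $(z',u_{T-1})$-set and an $x_T$-set. I would handle this by flattening all $x_{T-1}$- and $u_{T-1}$-supports first. Every piece of every factor then has the form $(\text{representable in }z')\times Y_q$ with $Y_q$ depending only on $x_T$, and continuous branches multiply into continuous branches on these product pieces. This makes the Fubini theorem applicable at each stage.
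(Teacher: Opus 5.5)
Your proposal is correct and is essentially the paper's proof. Regularity and the bound witness come from finite product closure with representable product supports, nonnegativity holds factorwise, and normalization follows by integrating backwards in $x_T, u_{T-1}, x_{T-1},\ldots$ using \Cref{thm_fubini_rectangular_regular} and the kernel and policy normalizations until only $\int_{\mathbb X_0}\rho_0=1$ remains, with your induction on $T$ and the intersection of lifted $(x_{T-1},u_{T-1})$-supports as in \Cref{cor_regular_kernel_action} simply making explicit the rectangular-regularity bookkeeping that the paper covers with ``continue in the same way''.
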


\begin{proof}
	Regularity and effective boundedness follow from finite product closure of regular functions and representability of finite products of supports.
	Nonnegativity follows from nonnegativity of the factors.
	For normalization, integrate successively from the last state and action variables backwards:
	\[
		\int_{\mathbb X_T}K_{T-1}(x_{T-1},u_{T-1},x_T)\diff x_T=1,
	\]
	then
	\[
		\int_{\mathbb U_{T-1}}\pi_{T-1}(x_{T-1},u_{T-1})\diff u_{T-1}=1,
	\]
	and continue in the same way.
	Each step is justified by \Cref{thm_fubini_rectangular_regular}.
	The last remaining integral is
	\[
		\int_{\mathbb X_0}\rho_0(x_0)\diff x_0=1.
	\]
\end{proof}

\begin{definition}[State densities and visitation density]
	\label{dfn_state_visitation_density}
	For a local finite-horizon controlled density model with the required tail--Fubini certificates, define state densities recursively by
	\[
		\rho_{t+1}(y)
		:=
		\int_{\mathbb X_t\times\mathbb U_t}
		K_t(x,u,y)\pi_t(x,u)\rho_t(x)
		\diff(x,u).
	\]
	The finite-horizon visitation density is
	\[
		\nu_T(x)
		:=
		\sum_{t=0}^T \rho_t(x)
	\]
	when the state spaces are identified with a common locally representable set.
	The normalized visitation density is
	\[
		\bar\nu_T(x):=\frac{1}{T+1}\nu_T(x).
	\]
\end{definition}

\begin{corollary}[Finite-horizon state densities]
	\label{cor_finite_horizon_state_densities}
	Every $\rho_t$ in \Cref{dfn_state_visitation_density} is a regular density on the ambient Euclidean state space.
	If all state spaces coincide with a common locally representable set $\mathbb X$ and the finite sum has the inherited improper-tail witness, then $\nu_T$ is a locally regular function supported on $\mathbb X$ with
	\[
		\int_{\R^n}\nu_T(x)\diff x=T+1,
	\]
	and $\bar\nu_T$ is a regular density on $\R^n$ supported on $\mathbb X$.
\end{corollary}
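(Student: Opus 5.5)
The plan is an induction on $t$ driven by \Cref{prop_controlled_density_update}, followed by finite linearity for the visitation sums.

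For the first claim, $\rho_0$ is a regular density on $\R^n$ supported on $\mathbb X_0$ by assumption. Assume $\rho_t$ is a regular density on $\R^n$ supported on $\mathbb X_t$. Then $\rho_{t+1}$ is exactly the one-step update of \Cref{prop_controlled_density_update}, applied with
\[
	\mathbb X:=\mathbb X_t,\qquad
	\mathbb U:=\mathbb U_t,\qquad
	\mathbb Y:=\mathbb X_{t+1},\qquad
	\pi:=\pi_t,\qquad
	K:=K_t.
\]
That proposition needs a one-step tail--Fubini certificate. I would extract it from the certificates assumed in \Cref{dfn_state_visitation_density}. Where only the finite-horizon certificate of \Cref{dfn_local_finite_horizon_density_model} is given, I would obtain it by integrating out the variables $x_0,u_0,\ldots,u_{t-1}$ of the trajectory product $p_T$. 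Since the kernels and policies after time $t$ integrate to one, the remaining tails are controlled by the tails of $p_T$.

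Concretely, I would show by the same induction that $\rho_t$ is the $x_t$-marginal of the truncated trajectory product $p_t$. I would also show that the tail of $\rho_{t+1}$ outside a block is bounded by the tail of $p_{t+1}$ outside the corresponding product block. This bookkeeping is the main obstacle: one has to check that the successive improper limits in the certified order really produce the recursion, so that the improper limits can be exchanged. I would handle it by working on increasing product truncation blocks, using \Cref{thm_fubini_rectangular_regular} on each finite window, and letting the truncation grow with the supplied Cauchy witnesses.

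Once the one-step certificate is in hand, \Cref{prop_controlled_density_update} gives that $\rho_{t+1}$ is a regular density supported on $\mathbb X_{t+1}$, which completes the induction.

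For the visitation density, take $\mathbb X_t=\mathbb X$ for all $t$. A finite sum of elements of $\BReg{\R^n}$ lies in $\BReg{\R^n}$ by \Cref{lem_global_breg_local_algebra}, and its local supports stay in $\mathbb X$. The inherited improper-tail witness is the sum of the individual Cauchy witnesses, with accuracy split as $\eps/(T+1)$. Linearity of the improper integral, as in \Cref{prop_expectation_rules_regular_density}, then gives
\[
	\int_{\R^n}\nu_T(x)\diff x=\sum_{t=0}^T\int_{\R^n}\rho_t(x)\diff x=T+1.
\]
Scaling by $1/(T+1)$ keeps $\bar\nu_T$ in $\BReg{\R^n}$ and nonnegative, with integral $1$, so $\bar\nu_T$ is a regular density on $\R^n$ supported on $\mathbb X$.
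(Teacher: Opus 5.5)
Your proposal is correct and follows the paper's proof: induction on $t$ through \Cref{prop_controlled_density_update}, then closure of $\BReg{\R^n}$ under finite sums and linearity of the improper integral to get $\int_{\R^n}\nu_T=T+1$, and scaling by $1/(T+1)$ to normalize $\bar\nu_T$. You do not need to extract one-step tail--Fubini certificates from the finite-horizon certificate, because \Cref{dfn_state_visitation_density} already assumes ``the required tail--Fubini certificates'' for each recursive step; the step you call the main obstacle can therefore be dropped.
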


\begin{proof}
	Regularity and normalization of $\rho_{t+1}$ follow inductively from \Cref{prop_controlled_density_update}.
	Finite sums preserve regularity by \Cref{lem_regular_functions_finite_algebra}.
	Linearity of the integral gives
	\[
		\int_{\R^n}\nu_T(x)\diff x
		=
		\sum_{t=0}^T\int_{\R^n}\rho_t(x)\diff x
		=
		T+1.
	\]
\end{proof}

\begin{theorem}[Local finite-horizon trajectory density]
	\label{thm_local_finite_horizon_trajectory_density}
	For a local finite-horizon controlled density model equipped with a finite-horizon tail--Fubini certificate, the trajectory product $p_T$ from \Cref{dfn_local_finite_horizon_density_model}, extended by zero outside $\mathcal Z_T$, is a regular density on the ambient Euclidean trajectory space.
\end{theorem}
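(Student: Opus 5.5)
We need to show three things about the trajectory product $p_T$, extended by zero outside $\mathcal Z_T$: it lies locally in $\BReg{\cdot}$ on the ambient trajectory space, it is nonnegative, and its improper integral equals $1$. The plan is to follow the proof of \Cref{thm_finite_horizon_trajectory_density} window by window. The improper-limit passages that the block proof did not need are exactly what the finite-horizon tail--Fubini certificate of \Cref{dfn_local_finite_horizon_density_model} supplies.

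\emph{Local regularity.} Fix a full product block $W=B_0\times C_0\times B_1\times\cdots\times C_{T-1}\times B_T$ in trajectory space.
\begin{itemize}
\item The data supply local representatives on well-containing blocks: one for $\rho_0$ by \Cref{dfn_global_bounded_regular_functions}, and effectively bounded rectangularly regular ones for each $\pi_t$ and $K_t$ by \Cref{dfn_local_regular_policy_kernel}.
\item Each factor, pulled back along the coordinate projection of $W$ onto its own variables, is a regular function on $W$. Its supports are products of its original supports with full blocks in the remaining coordinates, which are representable by \Cref{lem_representable_product}.
\item The finite product of these factors lies in $\BReg{\cdot}$ of the product block by \Cref{lem_regular_functions_finite_algebra}, after finite flattening. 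Its bound witness is the product of the individual bound witnesses.
\item The zero extension outside $\mathcal Z_T$ is already encoded: every factor carries zero exterior branches coming from the local representability witnesses of the $\mathbb X_t$ and $\mathbb U_t$, as in \Cref{rem_global_breg_locally_representable_domain}.
\end{itemize}
This gives $p_T\in\BReg{\cdot}$ on the ambient space. Nonnegativity holds on the determined domain because every factor is nonnegative there.

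\emph{Normalization.} The certificate provides the existence of the improper integral of $p_T$ and the small tails outside large product blocks. So it remains only to identify the value of the integral as $1$. The certificate also asserts that the improper integral equals the successive improper integration in the order $x_T,u_{T-1},x_{T-1},\ldots,u_0,x_0$. Evaluate that iterated integral from the inside out:
\begin{itemize}
\item For every determined $(x_{T-1},u_{T-1})$, the innermost integral $\int K_{T-1}(x_{T-1},u_{T-1},x_T)\diff x_T$ equals $1$, because $y\mapsto K_{T-1}(x_{T-1},u_{T-1},y)$ extended by zero is a regular density.
\item Next, $\int\pi_{T-1}(x_{T-1},u_{T-1})\diff u_{T-1}=1$ by the density property of the policy. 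Factors not depending on the integration variable are pulled out by linearity of the improper integral, as in \Cref{prop_expectation_rules_regular_density}.
\item Inducting downward in $t$ leaves $\int\rho_0=1$.
\end{itemize}

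\emph{Main obstacle.} The delicate point is that each intermediate inner integral must be a legitimate local $\BReg{\cdot}$ function of the remaining variables, so that the next improper integral makes sense. Here I would invoke \Cref{thm_fubini_rectangular_regular} and \Cref{cor_regular_kernel_action} on each finite window to get local regularity of the partial integrals. The certificate's clause that the successive integrations are the corresponding improper limits then fixes their global values. Once an inner integral is known to equal the constant $1$ on the determined domain, the identification is immediate and no further tail argument is needed.
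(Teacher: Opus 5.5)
Your proposal is correct and follows essentially the same route as the paper. Local regularity is obtained window by window from the local representatives of $\rho_0$, $\pi_t$, $K_t$ via finite product closure; nonnegativity comes from the factors; the tail part of the certificate gives the improper Cauchy witness; and normalization follows from the certified backward successive integration over $x_T,u_{T-1},\ldots,x_0$, using the kernel and policy normalizations down to $\int\rho_0=1$.
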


\begin{proof}
	Fix a full trajectory block.
	The local representatives of $\rho_0$, the policies, and the kernels on the finitely many coordinate windows give finite rectangularly regular data on a well-containing product block.
	Finite product closure of regular functions gives a local representative of $p_T$ on this window.
	Thus $p_T$ is locally effectively bounded and regular.
	Nonnegativity follows from nonnegativity of the local factors.
	The tail part of the certificate gives the Cauchy witness for the improper integral.
	For normalization, use the certified successive integration order:
	\[
		\int_{\mathbb X_T}K_{T-1}(x_{T-1},u_{T-1},x_T)\diff x_T=1,
	\]
	then
	\[
		\int_{\mathbb U_{T-1}}\pi_{T-1}(x_{T-1},u_{T-1})\diff u_{T-1}=1,
	\]
	and continue backwards.
	The last remaining integral is
	\[
		\int_{\mathbb X_0}\rho_0(x_0)\diff x_0=1.
	\]
	Hence the improper integral of $p_T$ over $\mathcal Z_T$ is $1$.
\end{proof}

\begin{remark}
	\label{rem_local_markov_tail_certificate}
	Local representability of $\mathbb X_t$ and $\mathbb U_t$ is the right finite-window condition, but it is not a tail condition.
	For the block-local special case in \Cref{dfn_finite_horizon_density_model}, the tail--Fubini certificate is trivial.
	For unbounded spaces it is additional analytic data, just as the Gaussian density in \Cref{ex_gaussian_global_regular_density} needs an explicit tail estimate in addition to local regularity.
\end{remark}

\begin{proposition}[A sufficient finite-horizon tail certificate]
	\label{prop_sufficient_finite_horizon_tail_certificate}
	Consider a local finite-horizon controlled density model.
	Suppose that, for every rational $R>0$, full blocks
	\[
		B^X_t(R)\subset\R^n,
		\qquad
		B^U_t(R)\subset\R^m
	\]
	are supplied, increasing with $R$, together with nonnegative numbers
	\[
		\alpha_0(R),\qquad
		\alpha^U_t(R),\qquad
		\alpha^X_{t+1}(R)
	\]
	such that
	\[
		\Delta_R
		:=
		\alpha_0(R)
		+
		\sum_{t=0}^{T-1}\alpha^U_t(R)
		+
		\sum_{t=0}^{T-1}\alpha^X_{t+1}(R)
	\]
	tends to $0$ as $R\to\infty$ and
	\[
		\int_{\R^n\setminus B^X_0(R)}\rho_0(x)\diff x
		\le
		\alpha_0(R),
	\]
	\[
		\int_{\R^m\setminus B^U_t(R)}\pi_t(x,u)\diff u
		\le
		\alpha^U_t(R)
		\quad
		(x\in B^X_t(R)),
	\]
	and
	\[
		\int_{\R^n\setminus B^X_{t+1}(R)}K_t(x,u,y)\diff y
		\le
		\alpha^X_{t+1}(R)
		\quad
		((x,u)\in B^X_t(R)\times B^U_t(R)).
	\]
	Then the model has a finite-horizon tail--Fubini certificate.
\end{proposition}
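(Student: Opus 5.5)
The plan is to build the certificate from a single truncation block per $R$ and a telescoping union bound. Put
\[
	\mathcal B_R:=B^X_0(R)\times B^U_0(R)\times B^X_1(R)\times\cdots\times B^U_{T-1}(R)\times B^X_T(R).
\]
On each such product block the local representatives of $\rho_0,\pi_t,K_t$ give a finite rectangularly regular representative of $p_T$, by finite product closure (\Cref{lem_regular_functions_finite_algebra}) and \Cref{lem_representable_product}. Hence the finite integrals $I_R:=\int_{\mathcal B_R}p_T$ exist. On $\mathcal B_R$ we may also integrate successively in the order $x_T,u_{T-1},\ldots,x_0$ by repeated use of \Cref{thm_fubini_rectangular_regular}.

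The key estimate is a lower bound $I_R\ge1-\Delta_R$. Define the truncated backward iterates. Set $V_T\equiv1$, and for $t=T-1,\ldots,0$ set
\[
	W_t(x_t,u_t):=\int_{B^X_{t+1}(R)}K_t(x_t,u_t,x_{t+1})V_{t+1}(x_{t+1})\diff x_{t+1},
\]
\[
	V_t(x_t):=\int_{B^U_t(R)}\pi_t(x_t,u_t)W_t(x_t,u_t)\diff u_t.
\]
We show by backward induction that $0\le V_t\le1$ and
\[
	V_t\ge1-\sum_{s\ge t}\bigl(\alpha^U_s(R)+\alpha^X_{s+1}(R)\bigr)
\]
on $B^X_t(R)$. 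The inductive step uses $1-V_{t+1}\le c$ together with kernel normalization: the mass of $K_t$ outside $B^X_{t+1}(R)$ is at most $\alpha^X_{t+1}(R)$, which gives $W_t\ge1-\alpha^X_{t+1}-c$. In the same way the policy normalization and the bound $\alpha^U_t$ give the estimate for $V_t$. To combine the error terms we use the elementary inequality $ab\ge a+b-1$ for $a,b\in[0,1]$, or directly $\int k(1-\delta)\ge\int k-\delta$ for a subdensity $k$. Order preservation comes from \Cref{lem_regular_integral_finite_order}. Finally, $I_R=\int_{B^X_0(R)}\rho_0V_0\ge1-\alpha_0(R)-\sum(\cdots)=1-\Delta_R$. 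The same induction without truncation errors gives $I_R\le1$.

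Next comes the Cauchy and tail witness. Since the blocks increase with $R$ and $p_T\ge0$, for $S\ge R$ we get $0\le I_S-I_R\le1-I_R\le\Delta_R$. The same holds for any product block $\mathcal C\supseteq\mathcal B_R$: the upper bound $\int_{\mathcal C}p_T\le1$ follows from the same backward integration with arbitrary blocks. Hence the integral of $p_T$ over $\mathcal C\setminus\mathcal B_R$ is at most $\Delta_R\to0$. This supplies both the improper Cauchy witness (comparing cubes $[-S,S]^N$ to $\mathcal B_R$ after choosing $S$ large enough that the cube contains $\mathcal B_R$) and the smallness of tails, with limit $1$.

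For the successive improper integrations in the prescribed order, each inner improper integral is approximated by its truncation at $\mathcal B_R$. The discrepancy at every stage is bounded by the corresponding partial sum of the $\alpha$'s, uniformly on the retained outer blocks, by the same induction. So the iterated truncated integrals converge to the iterated improper ones, and these equal $\lim I_R$.

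I expect the main obstacle to be this last step. The difficulty is making sure the inner improper integrals are themselves locally regular in the outer variables and that the truncation errors are uniform there. Outside the retained blocks the hypotheses give no tail control, so the argument must only ever evaluate inner tails on $B^X_t(R)\times B^U_t(R)$ and absorb everything else into the outer tail. I would organize the argument so that every comparison is made on the nested blocks $\mathcal B_R$.
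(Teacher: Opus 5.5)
Your proposal is correct and is essentially the paper's argument: you truncate to the product block $\mathcal B_R$, telescope the per-step tail losses $\alpha_0,\alpha^U_t,\alpha^X_{t+1}$ to get $1-\Delta_R\le\int_{\mathcal B_R}p_T\le 1$, and read off the Cauchy witness. The differences are minor: you run the recursion backward through the functions $V_t$, matching the certificate's order $x_T,u_{T-1},\ldots,x_0$, where the paper propagates forward retained masses $m_t(R),\tilde m_t(R)$; you also explicitly bound $\int_{\mathcal C}p_T\le 1$ on arbitrary product blocks $\mathcal C\supseteq\mathcal B_R$ to compare with the cubes of the improper-integral definition, which the paper leaves implicit; and the step you flag as the main obstacle is easy, since by normalization of $K_t(x,u,\cdot)$, $\pi_t(x,\cdot)$ and $\rho_0$ each successive improper integration integrates one normalized factor to $1$, so the iterated improper integral is $1=\lim_R\int_{\mathcal B_R}p_T$ with no uniform truncation estimates needed outside the retained blocks.
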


\begin{proof}
	For a fixed $R$, define the retained trajectory block
	\[
		\mathcal B_R
		:=
		B^X_0(R)\times B^U_0(R)\times\cdots\times B^U_{T-1}(R)\times B^X_T(R).
	\]
	Let $m_0(R)$ be the retained initial mass,
	\[
		m_0(R):=\int_{B^X_0(R)}\rho_0(x)\diff x.
	\]
	Then
	\[
		1-m_0(R)
		\le
		\alpha_0(R).
	\]
	Assume inductively that $m_t(R)$ is the retained mass after integrating up to the state block $B^X_t(R)$.
	The retained mass after the action truncation is
	\[
		\tilde m_t(R)
		:=
		\int_{B^X_t(R)}
		\rho_t^R(x)
		\left(
			\int_{B^U_t(R)}\pi_t(x,u)\diff u
		\right)
		\diff x,
	\]
	where $\rho_t^R$ denotes the truncated state density produced by the previous finite integrations.
	Since
	\[
		\int_{\R^m\setminus B^U_t(R)}\pi_t(x,u)\diff u
		\le
		\alpha^U_t(R)
	\]
	on $B^X_t(R)$ and $0\le m_t(R)\le1$, we get
	\[
		m_t(R)-\tilde m_t(R)
		\le
		\alpha^U_t(R).
	\]
	After the next-state truncation, the retained mass is
	\[
		m_{t+1}(R)
		:=
		\int_{B^X_t(R)\times B^U_t(R)}
		\rho_t^R(x)\pi_t(x,u)
		\left(
			\int_{B^X_{t+1}(R)}K_t(x,u,y)\diff y
		\right)
		\diff(x,u).
	\]
	The kernel tail bound gives
	\[
		\tilde m_t(R)-m_{t+1}(R)
		\le
		\alpha^X_{t+1}(R).
	\]
	All displayed quantities are finite integrals over explicit full blocks, and the repeated integrations are justified by \Cref{thm_fubini_rectangular_regular}.
	Iterating the two inequalities yields
	\[
		0\le
		1-m_T(R)
		\le
		\Delta_R.
	\]
	But $m_T(R)$ is precisely
	\[
		\int_{\mathcal B_R}p_T(z)\diff z
	\]
	computed by the indicated finite successive integrations.
	Since $\Delta_R\to0$, the partial trajectory integrals are Cauchy and converge to $1$.
	The same finite recursive construction supplies the successive improper integration order required in \Cref{dfn_local_finite_horizon_density_model}.
\end{proof}

\begin{example}[Gaussian-tailed finite horizon]
	\label{ex_gaussian_tailed_finite_horizon}
	Suppose the initial density has a Gaussian tail, and suppose that on each chosen block $B^X_t(R)\times B^U_t(R)$ the transition density satisfies a Gaussian tail estimate of the form
	\[
		\int_{\R^n\setminus B^X_{t+1}(R)}K_t(x,u,y)\diff y
		\le
		c_t\exp(-a_tR^2),
	\]
	with analogous estimates for the policy tails.
	Then the quantities in \Cref{prop_sufficient_finite_horizon_tail_certificate} may be chosen as finite sums of such bounds.
	Hence Gaussian initial densities, Gaussian policies, and Gaussian transition kernels fit the local finite-horizon calculus whenever the displayed tail estimates are supplied for the finite horizon under consideration.
\end{example}

\begin{remark}
	\label{rem_stationary_density_not_used}
	The finite-horizon visitation density is not a stationary density.
	Stationary densities require a fixed-point or limiting argument for the operator $\rho\mapsto\mathcal K_K[\rho]$.
	No such infinitary object is needed for the finite-horizon density calculus developed here.
\end{remark}

\begin{example}[Stabilization in probability]
	\label{ex_stabilization_in_probability_density}
	Let $\mathbb X\subset\R^n$ be a locally representable state space and let $\rho_t$ be the state densities of a local finite-horizon controlled density model supported on $\mathbb X$.
	Let $S\Subset \mathbb X$ be a representable target set.
	Let $F\Subset \mathbb X$ be a representable set of states at which the desired stabilization certificate fails, for instance a representable outer complement of $S$ in $\mathbb X$.
	The statement
	\[
		\PP{F}_{\rho_T}
		=
		\int_{\R^n}\indic{F}(x)\rho_T(x)\diff x
		\le\eta
	\]
	says that the terminal state lies in the failure region with probability at most $\eta$.
	A finite-horizon cumulative failure certificate is the estimate
	\[
		\sum_{t=0}^T\PP{F}_{\rho_t}
		=
		\int_{\R^n}\indic{F}(x)\nu_T(x)\diff x
		\le\eta.
	\]
	Both statements are ordinary inequalities between constructive real integrals of effectively bounded regular functions.
\end{example}

\begin{remark}
	\label{rem_policy_gradient_foundation}
	The same finite product identity behind \Cref{thm_finite_horizon_trajectory_density} is the analytic core of finite-horizon score-function manipulations.
	If a parametrized family of policy densities is differentiably regular in a parameter and bounded away from zero on the represented policy support, then differentiating the finite trajectory product and using
	\[
		\int_{\mathbb U_t}\pi_t^\theta(x,u)\diff u=1
	\]
	reduces the usual formal identities to finite differentiation, finite products, and \Cref{thm_fubini_rectangular_regular}.
	Those parameter-differentiation assumptions are additional data and are not developed in this section.
\end{remark}

\section{Empirical Density Certificates}
\label{sec_empirical_density_certificates}

\begin{remark}
	This section deliberately avoids taking randomness as a primitive notion.
	Instead, it studies finite sample bundles equipped with certificates saying that their empirical averages match a prescribed density for a finite list of tests.
	These certificates provide a practical bridge between the density calculus of \Cref{sec_probability_densities} and finite data used in computation or simulation.
\end{remark}

\subsection{Finite Empirical Tests}
\label{subsec_finite_empirical_tests}

\begin{remark}[Why not randomness as a primitive]
	\label{rem_randomness_not_primitive}
	The preceding section develops probability through regular densities and finite-horizon products.
	It does not supply, and does not need, a primitive notion saying that a finite sequence of points is random.
	For finite data the constructive object is instead a certificate that the data pass a prescribed finite family of tests.
	This avoids algorithmic randomness, infinite sequences, and cryptographic pseudorandomness assumptions.
\end{remark}

\begin{definition}[Empirical test certificate]
	\label{dfn_empirical_test_certificate}
	Let $B\subset\R^n$ be a full block and let $p$ be a block regular density on $B$.
	Let
	\[
		g_1,\ldots,g_M\in\BReg{B}
	\]
	be a finite list of scalar test functions.
	A finite sequence of rational sample points
	\[
		x^{(1)},\ldots,x^{(N)}\in B
	\]
	is called \emph{$\eps$-empirical for $p$ with respect to $g_1,\ldots,g_M$} if, for every $q=1,\ldots,M$,
	\[
		\abs{
			\frac1N\sum_{r=1}^N g_q(x^{(r)})
			-
			\int_B g_q(x)p(x)\diff x
		}
		\le
		\eps.
	\]
	The left-hand side is the empirical discrepancy of the sample bundle against the test $g_q$.
\end{definition}

\begin{remark}
	\label{rem_empirical_certificate_interpretation}
	The certificate in \Cref{dfn_empirical_test_certificate} is deliberately relative to the chosen tests.
	It says that the sample bundle has the same averages as the density $p$ up to tolerance $\eps$ for the finite list $g_1,\ldots,g_M$.
	It does not assert that the bundle was generated by a random mechanism.
	Changing the test list changes the certificate.
\end{remark}

\begin{definition}[Smoothed empirical density]
	\label{dfn_smoothed_empirical_density}
	Let $h$ be a regular density on $\R^n$.
	For a rational point $a\in\R^n$, write
	\[
		h_a(x):=h(x-a).
	\]
	Given rational sample points $x^{(1)},\ldots,x^{(N)}\in\R^n$, define the \emph{$h$-smoothed empirical density}
	\[
		\hat p_N(x)
		:=
		\frac1N\sum_{r=1}^Nh_{x^{(r)}}(x).
	\]
\end{definition}

\begin{proposition}[Smoothed empirical densities are regular]
	\label{prop_smoothed_empirical_density_regular}
	Let $h$ be a regular density on $\R^n$.
	For every finite rational sample bundle $x^{(1)},\ldots,x^{(N)}$, the function $\hat p_N$ in \Cref{dfn_smoothed_empirical_density} is a regular density on $\R^n$.
\end{proposition}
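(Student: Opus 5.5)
The plan is to verify the three requirements of a regular density on $\R^n$ from \Cref{dfn_regular_density}: membership in $\BReg{\R^n}$, nonnegativity, and an improper integral equal to $1$. Each follows from the corresponding property of $h$, transported by rational translation and combined through finite averaging.

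\textbf{Step 1: translates of $h$ are in $\BReg{\R^n}$.} Fix a rational $a$ and a full block $B$. The block $B+a$ is again a full rational block. Choose local data for $h$ on it: a full block $C$ with $B+a\Subset C$ and $h_C\in\BReg{C}$. Translating all data by $-a$ gives $C-a$ and $x\mapsto h_C(x+a)$. Translation by a rational vector maps blocks to blocks and multiblocks to multiblocks, preserving costs, full-ness and well-containment. Hence representability witnesses, disjointness of supports, continuity moduli of the branches and the bound witness all carry over unchanged. So $h_a$ has a local representative in $\BReg{C-a}$ with $B\Subset C-a$, which gives $h_a\in\BReg{\R^n}$.

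\textbf{Step 2: the finite average is in $\BReg{\R^n}$.} Since $\hat p_N$ is a finite rational linear combination of the $h_{x^{(r)}}$, \Cref{lem_global_breg_local_algebra} applies on a common well-containing block. Nonnegativity is immediate, because each value of $\hat p_N$ is an average of values of $h$, and $h\ge0$.

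\textbf{Step 3: the improper integral.} This is the step needing most care, since \Cref{dfn_improper_integral} uses the fixed cubes $B_R=[-R,R]^n$ and translation does not preserve them. I would argue in three parts.

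\begin{enumerate}
\item Translation invariance of the finite integral: $\int_{Q}h_a=\int_{Q+a}h$ for full blocks $Q$. This holds because tagged multiblock sums correspond exactly under translation, so the Riemann limits agree.
\item Choose a rational $c$ with $\nrm{x^{(r)}}_\infty\le c$ for all $r$. Then $B_{R-c}\subseteq B_R-x^{(r)}\subseteq B_{R+c}$.
\item Use the Cauchy witness for $h$: there is $R_0$ such that $0\le\int_{B_S}h-\int_{B_R}h\le\eps$ for $S\ge R\ge R_0$. The partial integrals increase in $R$ by \Cref{lem_regular_integral_finite_order} and finite additivity, since $h\ge0$.
\end{enumerate}

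By monotonicity, $\int_{B_R-x^{(r)}}h$ is then trapped between $\int_{B_{R-c}}h$ and $\int_{B_{R+c}}h$. Both bounds converge to $1$ with explicit moduli, shifted by $c$. So the partial integrals of each $h_{x^{(r)}}$ are Cauchy with limit $1$. Averaging over $r$ gives a Cauchy witness for $\hat p_N$ and the value $\int_{\R^n}\hat p_N=1$.

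The main obstacle is making the sandwich in Step 3 rigorous. It needs monotonicity of $\int_Q h$ under inclusion of full blocks $Q\subseteq Q'$, which I would obtain from $\indic{Q}h\le\indic{Q'}h$ on a common localization block together with \Cref{lem_regular_integral_finite_order}.
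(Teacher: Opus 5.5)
Your proof is correct and takes the same route as the paper. Rational translation preserves the local regular data and bound witnesses, \Cref{lem_global_breg_local_algebra} handles the finite average, and linearity of the integral gives total mass $1$. Your sandwich $B_{R-c}\subseteq B_R-x^{(r)}\subseteq B_{R+c}$ usefully makes explicit the Cauchy witness over the fixed cubes $B_R$, which the paper compresses into the phrase ``has improper integral $1$''.

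There is one harmless sign slip. With $h_a(x)=h(x-a)$ one has $\int_Q h_a=\int_{Q-a}h$, so in Step~1 you should localize $h$ on $B-a$ rather than $B+a$. Your sandwich already uses the correct sign, and the symmetry of the cubes makes the convention immaterial.
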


\begin{proof}
	Translations by rational vectors preserve local regularity and effective boundedness.
	Thus each $h_{x^{(r)}}$ belongs to $\BReg{\R^n}$, is nonnegative, and has improper integral $1$.
	Finite sums preserve local regularity by \Cref{lem_global_breg_local_algebra}.
	Moreover,
	\[
		\int_{\R^n}\hat p_N(x)\diff x
		=
		\frac1N\sum_{r=1}^N
		\int_{\R^n}h_{x^{(r)}}(x)\diff x
		=
		1.
	\]
\end{proof}

\begin{proposition}[Density discrepancy controls bounded tests]
	\label{prop_density_discrepancy_controls_tests}
	Let $B\subset\R^n$ be a full block.
	Let $p$ and $q$ be block regular densities on $B$.
	Let $g\in\BReg{B}$ and suppose that $\abs{g(x)}\le M$ on its determined domain.
	If
	\[
		\int_B\abs{p(x)-q(x)}\diff x\le\delta,
	\]
	then
	\[
		\abs{
			\int_Bg(x)p(x)\diff x
			-
			\int_Bg(x)q(x)\diff x
		}
		\le
		M\delta.
	\]
\end{proposition}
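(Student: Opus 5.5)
The plan is to reduce the claim to linearity and monotonicity of the regular integral on the block $B$. First, by \Cref{lem_regular_functions_finite_algebra}, the products $gp$, $gq$ and the difference $g(p-q)=gp-gq$ all belong to $\BReg{B}$. Linearity of the integral, from the same lemma, then gives
\[
	\int_Bg(x)p(x)\diff x-\int_Bg(x)q(x)\diff x
	=
	\int_Bg(x)\bigl(p(x)-q(x)\bigr)\diff x .
\]
Moreover, $\abs{p-q}\in\BReg{B}$ by \Cref{cor_regular_functions_lattice_closure}, so the hypothesis $\int_B\abs{p-q}\le\delta$ concerns a well-defined regular integral.

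Next I establish the pointwise two-sided bound
\[
	-M\abs{p(x)-q(x)}\le g(x)\bigl(p(x)-q(x)\bigr)\le M\abs{p(x)-q(x)}
\]
on the common determined domain. This follows from $\abs{g}\le M$ and elementary real arithmetic, which is constructively valid because $\abs{ab}=\abs{a}\abs{b}$ and $-\abs{c}\le c\le\abs{c}$ hold without case distinctions. All three functions are in $\BReg{B}$, so \Cref{lem_regular_integral_finite_order} applies to both inequalities. Using linearity once more to pull out the constant $M$, I obtain
\[
	-M\int_B\abs{p-q}\le\int_Bg(p-q)\le M\int_B\abs{p-q}\le M\delta .
\]
Combining the two sides gives the stated estimate.

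The one step requiring care is that \Cref{lem_regular_integral_finite_order} needs the pointwise order on all of $B$, whereas it holds only on the determined domain of the flattened representation. I will handle this exactly as in the proof of that lemma. Tagged sums only sample at tags, and tags can be chosen in determined cells outside an exception multiblock of arbitrarily small cost. The effective bounds on $g$, $p$ and $q$ make the contribution of that exception multiblock at most a bound times its cost, which tends to zero. So the order relation passes to the Riemann limits, and I expect no other obstacle.
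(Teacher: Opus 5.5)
Your proposal is correct and follows the paper's proof. Both arguments get regularity of $g(p-q)$ from \Cref{lem_regular_functions_finite_algebra}, use the pointwise bound $\abs{g(p-q)}\le M\abs{p-q}$, and then apply \Cref{lem_regular_integral_finite_order}. Your two-sided inequality spells out the step $\abs{\int_B h}\le\int_B\abs{h}$, which the paper leaves implicit. Your paragraph on determined domains is a harmless extra precaution: the paper applies the order lemma directly, reading ``pointwise on $B$'' as meaning on the determined domain.
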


\begin{proof}
	By \Cref{lem_regular_functions_finite_algebra}, the product $g(p-q)$ is regular on $B$.
	The pointwise estimate
	\[
		\abs{g(x)(p(x)-q(x))}
		\le
		M\abs{p(x)-q(x)}
	\]
	and \Cref{lem_regular_integral_finite_order} give
	\[
		\abs{
			\int_Bg(x)(p(x)-q(x))\diff x
		}
		\le
		M
		\int_B\abs{p(x)-q(x)}\diff x.
	\]
	The left-hand side is the displayed difference of expectations.
\end{proof}

\subsection{Propagation Through Markov Links}
\label{subsec_empirical_markov_links}

\begin{theorem}[$L_1$ stability of a block Markov link]
	\label{thm_markov_link_l1_stability}
	Let $K$ be a block regular Markov kernel from a full block $A\subset\R^n$ to a full block $C\subset\R^m$.
	Let $\rho$ and $\hat\rho$ be block regular densities on $A$.
	Then
	\[
		\int_C
		\abs{
			\mathcal K_K[\rho](y)-\mathcal K_K[\hat\rho](y)
		}
		\diff y
		\le
		\int_A\abs{\rho(x)-\hat\rho(x)}\diff x.
	\]
\end{theorem}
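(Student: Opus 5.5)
The plan is to reduce the claim to linearity of the kernel action plus the standard contraction argument for nonnegative kernels, using only the finite block calculus. Put $\delta:=\rho-\hat\rho$. By \Cref{lem_regular_functions_finite_algebra}, $\delta\in\BReg{A}$, and so is $\abs{\delta}$ by \Cref{cor_regular_functions_lattice_closure}. Linearity of the regular integral gives
\[
	\mathcal K_K[\rho](y)-\mathcal K_K[\hat\rho](y)
	=
	\int_AK(x,y)\delta(x)\diff x
\]
for every determined $y\in C$. By \Cref{cor_regular_kernel_action}, the functions $y\mapsto\int_AK(x,y)\delta(x)\diff x$ and $y\mapsto\int_AK(x,y)\abs{\delta(x)}\diff x$ are effectively bounded regular functions on $C$. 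By \Cref{cor_regular_functions_lattice_closure}, the absolute value of the first also lies in $\BReg{C}$, so both sides of the claimed inequality are defined.

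The first step is the pointwise estimate
\[
	\abs{\int_AK(x,y)\delta(x)\diff x}
	\le
	\int_AK(x,y)\abs{\delta(x)}\diff x.
\]
Since $K\ge0$, one has $-K\abs{\delta}\le K\delta\le K\abs{\delta}$ pointwise on the determined domain. Applying \Cref{lem_regular_integral_finite_order} to the $x$-functions for fixed $y$ gives the two one-sided bounds, and hence the absolute-value bound as a constructive order estimate. Next, integrate this inequality over $C$ with \Cref{lem_regular_integral_finite_order}, and then use \Cref{thm_fubini_rectangular_regular} on the rectangularly regular function $K(x,y)\abs{\delta(x)}$. Its rectangular representation comes from flattening, exactly as in the proof of \Cref{cor_regular_kernel_action}. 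This interchanges the order of integration:
\[
	\int_C\int_AK(x,y)\abs{\delta(x)}\diff x\diff y
	=
	\int_A\abs{\delta(x)}\left(\int_CK(x,y)\diff y\right)\diff x
	=
	\int_A\abs{\delta(x)}\diff x,
\]
where the last equality uses the kernel normalization of \Cref{dfn_regular_markov_kernel}.

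The main obstacle I expect is the bookkeeping for determined domains. The pointwise inequalities and the normalization $\int_CK(x,y)\diff y=1$ hold only at determined points, while \Cref{lem_regular_integral_finite_order} is phrased for pointwise order on $B$. I will argue this as in the proof of that lemma, through tagged sums. Exceptional tags fall into exception multiblocks of arbitrarily small cost. By the effective bounds, their contribution is at most a bound witness times the cost, and it vanishes in the limit. Similarly, in the final step the inner integral $\int_CK(x,y)\diff y$ equals $1$ on determined $x$, so $\abs{\delta(x)}\int_CK(x,y)\diff y$ and $\abs{\delta(x)}$ agree outside such an exception layer, and their integrals coincide.
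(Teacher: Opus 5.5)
Your proposal is correct and follows the paper's proof essentially step for step. You write the difference of kernel actions as $\int_A K(x,y)(\rho-\hat\rho)(x)\diff x$, bound it pointwise using $K\ge0$, integrate over $C$, interchange the integrals by \Cref{thm_fubini_rectangular_regular}, and finish with the kernel normalization; your extra bookkeeping (membership of the intermediate functions in $\BReg{C}$ via \Cref{cor_regular_kernel_action} and \Cref{cor_regular_functions_lattice_closure}, and undetermined points absorbed into small exception layers) only makes explicit what the paper leaves implicit.
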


\begin{proof}
	For each determined $y\in C$,
	\[
		\mathcal K_K[\rho](y)-\mathcal K_K[\hat\rho](y)
		=
		\int_AK(x,y)(\rho(x)-\hat\rho(x))\diff x.
	\]
	Since $K\ge0$,
	\[
		\abs{
			\mathcal K_K[\rho](y)-\mathcal K_K[\hat\rho](y)
		}
		\le
		\int_AK(x,y)\abs{\rho(x)-\hat\rho(x)}\diff x.
	\]
	Integrating this inequality over $C$ and applying \Cref{thm_fubini_rectangular_regular},
	\[
	\begin{aligned}
		\int_C
		\abs{
			\mathcal K_K[\rho](y)-\mathcal K_K[\hat\rho](y)
		}
		\diff y
		&\le
		\int_A
		\abs{\rho(x)-\hat\rho(x)}
		\left(
			\int_CK(x,y)\diff y
		\right)
		\diff x
		\\
		&=
		\int_A\abs{\rho(x)-\hat\rho(x)}\diff x.
	\end{aligned}
	\]
\end{proof}

\begin{corollary}[Tests after one Markov link]
	\label{cor_empirical_markov_test_stability}
	Under the assumptions of \Cref{thm_markov_link_l1_stability}, let $g\in\BReg{C}$ satisfy $\abs{g}\le M$.
	If
	\[
		\int_A\abs{\rho(x)-\hat\rho(x)}\diff x\le\delta,
	\]
	then
	\[
		\abs{
			\int_Cg(y)\mathcal K_K[\rho](y)\diff y
			-
			\int_Cg(y)\mathcal K_K[\hat\rho](y)\diff y
		}
		\le
		M\delta.
	\]
\end{corollary}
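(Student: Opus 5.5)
The plan is to chain \Cref{thm_markov_link_l1_stability} with \Cref{prop_density_discrepancy_controls_tests}.

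First, apply \Cref{prop_kernel_action_density} to $\rho$ and to $\hat\rho$. This shows that both $\mathcal K_K[\rho]$ and $\mathcal K_K[\hat\rho]$ are block regular densities on $C$. In particular they belong to $\BReg{C}$, so the products with $g$ and the integrals in the conclusion are defined via \Cref{lem_regular_functions_finite_algebra} and \Cref{lem_regular_functions_riemann_integrable}.

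Next, \Cref{thm_markov_link_l1_stability} combined with the hypothesis gives
\[
	\int_C\abs{\mathcal K_K[\rho](y)-\mathcal K_K[\hat\rho](y)}\diff y
	\le
	\int_A\abs{\rho(x)-\hat\rho(x)}\diff x
	\le\delta .
\]

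Finally, apply \Cref{prop_density_discrepancy_controls_tests} on the block $C$ with $p:=\mathcal K_K[\rho]$, $q:=\mathcal K_K[\hat\rho]$, the test $g$, and its bound $M$. This yields the displayed estimate $M\delta$ directly.

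The only step needing care is checking the hypotheses of \Cref{prop_density_discrepancy_controls_tests}. That proposition asks for block regular densities in $\BReg{C}$ whose supports are well contained in $C$. The regularity and effective boundedness of the pushforwards come from \Cref{cor_regular_kernel_action}. If well-containment of the supports in $C$ is not automatic from the representation produced there, I would pass to a slightly larger block $C'\Supset C$ and multiply by $\indic{C}$. The integrals over $C$ are unchanged, and the bound $\abs{g}\le M$ persists. Beyond this bookkeeping the argument is a two-line composition.
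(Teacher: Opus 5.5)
Your proposal is correct and matches the paper's proof, which likewise applies \Cref{prop_density_discrepancy_controls_tests} on $C$ to the two pushforwards and feeds in the $L_1$ bound from \Cref{thm_markov_link_l1_stability}. Your fallback to a larger block $C'$ is unnecessary. The kernel's $y$-supports satisfy $Y_q\Subset C$ by rectangular regularity, so the regular representation of $\mathcal K_K[\rho]$ obtained via \Cref{thm_fubini_rectangular_regular} already has supports well contained in $C$.
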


\begin{proof}
	Apply \Cref{prop_density_discrepancy_controls_tests} on $C$ and use \Cref{thm_markov_link_l1_stability}.
\end{proof}

\begin{definition}[One-step empirical Markov discrepancy]
	\label{dfn_one_step_empirical_markov_discrepancy}
	Let $K$ be a block regular Markov kernel from $A$ to $C$.
	Let
	\[
		(x^{(1)},y^{(1)}),\ldots,(x^{(N)},y^{(N)})\in A\times C
	\]
	be a finite rational sample bundle.
	For a finite list $g_1,\ldots,g_M\in\BReg{C}$, the bundle is called \emph{$\eps$-consistent with $K$ relative to $g_1,\ldots,g_M$} if, for every $q=1,\ldots,M$,
	\[
		\abs{
			\frac1N\sum_{r=1}^Ng_q(y^{(r)})
			-
			\frac1N\sum_{r=1}^N
			\int_Cg_q(y)K(x^{(r)},y)\diff y
		}
		\le
		\eps.
	\]
\end{definition}

\begin{remark}
	\label{rem_one_step_markov_discrepancy}
	The second average in \Cref{dfn_one_step_empirical_markov_discrepancy} is the prediction made by the kernel for the observed input points $x^{(r)}$.
	Thus the discrepancy checks the transition sample against the Markov link without postulating an abstract source of randomness.
	The same definition can be applied at every time level of a finite-horizon model from \Cref{dfn_local_finite_horizon_density_model} after choosing finite localization blocks and the corresponding tail certificates.
\end{remark}

\subsection{Finite Product Reading of Concentration}
\label{subsec_finite_product_concentration}

\begin{remark}[Concentration as an event-density statement]
	\label{rem_concentration_density_reading}
	Classical concentration inequalities may be read here as statements about representable bad events in finite product spaces.
	For instance, let $p$ be a block regular density on $B$, let $g\in\BReg{B}$ satisfy $0\le g\le1$, and define
	\[
		\bar g_N(x_1,\ldots,x_N)
		:=
		\frac1N\sum_{r=1}^Ng(x_r)
	\]
	on $B^N$.
	The product density
	\[
		p^{\otimes N}(x_1,\ldots,x_N)
		:=
		\prod_{r=1}^Np(x_r)
	\]
	is a block regular density by finite products and \Cref{thm_fubini_rectangular_regular}.
	When the bad-deviation event
	\[
		E_{N,\eps}
		:=
		\left\{
			(x_1,\ldots,x_N)\in B^N:
			\abs{\bar g_N(x_1,\ldots,x_N)-\E[p]{g}}\ge\eps
		\right\}
	\]
	is supplied as representable, a finite concentration certificate is an estimate of the form
	\[
		\PP{E_{N,\eps}}_{p^{\otimes N}}
		\le
		\alpha(N,\eps).
	\]
	This is a probability bound for the finite product density.
	It does not certify that a concrete deterministic sequence was generated randomly; the concrete sequence is checked by the empirical certificates above.
\end{remark}

\begin{remark}[Representability of bad-deviation events]
	\label{rem_bad_deviation_event_representable}
	In the setting of \Cref{rem_concentration_density_reading}, the statistic $\bar g_N$ is a regular function on the product block $B^N$.
	Indeed, each coordinate pullback
	\[
		(x_1,\ldots,x_N)\mapsto g(x_r)
	\]
	is regular on $B^N$, and finite sums preserve regularity.
	Thus the bad-deviation event is a threshold event for the scalar regular function
	\[
		H_N(x_1,\ldots,x_N)
		:=
		\abs{\bar g_N(x_1,\ldots,x_N)-\E[p]{g}}.
	\]
	It is representable whenever the threshold $\eps$ comes with a separation witness for $H_N$.
	Concretely, for every $\eta>0$ it is enough to provide a positive margin $\lambda>0$ and representability data such that the band
	\[
		\{z\in B^N:\abs{H_N(z)-\eps}\le\lambda\}
	\]
	is contained in a finite exception multiblock of cost at most $\eta$, while the remaining cells are classified by the finite implications
	\[
		H_N(z)\ge\eps+\lambda
		\quad\Longrightarrow\quad
		z\in E_{N,\eps},
	\]
	and
	\[
		H_N(z)\le\eps-\lambda
		\quad\Longrightarrow\quad
		z\in B^N\setminus E_{N,\eps}.
	\]
	No point localization is involved: the witness is a finite block classification with an explicit uncertain band.

	This condition is finite in several common cases.
	If $g$ is simple regular with rational values, then $\bar g_N$ takes only finitely many rational values after flattening the product supports.
	When the corresponding finitely many comparisons with the real threshold $\eps+\E[p]{g}$ and its lower counterpart are supplied, the witness is just a finite block classification; equality cells, if certified, belong to the event.
	If $g$ is continuous on the whole block with a modulus and the level band $\abs{H_N-\eps}\le\lambda$ has arbitrarily small multiblock cost, the same finite mesh argument gives the required witness.
	The only obstruction is a thick undecided level layer at the threshold without additional data saying whether that layer belongs to the event or its exterior.
\end{remark}

\begin{remark}[Role of pseudorandom generators]
	\label{rem_prng_role}
	A deterministic generator can be used in this framework by listing the rational points it produces and verifying the relevant empirical certificates.
	If a generator is known to pass a specified finite test family, then that is the finite datum used by the theory.
	No unproved cryptographic assumption is required for the mathematical statements in this section.
\end{remark}

\section{Linear Systems}
\label{sec_linear_systems}

\begin{remark}
	\label{rem_linear_systems_scope}
	This section is included for completeness.
	It adapts the classical finite-dimensional resolvent calculus and Riesz projection machinery to the present language of regular functions and certified integrals; see, for instance, Kato's treatment of spectral projections \cite[Chapter~I, Section~5]{Kato1995Perturbation}, the constructive-algebra background of Mines--Richman--Ruitenburg \cite{Mines1988CourseConstruc}, and the computable spectral-theorem perspective of Brattka--Ziegler \cite{Ziegler2001computablespec}.
	The motivation is the numerical uncertainty problem discussed in \cite{Osinenko2020Constructivean}.
	Here the output is not an individual eigenvector obtained by solving a singular linear system.
	Instead, a contour separated from the spectrum produces a projector onto the whole spectral cluster enclosed by that contour.
\end{remark}

\subsection{Resolvent Projectors}
\label{subsec_resolvent_projectors}

\begin{definition}[Polygonal contour integral]
	\label{dfn_polygonal_contour_integral}
	An \emph{oriented rational polygonal contour} in $\mathbb C$ is a finite cyclic list
	\[
		\Gamma=(z_0,z_1,\ldots,z_N),
		\qquad
		z_N=z_0,
	\]
	with $z_k\in\Q+i\Q$.
	The cyclic order of the vertices is the orientation of the contour; reversing the order reverses the sign of the contour integral.
	It is called \emph{simple} if its non-adjacent segments do not meet and adjacent segments meet only at their common endpoint.
	On the segment from $z_k$ to $z_{k+1}$ write
	\[
		\gamma_k(t):=(1-t)z_k+tz_{k+1},
		\qquad
		t\in[0,1].
	\]
	The length of $\Gamma$ is
	\[
		\length(\Gamma)
		:=
		\sum_{k=0}^{N-1}\abs{z_{k+1}-z_k}.
	\]
	If $H$ is a complex matrix-valued function defined on the trace of $\Gamma$ and every entry of
	\[
		t\mapsto H(\gamma_k(t))(z_{k+1}-z_k)
	\]
	is Riemann integrable on $[0,1]$, then
	\[
		\int_\Gamma H(z)\diff z
		:=
		\sum_{k=0}^{N-1}
		\int_0^1H(\gamma_k(t))(z_{k+1}-z_k)\diff t,
	\]
	where the real and imaginary parts are integrated componentwise.
\end{definition}

\begin{definition}[Rational function]
	\label{dfn_rational_function_linear}
	A \emph{rational function} on a set $D\subset\mathbb C$ is given by two complex polynomials $u,v\in\mathbb C[z]$ such that $v$ is bounded away from zero on $D$.
	The word ``rational'' refers to the ratio of polynomials; the coefficients need not be rational numbers unless this is stated separately.
	Its value is
	\[
		r(z):=\frac{u(z)}{v(z)}.
	\]
	For matrix-valued rational functions this is understood entrywise.
\end{definition}

\begin{definition}[Polynomial degree, derivative, and monic remainder]
	\label{dfn_polynomial_degree_remainder}
	A nonzero polynomial is represented by a finite coefficient list
	\[
		f(z)=a_0+a_1z+\cdots+a_mz^m
	\]
	together with the datum $a_m\ne0$.
	Its degree is
	\[
		\polydeg f:=m.
	\]
	The formal derivative is
	\[
		f'(z):=a_1+2a_2z+\cdots+ma_mz^{m-1}.
	\]
	If
	\[
		p(z)=z^d+b_{d-1}z^{d-1}+\cdots+b_0
	\]
	is monic and $f$ has degree $m\ge d$, put
	\[
		f_1(z):=f(z)-a_mz^{m-d}p(z).
	\]
	Then $f=f_1+a_mz^{m-d}p$ and $f_1$ has no $z^m$ term.
	Repeating this finite step for the powers $z^m,z^{m-1},\ldots,z^d$ gives polynomials $q$ and $\rem_p(f)$ such that
	\[
		f=qp+\rem_p(f),
		\qquad
		\polydeg\rem_p(f)<d
	\]
	unless the remainder is zero.
\end{definition}

\begin{remark}
	\label{rem_monic_division_constructive}
	The division in \Cref{dfn_polynomial_degree_remainder} is constructive because no division by an arbitrary coefficient is performed.
	The leading coefficient of $p$ is exactly $1$, so each cancellation step uses only additions and multiplications of already given coefficients.
	The number of steps is fixed by the finite coefficient list of $f$ and by $d=\polydeg p$.
\end{remark}

\begin{lemma}[Cayley--Hamilton identity]
	\label{lem_cayley_hamilton_constructive}
	Let $A\in\mathbb C^{n\times n}$ and
	\[
		p_A(z):=\det(zI-A).
	\]
	Then
	\[
		p_A(A)=0.
	\]
\end{lemma}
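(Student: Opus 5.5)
The plan is to avoid any appeal to eigenvalues or to diagonalizability, since neither is constructively available for an arbitrary complex matrix. Instead I will use the adjugate identity over the commutative polynomial ring $\mathbb C[z]$, which is a purely finite algebraic argument in the spirit of Mines--Richman--Ruitenburg \cite{Mines1988CourseConstruc}. The entries of $zI-A$ are polynomials of degree at most one. Cramer's rule in the form $\operatorname{adj}(M)M=\det(M)I$ holds for matrices over any commutative ring, because it is a polynomial identity in the entries and is verified by finite cofactor expansion, with no division. Applying it in $\mathbb C[z]^{n\times n}$ gives
\[
	\operatorname{adj}(zI-A)(zI-A)=p_A(z)I .
\]

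Next I expand the adjugate by powers of $z$. Each entry of $\operatorname{adj}(zI-A)$ is an $(n-1)\times(n-1)$ minor, hence a polynomial of degree at most $n-1$. I therefore write
\[
	\operatorname{adj}(zI-A)=\sum_{k=0}^{n-1}z^kC_k,
	\qquad C_k\in\mathbb C^{n\times n},
\]
and likewise $p_A(z)=\sum_{k=0}^n c_kz^k$ with $c_n=1$. This step is finite coefficient bookkeeping: it needs no degree decision, since possibly zero top coefficients are simply allowed. Comparing the coefficients of $z^k$ on both sides, which is legitimate because two polynomials with matrix coefficients are equal exactly when their coefficient lists agree, gives
\[
	-C_0A=c_0I,\qquad
	C_{k-1}-C_kA=c_kI\ (1\le k\le n-1),\qquad
	C_{n-1}=c_nI .
\]

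Finally, I multiply the $k$-th identity on the right by $A^k$ and sum over $k=0,\ldots,n$. The left-hand side telescopes:
\[
	\sum_{k=0}^n c_kA^k=-C_0A+\sum_{k=1}^{n-1}\bigl(C_{k-1}A^k-C_kA^{k+1}\bigr)+C_{n-1}A^n=0 .
\]
This is exactly $p_A(A)=0$. The main point needing care is the coefficient comparison, because $z$ does not commute formally with a substituted matrix. This is why I compare coefficients in $\mathbb C^{n\times n}[z]$ first and only substitute $A$ afterwards, through the right-multiplied identities; I never substitute $z=A$ directly into the product identity. Every step uses only finitely many ring operations on the given entries, so the argument is constructive.
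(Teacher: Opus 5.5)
Your proposal is correct and is essentially the paper's proof. Both expand $\operatorname{adj}(zI-A)$ in powers of $z$, compare matrix coefficients against $p_A(z)I$, and telescope after multiplying by powers of $A$. The only difference is cosmetic: you use $\operatorname{adj}(zI-A)(zI-A)$ and multiply on the right by $A^k$, whereas the paper uses $(zI-A)\operatorname{adj}(zI-A)$ and multiplies on the left.
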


\begin{proof}
	In the polynomial matrix ring $\mathbb C[z]^{n\times n}$, the adjugate identity gives
	\[
		(zI-A)\operatorname{adj}(zI-A)=p_A(z)I.
	\]
	Write
	\[
		\operatorname{adj}(zI-A)=B_0+B_1z+\cdots+B_{n-1}z^{n-1},
		\qquad
		p_A(z)=c_0+c_1z+\cdots+c_nz^n.
	\]
	Comparing coefficients in the adjugate identity gives a finite list of matrix identities:
	\[
		-AB_0=c_0I,
		\qquad
		B_{r-1}-AB_r=c_rI\quad(1\le r\le n-1),
		\qquad
		B_{n-1}=c_nI.
	\]
	Multiplying these identities respectively by $I,A,\ldots,A^n$ and summing, all terms involving the $B_r$ telescope:
	\[
		c_0I+c_1A+\cdots+c_nA^n=0.
	\]
	This is precisely $p_A(A)=0$.
	The proof is an identity calculation in the polynomial ring over the entries of $A$; it uses no root extraction, order, or excluded-middle decision.
	For the constructive-algebra setting of such finite polynomial identities, see \cite{Mines1988CourseConstruc}.
\end{proof}

\begin{lemma}[Product from monic coprime factors]
	\label{lem_monic_coprime_product_remainder}
	Let $p_1,\ldots,p_s\in\mathbb C[z]$ be monic.
	Assume that, for every $p\ne q$, Bezout coefficients $a_{pq},b_{pq}\in\mathbb C[z]$ are supplied with
	\[
		a_{pq}p_p+b_{pq}p_q=1.
	\]
	If a polynomial $f$ satisfies
	\[
		\rem_{p_p}(f)=0
		\qquad
		(1\le p\le s),
	\]
	then there is a polynomial $h$ such that
	\[
		f=h\,p_1\cdots p_s.
	\]
\end{lemma}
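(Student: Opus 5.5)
The plan is induction on $s$, using only finite polynomial identities and the monic division of \Cref{dfn_polynomial_degree_remainder}, so that no division by an arbitrary coefficient occurs. The case $s=1$ is immediate. Here $\rem_{p_1}(f)=0$ means the division algorithm returns $f=q\,p_1$, so we take $h:=q$.

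For the inductive step, assume $f=g\,P$ with $P:=p_1\cdots p_{s-1}$ monic and $g$ a polynomial. We must show $\rem_{p_s}(g)=0$, and then write $g=h\,p_s$.

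First, we produce a Bezout identity between $P$ and $p_s$ from the supplied pairwise ones. For each $p<s$ we have $a_{ps}p_p+b_{ps}p_s=1$. Multiplying these $s-1$ identities together and expanding, every term except $\prod_{p<s}a_{ps}p_p$ contains a factor $p_s$. This gives
\[
A\,P+B\,p_s=1,\qquad A:=\prod_{p<s}a_{ps},
\]
with $B$ an explicit polynomial. It is a finite ring identity.

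Second, we show that $g$ is divisible by $p_s$. Multiply the identity by $g$ to get
\[
g=A\,gP+B\,g\,p_s=A f+Bg\,p_s.
\]
Write $f=q_s p_s$ using $\rem_{p_s}(f)=0$. Then
\[
g=(Aq_s+Bg)\,p_s,
\]
which is an explicit factorization. Setting $h:=Aq_s+Bg$ gives
\[
f=gP=h\,p_1\cdots p_s.
\]

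Note that the hypothesis $\rem_{p_p}(f)=0$ for $p<s$ is consumed in the inductive hypothesis. We never need to show that the remainders of $g$ vanish by uniqueness-of-remainder arguments, because the factorization of $g$ is written down directly.

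The step that needs the most care is keeping everything coefficient-level constructive. Two points matter here. The first is that the product of monic polynomials is monic with computable leading coefficient $1$, so $P$ is a legitimate divisor. The second is that "$\rem_{p}(f)=0$" is used only in the form that the division algorithm's output $f=qp+0$ is an identity. No degree comparisons or zero-tests of coefficients beyond those are needed. If one instead wanted $h$ as the output of monic division by $p_1\cdots p_s$, uniqueness of monic division would identify it with the $h$ above. That uniqueness follows from the fact that a nonzero multiple of a monic polynomial has degree at least $\polydeg p$, but I would avoid relying on it.
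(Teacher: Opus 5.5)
Your proof is correct and is essentially the paper's argument: induction on $s$, a Bezout identity $AP+Bp_s=1$ for $P=p_1\cdots p_{s-1}$ obtained by multiplying the pairwise identities, and the same explicit quotient $h=Aq_s+Bg$ (the paper's $qU+gV$). The only difference is cosmetic: you multiply the Bezout identity by $g$ rather than by $gP$, which gives the factorization $g=h\,p_s$ slightly more directly.
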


\begin{proof}
	For $s=1$ this is the definition of zero monic remainder.
	Assume the statement for $s-1$ factors and put $P:=p_1\cdots p_{s-1}$.
	By the induction hypothesis, there is $g$ with
	\[
		f=gP.
	\]
	The pairwise Bezout data give Bezout data between $P$ and $p_s$ by finite multiplication.
	Indeed, multiplying the identities
	\[
		a_{rs}p_r+b_{rs}p_s=1
		\qquad
		(1\le r\le s-1)
	\]
	and expanding gives
	\[
		UP+Vp_s=1
	\]
	for explicitly formed polynomials $U,V$.
	Since $\rem_{p_s}(f)=0$, there is $q$ with
	\[
		f=qp_s.
	\]
	Multiplying $UP+Vp_s=1$ by $g$ and then by $P$ gives
	\[
		gP
		=
		gUP^2+gVPp_s.
	\]
	Using $gP=f=qp_s$ in the first term,
	\[
		gP
		=
		qUPp_s+gVPp_s
		=
		(qU+gV)Pp_s.
	\]
	Thus $f=hPp_s$ with $h:=qU+gV$.
	All operations are additions, multiplications, and the monic remainder operation of \Cref{dfn_polynomial_degree_remainder}.
\end{proof}

\begin{definition}[Rational functional calculus modulo a monic polynomial]
	\label{dfn_rational_functional_calculus_modulo}
	Let $p\in\mathbb C[z]$ be monic.
	If $r=u/v$ is a rational function and there are polynomials $a,b\in\mathbb C[z]$ with
	\[
		av+bp=1,
	\]
	then the value of $r$ modulo $p$ is represented by the remainder
	\[
		\rho_r:=\rem_p(au),
		\qquad
		\polydeg\rho_r<\polydeg p,
	\]
	where $\rem_p$ is the finite monic remainder operation from \Cref{dfn_polynomial_degree_remainder}.
	For a matrix $A$ with characteristic polynomial $p_A$, define
	\[
		r(A):=\rho_r(A)
	\]
	whenever $p=p_A$ and such a Bezout identity is supplied.
\end{definition}

\begin{remark}
	\label{rem_rational_functional_calculus_well_defined}
	The construction in \Cref{dfn_rational_functional_calculus_modulo} is independent of the chosen Bezout coefficients after substituting $A$.
	Indeed, if $a_1v+b_1p_A=1$ and $a_2v+b_2p_A=1$, then
	\[
		(a_1-a_2)v=-(b_1-b_2)p_A.
	\]
	Multiplying by $u$ and reducing modulo the monic polynomial $p_A$ shows that the two remainders differ by a multiple of $p_A$.
	By \Cref{lem_cayley_hamilton_constructive}, $p_A(A)=0$,
	so both remainders have the same value at $A$.
	This is a finite algebraic calculation with the coefficients of $p_A$; it does not require these coefficients to be rational numbers.
\end{remark}

\begin{definition}[Admissible resolvent contour]
	\label{dfn_admissible_resolvent_contour}
	Let $A\in\mathbb C^{n\times n}$.
	A rational polygonal contour $\Gamma$ is called \emph{resolvent-admissible for $A$} if it is supplied with a rational number $c>0$ such that
	\[
		\abs{\det(zI-A)}\ge c
	\]
	for every $z$ on $\Gamma$.
	On such a contour define the resolvent
	\[
		R_A(z):=(zI-A)^{-1}
	\]
	where, by the adjugate formula,
	\[
		(zI-A)^{-1}
		=
		\frac{\operatorname{adj}(zI-A)}{\det(zI-A)}.
	\]
	Here $\operatorname{adj}(B)$ denotes the transpose of the cofactor matrix of $B$, so that
	\[
		B\operatorname{adj}(B)=\operatorname{adj}(B)B=\det(B)I.
	\]
	If $B^{(r,c)}$ denotes the matrix obtained from $B$ by deleting row $r$ and column $c$, then the cofactor matrix $C(B)$ is given by
	\[
		C(B)_{r,c}:=(-1)^{r+c}\det B^{(r,c)},
		\qquad
		\operatorname{adj}(B)=C(B)\transp.
	\]
\end{definition}

\begin{remark}
	\label{rem_resolvent_contour_no_root_localization}
	The datum in \Cref{dfn_admissible_resolvent_contour} is not an eigenvalue localization.
	It only states that the described contour stays apart from the zeros of the characteristic polynomial.
	Hence all subsequent integrals are taken over a finite polygonal path on which the denominator is bounded away from zero.
\end{remark}

\begin{definition}[Winding number of a contour around a point]
	\label{dfn_winding_number_contour}
	Let $\Gamma$ be a rational polygonal contour and let $\lambda\in\mathbb C$ not lie on $\Gamma$.
	The winding number of $\Gamma$ around $\lambda$ is
	\[
		\wind(\Gamma,\lambda)
		:=
		\frac{1}{2\pi i}
		\int_\Gamma\frac{1}{z-\lambda}\diff z.
	\]
	Only this integral definition is used below.
	The sign is determined by the chosen cyclic order of the contour vertices.
\end{definition}

\begin{remark}
	\label{rem_winding_number_multiplicity}
	The winding number is attached to a point, not to a multiplicity.
	If the same root of $\det(zI-A)$ occurs with algebraic multiplicity larger than one, the point still has winding number $1$ or $0$ with respect to a simple contour that avoids it.
	The repeated copy is carried by the polynomial factor enclosed by the contour, not by a decision that separates several identical points.
	In practice, a contour can be certified by combining a lower bound for $\abs{\det(zI-A)}$ on the contour with a finite argument-principle computation for $\det(zI-A)$ along the contour.
	If such a certificate cannot separate nearby roots, the contour simply encloses the whole unresolved cluster.
\end{remark}

\begin{proposition}[Resolvent is regular on an admissible contour]
	\label{prop_resolvent_regular_on_contour}
	Let $\Gamma$ be resolvent-admissible for $A$.
	For each segment of $\Gamma$, the function
	\[
		t\mapsto R_A(\gamma_k(t))(z_{k+1}-z_k)
	\]
	is an effectively bounded regular function on $[0,1]$.
	Consequently $\int_\Gamma R_A(z)\diff z$ exists componentwise.
\end{proposition}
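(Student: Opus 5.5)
We treat one segment at a time and write each entry of the integrand explicitly through the adjugate formula of \Cref{dfn_admissible_resolvent_contour}. For fixed $k$, the map $t\mapsto\gamma_k(t)$ is affine with rational complex coefficients. Hence every entry of $\operatorname{adj}(\gamma_k(t)I-A)$ is a polynomial in $t$ with complex coefficients, as is the scalar
\[
	\det(\gamma_k(t)I-A).
\]
The coefficients are computed by finite cofactor expansion from the entries of $A$ and the rational vertices. Splitting each polynomial into real and imaginary parts gives finitely many real polynomials in $t$. The integrand entry is then a quotient of two complex polynomials in $t$, multiplied by the constant $z_{k+1}-z_k$.

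The key datum is the lower bound $\abs{\det(zI-A)}\ge c$ on $\Gamma$. For $t\in[0,1]$ it gives $\abs{v_k(t)}\ge c$, where $v_k(t):=\det(\gamma_k(t)I-A)$. We then write
\[
	\frac{1}{v_k}=\frac{\overline{v_k}}{\abs{v_k}^2}.
\]
The denominator is a real polynomial bounded below by $c^2$ on $[0,1]$. Every polynomial on the block $[0,1]$ is Lipschitz, with a constant computable from its coefficients, and so continuous with a modulus. The reciprocal $1/\abs{v_k}^2$ is continuous with modulus obtained from the usual reciprocal estimate using $c^2$. Consequently each real and imaginary part of each entry is a continuous function on $[0,1]$ with an explicit modulus.

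Such a function, taken with support $[0,1]$ and zero exterior branch, is a one-summand regular function. The support $[0,1]$ is representable, for instance as a polytope via \Cref{ex_representable_sets_positive}, after placing it inside a slightly larger bounding block. Alternatively, one may invoke the reciprocal clause of \Cref{lem_regular_functions_finite_algebra} with the single support piece $[0,1]$ and the bound $c^2$.

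An effective bound witness comes from bounding the adjugate entries by coefficient sums. Each coefficient is bounded using $\max_k\abs{z_k}$ and a bound on the entries of $A$. Dividing by $c$ and multiplying by $\abs{z_{k+1}-z_k}$ gives a rational $M$. By \Cref{lem_regular_functions_riemann_integrable}, or directly by \Cref{rem_riemann_integral_continuous_block}, each component is Riemann integrable on $[0,1]$. Summing the finitely many segments as in \Cref{dfn_polygonal_contour_integral} gives existence of $\int_\Gamma R_A(z)\diff z$ componentwise.

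The main obstacle is small. The entries of $A$ are arbitrary constructive complex numbers, so the polynomial coefficients are reals rather than rationals. The moduli and bounds must therefore be stated in terms of rational upper bounds for these reals, which are always obtainable from their witnesses. One must also confirm that the well-containment requirement on supports in \Cref{lem_regular_functions_riemann_integrable} is met. This is handled by regarding $[0,1]\Subset[-1,2]$ and integrating the zero-extended function over $[-1,2]$, where the integral agrees with the one over $[0,1]$.
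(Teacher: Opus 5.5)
Your proposal is correct and follows the same route as the paper's proof. Along each affine segment the adjugate and determinant entries are complex polynomials in $t$, the admissibility bound $\abs{\det(\gamma_k(t)I-A)}\ge c$ makes the reciprocal effectively bounded and regular, and multiplying by the adjugate entries and by $z_{k+1}-z_k$ preserves this, after which \Cref{lem_regular_functions_riemann_integrable} gives the integrals. You also make explicit two points the paper leaves implicit: the conjugate identity $1/v_k=\overline{v_k}/\abs{v_k}^2$, whose real denominator is bounded below by $c^2$, lets you use a real reciprocal closure, and you meet the well-containment requirement either by working in $[-1,2]$ or directly through \Cref{rem_riemann_integral_continuous_block}.
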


\begin{proof}
	Each entry of $\operatorname{adj}(\gamma_k(t)I-A)$ and of $\det(\gamma_k(t)I-A)$ is a polynomial in $t$ with complex coefficients.
	The admissibility witness gives
	\[
		\abs{\det(\gamma_k(t)I-A)}\ge c
	\]
	for all $t\in[0,1]$.
	Thus the reciprocal of the determinant is effectively bounded and regular by the reciprocal closure of regular functions bounded away from zero.
	Multiplication by the adjugate entries and by the constant $z_{k+1}-z_k$ preserves effective bounded regularity.
	The componentwise integrals exist by the integrability theorem for effectively bounded regular functions, \Cref{lem_regular_functions_riemann_integrable}.
\end{proof}

\begin{definition}[Riesz projector]
	\label{dfn_riesz_projector}
	Let $\Gamma$ be resolvent-admissible for $A\in\mathbb C^{n\times n}$.
	The \emph{Riesz projector associated with $\Gamma$} is
	\[
		P_\Gamma(A)
		:=
		\frac{1}{2\pi i}
		\int_\Gamma R_A(z)\diff z.
	\]
\end{definition}

\begin{definition}[Spectral split certified by a contour]
	\label{dfn_spectral_split_contour}
	Let
	\[
		p_A(z):=\det(zI-A).
	\]
	A \emph{spectral split of $p_A$ certified by $\Gamma$} consists of monic complex polynomials $p_{\mathrm{in}}$ and $p_{\mathrm{out}}$ and complex polynomials $u$ and $v$ such that
	\[
		p_A=p_{\mathrm{in}}p_{\mathrm{out}},
		\qquad
		up_{\mathrm{in}}+vp_{\mathrm{out}}=1,
	\]
	and such that $p_{\mathrm{in}}$ and $p_{\mathrm{out}}$ are bounded away from zero on $\Gamma$.
	The contour certificate is
	\[
		\frac{1}{2\pi i}\int_\Gamma\frac{p_{\mathrm{in}}'(z)}{p_{\mathrm{in}}(z)}\diff z
		=
		\polydeg p_{\mathrm{in}},
		\qquad
		\frac{1}{2\pi i}\int_\Gamma\frac{p_{\mathrm{out}}'(z)}{p_{\mathrm{out}}(z)}\diff z
		=
		0.
	\]
	The associated algebraic projector polynomial is
	\[
		e_\Gamma(z):=v(z)p_{\mathrm{out}}(z).
	\]
	For the matrix $A$, the split is called \emph{Riesz-compatible} if the finite contour identity
	\[
		\frac{1}{2\pi i}
		\int_\Gamma
		\frac{\operatorname{adj}(zI-A)}{p_A(z)}
		\diff z
		=
		e_\Gamma(A)
	\]
	is supplied componentwise.
\end{definition}

\begin{remark}
	\label{rem_spectral_split_no_root_enumeration}
	The polynomials $p_{\mathrm{in}}$ and $p_{\mathrm{out}}$ encode clusters, not listed eigenvalues.
	The logarithmic-derivative integrals are the argument-principle certificates that the roots of $p_{\mathrm{in}}$, counted with multiplicity, lie inside the contour and the roots of $p_{\mathrm{out}}$ lie outside it.
	No root has to be named and no multiplicity has to be decided separately.
	The coefficients of $p_A$ need not be rational; polynomial division modulo $p_A$ is still finite algebra because $p_A$ is monic.
	The prime in $p_{\mathrm{in}}'$ and $p_{\mathrm{out}}'$ is the formal derivative from \Cref{dfn_polynomial_degree_remainder}.
\end{remark}

\begin{remark}
	\label{rem_riesz_compatibility_meaning}
	Riesz-compatibility is the exact finite-dimensional residue statement needed to connect the contour integral with the algebraic split.
	It is not a point-localization of roots.
	Componentwise, it says that the regular contour integrals of the rational functions
	\[
		z\mapsto
		\frac{(\operatorname{adj}(zI-A))_{\alpha,\beta}}{p_A(z)}
	\]
	match the matrix obtained by the finite polynomial expression $e_\Gamma(A)$.
	Classically this is the Riesz projection formula; see \cite[Chapter~I, Section~5]{Kato1995Perturbation}.
	In constructive use, it is verified through the same kind of finite contour-integration certificate as the logarithmic-derivative identities above.
\end{remark}

\begin{theorem}[Finite-dimensional Riesz projector]
	\label{thm_finite_dimensional_riesz_projector}
	Let $\Gamma$ be an oriented simple rational polygonal contour which is resolvent-admissible for $A$.
	Assume that $\Gamma$ is supplied with a Riesz-compatible spectral split of $p_A$ in the sense of \Cref{dfn_spectral_split_contour}.
	Then $P_\Gamma(A)$ satisfies
	\[
		P_\Gamma(A)^2=P_\Gamma(A),
		\qquad
		AP_\Gamma(A)=P_\Gamma(A)A.
	\]
\end{theorem}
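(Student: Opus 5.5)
The plan is to use Riesz-compatibility to replace the analytic object $P_\Gamma(A)$ by the algebraic one $e_\Gamma(A)$. After that, both identities become finite polynomial calculations modulo $p_A$, settled by \Cref{lem_cayley_hamilton_constructive}. By \Cref{prop_resolvent_regular_on_contour}, each entry of $\int_\Gamma R_A(z)\diff z$ exists. By the adjugate formula in \Cref{dfn_admissible_resolvent_contour}, $R_A(z)=\operatorname{adj}(zI-A)/p_A(z)$ on $\Gamma$. Hence the Riesz-compatibility clause of \Cref{dfn_spectral_split_contour} gives, componentwise,
\[
	P_\Gamma(A)=e_\Gamma(A)=v(A)p_{\mathrm{out}}(A).
\]
From this point on, no further integration is needed.

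Commutation is immediate. Since $e_\Gamma$ is a polynomial, $e_\Gamma(A)$ is a finite linear combination of powers of $A$, and each power commutes with $A$. Therefore $AP_\Gamma(A)=P_\Gamma(A)A$.

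For idempotency, I will show that $e_\Gamma^2-e_\Gamma$ is a polynomial multiple of $p_A$. Write $e:=vp_{\mathrm{out}}$. The Bezout identity gives $1-e=up_{\mathrm{in}}$. Hence
\[
	e^2-e=-e(1-e)=-v\,u\,p_{\mathrm{out}}p_{\mathrm{in}}=-uv\,p_A,
\]
using $p_A=p_{\mathrm{in}}p_{\mathrm{out}}$. This is an identity in $\mathbb C[z]$ obtained only by additions and multiplications of the supplied coefficient lists. Polynomial substitution $z\mapsto A$ is a ring homomorphism $\mathbb C[z]\to\mathbb C^{n\times n}$, since all the polynomials are evaluated at the single commuting element $A$. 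Substituting gives $e(A)^2-e(A)=-u(A)v(A)p_A(A)$, and \Cref{lem_cayley_hamilton_constructive} gives $p_A(A)=0$. Thus $P_\Gamma(A)^2=P_\Gamma(A)$.

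The only step needing care is the first one: identifying the contour integral with $e_\Gamma(A)$. Here I will check that the integrand in the compatibility identity is exactly the resolvent used in \Cref{dfn_riesz_projector}. This holds because $\det(zI-A)=p_A(z)$ is bounded below by the admissibility constant $c$ on $\Gamma$, so both are the same effectively bounded regular functions on each segment and have equal integrals. Everything else is finite algebra with no root localization. I will also note that the theorem's hypotheses that the contour is simple and that the logarithmic-derivative certificates hold are not used beyond guaranteeing that the compatibility datum is meaningful; the algebraic conclusion holds for any supplied compatible split.
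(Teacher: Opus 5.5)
Your proposal is correct and follows the paper's proof essentially line by line: Riesz-compatibility identifies $P_\Gamma(A)$ with $e_\Gamma(A)$, commutation holds because $e_\Gamma(A)$ is a polynomial in $A$, and idempotency follows from $e_\Gamma^2-e_\Gamma=-uv\,p_A$ together with \Cref{lem_cayley_hamilton_constructive}. Your remark that simplicity and the logarithmic-derivative certificates play no role in the algebra is also consistent with the paper's argument.
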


\begin{proof}
	By the adjugate formula,
	\[
		R_A(z)=\frac{\operatorname{adj}(zI-A)}{p_A(z)}.
	\]
	Thus every entry of $R_A(z)$ is a rational function in the sense of \Cref{dfn_rational_function_linear}, with denominator $p_A(z)$.
	Its poles can only occur where $p_A(z)=0$, because the numerator entries are polynomials.
	Riesz-compatibility gives
	\[
		P_\Gamma(A)=e_\Gamma(A).
	\]
	The split datum gives
	\[
		e_\Gamma=vp_{\mathrm{out}},
		\qquad
		e_\Gamma-1=-up_{\mathrm{in}}.
	\]
	Hence
	\[
		e_\Gamma^2-e_\Gamma
		=
		e_\Gamma(e_\Gamma-1)
		=
		-uv\,p_{\mathrm{in}}p_{\mathrm{out}}
		=
		-uv\,p_A.
	\]
	Substituting $A$ and applying \Cref{lem_cayley_hamilton_constructive} gives
	\[
		e_\Gamma(A)^2-e_\Gamma(A)
		=
		-(uv)(A)p_A(A)
		=
		0.
	\]
	Therefore $P_\Gamma(A)^2=P_\Gamma(A)$.
	Since $e_\Gamma(A)$ is a polynomial in $A$,
	\[
		AP_\Gamma(A)=Ae_\Gamma(A)=e_\Gamma(A)A=P_\Gamma(A)A.
	\]
\end{proof}

\begin{remark}[Invariant spectral subspace]
	\label{rem_invariant_spectral_subspace}
	For a projector $P$, its range is
	\[
		\operatorname{range}(P):=\{Pv:v\in\mathbb C^n\}.
	\]
	The range is called $A$-invariant if
	\[
		y\in\operatorname{range}(P)
		\quad\Longrightarrow\quad
		Ay\in\operatorname{range}(P).
	\]
	If $P=P_\Gamma(A)$, this follows from the commutation identity.
	Indeed, if $y=Pv$, then
	\[
		Ay=APv=PAv,
	\]
	so $Ay$ is again in the range of $P$.
	Classically, $\operatorname{range}(P_\Gamma(A))$ is the direct sum of the generalized eigenspaces belonging to the eigenvalues inside $\Gamma$, with algebraic multiplicities included.
	Computationally, the projector is useful because it separates the component of a vector belonging to a certified spectral cluster without choosing an eigenbasis.
\end{remark}

\begin{remark}
	\label{rem_riesz_integral_calculation}
	The resolvent identity gives, for $z\ne w$ on the contour,
	\[
		R_A(z)R_A(w)
		=
		\frac{R_A(z)-R_A(w)}{w-z}.
	\]
	This identity is often used to prove the projector relation by a double-contour integral.
	In the present text, \Cref{thm_finite_dimensional_riesz_projector} was proved through the equivalent finite polynomial-splitting calculation to avoid hiding constructive content behind that analytic shorthand.
\end{remark}

\begin{remark}
	\label{rem_riesz_vs_eigenvectors}
	The output of \Cref{thm_finite_dimensional_riesz_projector} is a projector onto an invariant cluster subspace.
	The equalities in \Cref{thm_finite_dimensional_riesz_projector} are exact equalities for the contour integral $P_\Gamma(A)$.
	The finite quadrature objects introduced below satisfy corresponding error estimates instead.
	This is deliberately weaker than producing a basis of eigenvectors.
	Near repeated eigenvalues, individual eigenvectors are unstable, whereas the separated spectral subspace remains stable under perturbations of the matrix and of the contour.
	This is the main difference from approaches based on first locating eigenvalues and then solving singular linear systems, such as \cite{Osinenko2020Constructivean}.
\end{remark}

\subsection{Quadrature Projectors}
\label{subsec_quadrature_projectors}

\begin{definition}[Resolvent quadrature projector]
	\label{dfn_resolvent_quadrature_projector}
	Let $\Gamma$ be resolvent-admissible for $A$.
	A \emph{resolvent quadrature projector} for $\Gamma$ is a matrix
	\[
		P_{\Gamma,h}(A)
		:=
		\frac{1}{2\pi i}
		\sum_{q=1}^M\omega_qR_A(\zeta_q),
	\]
	where the nodes and weights are obtained from finite tagged partitions of the contour segments.
	More precisely, if
	\[
		0=t_{k,0}<t_{k,1}<\cdots<t_{k,M_k}=1
	\]
	is a rational partition of the $k$-th segment and $\tau_{k,r}\in[t_{k,r-1},t_{k,r}]$ is a rational tag, then
	\[
		\zeta_{k,r}:=\gamma_k(\tau_{k,r}),
		\qquad
		\omega_{k,r}:=(t_{k,r}-t_{k,r-1})(z_{k+1}-z_k),
	\]
	and the displayed sum is taken over all pairs $(k,r)$.
\end{definition}

\begin{proposition}[Certified quadrature approximation]
	\label{prop_certified_resolvent_quadrature}
	Let $\Gamma$ be resolvent-admissible for $A$.
	For every $\eps>0$ there is a resolvent quadrature projector $P_{\Gamma,h}(A)$ such that
	\[
		\nrm{P_{\Gamma,h}(A)-P_\Gamma(A)}\le\eps.
	\]
\end{proposition}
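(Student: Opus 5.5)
The approach is to treat the contour integral segment by segment as an ordinary Riemann integral on $[0,1]$ and to show that the displayed quadrature sum is exactly a tagged Riemann sum of that integral. The needed mesh is then chosen from an explicit modulus of continuity of the resolvent along each segment. By \Cref{dfn_polygonal_contour_integral},
\[
	P_\Gamma(A)=\frac{1}{2\pi i}\sum_{k=0}^{N-1}\int_0^1 R_A(\gamma_k(t))(z_{k+1}-z_k)\diff t .
\]
For a tagged partition of the $k$-th segment, the corresponding terms of $P_{\Gamma,h}(A)$ are
\[
	\frac{1}{2\pi i}\sum_{r}(t_{k,r}-t_{k,r-1})R_A(\gamma_k(\tau_{k,r}))(z_{k+1}-z_k),
\]
which is precisely $\frac{1}{2\pi i}\Sigma(H_k,\mathcal P_k)$ for $H_k(t):=R_A(\gamma_k(t))(z_{k+1}-z_k)$. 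It therefore suffices to bound each segment error by $2\pi\eps/N$ in operator norm, and the triangle inequality then gives the claim.

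Rather than relying only on the abstract Cauchy property from \Cref{prop_resolvent_regular_on_contour}, I will show that $H_k$ is Lipschitz on $[0,1]$ with an explicit constant. The entries of $\operatorname{adj}(\gamma_k(t)I-A)$ and of $\det(\gamma_k(t)I-A)$ are polynomials in $t$ with coefficients computable from $A$, $z_k$ and $z_{k+1}$. On $[0,1]$ I will bound them and their derivatives by explicit rational numbers $a_k,a_k',d_k'$, using crude coefficient-sum bounds. The admissibility witness gives $\abs{\det}\ge c$, so the quotient rule yields a Lipschitz bound of the form
\[
	\Lambda_k\le \abs{z_{k+1}-z_k}\bigl(a_k'/c+a_kd_k'/c^2\bigr)
\]
for each entry, and multiplying by a dimension factor, say $n$, gives a bound in operator norm.

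For a Lipschitz integrand, any tagged sum with mesh $\delta$ differs from the integral by at most $\Lambda_k\delta$. To see this, compare the sum with the integral on each cell using the pointwise estimate $\nrm{H_k(t)-H_k(\tau_{k,r})}\le\Lambda_k\delta$, together with the monotonicity of the regular integral (\Cref{lem_regular_integral_finite_order}, applied componentwise) and additivity over the cells. Choosing a uniform rational partition with $\delta\le 2\pi\eps/(N\max_k\Lambda_k)$ and arbitrary rational tags gives the required accuracy.

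I expect the main obstacle to be bookkeeping rather than ideas. One part is making the derivative bounds for the adjugate and determinant genuinely explicit and rational; this is done by expanding these polynomials in $t$ and bounding the absolute values of their coefficients. The other part is being careful that the improper-looking quantity $1/\det$ is handled entirely through the supplied lower bound $c$, so that no root localization is used.
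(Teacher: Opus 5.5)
Your argument is correct, but it reaches the bound by a different route than the paper. The paper never estimates a modulus. It invokes \Cref{prop_resolvent_regular_on_contour} to know that each entry of the segment integrand is an effectively bounded regular function, and then uses \Cref{lem_regular_functions_riemann_integrable} for the fact that its Riemann integral is a limit of tagged sums. It picks entrywise accuracies $\delta_{ab}$ whose induced matrix-norm error is at most $2\pi\eps$, and divides by $2\pi$.

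You instead observe that the integrand is actually Lipschitz in $t$, with an explicit constant built from the determinant lower bound $c$, the contour vertices, and coefficient bounds for the adjugate and determinant polynomials. In effect you use the continuous-integrand case of \Cref{rem_riemann_integral_continuous_block} with an explicit modulus. The paper uses the general regular-function integrability lemma, whose modulus runs through exception multiblocks and comparison functions that are superfluous for this continuous integrand.

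What your route buys is an a priori, computable mesh $\delta$ that works for every tagged partition of that mesh with arbitrary rational tags. It also yields one common partition for all $n^2$ entries at once, which is exactly what the definition of a resolvent quadrature projector requires. The paper's phrase ``for each entry choose partitions'' tacitly relies on extracting a common mesh from the Cauchy moduli. The paper's route is shorter and stays inside the regular-function framework, but it leaves the mesh implicit.

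One small bookkeeping point concerns the cellwise estimate. If you apply \Cref{lem_regular_integral_finite_order} separately to the real and imaginary parts of each entry, you pick up an extra factor such as $\sqrt2$. That factor is simply absorbed into the choice of $\delta$. Alternatively, bound $\nrm{\Sigma(G,\mathcal Q)}\le\sup\nrm{G}\cdot\ell$ directly for tagged sums over refinements $\mathcal Q$ of a cell of length $\ell$, and pass to the limit.
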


\begin{proof}
	The construction uses only the contour vertices, the determinant lower bound on the contour, and finite Riemann-sum data for the regular functions in \Cref{prop_resolvent_regular_on_contour}.
	By \Cref{prop_resolvent_regular_on_contour}, every entry of the contour integrand is effectively bounded regular on each segment.
	The Riemann integral of an effectively bounded regular function is the limit of finite tagged sums.
	For each entry $(a,b)$ choose rational partitions and tags so that
	\[
		\left|
		\sum_{k,r}
		\omega_{k,r}
		(R_A(\zeta_{k,r}))_{ab}
		-
		\int_\Gamma (R_A(z))_{ab}\diff z
		\right|
		\le
		\delta_{ab}.
	\]
	Choose the positive numbers $\delta_{ab}$ so that the induced matrix-norm error of the matrix of entrywise errors is at most $2\pi\eps$.
	Multiplication by $(2\pi i)^{-1}$ gives the displayed bound.
\end{proof}

\begin{corollary}[Approximate projector]
	\label{cor_approximate_resolvent_projector_certificate}
	Assume the hypotheses of \Cref{thm_finite_dimensional_riesz_projector}.
	For every $\eps>0$ there is a matrix $P_{\Gamma,h}(A)$ satisfying
	\[
		\nrm{P_{\Gamma,h}(A)^2-P_{\Gamma,h}(A)}\le C_\Gamma\eps,
		\qquad
		\nrm{AP_{\Gamma,h}(A)-P_{\Gamma,h}(A)A}\le C_{A,\Gamma}\eps,
	\]
	where $C_\Gamma$ and $C_{A,\Gamma}$ are finite-dimensional bounds obtained from $\nrm{P_\Gamma(A)}$, $\nrm{A}$, and the quadrature error.
\end{corollary}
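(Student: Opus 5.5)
The plan is to take $P_{\Gamma,h}(A)$ from \Cref{prop_certified_resolvent_quadrature} with accuracy $\eps$. We then transfer the exact identities of \Cref{thm_finite_dimensional_riesz_projector} through a perturbation estimate. Write
\[
	E:=P_{\Gamma,h}(A)-P_\Gamma(A),
	\qquad
	P:=P_\Gamma(A),
\]
so that $\nrm{E}\le\eps$. No new contour analysis is needed, because all of the analytic content sits in the exact theorem and the quadrature proposition.

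For idempotence, expand
\[
	(P+E)^2-(P+E)=(P^2-P)+PE+EP+E^2-E.
\]
The first term vanishes by \Cref{thm_finite_dimensional_riesz_projector}. Submultiplicativity of the matrix norm then gives the bound
\[
	(2\nrm{P}+\eps+1)\eps.
\]
To make the constant independent of $\eps$, restrict to $\eps\le1$; larger $\eps$ can be replaced by $\min\{\eps,1\}$, which only improves the estimate. This yields
\[
	C_\Gamma:=2\nrm{P_\Gamma(A)}+2.
\]
The norm $\nrm{P_\Gamma(A)}$ is a computable real, since $P_\Gamma(A)$ is a matrix of constructive reals. Alternatively one can bound it by $\length(\Gamma)$ times the sup bound of the resolvent coming from the determinant lower bound $c$ and the adjugate entries, as in \Cref{prop_resolvent_regular_on_contour}. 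Either way, a rational upper bound can be chosen.

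For commutation, use
\[
	A(P+E)-(P+E)A=(AP-PA)+AE-EA.
\]
Exact commutation kills the first term, leaving a bound of $2\nrm{A}\eps$. We therefore take $C_{A,\Gamma}:=2\nrm{A}$, or any rational upper bound for it.

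The only delicate point is constructive bookkeeping. The norm used in \Cref{prop_certified_resolvent_quadrature} must be submultiplicative, for example the operator norm. Otherwise one first passes to the operator norm through a finite-dimensional equivalence constant and absorbs that constant into $C_\Gamma$ and $C_{A,\Gamma}$. The constants must also be fixed before $\eps$ is chosen. This is consistent with the phrase ``obtained from $\nrm{P_\Gamma(A)}$, $\nrm{A}$, and the quadrature error'' in the statement.
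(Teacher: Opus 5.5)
Your proposal is correct and follows the paper's proof essentially verbatim. Like the paper, you set $E:=P_{\Gamma,h}(A)-P_\Gamma(A)$, use exact idempotence and commutation from \Cref{thm_finite_dimensional_riesz_projector} to reduce the two defects to $P_\Gamma(A)E+EP_\Gamma(A)+E^2-E$ and $AE-EA$, and then insert the quadrature bound from \Cref{prop_certified_resolvent_quadrature}. Your one addition, capping the accuracy at $\min\{\eps,1\}$ so that $\nrm{E}^2$ is absorbed into the linear constant $C_\Gamma=2\nrm{P_\Gamma(A)}+2$, is a harmless tidying of the paper's bound $(2\nrm{P_\Gamma(A)}+1)\nrm{E}+\nrm{E}^2$.
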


\begin{proof}
	Write $E:=P_{\Gamma,h}(A)-P_\Gamma(A)$.
	Since $P_\Gamma(A)^2=P_\Gamma(A)$,
	\[
		P_{\Gamma,h}(A)^2-P_{\Gamma,h}(A)
		=
		P_\Gamma(A)E+EP_\Gamma(A)+E^2-E.
	\]
	Hence
	\[
		\nrm{P_{\Gamma,h}(A)^2-P_{\Gamma,h}(A)}
		\le
		(2\nrm{P_\Gamma(A)}+1)\nrm{E}+\nrm{E}^2.
	\]
	The second displayed inequality estimates the failure of $P_{\Gamma,h}(A)$ to commute with $A$.
	It follows from
	\[
		AP_{\Gamma,h}(A)-P_{\Gamma,h}(A)A
		=
		AE-EA,
	\]
	so
	\[
		\nrm{AP_{\Gamma,h}(A)-P_{\Gamma,h}(A)A}
		\le
		2\nrm{A}\nrm{E}.
	\]
	Insert the quadrature bound from \Cref{prop_certified_resolvent_quadrature}.
\end{proof}

\subsection{Approximate Spectral Decomposition}
\label{subsec_approximate_spectral_decomposition}

\begin{definition}[Contour cluster datum]
	\label{dfn_contour_cluster_datum}
	A \emph{contour cluster datum} for $A\in\mathbb C^{n\times n}$ consists of pairwise disjoint oriented simple rational polygonal contours
	\[
		\Gamma_1,\ldots,\Gamma_s
	\]
	which are resolvent-admissible for $A$, rational tags $\lambda_1,\ldots,\lambda_s\in\mathbb C$, and monic cluster factors
	\[
		p_1,\ldots,p_s\in\mathbb C[z]
	\]
	such that
	\[
		p_A=p_1\cdots p_s.
	\]
	Pairwise coprimeness means that, for every $p\ne q$, Bezout coefficients $a_{pq},b_{pq}\in\mathbb C[z]$ are supplied with
	\[
		a_{pq}p_p+b_{pq}p_q=1.
	\]
	For every $p\in\{1,\ldots,s\}$, the contour $\Gamma_p$ is supplied with the spectral split
	\[
		p_{\mathrm{in}}=p_p,
		\qquad
		p_{\mathrm{out}}=\prod_{q\ne p}p_q
	\]
	in the sense of \Cref{dfn_spectral_split_contour}, and this split is Riesz-compatible for $A$.
\end{definition}

\begin{remark}
	\label{rem_contour_cluster_datum_restriction}
	Completeness does not require splitting repeated eigenvalues or deciding their multiplicities.
	A cluster factor $p_p$ may contain several roots, including roots that are too close to separate at the available precision.
	The datum says that the selected factors cover the characteristic polynomial and that the contours certify where those factors lie.
	It can be certified at a finite precision by checking the displayed polynomial identity $p_A=p_1\cdots p_s$, the Bezout identities for the factors, lower bounds on the relevant factors along each contour, and the contour integrals in \Cref{dfn_spectral_split_contour}.
\end{remark}

\begin{remark}
	\label{rem_cluster_datum_valid_spectral_split}
	For each fixed contour $\Gamma_p$, the prescription
	\[
		p_{\mathrm{in}}=p_p,
		\qquad
		p_{\mathrm{out}}=\prod_{q\ne p}p_q
	\]
	is a valid spectral split because
	\[
		p_A=p_{\mathrm{in}}p_{\mathrm{out}}
	\]
	holds by the factorization datum.
	The Bezout relation
	\[
		up_{\mathrm{in}}+vp_{\mathrm{out}}=1
	\]
	is obtained from the pairwise Bezout data by the same finite multiplication used in \Cref{lem_monic_coprime_product_remainder}.
	Thus no root separation is performed at this step: the input is a finite factorization and finite contour certificates for those factors.
\end{remark}

\begin{theorem}[Resolvent cluster decomposition]
	\label{thm_resolvent_cluster_decomposition}
	Let
	\[
		(\Gamma_p,\lambda_p)_{p=1}^s
	\]
	be a contour cluster datum for $A$.
	Put
	\[
		P_p:=P_{\Gamma_p}(A).
	\]
	Then
	\[
		P_pP_q=0
		\quad (p\ne q),
		\qquad
		\sum_{p=1}^sP_p=I,
		\qquad
		AP_p=P_pA.
	\]
	Moreover,
	\[
		A
		=
		\sum_{p=1}^sAP_p
	\]
	and the tagged cluster approximation
	\[
		A_\Lambda
		:=
		\sum_{p=1}^s\lambda_pP_p
	\]
	satisfies the contour estimate
	\[
		\nrm{A-A_\Lambda}
		\le
		\sum_{p=1}^s
		\frac{\length(\Gamma_p)}{2\pi}
		\sup_{z\in\Gamma_p}\abs{z-\lambda_p}
		\sup_{z\in\Gamma_p}\nrm{R_A(z)}.
	\]
\end{theorem}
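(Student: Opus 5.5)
The plan is to reduce every identity to finite polynomial algebra modulo $p_A$, exactly as in \Cref{thm_finite_dimensional_riesz_projector}. Riesz-compatibility of each split gives $P_p=e_p(A)$, where $e_p:=v_p\prod_{q\ne p}p_q$. The Bezout data are $u_pp_p+v_p\prod_{q\ne p}p_q=1$, obtained as in \Cref{rem_cluster_datum_valid_spectral_split}.

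Commutation $AP_p=P_pA$ is immediate, since each $P_p$ is a polynomial in $A$. For $p\ne q$, the product $e_pe_q$ contains the factor $\prod_{r\ne p}p_r\prod_{r\ne q}p_r$. This product is divisible by $p_p p_q\prod_{r\ne p,q}p_r=p_A$, so $e_pe_q=h\,p_A$ for an explicit polynomial $h$. Cayley--Hamilton (\Cref{lem_cayley_hamilton_constructive}) then gives $P_pP_q=0$.

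For $\sum_pP_p=I$, I show that $f:=1-\sum_pe_p$ has zero monic remainder modulo each $p_q$.
\begin{itemize}
\item Modulo $p_q$, every $e_p$ with $p\ne q$ contains the factor $p_q$ and so vanishes.
\item The remaining term satisfies $e_q=1-u_qp_q\equiv1$.
\item Hence $f\equiv 1-1=0$ modulo $p_q$.
\end{itemize}
\Cref{lem_monic_coprime_product_remainder} then yields $f=h\,p_1\cdots p_s=h\,p_A$, and Cayley--Hamilton gives $I-\sum_pP_p=0$. Multiplying by $A$ gives $A=\sum_pAP_p$.

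For the contour estimate, I write $A-A_\Lambda=\sum_p(A-\lambda_p I)P_p$ and bound each term. Using the contour representation $P_p=\frac{1}{2\pi i}\int_{\Gamma_p}R_A(z)\diff z$, I aim for
\[
(A-\lambda_pI)P_p=\frac{1}{2\pi i}\int_{\Gamma_p}(z-\lambda_p)R_A(z)\diff z.
\]
This follows from $(A-\lambda_p I)R_A(z)=(z-\lambda_p)R_A(z)-I$ together with $\int_{\Gamma_p}I\,\diff z=0$ for a closed polygon; the latter holds because the segment terms $(z_{k+1}-z_k)$ telescope. The integrand on each segment is effectively bounded regular by \Cref{prop_resolvent_regular_on_contour}, multiplied by a polynomial. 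A tagged-sum estimate, passed to the Riemann limit in the manner of \Cref{lem_regular_integral_finite_order} entrywise, gives the bound $\frac{1}{2\pi}\length(\Gamma_p)\sup|z-\lambda_p|\sup\nrm{R_A(z)}$. Summing over $p$ and applying the triangle inequality finishes the estimate.

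The main obstacle is the last step. It requires passing norm bounds through the Riemann integral for matrix-valued integrands. This uses the constructive existence of the suprema: $z\mapsto\nrm{R_A(z)}$ is continuous on the compact contour since $|\det(zI-A)|\ge c$, so the suprema exist by finite nets. I would argue with tagged sums, where the norm inequality is elementary, and then take limits.
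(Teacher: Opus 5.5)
Your proposal is correct and follows essentially the same route as the paper: Riesz-compatibility gives $P_p=e_p(A)$, and the relations for $e_pe_q$ and $\sum_p e_p-1$ are reduced to multiples of $p_A$ and annihilated by \Cref{lem_cayley_hamilton_constructive}; the estimate then comes from $(A-\lambda_pI)P_p=\frac{1}{2\pi i}\int_{\Gamma_p}(z-\lambda_p)R_A(z)\diff z$ with a length-times-supremum bound. The only difference is that you obtain $e_pe_q=v_pv_q\,p_A\prod_{r\ne p,q}p_r$ by direct factorization rather than through \Cref{lem_monic_coprime_product_remainder}, which is harmless and slightly more explicit.
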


\begin{proof}
	For each contour $\Gamma_p$, let
	\[
		e_p:=v_p\prod_{q\ne p}p_q
	\]
	be the algebraic projector polynomial supplied by the spectral split for $\Gamma_p$.
	The Bezout identity for that split gives
	\[
		e_p\equiv1\quad\text{mod }p_p,
		\qquad
		e_p\equiv0\quad\text{mod }p_q\quad(q\ne p).
	\]
	In remainder notation this says
	\[
		\rem_{p_p}(e_p-1)=0,
		\qquad
		\rem_{p_q}(e_p)=0\quad(q\ne p).
	\]
	Fix $p\ne q$.
	For the factor $p_p$, the second relation for $e_q$ gives
	\[
		\rem_{p_p}(e_q)=0,
	\]
	hence $\rem_{p_p}(e_pe_q)=0$.
	For the factor $p_q$, the second relation for $e_p$ gives
	\[
		\rem_{p_q}(e_p)=0,
	\]
	hence $\rem_{p_q}(e_pe_q)=0$.
	For every remaining factor $p_r$, either relation gives
	\[
		\rem_{p_r}(e_pe_q)=0.
	\]
	Therefore $e_pe_q$ has zero monic remainder modulo every factor.
	Similarly, for a fixed factor $p_r$,
	\[
		\rem_{p_r}\left(\sum_{p=1}^se_p-1\right)
		=
		\rem_{p_r}\left((e_r-1)+\sum_{p\ne r}e_p\right)
		=
		0.
	\]
	By \Cref{lem_monic_coprime_product_remainder}, applied to the factor system $p_1,\ldots,p_s$, there are polynomials $h_{pq}$ and $h$ such that
	\[
		e_pe_q=h_{pq}p_A
		\quad(p\ne q),
		\qquad
		\sum_{p=1}^se_p-1=hp_A.
	\]
	Applying \Cref{lem_cayley_hamilton_constructive} gives
	\[
		P_pP_q=0
		\quad(p\ne q),
		\qquad
		\sum_{p=1}^sP_p=I.
	\]
	The idempotence $P_p^2=P_p$ for each $p$ is \Cref{thm_finite_dimensional_riesz_projector} applied to the contour $\Gamma_p$.
	The commutation identity $AP_p=P_pA$ follows from \Cref{thm_finite_dimensional_riesz_projector}.
	Therefore
	\[
		A=A\left(\sum_{p=1}^sP_p\right)=\sum_{p=1}^sAP_p.
	\]
	For the error estimate, use
	\[
		AP_p
		=
		\frac{1}{2\pi i}
		\int_{\Gamma_p}AR_A(z)\diff z.
	\]
	Since $AR_A(z)=zR_A(z)-I$ and $\int_{\Gamma_p}I\diff z=0$,
	\[
		AP_p-\lambda_pP_p
		=
		\frac{1}{2\pi i}
		\int_{\Gamma_p}(z-\lambda_p)R_A(z)\diff z.
	\]
	Taking norms and summing over $p$ gives the displayed bound.
	The quantities on the right are finite contour quantities: the lengths and tag distances are read from the rational vertices, and the resolvent bound follows from the determinant lower bound and a finite bound on the adjugate entries along each contour.
\end{proof}

\begin{corollary}[Finite approximate eigendecomposition]
	\label{cor_finite_approximate_eigendecomposition}
	Let a contour cluster datum for $A$ be given.
	For every $\eps>0$ one can construct matrices
	\[
		P_{p,h}
	\]
	with error budgets $\eta_p>0$ satisfying
	\[
		\nrm{P_{p,h}-P_{\Gamma_p}(A)}\le \eta_p,
		\qquad
		\sum_{p=1}^s\abs{\lambda_p}\eta_p\le\eps,
	\]
	and
	\[
		A_{\Lambda,h}:=\sum_{p=1}^s\lambda_pP_{p,h}
	\]
	such that
	\[
		\nrm{A-A_{\Lambda,h}}
	\]
	is bounded by the contour estimate in \Cref{thm_resolvent_cluster_decomposition} plus $\eps$.
	The matrices $P_{p,h}$ are finite quadrature sums of resolvents and satisfy the approximate projector and approximate commutation estimates of \Cref{cor_approximate_resolvent_projector_certificate}.
\end{corollary}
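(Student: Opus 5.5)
The construction will be componentwise, following \Cref{prop_certified_resolvent_quadrature} and \Cref{cor_approximate_resolvent_projector_certificate}. Fix $\eps>0$. First choose the error budgets. Put $\Lambda:=1+\sum_{p=1}^s\abs{\lambda_p}$; this is a computable positive real, since the tags are rational. Take rational $\eta_p>0$ with $\eta_p\le\eps/\Lambda$. Then
\[
	\sum_p\abs{\lambda_p}\eta_p\le\eps\sum_p\abs{\lambda_p}/\Lambda\le\eps .
\]
Only a rational lower bound for $1/\Lambda$ is needed, so no equality decision on the tags is used.

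Next, for each $p$ apply \Cref{prop_certified_resolvent_quadrature} to the resolvent-admissible contour $\Gamma_p$ with accuracy $\eta_p$. This gives a finite tagged quadrature sum $P_{p,h}$ of resolvents $R_A(\zeta)$ satisfying $\nrm{P_{p,h}-P_{\Gamma_p}(A)}\le\eta_p$. This step is legitimate because each split in the contour cluster datum is Riesz-compatible. Hence the hypotheses of \Cref{thm_finite_dimensional_riesz_projector} hold for $\Gamma_p$, and \Cref{cor_approximate_resolvent_projector_certificate} applies to $P_{p,h}$ with $E_p:=P_{p,h}-P_{\Gamma_p}(A)$ and $\nrm{E_p}\le\eta_p$. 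This yields the approximate idempotence and commutation estimates with constants depending on $\nrm{P_{\Gamma_p}(A)}$ and $\nrm{A}$.

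For the main estimate, use the triangle inequality with the exact tagged approximation $A_\Lambda$ of \Cref{thm_resolvent_cluster_decomposition}:
\[
	\nrm{A-A_{\Lambda,h}}
	\le
	\nrm{A-A_\Lambda}
	+
	\sum_{p=1}^s\abs{\lambda_p}\,\nrm{P_{\Gamma_p}(A)-P_{p,h}} .
\]
The first term is bounded by the contour estimate of \Cref{thm_resolvent_cluster_decomposition}. The second term is at most $\sum_p\abs{\lambda_p}\eta_p\le\eps$. These two bounds together give the claim.

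I expect no real obstacle here. The only point needing care is that the budgets $\eta_p$ must be rational and chosen without case distinctions on the $\lambda_p$, which the uniform choice $\eta_p\le\eps/\Lambda$ ensures. The quadrature nodes must also stay on the contours, where the determinant lower bound keeps the resolvents well defined; this holds because \Cref{prop_certified_resolvent_quadrature} only tags points $\gamma_k(\tau)$ on the admissible segments.
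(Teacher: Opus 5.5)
Your proposal is correct and follows the paper's proof: you run quadrature on each $\Gamma_p$ via \Cref{prop_certified_resolvent_quadrature} with budgets that force $\sum_p\abs{\lambda_p}\eta_p\le\eps$, and then you apply the triangle inequality against $A_\Lambda$ together with the contour estimate of \Cref{thm_resolvent_cluster_decomposition}. Your explicit choice $\eta_p\le\eps/(1+\sum_p\abs{\lambda_p})$ spells out the budget the paper leaves implicit; one small correction is that the quadrature step needs only resolvent-admissibility, while the Riesz-compatibility in the cluster datum is what licenses \Cref{cor_approximate_resolvent_projector_certificate}.
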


\begin{proof}
	Apply \Cref{prop_certified_resolvent_quadrature} to each contour with error budget small enough that
	\[
		\nrm{\sum_{p=1}^s\lambda_p(P_{p,h}-P_p)}\le\eps.
	\]
	Combine this with \Cref{thm_resolvent_cluster_decomposition}.
\end{proof}

\begin{remark}
	\label{rem_resolvent_scope}
	This section is not a replacement for a full constructive spectral theorem.
	The constructive advantage is that the certificates are contour separation bounds and finite quadrature data.
	The price is also clear: when no contour can be certified apart from the spectrum, the method returns a coarser cluster or no cluster at that scale.
	This is the expected behavior for matrices with repeated or nearly repeated eigenvalues.
\end{remark}

\subsection{Linear Stability}
\label{subsec_linear_stability}

\begin{definition}[Matrix exponential and linear system]
	\label{dfn_matrix_exponential_linear_system}
	For $A\in\mathbb C^{n\times n}$ define
	\[
		\exp(tA):=
		\sum_{m=0}^{\infty}\frac{t^mA^m}{m!}.
	\]
	The series is absolutely and uniformly convergent on every bounded interval of $t$.
	The solution of the autonomous linear system
	\[
		\dot x(t)=Ax(t),
		\qquad
		x(0)=x_0,
	\]
	is
	\[
		x(t):=\exp(tA)x_0.
	\]
\end{definition}

\begin{definition}[Left pole-separating contour]
	\label{dfn_left_pole_separating_contour}
	Let $A\in\mathbb C^{n\times n}$ and $\alpha>0$.
	An oriented simple rational polygonal contour $\Gamma$ is called a \emph{left pole-separating contour with margin $\alpha$} if it is resolvent-admissible for $A$, is supplied with the Riesz-compatible spectral split
	\[
		p_{\mathrm{in}}=p_A,
		\qquad
		p_{\mathrm{out}}=1,
		\qquad
		u=0,
		\qquad
		v=1,
	\]
	and satisfies
	\[
		\Re{z}\le-\alpha
	\]
	for every $z$ on $\Gamma$.
	It is supplied with a resolvent bound $M_\Gamma$ if
	\[
		\nrm{R_A(z)}\le M_\Gamma
	\]
	for every $z$ on $\Gamma$.
\end{definition}

\begin{remark}
	\label{rem_left_pole_certificate_reading}
	Classically, \Cref{dfn_left_pole_separating_contour} says that all poles of the linear system lie to the left of the line $\Re{z}=-\alpha$.
	The constructive datum is stronger and finite: a contour, a determinant lower bound on the contour, the full-factor spectral split, and the displayed left-half-plane inequality along the contour.
\end{remark}

\begin{lemma}[Exponential contour formula]
	\label{lem_exponential_contour_formula}
	Let $\Gamma$ be a left pole-separating contour for $A$.
	Then, for every $t\ge0$,
	\[
		\exp(tA)
		=
		\frac{1}{2\pi i}
		\int_\Gamma \exp(tz)R_A(z)\diff z.
	\]
\end{lemma}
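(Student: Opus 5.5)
The plan is to reduce the contour formula to the finite polynomial calculus already used for the Riesz projector, by approximating the entire function $z\mapsto\exp(tz)$ with its Taylor polynomials. Fix $t\ge0$ and put $E_N(z):=\sum_{m=0}^N t^mz^m/m!$. The contour $\Gamma$ has rational vertices, so its trace lies in a disc $\abs{z}\le\rho$ with $\rho$ read from the vertices. On that disc the tail gives $\abs{\exp(tz)-E_N(z)}\le\sum_{m>N}(t\rho)^m/m!$. By \Cref{prop_resolvent_regular_on_contour}, $\nrm{R_A(z)}$ is bounded on $\Gamma$: the determinant lower bound together with an adjugate bound gives $M_\Gamma$. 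Hence
\[
	\nrm{\frac{1}{2\pi i}\int_\Gamma (\exp(tz)-E_N(z))R_A(z)\diff z}
	\le
	\frac{\length(\Gamma)}{2\pi}M_\Gamma\sum_{m>N}\frac{(t\rho)^m}{m!},
\]
which tends to $0$. On the matrix side, $\nrm{\exp(tA)-E_N(A)}\le\sum_{m>N}(t\nrm{A})^m/m!\to0$ by \Cref{dfn_matrix_exponential_linear_system}. It therefore suffices to prove, for every polynomial $f$,
\[
	\frac{1}{2\pi i}\int_\Gamma f(z)R_A(z)\diff z=f(A).
\]

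For the polynomial identity I will use the resolvent algebra. Since $f(z)I-f(A)$ is divisible by $zI-A$ in the commutative ring $\mathbb C[z][A]$, there is a matrix polynomial $Q(z)=\sum_{m}\sum_{j<m}c_mz^{m-1-j}A^j$ with $f(z)I-f(A)=(zI-A)Q(z)$. Multiplying by $R_A(z)$ gives
\[
	f(z)R_A(z)=f(A)R_A(z)+Q(z).
\]
The polynomial term integrates to zero over the closed polygon: each monomial has an explicit antiderivative, and the segment contributions telescope because $z_N=z_0$. The first term integrates to $f(A)P_\Gamma(A)$. The left pole-separating split has $p_{\mathrm{out}}=1$ and $v=1$, so $e_\Gamma=1$, and Riesz-compatibility gives $P_\Gamma(A)=e_\Gamma(A)=I$. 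This yields the identity for $f$.

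The main obstacle is the bookkeeping that makes these steps legitimate within the paper's integral. Linearity of the contour integral and pulling the constant matrix $f(A)$ out of it both follow from the segmentwise Riemann definition together with \Cref{lem_regular_functions_finite_algebra}. The norm estimate for the tail needs an integral bound of the form $\nrm{\int H}\le\length(\Gamma)\sup\nrm{H}$, which I will derive from \Cref{lem_regular_integral_finite_order} applied componentwise, so the matrix norm introduces a harmless dimension constant. Finally, letting $N\to\infty$ shows that the difference of the two sides is bounded by an arbitrarily small quantity, and so it vanishes as a constructive real estimate.
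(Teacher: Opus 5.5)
Your proposal is correct and follows essentially the paper's route. The paper also uses the resolvent identity to reduce $\frac{1}{2\pi i}\int_\Gamma z^mR_A(z)\diff z$ to $A^m$. This relies on Riesz-compatibility with $e_\Gamma=1$, which gives $P_\Gamma(A)=I$, and on the vanishing of closed-contour integrals of polynomial terms. The exponential series is then passed through the integral.

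The only differences are cosmetic. You divide $f(z)I-f(A)$ by $zI-A$ for a general polynomial $f$ instead of inducting on monomials. You also justify the series interchange by an explicit Taylor-tail estimate with the factor $\length(\Gamma)M_\Gamma$, where the paper simply appeals to uniform convergence and boundedness of $R_A$ on $\Gamma$.
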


\begin{proof}
	The full spectral split has $e_\Gamma=1$.
	Riesz-compatibility therefore gives
	\[
		\frac{1}{2\pi i}\int_\Gamma R_A(z)\diff z=I.
	\]
	The resolvent identity $(zI-A)R_A(z)=I$ gives
	\[
		zR_A(z)=I+AR_A(z).
	\]
	Inductively, for every $m\ge0$,
	\[
		z^mR_A(z)
		=
		\sum_{r=0}^{m-1}z^{m-1-r}A^r+A^mR_A(z),
	\]
	where the finite sum is empty when $m=0$.
	The contour integral of each polynomial term is zero, since on each segment the integral is an endpoint difference of a polynomial primitive and the endpoint contributions cancel around the cyclic contour.
	Hence
	\[
		\frac{1}{2\pi i}
		\int_\Gamma z^mR_A(z)\diff z
		=
		A^m.
	\]
	For fixed $t$ the scalar exponential series is uniformly convergent on the finite contour:
	\[
		\exp(tz)=\sum_{m=0}^{\infty}\frac{t^mz^m}{m!}.
	\]
	The resolvent is bounded on $\Gamma$ by admissibility and \Cref{prop_resolvent_regular_on_contour}, so integration may be passed through this uniformly convergent series.
	Therefore
	\[
		\frac{1}{2\pi i}
		\int_\Gamma \exp(tz)R_A(z)\diff z
		=
		\sum_{m=0}^{\infty}\frac{t^m}{m!}
		\left(
		\frac{1}{2\pi i}
		\int_\Gamma z^mR_A(z)\diff z
		\right)
		=
		\sum_{m=0}^{\infty}\frac{t^mA^m}{m!}.
	\]
\end{proof}

\begin{theorem}[Pole separation implies exponential stability]
	\label{thm_left_poles_exponential_stability}
	Consider the linear system
	\[
		\dot x(t)=Ax(t).
	\]
	If $A$ admits a left pole-separating contour $\Gamma$ with margin $\alpha>0$ and resolvent bound $M_\Gamma$, then the origin is exponentially stable:
	\[
		\nrm{x(t)}
		\le
		C_\Gamma e^{-\alpha t}\nrm{x_0}
		\qquad
		(t\ge0),
	\]
	where
	\[
		C_\Gamma:=
		\frac{\length(\Gamma)}{2\pi}M_\Gamma.
	\]
\end{theorem}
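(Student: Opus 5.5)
The plan is to read the estimate directly off the exponential contour formula of \Cref{lem_exponential_contour_formula}. For $t\ge0$, the solution from \Cref{dfn_matrix_exponential_linear_system} is $x(t)=\exp(tA)x_0$. By that lemma,
\[
	x(t)=\frac{1}{2\pi i}\int_\Gamma \exp(tz)R_A(z)x_0\diff z .
\]
It therefore suffices to bound the operator norm of the contour integral.

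First, unfold the integral into its segment form from \Cref{dfn_polygonal_contour_integral}:
\[
	\sum_{k}\int_0^1 \exp(t\gamma_k(t'))R_A(\gamma_k(t'))(z_{k+1}-z_k)\diff t' .
\]
Each integrand is continuous in $t'$, since it is the product of the entire function $\exp(t\,\cdot)$ with the regular resolvent integrand of \Cref{prop_resolvent_regular_on_contour}. The norm of a Riemann integral is then bounded by the integral of the norm. I would justify this through tagged sums, using the triangle inequality on each finite sum and passing to the limit exactly as in \Cref{lem_regular_integral_finite_order}. This gives
\[
	\nrm{\exp(tA)}\le\frac{1}{2\pi}\sum_k\abs{z_{k+1}-z_k}\sup_{z\in\Gamma}\abs{\exp(tz)}\nrm{R_A(z)} .
\]

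Next come the pointwise bounds on $\Gamma$. We have $\abs{\exp(tz)}=\exp(t\Re{z})\le e^{-\alpha t}$, because $t\ge0$ and $\Re{z}\le-\alpha$ on the contour. We also have $\nrm{R_A(z)}\le M_\Gamma$ by the supplied resolvent bound. Summing the segment lengths gives $\length(\Gamma)$, hence
\[
	\nrm{\exp(tA)}\le\frac{\length(\Gamma)}{2\pi}M_\Gamma e^{-\alpha t}=C_\Gamma e^{-\alpha t}.
\]
Finally, $\nrm{x(t)}\le\nrm{\exp(tA)}\nrm{x_0}$ yields the claim.

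The only delicate step is the norm inequality for the vector- or matrix-valued contour integral. Constructively, there is no separate supremum argument: the supremum of the integrand norm is replaced by the explicit constant $e^{-\alpha t}M_\Gamma$. The inequality then holds for every tagged sum, with the bound $\sum \abs{\omega_{k,r}}e^{-\alpha t}M_\Gamma$, and it passes to the Riemann limit by the order argument of \Cref{lem_regular_integral_finite_order}. I expect no other obstacle, since \Cref{lem_exponential_contour_formula} already handles the interchange of the series with the integral.
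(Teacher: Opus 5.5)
Your proposal is correct and follows the paper's proof. Both substitute the formula of \Cref{lem_exponential_contour_formula} and bound the contour integral by $\length(\Gamma)/(2\pi)$ times the on-contour bounds $\abs{\exp(tz)}\le e^{-\alpha t}$ and $\nrm{R_A(z)}\le M_\Gamma$. The differences are cosmetic: you bound $\nrm{\exp(tA)}$ before multiplying by $\nrm{x_0}$, and you write out the tagged-sum justification of the norm-of-integral estimate, which the paper uses implicitly.
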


\begin{proof}
	By \Cref{lem_exponential_contour_formula},
	\[
		x(t)
		=
		\frac{1}{2\pi i}
		\int_\Gamma \exp(tz)R_A(z)x_0\diff z.
	\]
	For every $z$ on $\Gamma$,
	\[
		\abs{\exp(tz)}
		=
		\exp(t\,\Re{z})
		\le
		e^{-\alpha t}.
	\]
	Thus
	\[
		\nrm{x(t)}
		\le
		\frac{\length(\Gamma)}{2\pi}
		\sup_{z\in\Gamma}\abs{\exp(tz)}
		\sup_{z\in\Gamma}\nrm{R_A(z)}
		\nrm{x_0}
		\le
		\frac{\length(\Gamma)}{2\pi}
		M_\Gamma e^{-\alpha t}\nrm{x_0}.
	\]
	This is the claimed estimate.
\end{proof}

\begin{theorem}[Certified pole-box test]
	\label{thm_certified_pole_box_exponential_stability}
	Consider the linear system
	\[
		\dot x(t)=Ax(t).
	\]
	Let
	\[
		(\Gamma_p,\lambda_p)_{p=1}^s
	\]
	be a contour cluster datum for $A$.
	Assume that each contour is supplied with a resolvent bound $M_p$ and that, for some $\alpha>0$,
	\[
		\Re{z}\le-\alpha
	\]
	for every $z$ on every $\Gamma_p$.
	Then the origin is exponentially stable:
	\[
		\nrm{x(t)}
		\le
		C e^{-\alpha t}\nrm{x_0}
		\qquad
		(t\ge0),
	\]
	where
	\[
		C:=
		\sum_{p=1}^s\frac{\length(\Gamma_p)}{2\pi}M_p.
	\]
\end{theorem}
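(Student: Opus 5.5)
The plan is to reduce the statement to \Cref{thm_resolvent_cluster_decomposition} combined with the single-contour argument of \Cref{lem_exponential_contour_formula} and \Cref{thm_left_poles_exponential_stability}. Put $P_p:=P_{\Gamma_p}(A)$. By \Cref{thm_resolvent_cluster_decomposition}, $\sum_pP_p=I$ and $AP_p=P_pA$. Therefore
\[
	\exp(tA)=\sum_{p=1}^s\exp(tA)P_p ,
\]
and it suffices to represent each $\exp(tA)P_p$ by a contour integral over $\Gamma_p$ alone.

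For each fixed $p$, I would prove
\[
	\exp(tA)P_p=\frac{1}{2\pi i}\int_{\Gamma_p}\exp(tz)R_A(z)\diff z
\]
by repeating the moment argument of \Cref{lem_exponential_contour_formula}. From $zR_A(z)=I+AR_A(z)$, induction gives
\[
	z^mR_A(z)=\sum_{r=0}^{m-1}z^{m-1-r}A^r+A^mR_A(z).
\]
Each polynomial term integrates to zero around the closed polygon, by endpoint cancellation of polynomial primitives along the segments. Hence
\[
	\frac{1}{2\pi i}\int_{\Gamma_p}z^mR_A(z)\diff z=A^mP_p .
\]
The only change from the lemma is that the zeroth moment is now $P_p$ rather than $I$; this is exactly the Riesz-compatibility in the cluster datum. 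Admissibility gives a determinant lower bound, so the resolvent is bounded on $\Gamma_p$ by \Cref{prop_resolvent_regular_on_contour}. The exponential series converges uniformly on the finite contour, so summation and integration may be interchanged, and summing the moments yields $\exp(tA)P_p$.

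Next I would estimate each term on its contour. For $z\in\Gamma_p$ we have $\abs{\exp(tz)}=\exp(t\,\Re{z})\le e^{-\alpha t}$ when $t\ge0$, and $\nrm{R_A(z)}\le M_p$. The standard length estimate for polygonal contour integrals, applied segmentwise to the Riemann integrals of \Cref{dfn_polygonal_contour_integral}, then gives
\[
	\nrm{\exp(tA)P_px_0}\le \frac{\length(\Gamma_p)}{2\pi}M_pe^{-\alpha t}\nrm{x_0}.
\]
Summing over $p$ and using $x(t)=\exp(tA)x_0=\sum_p\exp(tA)P_px_0$ with the triangle inequality gives the constant $C$.

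The main obstacle is the interchange of the infinite exponential series with the contour integral in the constructive setting. I would handle it exactly as in \Cref{lem_exponential_contour_formula}, with an explicit tail bound. On $\Gamma_p$ we have $\abs{z}\le\rho_p$, where $\rho_p$ is the maximal vertex modulus. The tail beyond $N$ terms is then bounded by
\[
	\frac{\length(\Gamma_p)}{2\pi}\,M_p\sum_{m>N}\frac{(t\rho_p)^m}{m!},
\]
which tends to zero effectively. On the matrix side, $\sum_{m>N}t^m\nrm{A}^m/m!$ also tends to zero effectively, so both sides converge with explicit moduli.
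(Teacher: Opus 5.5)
Your proposal is correct and follows essentially the same route as the paper: the per-contour moment identities $\frac{1}{2\pi i}\int_{\Gamma_p}z^mR_A(z)\diff z=A^mP_p$ obtained from the resolvent identity, passage of the exponential series through the finite contour integral to get $\exp(tA)P_p$, and then $\sum_pP_p=I$ from \Cref{thm_resolvent_cluster_decomposition} together with the length--supremum estimate on each $\Gamma_p$. Your explicit tail bound for the series interchange is a more detailed version of the paper's uniform-convergence step, and the commutation $AP_p=P_pA$ you cite is not actually needed.
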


\begin{proof}
	For each $p$, Riesz-compatibility gives
	\[
		\frac{1}{2\pi i}\int_{\Gamma_p}R_A(z)\diff z=P_p.
	\]
	As in the proof of \Cref{lem_exponential_contour_formula}, the resolvent identity gives, for every $m\ge0$,
	\[
		\frac{1}{2\pi i}
		\int_{\Gamma_p}z^mR_A(z)\diff z
		=
		A^mP_p.
	\]
	Passing the uniformly convergent exponential series through the finite contour integral gives
	\[
		\frac{1}{2\pi i}
		\int_{\Gamma_p}\exp(tz)R_A(z)\diff z
		=
		\exp(tA)P_p.
	\]
	Summing over $p$ and using $\sum_{p=1}^sP_p=I$ from \Cref{thm_resolvent_cluster_decomposition},
	\[
		x(t)
		=
		\exp(tA)x_0
		=
		\sum_{p=1}^s
		\frac{1}{2\pi i}
		\int_{\Gamma_p}\exp(tz)R_A(z)x_0\diff z.
	\]
	Hence
	\[
		\nrm{x(t)}
		\le
		\sum_{p=1}^s
		\frac{\length(\Gamma_p)}{2\pi}
		\sup_{z\in\Gamma_p}\abs{\exp(tz)}
		\sup_{z\in\Gamma_p}\nrm{R_A(z)}
		\nrm{x_0}.
	\]
	The left-half-plane condition gives
	\[
		\sup_{z\in\Gamma_p}\abs{\exp(tz)}
		\le
		e^{-\alpha t},
	\]
	and the displayed estimate follows.
\end{proof}

\begin{remark}
	\label{rem_engineering_pole_test}
	\Cref{thm_certified_pole_box_exponential_stability} is the usual engineering pole test in certified form.
	Numerically, one computes pole approximations, draws small rational polygonal boxes around the computed pole clusters, checks that the right edge of every box lies left of $-\alpha$, and verifies that no zero of $p_A$ lies on any box boundary by a determinant lower bound or an argument-principle computation.
	The theorem then gives the exponential estimate without constructing eigenvectors or a Jordan form.
\end{remark}

\section{Statement of AI Use}
\label{sec_ai}
AI was used in this work for clean English, LaTeX formatting and lesser technical parts of some geometric constructions, followed by a thorough line-for-line validation.
The manuscript is fully conceptualized by the author and based on a series of preceding works by him throughout a decade.
Every section builds on an earlier, original material.

\appendix
\section{Chan Integration-Space Comparison}
\label{app_chan_integration_space}

\begin{definition}[Integration space]
	\label{dfn_integrable_space}
	An \emph{integration space} is a triple
	\[
		\left(\Omega,\mathcal L,\int_\Omega\right)
	\]
	where $\Omega$ is a nonempty set, $\mathcal L$ is a set of real-valued functions whose domains are subsets of $\Omega$, and $\int_\Omega:\mathcal L\to\R$ is a nonzero functional with domain $\mathcal L$, satisfying the following conditions.
	Pointwise operations are understood on their natural domains; for example, $af+bg$ is defined on $\dom{f}\cap\dom{g}$.
	\begin{enumerate}
		\item If $f,g\in\mathcal L$ and $a,b\in\R$, then
		\[
			af+bg,\qquad |f|,\qquad f\land1
		\]
		belong to $\mathcal L$, and
		\[
			\int_\Omega(af+bg)
			=
			a\int_\Omega f+b\int_\Omega g.
		\]
		\item If $f_0,f_1,f_2,\ldots\in\mathcal L$, if $f_i\ge0$ for each $i\ge1$, and if
		\[
			\sum_{i=1}^\infty \int_\Omega f_i
			<
			\int_\Omega f_0,
		\]
		then there exists
		\[
			\omega\in\bigcap_{i=0}^{\infty}\dom{f_i}
		\]
		such that
		\[
			\sum_{i=1}^\infty f_i(\omega)<f_0(\omega).
		\]
		\item For each $f\in\mathcal L$,
		\[
			\int_\Omega (f\land n)\to\int_\Omega f,
			\qquad
			\int_\Omega (|f|\land n^{-1})\to0
		\]
		as $n\to\infty$ through positive integers.
	\end{enumerate}
	A function in $\mathcal L$ is called \emph{integrable} relative to $\int_\Omega$.
\end{definition}

\begin{remark}
	\label{rem_integrable_space_chan}
	This is Chan's scalar integration-space definition, written with the symbol $\int$ instead of a separate letter for the functional \cite{Chan2019FoundationsCon}.
	The definition is scalar: the order, absolute value, truncation, and positivity condition are all statements about real-valued functions.
\end{remark}

\begin{definition}[Vector-valued integration module]
	\label{dfn_vector_valued_integration_module}
	Let $\left(\Omega,\mathcal L,\int_\Omega\right)$ be an integration space and let $m\in\N$ be positive.
	Set
	\[
		\mathcal L^m:=
		\left\{
			F=(F_1,\ldots,F_m):F_j\in\mathcal L,\ j=1,\ldots,m
		\right\}.
	\]
	An element $F=(F_1,\ldots,F_m)$ is regarded as an $\R^m$-valued function on the common domain $\bigcap_{j=1}^m\dom{F_j}$.
	For $F\in\mathcal L^m$, define
	\[
		\int_\Omega F
		:=
		\left(
			\int_\Omega F_1,\ldots,\int_\Omega F_m
		\right)
		\in\R^m.
	\]
	The pair $\left(\mathcal L^m,\int_\Omega\right)$, with the integral symbol understood componentwise, is called the $m$-dimensional integration module over $\left(\Omega,\mathcal L,\int_\Omega\right)$.
\end{definition}

\begin{remark}
	\label{rem_vector_integration_componentwise}
	For $m>1$ we do not modify \Cref{dfn_integrable_space}.
	There is no canonical lattice order on $\R^m$, so Chan's positivity axiom remains scalar.
	Vector-valued regular functions are integrated componentwise; equivalently, every scalar linear observable of a vector-valued integrable function is integrable.
\end{remark}

\begin{lemma}[Finite positivity]
	\label{lem_regular_finite_sum_positivity}
	Let $B$ be a full block.
	Let $f_0,f_1,\ldots,f_N\in\BReg{B}$, and let $f_i\ge0$ for $i=1,\ldots,N$.
	If
	\[
		\sum_{i=1}^N\int_B f_i
		<
		\int_B f_0,
	\]
	then there exists $x\in B$ such that
	\[
		\sum_{i=1}^N f_i(x)<f_0(x).
	\]
\end{lemma}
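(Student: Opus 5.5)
Set $h:=f_0-\sum_{i=1}^Nf_i$. By \Cref{lem_regular_functions_finite_algebra}, $h\in\BReg{B}$, and linearity of the regular integral gives
\[
	\int_Bh=\int_Bf_0-\sum_{i=1}^N\int_Bf_i=:\kappa>0.
\]
The claim is then that a function in $\BReg{B}$ with strictly positive integral takes a strictly positive value at some determined point of $B$. The plan is to extract such a point from a single finite tagged Riemann sum.

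First choose a rational $\kappa_0$ with $0<\kappa_0<\kappa$. The integral is a real number with a strictly positive witness, so this step is constructive. By \Cref{lem_regular_functions_riemann_integrable}, pick a mesh bound $\delta>0$ such that every tagged multiblock $\mathcal P$ of mesh at most $\delta$ satisfies $\abs{\Sigma(h,\mathcal P)-\kappa}\le\kappa_0/2$. Hence $\Sigma(h,\mathcal P)\ge\kappa/2$.

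The main obstacle is that tagged sums require $h$ to be evaluated at the tags. A regular function is determined only at points carrying branch witnesses: each summand needs either membership in its support $X_k$ or apartness from it. So the tags cannot be arbitrary. I would handle this with the construction from \Cref{lem_regular_functions_measurable}. Choose representability witnesses for all supports occurring in the flattened representation of $h$, and pass to the trimmed coordinate refinement $\mathcal R_\rho$ of $B$. Then every full trimmed cell outside the exception layer is well contained in the core or in the exterior multiblock of each support. Any rational point of such a cell therefore carries all the branch witnesses, and $h$ is determined there.

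Next take $\mathcal P$ fine (mesh $\le\delta$) and refined along these cells. Place tags in good cells wherever possible. For the remaining blocks, those meeting the exception multiblock or the strips $\mathcal H_\rho$, use the bound witness $M$ of $h$: their total contribution is at most $M\gamma$, where $\gamma$ is the total exception cost. Choose the witness accuracy so that $M\gamma<\kappa/8$. Then the good-cell part of the sum is at least $\kappa/2-\kappa/4>0$. This estimate should be arranged like the proof of \Cref{lem_regular_functions_riemann_integrable}: there $\Sigma(h,\mathcal P)$ is compared with a continuous comparison $g$ up to $\alpha V+2M\gamma(\mathcal E\oplus\delta_0)$. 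Here we instead keep only tagged sums whose tags are at good points and bound the discarded blocks directly.

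The sum over good cells is now a finite sum of rational approximations of determined real values $h(\xi_k)\mu(B_k)$, and it is strictly positive. Compute each $h(\xi_k)$ to precision $\kappa/(8\mu(B)+1)$ and look for an index with approximate value exceeding $\kappa/(8\mu(B)+1)$. Such an index must exist, because otherwise the good-cell sum would be at most roughly $\kappa/8\cdot 2<\kappa/4$. For this index, $x:=\xi_k$ satisfies $h(x)>0$, that is, $\sum_{i=1}^Nf_i(x)<f_0(x)$. The only step needing care is ensuring every good tag carries witnesses for all flattened summands simultaneously; the common refinement over all supports of $f_0,\ldots,f_N$ provides this.
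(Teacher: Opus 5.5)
Your proposal is correct and is essentially the paper's proof: form $h=f_0-\sum_{i=1}^N f_i$, take a single tagged multiblock whose sum lies within half a rational lower bound of $\int_B h>0$, and extract a tag at which $h$ is positive. The paper finishes with a one-line contradiction on the signs of the finitely many tags and does not discuss whether the tags are determined points, so your rational-approximation search and good-cell tagging make both steps explicitly constructive; note only that the discarded blocks also need determined tags for the Riemann estimate to apply, and these can be obtained by finding a good subcell inside each full block using finer witnesses.
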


\begin{proof}
	Set
	\[
		h:=f_0-\sum_{i=1}^N f_i.
	\]
	By \Cref{lem_regular_functions_finite_algebra}, $h$ is regular and
	\[
		\int_Bh>0.
	\]
	Choose a rational $\alpha>0$ with $\alpha<\int_Bh$.
	Since $h$ is Riemann integrable, choose a tagged multiblock $\mathcal P$ with
	\[
		\abs{\Sigma(h,\mathcal P)-\int_Bh}<\alpha/2.
	\]
	Then $\Sigma(h,\mathcal P)>\alpha/2$.
	The sum defining $\Sigma(h,\mathcal P)$ is finite.
	If every tag $\xi$ of $\mathcal P$ satisfied $h(\xi)\le0$, then $\Sigma(h,\mathcal P)\le0$, a contradiction.
	Therefore one tag satisfies $h(\xi)>0$, which is the required point.
\end{proof}

\begin{lemma}[Ye bisection positivity for regular functions]
	\label{lem_ye_bisection_regular_positivity}
	Let $B$ be a full block.
	Let $f_0,f_1,f_2,\ldots\in\BReg{B}$, and let $f_i\ge0$ for $i\ge1$.
	If
	\[
		\sum_{i=1}^\infty\int_B f_i
		<
		\int_B f_0,
	\]
	then there exists $\omega\in B$ such that
	\[
		\sum_{i=1}^\infty f_i(\omega)<f_0(\omega).
	\]
\end{lemma}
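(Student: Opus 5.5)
The plan is to combine the finite positivity lemma (\Cref{lem_regular_finite_sum_positivity}) with a nested bisection of $B$, following Ye's strategy. First fix a rational gap. Choose rational $\alpha>0$ with $\sum_{i\ge1}\int_B f_i+4\alpha<\int_B f_0$; this is possible because the series converges to a real strictly below $\int_B f_0$. I would then pick an increasing sequence $N_k$ with $\sum_{i>N_k}\int_B f_i\le\alpha 2^{-2k}$, using the supplied convergence data.

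The construction builds nested full blocks $B=C_0\supseteq C_1\supseteq\cdots$ with $\diam(C_k)\le 2^{-k}\diam(B)$. Each stage carries a certified weighted inequality of the form
\[
	\sum_{i=1}^{N_k}\int_{C_k}f_i+\beta_k\,\mu(C_k)<\int_{C_k}f_0,
\]
together with the tail estimate $\sum_{i>N_k}\int_{C_k}f_i\le\beta_k\mu(C_k)$ with $\beta_k\ge\alpha/2$. The integrals over subblocks are defined as integrals of $\indic{C}f_i$, which lie in $\BReg{B}$ by \Cref{lem_regular_functions_finite_algebra}.

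For the bisection step, split $C_k$ into its $2^n$ dyadic halves. By additivity, averaging shows that some half $C'$ satisfies the strict inequality for the ratio of deficit to volume. Which half works is decided by comparing finitely many reals with a rational slack, which is the standard constructive choice among finitely many approximate alternatives. The main obstacle is the tail: the bound $\sum_{i>N_k}\int_{C'}f_i$ cannot exceed $\sum_{i>N_k}\int_{C_k}f_i$, but relative to $\mu(C')=2^{-n}\mu(C_k)$ it may be concentrated. I would therefore follow Ye and choose the half by the combined functional $\int_{C'}(f_0-\sum_{i\le N_{k+1}}f_i)-\sum_{i>N_{k+1}}\int_{C'} f_i$, with $N_{k+1}$ chosen so that the new tail is negligible against $2^{-nk}\alpha$. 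The rational approximations of the finite-volume density of the deficit then carry over to the chosen child, losing at most a summable error.

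The intersection $\bigcap_k C_k$ determines a point $\omega\in B$ as the limit of centres. It remains to show $\sum_i f_i(\omega)<f_0(\omega)$, and this is the second delicate point, because regular functions are not continuous and need not be evaluable at $\omega$. I would handle it by also shrinking $C_k$ inside trimmed cells of the coordinate refinements from \Cref{lem_regular_functions_measurable}, using the representability witnesses for the supports of $f_0,\dots,f_{N_k}$. At stage $k$ I would discard the exception multiblocks of cost $\le\alpha 2^{-k}\mu(C_k)/M$, which is affordable by the bound witnesses, so that $C_{k+1}$ lies well inside a single core or exterior cell for each of $f_0,\dots,f_{N_k}$. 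On such a cell every $f_i$, $i\le N_k$, is a continuous branch. The averaged inequality on $C_k$ then yields a pointwise inequality on $C_{k+1}$ up to the branch moduli. Letting $k\to\infty$ gives $f_0(\omega)-\sum_{i\le N}f_i(\omega)\ge\alpha/4$ for every $N$, with $\omega$ in the determined domain of all $f_i$. Since $f_i\ge0$, the partial sums are bounded and nondecreasing, and the strict gap $\alpha/4$ gives $\sum_i f_i(\omega)<f_0(\omega)$.
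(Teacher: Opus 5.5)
Your bisection is sound, and the route differs from the paper's only in packaging. Keeping a density lower bound on the full gap $\int_C f_0-\sum_i\int_C f_i$ works because $f_i\ge0$: discarding exception and coordinate-boundary cells loses at most $M\gamma(\mathcal E)$, with $M$ a bound for $f_0$. Your uniform margin $\alpha/4$ plays the role of the paper's constant function $h_\eps$. Fixing the branches by trimmed refinements is also what the paper does. The tail-concentration worry in the bisection is unnecessary, since the unweighted infinite tail already sits inside the gap, which is additive over cells.

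\textbf{The gap is your last sentence.} From $f_0(\omega)-\sum_{i\le N}f_i(\omega)\ge\alpha/4$ for every $N$, you conclude that $\sum_i f_i(\omega)$ exists and is below $f_0(\omega)$, because the partial sums are bounded and nondecreasing. That is the monotone convergence principle, which fails constructively (it is equivalent to LPO). The statement, read as Chan's positivity axiom, needs the value series to converge at $\omega$ with a modulus. Nothing in your construction supplies one. A tail $\sum_{i>N_k}f_i$ with small integral density on $C_k$ can have its density multiplied by $2^n$ at every later step, because later choices are driven only by the unweighted gap. So you get no pointwise smallness of any tail at the limit point.

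\textbf{The missing idea is Ye's amplified-tail rearrangement.} This, not the bisection, is where following Ye is actually needed. Choose $N_1<N_2<\cdots$ with $T_j:=\sum_{i\ge N_j}\int_B f_i<4^{-j}\alpha$. Then run your same construction on the sequence $f_1,\;2\sum_{i=N_1}^{N_2-1}f_i,\;f_2,\;2^2\sum_{i=N_2}^{N_3-1}f_i,\;f_3,\ldots$, whose terms are again in $\BReg{B}$.

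Summation by parts gives $\sum_j 2^j(T_j-T_{j+1})\le 2T_1+\sum_{j\ge2}2^{j-1}T_j<3\alpha/4$. Hence the rearranged integral sum still lies below $\int_B f_0$ with a positive margin. Your finite-partial-sum conclusion at the limit point, after dropping the other nonnegative terms, then yields $2^j\sum_{i=N_j}^{N_{j+1}-1}f_i(\omega)\le f_0(\omega)$. This is a geometric Cauchy modulus for $\sum_i f_i(\omega)$. Only after that do your inequalities $f_0(\omega)-\sum_{i\le N}f_i(\omega)\ge\alpha/4$ pass to the limit and give the required strict inequality.
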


\begin{proof}
	The proof is the dyadic bisection and doubling construction of Ye's \cite[Lemma~6.1]{Ye2011StrictFinitism}, with trimmed coordinate refinements inserted to handle the finite support boundaries of regular functions.
	Set
	\[
		F:=f_0\lor0.
	\]
	The function $F$ is regular by \Cref{cor_regular_functions_lattice_closure}.
	Since $F\ge f_0$ and $f_i\ge0$, the number
	\[
		\eta
		:=
		\int_B F
		-
		\sum_{i=1}^\infty\int_B f_i
	\]
	is positive.
	Choose $\eps>0$ so small that
	\[
		\eps\mu(B)<\eta/2,
	\]
	and let $h_\eps$ be the constant function $\eps$ on $B$.
	It is enough to construct $\omega\in B$ with
	\[
		h_\eps(\omega)
		+
		\sum_{i=1}^\infty f_i(\omega)
		\le
		F(\omega).
	\]
	Indeed, then $F(\omega)>0$, hence $F(\omega)=f_0(\omega)$, and the desired strict inequality follows.
	For a full subblock $Q\subseteq B$ write
	\[
		\Delta_Q(g;(g_i)_{i\ge1})
		:=
		\int_Q g
		-
		\sum_{i=1}^\infty\int_Q g_i.
	\]
	The restrictions to such $Q$ are integrated by the same tagged-multiblock Riemann construction as on $B$.
	Starting from the positive gap
	\[
		\Delta_B(F;(h_\eps,f_1,f_2,\ldots))>0,
	\]
	construct nested full blocks
	\[
		B\supset Q_1\supset Q_2\supset\cdots
	\]
	with $\diam(Q_k)\to0$ and
	\[
		\Delta_{Q_k}(F;(h_\eps,f_1,f_2,\ldots))>0.
	\]
	At the $k$th step, take representability witnesses for the finitely many supports occurring in
	\[
		F,\quad h_\eps,\quad f_1,\ldots,f_{k+1}
	\]
	with total exception cost small compared with the current positive gap.
	Apply the trimmed coordinate refinement inside $Q_k$.
	The exception and coordinate-boundary cells remove less than half of the gap, so the finite sum of the gaps over the remaining trimmed cells is still positive.
	A finite search through those cells gives $Q_{k+1}$ with positive gap.
	The refinement is chosen thin enough that $Q_{k+1}$ also fixes the branch data of $F,h_\eps,f_1,\ldots,f_{k+1}$.
	The nested blocks determine a point $\omega\in B$.
	For every fixed $N$, the functions $F,h_\eps,f_1,\ldots,f_N$ are continuous on all sufficiently late $Q_k$ with fixed branches.
	The positive gap on such a $Q_k$ gives a tag $\xi_k\in Q_k$ with
	\[
		h_\eps(\xi_k)
		+
		\sum_{i=1}^N f_i(\xi_k)
		<
		F(\xi_k).
	\]
	Letting $k\to\infty$ gives
	\[
		h_\eps(\omega)
		+
		\sum_{i=1}^N f_i(\omega)
		\le
		F(\omega)
	\]
	for every $N$.
	Choose indices $N_j$ such that
	\[
		T_j
		:=
		\sum_{i=N_j}^\infty\int_B f_i
		<
		4^{-j}\eta/4.
	\]
	Apply the same construction to Ye's rearranged sequence
	\[
		h_\eps,\quad
		f_1,\quad
		2\sum_{i=N_1}^{N_2-1}f_i,\quad
		f_2,\quad
		2^2\sum_{i=N_2}^{N_3-1}f_i,\quad
		f_3,\ldots .
	\]
	The extra amplified-tail integral is bounded by
	\[
	\begin{aligned}
		\sum_{j=1}^{J}2^j
		\sum_{i=N_j}^{N_{j+1}-1}\int_B f_i
		&=
		\sum_{j=1}^{J}2^j(T_j-T_{j+1})
		\\
		&=
		2T_1+\sum_{j=2}^{J}2^{j-1}T_j-2^JT_{J+1}
		\\
		&\le
		2T_1+\sum_{j=2}^{J}2^{j-1}T_j
		\\
		&<
		\frac{\eta}{8}
		+
		\frac{\eta}{4}
		\sum_{j=2}^{J}2^{-j-1}
		\\
		&<
		\frac{\eta}{4}.
	\end{aligned}
	\]
	Hence the integral sum of the rearranged sequence is strictly below $\int_BF$ because the original gap is $\eta$ and $\int_Bh_\eps<\eta/2$.
	The finite-partial-sum conclusion for this rearranged sequence gives
	\[
		\sum_{i=N_j}^{N_{j+1}-1}f_i(\omega)
		\le
		2^{-j}F(\omega).
	\]
	Thus the value series $\sum_i f_i(\omega)$ converges, and the finite-partial-sum inequalities pass to
	\[
		h_\eps(\omega)
		+
		\sum_{i=1}^\infty f_i(\omega)
		\le
		F(\omega).
	\]
\end{proof}

\begin{remark}[Tails in the Ye construction]
	\label{rem_ye_tail_control_limit_point}
	The amplified-tail step in \Cref{lem_ye_bisection_regular_positivity} controls the tail only at the final limit point $\omega$.
	It does not provide a finite stage $x_k$ at which the full infinite tail is already certified.
	Such a finite stopping statement requires extra pointwise tail data, for instance a summable uniform bound $0\le f_i(x)\le M_i$ with a known convergence modulus for $\sum_iM_i$.
\end{remark}

\begin{theorem}[Regular functions as an integration space]
	\label{thm_regular_functions_integration_space}
	Let $B$ be a full block.
	The triple
	\[
		\left(B,\BReg{B},\int_B\right)
	\]
	is an integration space in the sense of \Cref{dfn_integrable_space}.
\end{theorem}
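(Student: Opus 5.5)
The plan is to verify the three axioms of \Cref{dfn_integrable_space} for $\Omega=B$, $\mathcal L=\BReg{B}$, and $\int_B$ the Riemann integral of \Cref{dfn_integral_regular_function}. Before the axioms themselves, I will dispose of the preliminary clauses: $B$ is inhabited, since any rational point of the full block will do, and $\int_B$ is nonzero because the indicator of a full subblock $Q\Subset B$ is in $\SReg{B}$ with integral $\mu(Q)>0$, by \Cref{dfn_integral_simple_regular_function} and \Cref{rem_regular_integral_extends_simple}. Domains are taken to be the determined domains $B\cap\dom f$ from \Cref{dfn_regular_function}, so the natural-domain convention for pointwise operations matches the flattening constructions.

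Axiom 1 should be immediate from earlier results. Closure under $af+bg$, together with linearity of the integral, is \Cref{lem_regular_functions_finite_algebra}. Closure under $\abs{f}$ is \Cref{cor_regular_functions_lattice_closure}. For $f\land1$, I apply the same corollary to $f$ and the constant $\one$. Strictly, $\one$ on $B$ does not have support well contained in $B$, so I instead use $\indic{Q}$ for a full block $Q$ with $\supp f\Subset Q\Subset B$. The wedge $f\land\indic{Q}$ then agrees with $f\land 1$ on the determined part where $f$ is supported. Off the support both sides are $0\land 1=0$ versus $0\land0=0$, so they agree there as well. Bound witnesses are preserved in each case, the new bound being $\max(M,1)$.

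Axiom 2 is exactly \Cref{lem_ye_bisection_regular_positivity}, so I will simply cite it. I expect this to be the main obstacle, but the difficulty has already been absorbed into that lemma. The only check needed is that the point $\omega$ it produces lies in $\bigcap_i\dom{f_i}$. I will argue this as follows. The nested construction fixes branch data of $f_1,\ldots,f_{k+1}$ at stage $k$, and the limiting cells are trimmed cells lying well inside either a core or an exterior multiblock for each support. Hence $\omega$ carries a membership or apartness witness for each support. If this is not fully explicit in the lemma's proof, I would add a sentence making the trimmed margins $\rho$ summable, so that the limit point stays a fixed positive distance inside each chosen cell.

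Axiom 3 uses the bound witness. If $\abs f\le M$, then for every integer $n\ge M$ we have $f\land n=f$ on the determined domain, so $\int_B(f\land n)=\int_Bf$ from that stage on. Note that $f\land n$ is again regular by the same device used for $f\land1$. For the second limit, $0\le\abs f\land n^{-1}\le n^{-1}$ on the support and vanishes off it. \Cref{lem_regular_integral_finite_order} then gives $0\le\int_B(\abs f\land n^{-1})\le n^{-1}\mu(B)\to0$, with an explicit modulus. This completes the verification.
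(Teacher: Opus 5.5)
Your proposal is correct and takes essentially the same approach as the paper's proof. Both argue nontriviality via the indicator of a full subblock, obtain Axiom 1 from \Cref{lem_regular_functions_finite_algebra} and \Cref{cor_regular_functions_lattice_closure}, cite \Cref{lem_ye_bisection_regular_positivity} for Axiom 2, and derive Axiom 3 from the bound witness together with \Cref{lem_regular_integral_finite_order}. Your extra care about $f\land 1$ and about $\omega\in\bigcap_i\dom{f_i}$ goes slightly beyond what the paper writes, though summable margins are not needed: $\omega$ lies in every closed cell $Q_{k+1}$, and each such cell by construction already fixes the branch data of the supports treated at that stage.
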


\begin{proof}
	The functional is nonzero because every full block contains a smaller full block $A\Subset B$, and the representable indicator $\indic{A}$ satisfies $\int_B\indic{A}=\mu(A)>0$.
	Linearity and finite order are \Cref{lem_regular_functions_finite_algebra,lem_regular_integral_finite_order}.
	The lattice operations $|f|$ and $f\land1$ are obtained from \Cref{cor_regular_functions_lattice_closure}.
	Chan's countable positivity axiom is \Cref{lem_ye_bisection_regular_positivity}.
	The truncation axiom follows from boundedness.
	Indeed, for each $f\in\BReg{B}$ and for $n$ larger than a bound witness for $f$ one has $f\land n=f$, while
	\[
		0
		\le
		\int_B(|f|\land n^{-1})
		\le
		n^{-1}\mu(B)
		\to0.
	\]
	Since every member of $\BReg{B}$ is Riemann integrable by \Cref{lem_regular_functions_riemann_integrable}, these axioms restrict to $\BReg{B}$.
\end{proof}

\begin{corollary}[Vector-valued regular integration module]
	\label{crl_vector_regular_integration_module}
	Let $B$ be a full block.
	For every positive $m\in\N$, the $\R^m$-valued effectively bounded regular functions on $B$ whose supports are well contained in $B$ form the componentwise integration module over the scalar integration space in \Cref{thm_regular_functions_integration_space}.
\end{corollary}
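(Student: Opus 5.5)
The plan is to reduce the vector-valued statement to the scalar integration space of \Cref{thm_regular_functions_integration_space} by working coordinatewise, following \Cref{dfn_vector_valued_integration_module}.

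The first step is to check that the two classes coincide. Let $F:B\to\R^m$ be an effectively bounded regular function whose supports are well contained in $B$. Write $F=\sum_k F_k$ with pairwise disjoint representable supports $X_k\Subset B$ and complemented branches $F_k=(F_k^+,F_k^-)$. For each $j$, the component $F_j:=\sum_k(F_k^+)_j$ keeps the same supports and exterior zero branches. Each coordinate projection $y\mapsto y_j$ is $1$-Lipschitz, so every $(F_k^+)_j$ is continuous with the same modulus. A bound witness $M$ for $F$ bounds every component, since $\abs{y_j}\le\nrm{y}$. Hence $F_j\in\BReg{B}$ for every $j$, and $F\in\BReg{B}^m=\mathcal L^m$.

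For the converse, take $(F_1,\ldots,F_m)$ with each $F_j\in\BReg{B}$. Flatten all the support data of the $F_j$ over $B$ using \Cref{lem_representable_set_flattening}, exactly as in the proof of \Cref{lem_regular_functions_finite_algebra}. On each flattened piece, attach the vector of the finitely many component formulas. This vector formula is continuous because every component is. The vector function is bounded by $\sqrt m\max_j M_j$. The determined domain is the common domain $\bigcap_j\dom{F_j}$, as \Cref{dfn_vector_valued_integration_module} requires.

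The second step is to identify the integral. The integral of $F$ from \Cref{dfn_integral_regular_function} is the Riemann limit of the vector tagged sums $\Sigma(F,\mathcal P)$. The $j$th coordinate of $\Sigma(F,\mathcal P)$ is exactly $\Sigma(F_j,\mathcal P)$. Passing to the limit coordinatewise, which is legitimate because Euclidean convergence is equivalent to coordinatewise convergence, gives $\int_BF=(\int_BF_1,\ldots,\int_BF_m)$.

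The scalar axioms for each component come from \Cref{thm_regular_functions_integration_space}. Linearity of the vector integral follows from the componentwise linearity in \Cref{lem_regular_functions_finite_algebra}.

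The main obstacle is the bookkeeping in the converse direction. The components may come with unrelated support representations, so the branch witnesses available at a point differ from component to component. Flattening handles this by producing one common disjoint family, and redundant zero pieces are kept as permitted by \Cref{rem_regular_function_flattened_by_default}. No pointwise decision is needed, because the vector function is determined precisely where all of its components are determined.
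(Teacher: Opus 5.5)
Your proposal is correct and follows the paper's route. The paper's proof is the one-line observation that the statement is \Cref{dfn_vector_valued_integration_module} applied to \Cref{thm_regular_functions_integration_space}; you additionally make explicit two identifications the paper treats as definitional, namely $\R^m$-valued members with $m$-tuples of members of $\BReg{B}$ (via flattening, as in \Cref{lem_regular_functions_finite_algebra}), and the vector Riemann integral with the componentwise one.

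One small bookkeeping point: in the converse direction, discard the all-exterior flattened piece $\bigcap_i(B\setminus X_i)$ rather than keeping it as a zero summand. That piece is not well contained in $B$, whereas every other piece lies inside some $X_i\Subset B$.
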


\begin{proof}
	This is \Cref{dfn_vector_valued_integration_module} applied to \Cref{thm_regular_functions_integration_space}.
\end{proof}

\printbibliography

\end{document}